\documentclass[10pt]{article}
\usepackage[margin=1.2in]{geometry}

\usepackage{xr,xr-hyper}

\usepackage{amsmath, amsthm, amssymb, dsfont, mathtools, mathrsfs, nccmath}
\usepackage{bbm}
\usepackage{graphicx}
\usepackage{verbatim}
\usepackage[round]{natbib}
\usepackage{caption}
\usepackage{makecell}
\usepackage{algorithm}        
\usepackage{algpseudocode}

\usepackage{subcaption}
\usepackage{fancyvrb}
\usepackage{enumerate}
\usepackage[inline]{enumitem}
\usepackage{relsize}
\usepackage{hyperref}

\usepackage{xr}

\makeatletter
\newcommand{\purelink}[1]{{\let\Hy@colorlink\@gobble\let\Hy@endcolorlink\@empty\ref{#1}}}
\makeatother

\makeatletter
\newcommand{\bluelink}[1]{%
    {\color{blue}\ref{#1}}%
}
\makeatother

\hypersetup{colorlinks,citecolor=blue,urlcolor=blue,linkcolor=blue}
\usepackage{diagbox}
\usepackage{nikos_tex}
\usepackage{float}
\usepackage{centernot}
\usepackage{array}  
\usepackage[skip=0.333\baselineskip]{caption} 
\usepackage{booktabs, tabu}
\usepackage{blindtext}

\allowdisplaybreaks

\DeclareMathOperator*{\argmax}{arg\,max}

\usepackage[table]{xcolor}
\usepackage{siunitx}

\definecolor{ggplotRed}{HTML}{F08080}
\definecolor{ggplotBlue2}{HTML}{00688B}

\definecolor{ggplotGreen}{HTML}{458B74} 
\definecolor{ggplotBlue}{HTML}{A4D3EE} 

\newcommand{\dshead}[2]{%
  \rowcolor{black!7}%
  \multicolumn{7}{@{}l@{}}{#1}\\[-0.3em]
  \rowcolor{black!7}%
  \multicolumn{7}{@{}l@{}}{{\footnotesize #2}}\\[0.2em]
}
\newcommand{\simhead}[1]{%
  \rowcolor{black!7}%
  \multicolumn{10}{@{}l@{}}{\textbf{#1}}\\[0.2em]
}
\newcommand{\redS}[1]{%
  \multicolumn{1}{S[table-format=3.1, table-space-text-post=\,]<{\,\%}}{\color{red}#1}%
}

\newcommand{\ct}[1]{\hspace{1em}#1}
\newcommand{\limma}{\texttt{limma}}
\newcommand{\limmatrd}{\texttt{limma-trend}}
\newcommand{\ttest}{\texttt{t}}
\newcommand{\At}{\mathtt A}
\newcommand{\Bt}{\mathtt B}
\newcommand{\rmD}{\mathrm{D}}
\newcommand{\rmG}{G'}
\newcommand{\rmH}{H'}
\newcommand{\calD}{\mathcal D}
\newcommand{\mis}{\mathrm{mis}}

\newcommand{\misV}[1]{V_{#1,\mis}^2}
\newcommand{\misT}[1]{\tau_{#1,\mis}^2}
\newcommand{\misG}{G_\mis}

\newcommand{\misP}[1]{P^\trd_{#1,\mis}}

\usepackage{mathtools}
\mathtoolsset{showonlyrefs,showmanualtags}

\usepackage{multirow}

\usepackage[utf8]{inputenc}
\usepackage[english]{babel}

\graphicspath{{./figures/}}

\theoremstyle{definition}
\newtheorem{proposition}{Proposition}
\newtheorem{assumption}[proposition]{Assumption}
\newtheorem{example}[proposition]{Example}
\newtheorem{definition}[proposition]{Definition}

\newtheorem{lemm}[proposition]{Lemma}
\newtheorem{theorem}[proposition]{Theorem}
\newtheorem{rem}[proposition]{Remark}

\newtheoremstyle{remark}
  {10pt}
  {10pt}
  {\color{gray!150}\normalfont}  
  {}
  {\color{gray!150}\bfseries}    
  {.}
  {.5em}
  {}
  
\theoremstyle{remark}

\theoremstyle{remark}

\date{Draft Manuscript: May 2026}

\title{How does \texttt{limma-trend} work?\\An empirical partially Bayes perspective}

\author{
  Sagnik Nandy\thanks{These authors contributed equally to this work.} \\
  \small Ohio State University \\
  \small \texttt{nandy.15@osu.edu}
  \and
  Wanyi Ling\footnotemark[1] \\
  \small University of Chicago \\
  \small \texttt{wanyiling@uchicago.edu}
  \and
  Nikolaos Ignatiadis \\
  \small University of Chicago \\
  \small \texttt{ignat@uchicago.edu}
}

\begin{document}

\maketitle

\begin{abstract}
In high-throughput biology, it is common to fit thousands of linear regressions---one per gene, protein, or other unit---with very few samples per unit. \texttt{Limma-trend}, one of the most widely used methods in this setting, improves power by shrinking variance estimates parametrically toward a fitted curve (the trend) relating variance to a unit-level summary (e.g., average intensity, peptide count), before computing p-values and applying the Benjamini-Hochberg procedure to control the false discovery rate (FDR). We study \texttt{limma-trend} through the lens of empirical partially Bayes inference, a paradigm in which a prior is posited and estimated for the nuisance parameters while parameters of interest remain fixed. From this perspective, \texttt{limma-trend} computes approximate partially Bayes p-values that condition on the residual sample variance and the unit-level summary. The same framework explains why MAnorm2, a popular variant for ChIP-seq, can sometimes fail to control FDR.
We then derive a nonparametric generalization of \texttt{limma-trend} that estimates the residual variance prior using nonparametric maximum likelihood. Under dense signals, this procedure asymptotically controls the FDR---even when the trend is misspecified or inconsistently estimated. To allow the full shape of the conditional variance distribution to depend on the unit-level summary, we develop a second procedure that learns it directly.
\end{abstract}

\section{Introduction}

To this day, one of the most common data analysis tasks in high-throughput biology consists of fitting separate linear models for each of $n$ units (e.g., genes, proteins) and performing statistical analysis using the outputs of these fits. For the $i$-th unit, the adopted statistical framework is standard: $K$ samples, $p$ covariates with $p<K$, homoscedastic Gaussian errors with the responses $Y_{ij}$ modeled as
\begin{equation}
\label{eq:limma_regression}
Y_{ij} = x_j^\top  \beta_i + \varepsilon_{ij},\;\; \varepsilon_{ij} \simindep \mathrm{N}(0,\, \sigma_i^2),\;\;j=1,\dotsc,K,\;\;i=1,\dotsc,n.
\end{equation}
The coefficients $\beta_i \in \RR^{p}$ and noise variances $\sigma_i^2$ are unknown, while the covariates $x_j \in \RR^p$ include, e.g., the intercept, treatment status, as well as confounding (batch effect) variables to adjust for. The interest of the practitioners typically centers on a single contrast $\primary_i = \beta_i^\top  \contrastprimary$ for pre-specified $\contrastprimary \in \RR^p$ (e.g., the treatment effect). The statistical task is to test the hypotheses $H_i: \primary_i=0$ for $i=1,\dots,n$. The number of units $n$ is in the thousands, and the degrees of freedom $K-p$ is often minuscule, sometimes in the single digits. 

The textbook solution to this problem is immediate: for each unit $i$, compute the ordinary least squares estimate $\wh{\primary}_i$ and apply a $t$-test, followed by multiple testing correction. The relevant question is: \emph{Does this allow us to make a significant amount of discoveries while controlling the false discoveries?} In genomics, small sample sizes (small $K$) severely limit the power of standard approaches; after multiple testing corrections, naïve $t$-tests often yield few or no discoveries. The \limma{} framework \citep{smyth2004linear} addresses this by shrinking variance estimates toward a common value before conducting inference, effectively borrowing strength across units. The \limmatrd{} variant~\citep{law2014voom}, building on ideas of \citet{sartor2006intensitybased}, allows the shrinkage target to depend on unit-level summaries such as average intensity (defined below in \eqref{eq:avg_intensity}). This refinement has proven important: \limmatrd{} underlies methods for RNA-seq ~\citep{ritchie2015limma}, proteomics~\citep{zhu2020deqms, messner2023proteomic}, ChIP-seq~\citep{tu2021manorm2}, methylation variability~\citep{phipson2014diffvar}, cytometry~\citep{weber2019diffcyt}, lipidomics~\citep{townsend2025establishing}, and other modalities. 

Despite its widespread adoption, \limmatrd{} has received relatively little formal statistical attention. \citet{ignatiadis2025empirical} recently provided theoretical foundations for the original (untrended) \limma{} procedure. In the trend setting, the unit-level summary toward which variances are shrunk is typically the average intensity, and the average intensity is itself computed from the same outcomes used for inference. This opens up three issues that have no analog in the untrended case: (i)~the validity of the partially Bayes interpretation requires an orthogonality condition on the design that is not automatic, and is violated in the popular ChIP-seq package MAnorm2, (we show in simulations that this leads to an inflation in the FDR); (ii)~the prior on variances is no longer a single shared distribution across units, but must vary with the summary; and (iii)~the analysis must accommodate the plug-in of an estimated trend into the marginal likelihood used to learn that prior. In this paper, we develop a partially Bayes \citep{cox1975note} account of \limmatrd{} that handles all three issues, and we propose two nonparametric procedures motivated by this account.

\subsection{\texttt{Limma-trend} and our contributions}
\label{sec:basic_limma}

\begin{figure}
\centering
\begin{tabular}{@{}lll@{}}
(a) RNA-Seq & (b) ChIP-Seq & (c) Proteomics \\ 
\includegraphics[width=0.32\linewidth]{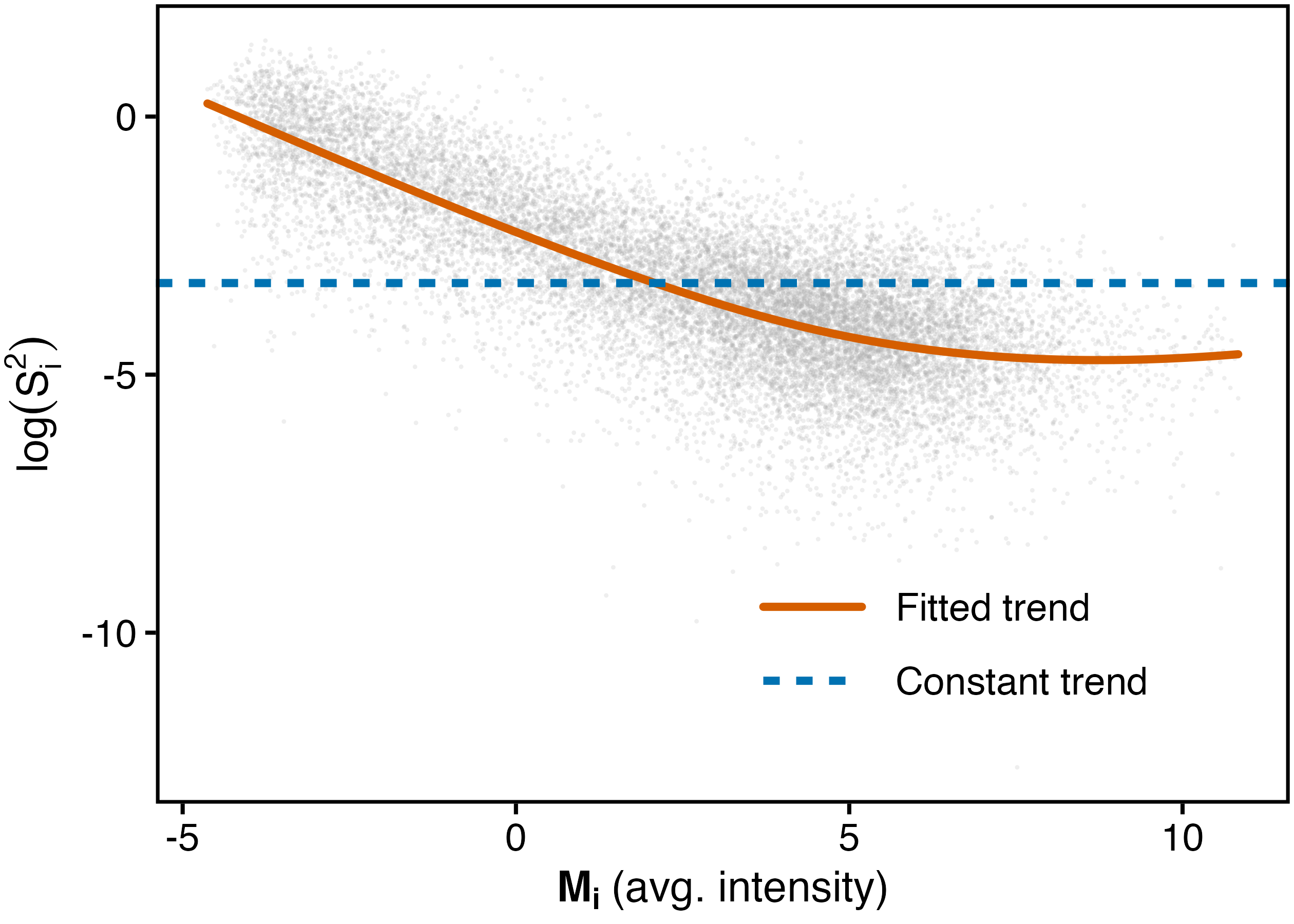} & \includegraphics[width=0.304\linewidth]{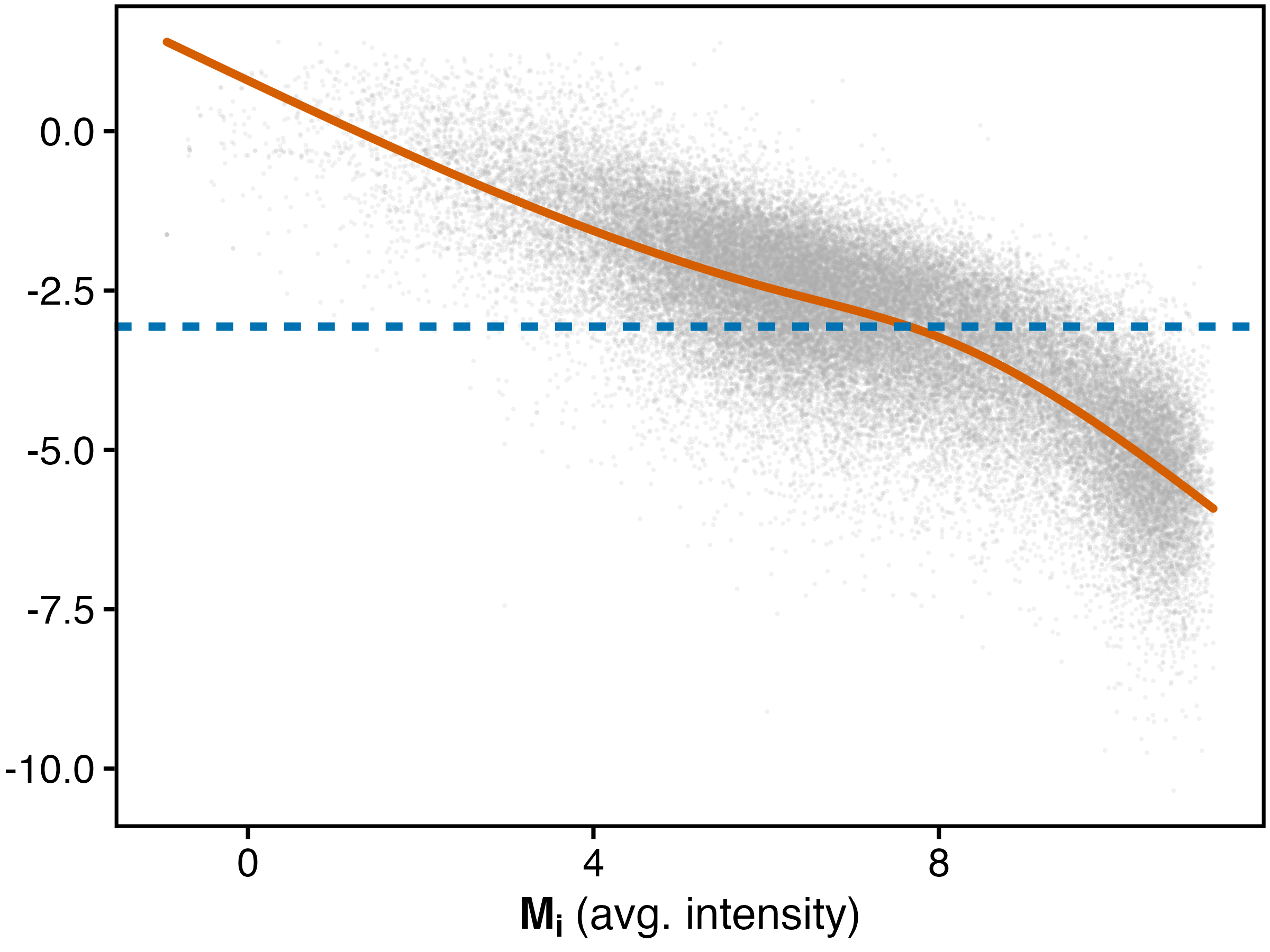} & \includegraphics[width=0.304\linewidth]{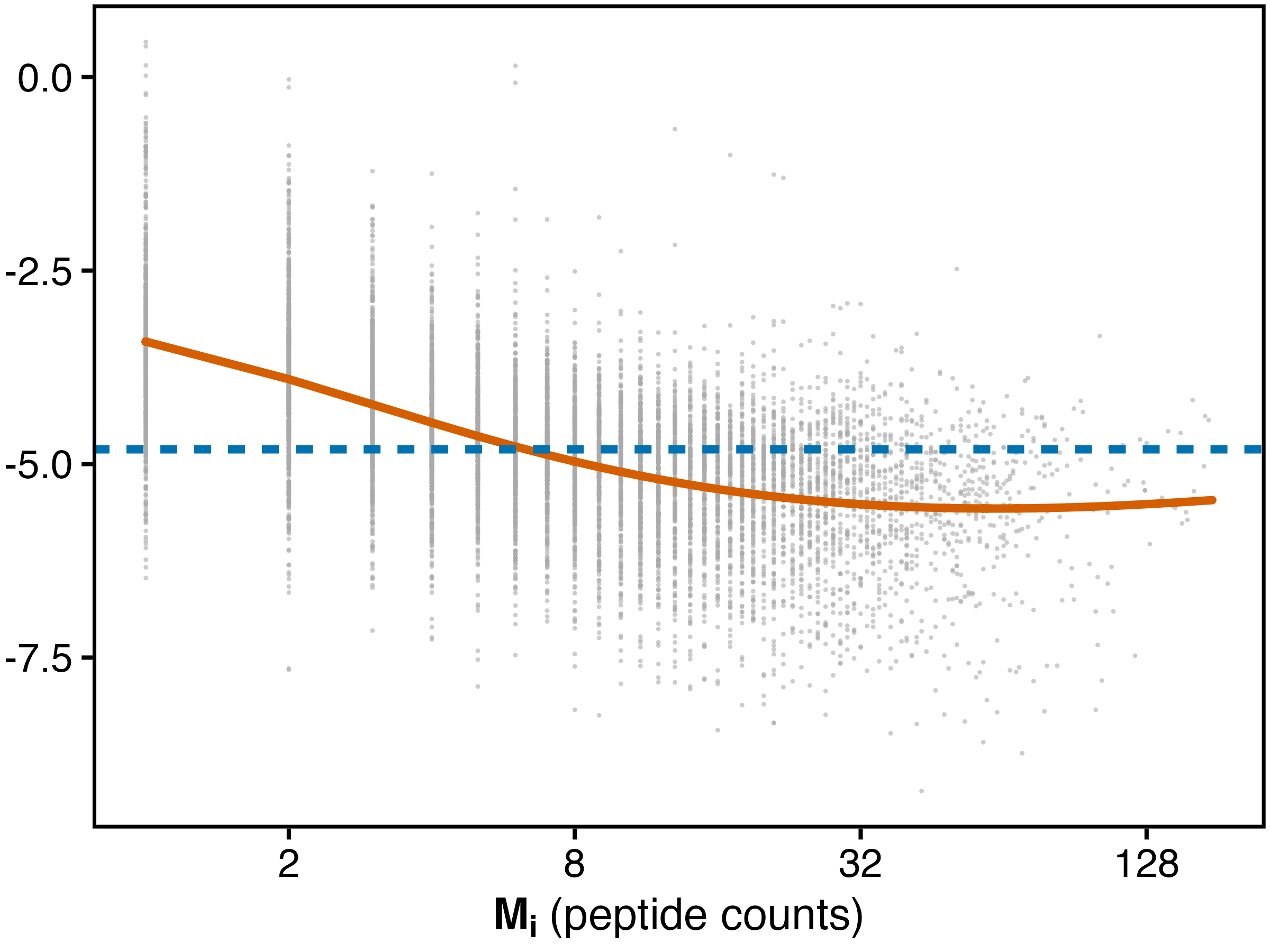} 
\end{tabular}
\caption{Mean-variance trend observed in datasets from three different biological modalities. Grey points represent the pair, unit-wise side-information (average intensity for (a) and (b), $\log_2$ peptide counts for (c)), and log sample variance. Their relation is represented by two lines: a fitted natural cubic spline curve, and a constant trend.  \textbf{(a)} RNA-Seq: each unit $i$ represents a gene in
the CD4$^{+}$ T cells melanoma dataset from \cite{goswami2018modulation}. \textbf{(b)} ChIP-Seq: each unit $i$ represents a genomic interval in the H3K4me3 lymphoblastoid cell dataset from \cite{tu2021manorm2}. \textbf{(c)} Proteomics:  each unit $i$ represents a protein in the miRNA-mimic treatment dataset on A431 cells from \cite{zhu2020deqms}.
}
\label{fig:intro_trend}
\end{figure}

In \limmatrd{}, the practitioner begins by computing 
the ordinary least squares estimates of $\beta_i$ in \eqref{eq:limma_regression},  the primary contrast of interest $\theta_i$ and the residual variance $\sigma_i^2$ (for all units),
\begin{equation}
\wh{\beta}_i := (X^\top X)^{-1}X^\top Y_i,\;\;\;\;  Z_i := \wh{\beta}_i^\top  \contrastprimary,\;\;\;\; S^2_i := \frac{1}{K-p}\sum_{j=1}^K(Y_{ij}- x_j^\top \wh{\beta}_i)^2, \qquad \mbox{for $i=1,\ldots,n$,}
\label{eq:limma_summary_stats}
\end{equation}
where $X \in \RR^{K \times p}$ denotes the design matrix (assumed to be full rank) with $j$-th row equal to \smash{$x_j^\top$} and \smash{$Y_i:=(Y_{i1},\ldots, Y_{iK})^\top \in \RR^K$}.
The textbook approach to compute a p-value for testing $\primary_i=0$ uses the fact that $T_i := Z_i/(\nu S_i) \sim t_{K-p}$ (the t-distribution with $K-p$ degrees of freedom) under the null, where  \smash{$\nu := \{\contrastprimary^\top (X^\top X)^{-1}\contrastprimary\}^{1/2}$}, giving $P_i^{\ttest} := 2\,\wb{F}_{t_{K-p}}(|T_i|)$ (the survival function of $t_{K-p}$).

These $t$-test p-values often lack power because $S_i^2$ is a noisy estimate of $\sigma_i^2$, as reflected by the heavy tails of the $t$-distribution at small $K-p$. The key idea underlying \limmatrd{} is that, in common high-throughput biology applications, there is contextual information $M_i$, distinct from $S_i^2$, that is highly predictive of $\sigma_i^2$. Figure~\ref{fig:intro_trend} plots estimates of the conditional expectation $\EE{\log(S_i^2) \mid M_i}$ across three datasets from different modalities (with different choices of $M_i$); we further revisit these datasets in Section~\ref{sec:case_studies}. We refer to the function $\EE{\log(S_i^2) \mid M_i = \cdot}$ and estimates thereof as \emph{trends}. The trends in Figure~\ref{fig:intro_trend} are far from constant.

How should one use the trend? Suppose for a moment that the \smash{$\sigma_i^2$} are a deterministic function of $M_i$, $\sigma_i^2 = h^2(M_i)$, and that we have a consistent estimate \smash{$\widehat{h}^2(M_i)$} thereof. Then we could report the p-values \smash{$2\Phi(-|Z_i|/\{\nu \widehat{h}(M_i)\})$}, where $\Phi$ is the standard Gaussian CDF, sidestepping the heavy tails of the $t$-distribution. This idea is not new, e.g., it appears in the MAP (model-based analysis of proteomic data) procedure of \citet{li2019map}. In practice, however, $\sigma_i^2$ is \emph{not} determined by $M_i$: there is residual heterogeneity that no trend can absorb, and methods that ignore this heterogeneity can inflate false discoveries, as we confirm for MAP in simulations in Section~\ref{sec:numerical_exp}.

This brings us to \limmatrd{}, which is summarized in Algorithm~\ref{algo:limma_trend} below. Step 1 fits the trend \smash{$\widehat{\xi}^2(M_i)$}. Step 2 estimates the distribution (prior) of the residual heterogeneity \smash{$\sigma_i^2/\widehat{\xi}^2(M_i)$} by empirical Bayes. Step 3 then computes p-values that combine $Z_i$, $S_i^2$, the trend evaluated at $M_i$ (\smash{$\widehat{\xi}^2(M_i)$}), and the prior estimated in Step 2; in particular $S_i^2$ enters the test statistic rather than being discarded. Steps 2 and 3 embed a specific parametric model of residual heterogeneity.

\begin{algorithm}[H]
\caption{\limmatrd{} (\reginvc)}
\begin{algorithmic}[1]
\State Fit the trend of $\log(S_i^2)$ vs. $M_i$ (e.g., as described in Supplement~\bluelink{subsec:reg_trend}), call this trend $\widehat{m}(\cdot)$ and its exponential $\widehat{\xi}^2(\cdot) = \exp(\widehat{m}(\cdot))$. 
\State Estimate $\wh{\kappa}_0$ and $\wh{s}_0^2$ in the model $\widehat{\xi}^2(M_i)/\sigma_i^2 \sim \chi^2_{\kappa_0}/(\kappa_0 s_0^2)$, where $\chi^2_{\kappa_0}$ is the chi-squared distribution with $\kappa_0$ degrees of freedom.
\State Compute $\widetilde{S}_i^2 := \{(K-p)S_i^2 + \wh{\kappa}_0 \wh{s}_0^2\widehat{\xi}^2(M_i)\}/\{(K-p) + \wh{\kappa}_0\}$, $\widetilde{T}_i := Z_i / \{\nu \widetilde{S}_i\}$ and estimated p-values $\wh P^\ltrdp_i =  2 \wb{F}_{t_{(K-p) + \wh\kappa_0}}(|\widetilde{T}_i|)$.
\end{algorithmic}
\label{algo:limma_trend}
\end{algorithm}
Our interpretation of \limmatrd{} is that it computes p-values conditional on $S_i^2$ and $M_i$. This conditioning is not possible under a purely frequentist analysis, but becomes possible when the nuisance parameters carry a prior distribution, with the primary parameters $\theta_i$ kept fixed. Such an analysis is called \emph{partially Bayes} \citep{cox1975note}, and we develop it for \limmatrd{} in Section~\ref{sec:stat_setting}.\footnote{Although our focus is on the partially Bayes interpretation of \limmatrd{}, we also provide purely frequentist results, with all parameters fixed, following the framework of compound decision theory. These results are briefly reviewed in Section~\ref{sec:compound_decision_theory} and presented in detail in Supplement~\purelink{sec:cmp_bayes_ltrd}.}

An important wrinkle in applications is the choice of $M_i$. The default choice in \limmatrd{} is,
\begin{equation} 
A_i := \frac{1}{K}\sum_{j=1}^K Y_{ij},
\label{eq:avg_intensity}
\end{equation}
which is called the \emph{average intensity}. 
But $A_i$ is computed from the same outcomes $Y_i$ as $Z_i$ and $S_i^2$, and the partially Bayes setup quietly assumes their conditional independence given $(\beta_i, \sigma_i^2)$. Section~\ref{subsec:independence} shows that a simple orthogonality condition on the design and contrast (Assumption~\ref{assu:design}, Proposition~\ref{prop:orthog}) restores this independence, and verifies that it holds for the standard two-sample and treatment-effect designs. The condition has not, to our knowledge, been spelled out before, and its violation is not harmless: the MAnorm2~\citep{tu2021manorm2} package for ChIP-seq replaces $A_i$ with a quantity that fails it, and our simulations in Section~\ref{sec:numerical_exp} show that this violation can inflate the false discovery rate (FDR).

From the partially Bayes perspective, the $\mathrm{Inv}\chi^2$ specification (in Step 2 of Algorithm~\ref{algo:limma_trend}) is one choice rather than a necessity. Replacing it with a prior estimated by the nonparametric maximum likelihood estimator (NPMLE)~\citep{robbins1950generalization, keifer_npmle} yields \reglitrd{}  (Algorithm~\ref{algo:limma_trend_reg}, Section~\ref{sec:reg_limma_trend}). The resulting empirical Bayes p-values approximate the partially Bayes p-values one would form if the prior were known, at a rate dominated by the trend-estimation error (Theorem~\ref{thm:conv_p_trnd_orc}), and Benjamini-Hochberg~\citeyearpar{benjamini1995controlling} applied to them controls the FDR asymptotically (Theorem~\ref{thm:final_rate_cs}). More notably, even if the fitted trend \smash{$\widehat{\xi}^2$} is inconsistent for the true trend, the nonparametric prior on residual heterogeneity absorbs the error and FDR control is preserved (Theorem~\ref{thm:final_rate_mis}). The prior on residual heterogeneity captures variation that no deterministic trend can absorb; the same flexibility also buffers errors in the fitted trend. Procedures like MAP, which treat $\sigma_i^2$ as a deterministic function of $M_i$, do neither.

A more ambitious procedure, \jtlitrd{} (Algorithm~\ref{algo:limma_trend_jt}, Section~\ref{sec:joint_limma_trend}), foregoes direct trend-fitting. Rather than modeling residual heterogeneity around a fitted trend, it estimates the joint distribution of $\sigma_i^2$ and the population mean intensity $\mu_i := \EE{A_i \mid \beta_i}$ across units, using a bivariate NPMLE applied to $(S_i^2, A_i)$. An Eddington-Tweedie representation of the partially Bayes p-values (Theorem~\ref{thm:tweedie_2_d}) drives the analysis and yields a near-parametric convergence rate of the p-values (Theorem~\ref{prop:asymp_p_val}), despite the non-parametric estimation of the bivariate prior; asymptotic FDR control again follows (Theorem~\ref{thm:final_rate}).

Section~\ref{sec:numerical_exp} confirms the theory in simulation. \reglitrd{} and \jtlitrd{} retain FDR control across the regimes we consider. The trended methods also recover more discoveries than their untrended counterparts. Section~\ref{sec:case_studies} applies the methods to bulk RNA-seq, ChIP-seq, and proteomics datasets. The proteomics analysis motivates a discrete variant of \jtlitrd{} that fits a separate NPMLE within each peptide-count stratum, which we view as a methodological successor to DEqMS~\citep{zhu2020deqms}, a \limmatrd{} variant popular for proteomics analyses.

\subsection{Related work: mean-variance modeling in genomics}
\label{sec:related_work}

Mean-variance modeling underlies several important methods in genomics. We provide an incomplete and simplified taxonomy here, to emphasize the main aspects of our contribution.

Several authors have focused on consistent estimation of mean-variance trends, e.g.,~\citet{carroll2008nonparametric, wang2009variance, fan2010nonparametric, mandel2013variance, li2024robust}. These works vary in their exact objective, e.g., estimating trends as functions of latent parameter (e.g., as a function of $\mu_i$) or observables, using nonparametric or parametric methods etc. A key message of these papers is that in general a naïve nonparametric regression of $\log(S_i^2)$ (or $S_i^2$) on $A_i$ will not yield a consistent estimate of the true  trend function and propose more elaborate methods.

Although most of the above papers are motivated by the testing problem we consider, 
only~\citet{mandel2013variance} formally develop p-values that use the learned trend.
Specifically,~\citet{mandel2013variance} assume that (under the null),  \smash{$\sigma_i^2 = \zeta^2(\mu_i)$} for a function \smash{$\zeta^2$} that can be estimated consistently as \smash{$\widehat{\zeta}^2$}.  They then propose to compute a p-value for $H_i: \theta_i=0$ as \smash{$\beta+\sup_{\mu_i \in \mathcal{I}_i} 2\Phi( -|Z_i| / \{\nu \widehat{\zeta}(\mu_i)\})$}, where $\mathcal{I}_i$ is a preliminary $(1-\beta)$-confidence interval for $\mu_i$ at some (small) level $\beta \in (0,1)$. A notable difference from our approach is that the trend ($\zeta$) is a function of an unobserved quantity $\mu_i$ (rather than the observed $A_i$), which makes it harder to estimate. The approach is also more conservative (because of the supremum in the p-value definition).

Earlier, we mentioned the MAP method~\citep{li2019map} as an example of an approach that uses the estimated trend to compute p-values, but that in general does not control FDR.
We note that the authors of MAP were aware of the lack of FDR control of their method and proposed a heuristic to further inflate variances. However, it is unclear whether this heuristic provides FDR control. This highlights the importance of a rigorous statistical analysis such as the one we provide for \limmatrd{}, which avoids the need for ad hoc corrections.

Finally, we note that several other key statistical methods in genomics
rely on learning mean-variance (or dispersion) relationships, e.g.,
edgeR~\citep{mccarthy2012differential,chen2025edger}, voom~\citep{law2014voom}, and DESeq2~\citep{love2014moderated}. We hope that our investigation of \limmatrd{} will motivate analogous statistical inquiries into these and related procedures.

\section{Statistical modeling and partially Bayes testing}
\label{sec:stat_setting}
\subsection{Background on untrended \limma{}}
\label{subsec:background}

Before turning to \limmatrd{}, we review the partially Bayes framework for untrended \limma{}~\citep{smyth2004linear, ignatiadis2025empirical}; the trended generalization in Section~\ref{subsec:limma_trd_interpretation} builds on this setup. Under model~\eqref{eq:limma_regression}, the summary statistics in~\eqref{eq:limma_summary_stats} satisfy
\begin{equation}
\label{eq:def_z_i_dis}
(Z_i,S^2_i) \mid \beta_i, \sigma_i^2 \, \sim \dnorm(\theta_i,\, \nu^2 \sigma_i^2) \otimes \frac{\sigma_i^2}{K-p}\chi^2_{K-p}.
\end{equation}
\texttt{Limma} only operates on the summary statistics $(Z_i, S_i^2)$, and so effectively for each unit, there are only two unknown parameters, $\theta_i$ (the parameter of interest) and \smash{$\sigma_i^2$} (a nuisance parameter). The distinctive feature of \limma{} is that it imposes a prior on the
nuisance parameter \smash{$\sigma_i^2$},
\begin{equation}
\label{eq:exchangeable_sigma}
\frac{1}{\sigma_i^2} \sim \frac{\chi^2_{\kappa_0}}{\kappa_0 s_0^2}, \quad \text{for some $\kappa_0, s_0^2 >0$,}
\end{equation}
but not on the primary parameter $\theta_i$, which is instead treated as fixed (as in a frequentist analysis). \citet{cox1975note} and~\citet{mccullagh1990note} use the terminology ``partially Bayes'' for such an analysis that imposes a prior on only the nuisance parameters. 
The upshot is that integrating over~\eqref{eq:exchangeable_sigma}, we get:
\begin{equation}
P_i^{\limma} := 2 \wb{F}_{t_{(K-p) + \kappa_0}}(|\widetilde{T}_i|) \mid (S_i^2, \theta_i=0) \sim \mathrm{Unif}[0,1],\text{ for }\, \widetilde{T}_i := \frac{ Z_i}{ \nu \widetilde{S}_i},\, \widetilde{S}_i^2:= \frac{(K-p)S_i^2 + \kappa_0 s_0^2}{(K-p) + \kappa_0 }.
\label{eq:limma_pvalues}
\end{equation}
In words, the limma p-value is computed by
replacing $S_i^2$ by the ``moderated'' version $\widetilde{S}_i^2$ that shrinks $S_i^2$ toward $s_0^2$ before studentizing and increasing the degrees of freedom from $(K-p)$ to $(K-p) + \kappa_0$. The p-values remain uniformly distributed under the null (marginalizing over~\eqref{eq:exchangeable_sigma}), and improves power.
In practice, one estimates \smash{$\wh{\kappa}_0$} and \smash{$\wh{s}_0^2$} using empirical Bayes~\citep{robbins1956empirical, efron2010largescale} from $S_1^2,\ldots,S_n^2$, yielding estimated p-values \smash{$\widehat{P}_i^{\limma}$}. We interpret \smash{$\wh{\kappa}_0$} as the degrees of freedom gained by sharing information across units.

The p-values in~\eqref{eq:limma_pvalues} admit an equivalent representation
that will be central to our extension to \limmatrd{}. Integrating $\sigma_i^2$
against its posterior $\Pi(\,\cdot\mid S_i^2)$ under the prior
in~\eqref{eq:exchangeable_sigma} yields
\begin{equation}
  P_i^{\limma{}} = 2\int \Phi(-|Z_i|/\{\nu\sigma_i\})\,\Pi(\dd\sigma_i^2\mid S_i^2);
  \label{eq:oracle_limma}
\end{equation}
see, e.g.,~\citet[Proposition 3]{ignatiadis2025empirical}. The advantage
of~\eqref{eq:oracle_limma} over~\eqref{eq:limma_pvalues} is that it is
meaningful for any prior $G$ on \smash{$\sigma_i^2$}, including nonparametric
priors. \citet{ignatiadis2025empirical} pursue this by taking
\smash{$\sigma_i^2 \simiid G$} for an unknown $G$ estimated by the
NPMLE~\citep{robbins1950generalization, keifer_npmle}; also
see~\citet{lu2016variance}. They show that the resulting estimated
p-values \smash{$\widehat{P}_i^{\limma{}}$} converge to the oracle p-values
\smash{$P_i^{\limma{}}$} at a nearly parametric rate (up to log factors), and
that applying the Benjamini-Hochberg procedure to \smash{$\widehat{P}_i^{\limma{}}$}
asymptotically controls the FDR.

An important property of the oracle p-values in~\eqref{eq:oracle_limma}
is that they are conditionally uniform given~$S_i^2$ under the null:
$P_i^{\limma{}} \mid (\theta_i=0, S_i^2) \sim \mathrm{Unif}[0,1]$.
The ordinary $t$-test p-value $P_i^{\ttest}$ does not satisfy this
conditional property: for null units whose $S_i^2$
underestimates~$\sigma_i^2$, the $t$-statistic is inflated 
\citep[Proposition 5]{ignatiadis2025empirical}. In a multiple testing
context, rejections from the $t$-test are thus enriched for units that by chance have unusually small~$S_i^2$. This is an advantage of the partially Bayes specification over a purely frequentist analysis: it is the prior~$G$ on~$\sigma_i^2$ and the integration over its posterior in~\eqref{eq:oracle_limma} that calibrates the p-values at every realized value of~$S_i^2$. (Such conditional calibration is even more important in the \limmatrd{} setting, see below.)

\subsection{Probabilistic interpretation of \limmatrd{}}
\label{subsec:limma_trd_interpretation}

Extending Section~\ref{subsec:background} to \limmatrd{} requires
specifying the conditional prior of $\sigma_i^2$ given $M_i$. We treat
two cases: $M_i$ as external side-information to model~\eqref{eq:limma_regression}, where
the extension is direct, and $M_i = A_i$ the average intensity in~\eqref{eq:avg_intensity}, where
two further issues arise.

\paragraph{\Litrd{} with external side-information.}
Suppose
that $M_i$ is external unit-specific side-information, e.g., $M_i$ could be the peptide count covariate in Fig~\ref{fig:intro_trend}(c), in that the distributional result in~\eqref{eq:def_z_i_dis} continues to hold conditional on $M_i$: 
\begin{equation}
\label{eq:bayesian_cs_model_1}
(Z_i,S^2_i) \mid \beta_i, \sigma_i^2, M_i \, \sim \dnorm(\theta_i,\, \nu^2 \sigma_i^2) \otimes \frac{\sigma_i^2}{K-p}\chi^2_{K-p}.
\end{equation}
As before, for each unit there are two unknown parameters ($\theta_i$ and $\sigma_i^2$) and following the partially Bayes principle, the nuisance parameter $\sigma_i^2$ is assumed to follow a prior $G$ that is now a function of $M_i$, i.e., $\sigma_i^2 \mid M_i \sim G(\cdot \mid M_i)$. This conditional prior is specified as follows in \limmatrd{}; see
also~\citet[Section 5]{phipson2016robust}:
\begin{align}
\label{eq:reg-model-trnd}
\frac{1}{\sigma_i^2} \mid M_i \; \sim \; \frac{\chi^2_{\kappa_0}}{\kappa_0 s_0^2 \xi^2_0(M_i)},
\end{align}
where $\xi_0(\cdot)$ is a smooth trend and $\kappa_0, s_0^2 >0$ are analogous to~\eqref{eq:exchangeable_sigma}. In Section~\ref{sec:reg_ltrd_ext} we will work primarily with the nonparametric generalization:
\begin{equation}
\label{eq:np-model-trnd}
\sigma_i^2 \mid M_i \; \sim \;  \xi_0^2(M_i) G, \quad \mbox{where $G$ is specified nonparametrically.}
\end{equation}
Taking logarithms and writing $m(\cdot)=\log(\xi^2_0(\cdot))$, this model can be rewritten as $\log(\sigma_i^2) = m(M_i) + \eta_i$, where $\eta_i \indep M_i$ and $\exp(\eta_i) \sim G$. In particular, we additively decompose $\log(\sigma_i^2)$ into a part that is fully explained by $M_i$ and
unexplained remaining independent heterogeneity. In this sense, this model is similar to related empirical Bayes models in the Gaussian sequence model that incorporate external side-information~\citep{fayiii1979estimates, ignatiadis2019covariatepowered}. 

Given~\eqref{eq:bayesian_cs_model_1} and~\eqref{eq:reg-model-trnd} (or its nonparametric extension in~\eqref{eq:np-model-trnd}), we interpret \limmatrd{} as computing partially Bayes p-values in analogy to~\eqref{eq:oracle_limma}, now conditioning on both $S_i^2$ and $M_i$,
\begin{equation}
P_i^{\ltrdp} :=  2 \int \Phi(-|Z_i|/\{\nu \sigma_i\}) \Pi(
\dd \sigma_i^2 \mid S_i^2, M_i).
\label{eq:oracle_limma_trend}
\end{equation}
In practice, we estimate $\xi_0(\cdot)$ by fitting a trend function \smash{$\wh \xi(\cdot)$} using $\log(S_i^2)$ and $M_i$ (as in Step 1 of Algorithm~\ref{algo:limma_trend}) and  plug in the estimated trend to estimate \smash{$\wh {G}$} in~\eqref{eq:np-model-trnd} (as in Step 2 of Algorithm~\ref{algo:limma_trend}). Then, \smash{$\wh \xi $} and \smash{$\wh G$} yield an estimated posterior \smash{$\widehat{\Pi}(\,\cdot\mid S_i^2, M_i)$} and estimated p-values $\widehat{P}_i^{\ltrdp}$, for $i=1,\ldots,n$ (Step 3 of Algorithm~\ref{algo:limma_trend}). In Supplement~\bluelink{sec:param_prior} we show that this general formulation encompasses parametric \limmatrd{} as a special case (see Algorithm~\ref{algo:limma_trend_reg} for our proposed nonparametric version of Algorithm~\ref{algo:limma_trend}).

As in the untrended case, the oracle p-values
in~\eqref{eq:oracle_limma_trend} are conditionally uniform under the
null, now given both~$S_i^2$ and~$M_i$:
$P_i^{\ltrdp} \mid (\theta_i=0, S_i^2, M_i) \sim \mathrm{Unif}[0,1]$; see Lemma~\ref{lem:bayes_trd_p_val_cs} below.
The additional conditioning on~$M_i$ ensures calibration within each stratum of units sharing $M_i$, a quantity that the analyst observes directly. For instance, in proteomics, \citet{zhu2020deqms}
observe that without adjusting for a mean-variance trend, analyses
produce excess false positives among proteins quantified with few
peptides (small $M_i$) and excess false negatives among those with many (large $M_i$).
Conditioning on~$M_i$ prevents this; we revisit this issue in our
proteomics application in Section~\ref{sec:seq_peptide}.

\paragraph{\reglitrd{} with average intensity.}
As we mentioned earlier, the most common choice of $M_i$ for \limmatrd{} is the average intensity $A_i$ defined in~\eqref{eq:avg_intensity} whose distribution under~\eqref{eq:limma_regression} satisfies
\begin{equation}
\label{eq:limma_trend_summary}
A_i \mid \beta_i, \sigma_i^2\,\sim \, \mathrm{N}\left(\mu_i,\, \frac{1}{K}\sigma_i^2\right),\;\text{ where }\; \mu_i:= \contrastintensity^\top \beta_i,\;\;  \contrastintensity := \frac{1}{K}\sum_{j=1}^K x_j.
\end{equation}
When using $A_i$, \limmatrd{} proceeds exactly as described above for the case of external side-information. This poses a conceptual challenge for two reasons:
\begin{enumerate}[label=(\Roman*),leftmargin=*,wide]
\item $A_i$ is computed on the same outcomes $Y_{ij}$ in~\eqref{eq:limma_regression} as $Z_i$ and $S_i^2$, so~\eqref{eq:bayesian_cs_model_1} need not hold for $M_i=A_i$.
\item The distributional specifications are circular: the prior of $\sigma_i^2$ is specified as a function of $M_i=A_i$ in~\eqref{eq:reg-model-trnd} (and~\eqref{eq:np-model-trnd}), while the likelihood of $A_i$ in~\eqref{eq:limma_trend_summary} depends on $\sigma_i^2$. 
\end{enumerate}
Section~\ref{subsec:independence} resolves (I) via an orthogonality
condition on the design (Assumption~\ref{assu:design}) that restores independence of $A_i$ from $(Z_i, S_i^2)$;
Section~\ref{subsec:bivariate_nuisance} resolves (II) by treating
$(\mu_i, \sigma_i^2)$ jointly as nuisance parameters with a bivariate
prior, which also motivates the \jtlitrd{} procedure of
Section~\ref{sec:joint_limma_trend}.

\subsection{Independence, orthogonal contrasts, and average intensity}
\label{subsec:independence}
We first explain that (I) does not pose a concern as long
as the analyst verifies a simple condition on the design and contrast in~\eqref{eq:limma_regression}. It will be convenient to state our results under slightly more generality. Fix $c_{\wt{A}} \in \RR^p$ and let 
$\wt{A}_i  := c_{\wt{A}}^\top \wh{\beta}_i$.
Note that with the choice $c_{\wt{A}} = \contrastintensity$ defined in~\eqref{eq:limma_trend_summary}, we recover the average intensity defined in~\eqref{eq:avg_intensity}, i.e., $\wt{A}_i= A_i$.

\begin{assumption}[Design]
The ones vector lies in the column space of $X$, i.e., $\mathbf{1} \in \mathcal{C}(X)$, and the primary contrast $\contrastprimary$ satisfies
$\contrastprimary^\top  (X^\top  X)^{-1} c_{\wt{A}} =0$.
\label{assu:design}
\end{assumption}
It is straightforward to empirically check this assumption before running a \limmatrd{} analysis.
We have the following result:
\begin{proposition}
\label{prop:orthog}
Suppose that Assumption~\ref{assu:design} holds. Then, $(Z_i, S_i^2, \wt{A}_i)$ are mutually independent conditional on $\beta_i, \sigma_i^2$, and so in particular,~\eqref{eq:bayesian_cs_model_1} holds with $M_i=\wt{A}_i$.
\end{proposition}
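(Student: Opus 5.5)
The plan is to exploit joint Gaussianity together with the classical fact that the OLS coefficient vector is independent of the residual vector. Conditional on $(\beta_i,\sigma_i^2)$, the response $Y_i\sim \mathrm{N}(X\beta_i,\sigma_i^2 I_K)$. Let $H := X(X^\top X)^{-1}X^\top$ be the hat matrix. Then $\wh{\beta}_i=(X^\top X)^{-1}X^\top Y_i$ and the residual vector $r_i:=(I-H)Y_i$ are jointly Gaussian. Their cross-covariance is
\[
\sigma_i^2 (X^\top X)^{-1}X^\top (I-H) = 0,
\]
so $\wh{\beta}_i \indep r_i$. Since $S_i^2=\|r_i\|^2/(K-p)$ is a function of $r_i$, and both $Z_i=\contrastprimary^\top\wh{\beta}_i$ and $\wt{A}_i=c_{\wt{A}}^\top\wh{\beta}_i$ are functions of $\wh{\beta}_i$, it follows that $S_i^2\indep (Z_i,\wt{A}_i)$ given $(\beta_i,\sigma_i^2)$.

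Next, $(Z_i,\wt{A}_i)$ is bivariate Gaussian with covariance $\sigma_i^2\,\contrastprimary^\top (X^\top X)^{-1}c_{\wt{A}}$. This vanishes by Assumption~\ref{assu:design}, so $Z_i\indep \wt{A}_i$. Combining the two independences (independence of $S_i^2$ from the pair, plus independence within the pair) gives mutual independence of the triple. Their marginal laws are then the standard ones in~\eqref{eq:def_z_i_dis}: $Z_i\sim \mathrm{N}(\theta_i,\nu^2\sigma_i^2)$ and $(K-p)S_i^2/\sigma_i^2\sim\chi^2_{K-p}$.

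To obtain~\eqref{eq:bayesian_cs_model_1} with $M_i=\wt{A}_i$, condition on $\wt{A}_i$. Independence of $(Z_i,S_i^2)$ from $\wt{A}_i$ given $(\beta_i,\sigma_i^2)$ means the conditional law of $(Z_i,S_i^2)$ given $(\beta_i,\sigma_i^2,\wt{A}_i)$ equals its law given $(\beta_i,\sigma_i^2)$, which is the product in~\eqref{eq:def_z_i_dis}.

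The only subtle point is the role of the condition $\mathbf 1\in\mathcal C(X)$. Independence of $\wt{A}_i$ from the residuals holds for any linear functional of $\wh\beta_i$, so this condition is not needed for that step. It is needed only to identify $\wt{A}_i$ with the raw average intensity $A_i$ when $c_{\wt{A}}=\contrastintensity$. Indeed, if $\mathbf 1=Xv$, then $H\mathbf 1=\mathbf 1$, so $A_i=\tfrac1K\mathbf 1^\top Y_i=\tfrac1K\mathbf 1^\top HY_i=\contrastintensity^\top\wh\beta_i$. Without it, $A_i$ would contain a residual component and fail to be independent of $S_i^2$. I will remark on this identification but do not expect any real obstacle; the argument is essentially the covariance computation above.
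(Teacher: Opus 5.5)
Your proposal is correct and follows essentially the same route as the paper, which writes $Z_i$ and $\widetilde{A}_i$ as linear functionals of $\widehat{\beta}_i$ and $S_i^2$ as a function of $(I-P_X)Y_i$, then uses joint Gaussianity so that vanishing cross-covariances give mutual independence: the two residual covariances vanish automatically, and $c_\theta^\top (X^\top X)^{-1} c_{\widetilde{A}}=0$ holds by Assumption~\ref{assu:design}. Your remark that $\mathbf 1\in\mathcal C(X)$ plays no role in the independence argument, and is needed only to identify $c_A^\top\widehat{\beta}_i$ with the raw average intensity $A_i$, is accurate; the paper's proof likewise never uses it.
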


The simplest, yet prevalent, situation that satisfies this assumption is the two-sample comparison.

\begin{example}[Two-sample comparison]
\label{ex:two_sample}
We seek to test equality of means for each unit between $K_1$ treatment and $K_2$ control subjects, with $K=K_1+K_2$. We encode \smash{$x_j = (1, 0)^\top$} for $j=1,\ldots,K_1$ and \smash{$x_j = (0, 1)^\top$} for $j=K_1+1,\ldots,K$. Our contrast of interest is \smash{$c_{\theta}=(1,-1)^\top$} so that 
\smash{$\wh \theta_i=\wb Y_{i,\mathrm{treated}}-\wb Y_{i,\mathrm{control}}$}, where \smash{$\wb Y_{i,\mathrm{treated}}$} and \smash{$\wb Y_{i,\mathrm{control}}$} denote the treated/control group means for the $i$-th unit. The average intensity in~\eqref{eq:avg_intensity} can be written as $A_i = (K_1  \wb Y_{i,\mathrm{treated}} + K_2 \wb Y_{i,\mathrm{control}})/K$.
In this case, $c_{\theta}$ and $\contrastintensity$ satisfy Assumption~\ref{assu:design}.
\end{example}

The two-sample comparison generalizes to the following commonly used linear model in which the analyst also adjusts for other variables, e.g., confounders.

\begin{example}[Treatment effect]
\label{ex:treatment_effect}
Consider the linear model $Y_{ij} = \alpha_i + \tau_i w_j + \wt{x}_j^\top\gamma_i + \varepsilon_{ij},$ where $w_j \in \{0,1\}$ is a treatment indicator, $\wt{x}_j \in \mathbb R^q$ is a vector of control variables, and an intercept is included. Let $X$ denote the corresponding design matrix and let $c$ denote the contrast vector selecting the coefficient $\tau_i$.
Then $c^\top (X^\top X)^{-1} \contrastintensity = 0$. 
Consequently, our analysis of \limmatrd{} extends to testing for the significance of treatment effects through the hypotheses $H_{i}:\tau_i=0$.
\end{example}

Most applications of~\limmatrd{} we have encountered indeed satisfy Assumption~\ref{assu:design}. Nevertheless, as far as we are aware, Assumption~\ref{assu:design} has not been spelled out explicitly before. 
This is of import. 
As one example, in MAnorm2~\citep{tu2021manorm2}, a popular method for two-sample ChIP-Seq comparisons, the average intensity $A_i$ is replaced by $\wt{A}_i^{\mathrm{MA2}} := (\wb Y_{i,\mathrm{treated}}+\wb Y_{i,\mathrm{control}})/2$  with the motivation of ``alleviating the influence of unbalanced group sizes.'' (Note that \smash{$\wt{A}_i^{\mathrm{MA2}} = A_i$} when $K_A=K_B$ but they are unequal otherwise.) This choice of \smash{$\wt{A}_i$} does not satisfy Assumption~\ref{assu:design}, and we show in simulations (Section~\ref{sec:numerical_exp}) that MAnorm2 can lead to substantial inflation of the FDR.

\subsection{Partially Bayes with average intensity: bivariate nuisances}
\label{subsec:bivariate_nuisance}
We now clarify how to reconcile~\eqref{eq:reg-model-trnd} (or its nonparametric generalization in~\eqref{eq:np-model-trnd}) with $M_i=A_i$ and~\eqref{eq:limma_trend_summary}. We posit Assumption~\ref{assu:design} holds for $\contrastintensity := \frac{1}{K}\sum_{j=1}^K x_j$. Then by Proposition~\ref{prop:orthog}, $Z_i, S_i^2, A_i$, are mutually independent and their distribution only depends on $\theta_i, \mu_i, \sigma_i^2$:
\begin{align}
\label{eq:def_z_i_dis_2d}
(Z_i, S_i^2, A_i)     \mid (\beta_i, \sigma_i^2) \, \sim \mathrm{N}(\theta_i,\, \nu^2 \sigma_i^2)\otimes \frac{\sigma_i^2}{K-p}\chi^2_{K-p} \otimes \mathrm{N}\left(\mu_i,\, \frac{1}{K}\sigma_i^2\right).
\end{align}
We will base inference for the primary parameter $\theta_i$ only on $Z_i, S_i^2,A_i$. Thus, we will treat $\mu_i$ and $\sigma_i^2$ as the nuisance parameters. Following the partially Bayes principle, we posit that $(\mu_i, \sigma_i^2)$ are drawn from an unknown bivariate prior, while $\theta_i$ is treated as fixed,
\begin{equation}
\label{eq:limma_trend_2d_prior}
(\mu_i,\sigma_i^2) \sim H, \text{ and } \theta_i \,\text{ is deterministic,}\; \qquad i=1,\ldots,n.
\end{equation}
Now, under~\eqref{eq:def_z_i_dis_2d} and~\eqref{eq:limma_trend_2d_prior}, the posterior distribution of $\sigma_i^2$ given $A_i$ is well-defined and (implicitly) marginalizes over $\mu_i$.  We say that $H$ is \emph{compatible} with \limmatrd{} if the aforementioned posterior takes the form in~\eqref{eq:np-model-trnd} with $M_i=A_i$. As an example, in Proposition~\bluelink{prop:joint_parametric} in the supplement, we show that the conjugate Gaussian/Inv$\chi^2$ prior $H$ is compatible. 

The above interpretation has two important consequences in this paper. First, in Section~\ref{sec:reg_ltrd_int} 
we show that our proposed \reglitrd{} procedure asymptotically controls the FDR \emph{even} if $H$ in~\eqref{eq:limma_trend_2d_prior} is not compatible; this is a strong form of robustness. However, without compatibility, the procedure loses its full partially Bayes interpretation in that case. To address this, in Section~\ref{sec:joint_limma_trend} we develop an alternative procedure that directly targets the prior in~\eqref{eq:limma_trend_2d_prior}, thereby retaining a fully partially Bayes interpretation even when $H$ is not compatible.

\section{\reglitrd{}: \limmatrd{} based on the NPMLE}
\label{sec:reg_limma_trend}
Throughout this section, we make the following assumption:
\begin{align}
        \label{eq:bayesian_cs_model_2}
        &(Z_i,S^2_i) \mid \theta_i,\sigma_i^2,M_i \sim \dnorm\left(\theta_i,\nu^2\sigma_i^2\right) \otimes \frac{\sigma_i^2}{K-p}\chi^2_{K-p}.
\end{align}
Our formulation of~\eqref{eq:bayesian_cs_model_2} captures two models. First, $M_i$ can be external side-information (cf. Section~\ref{subsec:background}), so that the above model corresponds to~\eqref{eq:bayesian_cs_model_1}, noting that the conditional distributions depend on $\beta_i$ only through $\theta_i$. Second, we allow for $M_i=A_i$, the average intensity, positing that Assumption~\ref{assu:design} holds, and moreover, that $(\mu_i, \sigma_i^2) \sim H$ as in~\eqref{eq:limma_trend_2d_prior}. In the latter case,~\eqref{eq:bayesian_cs_model_2} must be interpreted as marginalizing over $\mu_i$ (as described in Section~\ref{subsec:bivariate_nuisance}).

\subsection{Well-specified \reglitrd{}}
\label{sec:reg_ltrd_ext}

We now consider our nonparametric version of \limmatrd{}. In addition to~\eqref{eq:bayesian_cs_model_2},
 we posit~\eqref{eq:np-model-trnd} which we restate here: we posit that 
there exists a function $\xi_0(\cdot)$ such that:
\begin{equation}
\tau_i^2 \mid  M_i \, \sim\,  G,\; \text{ where }\; \tau_i^2 := \sigma_i^2 / \xi_0^2(M_i),\; \text{ and } \theta_i \text{ are deterministic}.
\label{eq:well-specified-limma-reg}
\end{equation}
 Both $\xi_0^2$ and $G$ are unknown. This model encompasses external side information, and the average intensity, in which case the bivariate prior $H$ in~\eqref{eq:limma_trend_2d_prior} must be compatible (as defined in Section~\ref{subsec:bivariate_nuisance}).

Using the notation of~\eqref{eq:well-specified-limma-reg}, we can equivalently express~\eqref{eq:reg-model-trnd} as $(\tau_i^{2})^{-1} \mid M_i \sim \chi^2_{\kappa_0}/(\kappa_0\, s_0^2)$. Our goal is to replace this parametric assumption by the assumption that $G \in \mathcal G_{\trd}$, where $\mathcal G_{\trd}$ is a nonparametric class of distributions. To this end, we start by rewriting the oracle partially Bayes p-values $P_i^{\ltrdp}$ as an explicit function of $G$ and $\xi_0$ in~\eqref{eq:bayesian_cs_model_2}, namely,\footnote{We specialize $P_i^{\ltrdp}$ in three different settings, each time introducing new notation (e.g., $P^\trd_i$ below) to indicate that the p-values pertain to the setting at hand.} 
\begin{align}
\label{eq:popn_emp_bayes_p_val}
P^\trd_i:=\frac{\int_{0}^{\infty}2\,\Phi\!\left(-\frac{|Z_i|}{\nu \,\xi_0(M_i)\tau_i}\right)
\,p_{\chi^2}(S^2_i\mid K-p,\xi^2_0(M_i)\tau^2_i)\,G(\dd\tau^2_i)}{\int_0^\infty p_{\chi^2}(S^2_i\mid K-p,\xi^2_0(M_i)\tau^2_i)\,G(\dd\tau^2_i)}, \qquad \mbox{for $i \in [n]$,}
\end{align}
where $p_{\chi^2}(\,\cdot\,\mid K-p,\tau^2)$ is the density of $\tau^2\chi^2_{K-p}/(K-p)$ random variable.

The following lemma shows that $P^\trd_i$ are uniform conditional on $S_i^2,M_i$, a desirable property (see Section~\ref{subsec:background}) implied by the partially Bayes specification.
\begin{lemm}
    \label{lem:bayes_trd_p_val_cs}
    Under the data generating model \eqref{eq:bayesian_cs_model_2} and \eqref{eq:well-specified-limma-reg}, 
    for all $i \in [n]$ such that $\theta_i=0$, we have 
    $
    \mathbb P\left[P^\trd_i \le t \mid S^2_i,M_i\right]=t, \; \mbox{for all $t \in [0,1]$, almost surely.}
    $
\end{lemm}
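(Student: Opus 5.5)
The plan is to identify $P^\trd_i$ as a posterior average of conditional null p-values, and then show that integrating out $Z_i$ leaves the uniform CDF. Fix $i$ with $\theta_i=0$.

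First, pin down the joint law of $(Z_i,S_i^2,\tau_i^2)$ given $M_i$. By \eqref{eq:well-specified-limma-reg}, $\tau_i^2\mid M_i\sim G$ and $\sigma_i^2=\xi_0^2(M_i)\tau_i^2$. By \eqref{eq:bayesian_cs_model_2}, conditional on $(\tau_i^2,M_i)$ the statistics $Z_i$ and $S_i^2$ are independent, with $Z_i\sim\mathrm N(0,\nu^2\xi_0^2(M_i)\tau_i^2)$ and $S_i^2$ having density $p_{\chi^2}(\cdot\mid K-p,\xi_0^2(M_i)\tau_i^2)$. By Bayes' rule, the conditional law of $\tau_i^2$ given $(S_i^2,M_i)$ is
\begin{equation}
\Pi(\dd\tau^2\mid S_i^2,M_i)=\frac{p_{\chi^2}(S_i^2\mid K-p,\xi_0^2(M_i)\tau^2)\,G(\dd\tau^2)}{\int_0^\infty p_{\chi^2}(S_i^2\mid K-p,\xi_0^2(M_i)u)\,G(\dd u)}.
\end{equation}
This is well-defined almost surely, because the denominator is the marginal density of $S_i^2$ given $M_i$ and is therefore positive and finite a.s. With this, \eqref{eq:popn_emp_bayes_p_val} reads $P^\trd_i=\int q(Z_i,\tau^2)\,\Pi(\dd\tau^2\mid S_i^2,M_i)$, where $q(z,\tau^2):=2\Phi(-|z|/\{\nu\xi_0(M_i)\tau\})$.

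Second, I use conditional independence: $Z_i$ is independent of $S_i^2$ given $(\tau_i^2,M_i)$. Hence, given $(S_i^2,M_i)$, the pair $(Z_i,\tau_i^2)$ is distributed as $\tau_i^2\sim\Pi(\cdot\mid S_i^2,M_i)$ followed by $Z_i\mid\tau_i^2\sim\mathrm N(0,\nu^2\xi_0^2(M_i)\tau_i^2)$. Equivalently, $Z_i=\nu\xi_0(M_i)\tau_i W$, where $W\sim\mathrm N(0,1)$ is independent of $(\tau_i^2,S_i^2,M_i)$. Therefore, conditional on $(S_i^2,M_i)$, the statistic $P^\trd_i$ equals $\psi(\tau_i|W|)$, with
\begin{equation}
\psi(x):=\int 2\Phi(-x/u)\,\Pi(\dd u^2\mid S_i^2,M_i).
\end{equation}

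The main obstacle is that $\psi(\tau_i|W|)$ is not $2\Phi(-|W|)$, since the $\tau$ inside $\psi$ is integrated independently of the true $\tau_i$. Uniformity is therefore not immediate. It should instead follow from the fact that $P^\trd_i$ is the posterior predictive null tail probability of $|Z_i|$. Write $F(x):=\mathbb P(|Z_i|\ge x\mid S_i^2,M_i)=\int 2\Phi(-x/\{\nu\xi_0(M_i)u\})\,\Pi(\dd u^2\mid S_i^2,M_i)$. Then $P^\trd_i=F(|Z_i|)$, where $F$ is the conditional survival function of $|Z_i|$ given $(S_i^2,M_i)$.

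To finish, I check that $F$ is continuous and strictly decreasing on $[0,\infty)$, with $F(0)=1$ and $F(\infty)=0$. This follows by dominated convergence, since each integrand is continuous, strictly decreasing in $x$, and bounded by $1$. Given this, the probability integral transform gives $\mathbb P(F(|Z_i|)\le t\mid S_i^2,M_i)=t$ for all $t\in[0,1]$, almost surely. For rigor, I will make the conditioning explicit: use a regular conditional distribution, or equivalently verify $\mathbb E[\mathbf 1\{P^\trd_i\le t\}\,g(S_i^2,M_i)]=t\,\mathbb E[g(S_i^2,M_i)]$ for bounded measurable $g$ via Fubini.
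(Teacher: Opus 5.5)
Your proposal is correct and follows the paper's proof: both write $F(x)=\mathbb P(|Z_i|\ge x\mid S_i^2,M_i)$ as the posterior average of $2\Phi(-x/\{\nu\xi_0(M_i)\tau\})$, observe that $P^\trd_i=F(|Z_i|)$, and conclude by the probability integral transform conditional on $(S_i^2,M_i)$. The detour through $\psi(\tau_i|W|)$ is unnecessary, and your check that $F$ is continuous, so that $|Z_i|$ has no conditional atoms, supplies a detail the paper leaves implicit.
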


Given the form of the oracle partially Bayes p-values in~\eqref{eq:popn_emp_bayes_p_val}, it remains to explain how to estimate $\xi_0$ and $G$. 
We will state our results below in terms of a general estimator $\widehat{\xi}(\cdot)$ that is consistent for $\xi_0(\cdot)$, that is, we do not prescribe a specific estimation strategy for this step. However, in practice, we use the approach implemented in the \limma{} package~\citep{ritchie2015limma} using natural splines (see,  Step 1 of Algorithm~\ref{algo:limma_trend} and Supplement~\bluelink{subsec:reg_trend}).
Next, we focus on estimating $G$ from $\{(S^2_i,M_i):i \in [n]\}$. The marginal density of $V^2_i :=S^2_i/\xi^2_0(M_i)$ conditioned on $M_i$ is given by
\begin{align}
\label{eq:lik_1d_limma_trend}
f_{G,K-p}(v^2)
    = \int_0^\infty 
        p_{\chi^2}(v^2\mid K-p,\tau^2)\, G(\dd\tau^2).
\end{align}
Thus, if $\xi_0$ were known, we could compute the NPMLE that optimize the joint marginal log-likelihood of $V^2_1,\ldots,V^2_n$ over all $G \in \mathcal{G}_{\tr}$. Instead, we compute the same NPMLE with $V_i^2$ replaced by plug-in estimates $\wh V_i^2$ that use $\wh \xi$ in lieu of the true $\xi_0$:
\begin{align}
\label{eq:def_small_ell}
\widehat{G}_{\trd}
   \in \argmax_{G  \in \mathcal G_{\trd}}
    \frac{1}{n}\sum_{i=1}^n
        \log f_{G,K-p}\!\left(\wh V^2_i\right),\;\;\text{ where }\; \wh V^2_i:=S^2_i/\wh \xi^2(M_i).
\end{align}
We emphasize that the estimator $\wh G_\trd$ is not an NPMLE per se, but rather an approximation to an oracle NPMLE that knows $V_i^2$ (and our theory will account for this discrepancy). We observe~\eqref{eq:def_small_ell} is a convex optimization that can be efficiently solved using the proposal of~\citet{koenker2014convex}. More specifically, we solve~\eqref{eq:def_small_ell} through approximating $\rmG$ using a discrete prior and reducing \eqref{eq:def_small_ell} to a conic programming problem, which 
is solved using MOSEK~\citep{aps2020mosek}.
Further details are provided in Supplement~\bluelink{sec:modek_reg_npmle}. 
We note that our theory does not account for discretization and for the data-driven choice of the support of the prior class.
Previous works, \citet{dicker2016highdimensional} and \citet{soloff2024multivariate}, have conducted analyses of the NPMLE while considering the effects of discretization.

Using the estimated trends $\wh \xi$ and prior $\wh G_\trd$, one can estimate $P^\trd_i$ using $\wh P^\trd_i$, where for all $i \in [n]$, the estimators are computed using \eqref{eq:popn_emp_bayes_p_val} by replacing $\xi_0$ and $G$ with $\wh \xi$ and $\wh G_\trd$, respectively. Our full proposal is summarized in Algorithm~\ref{algo:limma_trend_reg}:

\begin{algorithm}[H]
\caption{\limmatrd{} with nonparametric prior (\reglitrd{})}
\begin{algorithmic}[1]
\State Fit the trend of $\log(S_i^2)$ versus $M_i$, call this trend $\widehat{m}(\cdot)$ and its exponential $\widehat{\xi}^2(\cdot) = \exp(\widehat{m}(\cdot))$. 
\State Estimate $G$ as $\wh G_\trd$ using the approximate NPMLE with plugged-in $\wh{\xi}$ as in~\eqref{eq:def_small_ell}.
\State Compute the p-values $\wh P^\trd_i$ using \eqref{eq:popn_emp_bayes_p_val} by replacing $\xi_0$ with $\wh \xi$ and $G$ with $\wh G_\trd$.
\end{algorithmic}
\label{algo:limma_trend_reg}
\end{algorithm}
We now develop theory for \reglitrd{}. We assume the following on the data-generating process.
\begin{assumption}
\label{assu:limma_trnd_var}
The tuples $(\tau_i^2, M_i, Z_i, S_i^2)$ are generated according to~\eqref{eq:bayesian_cs_model_2} and~\eqref{eq:well-specified-limma-reg} for $K -p\ge 2$ and are jointly independendent across $i \in [n]$.
We have that
$\mathbb P\!\left(\max_{i \in [n]}|M_i| > \mathrm W\sqrt{\log n}\right)\le n^{-3}$ for some $W>0$. The trend-adjusted variance prior satisfies $G \in \mathcal G_{\trd}$, where $\mathcal G_{\trd}:=\left\{\rmG: \rmG\left([\underline L_\trd,\wb U_\trd]\right)=1\right\}$ for absolute constants $\underline L_\trd,\wb U_\trd>0$ and the true trend function $\xi_0$ satisfies $\xi_0(x)\in[\underline M_\trd,\overline M_\trd]$ for all $x\in\mathbb R$, where $\underline M_\trd,\overline M_\trd>0$ are absolute constants.
\end{assumption}

The choice of nonparametric class $\mathcal G_{\trd}$ makes no smoothness assumptions whatsoever, but it does impose lower and upper bounds (see the discussion following Assumption 8 in~\citet{ignatiadis2025empirical}).
We next make a high-level assumption on the estimation of the trend.
\begin{assumption}[Trend estimation]
\label{assum:trend_estimation}
Let $\mathrm W_n:=\mathrm W \sqrt{\log n}$ for $\mathrm W$ defined in Assumption~\ref{assu:limma_trnd_var}. For any two functions $f,g$, define the semi-norm $\|f-g\|_{\mathrm W_n}:=\sup_{x \in [-\mathrm W_n,\mathrm W_n]}|f(x)-g(x)|$. We assume that the estimated trend $\widehat\xi(\cdot)$ satisfies 
$
\|\widehat\xi-\xi_0\|_{\mathrm W_n}\le \Delta_n
$,
where $\Delta_n=o(1)$ is a deterministic sequence.  Moreover, we assume that $\widehat{\xi} \in \mathcal{X}$ almost surely, where $\mathcal{X}$ is a class of functions that contains $\xi_0$ and is separable under the supremum norm on $\RR$. Furthermore, all $\xi \in \mathcal X$ satisfy $\xi(\cdot)\in[\underline M_\trd,\overline M_\trd]$ and $\log N(\varepsilon,\mathcal X,\|\cdot\|_{\mathrm W_n})\lesssim (1/\varepsilon)^{h_1} \cdot |\log\varepsilon|^{h_2}$ for some constants $h_1,h_2 \in [0,1]$, 
where $N(\varepsilon,\mathcal X,\|\cdot\|_{\mathrm W_n})$ is the $\varepsilon$-covering number of the class $\mathcal X$ in $\|\cdot\|_{\mathrm W_n}$. 
\end{assumption}
Our main result quantifies the rate at which we approximate the oracle partially Bayes p-values and accounts for the combined impact of prior approximation and trend estimation error.
\begin{theorem}
\label{thm:conv_p_trnd_orc}
Suppose Assumptions~\ref{assu:limma_trnd_var} and~\ref{assum:trend_estimation} hold. Then, for any $\zeta \in \left(\frac12,1\right)$, there exists a constant $\mathrm{C}_1>0$, depending on $\underline M_\trd,\overline M_\trd,\underline L_\trd,\overline U_\trd,K,p,\nu$, and $\zeta$, such that,
\begin{align}
&\frac{1}{n}\sum_{i=1}^n 
\mathbb E\!\left[\left|P^\trd_i \wedge \zeta - \wh P_i^{\trd} \wedge \zeta\right|\right]\;\le\; \mathrm{C}_1\cdot \mathfrak L_n(\Delta_n) \,\text{ for all }\, n \ge 1,
\end{align} 
where
$
\mathfrak L_n(\Delta_n):=\max\left\{\Delta_n \log^{5/2} n,\,\frac{\log^{5/2} n}{n^{1/4}}\Delta^{1/2}_n,\,\frac{\log^2 n}{n^{1/4}}|\log \Delta_n|^{h_2/4}\Delta^{\frac{1}{2}(1-h_1/2)}_n,\,\frac{\log^{5/2} n}{n^{1/2}}\right\}.
$
\end{theorem}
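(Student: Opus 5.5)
The proof splits the error into two sources: replacing $\xi_0$ by $\widehat\xi$ in the p-value formula, and replacing $G$ by $\widehat G_\trd$. Write $P_i^\trd=\mathcal P(Z_i,S_i^2,M_i;\xi_0,G)$ and $\widehat P_i^\trd=\mathcal P(Z_i,S_i^2,M_i;\widehat\xi,\widehat G_\trd)$. Introduce the intermediate quantity $\mathcal P(\cdot;\widehat\xi,G)$ and use the triangle inequality, together with the fact that $x\mapsto x\wedge\zeta$ is $1$-Lipschitz.

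\textbf{Step 1: trend perturbation with the prior fixed.} On the event $\{\max_i|M_i|\le \mathrm W_n\}$, whose complement has probability $n^{-3}$ and contributes $O(n^{-3})$ since p-values lie in $[0,1]$, we have $|\widehat\xi(M_i)-\xi_0(M_i)|\le\Delta_n$. Moreover $\xi_0$ and $\widehat\xi$ are bounded in $[\underline M_\trd,\overline M_\trd]$. The p-value is a posterior average of $2\Phi(-|Z_i|/(\nu\xi\tau))$ under a posterior on $\tau^2$ with likelihood $p_{\chi^2}(S_i^2\mid K-p,\xi^2\tau^2)$, and $\tau^2$ ranges over the compact set $[\underline L_\trd,\overline U_\trd]$.

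Differentiating in $\xi$ gives two derivatives. The integrand has derivative bounded by an absolute constant. The log-likelihood has derivative of order $1+S_i^2$, so the posterior reweighting changes by at most a factor $\exp(C\Delta_n(1+S_i^2))$. Since $S_i^2$ has sub-exponential tails, truncating at $S_i^2\lesssim\log n$ yields the $\Delta_n\log n$-type term, absorbed into $\Delta_n\log^{5/2}n$.

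\textbf{Step 2: prior perturbation with the trend fixed at $\widehat\xi$.} Adapt the argument of \citet{ignatiadis2025empirical} for untrended NPMLE p-values. That argument has two parts:
\begin{enumerate}[label=(\alph*)]
\item a Hellinger bound on $h(f_{\widehat G},f_G)$;
\item a stability bound converting the Hellinger distance into an $L^1$ bound on the truncated p-values.
\end{enumerate}
The truncation at $\zeta<1$ is used in (b). It lets one bound the p-value as a ratio $N_G/f_G$, where the denominator is controlled from below on high-probability regions, giving the $\log^{5/2}n$ factors.

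For (a), $\widehat G_\trd$ maximizes the likelihood of $\widehat V_i^2$ rather than of $V_i^2$. I would compare with the oracle by writing
\[
\widehat V_i^2=V_i^2\,\xi_0^2(M_i)/\widehat\xi^2(M_i),
\]
so that the $\widehat V_i^2$ are drawn from a mixture with mixing kernel perturbed by $O(\Delta_n)$. The likelihood comparison is
\[
\frac1n\sum_i\log f_{\widehat G}(\widehat V_i^2)\ge\frac1n\sum_i\log f_{G}(\widehat V_i^2),
\]
and replacing $\widehat V_i^2$ by $V_i^2$ costs $O(\Delta_n\log n)$ on the truncation event. The standard empirical-process bound for the NPMLE (Zhang-type, with entropy $\log^2(1/\varepsilon)$ for the chi-square mixture class) then gives
\[
h^2\lesssim \frac{\log^2 n}{n}+\Delta_n\log n.
\]
However, $\widehat\xi$ is data-dependent, so the chaining must be uniform over $\xi\in\mathcal X$. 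The covering entropy $(1/\varepsilon)^{h_1}|\log\varepsilon|^{h_2}$ at scale $\varepsilon\asymp\Delta_n$ adds a term of order $n^{-1/2}\Delta_n^{-h_1/2}$, up to logs, to $h^2$. Separability of $\mathcal X$ ensures measurability. Taking square roots and combining with (b), each of $\Delta_n$, $n^{-1/2}\Delta_n^{1/2}$, and the entropy term contributes, producing the $n^{-1/4}\Delta_n^{1/2}$, $n^{-1/4}\Delta_n^{(1-h_1/2)/2}|\log\Delta_n|^{h_2/4}$, and $n^{-1/2}$ terms of $\mathfrak L_n$.

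\textbf{Main obstacle.} The hardest step is this uniform-over-$\mathcal X$ Hellinger analysis. The estimated trend and the NPMLE share the same data, so one cannot condition on $\widehat\xi$. I would handle this with a union bound over an $\varepsilon$-net of $\mathcal X$ under $\|\cdot\|_{\mathrm W_n}$, combined with Lipschitz continuity of $\xi\mapsto\log f_G(S_i^2/\xi^2(M_i))$, and then optimize over $\varepsilon$. Finally, I would sum the bounds from Step 1 and Step 2, average over $i$, and add the $O(n^{-3})$ bad-event contribution.
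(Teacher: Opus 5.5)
Your decomposition through the intermediate $(\widehat\xi,G)$, instead of the paper's $(\xi_0,\widehat G)$, is legitimate. Step~1 is fine: it gives a pointwise error $\lesssim\Delta_n(1+S_i^2)$, which is linear in $\Delta_n$.

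\textbf{The gap is in Step~2(a).} Bounding the cost of replacing $\widehat V_i^2$ by $V_i^2$ in the log-likelihood by $O(\Delta_n\log n)$ can only give
$\mathcal H^2(f_{\widehat G,K-p},f_{G,K-p})\lesssim\Delta_n\log n+(\log n)^2/n$. Hence $\mathcal H\lesssim(\Delta_n\log n)^{1/2}$, and after (b) the p-value error is of order $\Delta_n^{1/2}$ times logarithms.
\begin{itemize}
\item This is exactly the rate $\mathfrak L_{n,\mathrm{mis}}$ the paper obtains in the misspecified setting (Lemma~\ref{lem:lik_ratio_limma_trnd_ms}, Proposition~\ref{prop:avg_sign_limma_trnd_ms}). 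There the linear term is bounded crudely because the score identity fails.
\item It does not give the term $\Delta_n\log^{5/2}n$ in $\mathfrak L_n(\Delta_n)$. For a parametric trend with $\Delta_n=n^{-1/2}$ you would get $n^{-1/4}$ instead of $n^{-1/2}$, up to logs.
\item Your final tally does not follow from your own bound. Nothing in $\Delta_n\log n+(\log n)^2/n$ produces $n^{-1/4}\Delta_n^{1/2}$ or a term linear in $\Delta_n$.
\item Your entropy term $n^{-1/2}\Delta_n^{-h_1/2}$ comes from a single-scale net. It lacks the factor $\Delta_n$ that a localized entropy integral supplies, and it even increases as $\Delta_n$ decreases.
\item Your Lipschitz bound is deterministic on the truncation event, so it needs no uniformity over $\mathcal X$. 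The ``main obstacle'' you name is not where the difficulty lies.
\end{itemize}

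\textbf{The missing idea is cancellation in the first-order term.} Lemma~\ref{lem:lik_ratio_limma_trnd} Taylor-expands $\eta\mapsto\log f_{G',K-p}(S_i^2/\eta^2)-2\log\eta$ to second order at $\eta=\xi_0(M_i)$, for both $G'=\widehat G$ and $G'=G$.
\begin{itemize}
\item The quadratic term costs $\Delta_n^2\log^2 n$.
\item The linear term $n^{-1}\sum_i D(\xi_0(M_i),S_i^2;G')\{\widehat\xi(M_i)-\xi_0(M_i)\}$ is split into its conditional mean and a centered part.
\item The conditional mean vanishes for $G'=G$, since it is the score at the true trend. For $G'=\widehat G$ it is bounded by $\mathcal H(f_{\widehat G,K-p},f_{G,K-p})\,\Delta_n$, using $\int D(\cdot;\widehat G)f_{\widehat G,K-p}=0$ and Cauchy--Schwarz.
\item The centered part is an empirical process indexed by the localized ball $\{\xi\in\mathcal X:\|\xi-\xi_0\|_{\mathrm W_n}\le\Delta_n\}$, union-bounded over a $(\log n)^2$-entropy cover of the mixture scores. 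Its increments are proportional to $\|\xi_1-\xi_2\|_{\mathrm W_n}$. This is where uniformity over $\mathcal X$ genuinely enters. Dudley chaining gives $n^{-1/2}\{\Delta_n\log^2 n+\log n\,|\log\Delta_n|^{h_2/2}\Delta_n^{1-h_1/2}\}$.
\item The resulting approximate-NPMLE inequality has $\mathcal H(f_{\widehat G,K-p},f_{G,K-p})\Delta_n$ on its right-hand side. A peeling argument over Hellinger shells (Lemma~\ref{lem:hell_limm_trnd_cs}) is needed to conclude $\mathcal H^2\lesssim\lambda_n^2$ with $\lambda_n$ linear in $\Delta_n$.
\end{itemize}
Only then does (b), at a cost of $\sqrt{|\log\lambda_n|}\cdot\log n$, yield $\mathfrak L_n(\Delta_n)$.

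\textbf{A smaller point.} You perturb the prior at the data-dependent $\widehat V_i^2$. The step $\mathbb E[\mathbf 1\{S_i^2\lesssim\log n\}/f_{G,K-p}(\cdot)]\lesssim\log n$ in (b) therefore has to be transferred back to $V_i^2$ first, via the ratio bound $e^{C\Delta_n\log n}$. The paper avoids this by perturbing the prior at $\xi_0$.
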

If $\mathcal X$ contains parametric functions, then $h_1=0$ and $h_2=1$. In that case, the optimal function estimation rate is $\Delta_n=n^{-1/2}$ and $\mathfrak L_n(\Delta_n)$ is also equal to $n^{-1/2}$ up to log factors. Similarly, if $\mathcal X$ is $\varrho$-H\"older smooth on $[-\mathrm W_n,\mathrm W_n]$ for $\varrho \geq 1$, then we can take $h_1=1/\varrho$ and $h_2 \geq 0$ and our result yields $\mathfrak L_n(\Delta_n)$ of order $n^{-\varrho/(2\varrho+1)}$ (up to log factors), which is also the minimax rate for estimating $\xi_0$.
Thus, the statistical cost of p-value estimation matches (and is dominated by) that of the trend function estimation up to logarithmic factors for parametric and H\"older-smooth trends.

\begin{rem}[Proof strategy: marginal density convergence]
\label{rem:marginal}
The proof has two primary ingredients. First, we control the squared Hellinger distance between the true and fitted marginal densities (Lemma~\ref{lem:hell_limm_trnd_cs}), $$\mathcal H^2\left(f_{G,K-p},f_{\wh G_\trd,K-p}\right):=\int_{0}^\infty\left(\sqrt{f_{G,K-p}(x)}-\sqrt{f_{\wh G_\trd,K-p}(x)}\right)^2\dd x,$$ accounting for the fact that $\wh G_\trd$ is an approximate NPMLE under a misspecified likelihood in which the true trend $\xi_0$ is replaced by an estimate $\wh \xi$; we handle this by extending techniques from \cite{chen2024empiricalbayesestimationprecision}. Second, an Eddington--Tweedie-type identity (see~(\bluelink{eq:tweedie_trnd}) of Supplement~\bluelink{sec:proof_trend_pval}, analogous to Proposition 11 in~\citet{ignatiadis2025empirical}) expresses the p-values $P^\trd_i$ as functionals of the marginal density, so Hellinger consistency of the marginal transfers directly into rates for the plug-in p-values. 

A practical consequence is that our guarantees go through the marginal density, not the latent prior $G$. We therefore recommend visual diagnostics that compare the fitted marginal to the observed data, rather than diagnostics that target $G$ itself; see Section~\ref{sec:case_studies}.
\end{rem}

The following results are established under the same assumptions as Theorem \ref{thm:conv_p_trnd_orc}. First, we show that the conditional uniformity of Lemma~\ref{lem:bayes_trd_p_val_cs} holds approximately for the estimated p-values. 

\begin{proposition}
    \label{prop:avg_sign_limma_trnd}
    Suppose Assumptions~\ref{assu:limma_trnd_var} and~\ref{assum:trend_estimation} hold.
    Then, there exists a constant $C_\trd>0$ depending on $\nu,K,p,\underline M_\trd,\overline M_\trd,\underline L_\trd,\overline U_\trd, \zeta$, such that
    \begin{align}
        \max_{i \in [n]\,:\, \theta_i=0}\left\{\mathbb E_G\left[\sup_{t \in [0,\zeta]}\left|\mathbb P_G\!\left(\widehat P_i^{\trd} \le t \mid \{(S_j^2,M_j)\}_{j=1}^n\right)-t\right|\right]\right\} \le C_\trd\,\mathfrak L_n (\Delta_n). 
    \end{align}
\end{proposition}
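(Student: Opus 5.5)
The plan is to reduce the statement to Theorem~\ref{thm:conv_p_trnd_orc} and Lemma~\ref{lem:bayes_trd_p_val_cs}. Throughout, fix a null unit $i$ and write $\mathcal F := \sigma(\{(S_j^2,M_j)\}_{j=1}^n)$. The key structural point is this: conditionally on $\mathcal F$, the estimates $\wh\xi$ and $\wh G_\trd$ are fixed. I will check this first. $\wh\xi$ is fit from $(\log S_j^2, M_j)$ only, and $\wh G_\trd$ solves \eqref{eq:def_small_ell}, which uses only $\wh V_j^2$. Both are therefore $\mathcal F$-measurable, assuming a measurable selection of the argmax. Hence $\wh P_i^\trd = 2\int \Phi(-|Z_i|/\{\nu\wh\xi(M_i)\tau\})\,\wh\Pi(\dd\tau^2\mid S_i^2,M_i)$ is a fixed, strictly decreasing function of $|Z_i|$ given $\mathcal F$. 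The same holds for $P_i^\trd$. By joint independence across units and \eqref{eq:bayesian_cs_model_2}, under $\theta_i=0$ the variable $Z_i$ given $\mathcal F$ (together with $\tau_i^2$) is $\mathrm N(0,\nu^2\sigma_i^2)$. Moreover, the posterior of $\tau_i^2$ given $\mathcal F$ equals its posterior given $(S_i^2,M_i)$. So Lemma~\ref{lem:bayes_trd_p_val_cs} upgrades to $\mathbb P(P_i^\trd\le t\mid\mathcal F)=t$.

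Next I would compare the two conditional CDFs. For $t\in[0,\zeta]$ and any $\delta>0$, write $\mathbb P(\wh P_i^\trd\le t\mid\mathcal F)-t$. Bound it above by $\mathbb P(P_i^\trd\le t+\delta\mid\mathcal F)-t+\mathbb P(|P_i^\trd\wedge\zeta'-\wh P_i^\trd\wedge\zeta'|>\delta\mid\mathcal F)$, and below symmetrically, for some $\zeta'\in(\zeta,1)$. Using conditional uniformity, this gives
\[
\sup_{t\in[0,\zeta]}\bigl|\mathbb P(\wh P_i^\trd\le t\mid \mathcal F)-t\bigr|\le \delta+\delta^{-1}\,\mathbb E\bigl[|P_i^\trd\wedge\zeta'-\wh P_i^\trd\wedge\zeta'|\,\big|\,\mathcal F\bigr],
\]
which after optimizing $\delta$ yields only a square-root rate. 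To retain the linear rate $\mathfrak L_n(\Delta_n)$, I will avoid Markov and instead use monotonicity in $|Z_i|$. Given $\mathcal F$, both p-values are decreasing functions of $|Z_i|$, say $\wh p(\cdot)$ and $p(\cdot)$. Then $\{\wh P_i^\trd\le t\}=\{|Z_i|\ge \wh p^{-1}(t)\}$, so $\mathbb P(\wh P_i^\trd\le t\mid\mathcal F)=p(\wh p^{-1}(t))$. The supremum is thus $\sup_{z:\wh p(z)\le\zeta}|p(z)-\wh p(z)|$, a deterministic-given-$\mathcal F$ sup-norm difference of two functions. These are posterior averages of $2\Phi(-z/(\nu\xi\tau))$ under two posteriors, computed with trend/prior $(\xi_0,G)$ versus $(\wh\xi,\wh G_\trd)$. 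Its expectation is what I need to bound.

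The main step, and the obstacle, is to bound $\mathbb E\sup_z|p(z)-\wh p(z)|$ for the null unit by $\mathfrak L_n(\Delta_n)$. The approach is to reuse the internals of Theorem~\ref{thm:conv_p_trnd_orc} rather than its averaged statement. The Eddington--Tweedie representation expresses $p(z)$ through the marginal density $f_{G,K-p}$ at $V_i^2$ (and at shifted arguments involving $z^2$), uniformly in $z$. The error then splits into two parts. The first is a trend part: replacing $\xi_0(M_i)$ by $\wh\xi(M_i)$, which is Lipschitz and controlled by $\Delta_n$ on the event $\max|M_j|\le \mathrm W_n$; the complementary event has probability at most $n^{-3}$ and costs trivially since p-values are bounded. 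The second is a marginal part, controlled via the Hellinger bound of Lemma~\ref{lem:hell_limm_trnd_cs} together with regularization of $f$ from below, as in \citet{ignatiadis2025empirical}.

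The delicate issue is that the theorem averages over $i$, whereas here a maximum over null $i$ is required. I would handle this by noting that the bounds in the proof are uniform over the index once we integrate over $(S_i^2,M_i)$. Each unit's $(S_i^2,M_i)$ has the same marginal law under \eqref{eq:well-specified-limma-reg}, with $M_i$ tail-controlled, so a per-unit expectation bound follows. If dependence of $\wh G_\trd$ on unit $i$'s own data creates trouble, I would fall back on a leave-one-out comparison: removing one observation perturbs the log-likelihood objective by $O(\log n/n)$, which is absorbed into $\mathfrak L_n$.
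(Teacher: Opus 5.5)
Your proposal is correct and takes the same route as the paper, which proves this by retracing Proposition~15 of \citet{ignatiadis2025empirical}. Conditionally on $\{(S_j^2,M_j)\}_{j}$, the estimated p-value is a known decreasing function of $|Z_i|$, so the conditional CDF error at $t\le\zeta$ equals $|p(z)-\widehat p(z)|$ at $z=\widehat p^{-1}(t)\ge\underline z$; the expectation of $\sup_{|z|\ge\underline z}|p(z)-\widehat p(z)|$ is exactly what the proof of Theorem~\ref{thm:conv_p_trnd_orc} controls through $\mathfrak B^\trd_1$ and $\mathfrak B^\trd_2$. Passing from the average over $i$ to the maximum follows from exchangeability: that proof already treats the pairs $(S_i^2,M_i)$ as identically distributed, and $\widehat\xi,\widehat G_\trd$ are symmetric in the sample, so your leave-one-out fallback is not needed.
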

In practice, \limma{}-type p-values are used alongside the BH procedure~\citep{benjamini1995controlling} to control the false discovery rate (FDR). For this reason, below we show asymptotic FDR control using BH on $\wh P_1^{\trd},\ldots, \wh P_n^{\trd}$.

Let us briefly review the BH procedure applied at level $\alpha \in (0,1)$ to a generic collection of
p-values $P^{\mathrm{gen}}_1, \ldots, P^{\mathrm{gen}}_n$. One first sorts the p-values as $P^{\mathrm{gen}}_{(1)} \le P^{\mathrm{gen}}_{(2)} \le \cdots \le P^{\mathrm{gen}}_{(n)}$ and determines the largest $j$ (called, $\wh j$) such that $P^{\mathrm{gen}}_{(j)} \le (j/n) \alpha$. If no such $j$ exists, then $\wh{j}=0$. Finally, the hypotheses corresponding to $P^{\mathrm{gen}}_{(1)},\cdots,P^{\mathrm{gen}}_{\left(\wh j\right)}$ are rejected. If the p-values are super uniform and independent, then this procedure controls the FDR at $\alpha$~\citep{benjamini1995controlling}. 

To provide asymptotic results on the BH procedure, we introduce the following definition.
\begin{definition}
\label{asm:1d_limma_trend_bh}
A sequence $\{P^{\mathrm{gen}}_i:i\in [n]\}$ is \emph{critically dense} at $\alpha \in (0,1)$, if there exists $0<t_0(\alpha)<t_1(\alpha)<\alpha$ such that
\(
\smash{\liminf_{n \rightarrow \infty}\inf_{t \in [t_0,t_1]}\left\{(nt)^{-1}~\sum_{i=1}^n\mathbb P\!\left[P^{\mathrm{gen}}_i \le t
\right]\right\}~>~\alpha^{-1}.}
\)
\end{definition}
This criticality assumption is closely related to the standard asymptotic setup of~\citet{storey_2004}. Under weak dependence on the $P^{\mathrm{gen}}_i$, it implies that BH asymptotically rejects all p-values below a data-driven threshold $\wh{t} \geq t_0$; also see~\citet[Section 3.3.1]{ignatiadis2025empirical}.

In our setting, we apply BH at level $\alpha$ to $\wh P_1^{\trd},\ldots,\wh{P}_n^{\trd}$, which are only approximate p-values. We denote the number of false discoveries by $V_n^{\trd}$, by
the total number of discoveries $R_n^{\trd}$, and the FDR by $\fdr^\trd_n := \mathbb E\left[V^\trd_n/(R^\trd_n \vee 1)\right].$ Our main FDR control results is the following.

\begin{theorem}
\label{thm:final_rate_cs}
 Fix $\alpha \in (0,1)$. Suppose Assumptions~\ref{assu:limma_trnd_var} and~\ref{assum:trend_estimation} hold and the oracle partially Bayes p-value sequence $\{P^\trd_i\}$ is critically dense at $\alpha$ in the sense of Definition~\ref{asm:1d_limma_trend_bh}. Let $n_0:= \#\{i \in [n]: \theta_i=0\}$. Then, for any sequence $\vartheta_n > 0$ with 
\(
\smash{\limsup_{n \rightarrow \infty}\mathfrak L_n(\Delta_n) \vartheta_n =0,}
\)
we have
\[
\lim_{n \to \infty}
\left\{\vartheta_n\left(\fdr_n^{\trd}-\frac{n_0}{n}\alpha\right)_+\right\}=0.
\]
\end{theorem}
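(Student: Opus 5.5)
We follow the standard route for asymptotic FDR control of BH with approximate p-values (as in \citet[Section 3.3.1]{ignatiadis2025empirical}): compare the BH procedure run on $\wh P_i^{\trd}$ with an idealized procedure built from the oracle p-values $P_i^{\trd}$, and transfer FDR control using Theorem~\ref{thm:conv_p_trnd_orc} and Proposition~\ref{prop:avg_sign_limma_trnd}.

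\emph{Step 1: reduce BH to a threshold.} Write the BH threshold as $\wh t = \sup\{t\in[0,\alpha]: t \le \alpha \wh R(t)/n\}$, where $\wh R(t)=\sum_i \mathbf 1\{\wh P_i^{\trd}\le t\}$. Then
\[
\fdr_n^{\trd}=\mathbb E\big[\wh V(\wh t)/(\wh R(\wh t)\vee 1)\big], \qquad \wh V(t)=\sum_{i:\theta_i=0}\mathbf 1\{\wh P_i^{\trd}\le t\}.
\]
We also define the oracle counterparts $R(t)=\sum_i \mathbf 1\{P_i^{\trd}\le t\}$ and $V(t)$, together with their means $\bar R(t)=\sum_i\mathbb P(P_i^{\trd}\le t)$ and $\bar V(t)=n_0t$. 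The identity $\bar V(t)=n_0 t$ holds because, by Lemma~\ref{lem:bayes_trd_p_val_cs}, null oracle p-values are exactly uniform. Since $\zeta>\alpha$, all relevant thresholds lie in $[0,\zeta]$, so truncating at $\zeta$ as in Theorem~\ref{thm:conv_p_trnd_orc} is harmless.

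\emph{Step 2: uniform concentration.} We aim to show that
\[
\sup_{t\in[t_0,\alpha]} n^{-1}|\wh R(t)-\bar R(t)| \quad\text{and}\quad \sup_{t\in[t_0,\alpha]} n^{-1}|\wh V(t)-n_0 t|
\]
are $O_p(\mathfrak L_n(\Delta_n)+n^{-1/2}\sqrt{\log n})$ in $L^1$. We split each difference into two pieces.
\begin{itemize}
\item \emph{Oracle piece.} The oracle p-values are independent across $i$, since the tuples are independent. A DKW/Bernstein bound over a grid of $t$ handles $R-\bar R$ and $V-\bar V$.
\item \emph{Estimation piece.} For $\wh R-R$ we cannot directly use the $L^1$ closeness $|P_i\wedge\zeta-\wh P_i\wedge\zeta|$, because indicators are discontinuous. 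We handle this by sandwiching:
\[
\mathbf 1\{P_i\le t-\delta\}-\mathbf 1\{|P_i-\wh P_i|>\delta\}\le \mathbf 1\{\wh P_i\le t\}\le \mathbf 1\{P_i\le t+\delta\}+\mathbf 1\{|P_i-\wh P_i|>\delta\}.
\]
Markov's inequality with Theorem~\ref{thm:conv_p_trnd_orc} bounds the second term by $\mathfrak L_n/\delta$ on average. The smoothing cost in the first term is controlled by Lipschitzness of $\bar R$ in $t$. This Lipschitz property needs a bounded density of $P_i^{\trd}$ on $[t_0,\alpha]$, which follows from the explicit form \eqref{eq:popn_emp_bayes_p_val} with $\tau_i^2$ and $\xi_0$ bounded.
\end{itemize}
Optimizing over $\delta$ gives $\sqrt{\mathfrak L_n}$. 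This is too slow, so we instead use Proposition~\ref{prop:avg_sign_limma_trnd} directly for the null part: conditional on $\{(S_j^2,M_j)\}$, the $\wh P_i^{\trd}$ for null $i$ are functions of independent $Z_i$. This gives $\mathbb E[\wh V(t)]=n_0t+O(n\mathfrak L_n)$ uniformly in $t$, together with conditional independence for concentration.

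\emph{Step 3: localize the threshold.} Critical density gives $\bar R(t)/(nt)>\alpha^{-1}+\epsilon$ on $[t_0,t_1]$. Hence, on a high-probability event, $\wh t\ge t_0$ (more precisely $\wh t\ge t_1$). On that event,
\[
\wh V(\wh t)/\wh R(\wh t)\le \wh V(\wh t)\,\alpha/(n\wh t),
\]
and
\[
\sup_{t\ge t_0}\frac{\wh V(t)}{nt}\le \frac{n_0}{n}+\frac{1}{t_0}\sup_t\frac{|\wh V(t)-n_0 t|}{n}.
\]
Taking expectations and bounding the complement event by $1$ times its probability gives
\[
\fdr_n^{\trd}-\tfrac{n_0}{n}\alpha\lesssim \mathfrak L_n(\Delta_n)+\sqrt{\log n/n}+\mathbb P(\wh t<t_0).
\]
Since $\sqrt{\log n/n}\le \mathfrak L_n$ up to logs (as $\log^{5/2}n/n^{1/2}\le\mathfrak L_n$), multiplying by $\vartheta_n$ with $\mathfrak L_n\vartheta_n\to0$ yields the claim, provided $\mathbb P(\wh t<t_0)=o(\mathfrak L_n)$.

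\emph{Main obstacle.} The hardest part is this last requirement together with the uniform null count control in Step 2. The $\wh P_i$ share $\wh G_\trd$ and $\wh\xi$, so they are dependent, and only an $L^1$ rate is available. The bound $\mathbb P(\wh t<t_0)$ must come from a one-sided lower bound on $\wh R$ at a single point $t_1$, using the sandwich with a fixed $\delta$ and the margin $\epsilon$. This gives probability $O(\mathfrak L_n/\delta)+\exp(-cn)$, which is acceptable.
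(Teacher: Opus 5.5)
Your argument is correct, but it takes a different route from the one the paper uses for this theorem.

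\textbf{The paper's route.} The paper defers to the proof of Theorem~\ref{thm:final_rate}, which is a leave-one-out argument in the style of Roquain--Verzelen. For a null $i$, replace $Z_i$ by $\pm\infty$. Then $\widehat G_{\mathrm{trd}}$ and $\widehat\xi$ do not change, because they depend only on $\{(S_j^2,M_j)\}$. Conditioning on $\{(S_j^2,M_j)\}$ and on the leave-one-out rejection count $R_{n,i}$ gives a per-null contribution of at most $\alpha/n+\Delta_i/R_{n,i}$. Here $\Delta_i$ is the supremum over $|z|\ge\underline z$ of the gap between the estimated and oracle p-value maps. Summing gives $(\mathrm{FDR}_n-\tfrac{n_0}{n}\alpha)_+\le\mathbb P(R_n<n\kappa_n)+\mathbb E[\sum_{i}\Delta_i]/(n\kappa_n)$. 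The term $\mathbb P(R_n<n\kappa_n)$ is bounded exactly as in your one-point argument at $t_1$.

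\textbf{Your route.} You use BH self-consistency $\widehat R(\widehat t)=n\widehat t/\alpha$ and localize $\widehat t\ge t_1$. You then control $\sup_{t\le\zeta}|\widehat V(t)-n_0t|/n$ by conditional DKW, using that the null $\widehat P_i$ are conditionally independent given $\{(S_j^2,M_j)\}$, together with Proposition~\ref{prop:avg_sign_limma_trnd}. This is essentially the decomposition the paper itself uses for Theorem~\ref{thm:final_rate_mis} and for the compound results. Here it yields an excess of $O(\mathfrak L_n(\Delta_n)+n^{-1/2})=O(\mathfrak L_n(\Delta_n))$, which suffices.

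\textbf{Comparison.} The leave-one-out route needs only the per-unit error $\Delta_i$. It requires no uniform-in-$t$ empirical-process control and no division by $\widehat t$. Yours is more elementary and makes the role of conditional calibration explicit. The cost is an $n^{-1/2}$ term, which $\mathfrak L_n$ absorbs anyway.

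\textbf{Loose ends.} First, your Step~2 claim that $\widehat R-\bar R$ is uniformly $O(\mathfrak L_n)$ is unsupported; your own sandwich gives only $\sqrt{\mathfrak L_n}$. You never use this claim, however. Only the fixed-$\delta$ lower bound on $\widehat R(t_1)$ is needed, so the Lipschitz/bounded-density digression can be dropped. Second, what you need is $\mathbb P(\widehat t<t_0)=O(\mathfrak L_n)$, not $o(\mathfrak L_n)$. Your Markov bound delivers this once the sandwich is written for the truncated p-values $P_i\wedge\zeta$ and $\widehat P_i\wedge\zeta$, so that Theorem~\ref{thm:conv_p_trnd_orc} applies. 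That version is valid since $t_1<\zeta$.
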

We emphasize that we impose the criticality condition on the oracle p-values $P^\trd_i$; the procedure we study uses the estimated p-values $\wh P^{\trd}_i$.

\subsection{Misspecified \reglitrd{}}
\label{sec:reg_ltrd_int}

In this section, we study the robustness of Algorithm~\ref{algo:limma_trend_reg} when the model~\eqref{eq:well-specified-limma-reg} is misspecified or when Assumption~\ref{assum:trend_estimation} fails and $\wh \xi$ is not even consistent for $\xi_0$. The reason is two-fold: first,~\eqref{eq:well-specified-limma-reg} can be violated in some settings (e.g., in our proteomics dataset from Section~\ref{sec:seq_peptide}), yet \limmatrd{} is still used in practice, raising the question of whether it retains any inferential guarantees. Second, it is common to impose effectively parametric assumptions when estimating trends (e.g., the trend estimation in Supplement~\bluelink{subsec:reg_trend} is effectively parametric) and such parametric trends are likely to be misspecified.
In particular, we continue to assume~\eqref{eq:bayesian_cs_model_2} and replace~\eqref{eq:well-specified-limma-reg} by the following:
\begin{equation}
M_i \sim \mathbb P^{M},\;\;\sigma_i^2 \mid  M_i \sim G_{M_i}(\cdot \mid M_i), \text{ and } \theta_i \text{ is deterministic}.
\label{eq:misspecified-limma-reg}
\end{equation}
Thus, we now allow for the conditional distribution of $\sigma_i^2$ to depend in an arbitrarily complicated way on $M_i$. (We explicitly define the distribution $\mathbb P^M$ of $M_i$ here because it will be needed below.)

Meanwhile, we replace Assumption~\ref{assum:trend_estimation} by the following.
\begin{assumption}[Misspecified trend estimation]
\label{assum:missp_trend_estimation}
There exists a deterministic function $\xi_\mis$ such that 
$\|\wh \xi-\xi_\mis\|_\infty \le \Delta_n$ where $\Delta_n=o(1)$. Moreover, the estimated trend satisfies the regularity conditions of Assumption~\ref{assum:trend_estimation} (that is, formally replacing $\xi_0$ by $\xi_\mis$ in the statement therein).
\end{assumption}
Even when~\eqref{eq:well-specified-limma-reg} holds, Assumption~\ref{assum:missp_trend_estimation} allows for $\xi_\mis \neq \xi_0$, which implies that $\wh \xi$ is an inconsistent estimator of $\xi_0$. More generally, $\xi_\mis$ is not tied to the data-generating process (beyond the fact that $\wh \xi$ converges to $\xi_\mis$). Given $\xi_\mis$, we also define:
\begin{equation}
\tau_{i,\mis}^2 := \sigma_i^2 / \xi_{\mis}^2(M_i) ,\;\; V_{i,\mis}^2 := S_i^2 / \xi_{\mis}^2(M_i).
\label{eq:tau_mis_v_mis}
\end{equation}
With these definition in place, we can replace Assumption~\ref{assu:limma_trnd_var} by the following.
\begin{assumption}
\label{assu:misspec_limma_trnd_var}
The tuples $(\tau_{i,\mis}^2, M_i, Z_i, V_{i,\mis}^2)$ are generated according to~\eqref{eq:bayesian_cs_model_2},~\eqref{eq:misspecified-limma-reg} and~\eqref{eq:tau_mis_v_mis} for $K -p\ge 2$ and are jointly independendent across $i \in [n]$.
Furthermore, the random variables $M_1,\ldots,M_n$ satisfy
$\mathbb P\!\left(\max_{i \in [n]}|M_i| > \mathrm W\sqrt{\log n}\right)\le n^{-3}$ for some $W>0$.
Moreover, we optimize $G$ in~\eqref{eq:def_small_ell} over 
$\mathcal{G}_{\trd} = [\underline L_\trd,\wb U_\trd]$
for absolute constants $\underline L_\trd,\wb U_\trd>0$ that are such that $\tau_{i,\mis}^2 \in [\underline L_\trd,\wb U_\trd]$ almost surely.
\end{assumption}
Now let us call $P_{i,\mis}^{\trd}$ to be the p-values in~\eqref{eq:popn_emp_bayes_p_val} with $(G,\xi_0)$ replaced by \smash{$(G_{\mis},\xi_\mis)$}, where  $G_{\mis}$ is the unconditional distribution of  \smash{$\tau_{i,\mis}^2$} (implied by the joint distribution of $(M_i, \sigma_i^2)$ in~\eqref{eq:misspecified-limma-reg}). 
Let \smash{$\fdr^\mis_n$} be the FDR of BH at level $\alpha$ applied to \smash{$\widehat{P}_i^{\trd}$} in the present misspecified setting. 
Then, similarly to the well-specified case (Theorem~\ref{thm:final_rate_cs}), FDR is asymptotically controlled. 
\begin{theorem}
\label{thm:final_rate_mis}
Fix $\alpha \in (0,1)$. Consider Assumptions~\ref{assum:missp_trend_estimation} and~\ref{assu:misspec_limma_trnd_var} for $G_\mis,\xi_\mis$ and $\wh \xi$. Assume that $n_0:= \#\{i \in [n]: \theta_i=0\}$ satisfies $n_0/n \rightarrow \pi_0 \in (0,1)$ as $n \rightarrow \infty$.
If the sequence $\{\misP{i}:i \in [n]\}$ is critically dense at $\alpha$,
then
\(
\limsup_{n \rightarrow \infty}\fdr^\mis_n \le \pi_0\,\alpha.
\)
\end{theorem}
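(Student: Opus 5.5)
The plan is to reduce the misspecified setting to the well-specified one, replacing $(\xi_0,G)$ by $(\xi_\mis,G_\mis)$ and $(\tau_i^2,V_i^2)$ by $(\misT{i},\misV{i})$. The key observation is that $\wh G_\trd$ in \eqref{eq:def_small_ell} never uses $M_i$ except through $\wh V_i^2=S_i^2/\wh\xi^2(M_i)$. Under \eqref{eq:bayesian_cs_model_2}, $\misV{i}\mid(\misT{i},M_i)\sim \misT{i}\chi^2_{K-p}/(K-p)$. Integrating out $M_i$, the $\misV{i}$ are therefore i.i.d.\ across $i$ with marginal density exactly $f_{G_\mis,K-p}$ from \eqref{eq:lik_1d_limma_trend}, where $G_\mis\in\mathcal G_\trd$ by Assumption~\ref{assu:misspec_limma_trnd_var}. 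This holds even though $\misT{i}$ is not independent of $M_i$.

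\textbf{Step 1 (Hellinger bound).} We rerun the proof of Lemma~\ref{lem:hell_limm_trnd_cs} with $\xi_\mis$ in place of $\xi_0$, using only this marginal i.i.d.\ structure, the sup-norm control $\|\wh\xi-\xi_\mis\|_\infty\le\Delta_n$, and the entropy bound on $\mathcal X$ from Assumption~\ref{assum:missp_trend_estimation}. The result is $\mathbb E\,\mathcal H^2(f_{G_\mis,K-p},f_{\wh G_\trd,K-p})\lesssim$ the square of the rate in $\mathfrak L_n(\Delta_n)$. That proof never used independence of $\tau_i^2$ and $M_i$, only that the plugged-in observations are close to i.i.d.\ draws from a mixture with prior in $\mathcal G_\trd$; I will check this explicitly.

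\textbf{Step 2 (p-value approximation and calibration).} The Eddington--Tweedie identity (\bluelink{eq:tweedie_trnd}) writes $\misP{i}$ as a functional of $f_{G_\mis,K-p}$ evaluated at $(Z_i/\xi_\mis(M_i),\misV{i})$. Repeating the proof of Theorem~\ref{thm:conv_p_trnd_orc} therefore gives $\frac1n\sum_i\mathbb E|\misP{i}\wedge\zeta-\wh P_i^\trd\wedge\zeta|\to0$.

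The delicate point, and the main obstacle, is calibration. Lemma~\ref{lem:bayes_trd_p_val_cs} fails conditionally on $(S_i^2,M_i)$ because $G_{M_i}\ne G_\mis$. It is tempting to hope for uniformity conditional on $\misV{i}$ alone, but this is false in general: $\misV{i}$ is a function of $(S_i^2,M_i)$, and the posterior of $\misT{i}$ given $\misV{i}$ under the joint law need not match the one computed from $G_\mis$ alone.

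What I will actually aim for is averaged super-uniformity, which suffices for Storey-type BH asymptotics. I will show that the average null distribution $\frac1{n_0}\sum_{i:\theta_i=0}\mathbb P(\misP{i}\le t)$ equals $t$. By the tower property, $\mathbb P(\misP{i}\le t)=\mathbb E[\Pr(\misP{i}\le t\mid \misV{i},\misT{i})]$. Given $\misT{i}$ and $\misV{i}$, and with $\theta_i=0$, $Z_i/\xi_\mis(M_i)\sim \mathrm N(0,\nu^2\misT{i})$. This uses that $Z_i$ is independent of $(S_i^2,M_i)$ given $\sigma_i^2$, and the fact that $(\misT{i},\misV{i})$ marginally follows exactly the model with prior $G_\mis$, since $\misV{i}\mid\misT{i}$ is the $\chi^2$ kernel regardless of $M_i$. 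So the pair $(Z_i/\xi_\mis(M_i),\misV{i})$ has the same joint law as in the well-specified model with prior $G_\mis$. Lemma~\ref{lem:bayes_trd_p_val_cs} then gives unconditional uniformity of each null $\misP{i}$. This is why only $\limsup\fdr\le\pi_0\alpha$, rather than a rate, is claimed.

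\textbf{Step 3 (BH).} With marginal uniformity of the null oracle p-values, independence across $i$, and the $L^1$ approximation from Step 2, I will follow the proof of Theorem~\ref{thm:final_rate_cs}. Uniform laws of large numbers for $\frac1n\sum_i\mathbf 1\{\wh P_i^\trd\le t\}$ on $[t_0,t_1]$, which follow from the approximation plus a DKW-type bound for the independent oracle p-values, show that the BH threshold $\wh t$ is asymptotically $\ge t_0$ by critical density. Then $\fdr\le \mathbb E[\,n_0\wh t/(R\vee1)\,]+o(1)\le\pi_0\alpha+o(1)$.
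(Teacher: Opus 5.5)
Your overall architecture is the paper's. It consists of an approximate-NPMLE property for $\widehat G_\trd$ under the plug-in trend, and a Hellinger bound against $f_{G_\mis,K-p}$, using that the $\misV{i}$ are i.i.d.\ with this marginal. These are transferred via Eddington--Tweedie to an $L^1$ bound on $|\misP{i}\wedge\zeta-\widehat P_i^{\trd}\wedge\zeta|$. The argument finishes with uniformity of the independent null oracle p-values plus DKW, and the FDP decomposition on the event $R^\mis_n(\widehat t_\mis)\ge n\kappa_n$ supplied by critical density. The conclusion $\limsup\fdr^\mis_n\le\pi_0\alpha$ does follow this way, but two of your intermediate claims are wrong.

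\emph{Step 1.} The well-specified Hellinger bound does not carry over at the same rate, and its proof \emph{does} use $\tau_i^2\indep M_i$. In Lemma~\ref{lem:lik_ratio_limma_trnd}, the first-order Taylor term $\frac1n\sum_i D(\xi_0(M_i),S_i^2;\cdot)\,(\widehat\xi(M_i)-\xi_0(M_i))$ is small only because it is centered. For $\mathrm{Sub}_{n,2}$ this is the conditional score identity $\mathbb E[D(\xi_0(M_i),S_i^2;G)\mid M_i]=0$. For $U_{1,n}$ and $U_{2,n}$ it is the centering at $f_{G,K-p}$, which is simultaneously the conditional density of $V_i^2$ given $M_i$ (so $U_{1,n}$ has mean-zero increments) and Hellinger-close to $f_{\widehat G_\trd,K-p}$ (so, with \eqref{eq:bayesian_score_eq}, $U_{2,n}$ is small). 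Under \eqref{eq:misspecified-limma-reg}, $\misV{i}\mid M_i$ has the $G_{M_i}$-mixture density rather than $f_{G_\mis,K-p}$. The weights $\widehat\xi(M_i)-\xi_\mis(M_i)$ are functions of $M_i$, so these terms are in general only $O(\Delta_n\log n)$. The marginal i.i.d.\ structure gives nothing better than the Lipschitz comparison $|\log f_{G'}(\widehat V_i^2)-\log f_{G'}(\misV{i})|\lesssim\Delta_n\log n$. The misspecification cost is therefore first order: $\mathcal H^2(f_{\widehat G_\trd,K-p},f_{G_\mis,K-p})\lesssim\Delta_n(\log n)^2$, and the p-value error is of order $\Delta_n^{1/2}(\log n)^{5/2}$ (Lemmas~\ref{lem:lik_ratio_limma_trnd_ms}--\ref{lem:hell_limm_trnd_ms}, Proposition~\ref{prop:avg_sign_limma_trnd_ms}), not the square of $\mathfrak L_n(\Delta_n)$. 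Since $\Delta_n=o(1)$ and the theorem states no rate, this still suffices, but the rate you assert is false.

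\emph{Step 2.} Your claim that uniformity conditional on $\misV{i}$ fails is incorrect; this is exactly Lemma~\ref{lem:val_p_mis}, and your own observation proves it. Because $\misV{i}\mid(\misT{i},M_i)$ is the $\chi^2$ kernel in $\misT{i}$ alone, $(\misT{i},\misV{i})$ has joint law $G_\mis(\dd\tau^2)\,p_{\chi^2}(v^2\mid K-p,\tau^2)$. Hence the posterior of $\misT{i}$ given $\misV{i}$ \emph{is} the $G_\mis$-posterior. Under the null, $Z_i/\xi_\mis(M_i)\mid(\misT{i},\misV{i},M_i)\sim\dnorm(0,\nu^2\misT{i})$, so the probability integral transform gives uniformity given $\misV{i}$. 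What fails is uniformity given the finer pair $(S_i^2,M_i)$. This does not damage your argument, since DKW needs only unconditional uniformity and independence across $i$. However, your stated reason for why only a $\limsup$ is claimed rests on this false claim.
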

The above theorem is the main robustness result of this section. The intuition behind it is that the (oracle) misspecified p-values $P_{i,\mis}^{\trd}$ retain a partially Bayes interpretation analogous as in Lemma~\ref{lem:bayes_trd_p_val_cs}.
\begin{lemm}
\label{lem:val_p_mis}
Suppose that~\eqref{eq:bayesian_cs_model_2} and~\eqref{eq:misspecified-limma-reg} hold. Fix $\xi_\mis$ and let $V_{i,\mis}$ be defined as in~\eqref{eq:tau_mis_v_mis}. Then, for all $i \in [n]$ such that $\theta_i=0$, we have that $
    \mathbb P[P^\trd_{i,\mis} \le t \mid V^2_{i,\mis}]=t$ for all $t \in [0,1]$ almost surely.
\end{lemm}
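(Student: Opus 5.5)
The plan is to reduce Lemma~\ref{lem:val_p_mis} to a statement about a two-level hierarchy in which, conditional on $V^2_{i,\mis}$ alone, the posterior of $\tau^2_{i,\mis}$ is exactly the one used in $P^\trd_{i,\mis}$. Then the same probability-integral-transform argument as in Lemma~\ref{lem:bayes_trd_p_val_cs} finishes the proof.

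\textbf{Step 1: the joint law of $(\tau^2_{i,\mis}, V^2_{i,\mis}, Z_i)$ under the null.} Fix $i$ with $\theta_i=0$. By~\eqref{eq:bayesian_cs_model_2}, conditional on $(\sigma_i^2,M_i)$ we have
\[
Z_i\sim \mathrm N(0,\nu^2\sigma_i^2), \qquad S_i^2\sim \sigma_i^2\chi^2_{K-p}/(K-p),
\]
and these are independent. Since $\xi_\mis$ is a fixed deterministic function, dividing by $\xi^2_\mis(M_i)$ gives, conditional on $(\sigma_i^2,M_i)$, that $V^2_{i,\mis}\sim \tau^2_{i,\mis}\chi^2_{K-p}/(K-p)$. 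Likewise $Z_i/\{\nu\,\xi_\mis(M_i)\}\sim \mathrm N(0,\tau^2_{i,\mis})$, and the two are independent. These conditional laws depend on $(\sigma_i^2,M_i)$ only through $\tau^2_{i,\mis}$. Integrating over $(M_i,\sigma_i^2)$ from~\eqref{eq:misspecified-limma-reg} therefore yields the hierarchy
\[
\tau^2_{i,\mis}\sim G_\mis, \qquad \bigl(W_i,\;V^2_{i,\mis}\bigr)\mid \tau^2_{i,\mis}\sim \mathrm N(0,\tau^2_{i,\mis})\otimes \tau^2_{i,\mis}\chi^2_{K-p}/(K-p),
\]
where $W_i:=Z_i/\{\nu\,\xi_\mis(M_i)\}$. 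The key point is that marginalizing over $M_i$ is harmless because $M_i$ enters only through $\tau^2_{i,\mis}$. More carefully, $(W_i,V^2_{i,\mis})$ is conditionally independent of $M_i$ given $\tau^2_{i,\mis}$.

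\textbf{Step 2: $P^\trd_{i,\mis}$ as a posterior average.} By definition,~\eqref{eq:popn_emp_bayes_p_val} with $(G,\xi_0)$ replaced by $(G_\mis,\xi_\mis)$ reads
\[
P^\trd_{i,\mis}=\int 2\Phi\bigl(-|W_i|/\tau\bigr)\,\Pi_\mis(\dd\tau^2\mid V^2_{i,\mis}).
\]
Here $\Pi_\mis(\cdot\mid v^2)\propto p_{\chi^2}(v^2\mid K-p,\tau^2)\,G_\mis(\dd\tau^2)$, which by Step 1 is exactly the true posterior of $\tau^2_{i,\mis}$ given $V^2_{i,\mis}$. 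Conditional on $(\tau^2_{i,\mis},V^2_{i,\mis})$, the variable $W_i\sim\mathrm N(0,\tau^2_{i,\mis})$ is independent of $V^2_{i,\mis}$. Hence, given $V^2_{i,\mis}$, $W_i$ has the scale-mixture density
\[
w\mapsto \int \phi(w/\tau)/\tau\;\Pi_\mis(\dd\tau^2\mid V^2_{i,\mis}).
\]
Its two-sided tail function is $u\mapsto \int 2\Phi(-u/\tau)\,\Pi_\mis(\dd\tau^2\mid V^2_{i,\mis})$. Thus $P^\trd_{i,\mis}=F(|W_i|)$, where $F$ is the conditional survival function of $|W_i|$ given $V^2_{i,\mis}$.

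\textbf{Step 3: uniformity.} This conditional law of $|W_i|$ is continuous, since it is a mixture of half-normals with $\tau^2$ bounded away from $0$ a.s. under Assumption~\ref{assu:misspec_limma_trnd_var}, or positive in general. By the probability integral transform, $\mathbb P[P^\trd_{i,\mis}\le t\mid V^2_{i,\mis}]=t$ for all $t\in[0,1]$ almost surely.

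The main obstacle is Step 1, which must be made rigorous with regular conditional distributions. In particular one has to justify that conditioning on $V^2_{i,\mis}$ alone (and not on $M_i$) gives the posterior based on $G_\mis$, the unconditional law of $\tau^2_{i,\mis}$. One also needs the equality between $|Z_i|/\{\nu\xi_\mis(M_i)\tau_{i,\mis}\}$ and $|W_i|/\tau_{i,\mis}$ used inside $\Phi$. Both follow from the conditional-independence factorization above, which I would verify by computing $\mathbb E[\phi_1(W_i)\phi_2(V^2_{i,\mis})]$ via the tower property over $(M_i,\sigma_i^2)$.
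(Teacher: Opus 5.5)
Your proposal is correct and follows the paper's proof essentially step for step. Like the paper, you rescale $Z_i$ by $\xi_\mis(M_i)$, use that under the null it is $\mathrm N(0,\nu^2\tau^2_{i,\mis})$ given $\tau^2_{i,\mis}$ and independent of $V^2_{i,\mis}$, identify $P^\trd_{i,\mis}$ as the conditional survival function of $|W_i|$ given $V^2_{i,\mis}$ under the $G_\mis$-posterior, and conclude by the probability integral transform; your Step~1, showing that $(W_i,V^2_{i,\mis})$ depends on $(M_i,\sigma_i^2)$ only through $\tau^2_{i,\mis}$, simply makes explicit the conditional-independence fact that the paper asserts as ``clearly.''
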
 
In particular, whatever the misspecified trend $\xi_{\mis}$ is,\footnote{
The power of the p-values, however, depends critically on the misspecified trend $\xi_{\mis}$.
} $G_\mis$ ``buffers'' the error so that uniformity holds conditional on oracle trend-adjusted variances $V_{i,\mis}$. While stronger than unconditional uniformity and sufficient for asymptotic FDR control, Lemma~\ref{lem:val_p_mis} is not as strong as Lemma~\ref{lem:bayes_trd_p_val_cs} and does not allow us to argue, e.g., in the application of Section~\ref{sec:seq_peptide} that p-values are calibrated for proteins with small peptide count.
Finally, we note that Proposition~\bluelink{lem:approx_unif_mis_strong} in the supplement, establishes an asymptotic version of Lemma~\ref{lem:val_p_mis} for the estimated p-values $\wh P^\trd_i$.

\section{\jtlitrd{}: \limmatrd{} with bivariate nuisance priors}
\label{sec:joint_limma_trend}

Algorithm~\ref{algo:limma_trend_reg} (\reglitrd{}) in its oracle form ensures the partially Bayes property of Lemma~\ref{lem:bayes_trd_p_val_cs} only when~\eqref{eq:well-specified-limma-reg} is well-specified. However, that model takes a shortcut: it captures the conditional distribution of $\sigma_i^2 \mid M_i$ through only a trend and a trend-adjusted variance prior, rather than modeling the full conditional prior. To go beyond this, we must learn that full conditional prior directly. In Supplement~\bluelink{sec:discrete_MI} and Algorithm~\bluelink{algo:limma_trend_jt_discrete} therein, we show how this can be accomplished when $M_i$ is discrete (as in our proteomics application). 

In this section, we provide a construction for the average intensity, $M_i=A_i$. 
Throughout, we assume data are generated according to~\eqref{eq:def_z_i_dis_2d} and \eqref{eq:limma_trend_2d_prior} (which implicity also impose Assumption~\ref{assu:design}). The main crux is that instead of directly learning the conditional distribution given $M_i$, 
it suffices to learn the bivariate nuisance prior $H$.
We rewrite the oracle  p-values $P_i^{\ltrdp}$ in~\eqref{eq:oracle_limma_trend} in terms of $H$,
\begin{equation}
\Por_i := \frac{\int_{\mathbb R \times \mathbb R_{\ge 0}}2\,\Phi\!\left(-\frac{|Z_i|}{\nu \,\sigma}\right)
\,p_{K-p}(S^2_i,A_i \mid \mu,\sigma^2)\, H(\dd\mu,\dd\sigma^2)}{\int_{\mathbb R \times \mathbb R_{\ge 0}}
p_{K-p}(S^2_i,A_i \mid \mu,\sigma^2)\, H(\dd\mu,\dd\sigma^2)}, \qquad \mbox{for $i \in [n]$,}
\label{eq:2d_pb_pvalue_general_case}
\end{equation}
where $p_{\kappa}(s^2,a \mid \mu,\sigma^2)
:= p_{\chi^2}(s^2 \mid \kappa,\sigma^2)\,
\phi\!\left(\sqrt{K}(a-\mu)/\sigma\right)$
and $\phi(\cdot)$ is the standard Gaussian density.
The following proposition establishes conditional uniformity for $\Por_i$ analogous to that for $P_i^{\trd}$ in Lemma~\ref{lem:bayes_trd_p_val_cs}.
\begin{lemm}
\label{lem:valid_oracle_p_vals}
For all $i \in [n]$ such that $\theta_i=0$, we have $\mathbb P_H[P^\jt_i \le t \mid S^2_i,A_i]=t, \quad \mbox{for all $t \in [0,1]$.}$
\end{lemm}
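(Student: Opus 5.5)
The plan is to show that, under the null, $P^\jt_i$ is exactly the conditional probability of $\{2\Phi(-|Z_i|/(\nu\sigma_i))\le t\}$ averaged over the posterior of $\sigma_i^2$ given $(S_i^2,A_i)$. Fix $i$ with $\theta_i=0$. By Proposition~\ref{prop:orthog} (Assumption~\ref{assu:design} holds with $c_{\wt A}=\contrastintensity$), the representation~\eqref{eq:def_z_i_dis_2d} gives conditional independence of $Z_i$, $S_i^2$, $A_i$ given $(\mu_i,\sigma_i^2)$. Under the null, $Z_i\mid(\mu_i,\sigma_i^2)\sim\mathrm N(0,\nu^2\sigma_i^2)$. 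Combined with the prior~\eqref{eq:limma_trend_2d_prior}, Bayes' rule shows that the joint posterior of $(\mu_i,\sigma_i^2)$ given $(S_i^2,A_i)$ has density proportional to $p_{K-p}(S_i^2,A_i\mid\mu,\sigma^2)$ with respect to $H$. The normalizing constant $\sqrt{K}/\sigma$ in the Gaussian density of $A_i$ is absorbed because it is constant in... wait, it depends on $\sigma$, so it must be tracked. I will check that the paper's $p_\kappa$ omits it and verify whether this matters (see the last paragraph). With that posterior, $P^\jt_i=\int 2\Phi(-|Z_i|/(\nu\sigma))\,\Pi(\dd\mu,\dd\sigma^2\mid S_i^2,A_i)$.

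Next, I condition additionally on $Z_i$ and use that, given $(S_i^2,A_i)$, the posterior of $(\mu_i,\sigma_i^2)$ does not involve $Z_i$. Define $\psi(z):=\int 2\Phi(-|z|/(\nu\sigma))\,\Pi(\dd\sigma^2\mid S_i^2,A_i)$, which is $P^\jt_i$ evaluated at $Z_i=z$. By conditional independence, $Z_i\mid(S_i^2,A_i)$ has the mixture law $\int \mathrm N(0,\nu^2\sigma^2)\,\Pi(\dd\sigma^2\mid S_i^2,A_i)$. For $u\ge0$, symmetry gives $\mathbb P(|Z_i|\ge u\mid S_i^2,A_i)=\int 2\Phi(-u/(\nu\sigma))\,\Pi(\dd\sigma^2\mid S_i^2,A_i)=\psi(u)$. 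Hence $P^\jt_i=\psi(|Z_i|)$ equals the conditional survival function of $|Z_i|$ evaluated at $|Z_i|$.

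Finally, I apply the probability integral transform. The map $u\mapsto\psi(u)$ is continuous and strictly decreasing on $[0,\infty)$ with $\psi(0)=1$ and $\psi(\infty)=0$, because each integrand has these properties and $\sigma>0$ almost surely. Also, $|Z_i|$ given $(S_i^2,A_i)$ has a continuous distribution, being a mixture of half-normals. Therefore $\mathbb P(\psi(|Z_i|)\le t\mid S_i^2,A_i)=t$ for all $t\in[0,1]$. This is the same argument as in Lemma~\ref{lem:bayes_trd_p_val_cs} and~\citet[Proposition 3]{ignatiadis2025empirical}.

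The only real obstacle is the bookkeeping in the first step: confirming that the weight in~\eqref{eq:2d_pb_pvalue_general_case} is proportional to the true likelihood of $(S_i^2,A_i)$. The density of $A_i$ is $(\sqrt K/\sigma)\phi(\sqrt K(a-\mu)/\sigma)$, and the factor $1/\sigma$ is not constant across the support of $H$. I would therefore either read $p_\kappa$ as the full joint density, including this factor, or note that the proof goes through verbatim as long as the weights equal the genuine conditional likelihood of $(S_i^2,A_i)$. I will also check measurability and the finiteness of the denominator, which holds almost surely since it is the marginal density evaluated at the observed data.
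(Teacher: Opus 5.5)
Your proposal is correct and follows the paper's proof. Conditional independence of $Z_i$ and $(S_i^2,A_i)$ given $(\mu_i,\sigma_i^2)$ (Proposition~\ref{prop:orthog}) identifies $P^{\mathrm{jt}}_i$ with the conditional survival function of $|Z_i|$ given $(S_i^2,A_i)$, evaluated at $|Z_i|$, and the probability integral transform then gives the result. Your worry about the factor $\sqrt{K}/\sigma$ is justified: the displayed definition of $p_\kappa$ drops it, and the lemma needs $p_\kappa$ read as the full joint density of $(S_i^2,A_i)$, exactly as you propose; the paper's own proof of Theorem~\ref{thm:tweedie_2_d} writes $\frac{\sqrt K}{\sigma}\phi(\cdot)$ explicitly.
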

We propose to estimate $H$ from the data by NPMLE using the marginal distribution of the summary statistics $\{(S^2_i,A_i):i \in [n]\}$ over a nonparametric class $\mathcal G_{H}$,
\begin{align}
    \label{eq:joint_npmle}
    \wh H \in \argmax_{\rmH \in \mathcal G_{H}} \frac{1}{n}\sum_{i=1}^n\log f_{\rmH,K-p}(S^2_i,A_i).
\end{align}
Here, for any bivariate mixing measure $\rmH$, we write
\begin{align}
\label{eq:2_d_marginal}
f_{\rmH,\kappa}(s^2,a)
:=\int_{\mathbb{R} \times \mathbb{R}_{\ge 0}}
p_\kappa(s^2,a \mid \mu,\sigma^2)\,\rmH(\dd\mu,\dd\sigma^2), \quad \mbox{for $\kappa \in \{K-p,K-p+1\}$}.
\end{align}
We solve~\eqref{eq:joint_npmle} by approximating $H$ through a finite sieve, which yields a finite-dimensional convex
program, and applying the interior point solver MOSEK \citep{aps2020mosek}; see Supplement~\bluelink{subsec:discretization_twod} for details. (Our theory below ignores the approximation error and data-driven choice of the sieve.) 

We obtain plug-in estimates $\wh P^\jt_i$ of the oracle p-values $P^\jt_i$ by replacing $H$ with $\wh H$ in~\eqref{eq:2d_pb_pvalue_general_case}. Our proposal is summarized in Algorithm~\ref{algo:limma_trend_jt}.
\begin{algorithm}[H]
\caption{\limmatrd{} with bivariate nuisances (\jtlitrd{})}
\begin{algorithmic}[1] 
\State Estimate $H$ using $\wh H$ via the NPMLE of~\eqref{eq:joint_npmle}.
\State Compute the p-values $\wh P^\jt_i$ using \eqref{eq:2d_pb_pvalue_general_case} while replacing $H$ by $\wh H$.
\end{algorithmic}
\label{algo:limma_trend_jt}
\end{algorithm}
Our theory crucially builds on the following result. The oracle p-values $\{P^\jt_i\}$ satisfy an Eddington-Tweedie type representation in terms of the marginal densities of the summary statistics. (Analogous results are a cornerstone of the empirical Bayes literature, see e.g.,~\cite{dyson1926method, cressie1982useful, efron2011tweedie, ignatiadis2025empirical}).
\begin{theorem}
\label{thm:tweedie_2_d}
Recall $f_{H,K-p}$ and $f_{H,K-p+1}$ defined in \eqref{eq:2_d_marginal}. There exists $C_{K,p}>0$ such that
\begin{align}
\label{eq:def_tweedie_2d}
P^\jt_i = C_{K,p} \frac{(S^2_i)^{\frac{K-p}{2}-1}}{f_{H,K-p}(S^2_i,A_i)} \int_{\frac{(K-p)S^2_i + (Z^2_i/\nu^2)}{K-p+1}}^{\infty} \frac{(t^2)^{-\frac{K-p-1}{2}}f_{H,K-p+1}(t^2,A_i)}{\sqrt{(K\!-\!p\!+\!1)t^2 - (K\!-\!p)S^2_i}} \,  \, \dd t^2.
\end{align}
\end{theorem}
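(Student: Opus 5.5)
The identity is purely algebraic: it holds pointwise in $(Z_i,S_i^2,A_i)$ for any bivariate mixing measure $H$. The plan is to rewrite the Gaussian tail in the numerator of \eqref{eq:2d_pb_pvalue_general_case} as an integral and then absorb it into the $\chi^2$ kernel, raising the degrees of freedom by one. Write $\kappa := K-p$.

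\textbf{Step 1: Gaussian tail as an integral.} For each $\sigma>0$,
\[
2\Phi\!\left(-\tfrac{|Z_i|}{\nu\sigma}\right)=\int_{|Z_i|}^\infty \frac{2}{\nu\sigma\sqrt{2\pi}}\exp\!\left(-\frac{u^2}{2\nu^2\sigma^2}\right)\dd u .
\]

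\textbf{Step 2: combine with the $\chi^2$ kernel.} Write the scaled chi-square density explicitly as
\[
p_{\chi^2}(s^2\mid\kappa,\sigma^2)=c_\kappa\,(s^2)^{\kappa/2-1}\sigma^{-\kappa}\exp\!\left(-\frac{\kappa s^2}{2\sigma^2}\right),
\qquad c_\kappa=\frac{(\kappa/2)^{\kappa/2}}{\Gamma(\kappa/2)} .
\]
Multiplying by the integrand of Step 1 gives
\[
\frac{2c_\kappa}{\nu\sqrt{2\pi}}\,(s^2)^{\kappa/2-1}\,\sigma^{-(\kappa+1)}\exp\!\left(-\frac{\kappa s^2+u^2/\nu^2}{2\sigma^2}\right).
\]
Now change variables via $(\kappa+1)t^2=\kappa s^2+u^2/\nu^2$. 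The remaining $\sigma$-dependent factor becomes
\[
\sigma^{-(\kappa+1)}e^{-(\kappa+1)t^2/(2\sigma^2)}=c_{\kappa+1}^{-1}\,(t^2)^{-(\kappa+1)/2+1}\,p_{\chi^2}(t^2\mid \kappa+1,\sigma^2).
\]
Since $u=\nu\sqrt{(\kappa+1)t^2-\kappa s^2}$, the Jacobian is
\[
\dd u=\frac{\nu(\kappa+1)}{2\sqrt{(\kappa+1)t^2-\kappa s^2}}\,\dd t^2 .
\]
Combining these factors, the power of $t^2$ is $-(\kappa-1)/2$, and the lower limit $u=|Z_i|$ maps to $t^2=\{\kappa S_i^2+Z_i^2/\nu^2\}/(\kappa+1)$. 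Both match \eqref{eq:def_tweedie_2d}.

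\textbf{Step 3: the intensity factor and Fubini.} The factor $\phi(\sqrt K(a-\mu)/\sigma)$ is common to $p_\kappa$ and $p_{\kappa+1}$ in \eqref{eq:2_d_marginal}, so it carries through the change of variables unchanged. Multiplying the kernel by it and integrating against $H(\dd\mu,\dd\sigma^2)$, we exchange the $H$-integral with the $t^2$-integral by Tonelli, since all integrands are nonnegative. The inner $H$-integral then becomes exactly $f_{H,\kappa+1}(t^2,A_i)$. Dividing by the denominator $f_{H,\kappa}(S_i^2,A_i)$ yields \eqref{eq:def_tweedie_2d} with
\[
C_{K,p}=\frac{c_\kappa(\kappa+1)}{\sqrt{2\pi}\,c_{\kappa+1}} .
\]
Note that $\nu$ cancels between the Gaussian normalization and the Jacobian.

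\textbf{Main obstacle.} The only delicate part is the constant and exponent bookkeeping in Step 2, namely verifying that the powers of $\sigma$ match the $(\kappa+1)$-df density exactly. This is what makes the $\sigma$-dependence collapse into $f_{H,\kappa+1}$. Finiteness of all integrals is not an issue: the denominator is positive for $S_i^2>0$, and the numerator is at most the denominator because $2\Phi(\cdot)\le 1$.
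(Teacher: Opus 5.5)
Your proposal is correct and follows essentially the same route as the paper: write the Gaussian tail as an integral, merge it with the $\chi^2_{K-p}$ kernel into a $\chi^2_{K-p+1}$ kernel while carrying the common intensity factor through unchanged, exchange the order of integration, and substitute $t^2=\{(K-p)s^2+u^2/\nu^2\}/(K-p+1)$. Your explicit constant $c_\kappa(\kappa+1)/(\sqrt{2\pi}\,c_{\kappa+1})$ simplifies to exactly the $C_{K,p}$ stated at the end of the paper's proof.
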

To prove asymptotic validity of the $\{\wh P^\jt_i\}$ to $\{P^\jt_i\}$s, we make the following assumption.
\begin{assumption}
The tuples $(\mu_i, \sigma^2_i, Z_i, S^2_i,A_i)$ are generated according to~\eqref{eq:def_z_i_dis_2d} and~\eqref{eq:limma_trend_2d_prior} for $K -p\ge 2$ and are jointly independendent across $i \in [n]$.
We use the nonparametric class $\mathcal G_H:= \left\{\rmH: \rmH([-M,M] \times [\underline L,\wb U])=1\right\}$ for absolute constants $M,\underline L,\wb U>0$. It holds that the data-generating $H$ in~\eqref{eq:limma_trend_2d_prior} satisfies $H \in \mathcal G_H$. 
\label{assu:compact_2d}
\end{assumption}
Under the above assumption, the Eddington-Tweedie formula from Theorem~\ref{thm:tweedie_2_d} can be utilized in conjunction with the convergence of $f_{\widehat H,K-p}$ to $f_{H,K-p}$ in Hellinger distance and of $f_{\widehat H,K-p+1}$ to $f_{H,K-p+1}$ in the $L^2$ distance (both shown in the supplement), to obtain the following theorem on the convergence of the estimated p-values to the oracle p-values.
\begin{theorem}
\label{prop:asymp_p_val}
Suppose~Assumption~\ref{assu:compact_2d} hold. Then, for all $\zeta \in \left(\frac{1}{2}, 1\right)$, there exists a constant $C > 0$ (depending only on $\underline{L}, \overline{U}, M, K, p, \nu, \zeta$) such that for all $n \in \mathbb{N}_{\ge 1}$,
\[
\max_{1 \le i \le n} \mathbb{E}\left[
\left| P^\jt_i \wedge \zeta - \wh P^\jt_i \wedge \zeta \right|
\right] \leq C \cdot \frac{\log^{13/4} n}{\sqrt{n}}.
\]
\end{theorem}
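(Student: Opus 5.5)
We prove Theorem~\ref{prop:asymp_p_val} by combining the Eddington--Tweedie representation of Theorem~\ref{thm:tweedie_2_d} with density-estimation guarantees for the bivariate NPMLE $\wh H$. Write $\kappa:=K-p$ and $\mathcal T_i(t^2):=\{(\kappa+1)t^2-\kappa S_i^2\}^{-1/2}(t^2)^{-(\kappa-1)/2}$ for the kernel in~\eqref{eq:def_tweedie_2d}. Since $\wh H\in\mathcal G_H$, the representation holds verbatim with $H$ replaced by $\wh H$. Hence
\[
P^\jt_i = C_{K,p}\,\frac{(S_i^2)^{\kappa/2-1}\, N_H(S_i^2,A_i,Z_i)}{f_{H,\kappa}(S_i^2,A_i)},
\]
where $N_H$ denotes the integral of $\mathcal T_i\, f_{H,\kappa+1}(\cdot,A_i)$ over $[\{\kappa S_i^2+Z_i^2/\nu^2\}/(\kappa+1),\infty)$, and $\wh P^\jt_i$ has the same form with $H$ replaced by $\wh H$.

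\textbf{Step 1 (density rates).} First we establish the two rates quoted before the theorem. For the denominator we want $\mathbb E\,\mathcal H^2(f_{\wh H,\kappa},f_{H,\kappa})\lesssim \log^{a} n/n$, following the standard entropy argument for NPMLEs: a moment-matching approximation of any $H'\in\mathcal G_H$ by a discrete measure with $O(\log^2 n)$ atoms (the kernel $p_\kappa$ is analytic in $(\mu,\sigma^2)$ on the compact support), bracketing entropy of order $\log^{3}n$, and a Wong--Shen type likelihood-ratio inequality. For the numerator we need $L^2$ (or $L^1$-type) closeness of $f_{\wh H,\kappa+1}$ to $f_{H,\kappa+1}$. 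We would obtain this through the identity $p_{\kappa+1}(t^2,a\mid\mu,\sigma^2)=c\,(t^2)^{1/2}\sigma^{-1}p_\kappa(t^2,a\mid\mu,\sigma^2)\cdot(\text{const})$ on compact $\sigma^2$-support. More robustly, we would show that $H\mapsto f_{H,\kappa+1}$ is controlled by $f_{H,\kappa}$ in $L^2$ via a Fourier/moment comparison, losing only log factors, because both are mixtures of the same smooth family and agreement of the first $O(\log n)$ moments of $H'$ and $H$ controls both.

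\textbf{Step 2 (truncation).} Next we localize to a good event. On $\{S_i^2\in[n^{-c},C\log n],\ |A_i|\le M+C\sqrt{\log n}\}$, which has probability at least $1-n^{-2}$ by the compact support and Gaussian/$\chi^2$ tails, the Gaussian/$\chi^2$ factors give $f_{H,\kappa}(S_i^2,A_i)\ge n^{-C'}$. The role of $\zeta<1$ is that when $P^\jt_i$ or $\wh P^\jt_i$ is near $1$ (small $|Z_i|$, where the lower integration limit sits at the singularity of $\mathcal T_i$) the truncation makes the difference vanish. Indeed, $\wedge\zeta$ is $1$-Lipschitz, and $P\ge\zeta$ forces $|Z_i|\ge c_\zeta \sigma$ bounded away from $0$ only in the complementary case. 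So we restrict to $|Z_i|/\nu\ge \delta_\zeta S_i$-type regions, where $\mathcal T_i$ is integrable with a bounded constant.

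\textbf{Step 3 (ratio perturbation).} Then we bound $|a/b-\hat a/\hat b|\le |a-\hat a|/\hat b+(a/b)\,|b-\hat b|/\hat b$, with $a/b\le 1$ after the prefactors. Taking expectations over $(S_i^2,A_i)\sim f_{H,\kappa}$ (each unit has the same marginal, which is why a $\max_i$ bound comes out), the term $\mathbb E[|b-\hat b|/\hat b]$ is controlled by $\int|f-\hat f|\lesssim \mathcal H$ together with a lower-bound truncation of $\hat b$ at level $\rho_n$ and the Hellinger estimate of the mass where $\hat f<\rho_n$. The numerator term is controlled by Cauchy--Schwarz, using the $L^2$ bound from Step~1 and the integrability of $\mathcal T_i$. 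Optimizing $\rho_n$ yields $n^{-1/2}$ up to the stated $\log^{13/4}n$.

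\textbf{Main obstacle.} We expect the hardest part to be the numerator: transferring the NPMLE's Hellinger guarantee for $f_{\cdot,\kappa}$ to $f_{\cdot,\kappa+1}$, a density the likelihood never fits directly, integrated against the singular kernel $\mathcal T_i$. We would first try the moment-matching route, showing that $\|f_{\wh H,\kappa+1}-f_{H,\kappa+1}\|_2\lesssim \mathrm{polylog}(n)\,\mathcal H(f_{\wh H,\kappa},f_{H,\kappa})$ uniformly over $\mathcal G_H$.
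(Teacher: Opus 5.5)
Your architecture matches the paper's: the Tweedie representation, the reduction via $\wedge\zeta$, the ratio perturbation, a Hellinger rate for the $\kappa$-marginal from an $O((\log n)^3)$ entropy bound, and Cauchy--Schwarz against the $(\kappa+1)$-marginal. Your target inequality $\|f_{\widehat H,\kappa+1}-f_{H,\kappa+1}\|_2\lesssim \mathrm{polylog}(n)\,\mathcal H(f_{\widehat H,\kappa},f_{H,\kappa})$ is also exactly what is needed. But neither mechanism you propose for it works.

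The ``identity'' is false. In fact $p_{\chi^2}(x\mid\kappa+1,\sigma^2)/p_{\chi^2}(x\mid\kappa,\sigma^2)=c_\kappa\,x^{1/2}\sigma^{-1}e^{-x/(2\sigma^2)}$. Because this factor depends on $\sigma$, $f_{H,\kappa+1}$ is a \emph{reweighted} mixture: the signed measure $\widehat H-H$ is integrated against a different kernel. So nothing bounds $f_{\widehat H,\kappa+1}-f_{H,\kappa+1}$ pointwise by $f_{\widehat H,\kappa}-f_{H,\kappa}$.

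The moment route runs in the wrong direction. The Carath\'eodory device shows that \emph{matched} moments give close mixtures, which is why it yields entropy bounds. You need close $\kappa$-mixtures $\Rightarrow$ close $(\kappa+1)$-mixtures instead. Extracting moments of $\widehat H-H$ of order $\asymp\log n$ from $\mathcal H(f_{\widehat H,\kappa},f_{H,\kappa})$ is a deconvolution step that loses far more than polylogarithmic factors.

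The missing idea is a multiplier in a transform domain. Set $\mathcal M_{c,\kappa}(t_1,t_2;H):=\mathbb E_H[(S^2)^{c-1+it_1}e^{it_2A}]$, Mellin in $S^2$ and Fourier in $A$. Since the $\chi^2_\kappa/\kappa$ factor is independent of $(\mu,\sigma^2,A)$, you get $\mathcal M_{c,\kappa+1}=r(t_1)\,\mathcal M_{c,\kappa}$. Here $r$ is a ratio of Gamma functions that does not depend on $H$, with $|r(t_1)|^2\lesssim T_0$ for $|t_1|\le T_0$. Meanwhile $|\mathcal M_{c,\kappa+1}|$ decays like $e^{-\pi|t_1|/2}$ (up to polynomial factors) and like a Gaussian in $t_2$, uniformly over $\mathcal G_H$. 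Plancherel with $c=1/2$ then gives $\|f_{H_1,\kappa+1}-f_{H_2,\kappa+1}\|_2^2\lesssim T_0\|f_{H_1,\kappa}-f_{H_2,\kappa}\|_2^2+\mathrm{poly}(T_0)e^{-\pi T_0}$. Choosing $T_0\asymp|\log\mathcal H|$ yields Lemma~\ref{lem:l_2_one_dof}.

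Two further issues. First, in Step~3 you integrate against $f_{H,\kappa}(S_i^2,A_i)$ as if $\widehat H$ were fixed, but $\widehat H$ is fitted on all pairs, including the $i$-th. The paper decouples as follows.
\begin{itemize}
\item On the event of Theorem~\ref{thm:double_hellinger}, $\widehat H$ is replaced by the nearest element $H_j$ of a $1/n$-cover of the Hellinger ball; its log-cardinality is $O((\log n)^3)$ by Lemma~\ref{lem:metric_entropy_2d}.
\item The maximum over $j$ of the centered empirical averages costs $(\log n)^{3/2}/\sqrt n$, by bounded differences and a union bound.
\item For fixed $H_j$, the expectation is of order $\mathtt{B}_n\sqrt{\mathtt{A}_n}\,\|f_{H_j,\kappa+1}-f_{H,\kappa+1}\|_2$, where $\mathtt{B}_n\asymp\log n$ and $\mathtt{A}_n\asymp\sqrt{\log n}$ are the truncation levels.
\end{itemize}
With $\mathcal H\lesssim(\log n)^{3/2}/\sqrt n$ and the transfer lemma, this gives exactly $(\log n)^{1+1/4+3/2+1/2}/\sqrt n=(\log n)^{13/4}/\sqrt n$. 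Using the symmetrized denominator $(D_i(H)+D_i(\widehat H))/2$ also makes your $\rho_n$-thresholding unnecessary.

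Second, your Step~2 reduction does not follow from the truncation. The $\wedge\zeta$ truncation only lets you discard the region where both p-values exceed $\zeta$. That region is $\{|Z_i|\le\underline z\}$ with $\underline z:=\nu\underline{L}^{1/2}z_{1-\zeta/2}$, not $\{|Z_i|/\nu<\delta_\zeta S_i\}$. The reason is that every prior in $\mathcal G_H$ has $\sigma^2\ge\underline L$, so both p-values are at least $2\overline{\Phi}(|Z_i|/(\nu\underline{L}^{1/2}))$. For $|z|\ge\underline z$, the lower integration limit stays away from the kernel's singularity. The kernel's $L^2$ norm, including the $(S_i^2)^{\kappa/2-1}$ prefactor, is then bounded uniformly, and this is what makes your Cauchy--Schwarz step uniform in $z$.
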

The rate is parametric, even though the procedure implicitly learns the conditional prior of $\sigma_i^2 \mid A_i$. 
The following results are established under the same assumptions as Theorem \ref{prop:asymp_p_val}. First, we extend the conditional uniformity of Lemma~\ref{lem:valid_oracle_p_vals} to approximately hold for the estimated p-values. 
\begin{proposition}
\label{prop:valid_p_values}
Suppose Assumption~\ref{assu:compact_2d} hold. Fix $\zeta\in(1/2,1)$. There exists a constant $C>0$, depending only on $\underline L,\overline U,M,K,p,\nu$ and $\zeta$, such that for all $n\ge1$,
\begin{align}
\label{eq:jt_cond_valid_p_val}
\max_{i\in [n]\,:\,\theta_i=0}\mathbb E_H\!\left[\sup_{t \in [0,\zeta]}\left|\mathbb P_H\!\left(\widehat P_i^{\mathrm{jt}} \le t \mid \{(S_j^2,A_j)\}_{j=1}^n\right) - t\right|\right]\le C \frac{\log^{13/4} n}{\sqrt n}.
\end{align}
\end{proposition}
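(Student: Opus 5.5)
Fix a null index $i$ (so $\theta_i=0$) and write $\mathcal D_n:=\{(S_j^2,A_j)\}_{j=1}^n$. The plan is to compare $\wh P_i^\jt$ with the oracle $P_i^\jt$ conditionally on $\mathcal D_n$, and then use Lemma~\ref{lem:valid_oracle_p_vals} together with Theorem~\ref{prop:asymp_p_val}. The structural fact that makes this work is the following. By~\eqref{eq:def_z_i_dis_2d} and independence across units, $Z_i$ is independent of $\mathcal D_n$ given $\sigma_i^2$. Moreover, $\wh H$ is a function of $\mathcal D_n$ alone, since the NPMLE~\eqref{eq:joint_npmle} uses only $(S_j^2,A_j)$. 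Hence, conditionally on $\mathcal D_n$, both $P_i^\jt$ and $\wh P_i^\jt$ are functions of $Z_i$ only. Their randomness enters through the posterior mixture of $\mathrm N(0,\nu^2\sigma^2)$ laws with $\sigma^2\sim \Pi_H(\cdot\mid S_i^2,A_i)$. Since $(\mu_j,\sigma_j^2)$ are independent across $j$, this posterior depends only on $(S_i^2,A_i)$. So Lemma~\ref{lem:valid_oracle_p_vals} gives $\mathbb P_H(P_i^\jt\le t\mid \mathcal D_n)=t$ exactly.

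Next, $\wh P_i^\jt=2\int\Phi(-|Z_i|/(\nu\sigma))\,\wh\Pi(\dd\sigma^2\mid S_i^2,A_i)$ is a strictly decreasing function of $|Z_i|$. Likewise $P_i^\jt=g(|Z_i|)$ with $g$ continuous and strictly decreasing. Therefore, for $t\le\zeta$,
\[
\left|\mathbb P_H(\wh P_i^\jt\le t\mid\mathcal D_n)-t\right| = \left|\mathbb P_H(\wh P_i^\jt\le t\mid\mathcal D_n)-\mathbb P_H(P_i^\jt\le t\mid\mathcal D_n)\right|.
\]
Both events are of the form $\{|Z_i|\ge c\}$, so the difference equals $\mathbb P_H(|Z_i|\text{ between }\wh c_t\text{ and }c_t\mid\mathcal D_n)$. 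Here $c_t=g^{-1}(t)$ and $\wh c_t=\wh g^{-1}(t)$. Under $P_i^\jt$, this probability equals $|g(\wh c_t)-t|=|g(\wh c_t)-\wh g(\wh c_t)|$. Hence the supremum over $t\in[0,\zeta]$ is bounded by $\sup_{z:\,\wh g(z)\le\zeta}|g(z)-\wh g(z)|$. The truncation at $\zeta$ must be handled carefully: the sup should be taken over $z$ with $\min(g(z),\wh g(z))\le\zeta$, which matches the quantity controlled with the $\wedge\zeta$ in Theorem~\ref{prop:asymp_p_val}.

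The remaining task is to bound $\mathbb E\big[\sup_{z}|g(z)\wedge\zeta-\wh g(z)\wedge\zeta|\big]$, a uniform-in-$z$ bound, whereas Theorem~\ref{prop:asymp_p_val} only controls the expectation at the random point $z=|Z_i|$. Two routes are available. The first is to reuse the proof of Theorem~\ref{prop:asymp_p_val}, via the Eddington--Tweedie formula of Theorem~\ref{thm:tweedie_2_d}. There the $z$-dependence enters only through the lower integration limit, so the error bound is in terms of $|f_{\wh H,K-p}-f_{H,K-p}|(S_i^2,A_i)/f_{H,K-p}$ and of integrals of $|f_{\wh H,K-p+1}-f_{H,K-p+1}|(\cdot,A_i)$. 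These are already uniform in $z$ once the integral is bounded by the full integral over $t^2\ge (K-p)S_i^2/(K-p+1)$, with the singularity at the endpoint integrable. The second route, as a fallback, is to note that $g$ is a mixture over $\sigma^2\in[\underline L,\wb U]$ of Gaussian tails. This gives a density of $|Z_i|$ that is bounded below on $\{g\ge 1-\zeta'\}$, so a uniform error is comparable to an averaged one, up to a log factor. I expect this step to be the main obstacle: one must verify that the localization to $\{g\le\zeta\}$, which keeps $|Z_i|$ away from zero and keeps denominators bounded away from zero, and the high-probability events on which $f_{\wh H}$ is Hellinger/$L^2$ close to $f_H$ yield the same $\log^{13/4}n/\sqrt n$ rate. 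Complementary low-probability events contribute at most $1$ times their probability, which is $O(n^{-1})$. Taking the maximum over null $i$ then gives~\eqref{eq:jt_cond_valid_p_val}, since the bounds are uniform in $i$.
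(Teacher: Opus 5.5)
Your proposal is correct and is essentially the paper's argument. The paper combines the proof of Proposition~15 of \citet{ignatiadis2025empirical}, which is exactly your monotone-inversion reduction (it bounds $\sup_{t\in[0,\zeta]}|\mathbb P_H(\widehat P_i^{\mathrm{jt}}\le t\mid\{(S_j^2,A_j)\}_{j=1}^n)-t|$ by $\sup_{|z|\ge\underline z}|g(z)-\widehat g(z)|$, since $\widehat g\ge\zeta$ on $\{|z|\le\underline z\}$), with the proof of Theorem~\ref{prop:asymp_p_val}, which already bounds the expectation of this uniform-in-$z$, untruncated quantity, so the ``main obstacle'' you flag is resolved there along your first route. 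Two small corrections: the Cauchy--Schwarz step against the $L^2$ bound of Lemma~\ref{lem:l_2_one_dof} needs the cutoff $|z|\ge\underline z$ rather than mere integrability of the endpoint singularity (the squared kernel diverges logarithmically at $t^2=(K-p)S_i^2/(K-p+1)$), and your fallback route, as sketched, would only yield roughly the square root of the rate, not the rate up to a log factor.
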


Next, let $\fdr_n^{\jt}$ denote the FDR of the BH procedure at level $\alpha$ applied to $\widehat P_1^{\jt}, \ldots, \widehat P_n^{\jt}$  to
to test $H_i:\theta_i=0$ for $i \in [n]$ with the nominal level of significance $\alpha \in (0,1)$. The following theorem shows asymptotic FDR control under a criticality condition as in Theorem~\ref{thm:final_rate_cs}.
\begin{theorem}
\label{thm:final_rate}
Suppose Assumption~\ref{assu:compact_2d} hold and the oracle p-values $\{P_i^{\jt}\}$ are critically dense at $\alpha$ as defined in Definition~\ref{asm:1d_limma_trend_bh}.  Let $n_0:= \#\{i \in [n]: \theta_i=0\}$.
Then, for $\varrho > 13/4$,
\[
\lim_{n \to \infty}
\left\{
n^{1/2}(\log n)^{-\varrho}
\left(\fdr_n^{\jt}-\frac{n_0}{n}\alpha\right)_+
\right\}
= 0.\]
\end{theorem}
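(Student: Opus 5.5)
The plan is the standard reduction from p-value approximation to asymptotic FDR control, as in \citet[Section 3.3]{ignatiadis2025empirical}. The inputs are Theorem~\ref{prop:asymp_p_val}, Proposition~\ref{prop:valid_p_values} and the critical density of $\{P_i^\jt\}$. Write $\epsilon_n := \log^{13/4} n/\sqrt n$. The target is $(\fdr_n^\jt - \alpha n_0/n)_+ = O(\epsilon_n\,\mathrm{polylog})$ with an exponent strictly below $\varrho$. Any $\varrho>13/4$ then works, once the extra log factors from the concentration steps are absorbed by choosing the slack appropriately. (If the slack is genuinely only $o(\epsilon_n \log^{\varrho-13/4} n)$, I will choose the rounding constants so that this is what comes out.)

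\textbf{Step 1: localize the BH threshold.} Let $\wh F_n(t) := n^{-1}\sum_i \mathbb 1\{\wh P_i^\jt \le t\}$, so that BH rejects $\{i:\wh P_i^\jt \le \wh t\}$ with $\wh t = \sup\{t: \wh F_n(t) \ge t/\alpha\}$. Theorem~\ref{prop:asymp_p_val} gives $\mathbb E|P_i^\jt\wedge\zeta - \wh P_i^\jt\wedge\zeta| \lesssim \epsilon_n$ uniformly in $i$. By Markov's inequality, for each $t\le \zeta$ and $\delta>0$, $\mathbb P(\wh P_i \le t) \ge \mathbb P(P_i \le t-\delta) - \epsilon_n/\delta$.

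I will combine this with critical density at $t\in[t_0,t_1]$ and with concentration of the empirical CDF of the oracle p-values, which are independent across $i$ given $H$ because the $\theta_i$ are fixed. This shows that with probability $1-O(n^{-2})$ we have $\wh F_n(t_0) > t_0/\alpha$, and hence $\wh t \ge t_0$. The concentration is via Bernstein/DKW on a grid. The dependence among the $\wh P_i$ through $\wh H$ is handled by sandwiching $\wh P_i$ between $P_i\pm\delta$ outside an exceptional set, whose average size is controlled by $\epsilon_n/\delta$. I will take $\delta = \sqrt{\epsilon_n}$ here, since only a constant-order margin is needed at this step.

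\textbf{Step 2: FDR decomposition.} On the event $\{\wh t\ge t_0\}$, write
$$\frac{V_n}{R_n\vee1} = \frac{\sum_{i\in\mathcal H_0}\mathbb 1\{\wh P_i\le \wh t\}}{n\,\wh t/\alpha}.$$
I will replace $\wh t$ by a deterministic grid $t_0=s_0<\dots<s_L=\alpha$ with mesh $\eta_n$, and bound the ratio by its value on the grid up to a factor $1+O(\eta_n)$. For each grid point $s$, I need to bound
$$\mathbb E\Big[\sup_{s}\Big|\frac1n\sum_{i\in\mathcal H_0}\big(\mathbb 1\{\wh P_i\le s\} - s\big)\Big|\Big].$$
The null p-values are approximately uniform conditionally on $\{(S_j^2,A_j)\}$ by Proposition~\ref{prop:valid_p_values}, which gives a bias of order $\epsilon_n$ per unit. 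Conditionally on all $(S_j^2,A_j)$, the null $Z_i$ are independent, because $\wh H$ depends only on the $(S^2,A)$ data and $Z$ is independent of $(S^2,A)$ under Assumption~\ref{assu:design}. So the indicators $\mathbb 1\{\wh P_i\le s\}$ are conditionally independent, and Hoeffding plus a union bound over the grid gives fluctuations of order $\sqrt{\log n/n}$. This conditional independence is the key structural point: it is what makes the argument cleaner than a generic sandwich.

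Combining, and using $\wh t\ge t_0$ bounded away from $0$, gives
$$\fdr_n^\jt \le \frac{n_0}{n}\alpha + O\big(\epsilon_n + \sqrt{\log n/n} + \eta_n\big) + O(n^{-2}).$$
Taking $\eta_n = \epsilon_n$, the total excess is $O(\log^{13/4} n/\sqrt n)$. Multiplying by $n^{1/2}(\log n)^{-\varrho}$ with $\varrho>13/4$ sends this to zero.

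\textbf{Main obstacle.} The hardest part is making Step 2 rigorous despite the data dependence of $\wh t$. The grid/monotonicity argument must handle the numerator and denominator jointly, and must keep the conditional uniformity error of Proposition~\ref{prop:valid_p_values} from being applied at a random $t$. My first attempt will use monotonicity of $t\mapsto \mathbb 1\{\wh P_i\le t\}$ and of $t\mapsto t/\alpha$ to bracket $\wh t$ between consecutive grid points. This reduces the problem to a finite maximum over the grid, after which the $\sup_{t\in[0,\zeta]}$ form of Proposition~\ref{prop:valid_p_values} gives exactly the uniformity needed.
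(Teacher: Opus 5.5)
Your Step~2 is sound and differs from the paper's argument, but Step~1 as written does not give the rate. Since $\mathrm{FDR}_n\le\mathbb{P}(\widehat t<t_0)+\cdots$, the failure probability of Step~1 enters the excess FDR directly.

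\textbf{Step~1 does not deliver the rate.} The only control you have on the exceptional set is the $L^1$ bound of Theorem~\ref{prop:asymp_p_val}, namely $\mathbb{E}\bigl[\frac{1}{n}\sum_i|\widehat P_i\wedge\zeta-P_i\wedge\zeta|\bigr]\lesssim\epsilon_n$. The $\widehat P_i$ are all coupled through $\widehat H$, and nothing in the paper gives a high-probability bound on this average. So the claimed $1-O(n^{-2})$ is unsupported, and Markov is what is available. It gives $\mathbb{P}(\widehat t<t_0)\lesssim\epsilon_n/(\delta c)$, with $c$ the constant critical-density margin.

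With your choice $\delta=\sqrt{\epsilon_n}$ this is of order $n^{-1/4}(\log n)^{13/8}$. Multiplying by $n^{1/2}(\log n)^{-\varrho}$ gives a divergent quantity, so the argument as written proves only an $n^{-1/4}$-type rate. The reasoning behind the choice is inverted: precisely because only a constant margin is needed, $\delta$ should be a fixed constant. The paper takes $\delta=4t_1(\alpha-\alpha_{0})/(5\alpha)$, which makes the Markov bound $O(\epsilon_n)$.

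You should also evaluate at $t_1$ rather than $t_0$. Lemma~\ref{lem:s11} lowers the argument to $t-\delta$, and critical density says nothing below $t_0$. With $t=t_1$ and $\delta<t_1-t_0$, Lemma~\ref{lem:bdkw} for the independent oracle p-values plus Markov for the coupling term give $\widehat t\ge t_1$ with probability $1-O(\epsilon_n)$. This is exactly the paper's $\eta_n$.

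\textbf{Step~2 works and is a genuine alternative.} One correction to its justification: the $\widehat P_i$ are conditionally independent given $\mathcal D:=\{(S_j^2,A_j)\}_{j=1}^n$ because the tuples $(Z_i,S_i^2,A_i)$ are independent across $i$ and $\widehat H$ is $\mathcal D$-measurable. It is not because $Z$ is independent of $(S^2,A)$: after integrating over $H$, $Z_i$ and $(S_i^2,A_i)$ are dependent through $\sigma_i^2$, which is the reason to condition at all.

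The grid is also unnecessary. Since $R_n=n\widehat t/\alpha$,
\[
\frac{V_n}{R_n}=\frac{n_0}{n}\alpha+\frac{\alpha}{\widehat t}\cdot\frac1n\sum_{i\in\mathcal H_0}\bigl(\mathds{1}\{\widehat P_i\le\widehat t\}-\widehat t\bigr).
\]
On $\{\widehat t\ge t_0\}$ the error term is at most $\alpha/t_0$ times a supremum over $t\le\zeta$. That supremum splits into two pieces:
\begin{itemize}
\item a conditional DKW term, controlled by Lemma~\ref{lem:bdkw} given $\mathcal D$, with expectation at most $\sqrt{2e/n}$;
\item the bias term $\frac1n\sum_{i\in\mathcal H_0}\sup_t|\mathbb{P}(\widehat P_i\le t\mid\mathcal D)-t|$, whose expectation is $\lesssim\epsilon_n$ by Proposition~\ref{prop:valid_p_values}.
\end{itemize}
Both bounds are uniform in $t$, so the random $\widehat t$ causes no difficulty.

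\textbf{Comparison with the paper.} The paper instead uses a leave-one-out argument. Replacing $Z_i$ by $\pm\infty$ leaves $\widehat H$ unchanged, and $R_n=R_{n,i}$ whenever $i$ is rejected. The error $\Delta_i$ is the supremum over $|z|>\underline z$ of the gap between the estimated and oracle p-value functions at $(z,S_i^2,A_i)$; it is $\mathcal D$-measurable. Exact conditional uniformity of the oracle $P_i$ given $\mathcal D$, together with $\Delta_i$, gives $\mathrm{FDR}_n\le\frac{n_0}{n}\alpha+\mathbb{E}[\min\{\sum_{i\in\mathcal H_0}\Delta_i/R_n,1\}]$.

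That route needs only the $\Delta_i$ bound from the proof of Theorem~\ref{prop:asymp_p_val}, and no empirical-process step for the nulls. Yours avoids the leave-one-out and self-consistency machinery, and is the same FDP decomposition the paper uses for Theorems~\ref{thm:final_rate_mis} and~\ref{thm:final_rate_compound_p_val_jt}. Once Step~1 is repaired, both give $(\mathrm{FDR}_n-\frac{n_0}{n}\alpha)_+=O(\log^{13/4}n/\sqrt n)$, which suffices for every $\varrho>13/4$.
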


Under slightly stronger assumptions on the number of null hypotheses and the mixing distribution, one can also study the asymptotic power of BH applied to $\widehat P_1^{\jt}, \ldots, \widehat P_n^{\jt}$. We discuss such power properties in Section \bluelink{sec:power_jt_bh}.

\section{Numerical Experiments}
\label{sec:numerical_exp}

We assess our proposals, \reglitrd{} (Algorithm~\ref{algo:limma_trend_reg}) and \jtlitrd{} (Algorithm~\ref{algo:limma_trend_jt}), against seven alternatives: (a)~the $t$-test; (b)~untrended \limma{}~\citep{smyth2004linear}, denoted Untrended-Inv$\chi^2$; (c)~its NPMLE variant~\citep{ignatiadis2025empirical}, Untrended-NPMLE; (d)~parametric \reginvc{} (Algorithm~\ref{algo:limma_trend}); (e)~MAnorm2~\citep{tu2021manorm2}; (f)~MAP~\citep{li2019map}; and (g)~an \texttt{Oracle} that computes partially Bayes p-values using the true data-generating distribution. Supplement~\bluelink{subsec:implementation_baseline} provides more details on the baselines.

All four settings follow the two-sample comparison of Example~\ref{ex:two_sample} with $n=10000$ independent units and $K=K_A+K_B$ samples. Of the $n$ units, $n_0=9000$ are null ($\theta_i=0$); the remaining have \smash{$\theta_i \sim \mathrm{N}(0,\,16\sigma_i^2)$}, where $\sigma_i^2 = \exp\{m(\mu_i)\}\,\tau_i^2$ and \smash{$\tau_i^2 \sim G$}. The choices of $(K_A,K_B)$, trend $m(\cdot)$, residual prior $G$, and the distribution of $\mu_i$ are specified for each setting below. In Settings 1--3, the trended methods use $M_i=A_i$ (and \smash{$M_i=\tilde{A}_i^{\mathrm{MA2}}$} for MAnorm2); in Setting 4, we provide $\mu_i$ directly as external side-information $M_i=\mu_i$. The latter is not feasible in practice ($\mu_i$ is unknown), but we use it to construct an idealized setting in which the trend perfectly determines the variance. P-values are passed through BH at $\alpha=0.05$, and FDR and power are estimated over $100$ Monte Carlo replicates. Results are reported in Table~\ref{tbl:simulation_result}.

\begin{table}[t]
\caption{FDR and power (in \%) at nominal level $\alpha=0.05$ across the four simulation settings of Section~\ref{sec:numerical_exp}. FDR entries exceeding the nominal level are highlighted in \textcolor{red}{red}. Settings 1--3 use $M_i=A_i$ (and \smash{$\tilde{A}_i^{\mathrm{MA2}}$} for MAnorm2); Setting 4 treats $M_i=\mu_i$ as external side-information, where MAnorm2 and \jtlitrd{} do not apply (``--'').}
\label{tbl:simulation_result}
\centering
{\footnotesize
\setlength{\tabcolsep}{4pt}
\renewcommand{\arraystretch}{1.15}
\begin{tabular*}{\textwidth}{@{\extracolsep{\fill}}@{}
p{0.5cm}
*{9}{S[table-format=3.1, table-space-text-post=\,]<{\,\%}}
@{}}
\toprule
& \multicolumn{1}{c}{}  
& \multicolumn{1}{c}{\textbf{Classical}}
& \multicolumn{2}{c}{\textbf{Untrended}}
& \multicolumn{2}{c}{\textbf{Regression}}
& \multicolumn{1}{c}{\textbf{Joint}}
& \multicolumn{1}{c}{}
& \multicolumn{1}{c}{} \\
\cmidrule(lr){3-3}
\cmidrule(lr){4-5}
\cmidrule(lr){6-7}
\cmidrule(lr){8-8}
\textbf{} 
& \multicolumn{1}{c}{\textbf{Oracle}}
& \multicolumn{1}{c}{\textbf{t-test}}
& \multicolumn{1}{c}{\textbf{Inv$\chi^2$}}
& \multicolumn{1}{c}{\textbf{NPMLE}}
& \multicolumn{1}{c}{\textbf{Inv$\chi^2$}}
& \multicolumn{1}{c}{\textbf{NPMLE}}
& \multicolumn{1}{c}{\textbf{NPMLE}}
& \multicolumn{1}{c}{\textbf{MAP}}
& \multicolumn{1}{c}{\textbf{MAnorm2}} \\
\midrule
\simhead{Setting 1 (parametric prior misspecified, no trend)}
FDR   & 4.5  & 4.6  & \redS{5.8}  & 4.5  & \redS{5.8}  & 4.6  & 4.5  & \redS{50.2} & \redS{5.8} \\
Power & 37.8 & 12.9 & 33.5 & 37.9 & 33.0 & 36.3 & 37.8 & 53.5 & 33.3 \\
\addlinespace[0.5em]
\simhead{Setting 2 (parametric prior misspecified, trend)}
FDR   & 4.5  & 4.6  & \redS{5.4}  & 4.5  & \redS{5.8}  & 4.3  & 4.4  &  \redS{51.3} & \redS{5.8} \\
Power & 37.7 & 12.9 & 31.7 & 32.8 & 32.0 & 33.2 & 34.0 & 52.5 & 32.4 \\ \addlinespace[0.5em]
\simhead{Setting 3 (unbalanced design, parametric prior, trend)}
FDR   & 4.6  & 4.5  & 4.5  & 4.4  & 4.0  & 4.2  & 4.3  & \redS{32.5} & \redS{12.6} \\
Power & 52.4 & 46.3 & 47.2 & 47.7 & 51.2 & 51.3 & 51.8 & 62.6 & 44.8 \\
\addlinespace[0.5em]
\simhead{Setting 4 (no residual heterogeneity, external side-information, trend)}
FDR   & 4.4  & 4.3  & 5.0  & 4.4  & 4.3  & 4.3  & \multicolumn{1}{c}{--} & 4.4  & \multicolumn{1}{c}{--} \\
Power & 59.3 & 38.2 & 43.2 & 44.1 & 59.2 & 59.3 & \multicolumn{1}{c}{--} & 59.3 &  \multicolumn{1}{c}{--}\\
\bottomrule
\end{tabular*}
}
\end{table}

Across all four settings, our procedures match or beat the feasible competitors in power while remaining the only methods (besides \texttt{Oracle}) to control FDR throughout. The trended methods (\reglitrd{}, \jtlitrd{}) outperform their untrended counterparts whenever a non-trivial trend is present (Settings 2--4). Additional simulations are reported in Supplement~\bluelink{sec:numeric_additional}.

\begin{itemize}[leftmargin=*,nosep,wide] 
\item \emph{Setting 1 (parametric prior misspecified, no trend).} We take $K_A=K_B=3$, $G=0.5\delta_1+0.5\delta_{10}$, $m\equiv 0$, $\mu_i\sim\mathrm{N}(20,3)$. The parametric prior is misspecified, so Untrended-Inv$\chi^2$, \reginvc{}, and MAnorm2 inflate FDR; MAP inflates much more severely (exceeding $50\%$) because it ignores residual heterogeneity. The NPMLE-based methods control FDR while gaining substantial power over the $t$-test. Since $K_A=K_B$, we have that \smash{$A_i=\tilde A_i^{\mathrm{MA2}}$}, and so MAnorm2 satisfies Assumption~\ref{assu:design}. Its lack of FDR control is due to the parametric model.
\item \emph{Setting 2 (parametric prior misspecified, trend).} Same as Setting 1 but with the choice $m(\mu_i)=-4\,\mathrm{logistic}((\mu_i-16)/4)+12$. The parametric methods continue to inflate FDR. Among feasible methods, \jtlitrd{} attains the highest power.
\item \emph{Setting 3 (unbalanced design, parametric prior, trend).} We take $K_A=2$, $K_B=10$, $G=10\cdot\mathrm{Inv}\chi^2_{10}$, $m(\mu_i)=-6\,\mathrm{logistic}((\mu_i-20)/0.15)$, $\mu_i\sim\mathrm{N}(20,0.2)$. MAnorm2 and MAP inflate FDR substantially, and for different reasons: MAnorm2's \smash{$\tilde A_i^{\mathrm{MA2}}$} violates orthogonality in Assumption~\ref{assu:design}, and MAP's plug-in \smash{$\widehat\sigma_i^2=\widehat\xi^2(A_i)$} ignores residual heterogeneity. 
\item \emph{Setting 4 (no residual heterogeneity, external side-information, trend).} We take $K_A=3$, $K_B=5$, $G=\delta_1$, $m$ as in Setting 3, $\mu_i\sim\mathrm{N}(20,0.2)$, with $M_i=\mu_i$ supplied externally. The trend perfectly determines $\sigma_i^2$, MAP's ideal scenario. MAnorm2 and \jtlitrd{} do not apply. We see that \reglitrd{} matches the power of MAP and \texttt{Oracle} while controlling FDR: the nonparametric residual prior costs essentially nothing, even when not needed.
\end{itemize}

\section{Applications to bulk RNA-seq, ChIP-seq and proteomics}
\label{sec:case_studies}

We demonstrate the methodology on three datasets spanning bulk RNA-seq, ChIP-seq, and quantitative proteomics: the melanoma RNA-seq data of~\citet{goswami2018modulation}, the H3K4me3 ChIP-seq data of~\citet{tu2021manorm2}, and the A431 proteomics data of~\citet{zhu2020deqms}. In each case, we work with standard normalized and transformed measurements for which unit-wise linear modeling and the associated Gaussian approximation are routinely used in practice.\footnote{For the RNA-seq and ChIP-seq datasets, the $Y_{ij}$ are $\log_2$-transformed counts, so Gaussianity cannot hold exactly. Treating them as Gaussian is standard practice in \limma{}, \limmatrd{}, and \texttt{voom}~\citep{law2014voom}; we intend to study the impact of this approximation on downstream inference in future work.} We compare our proposals, \reglitrd{} and \jtlitrd{}, against four benchmarks from Section~\ref{sec:numerical_exp}: the $t$-test, Untrended-Inv$\chi^2$, Untrended-NPMLE, and \reginvc{}.  We omit MAnorm2 and MAP: MAnorm2's restriction to two-sample comparisons makes it inapplicable to two of our three datasets, and MAP's FDR exceeded 30\% throughout our Section~\ref{sec:numerical_exp} simulations whenever residual heterogeneity was present. Table~\ref{tbl:realdata_significance} reports the number of discoveries per dataset and method.

The mean-variance trends for each dataset are visible in Figure~\ref{fig:intro_trend} and motivate adjusting for the trend. For each dataset, we also report diagnostic plots assessing the fit of each method's variance model; these are introduced in detail in Section~\ref{sec:bulk_rna_1} and reused with minor variations in the subsequent subsections.

\begin{table}
\caption{Number of discoveries (BH at target FDR $\alpha$) across the  datasets of Section~\ref{sec:case_studies} with different methods. Our proposals are \reglitrd{} (Regression-NPMLE) and \jtlitrd{} (Joint-NPMLE).}
\label{tbl:realdata_significance}
\centering
{\small
\setlength{\tabcolsep}{4pt}
\renewcommand{\arraystretch}{1.15}
\begin{tabular*}{\textwidth}{@{\extracolsep{\fill}}@{}p{6.1cm}
                            *{6}{S[table-format=4.0]}@{}}
\toprule
& \multicolumn{1}{c}{\textbf{Classical}}
& \multicolumn{2}{c}{\textbf{Untrended}}
& \multicolumn{2}{c}{\textbf{Regression}}
& \multicolumn{1}{c}{\textbf{Joint}} \\
\cmidrule(lr){2-2}\cmidrule(lr){3-4}\cmidrule(lr){5-6}\cmidrule(lr){7-7}
\textbf{Contrast}
& \multicolumn{1}{c}{\textbf{t-test}}
& \multicolumn{1}{c}{\textbf{Inv$\chi^2$}}
& \multicolumn{1}{c}{\textbf{NPMLE}}
& \multicolumn{1}{c}{\textbf{Inv$\chi^2$}}
& \multicolumn{1}{c}{\textbf{NPMLE}}
& \multicolumn{1}{c}{\textbf{NPMLE}} \\
\midrule
\dshead{\textsc{RNA-seq}: CD4$^{+}$ T cells in melanoma {\hypersetup{hidelinks}\citep{goswami2018modulation}}}{$n=15{,}735$ genes, $K=8$, $p=5$, $\alpha=0.05$}
\ct{Post vs Pre Ipilimumab} & 2 & 0 & 0 & 133 & 76 & 106 \\
\addlinespace[0.5em]
\dshead{\textsc{ChIP-seq}: H3K4me3 in lymphoblastoid cells {\hypersetup{hidelinks}\citep{tu2021manorm2}}}{$n=51{,}128$ genomic intervals, $K=6$, $p=2$, $\alpha=0.0001$}
\ct{GM12891 vs GM12892} & 0 & 197 & 2525 & 4638 & 4620 & 5674 \\
\addlinespace[0.5em]
\dshead{\textsc{Proteomics}: miRNA-mimic treatment in A431 cells {\hypersetup{hidelinks}\citep{zhu2020deqms}}}{$n=8{,}633$ proteins, $K=10$, $p=4$, $\alpha=0.05$}
\ct{miR372 vs Ctrl}      & 322  & 503  & 526  & 570  & 629  & 644 \\
\ct{miR519 vs Ctrl}      & 573  & 820  & 863  & 984  & 1029 & 1050 \\
\ct{miR191 vs Ctrl}      & 98   & 160  & 169  & 184  & 198  & 207 \\
\ct{miR372 vs miR519}    & 876  & 1098 & 1122 & 1238 & 1305 & 1333 \\
\ct{miR372 vs miR191}    & 661  & 921  & 943  & 1008 & 1079 & 1113 \\
\ct{miR519 vs miR191}    & 505  & 699  & 732  & 816  & 871  & 907 \\
\bottomrule
\end{tabular*}
}
\end{table}

\subsection{Differentially expressed genes after treatment of Ipilimumab}
\label{sec:bulk_rna_1}

\begin{figure}
\centering
\setlength{\tabcolsep}{0pt}

\begin{tabular}{@{}l@{\hspace{0.03\textwidth}}l@{\hspace{0.03\textwidth}}l@{}}
(a) & (b) & (c) \\[0.3em]
\includegraphics[width=0.31\textwidth]{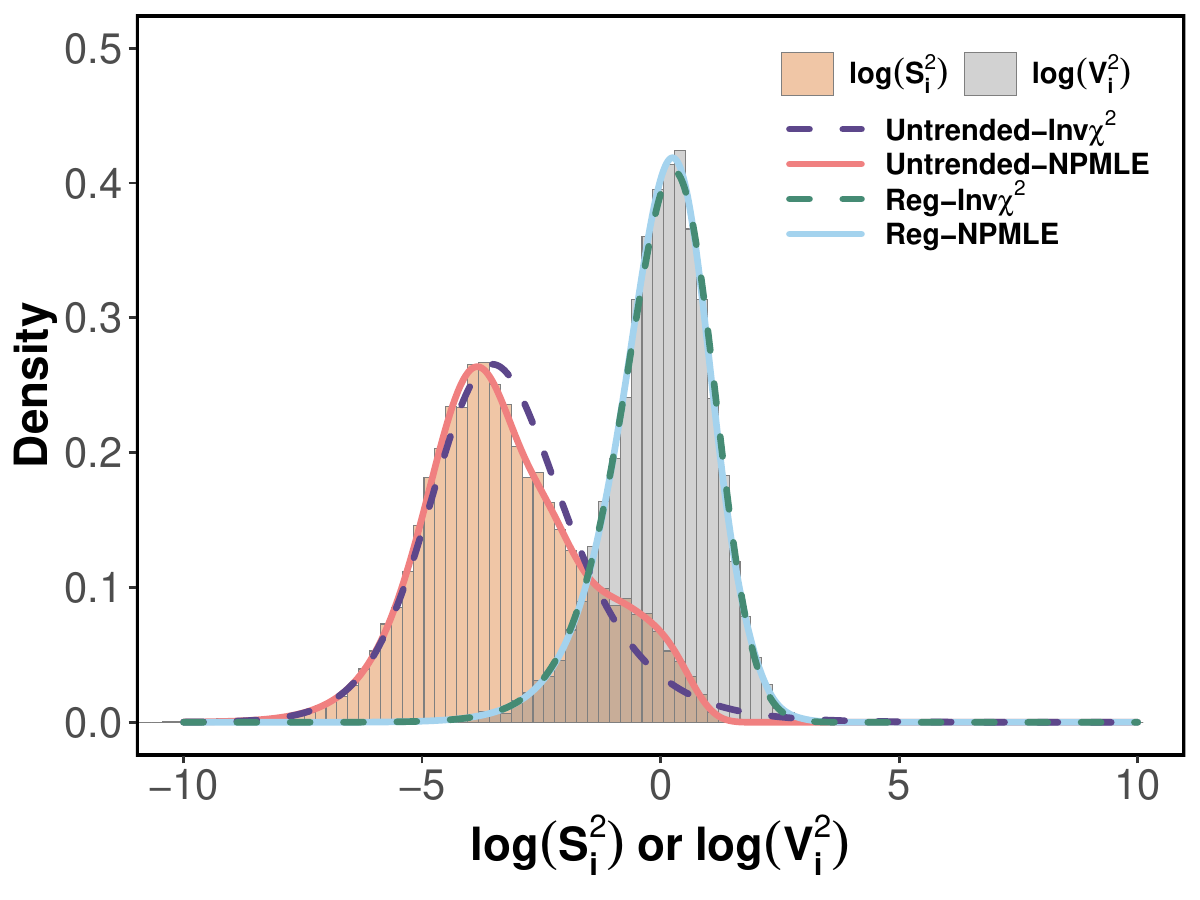} &
\includegraphics[width=0.31\textwidth]{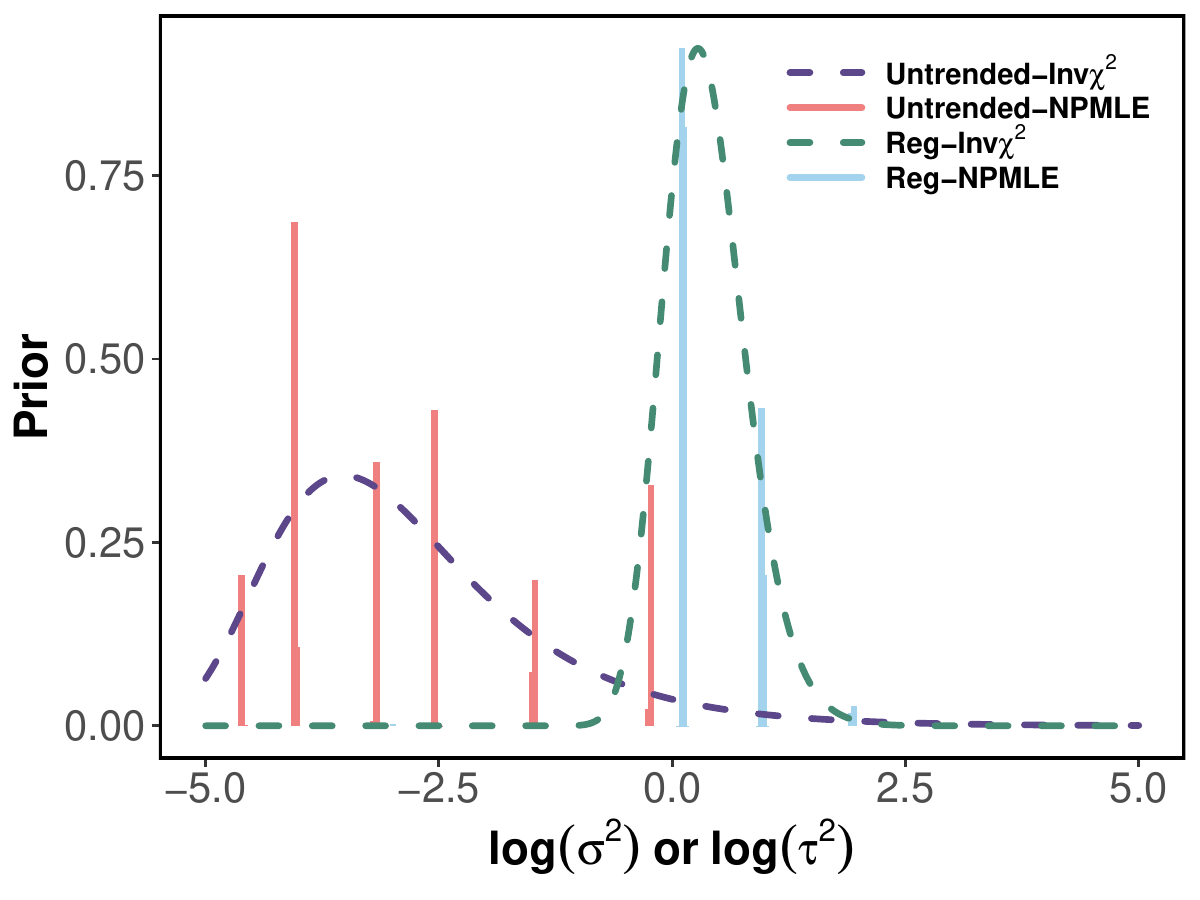} &
\includegraphics[width=0.31\textwidth]{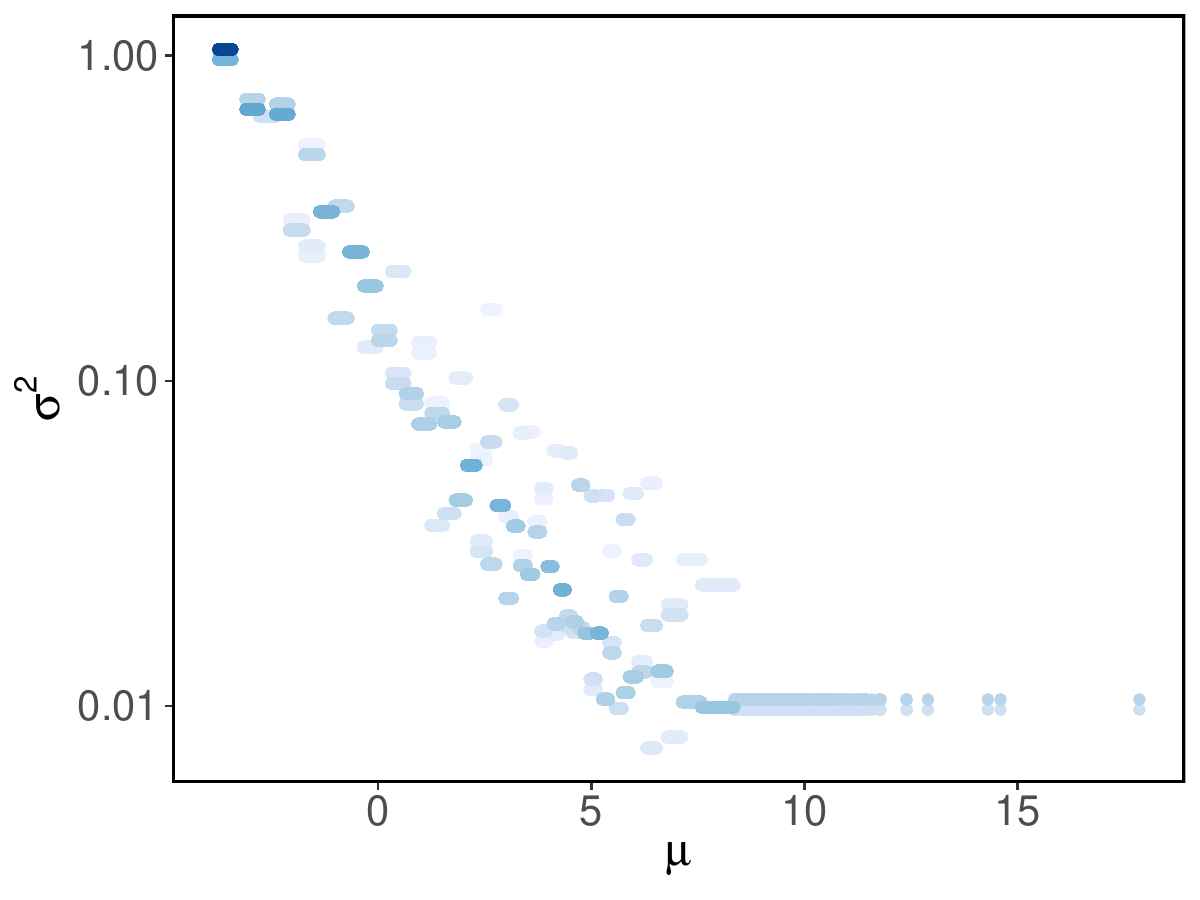}
\end{tabular}

\caption{Trend adjustment substantially concentrates both the marginal of log sample variance (panel (a)) and the variance prior (panel (b)) for the RNA-Seq data of~\cite{goswami2018modulation} ($n=15{,}735$ genes in bulk RNA-seq of CD$4^{+}$ T cells); this reflects that the mean-variance trend of Figure~\ref{fig:intro_trend}(a) absorbs a large part of the variance heterogeneity, driving the power gain of the trended methods (Table~\ref{tbl:realdata_significance}). \textbf{(a)} Histograms of the log sample variances $\log(S_i^2)$ and the trend-adjusted log sample variances $\log(V_i^2) = \log(S_i^2) - \widehat m(A_i)$, overlaid with the corresponding fitted marginal densities. The untrended curves correspond to Untrended-Inv$\chi^2$ (\limma{}) and Untrended-NPMLE; the trend-adjusted curves correspond to \reginvc{} (in \textcolor{ggplotGreen}{green}) and \reglitrd{} (in \textcolor{ggplotBlue}{blue}). \textbf{(b)} Estimated priors on the log-variance component scale: the untrended priors on the $\log(\sigma^2)$ scale, the trend-adjusted priors on the $\log(\tau^2)$ scale. Smooth dashed curves represent the fitted scaled Inv$\chi^2$ densities transformed to the log scale; vertical segments represent the NPMLE support points, with probability masses rescaled for visualization. \textbf{(c)} Estimated bivariate prior \smash{$\widehat H$} from \jtlitrd{}; color intensity is proportional to the probability mass assigned to each support point.}
\label{fig:rnaseq1}
\end{figure}

\citet{goswami2018modulation} profile transcriptomic changes in circulating CD$4^+$ T cells following CTLA-4 blockade, sorting CD$4^+$ T cells from peripheral blood mononuclear cells (PBMCs) of 4 metastatic melanoma patients at baseline and after three doses of ipilimumab ($K=8$, paired pre/post design). After preprocessing (Supplement~\bluelink{sec:preprocessing}), we obtain $n=15{,}735$ genes with a design that includes both treatment and patient group ($p=5$). The primary contrast is pre- vs.\ post-treatment, and Assumption~\ref{assu:design} holds for the average intensity $A_i$. We control FDR at $\alpha=0.05$.

Table~\ref{tbl:realdata_significance} shows that the untrended methods make no discoveries at all on this dataset, the t-test makes 2 discoveries, whereas the trended methods make on the order of 100 each. To check whether each method's variance model fits the data, Figure~\ref{fig:rnaseq1}(a) overlays each method's model-implied marginal density on a histogram of $\log (S_i^2)$ (untrended) or the trend-adjusted $\log (V_i^2) = \log (S_i^2) - \widehat m(A_i)$ (trended). This marginal fit (rather than the prior itself, which is latent) is what our convergence guarantees for \reglitrd{} p-values go through (Remark~\ref{rem:marginal}). In Figure~\ref{fig:rnaseq1}(a), \limma{}'s parametric Inv-$\chi^2$ prior fits the untrended empirical marginal of $\log(S_i^2)$ poorly. The Untrended-NPMLE prior, in contrast, recovers the empirical marginal closely. After trend adjustment, both \reginvc{} and \reglitrd{} fit the empirical marginal of $\log(V_i^2)$.

While untrended-NPMLE fits the distribution of $\log(S_i^2)$ reasonably well, the main advantage of subtracting the fitted trend is that it produces a markedly tighter empirical marginal: $\log(V_i^2)$ is far more concentrated than $\log(S_i^2)$ in Figure~\ref{fig:rnaseq1}(a). This is consistent with the strong mean-variance trend in Figure~\ref{fig:intro_trend}(a) and is the source of the power gain of the trended methods. The corresponding effect on the prior side is visible in Figure~\ref{fig:rnaseq1}(b): the priors on $\tau_i^2$ are visibly tighter than those on $\sigma_i^2$. Panel (c) displays the estimated bivariate prior $\widehat H$ which also shows a strong trend.

An additional RNA-seq dataset is analyzed in Supplement~\bluelink{sec:suppl_case_studies}. There, Untrended-Inv$\chi^2$ makes the most discoveries despite a clearly misspecified parametric prior (Figure~\bluelink{fig:rnaseq2}(a)), casting doubt on those results; after trend adjustment, the parametric assumption becomes reasonable, and \reglitrd{} and \reginvc{} agree.

\subsection{Differential expression in ChIP-seq data from lymphoblastoid cells}

\begin{figure}
\centering
\begin{tabular}{@{}l@{\hspace{0.03\textwidth}}l@{\hspace{0.03\textwidth}}l@{}}
(a) & (b) & (c) \\[0.3em]
\includegraphics[width=0.31\textwidth]{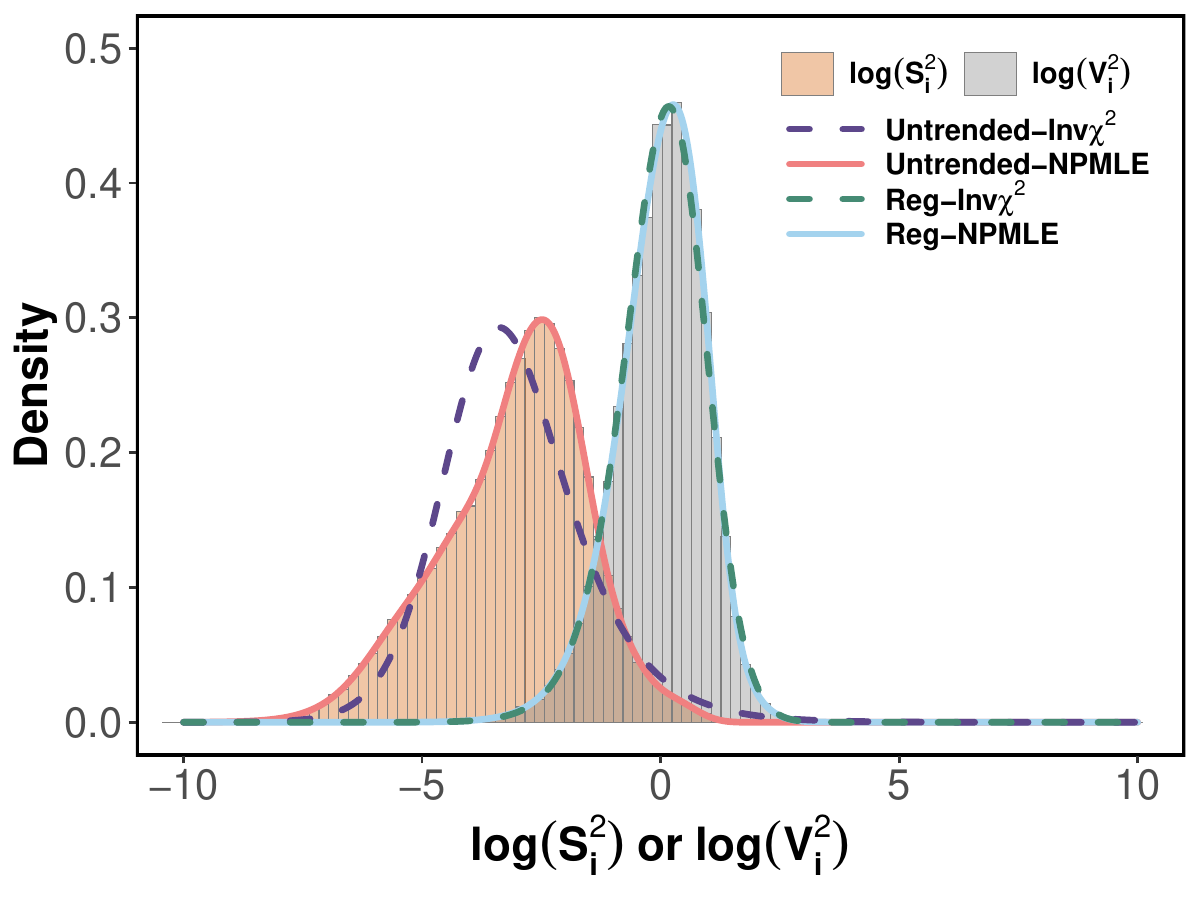} &
\includegraphics[width=0.31\textwidth]{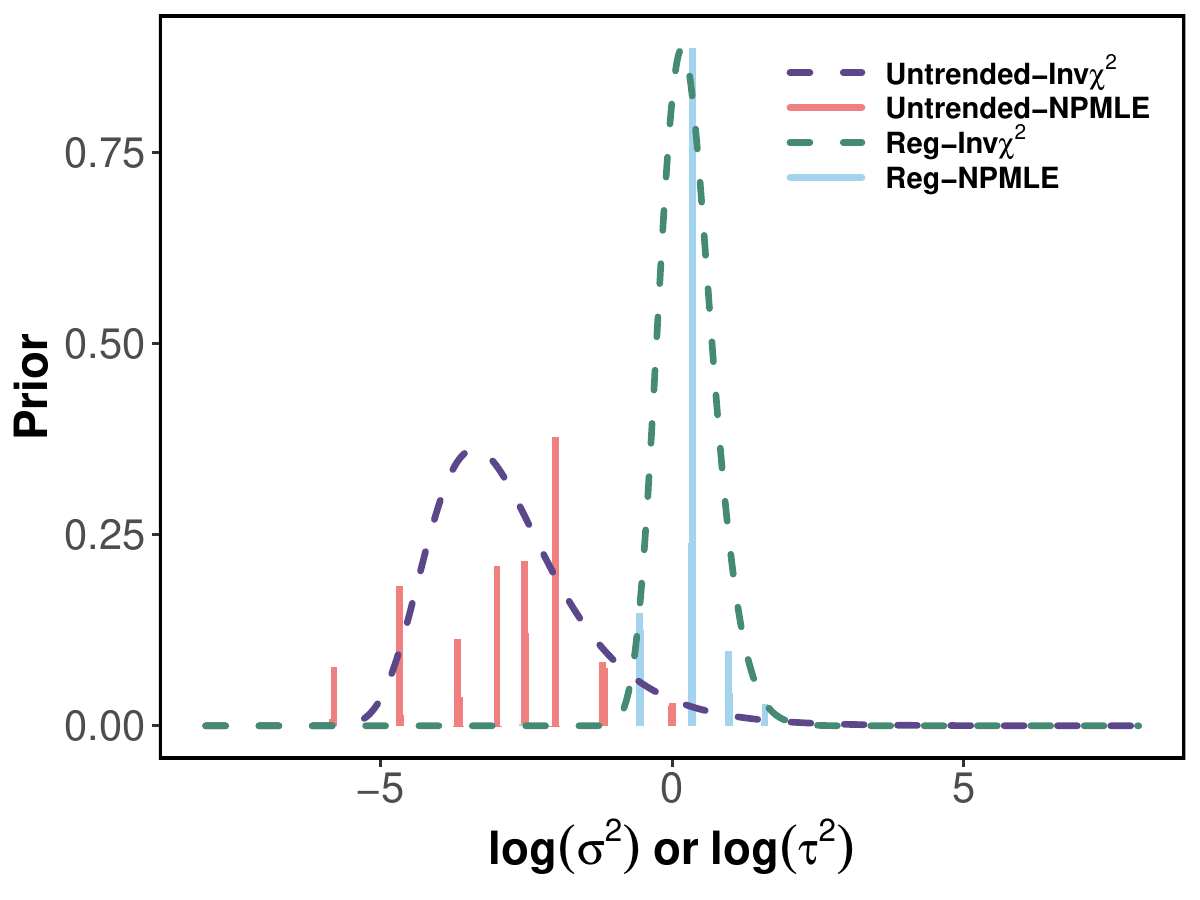} &
\includegraphics[width=0.31\textwidth]{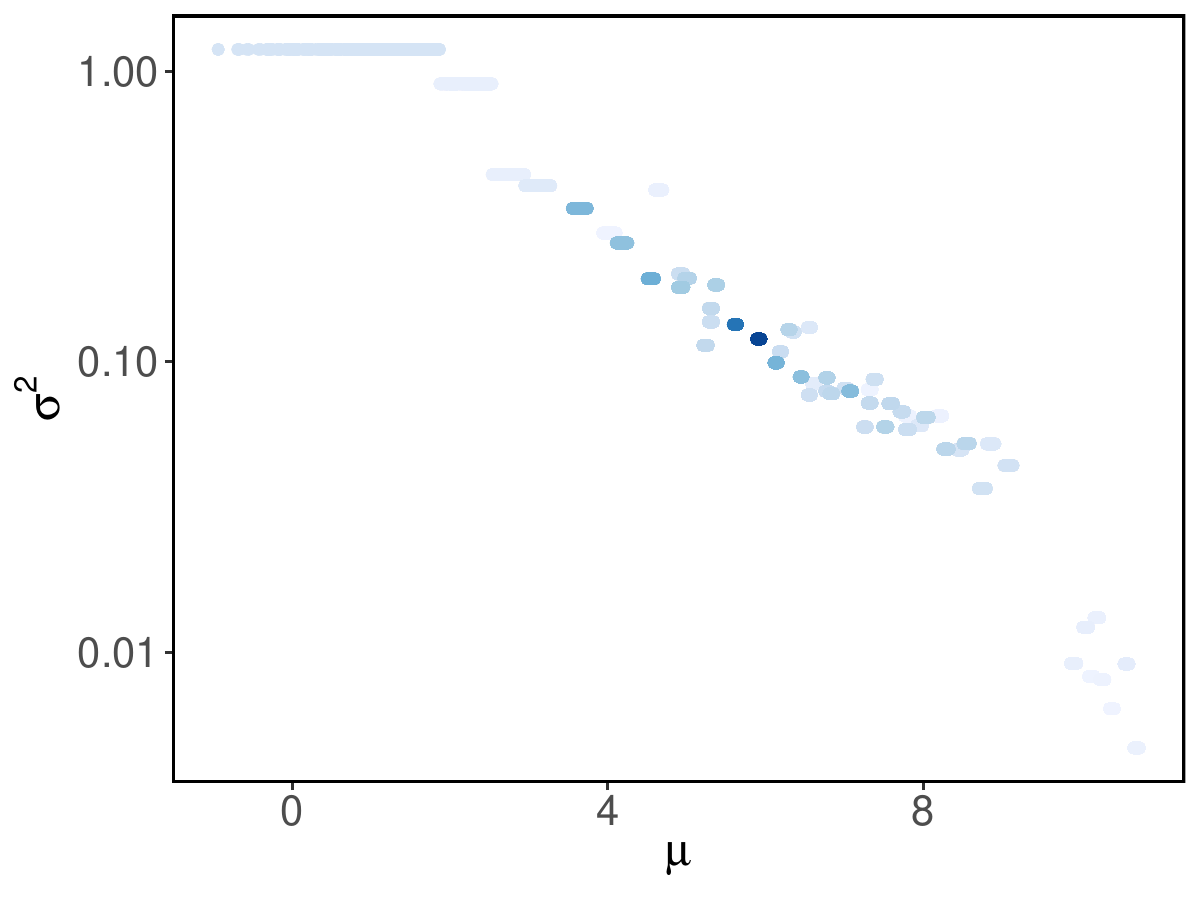} 
\end{tabular}

\caption{H3K4me3 ChIP-seq data of~\cite{tu2021manorm2}: each unit $i$ is a distinct genomic interval ($n=51{,}128$); $K=6$, $p=2$.  The three panels are analogous to Figure~\ref{fig:rnaseq1}.}
\label{fig:chipseq}
\end{figure}

We analyze H3K4me3 ChIP-seq data from~\citet{tu2021manorm2}, comparing GM12891 and GM12892 lymphoblastoid cell lines across $K=6$ samples. ChIP-seq measures the genomic enrichment of a chromatin mark (here, H3K4me3) by counting sequencing reads mapped to each genomic region; $n=51{,}128$ such intervals are observed. The design includes a group indicator ($p=2$) and the primary contrast is between cell lines; Assumption~\ref{assu:design} holds for the average intensity $A_i$, which we use as $M_i$. We use the normalized $\log_2$ read counts from the source paper and apply BH at $\alpha=0.0001$.

The pattern in Table~\ref{tbl:realdata_significance} and Figure~\ref{fig:chipseq} echoes the RNA-seq case but is more pronounced. Untrended-inv$\chi^2$ misspecifies the untrended marginal of $\log(S_i^2)$ (panel (a)), which in this case is reflected in lower power (197 discoveries) versus Untrended-NPMLE (2{,}525).\footnote{We note that in general such misspecification can lead to decreased power or to inflated false discoveries.} Trend adjustment concentrates the marginal (cf. Figure~\ref{fig:chipseq} (a)), driven by the strong trend visible in Figure~\ref{fig:intro_trend}(b), and after trend adjustment both \reginvc{} and \reglitrd{} fit the empirical marginal of $\log(V_i^2)$ closely, making over 4{,}500 discoveries each. \jtlitrd{} makes somewhat more discoveries (5{,}674); its bivariate prior $\widehat H$ is shown in panel (c).

\subsection{Differential protein expression among miRNA-mimic treatment groups}
\label{sec:seq_peptide}

In this application, we analyze a TMT10-plex–labelled quantitative proteomics dataset on A431 human epidermoid carcinoma cells sourced from \cite{zhu2020deqms}. The cells were treated with three different miRNA mimics and compared to a control condition. After peptide-spectrum matching and protein-level filtering, the data are organized as a protein-by-sample abundance matrix, where each entry represents the observed abundance of a protein in a sample. These abundance measurements are continuous-valued and are analyzed on the $\log_2$ scale, with $n=8,633$ proteins and $K=10$ samples. The design contains four groups (ctrl, miR191, miR372, miR519) yielding $p=4$ covariates, and we test all six pairwise contrasts. Unlike previous settings where the average intensity $A_i$ is a natural continuous summary, proteomics variance is strongly driven by quantification depth: proteins supported by more identified spectra or peptides tend to have more stable abundance estimates~\citep{zhu2020deqms}. Hence, we take $M_i$ as the number of peptides. Since $M_i$ is discrete and external, \jtlitrd{} from Algorithm~\ref{algo:limma_trend_jt} does not apply directly; we use the discrete variant from Algorithm~\bluelink{algo:limma_trend_jt_discrete} (Supplement~\bluelink{sec:discrete_MI}), which fits a separate NPMLE within each $M_i$ bin (binning details in Supplement~\bluelink{sec:bins_protein}). To unify notation, we still refer to this variant as \jtlitrd{}. For all these methods, p-values are computed, and BH correction is applied to control FDR at $\alpha = 0.05$.

\begin{figure}
\centering

\begin{tabular}{@{}l@{\hspace{0.03\textwidth}}l@{\hspace{0.03\textwidth}}l@{}}
(a) & (b) & (c) \\[0.3em]
\includegraphics[width=0.31\textwidth]{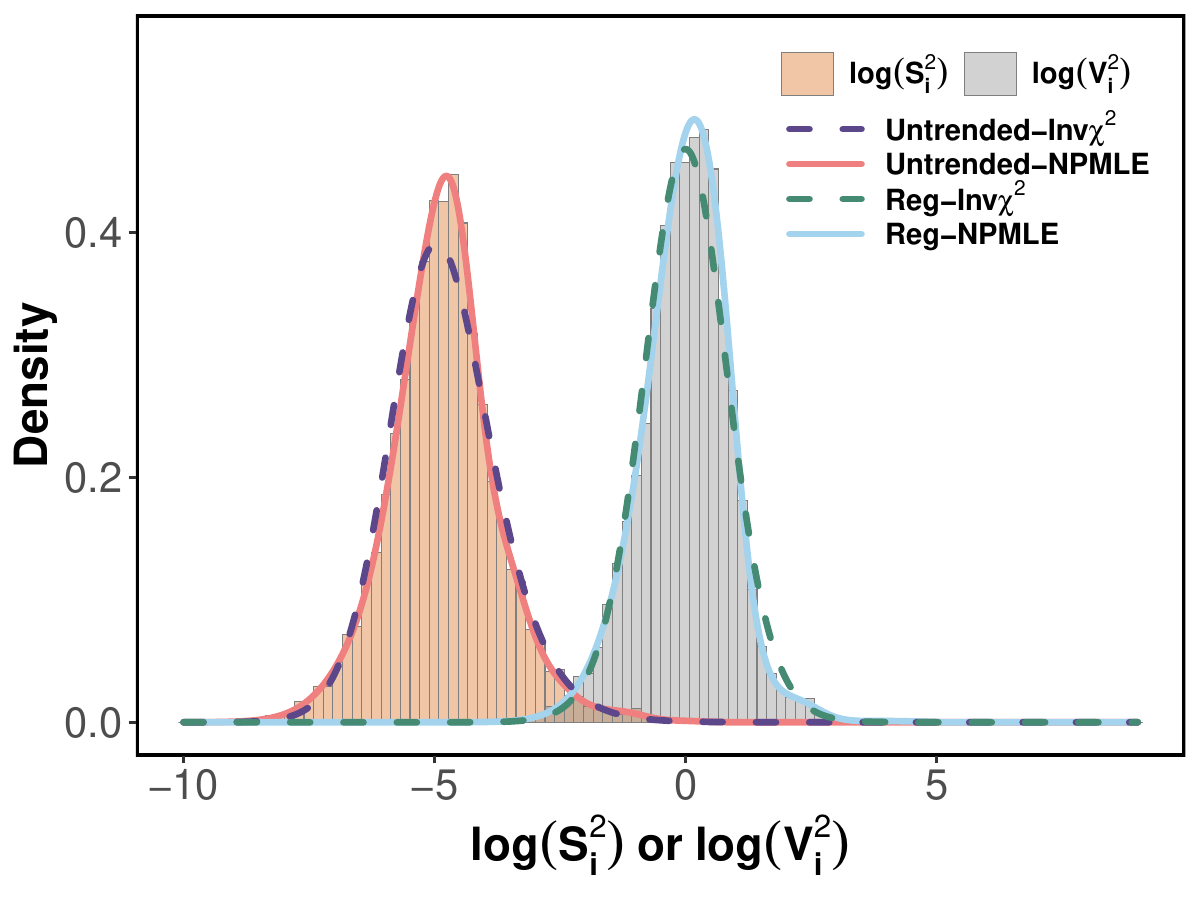} &
\includegraphics[width=0.31\textwidth]{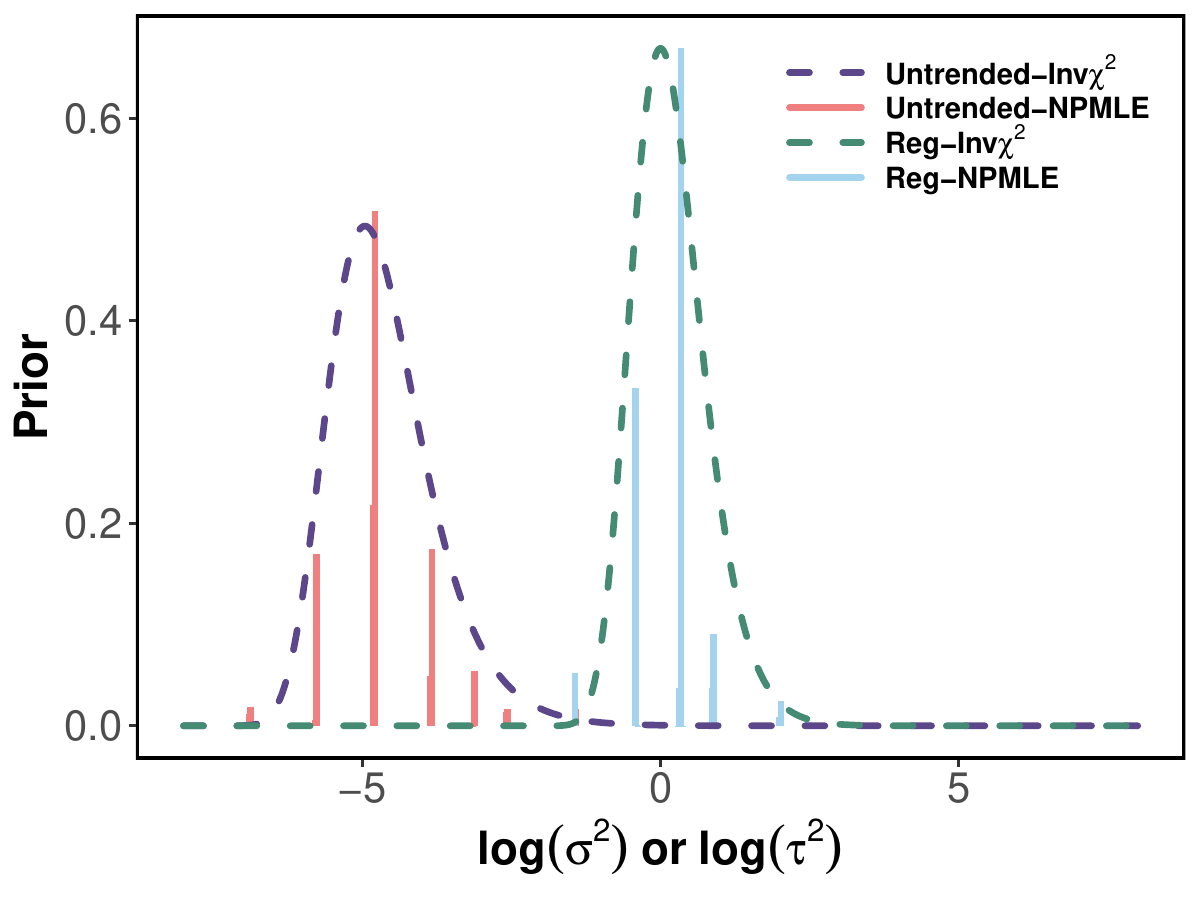} &
\includegraphics[width=0.31\textwidth]{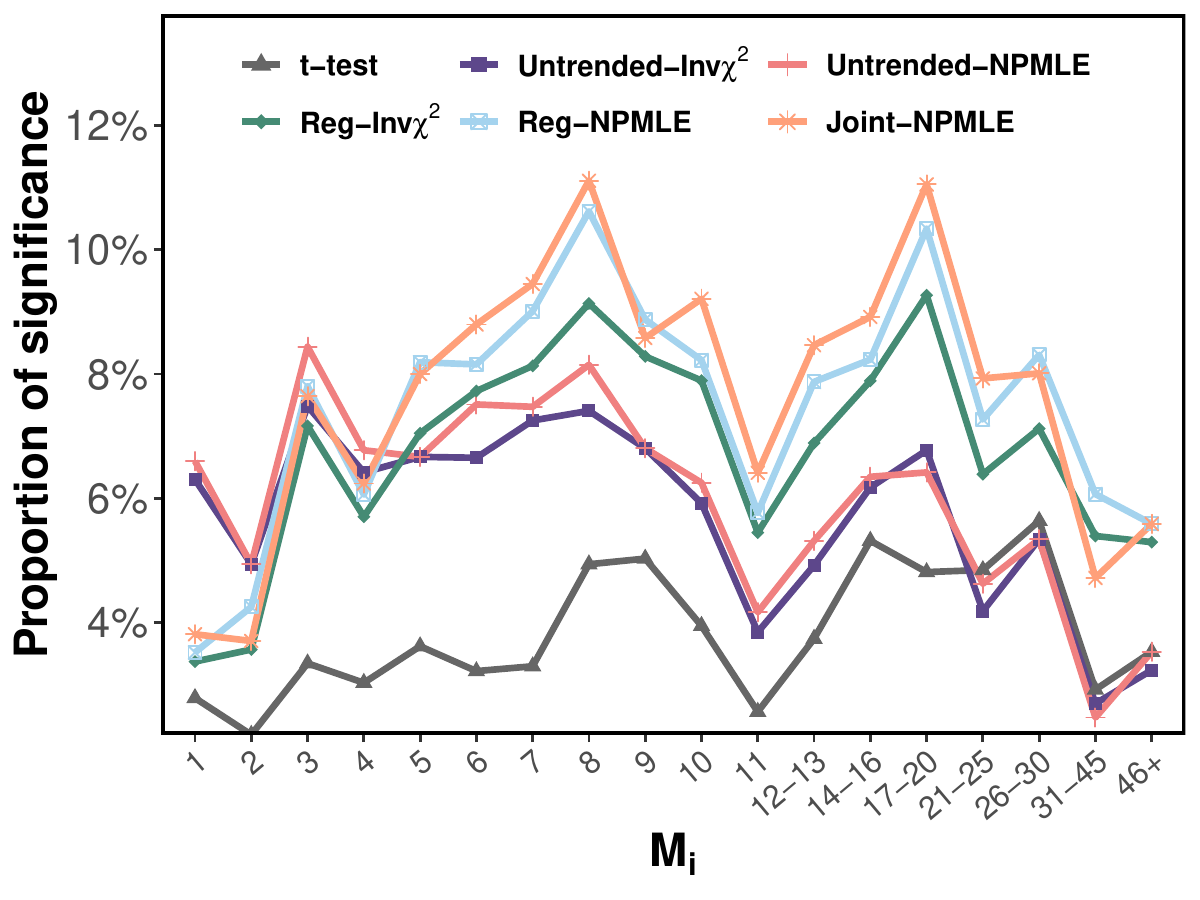}  \\
\multicolumn{3}{@{}l@{}}{(d)} \\
\multicolumn{3}{@{}l@{}}{
  \hspace*{-22pt}\includegraphics[width=1.05\textwidth]{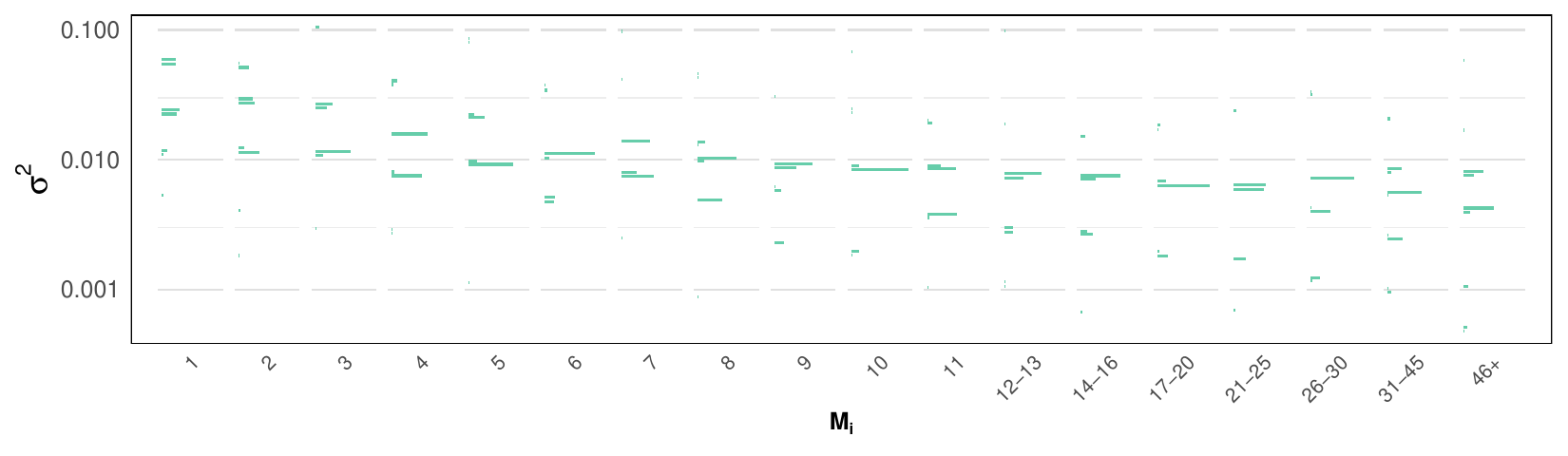}
}
\end{tabular}

\caption{TMT-based quantitative proteomics data of~\cite{zhu2020deqms}: each
unit $i$ corresponds to a distinct protein ($n=8{,}633$) quantified from bulk
TMT reporter-ion intensities, with $K=10$ samples, $p=4$, and $M_i$ equal to
the number of peptides supporting protein $i$. Panels (a) and (b) are
analogous to panels (a) and (b) of Figure~\ref{fig:rnaseq1}, respectively.
Because $M_i$ is discrete and external, \jtlitrd{} is implemented through the
discrete variant of Algorithm~\bluelink{algo:limma_trend_jt_discrete} in the
Supplement, which fits a separate NPMLE within each $M_i$ bin. \textbf{(c)}
Proportion of discoveries within each $M_i$ bin for the miR372 vs.\ control contrast, for each of the six methods in Table~\ref{tbl:realdata_significance}.
\textbf{(d)} Bin-wise estimated priors on $\sigma^2$ produced by discrete \jtlitrd{}, one panel per $M_i$ bin (binning details in Supplement~\bluelink{sec:bins_protein}). 
The priors visibly shift and reshape across bins; evidence of the heterogeneity in $\sigma_i^2 \mid M_i$ that motivates the discrete variant over \reglitrd{}.
}
\label{fig:proteomics}
\end{figure}

Table~\ref{tbl:realdata_significance} shows a consistent ordering across all
six pairwise contrasts: trended methods uniformly outperform their untrended
counterparts, and discrete \jtlitrd{} attains the most discoveries throughout.
Unlike the RNA-seq and ChIP-seq applications, here the untrended NPMLE makes
only marginally more discoveries than Untrended-Inv$\chi^2$, consistent with
Figure~\ref{fig:proteomics}(a), where the parametric Inv$\chi^2$ prior already
fits the empirical marginal of $\log(S_i^2)$ reasonably well. After trend
adjustment, both \reginvc{} and \reglitrd{} continue to fit the empirical
marginal of $\log(V_i^2)$ closely. The corresponding estimated priors are
shown in Figure~\ref{fig:proteomics}(b).

To interpret these results, we return to the mean-variance diagnostic in
Figure~\ref{fig:intro_trend}(c): $S_i^2$ is systematically larger for proteins
quantified from few peptides (small $M_i$). This was precisely the concern
of~\citet{zhu2020deqms}, who worried that methods ignoring the trend might
inflate false positives among low-peptide-count proteins.
Figure~\ref{fig:proteomics}(c) supports this concern: the untrended partially
Bayes methods make a disproportionately larger fraction of their discoveries
at small $M_i$, a pattern that the trended methods do not exhibit. The
well-specified model in~\eqref{eq:well-specified-limma-reg} does not appear to
hold for this dataset, however: proteins with smaller $M_i$ exhibit not only a
larger trend but also greater residual variability, as we document in
Figure~\bluelink{fig:proteomics_dignostic} of the Supplement. Under this
misspecification, type-I error guarantees for \reglitrd{} follow from
Section~\ref{sec:reg_ltrd_int}, but calibration conditional on $M_i$ is no
longer guaranteed. Discrete \jtlitrd{} addresses this by allowing the full
conditional distribution of $\sigma_i^2 \mid M_i$ to vary across $M_i$ bins.
Figure~\ref{fig:proteomics}(d) shows that the bin-wise estimated priors differ across $M_i$ levels in both location and shape, indicating
heterogeneity that a single trend with a shared residual prior cannot absorb.
Consistent with this, Figure~\bluelink{fig:proteomics_dignostic} in the
Supplement shows that the discrete \jtlitrd{} fit aligns more closely with
the data than \reglitrd{}.

\section{Frequentist guarantees}
\label{sec:compound_decision_theory}

We have throughout invoked a partially Bayes specification that treats the primary parameters $\theta_i$ as deterministic (as in a frequentist analysis), while the nuisance parameters follow an unknown distribution. For instance, for \jtlitrd{}, we assumed that 
\smash{$(\mu_i,\sigma_i^2) \sim H$} in~\eqref{eq:limma_trend_2d_prior}. We close by considering whether our results hold up when all parameters (including the nuisance parameters $\mu_i, \sigma_i^2$) are treated as deterministic. 
We conduct such a frequentist analysis in Supplement~\bluelink{sec:cmp_bayes_ltrd} and refer to~\cite{ignatiadis2025empirical} for more intuition on this setting, building on compound decision theory~\citep{zhang2003compound}. We show that FDR control is robust to the fully frequentist setting, while the conditional interpretation relies on the validity of the partially Bayes specification. More precisely:
\begin{itemize}[leftmargin=*,nosep, wide]
    \item Asymptotic FDR control (under critically dense signals) continues to hold when p-values are computed according to either Algorithm~\ref{algo:limma_trend_reg} or Algorithm~\ref{algo:limma_trend_jt}. However, FDR is controlled at level $\alpha$ and not at $\alpha n_0/n$. This means that one is not allowed to replace BH by a null-proportion adaptive method such as Storey's procedure~\citep{storey_2004}. 
    \item Under the partially Bayes framework, our proposed p-values were shown to be asymptotically uniform conditional on $(M_i, S_i^2)$, or on $V_i^2:= S_i^2/\xi^2_{\text{mis}}(M_i)$ in the misspecified setting. In the frequentist setting, the p-values do not satisfy such conditional uniformity, nor do they satisfy unconditional uniformity. Instead, they satisfy a weaker condition referred to as asymptotic compound p-values in~\citet{ignatiadis_ramdas} (also see~\citet{armstrong2022false}), that is, $\limsup_{n \to \infty} n^{-1} \sum_{i \in [n]: \theta_i=0} \mathbb P\big[\widehat{P}_i \leq t\big] \leq t$ for all $t\in [0,1]$, where \smash{$\widehat{P}_i$} can be either \smash{$\widehat{P}_i^{\trd}$} or  \smash{$\widehat{P}_i^{\jt}$}. 
\end{itemize}

\paragraph{Reproducibility.} All results in this paper are fully third-party reproducible with code under the following Github repository: \url{https://github.com/wyling01/limma-trend-partially-bayes}.

\paragraph{Acknowledgments.}
We thank Jeungju Kim and Jierui Zhu for helpful feedback on an earlier version of this manuscript.
N.I.\ gratefully acknowledges support from NSF (DMS 2443410).

\bibliography{limma_trend}

@article{weber2019diffcyt,
  title={diffcyt: Differential discovery in high-dimensional cytometry via high-resolution clustering},
  author={Weber, Lukas M and Nowicka, Malgorzata and Soneson, Charlotte and Robinson, Mark D},
  journal={Communications biology},
  volume={2},
  number={1},
  pages={183},
  year={2019},
  publisher={Nature Publishing Group UK London}
}

@article{goswami2018modulation,
  title={Modulation of EZH2 expression in T cells improves efficacy of anti--CTLA-4 therapy},
  author={Goswami, Sangeeta and Apostolou, Irina and Zhang, Jan and Skepner, Jill and Anandhan, Swetha and Zhang, Xuejun and Xiong, Liangwen and Trojer, Patrick and Aparicio, Ana and Subudhi, Sumit K and others},
  journal={The Journal of clinical investigation},
  volume={128},
  number={9},
  pages={3813--3818},
  year={2018},
  publisher={American Society for Clinical Investigation}
}

@article{tonkin2018plasmodium,
  title={The Plasmodium falciparum transcriptome in severe malaria reveals altered expression of genes involved in important processes including surface antigen--encoding var genes},
  author={Tonkin-Hill, Gerry Q and Trianty, Leily and Noviyanti, Rintis and Nguyen, Hanh HT and Sebayang, Boni F and Lampah, Daniel A and Marfurt, Jutta and Cobbold, Simon A and Rambhatla, Janavi S and McConville, Malcolm J and others},
  journal={PLoS Biology},
  volume={16},
  number={3},
  pages={e2004328},
  year={2018},
  publisher={Public Library of Science San Francisco, CA USA}
}

@article{tu2021manorm2,
  title={MAnorm2 for quantitatively comparing groups of ChIP-seq samples},
  author={Tu, Shiqi and Li, Mushan and Chen, Haojie and Tan, Fengxiang and Xu, Jian and Waxman, David J and Zhang, Yijing and Shao, Zhen},
  journal={Genome research},
  volume={31},
  number={1},
  pages={131--145},
  year={2021},
  publisher={Cold Spring Harbor Lab}
}

@article{zhu2020deqms,
  title={DEqMS: a method for accurate variance estimation in differential protein expression analysis},
  author={Zhu, Yafeng and Orre, Lukas M and Tran, Yan Zhou and Mermelekas, Georgios and Johansson, Henrik J and Malyutina, Alina and Anders, Simon and Lehti{\"o}, Janne},
  journal={Molecular \& Cellular Proteomics},
  volume={19},
  number={6},
  pages={1047--1057},
  year={2020},
  publisher={Elsevier}
}

@book{bracewell1999fourier,
  title     = {The Fourier Transform and Its Applications},
  author    = {Bracewell, Ronald N.},
  year      = {1999},
  edition   = {3rd},
  publisher = {McGraw-Hill},
  address   = {New York}
}

@article{ignatiadis_ramdas,
  title={Asymptotic and compound e-values: multiple testing and empirical {B}ayes},
  author={Ignatiadis, Nikolaos and Wang, Ruodu and Ramdas, Aaditya},
  journal={arXiv preprint arXiv:2409.19812},
  year={2024}
}

@article{zhang2003compound,
  title={Compound decision theory and empirical {B}ayes methods},
  author={Zhang, Cun-Hui},
  journal={Annals of Statistics},
  volume  = {31},
  number  = {2},
  pages={379--390},
  year={2003},
}

@article{armstrong2022false,
  title={False discovery rate adjustments for average significance level controlling tests},
  author={Armstrong, Timothy B},
  journal={arXiv preprint arXiv:2209.13686},
  year={2022}
}

@book{vershynin2025high,
  title={High-dimensional probability},
  author={Vershynin, Roman},
  year={2025},
  publisher={Cambridge University Press}
}

@article{DuZhang2014SIM,
  author  = {Du, Lilun and Zhang, Chunming},
  title   = {Single-Index Modulated Multiple Testing},
  journal = {The Annals of Statistics},
  year    = {2014},
  volume  = {42},
  number  = {4},
  pages   = {1262-1311},
}

@article{ferreira_zwinderman,
author = {J. A. Ferreira and A. H. Zwinderman},
title = {{On the Benjamini–Hochberg method}},
volume = {34},
journal = {The Annals of Statistics},
number = {4},
publisher = {Institute of Mathematical Statistics},
pages = {1827 -- 1849},
year = {2006},
}

@article{roquain_verzelen,
author = {Etienne Roquain and Nicolas Verzelen},
title = {{False discovery rate control with unknown null distribution: Is it possible to mimic the oracle?}},
volume = {50},
journal = {The Annals of Statistics},
number = {2},
publisher = {Institute of Mathematical Statistics},
pages = {1095 -- 1123},
year = {2022},
}

@article{storey_2004,
author = {Storey, John D. and Taylor, Jonathan E. and Siegmund, David},
title = {Strong control, conservative point estimation and simultaneous conservative consistency of false discovery rates: a unified approach},
journal = {Journal of the Royal Statistical Society: Series B},
volume = {66},
number = {1},
pages = {187-205},
year = {2004}
}

@book{billingsley1995probability,
  title={Probability and Measure},
  author={Billingsley, Patrick},
  year={1995},
  edition={3rd},
  publisher={John Wiley \& Sons},
  address={New York},
}

@book{10.1093/acprof:oso/9780199535255.001.0001,
    author = {Boucheron, Stéphane and Lugosi, Gábor and Massart, Pascal},
    title = {Concentration Inequalities: A Nonasymptotic Theory of Independence},
    publisher = {Oxford University Press},
    year = {2013},
}

@article{saha_guntu,
author = {Sujayam Saha and Adityanand Guntuboyina},
title = {{On the nonparametric maximum likelihood estimator for Gaussian location mixture densities with application to Gaussian denoising}},
volume = {48},
journal = {The Annals of Statistics},
number = {2},
publisher = {Institute of Mathematical Statistics},
pages = {738 -- 762},
year = {2020},
}

@article{keifer_npmle,
author = {J. Kiefer and J. Wolfowitz},
title = {{Consistency of the Maximum Likelihood Estimator in the Presence of Infinitely Many Incidental Parameters}},
volume = {27},
journal = {The Annals of Mathematical Statistics},
number = {4},
publisher = {Institute of Mathematical Statistics},
pages = {887 -- 906},
year = {1956},
}

@article{cressie1982useful,
  author = {Noel Cressie},
title = {{A Useful Empirical Bayes Identity}},
volume = {10},
journal = {The Annals of Statistics},
number = {2},
publisher = {Institute of Mathematical Statistics},
pages = {625 -- 629},
year = {1982},
}

@article{dyson1926method,
  author  = {Dyson, Frank W.},
  title   = {A Method for Correcting Series of Parallax Observations},
  journal = {Monthly Notices of the Royal Astronomical Society},
  volume  = {86},
  pages   = {686--706},
  year    = {1926},
}

@article{efron2011tweedie,
  title     = {Tweedie's Formula and Selection Bias},
  author    = {Efron, Bradley},
  journal   = {Journal of the American Statistical Association},
  volume    = {106},
  number    = {496},
  pages     = {1602--1614},
  year      = {2011},
  publisher = {Taylor \& Francis},
}

@article{law2014voom,
  title = {Voom: Precision Weights Unlock Linear Model Analysis Tools for {{RNA-seq}} Read Counts},
  shorttitle = {Voom},
  author = {Law, Charity W and Chen, Yunshun and Shi, Wei and Smyth, Gordon K},
  year = {2014},
  journal = {Genome Biology},
  volume = {15},
  number = {2},
  pages = {R29},
}

@article{GenoveseWasserman2002FDR,
  author  = {Genovese, Christopher and Wasserman, Larry},
  title   = {Operating characteristics and extensions of the false discovery rate procedure},
  journal = {Journal of the Royal Statistical Society: Series B},
  volume  = {64},
  number  = {3},
  pages   = {499--517},
  year    = {2002},
}

@article{chen2024empiricalbayesestimationprecision,
  title = {Empirical {{Bayes}} When Estimation Precision Predicts Parameters},
  author = {Chen, Jiafeng},
  year = {2026},
  journal = {Econometrica},
  volume = {94},
  number = {2},
  pages = {305--340},
}

@article{Zhang2009GMLE,
  author  = {Cun-Hui Zhang},
  title   = {Generalized Maximum Likelihood Estimation of Normal Mixture Densities},
  journal = {Statistica Sinica},
  year    = {2009},
  volume  = {19},
  number  = {3},
  pages   = {1297--1318}
}

@article{JiangZhang2009,
  author    = {Wenhua Jiang and Cun-Hui Zhang},
  title     = {General Maximum Likelihood Empirical Bayes Estimation of Normal Means},
  journal   = {The Annals of Statistics},
  volume    = {37},
  number    = {4},
  pages     = {1647--1684},
  year      = {2009},
  publisher = {Institute of Mathematical Statistics},
}

@article{sartor2006intensitybased,
  title = {Intensity-Based Hierarchical {{Bayes}} Method Improves Testing for Differentially Expressed Genes in Microarray Experiments},
  author = {Sartor, Maureen A and Tomlinson, Craig R and Wesselkamper, Scott C and Sivaganesan, Siva and Leikauf, George D and Medvedovic, Mario},
  year = {2006},
  journal = {BMC Bioinformatics},
  volume = {7},
  number = {1},
  pages = {538},
}

@article{messner2023proteomic,
  title = {The Proteomic Landscape of Genome-Wide Genetic Perturbations},
  author = {Messner, Christoph B. and Demichev, Vadim and Muenzner, Julia and Aulakh, Simran K. and Barthel, Natalie and R{\"o}hl, Annika and {Herrera-Dom{\'i}nguez}, Luc{\'i}a and Egger, Anna-Sophia and Kamrad, Stephan and Hou, Jing and Tan, Guihong and Lemke, Oliver and Calvani, Enrica and Szyrwiel, Lukasz and M{\"u}lleder, Michael and Lilley, Kathryn S. and Boone, Charles and Kustatscher, Georg and Ralser, Markus},
  year = {2023},
  journal = {Cell},
  volume = {186},
  number = {9},
  pages = {2018-2034.e21},
}

@article{phipson2014diffvar,
  title = {{{DiffVar}}: A New Method for Detecting Differential Variability with Application to Methylation in Cancer and Aging},
  shorttitle = {{{DiffVar}}},
  author = {Phipson, Belinda and Oshlack, Alicia},
  year = {2014},
  journal = {Genome Biology},
  volume = {15},
  number = {9},
  pages = {465},
}

@article{smyth2004linear,
  title = {Linear Models and Empirical {{Bayes}} Methods for Assessing Differential Expression in Microarray Experiments},
  author = {Smyth, Gordon K},
  journal = {Statistical Applications in Genetics and Molecular Biology},
  year    = {2004},
  volume  = {3},
  number  = {1},
  pages   = {Article 3},
}

@article{ignatiadis2025empirical,
  title = {Empirical Partially {{Bayes}} Multiple Testing and Compound {$\chi^2$} Decisions},
  author = {Ignatiadis, Nikolaos and Sen, Bodhisattva},
  year = {2025},
  journal = {The Annals of Statistics},
  volume = {53},
  number = {1},
  pages = {1--36},
}

@article{benjamini1995controlling,
  title = {Controlling the False Discovery Rate: A Practical and Powerful Approach to Multiple Testing},
  shorttitle = {Controlling the {{False Discovery Rate}}},
  author = {Benjamini, Yoav and Hochberg, Yosef},
  year = {1995},
  journal = {Journal of the Royal Statistical Society: Series B},
  volume = {57},
  number = {1},
  pages = {289--300},
}

@article{fan2010nonparametric,
  title = {Nonparametric Estimation of Genewise Variance for Microarray Data},
  author = {Fan, Jianqing and Feng, Yang and Niu, Yue S.},
  year = {2010},
  journal = {The Annals of Statistics},
  volume = {38},
  number = {5},
  pages = {2723--2750},
}

@article{carroll2008nonparametric,
  title = {Nonparametric Variance Estimation in the Analysis of Microarray Data: A Measurement Error Approach},
  shorttitle = {Nonparametric Variance Estimation in the Analysis of Microarray Data},
  author = {Carroll, R. J. and Wang, Y.},
  year = {2008},
  journal = {Biometrika},
  volume = {95},
  number = {2},
  pages = {437--449},
}

@article{wang2009variance,
  title = {Variance Estimation in the Analysis of Microarray Data},
  author = {Wang, Yuedong and Ma, Yanyuan and Carroll, Raymond J.},
  year = {2009},
  journal = {Journal of the Royal Statistical Society: Series B},
  volume = {71},
  number = {2},
  pages = {425--445},
}

@article{li2024robust,
  author = {Li, Mushan and Ma, Yanyuan},
title = {Robust estimation of mean-variance relation},
journal = {Statistics in Medicine},
volume = {43},
number = {2},
pages = {419-434},
year = {2024},
}

@article{mandel2013variance,
  title = {Variance Function Estimation in Quantitative Mass Spectrometry with Application to {{iTRAQ}} Labeling},
  author = {Mandel, Micha and Askenazi, Manor and Zhang, Yi and Marto, Jarrod A.},
  year = {2013},
  journal = {The Annals of Applied Statistics},
  volume = {7},
  number = {1},
  pages = {1--24},
}

@article{love2014moderated,
  title = {Moderated Estimation of Fold Change and Dispersion for {{RNA-seq}} Data with {{DESeq2}}},
  author = {Love, Michael I and Huber, Wolfgang and Anders, Simon},
  year = {2014},
  journal = {Genome Biology},
  volume = {15},
  number = {12},
  pages = {550},
}

@article{mccarthy2012differential,
  title = {Differential Expression Analysis of Multifactor {{RNA-Seq}} Experiments with Respect to Biological Variation},
  author = {McCarthy, Davis J. and Chen, Yunshun and Smyth, Gordon K.},
  year = {2012},
  journal = {Nucleic Acids Research},
  volume = {40},
  number = {10},
  pages = {4288--4297},
}

@article{townsend2025establishing,
  title = {Establishing an \emph{Ex Vivo} Porcine Skin Model to Investigate the Effects of Broad-Spectrum Antiseptic on Viable Skin Microbial Communities},
  author = {Townsend, E. C. and Xu, K. and De La Cruz, K. and Huang, L. and Sandstrom, S. and Meudt, J. J. and Shanmuganayagam, D. and Huttenlocher, A. and Gibson, A. L. F. and Kalan, L. R.},
  editor = {Young, Vincent B.},
  year = {2025},
  journal = {mSphere},
  volume = {10},
  number = {9},
  pages = {e00441-25},
}

@article{ritchie2015limma,
  title = {Limma Powers Differential Expression Analyses for {{RNA-sequencing}} and Microarray Studies},
  author = {Ritchie, Matthew E. and Phipson, Belinda and Wu, Di and Hu, Yifang and Law, Charity W. and Shi, Wei and Smyth, Gordon K.},
  year = {2015},
  journal = {Nucleic Acids Research},
  volume = {43},
  number = {7},
  pages = {e47},
}

@article{li2019map,
  title = {{{MAP}}: Model-Based Analysis of Proteomic Data to Detect Proteins with Significant Abundance Changes},
  shorttitle = {{{MAP}}},
  author = {Li, Mushan and Tu, Shiqi and Li, Zijia and Tan, Fengxiang and Liu, Jian and Wang, Qian and Zhang, Yuannyu and Xu, Jian and Zhang, Yijing and Zhou, Feng and Shao, Zhen},
  year = {2019},
  journal = {Cell Discovery},
  volume = {5},
  number = {1},
  pages = {40},
}

@article{cox1975note,
  title = {A Note on Partially {{Bayes}} Inference and the Linear Model},
  author = {Cox, D. R.},
  year = {1975},
  journal = {Biometrika},
  volume = {62},
  number = {3},
  pages = {651--654},
}

@manual{aps2020mosek,
  type = {Manual},
  title = {The {{MOSEK}} Optimization Suite Manual, Version 11.0},
  author = {{MOSEK ApS}},
  year = {2024},
}

@article{koenker2014convex,
  title = {Convex Optimization, Shape Constraints, Compound Decisions, and Empirical {{Bayes}} Rules},
  author = {Koenker, Roger and Mizera, Ivan},
  year = {2014},
  journal = {Journal of the American Statistical Association},
  volume = {109},
  number = {506},
  pages = {674--685}
}

@article{dicker2016highdimensional,
  title = {High-Dimensional Classification via Nonparametric Empirical {{Bayes}} and Maximum Likelihood Inference},
  author = {Dicker, Lee H. and Zhao, Sihai D.},
  year = {2016},
  journal = {Biometrika},
  volume = {103},
  number = {1},
  pages = {21--34}
}

@article{soloff2024multivariate,
  author  = {Soloff, Jake A. and Guntuboyina, Adityanand and Sen, Bodhisattva},
  title   = {Multivariate, heteroscedastic empirical {B}ayes via nonparametric maximum likelihood},
  journal = {Journal of the Royal Statistical Society: Series B},
  year    = {2025},
  volume  = {87},
  number  = {1},
  pages   = {1--32},
}

@article{fayiii1979estimates,
  title = {Estimates of Income for Small Places: An Application of {{James-Stein}} Procedures to Census Data},
  author = {Fay III, Robert E and Herriot, Roger A},
  year = {1979},
  journal = {Journal of the American Statistical Association},
  volume = {74},
  number = {366a},
  pages = {269--277},
}

@inproceedings{ignatiadis2019covariatepowered,
  title = {Covariate-Powered Empirical {{Bayes}} Estimation},
  booktitle = {Advances in Neural Information Processing Systems},
  author = {Ignatiadis, Nikolaos and Wager, Stefan},
  year = {2019},
  volume = {32},
}

@techreport{mccullagh1990note,
  title = {A Note on Partially {{Bayes}} Inference for Generalized Linear Models},
  author = {McCullagh, Peter},
  year = {1990},
  number = {284},
  address = {Chicago, Illinois, USA},
  institution = {Department of Statistics, University of Chicago}
}

@article{phipson2016robust,
  title = {Robust Hyperparameter Estimation Protects against Hypervariable Genes and Improves Power to Detect Differential Expression},
  author = {Phipson, Belinda and Lee, Stanley and Majewski, Ian J. and Alexander, Warren S. and Smyth, Gordon K.},
  year = {2016},
  journal = {The Annals of Applied Statistics},
  volume = {10},
  number = {2},
  pages = {946--963},
}

@article{lu2016variance,
  title = {Variance Adaptive Shrinkage ({\emph{Vash}}): Flexible Empirical {{Bayes}} Estimation of Variances},
  author = {Lu, Mengyin and Stephens, Matthew},
  journal = {Bioinformatics},
  year    = {2016},
  volume  = {32},
  number  = {22},
  pages   = {3428--3434},
}

@article{chen2025edger,
  title = {{{edgeR}} v4: Powerful Differential Analysis of Sequencing Data with Expanded Functionality and Improved Support for Small Counts and Larger Datasets},
  shorttitle = {{{edgeR}} V4},
  author = {Chen, Yunshun and Chen, Lizhong and Lun, Aaron~T L and Baldoni, Pedro~L and Smyth, Gordon~K},
  year = {2025},
  journal = {Nucleic Acids Research},
  volume = {53},
  number = {2},
  pages = {gkaf018},
}

@article{robbins1950generalization,
  title = {A Generalization of the Method of Maximum Likelihood: Estimating a Mixing Distribution (Abstract)},
  author = {Robbins, Herbert},
  year = {1950},
  journal = {The Annals of Mathematical Statistics},
  volume = {21},
  pages = {314--315}
}

@article{ho2025largescale,
  title = {Large-Scale Estimation under Unknown Heteroskedasticity},
  author = {Ho, Sheng Chao},
  year = {2025},
  journal = {arXiv preprint},
  volume = {arXiv:2507.02293}
}

@inproceedings{robbins1956empirical,
  title = {An Empirical {{Bayes}} Approach to Statistics},
  booktitle = {Proceedings of the {{Third Berkeley Symposium}} on {{Mathematical Statistics}} and {{Probability}}, {{Volume}} 1: {{Contributions}} to the {{Theory}} of {{Statistics}}},
  author = {Robbins, Herbert},
  year = {1956},
  pages = {157--163},
  publisher = {The Regents of the University of California}
}

@book{efron2010largescale,
  title = {Large-Scale Inference: Empirical {{Bayes}} Methods for Estimation, Testing, and Prediction},
  author = {Efron, Bradley},
  year = {2010},
  series = {Institute of {{Mathematical Statistics Monographs}}},
  publisher = {Cambridge University Press},
  address = {Cambridge},
}
\bibliographystyle{abbrvnat}

\appendix

\setcounter{equation}{0}
\setcounter{figure}{0}
\setcounter{table}{0}
\setcounter{proposition}{0}
\setcounter{algorithm}{0}

\renewcommand{\theequation}{S\arabic{equation}}
\renewcommand{\thefigure}{S\arabic{figure}}
\renewcommand{\thetable}{S\arabic{table}}
\renewcommand{\theproposition}{S\arabic{proposition}}
\renewcommand{\thetheorem}{S\arabic{theorem}}
\renewcommand{\thelemm}{S\arabic{lemm}}
\renewcommand{\thealgorithm}{S\arabic{algorithm}}

\section{Methodological extension: \reglitrd{} for discrete $M_i$}
\label{sec:discrete_MI}

A natural generalization beyond \reglitrd{} for discrete $M_i$ is to consider a bin-based prior estimate for $\sigma_i^2\mid M_i$. In particular, we assume that those units with similar $M_i$ value would also share similar $\sigma_i^2$. Instead of positing a prior on bivariate nuisances, we group units by the value of $M_i$, and estimate a separate prior for $\sigma_i^2$ within each group. Let $b(i)\in\{1,...,B\}$ denote the index of the bin to which unit $i$ is assigned, where the binning is induced by $M_i$. In addition to \eqref{eq:bayesian_cs_model_2}, we further assume that within each bin $b$,
$$
\sigma_i^2 \mid b(i)=b \sim H_b,\; \text{ and } \theta_i \text{ are deterministic}.
$$
The corresponding oracle partially Bayes p-values that use the exact knowledge of $\{H_b\}$ are defined as, 
\begin{align}
\label{eq:2d_pb_pvalue_discrete_case}
P_i^{\djt} = \frac{\int_{0}^{\infty}2\Phi(-\frac{|Z_i|}{\nu\sigma_i})p_{\chi^2}(S_i^2\mid K-p,\sigma_i^2)H_{b(i)}(\dd\sigma_i^2)}{\int_{0}^{\infty}p_{\chi^2}(S_i^2\mid K-p,\sigma_i^2)H_{b(i)}(\dd\sigma_i^2)} \qquad \mbox{for $i \in [n]$.}
\end{align}
For each bin $b$, we estimate $H_b$ from the data within that bin by NPMLE using the marginal distribution of the statistics $\{S_i^2:b(i)=b\}$. In that direction, for any mixing measure $H^{\prime}_b$, define 
\[f_{H_b^{\prime},K-p}(s^2)
    = \int_0^\infty 
        p_{\chi^2}(s^2\mid K-p,\sigma^2)\, H_b^{\prime}(\dd\sigma^2).\]
We can then estimate $H_b$ by solving the following optimization problem
\begin{align}
    \label{eq:joint_npmle_discrete}
    \wh H_b \in \argmax_{\rmH_b \in \mathcal G_{H_B}} \frac{1}{\#\{i:b(i)=b\}}\sum_{i:b(i)=b}\log f_{\rmH_b,K-p}(S^2_i),
\end{align}
where $\mathcal{G}_{H_B} :=\{\rmH_b: \rmH_b(\underline{L},\wb{U})=1\}$ for the same absolute constants $\underline{L},\wb{U}>0$ used in \eqref{eq:joint_npmle}. This problem can be solved in essentially the same way as in \eqref{eq:def_small_ell}. Then a plug-in estimate $\wh P^{\djt} _i$ of the oracle p-values $P^{\djt} _i$ are constructed by replacing each $H_b$ with $\wh H_b$ in \eqref{eq:2d_pb_pvalue_discrete_case}. The full procedure is summarized in Algorithm \ref{algo:limma_trend_jt_discrete}.
\begin{algorithm}[H]
\caption{\limmatrd{} generlization with discrete $M_i$ (\jtlitrd{})}
\begin{algorithmic}[1]
\State Partition the units into $B$ bins according to the value of $M_i$, and construct bin index $b(i)$.
\State Estimate $H_b$ using $\wh H_b$ through NPMLE (cf. \eqref{eq:joint_npmle_discrete}) for each $b=1,...,B$.
\State Compute the p-values $\wh P^\djt_i$ using \eqref{eq:2d_pb_pvalue_discrete_case} by replacing $H_{b(i)}$ with $\wh H_{b(i)}$. 
\end{algorithmic}
\label{algo:limma_trend_jt_discrete}
\end{algorithm}

\section{Implementation details}
\label{sec:det_npmle}
\subsection{Estimation of $\wh m(\cdot)$ in Algorithms~\ref{algo:limma_trend} and~\ref{algo:limma_trend_reg}}
\label{subsec:reg_trend}
We adopt the trend estimation procedure from the implementation of \limmatrd{} in \texttt{R} \citep{ritchie2015limma}. We model $y=\log{(S_i^2)}$ as a smooth function of $A_i$ using a natural cubic spline basis. The spline degrees of freedom are chosen adaptively according to the number of observations and capped by the number of distinct $A_i$ values. In particular, let $n$ be the number of units in $y$ and $U_A$ be the number of distinct observed $A_i$ values. The spline degrees of freedom are chosen as
\[
\nu=\min\left\{1+\mathbf{1}(n\ge 3)+\mathbf{1}(n\ge 6)+\mathbf{1}(n\ge 30),\ U_A\right\}.
\]
If $\nu<2$, we use the constant fit $\wh m(a)\equiv \wb y$; otherwise, we fit a natural spline with $\nu$ degrees of freedom by least squares and use the resulting fitted curve as $\wh m(\cdot)$.
\subsection{Discretization for NPMLE computation in \reglitrd{}}
\label{sec:modek_reg_npmle}
As observed before, the optimization program in \eqref{eq:def_small_ell} is convex. In our implementation, we use the interior point convex programming solver MOSEK \citep{aps2020mosek} with the discretization technique proposed by \citet{koenker2014convex} to solve it. In particular, we choose $B=300$ grid points that are logarithmically spaced between the $1\%$ quantile and the largest value of $\{\wh V_i^2,...,\wh V_n^2\}$. We optimize \eqref{eq:def_small_ell} over all distributions supported on this finite grid. This is a conic programming problem and the solution is computed using MOSEK. 

\subsection{Discretization for NPMLE computation in \jtlitrd{}}
\label{subsec:discretization_twod}
To compute \eqref{eq:joint_npmle}, we restrict attention to distributions supported on a finite two-dimensional grid and optimize only over the corresponding probability masses. We adopt the following discretization scheme. We partition $A_i$ into $B=50$ bins, which are constructed deterministically from the order statistics of $A_i$, yielding $B$ consecutive intervals with roughly equal counts.

For each bin $b$, let $I_b$ be the set of indices in that bin and let $\mathcal{M}_b:=\{A_i:i\in I_b\}$ be the set of observed $A_i$-values in the same bin. We use $\mathcal{M}_b$ as the grid points for $\mu$. Next, letting $\wh m(\cdot)$ be the trend fitted in Algorithm~\ref{algo:limma_trend}, we define the residuals $r_i:=\log S_i^2 - \wh m (A_i)$ for $i=1,...,n$ and construct a common residual grid $\mathcal{R}_{p_v}=\{r^{(1)},...,r^{(p_v)}\}$ using $p_v=50$ equally spaced points between $1\%$ quantile and the maximum of $\{r_1,...,r_n\}$. For each bin, let $u_b$ be the median of $A_i$'s within the bin $b$. We define the bin-specific trend-centered grid for $\sigma^2$ given by $\mathcal{V}_b=\{\exp{\left(\wh m(u_b)+r\right)}:r\in \mathcal{R}_{p_v}\}$. 

For each pair $(b,v)$ with $b\in[B]$ and $v\in \mathcal{V}_b$, the bin-level likelihood contribution for each data unit $i$ is defined as
$$
L_{i,b,v} = p_{\chi^2}(S_i^2 \mid K-p,\sigma^2=v)\cdot \frac{1}{|\mathcal{M}_b|}\sum_{\mu\in\mathcal{M}_b}\phi\left(A_i;\mu,\frac{v}{K}\right),
$$
where $\phi(x;\mu,\sigma^2)$ denotes the density of a univariate Gaussian distribution with mean $\mu$ and variance $\sigma^2$ evaluated at $x$, and optimize over probability masses $\{f_{b,v}\}$ by solving
$$
\max_{\{f_{b,v}\}}\frac{1}{n}\sum_{i=1}^n\log\left(\sum_{b=1}^B\sum_{v\in\mathcal{V}_b}f_{b,v}L_{i,b,v}\right), \quad f_{b,v}\geq0,\quad \sum_{b=1}^B\sum_{v\in\mathcal{V}_b}f_{b,v}=1.$$
This is a finite-dimensional convex optimization problem that can be solved numerically using MOSEK. After optimization, each fitted bin-level mass $\wh{f}_{b,v}$ is mapped back to point masses on $(\mu,\sigma^2)$ by distributing each $\wh{f}_{b,v}$ equally over the atoms in $\mathcal{M}_b\times \{v\}$. That is, for every $\mu \in \mathcal{M}_b$, we assign mass $\wh{f}_{b,v}/|\mathcal{M}_b|$ to the point $(\mu,v)$. Therefore, the final discretized estimator is supported on 
$$
\mathcal S_\jt:=\bigcup_{b=1}^B\left\{(\mu,v):\,\mu \in \mathcal{M}_b\;\text{and}\; v \in \mathcal V_b\right\}.
$$
Note that the choice of $B$, the size of the residual grid $p_v$, and the truncation quantiles for constructing the common residual grid $\mathcal{R}$ are tuning parameters and can be adjusted to accommodate different datasets and computational budgets. We ignore the discretization error in the theoretical development.

\subsection{Implementation for baseline methods in Section~\ref{sec:numerical_exp}}
\label{subsec:implementation_baseline}

%\fixme{call methods according to their name from main text/tables?}
\begin{itemize}[leftmargin=*,nosep,wide] 
\item \emph{Parametric \texttt{Limma} (Untrended-Inv$\chi^2$) and \reglitrd{}.} 
As described in Section~\ref{subsec:background} and Section~\ref{subsec:limma_trd_interpretation}, we will need to estimate $\kappa_0$ and $s_0^2$ in \eqref{eq:exchangeable_sigma} and \eqref{eq:reg-model-trnd} for Untrended-Inv$\chi^2$ and \reglitrd{} respectively. Indeed, we use the methods of moments that is provided in \cite{smyth2004linear}, and also the build in method in the \texttt{limma} package to estimate $(\kappa_0, s_0^2)$ for both Untrended-Inv$\chi^2$ and \reglitrd{}. In particular, for Untrended-Inv$\chi^2$, we estimate it by the following procedure (denote $d=K-p$):
$$
  \begin{aligned}
    z_i &= \log{S_i^2}, \\
    e_i &= z_i - \psi(d/2) + \log{(d/2)}, \\
    \psi'(\wh{\kappa}_0/2) &= \operatorname{mean}\{(e_i-\bar e)^2n/(n-1) - \psi^{'}(d/2)\}, \\
    \wh{s}_0^2 &= \exp{(\text{mean}(e)+\psi(\wh{\kappa}_0/2)-\log(\wh{\kappa}_0/2))},
    \end{aligned}
$$
where $\psi(\cdot)$ and $\psi'(\cdot)$ is the diagamma function and trigamma function respectively. Similarly for \reglitrd{}, we estimate it as follows:
$$
\begin{aligned}
z_i &= \log S_i^2,\\
r_i &= z_i-\widehat m(A_i),\\
\psi'(\wh\kappa_0/2)
&= \operatorname{mean}\{(r_i-\bar r)^2\}n/(n-\nu)
   - \psi'(d/2),\\
\wh s_0^2
&=
\exp\left\{
\psi(\wh\kappa_0/2)-\log(\wh\kappa_0/2)
-\psi(d/2)+\log(d/2)
\right\},
\end{aligned}
$$
where $\wh{m}$ is the trend function estimated in Supplement~\ref{subsec:reg_trend}, and $\nu$ is the degrees of freedom of the spline that is used for estimating $\wh m(\cdot)$.

\item \emph{Untrended-NPMLE.} We refer to \cite{ignatiadis2025empirical} for the description of the method. The prior $G$ is estimated with NPMLE using similar techniques in Supplement~\ref{sec:modek_reg_npmle}, with the optimization function being replaced as 
$$
\widehat{G}_{\limma{}}
   \in \argmax_{G  \in \mathcal G_{\limma{}}}
    \frac{1}{n}\sum_{i=1}^n
        \log \left\{
\int_0^\infty
p_{\chi^2}\!\left(S_i^2 \mid K-p,\sigma^2\right)
G(\dd \sigma^2)
\right\},
$$
where $\mathcal G_{\limma{}}$ is the nonparametric class of distributions considered in \cite{ignatiadis2025empirical}, and we choose $B=300$ optimization grid points that are logarithmically spaced between the $1\%$ quantile and the largest value of $\{S_1^2,...,S_n^2\}$.
\item \emph{MAnorm2~\citep{tu2021manorm2}.} We use the built-in function provided by the R package \texttt{MAnorm2}~\citep{tu2021manorm2}.  For the two-sample comparison setting of Example~\ref{ex:two_sample}, we first construct \smash{$\wb Y_{iA}$}, \smash{$\wh{S}_{iA}^2$}, \smash{$\wb Y_{iB}$}, \smash{$\wh{S}_{iB}^2$}, where $\wb Y_{iA}$ and $\wb Y_{iB}$ denote the treated/control group means, \smash{$\wh{S}_{iA}^2$} and \smash{$\wh{S}_{iB}^2$} denote the treated/control group sample variances for unit $i$. The function \texttt{fitMeanVarCurve} estimates a mean-variance curve 
\smash{$\widehat\xi_{\mathrm{MA}}(\cdot)$} from the pooled collection of above pairs \smash{$\{(\wb Y_{iA},\widehat S_{iA}^2)\}_{i=1}^n$}, \smash{$\{(\wb Y_{iB},\widehat S_{iB}^2)\}_{i=1}^n$} using robust gamma family local regression. MAnorm2 then estimates the prior degrees of freedom $d_0$ by matching the empirical variability of the normalized within-group variances to its theoretical counterpart. Specifically, for $g\in\{A,B\}$, define
$$
z_{ig}
=
\log\left\{
\frac{\widehat S_{ig}^2}
{\widehat\xi_{\mathrm{MA}}(\wb Y_{ig})}
\right\},
\quad
\nu_g=K_g-1, \quad \text{and assume working prior }  \frac{1}{\sigma_{ig}^2} \sim \frac{\chi_{d_0}^2}{\wh\xi_{\mathrm{MA}}(\mu_{ig})d_0}.
$$
Under the working scaled Inv$\chi^2$ model, $z_{ig}$ is approximately distributed as $\log F_{\nu_g,d_0}$, so that
$$
\mathrm{Var}(z_{ig})
\approx
\psi_1\left(\frac{\nu_g}{2}\right)
+
\psi_1\left(\frac{d_0}{2}\right),
$$
where $\psi_1$ is the trigamma function. Thus \texttt{estimatePriorDf} estimates $d_0$ by matching the observed sample variances of $z_{iA}$ and $z_{iB}$ to this expression. After this, MAnorm2 forms the moderated variance and the corresponding statistics
$$
\widetilde S_i^2
=
\frac{
\wh d_0\,\wh\xi_{\mathrm{MA}}(\wb Y_{i,\mathrm{avg}})
+
(K_A+K_B-2)\wh S_{i,\mathrm{pool}}^2
}{
\wh d_0+K_A+K_B-2
},\quad T_i^{\mathrm{MA}}
=
\frac{\wb Y_{iB}-\wb Y_{iA}}
{
\sqrt{
\left(K_A^{-1}+K_B^{-1}\right)\widetilde S_i^2
}
},
$$
with $Y_{i,\mathrm{avg}}=(\wb Y_{iA}+\wb Y_{iB})/2$ and $\widehat S_{i,\mathrm{pool}}^2=\{(K_A-1)\widehat S_{iA}^2
+
(K_B-1)\widehat S_{iB}^2\}/(K_A+K_B-2)$, and then the p-value is constructed as 
$
2\wb F_{t_{\widehat d_0+K_A+K_B-2}}(
|T_i^{\mathrm{MA}}|
)$. %\fixme{Use notation from introduction with $F_{t...}$}

\item \emph{MAP~\citep{li2019map}.} To facilitate the comparison, we replace the built-in trend estimation procedure of MAP with the procedure used in the other methods (described in Supplement~\ref{subsec:reg_trend}). With the estimated $\wh m(\cdot)$, we calculate the p-value as \smash{$2\Phi(-|Z_i|/\{\nu \exp(\widehat{m}(M_i))\})$}. 
\end{itemize}

\section{Additional numerical results in Section~\ref{sec:numerical_exp}}
\label{sec:numeric_additional}
This section provides additional numerical results that extend the first experiment in Section~\ref{sec:numerical_exp}. We use a balanced design with $K_A=K_B$, and take $K\in\{4,6,10,18\}$. We consider three choices for $G$: (i) $G=\delta_1$ (referred to as \texttt{Dirac}), (ii) $G=10\times\mathrm{Inv}\chi^2_{10}$ (\texttt{Scaled Inv$\chi^2$}), and (iii) $G=0.5\,\delta_1+0.5\,\delta_{10}$ (\texttt{Two-point}).
We also consider two choices of the mean–variance trend function $m(\cdot)$: (1)  Type 1 (without mean–variance trend): $\exp{\left(m(\mu_i)\right)}=1$, and (2)  Type 2 (with mean–variance trend): $m(\mu_{i})=-4\,\mathrm{logistic}\,((\mu_i-16)/4)+12$, where $\mathrm{logistic}(y)=(1+e^{-y})^{-1}$. 

Results are summarized in Figure~\ref{fig:simulation_result}. We omit the MAP results from this figure, as it does not control the FDR in any of the simulation settings considered. Since the simulations are based on a balanced design, MAnorm2 performs very similarly to Reg-Inv$\chi^2$ throughout these settings. The upper panel corresponds to $m$ coming from Type 1 and the lower panel from Type 2 (these types are defined above). For $m$ belonging to Type 1, for \texttt{Dirac} and \texttt{Scaled Inv$\chi^2$} priors, we observe that the FDR is controlled at the correct nominal level across all degrees of freedom. However, for the \texttt{Two-point} prior, the prior specification in \reginvc{} and \limma{} is wrong, resulting in inflated FDR for $K-p=4$, as shown in Section~\ref{sec:numerical_exp}. Across all the settings, we observe that the partially Bayes methods outperform t-tests in terms of power, demonstrating the benefit of pooling information across units. The impact of prior misspecification on power is evident in the third column of the upper panel, where the NPMLE-based procedures outperform those depending on parametric priors. The best performance is given by \jtlitrd{} and NPMLE-based \limma{} from \citep{ignatiadis2025empirical}, whereas \reglitrd{} slightly suffers due to the trend estimation error. Since there is no trend in the oracle model, \jtlitrd{} does not exhibit an advantage. As the degrees of freedom increase, the performance of all the methods becomes comparable. Next, when $m$ comes from Type 2 with non-constant mean-variance trend, the FDR of all the procedures behaves similarly as the previous setting. However, in this framework, \reglitrd{} and \jtlitrd{} outperform all methods ignoring the mean variance trend. Furthermore, the performance of \reglitrd{} is slightly worse compared to \jtlitrd{}. Again, the performance gap disappears as $K$ increases.

\begin{figure}[!htbp]
\centering
\setlength{\tabcolsep}{0pt}

\begin{tabular}{@{}c@{\hspace{0.03\textwidth}}c@{\hspace{0.03\textwidth}}c@{}}
\multicolumn{3}{@{}l@{}}{(a) Type 1 (without mean-variance trend):}\\[0.8em]

\includegraphics[width=0.28\textwidth]{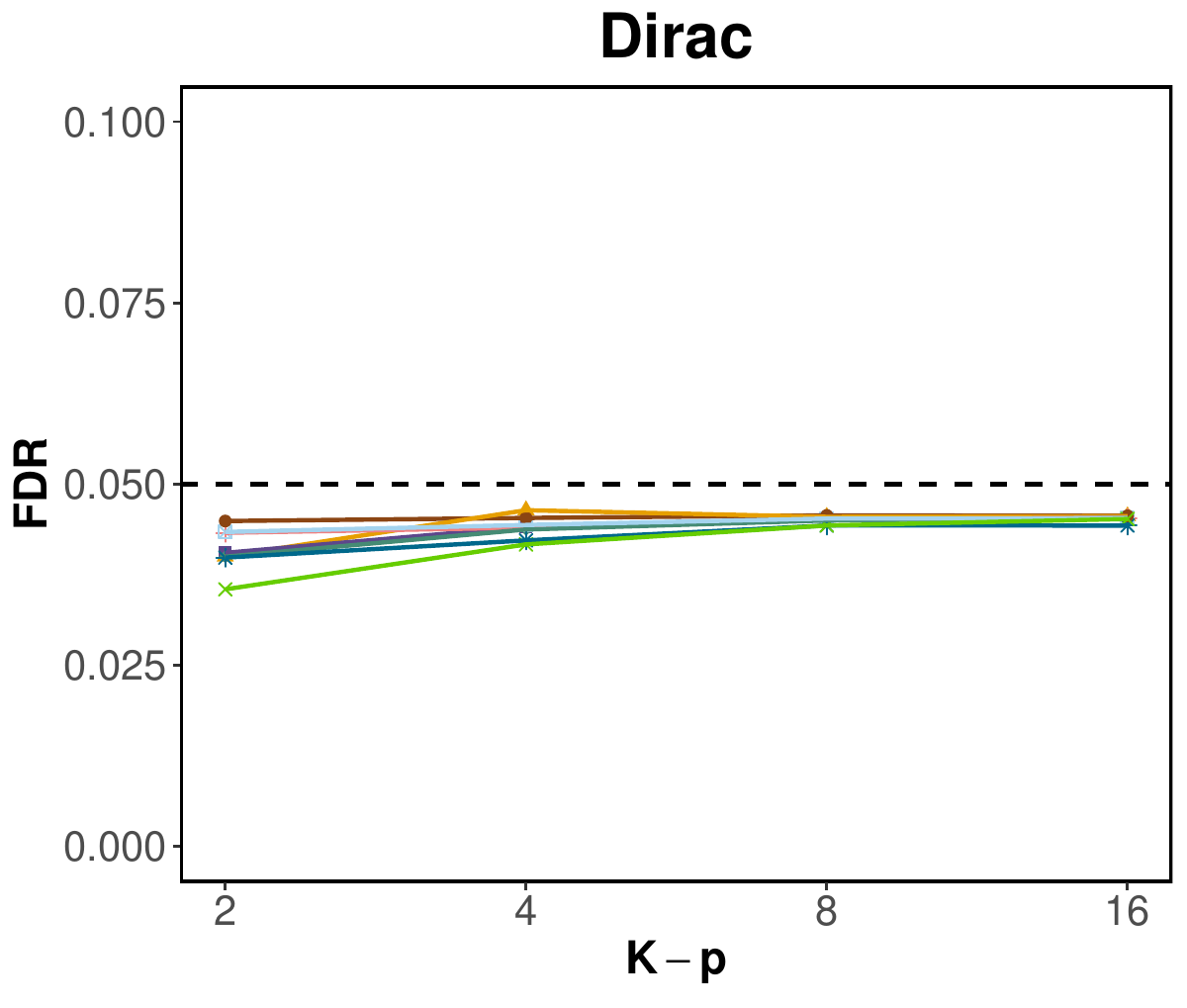} &
\includegraphics[width=0.28\textwidth]{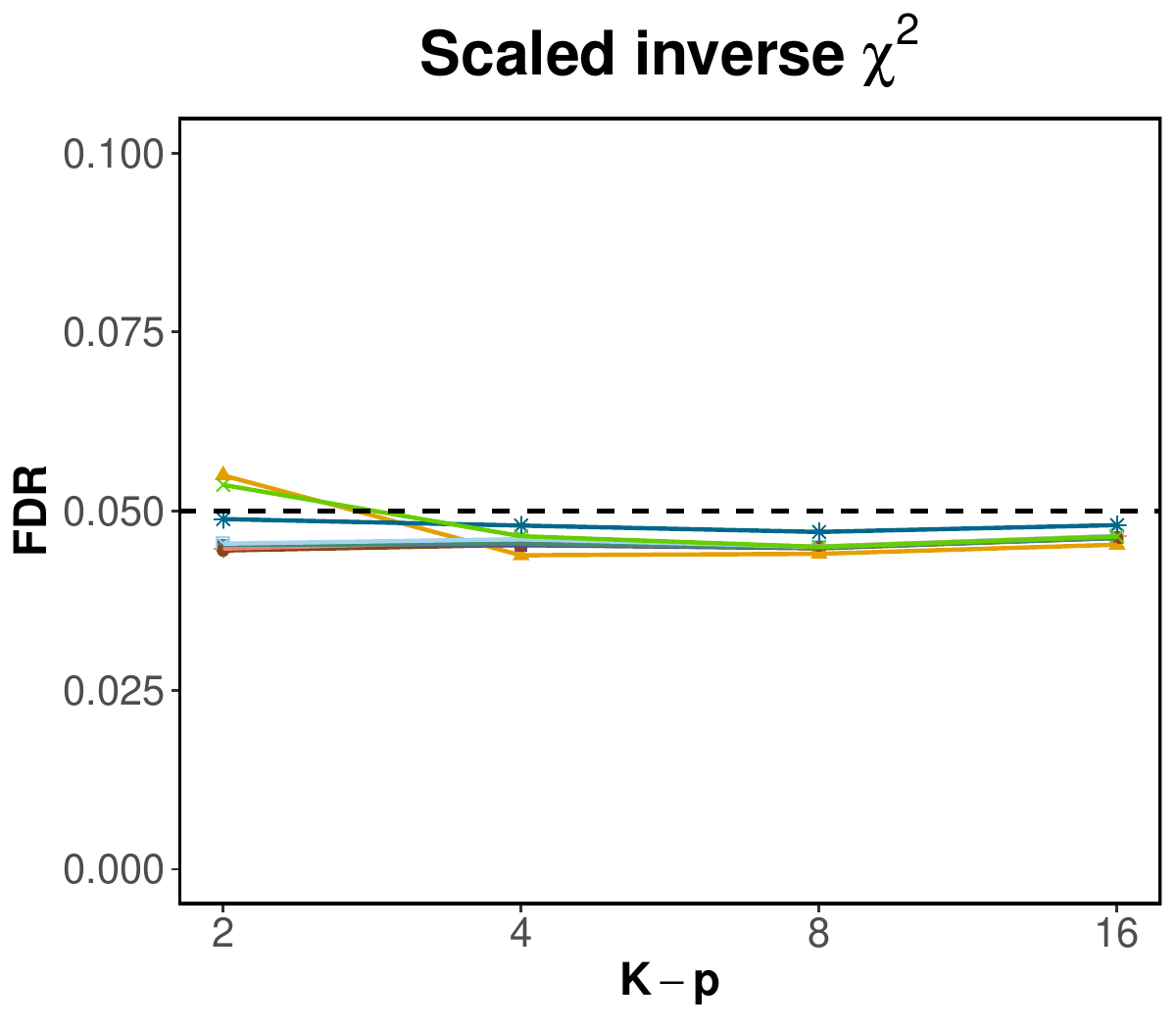} &
\includegraphics[width=0.28\textwidth]{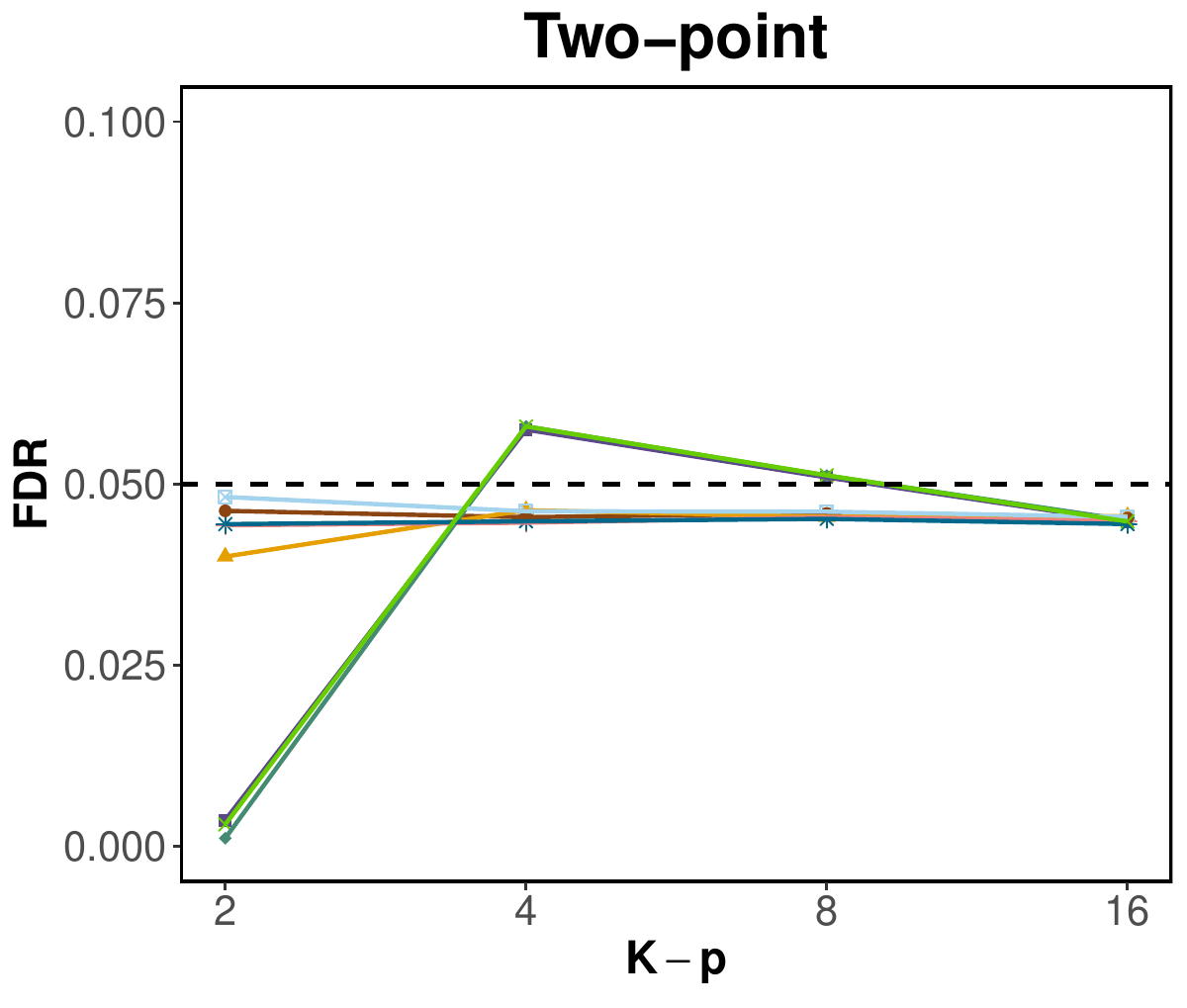} \\[0.4em]

\includegraphics[width=0.28\textwidth]{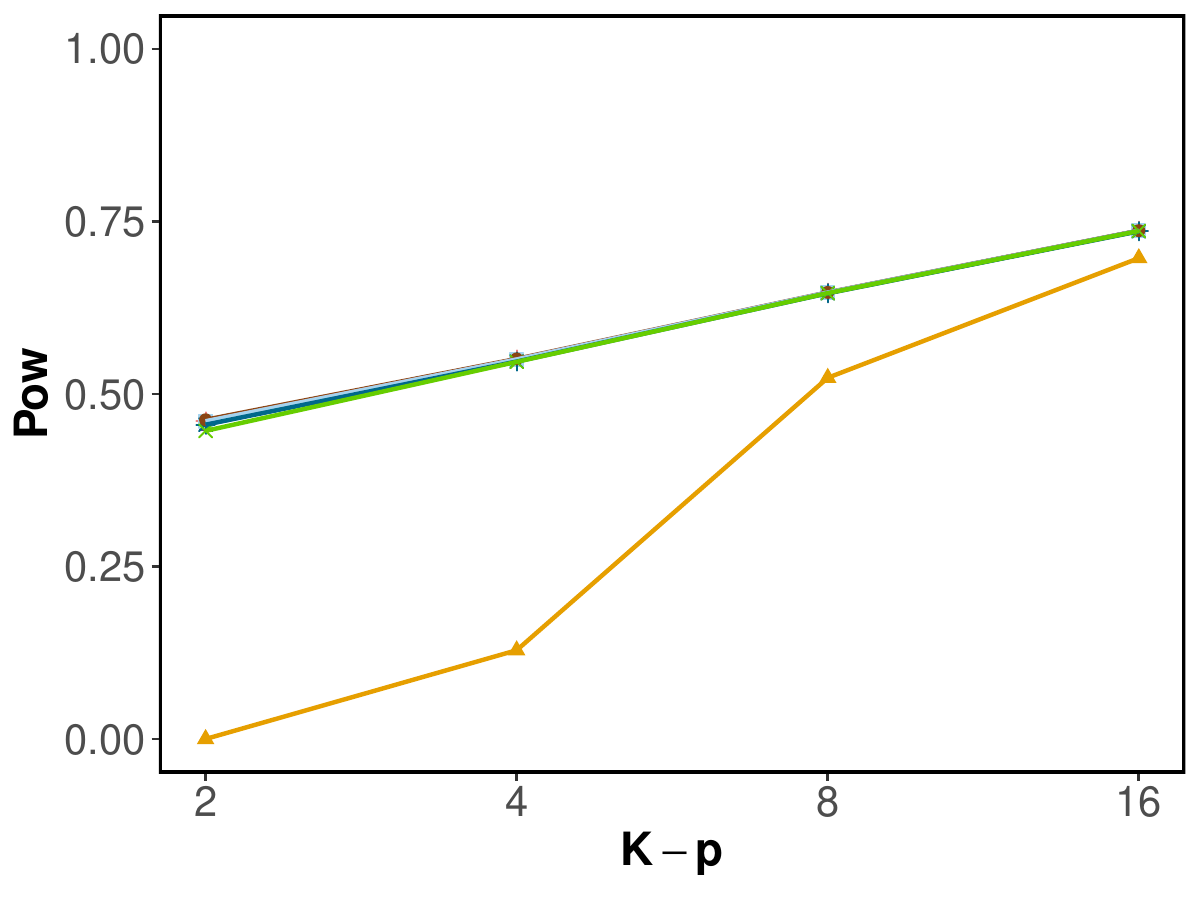} &
\includegraphics[width=0.28\textwidth]{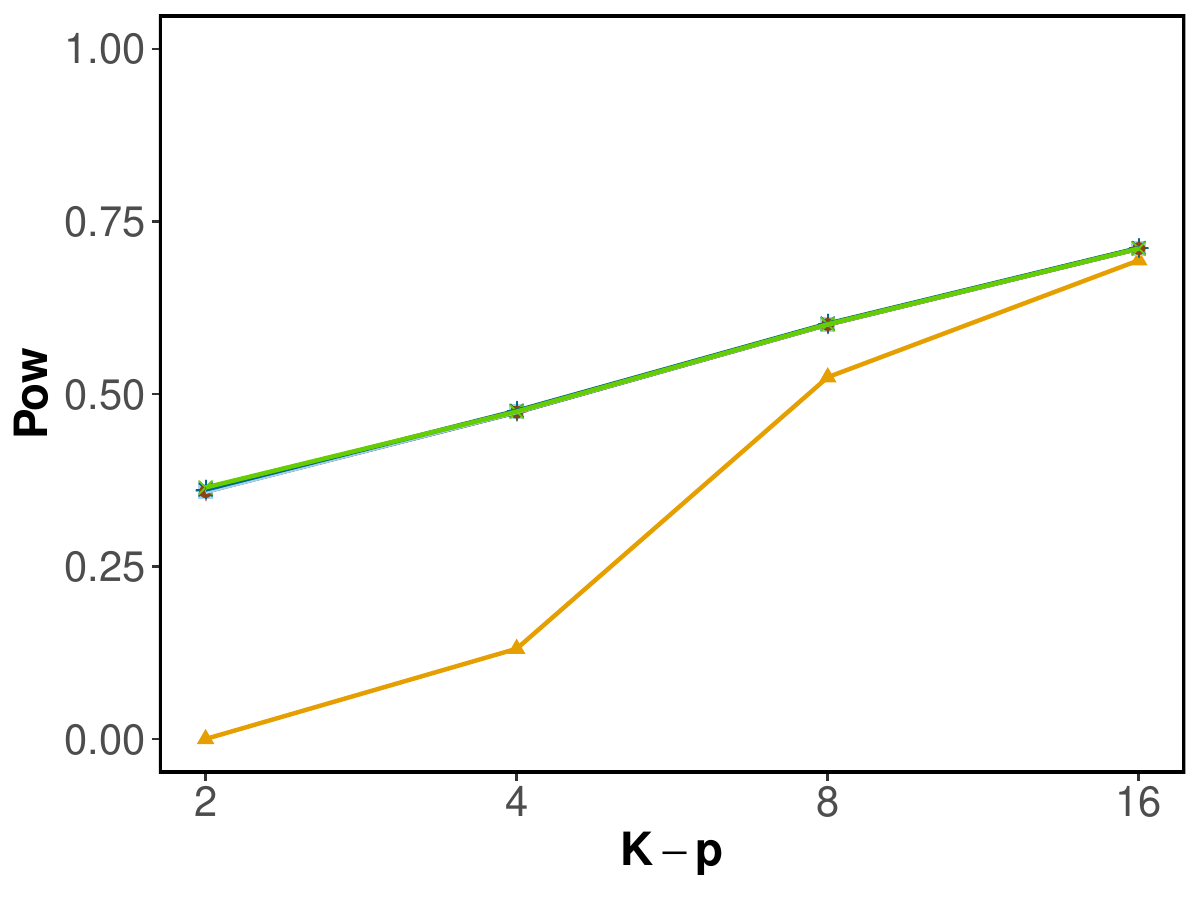} &
\includegraphics[width=0.28\textwidth]{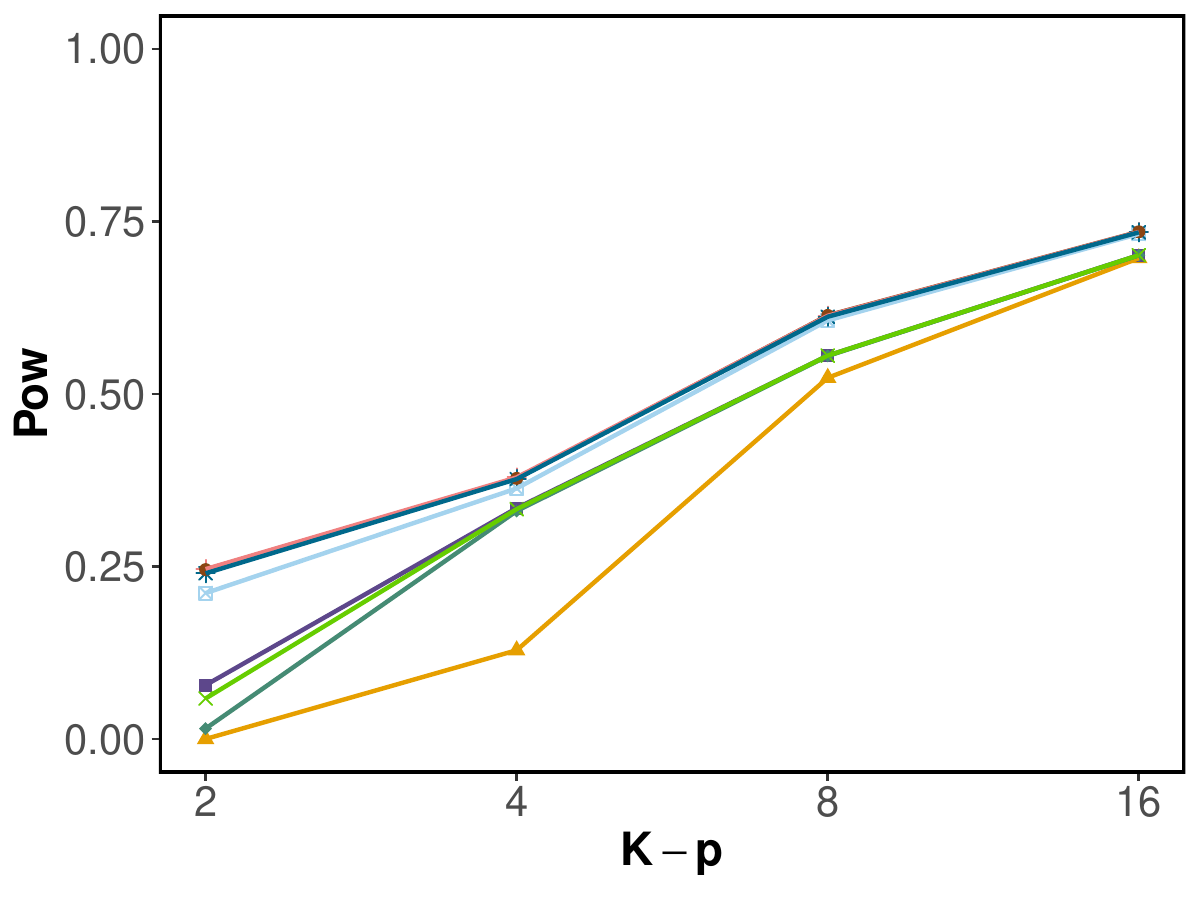} \\[1.0em]

\multicolumn{3}{@{}l@{}}{(b) Type 2 (with mean-variance trend):}\\[0.8em]

\includegraphics[width=0.28\textwidth]{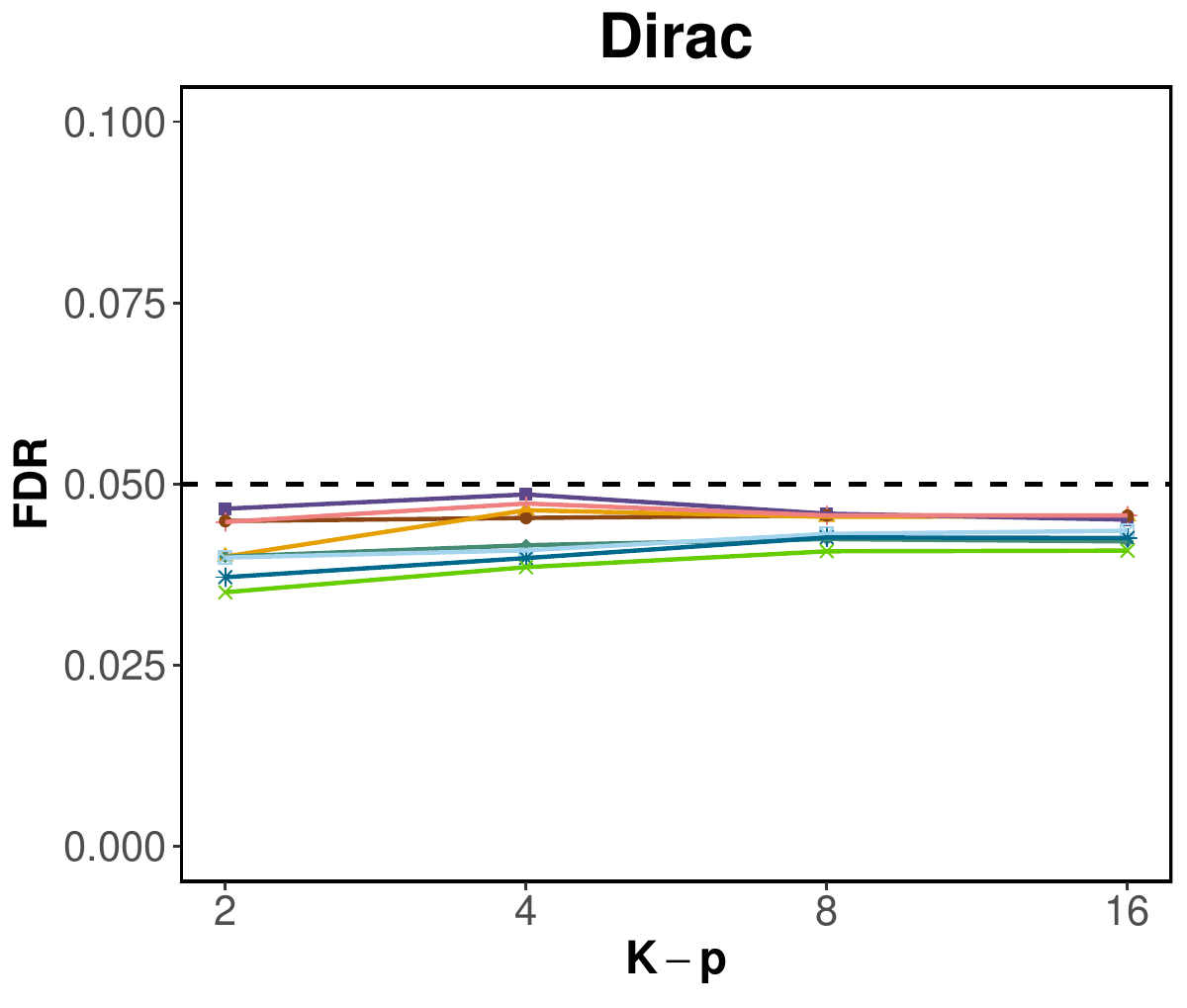} &
\includegraphics[width=0.28\textwidth]{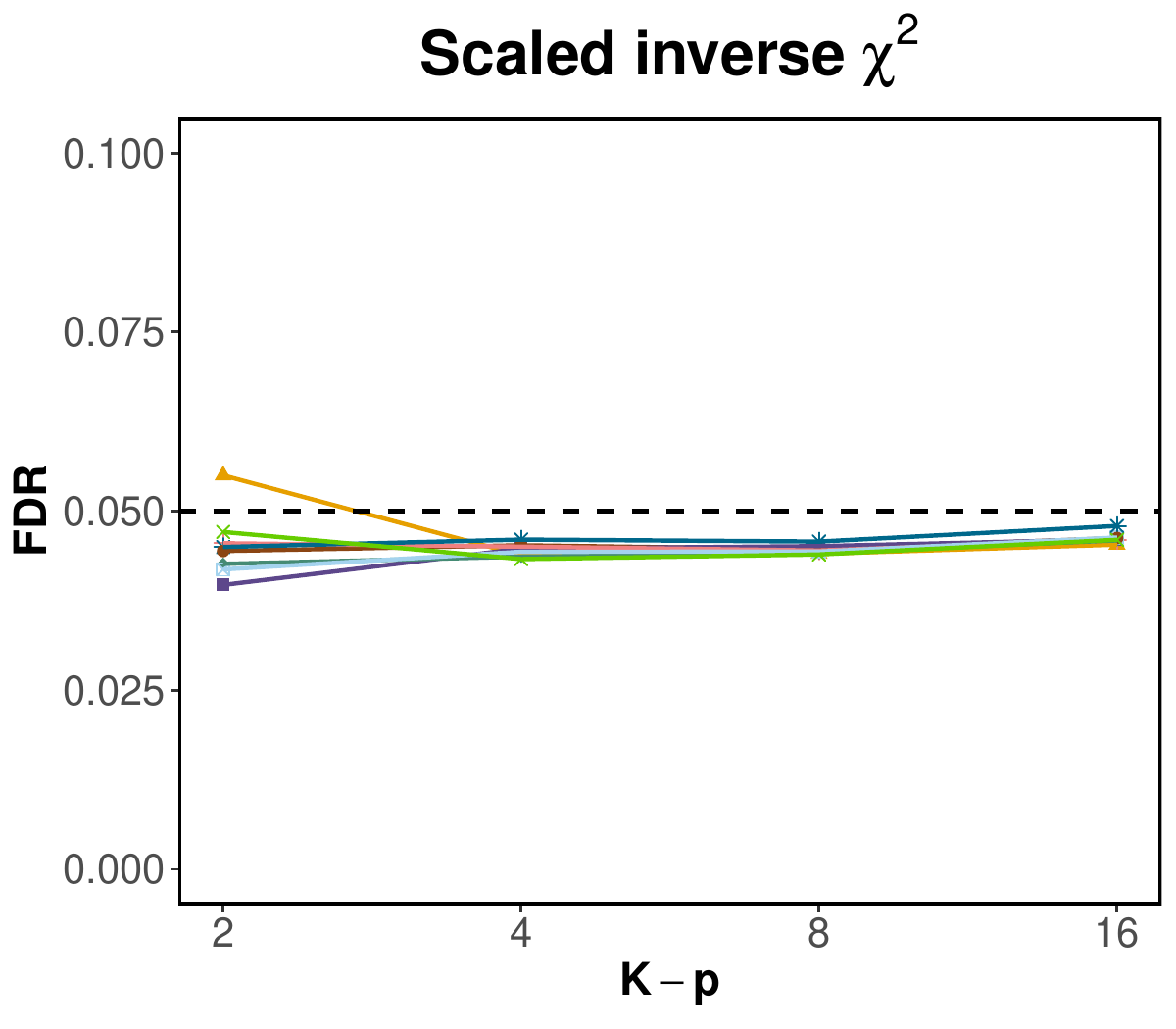} &
\includegraphics[width=0.28\textwidth]{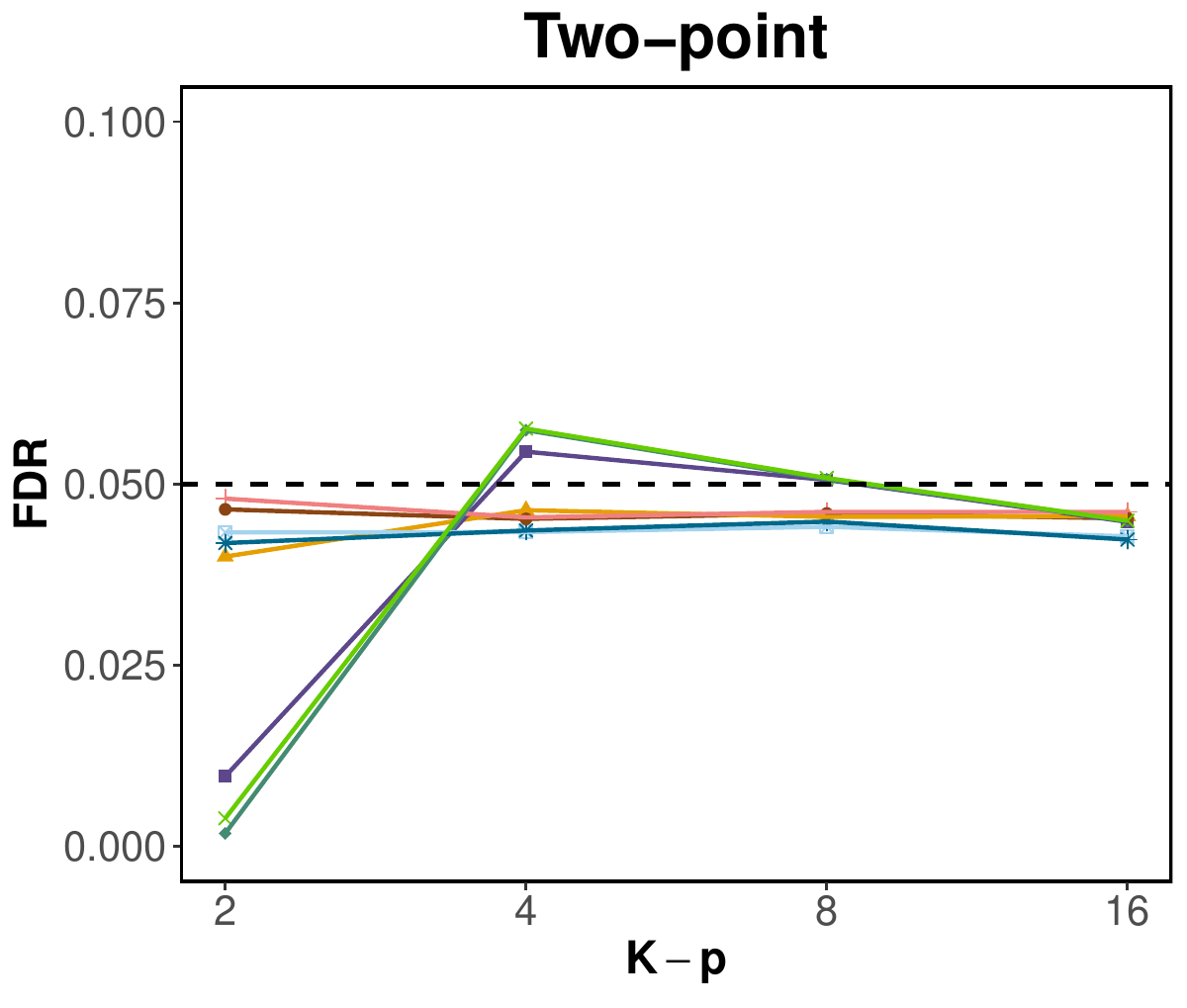} \\[0.4em]

\includegraphics[width=0.28\textwidth]{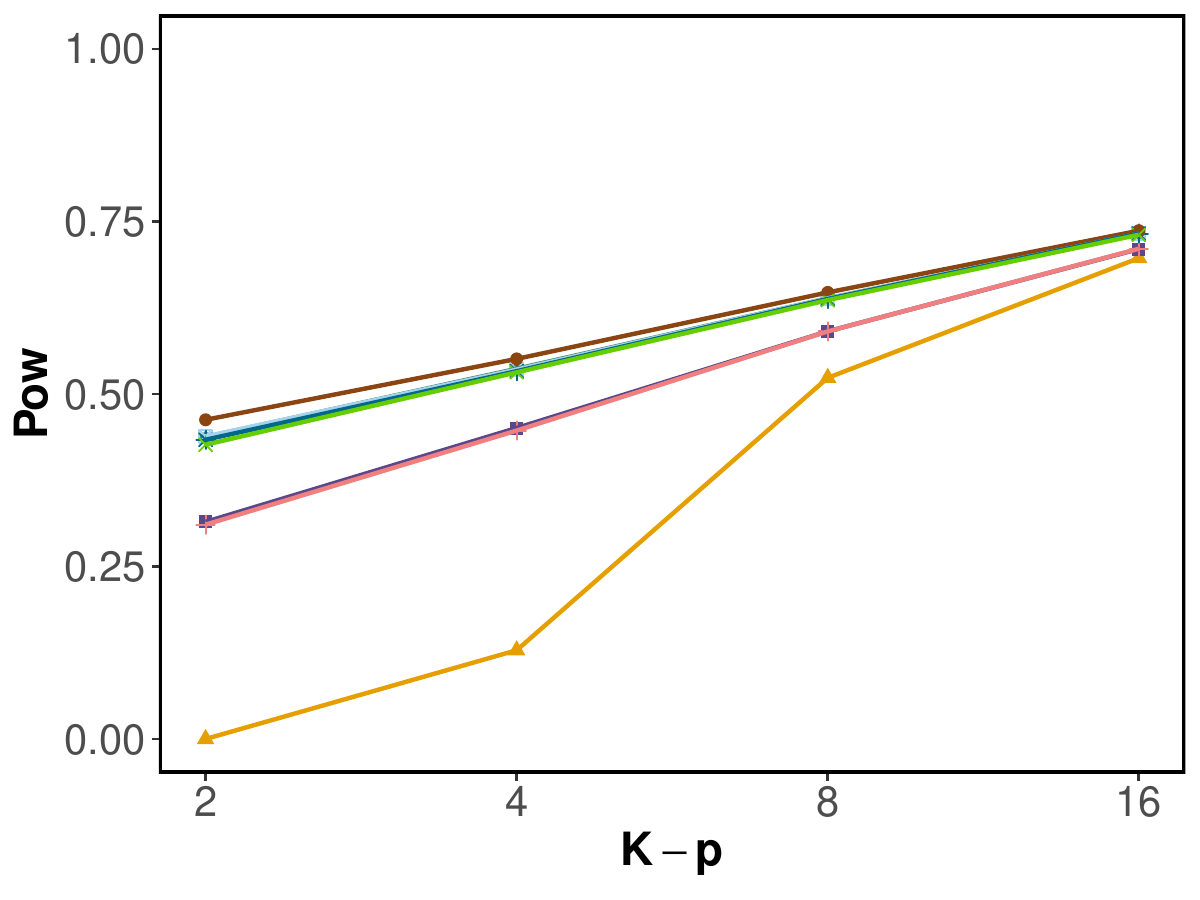} &
\includegraphics[width=0.28\textwidth]{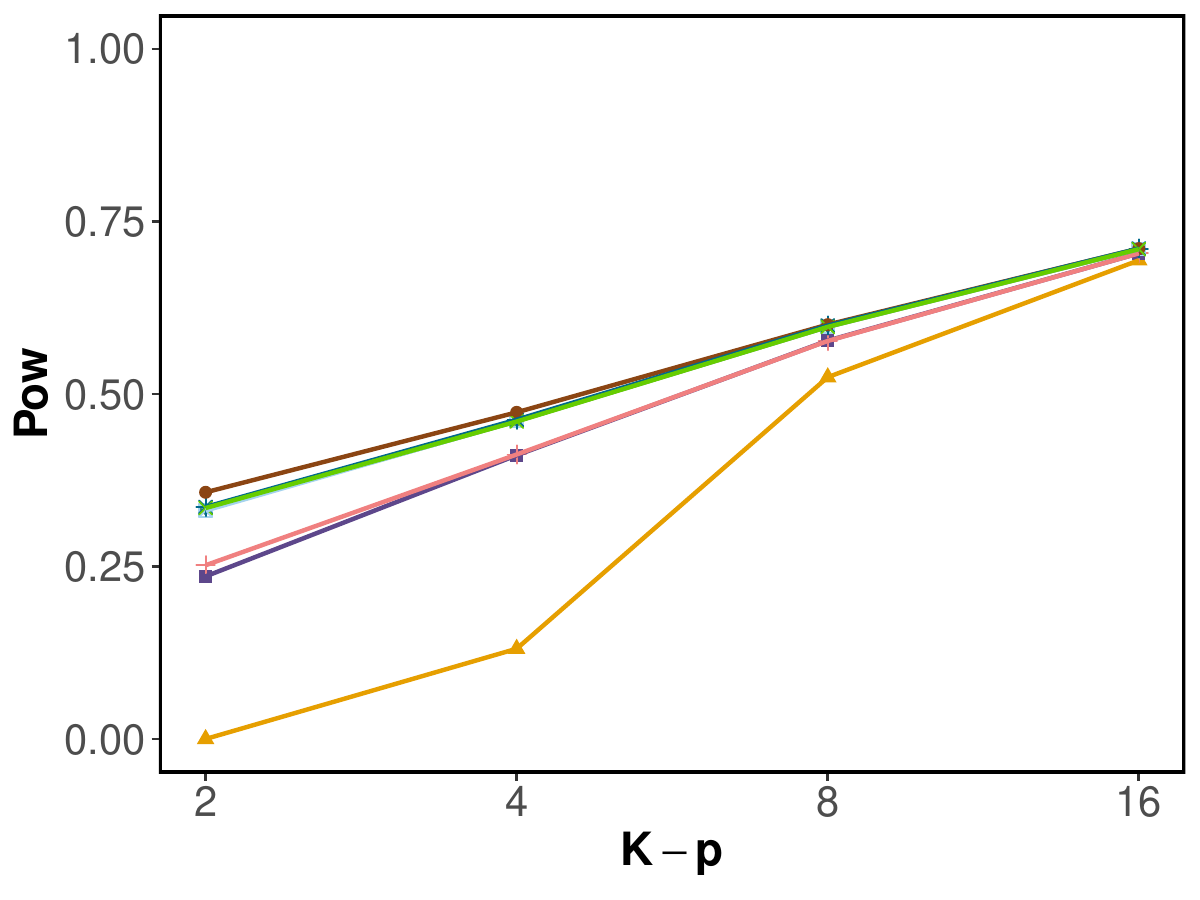} &
\includegraphics[width=0.28\textwidth]{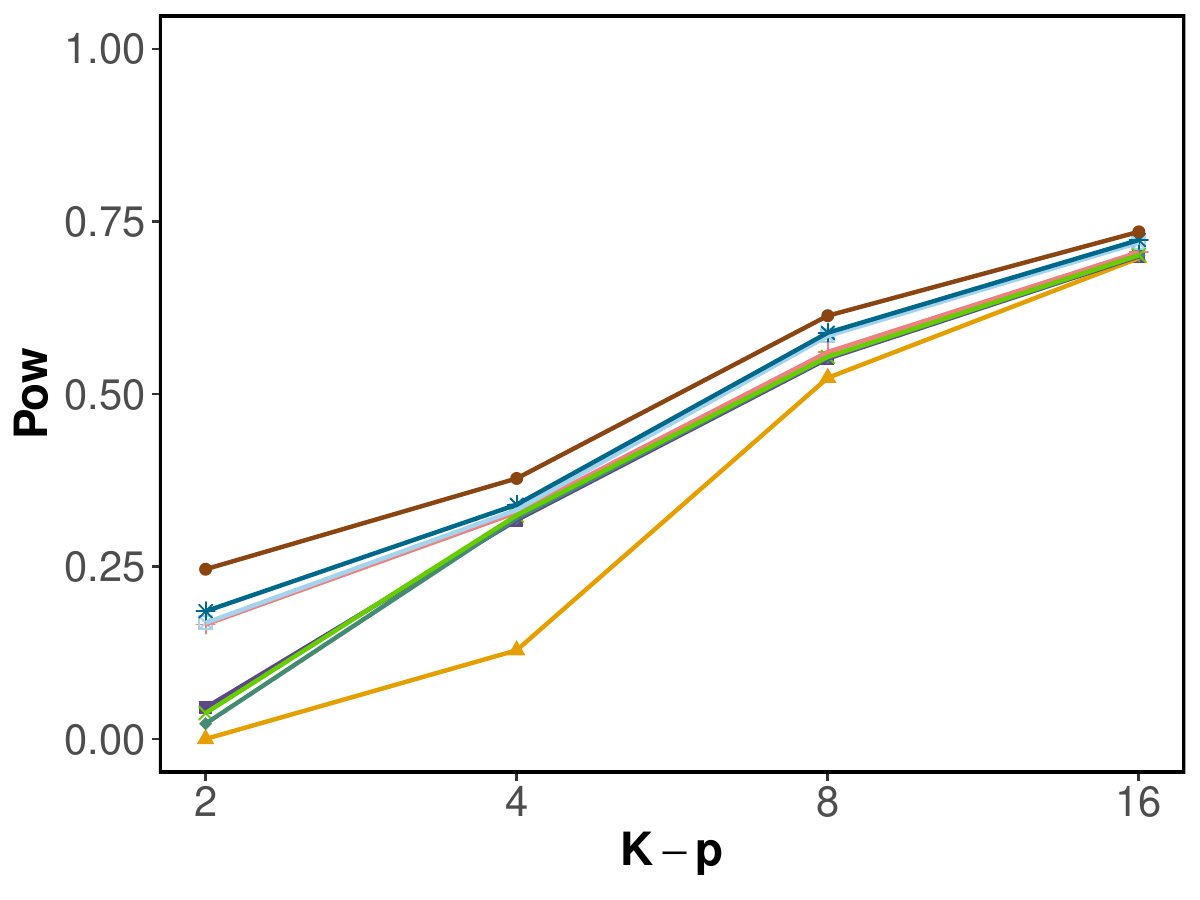} \\[0.5em]

\multicolumn{3}{c}{\includegraphics[width=0.9\textwidth]{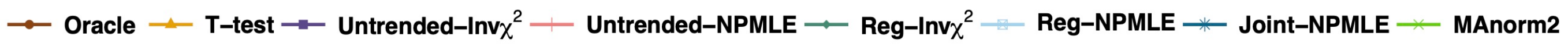}}
\end{tabular}

\caption{
The plot of FDR and power as a function of the degrees of freedom $K-p \in \{2,4,8,16\}$. 
The upper panel corresponds to $m$ from a constant mean-variance trend, and the lower panel corresponds to $m$ from a logistic mean-variance trend (described in the main text). 
The three columns represent the three choices of $G$: \texttt{Dirac} (left), \texttt{Scaled Inv$\chi^2$} (middle), and \texttt{Two-point} (right). 
Within each panel, the top row reports the false discovery rate, and the bottom row reports power.
}
\label{fig:simulation_result}
\end{figure}

\section{Additional details and results for real data examples}

\subsection{Preprocessing of the count data in Section~\ref{sec:bulk_rna_1}}
\label{sec:preprocessing}
We downloaded the raw data from the Gene Expression Omnibus (GEO, \url{http://www.ncbi.nlm.nih.gov/geo/}) with accession number \texttt{GSE114716}. The raw data consist of a feature-by-sample count matrix, where each entry records the number of sequencing reads mapped to a given feature in a given sample. We conduct the following preprocessing steps using the edgeR \citep{chen2025edger} pipeline: (1) removed lowly expressed genes in a design-aware manner, (2) normalized for differences in library composition across samples using the trimmed mean of M-values (TMM) procedure, and (3) transformed the normalized counts to $\log_2$ counts per million with a prior count of $3$ for downstream analysis.

\subsection{Details and diagnostic figure for proteomics application of Section~\ref{sec:seq_peptide}}
\label{sec:bins_protein}
For the proteomics data in Section~\ref{sec:seq_peptide}, \jtlitrd{} is not directly applicable since the summary statistics are discrete. However, we can estimate the prior for $\sigma_i^2\mid M_i$ using the bin based procedure in Supplement~\ref{sec:discrete_MI}.  
In that direction, we stratify the proteins by peptide counts $M_i$, using exact strata for $M_i\in\{1,2,...,11\}$, and pooled $7$ bins for larger $M_i$ to ensure stable estimation of the variance prior. 
Observe that the binning already takes care of the stratification, and hence, trend estimation is not required here.
The reversal across $M$ in the number of discoveries made by the trended and the untrended methods, as observed in Figure~\ref{fig:proteomics}(c) can be understood as a consequence of variance misspecification under untrended modeling. Untrended methods shrink all units toward a common global variance trend, which may be smaller than the local variance level in low-$M$ bins, leading to small estimated variances and therefore more significant findings. This phenomenon reverses as $M$ increases. However, trend-based methods gain advantages from modeling the mean-variance relationship and thus avoid the corresponding misspecification. Figure~\ref{fig:proteomics}(d) displays the estimated prior on $\sigma_i^2$ within each bin of $M$. The estimated discrete priors exhibit clear variation across bins in both support and mass allocation, indicating substantial heterogeneity in the variance distribution. To further assess model fit, we examine bin-wise marginal diagnostic plots (cf. Figure~\ref{fig:proteomics_dignostic}) for $S_i^2$, with particular focus on bins corresponding to small $M_i$. Within each bin, we compare the empirical distribution of the sample variances $S_i^2$ with the model-implied marginal distribution obtained by $\reglitrd{}$ and discrete $\jtlitrd{}$. In bins with small $M_i$, the marginal distribution implied by \reglitrd{} does not adequately match the observed distribution of $S_i^2$, with noticeable discrepancies in both shape and spread. In contrast, the marginal distribution implied by discrete \jtlitrd{} is more closely aligned. This supports the interpretation in the main text that, when $M_i$ is small, a single smooth trend may be insufficient to capture the full conditional distribution of variances, whereas the bin-specific prior used by discrete \jtlitrd{} provides a more flexible and better-calibrated fit.
\begin{figure}
    \centering
    \includegraphics[width=0.95\linewidth]{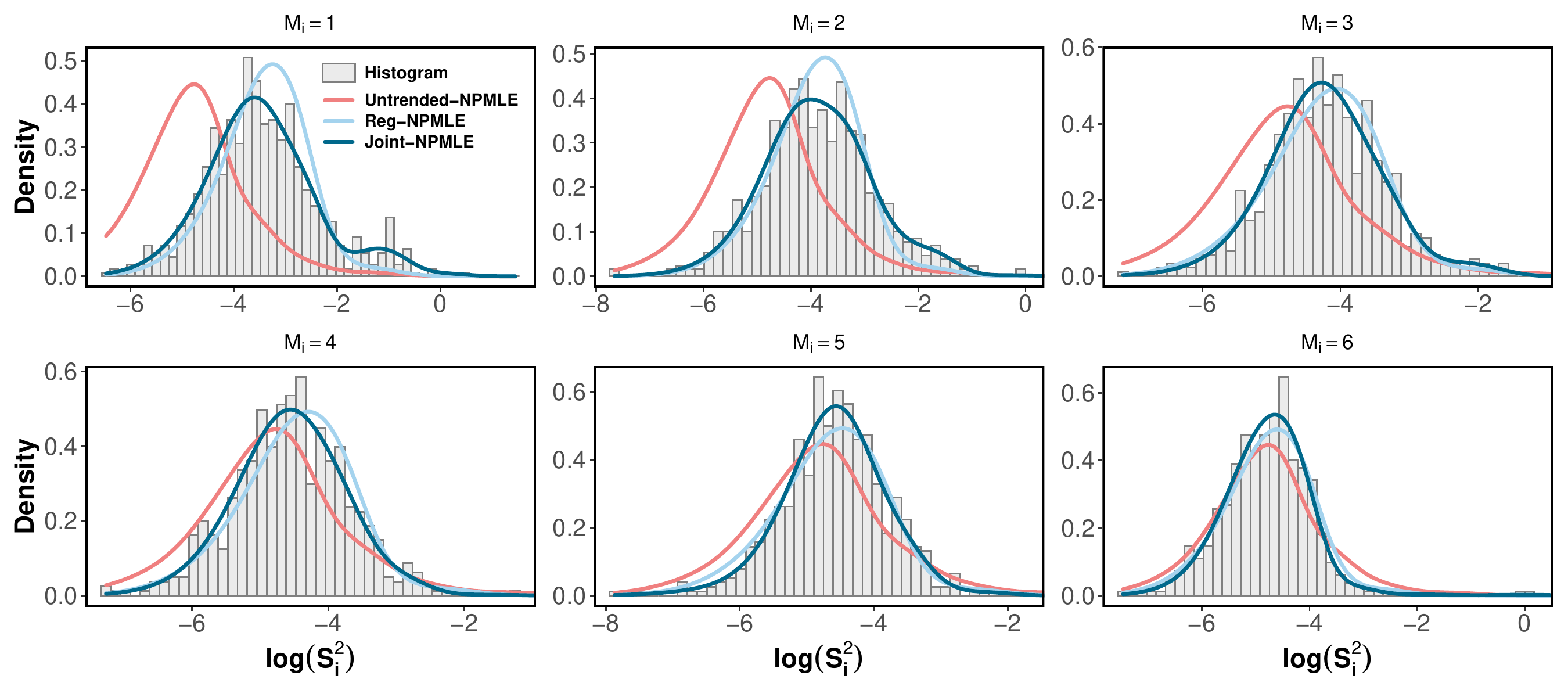}
    \caption{Marginal plots of the log sample variances $\log(S_i^2)$ for the proteomics data~\cite{zhu2020deqms}, shown separately for subsets with $M_i \in \{1,\dots,6\}$. In each panel, the histograms of the log sample variance $\log(S_i^2)$, is overlaid with the corresponding fitted marginal densities. We use 3 choices of estimated $G=\wh G$ for marginalization corresponding to Untrended-Inv$\chi^2$  (in \textcolor{ggplotRed}{red}), \reglitrd{} (in \textcolor{ggplotBlue}{blue}), and \jtlitrd{} obtained by the bin-based procedure (in \textcolor{ggplotBlue2}{blue}).}
    \label{fig:proteomics_dignostic}
\end{figure}

\subsection{Differential expression of \emph{Plasmodium falciparum} genes between severe and
non-severe malaria infections}
\label{sec:suppl_case_studies}

Here, we analyze a bulk RNA-seq dataset sourced from \cite{tonkin2018plasmodium} consisting of dual RNA-seq samples in which sequencing reads were aligned to both the human genome and the Plasmodium falciparum 3D7 reference genome. The primary unit of analysis is the parasite gene, and reads were allocated to parasite gene features using the P. falciparum 3D7 genome annotation, so that each count represents transcriptional evidence for a particular parasite gene in a given blood sample. We follow the pre-processing steps in \cite{tonkin2018plasmodium}, and we obtain $n=4,144$ genes for differential analysis. The main contrast of interest is the difference in parasite gene expression between severe and non-severe infections among $K=35$ samples. The design matrix includes an additional covariate for the estimated parasite life-cycle stage proportion (e.g., ring stage) to adjust for between-sample differences in stage composition, which otherwise can confound severity-associated differential expression, therefore resulting in $p=3$ covariates. Assumption~\ref{assu:design} holds in this data even when the ring stage covariate is continuous. We apply Benjamini-Hochberg to control the FDR at $\alpha = 0.05$ after computing p-values with all $6$ methods from Section~\ref{sec:bulk_rna_1}.

The results of significance are shown in Table \ref{tbl:realdata_significance_mal}, and Figure~\ref{fig:rnaseq2} presents the same diagnostic plots as in Section~\ref{sec:bulk_rna_1}. Untrended-Inv$\chi^2$ produces substantially more discoveries than the Untrended-NPMLE and the methods that incorporate a mean–variance trend. This inflation is plausibly driven by model misspecification. In particular, Figure \ref{fig:rnaseq2}(a) indicates that the marginal distribution of $S_i^2$ implied by the parametric prior of Untrended-Inv$\chi^2$ is noticeably misaligned with the empirical histogram, suggesting that this parametric fit does not adequately represent the variance distribution in the data, and may lead to more false discoveries.

\begin{table}[t]
\caption{Number of discoveries (BH at target FDR $\alpha$) in the \emph{P. falciparum} RNASeq study of Supplement~\bluelink{sec:suppl_case_studies}.}
\label{tbl:realdata_significance_mal}
\centering
{\small
\setlength{\tabcolsep}{4pt}
\renewcommand{\arraystretch}{1.15}
\begin{tabular*}{\textwidth}{@{\extracolsep{\fill}}@{}p{6.1cm}
                            *{6}{S[table-format=4.0]}@{}}
\toprule
& \multicolumn{1}{c}{\textbf{Classical}}
& \multicolumn{2}{c}{\textbf{Untrended}}
& \multicolumn{2}{c}{\textbf{Regression}}
& \multicolumn{1}{c}{\textbf{Joint}} \\
\cmidrule(lr){2-2}\cmidrule(lr){3-4}\cmidrule(lr){5-6}\cmidrule(lr){7-7}
\textbf{Contrast}
& \multicolumn{1}{c}{\textbf{t-test}}
& \multicolumn{1}{c}{\textbf{Inv$\chi^2$}}
& \multicolumn{1}{c}{\textbf{NPMLE}}
& \multicolumn{1}{c}{\textbf{Inv$\chi^2$}}
& \multicolumn{1}{c}{\textbf{NPMLE}}
& \multicolumn{1}{c}{\textbf{NPMLE}} \\
\midrule
\dshead{\textsc{RNA-seq}: \emph{P.\ falciparum} malaria {\hypersetup{hidelinks}\citep{tonkin2018plasmodium}}}{$n=4{,}144$ genes, $K=35$, $p=3$, $\alpha=0.05$}
\ct{Severe vs Non-severe} & 13 & 77 & 11 & 6 & 6 & 10 \\
\bottomrule
\end{tabular*}
}
\end{table}

\begin{figure}
\centering
\begin{tabular}{@{}l@{\hspace{0.03\textwidth}}l@{\hspace{0.03\textwidth}}l@{}}
(a) & (b) & (c) \\[0.3em]
\includegraphics[width=0.31\textwidth]{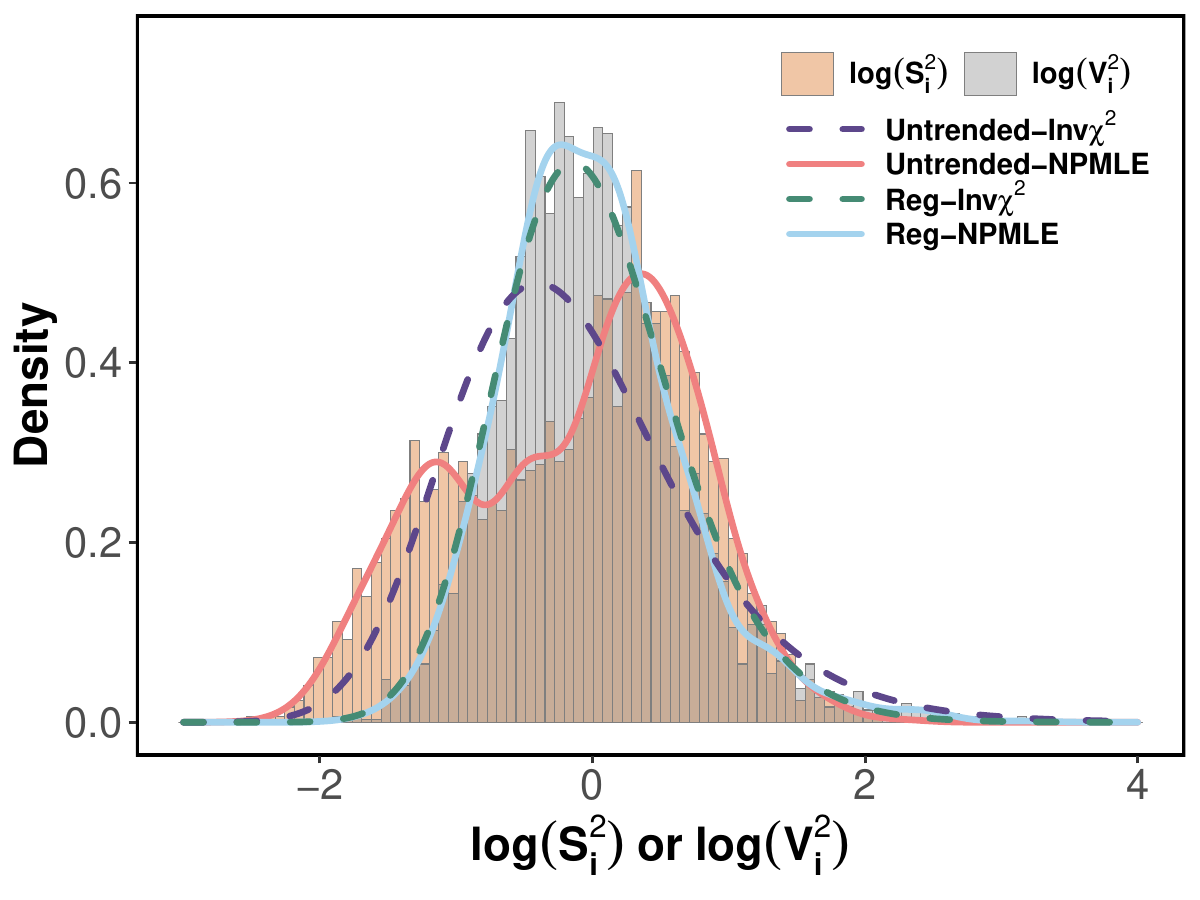} &
\includegraphics[width=0.31\textwidth]{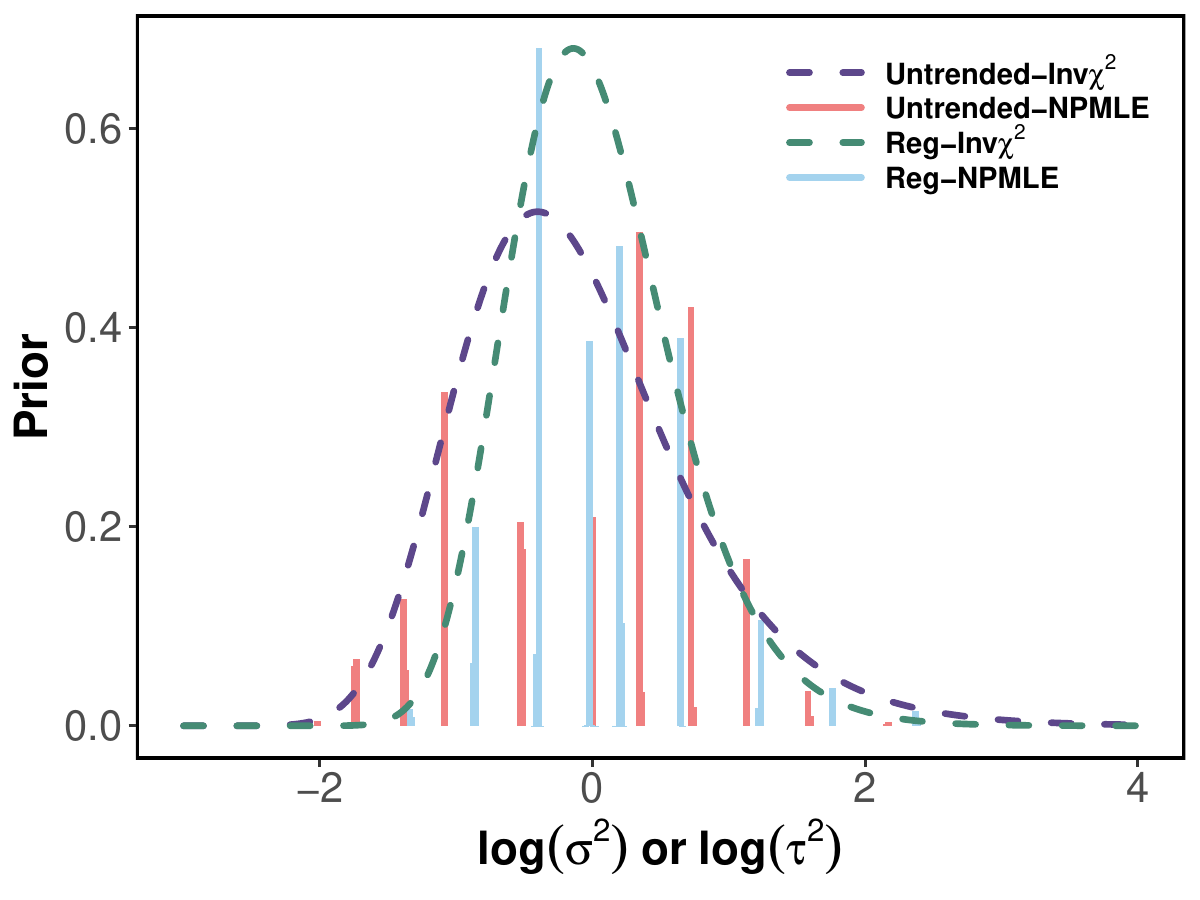} &
\includegraphics[width=0.31\textwidth]{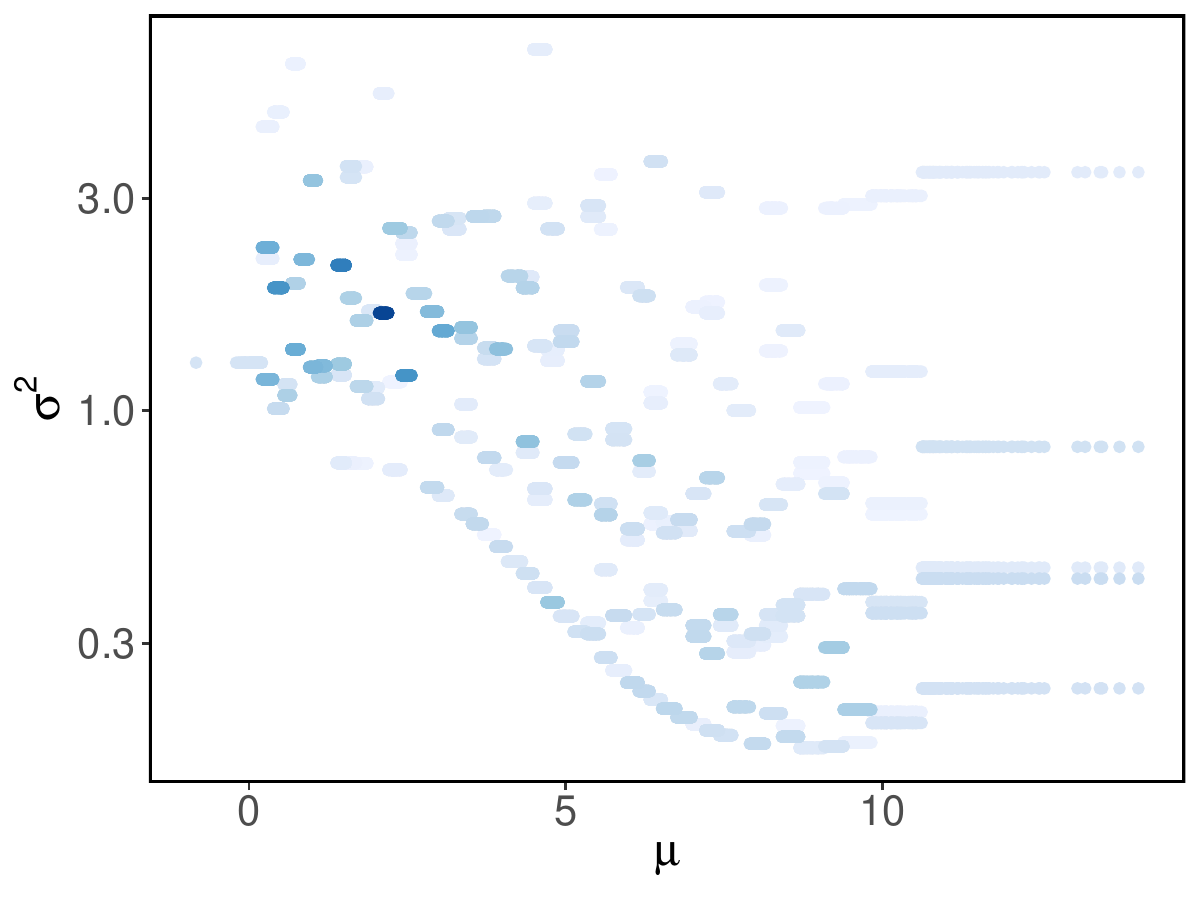} 
\end{tabular}

\caption{Empirical partially Bayes analysis without and with trend for RNA-Seq data of \cite{tonkin2018plasmodium}: each unit $i$ corresponds to a distinct Plasmodium falciparum gene $(n=4,144)$ quantified from bulk RNA-seq of infected blood samples, and its data is summarized as in \eqref{eq:limma_trend_summary} with $K=35$ and $p=3$. The three panels are analogous to the three panels of Figure \ref{fig:rnaseq1}.}
\label{fig:rnaseq2}
\end{figure}

\section{Mathematical notation}
We shall adopt the following notations to simplify the presentation of our mathematical arguments in the subsequent discussion. For two sequences $\{a_n\}$ and $\{b_n\}$, we shall say that $a_n \lesssim_{\square} b_n$ if there exists constants $C_1>0$ (depending on the quantities expressed in $\square$ but independent of $n$) such that $a_n \le C_1b_n$, for all $n \in \mathbb N$. Similarly, the notation $a_n \asymp_\square b_n$ shall mean that there exists constants $C_2,C_3>0$ (depending on the quantities expressed in $\square$ but independent of $n$) such that $C_2b_n \le a_n \le C_3b_n$, for all $n \in \mathbb N$. If there are no squares, then we just mean that the constants involved in the inequalities are independent of $n$ but do not stress the dependence of such constants on other problem parameters. The sequence $a_n =o(b_n)$, if $a_n/b_n \rightarrow 0$, as $n \rightarrow \infty$ and $a_n=\omega(b_n)$ is $a_n/b_n \rightarrow \infty$, as $n \rightarrow \infty$. Furthermore, we shall denote the set of null indices by $\mathcal H_0:=\{i:\theta_i=0\}$.

\section{Proofs of results in Section~\ref{sec:stat_setting}}
\subsection{Proof of Proposition \ref{prop:orthog}}
From the properties of ordinary least squares estimators, we can express
\[
Z_i = c^\top_\theta(X^\top X)^{-1}X^\top Y_i, \quad \wt A_i=c^\top_{\wt A}(X^\top X)^{-1}X^\top Y_i, \quad S^2_i=Y^\top_i(I-P_X)Y_i/(K-p),
\]
where $Y_i=(Y_{i1},\ldots,Y_{iK})^\top \in \mathbb R^K$ and $P_X$ is the orthogonal projection matrix onto $\mathcal C(X)$, the column space of $X$. Since, conditioned on $(\beta_i,\sigma^2_i)$, the random vector $Y_i$ is Gaussian for all $i \in [n]$, to conclude the proposition, it enough to conclude that
\begin{align}
    \mathrm{Cov}(Z_i,\wt A_i)=0, \quad \mathrm{Cov}(Z_i,(I-P_X)Y_i)=0, \quad \mbox{and} \quad \mathrm{Cov}(\wt A_i,(I-P_X)Y_i)=0, \quad \mbox{for all $i \in [n]$.}
\end{align}
It is easy to observe that the above conditions translate to
\begin{align}
    c^\top_\theta(X^\top X)^{-1}c_{\wt A}=0, \quad c^\top_\theta(X^\top X)^{-1}X^\top(I-P_X)=0, \quad \mbox{and} \quad c^\top_{\wt A}(X^\top X)^{-1}X^\top(I-P_X)=0.
\end{align}
The last two assertions always hold and the first condition is implied by Assumption~\ref{assu:design}. Hence the proposition follows.

\subsection{Proof of Example~\ref{ex:treatment_effect}}
Observe that since intercept is included in each of the fitted regression, therefore $\mathds{1}_K \in \mathcal C(X)$. Consequently, we have
\[
(X^\top X)^{-1}c_A=\frac{1}{K}(X^\top X)^{-1}X^\top \mathds{1}_K=\frac{1}{K}(X^\top X)^{-1}X^\top Xe_1=\frac{1}{K}e_1,
\]
where $e_1=(1,0,\ldots,0) \in \mathbb R^{q+2}$. Furthermore, since $\tau_i$ is not the intercept coefficient, $c^\top e_1=0$, which concludes the assertion in the example.

\section{Properties of chi-square mixtures}
Consider the following properties of mixtures of $\chi^2_{K-p}/(K-p)$ densities defined in \eqref{eq:lik_1d_limma_trend}.
\begin{lemm}\label{lem:prop_chi_sq}
For any mixing measure $\rmG\in\mathcal G_\trd$, the densities $f_{\rmG,K-p}$ defined in \eqref{eq:lik_1d_limma_trend} satisfy the following properties.
\begin{enumerate}

\item The derivative of $f_{\rmG,K-p}(s^2/\eta^2)$ with respect to $\eta \in (\underline M_\trd/2,2\wb M_\trd)$ is given by
\begin{align}
\frac{\partial f_{\rmG,K-p}(s^2/\eta^2)}{\partial \eta}
&= \int C_{K,p}\,\exp\left(-\frac{(K-p)s^2}{2\eta^2\tau^2}\right)\left(\frac{s^2}{\eta^2\tau^2}\right)^{\frac{K-p}{2}-1}\\
&\qquad\qquad\times\frac{1}{\tau^2}\left(\frac{2-K+p}{\eta}+\frac{(K-p)s^2}{\tau^2\eta^3}\right)\,\dd\rmG(\tau^2),
\end{align}
where $C_{K,p}=(K-p)^{(K-p)/2}/(2^{(K-p)/2}\Gamma((K-p)/2))$. Hence
\begin{align}
\label{eq:deriv_log_density}
\frac{\partial}{\partial\eta}\log f_{\rmG,K-p}(s^2/\eta^2)
= \mathbb E_{\tau^2 \sim \rmG}\left[\frac{2-K+p}{\eta}+\frac{(K-p)X}{\tau^2\eta}\,\Big|\,X=\frac{s^2}{\eta^2}\right],
\end{align}
where $X \sim \tau^2\chi^2_{K-p}/(K-p)$ (we shall use the same notation in the rest of the lemma).

\item The second derivative of $f_{\rmG,K-p}(s^2/\eta^2)$ with respect to $\eta$ satisfies 
\begin{align}
\frac{\partial^2}{\partial\eta^2}\log f_{\rmG,K-p}(s^2/\eta^2)
&= \mathrm{Var}_{\tau^2 \sim \rmG}\left[\frac{2-K+p}{\eta}+\frac{(K-p)X}{\tau^2\eta}\,\Big|\,X=\frac{s^2}{\eta^2}\right]\\
&\hskip 4em- \mathbb E_{\tau^2 \sim \rmG}\left[\frac{2-K+p}{\eta^2}+\frac{3(K-p)X}{\tau^2\eta^2}\,\Big|\,X=\frac{s^2}{\eta^2}\right],
\end{align}
for any $\eta \in (\underline M_\trd/2,2\wb M_\trd)$.

\item  Recall $V^2_i$ for $i=1,\ldots,n$ defined as $V_i^2=S^2_i/\xi_0^2(M_i) \sim \tau_i^2\chi^2_{K-p}/(K-p)$. If $K>p+2$ and the parameters $\tau_1^2,\ldots,\tau_n^2 \in [\underline L_\trd,\overline U_\trd]$, then there exists $\underline\kappa>0$ such that
\begin{align}
\label{eq:prop_4_chi_sq}
\max_{i\le n}\mathbb P\left(V_i^2/\tau_i^2\le \underline\kappa\,n^{-6/(K-p)}\mid \tau^2_i\right)\le n^{-3}.
\end{align}
Similarly, there exists $\overline\kappa>0$ such that
\begin{align}
\label{eq:prop_5_chi_sq}
\max_{i\le n}\mathbb P\left(V_i^2/\tau_i^2\ge 3\overline\kappa\log n\mid \tau^2_i\right)\le n^{-3}.
\end{align}
Furthermore, the constants $\underline\kappa$ and $\overline\kappa$ depend only on $K,p,\underline M_\trd,\wb M_\trd,\underline L_\trd,\overline U_\trd$. Similar results also hold for $S^2_i$ with possibly different constants.

\item There exists $\widetilde\kappa>0$, depending only on $K,p,\underline M_\trd,\wb M_\trd,\underline L_\trd,\wb L_\trd$, such that with $C_n=\widetilde\kappa\log n$,
\begin{align}
\label{eq:prop_score_tail_chi_sq_simple}
\sup_{\eta\in(\underline M_\trd/2,\,2\wb M_\trd)}
\left|
\int_{C_n}^{\infty}
\frac{\partial}{\partial \eta}
f_{\rmG,K-p}\!\left(\frac{s^2}{\eta^2}\right)\,\dd s^2
\right|
\lesssim_{K,p,\underline M_\trd,\wb M_\trd,\underline L_\trd,\overline U_\trd}
\frac{1}{n^2}.
\end{align}
\end{enumerate}
\end{lemm}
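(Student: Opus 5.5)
Every part of the lemma comes from writing out the scaled chi-square density explicitly. The density of $\tau^2\chi^2_{K-p}/(K-p)$ at $x$ is
\[
p_{\chi^2}(x\mid K-p,\tau^2)=\frac{C_{K,p}}{\tau^2}\Big(\frac{x}{\tau^2}\Big)^{\frac{K-p}{2}-1}\exp\!\Big(-\frac{(K-p)x}{2\tau^2}\Big),
\]
and we substitute $x=s^2/\eta^2$.

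\emph{Part 1.} Differentiating the integrand in $\eta$ gives the factor
\[
\frac{\partial}{\partial\eta}\Big[\big(\tfrac{K-p}{2}-1\big)\log(s^2/\eta^2)-\tfrac{(K-p)s^2}{2\eta^2\tau^2}\Big]=\frac{2-K+p}{\eta}+\frac{(K-p)s^2}{\tau^2\eta^3}.
\]
The displayed formula (with the $1/\tau^2$ prefactor absorbed) follows once differentiation under the integral is justified. Dominated convergence gives this: $\tau^2\in[\underline L_\trd,\wb U_\trd]$ and $\eta$ lies in a compact interval bounded away from $0$, so the integrand and its $\eta$-derivative are uniformly bounded in $(\tau^2,\eta)$ for each fixed $s^2$. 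Dividing by $f_{\rmG,K-p}(s^2/\eta^2)$ turns the integral against $\rmG$ weighted by $p_{\chi^2}(s^2/\eta^2\mid\cdot)$ into a posterior expectation given $X=s^2/\eta^2$. Since $(K-p)s^2/(\tau^2\eta^3)=(K-p)X/(\tau^2\eta)$, this yields \eqref{eq:deriv_log_density}.

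\emph{Part 2.} Write $\ell(\eta)=\log\int e^{\psi(\eta,\tau^2)}\rmG(\dd\tau^2)$, where $\psi$ is the log-integrand. The standard log-partition identity gives $\ell''=\mathrm{Var}_{\text{post}}(\partial_\eta\psi)+\mathbb E_{\text{post}}(\partial^2_\eta\psi)$. Here
\[
\partial_\eta^2\psi=-\frac{2-K+p}{\eta^2}-\frac{3(K-p)s^2}{\tau^2\eta^4},
\]
which is exactly the subtracted expectation term. A second application of dominated convergence justifies the interchange.

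\emph{Part 3.} $V_i^2/\tau_i^2\sim\chi^2_d/d$ with $d=K-p$, independently of $\tau_i^2$.
\begin{itemize}
\item Lower tail: the $\chi^2_d$ density is bounded by $c\,x^{d/2-1}$ near $0$, so $\mathbb P(\chi^2_d/d\le t)\lesssim t^{d/2}$. Taking $t=\underline\kappa\,n^{-6/d}$ gives $\lesssim\underline\kappa^{d/2}n^{-3}$, and $\underline\kappa$ is chosen small enough to make the constant at most $1$.
\item Upper tail: a Chernoff bound $\mathbb P(\chi^2_d\ge u)\le e^{-u/4}$ for $u\gtrsim d$ gives $n^{-3}$ at $u=3d\overline\kappa\log n$ with $\overline\kappa$ large.
\end{itemize}
For $S_i^2=\xi_0^2(M_i)V_i^2$, the same bounds hold after rescaling by the bounds $\underline M_\trd,\overline M_\trd$ on $\xi_0$ and $\underline L_\trd,\wb U_\trd$ on $\tau_i^2$.

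\emph{Part 4.} This is the main technical step, because the tail integral must be bounded uniformly in $\eta$ and $\rmG$. The plan is to insert the Part 1 formula, use Tonelli/Fubini to integrate in $s^2$ first, and bound each fixed-$\tau^2$ piece. With the substitution $y=s^2/(\eta^2\tau^2)$, that piece is at most a polynomial in $y$ (degree $\le d/2$) times $e^{-dy/2}$, with prefactors bounded using $\eta\in(\underline M_\trd/2,2\wb M_\trd)$ and $\tau^2\in[\underline L_\trd,\wb U_\trd]$. The lower limit becomes $y\ge C_n/(4\wb M_\trd^2\wb U_\trd)$, so the tail integral is $\lesssim(\log n)^{d/2}n^{-c\widetilde\kappa}$. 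Choosing $\widetilde\kappa$ large enough that this is $\le n^{-2}$ gives the bound uniformly in $\tau^2$, and hence after integrating against $\rmG$. The care needed is in checking that the Jacobian $\eta^2\tau^2$ and the $1/\tau^2,1/\eta$ prefactors stay bounded, which the compactness assumptions ensure.
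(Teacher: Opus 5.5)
Your proposal is correct and follows essentially the same route as the paper. Parts 1--2 differentiate the explicit scaled $\chi^2$ kernel under the integral, justified by dominated convergence. Your identity $\ell''=\mathrm{Var}_{\mathrm{post}}(\partial_\eta\psi)+\mathbb{E}_{\mathrm{post}}(\partial_\eta^2\psi)$ just spells out what the paper calls ``differentiating once more.'' Part 3 uses the small-ball bound $\mathbb{P}(\chi^2_d/d\le t)\lesssim t^{d/2}$ together with a Chernoff bound. Part 4 integrates a polynomial-times-exponential envelope over $[C_n,\infty)$. The paper bounds $|\partial_\eta f|$ uniformly in $\eta$ pointwise in $s^2$ instead of substituting $y=s^2/(\eta^2\tau^2)$ after Fubini, but this is the same argument.
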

While we state the above lemma for scale mixtures of $\chi^2_{K-p}$ random variables, similar results continue to hold for scale mixtures of $\chi^2_r$ random variables for any degrees of freedom $r>2$. Furthermore, (3) also holds for $\misV{i}$ for $i=1,\ldots,n$ defined in Section~\ref{sec:reg_ltrd_int}, if Assumptions~\ref{assum:missp_trend_estimation} and~\ref{assu:misspec_limma_trnd_var} holds.

\begin{proof}
Recall
\[
p_{\chi^2}(x\mid K-p,\tau^2)
=
C_{K,p}\exp\!\left(-\frac{(K-p)x}{2\tau^2}\right)\left(\frac{x}{\tau^2}\right)^{\frac{K-p}{2}-1}\frac{1}{\tau^2},
\qquad
C_{K,p}=\frac{(K-p)^{(K-p)/2}}{2^{(K-p)/2}\Gamma((K-p)/2)},
\]
and for any $x>0$, the mixture density is
\[
f_{G,K-p}(x)=\int_0^\infty p_{\chi^2}(x\mid K-p,\tau^2)\,G(\dd\tau^2).
\]
Fix $s^2>0$ and $\eta\in(\underline M_\trd,\wb M_\trd)$ and set $x(\eta)=s^2/\eta^2$.

\paragraph{(1) First derivative.} Define $x(\eta)=s^2/\eta^2$. We compute $\partial_\eta f_{G,K-p}(x(\eta))$ by differentiating under the integral sign. For fixed $\tau^2$,
\[
\log p_{\chi^2}(x(\eta)\mid K-p,\tau^2)
=
\log C_{K,p}
-\frac{(K-p)x(\eta)}{2\tau^2}
+\Big(\frac{K-p}{2}-1\Big)\log x(\eta)
-\frac{K-p}{2}\log\tau^2
\]
By definition, we have $x'(\eta)=-2s^2/\eta^3$ and
\[
\frac{d}{d\eta}\log x(\eta)=-\frac{2}{\eta}.
\]
Hence
\[
\frac{\partial}{\partial\eta}\log p_{\chi^2}(x(\eta)\mid K-p,\tau^2)
=
\frac{(K-p)s^2}{\tau^2\eta^3}+\frac{2-K+p}{\eta}.
\]
Therefore
\[
\frac{\partial}{\partial\eta}p_{\chi^2}(x(\eta)\mid K-p,\tau^2)
=
p_{\chi^2}(x(\eta)\mid K-p,\tau^2)\left(\frac{2-K+p}{\eta}+\frac{(K-p)s^2}{\tau^2\eta^3}\right).
\]
Assumption~\ref{assu:limma_trnd_var} implies $\tau^2\in[\underline L_\trd,\overline U_\trd]$ almost surely under $G$, so the derivative is dominated by an integrable function of $\tau^2$. Differentiation under the integral sign is therefore justified, yielding
\[
\frac{\partial}{\partial\eta}f_{G,K-p}(x(\eta))
=
\int p_{\chi^2}(x(\eta)\mid K-p,\tau^2)\left(\frac{2-K+p}{\eta}+\frac{(K-p)s^2}{\tau^2\eta^3}\right)\,G(\dd\tau^2).
\]
Dividing by $f_{G,K-p}(x(\eta))$ yields the expression in \eqref{eq:deriv_log_density}.

\paragraph{(2) Second derivative.}
From \eqref{eq:deriv_log_density},
\[
\frac{\partial}{\partial\eta}\log f_{G,K-p}(s^2/\eta^2)
=
\mathbb E\left[\frac{2-K+p}{\eta}+\frac{(K-p)X}{\tau^2\eta}\,\Big|\,X=\frac{s^2}{\eta^2}\right].
\]
Differentiating once more and using dominated convergence yields
\begin{align*}
\frac{\partial^2}{\partial\eta^2}\log f_{G,K-p}(s^2/\eta^2)
&=
\mathrm{Var}\left(\frac{2-K+p}{\eta}+\frac{(K-p)X}{\tau^2\eta}\,\Big|\,X=\frac{s^2}{\eta^2}\right)\\
&\quad-\mathbb E\left[\frac{2-K+p}{\eta^2}+\frac{3(K-p)X}{\tau^2\eta^2}\,\Big|\,X=\frac{s^2}{\eta^2}\right],
\end{align*}
which is exactly the identity stated in the lemma.

\paragraph{(3) Tail bounds.}
For each $i$,
\[
(V_i^2/\tau_i^2) \mid \tau^2_i\sim\frac{\chi^2_{K-p}}{K-p}.
\]
Thus it suffices to bound the lower and upper tails of $\chi^2_{K-p}$. For the lower tail observe that, for $0<t\le1$,
\[
\mathbb P\left(\frac{\chi^2_{K-p}}{K-p}\le t\right)
\le
\frac{c_{K,p}}{2^{(K-p)/2}\Gamma((K-p)/2)}\int_0^t u^{(K-p)/2-1}\,\dd u
=
c_{K,p}\,t^{(K-p)/2},
\]
for some constant $c_{K,p}>0$.
Choosing $\underline\kappa>0$ sufficiently small (depending on $K,p,\underline L_\trd,\overline L_\trd$) such that $A_n=\underline\kappa n^{-6/(K-p)}\le1$ and $c_{K,p}A_n^{(K-p)/2}\le n^{-3}$ yields
\[
\max_{i\le n}\mathbb P\left(\frac{V_i^2}{\tau_i^2}\le A_n \mid \tau^2_i\right)\le n^{-3}.
\]
Similarly, for the upper tail, for $t\ge K-p$, a Chernoff bound gives
\[
\mathbb P\left(\frac{\chi^2_{K-p}}{K-p}\ge t\right)\le \widetilde c_{K,p}\exp\!\left(-\frac{t}{4}+\frac{K-p}{2}\log2\right),
\]
for some constant $\widetilde c_{K,p}>0$.
Choosing $\overline\kappa>0$ large enough (depending on $K,p,\underline L_\trd,\overline U_\trd$) such that $B_n=3\overline\kappa\log n\ge K-p$ and 
the above bound is at most $n^{-3}$ yields
\[
\max_{i\le n}\mathbb P\left(\frac{V_i^2}{\tau_i^2}\ge B_n \mid \tau^2_i\right)\le n^{-3}.
\]
Analogous results follow for $S^2_i$ by noting that $V^2_i=S^2_i/\xi^2_0(A_i)$ and $\xi_0 \in \mathcal X$ defined in Assumption~\ref{assu:limma_trnd_var}.
\paragraph{(5) Tail bound for the integral of the score.}
Using the derivative expression from part (1),
\begin{align}
\frac{\partial}{\partial\eta}f_{G,K-p}\!\left(\frac{s^2}{\eta^2}\right)
&=
\int C_{K,p}\exp\!\left(-\frac{(K-p)s^2}{2\eta^2\tau^2}\right)\left(\frac{s^2}{\eta^2\tau^2}\right)^{(K-p)/2-1}\\
& \qquad \times \frac{1}{\tau^2}\left(\frac{2-K+p}{\eta}+\frac{(K-p)s^2}{\tau^2\eta^3}\right)\,G(\dd \tau^2).
\end{align}
Since $\eta\in(\underline M_\trd/2,2\wb M_\trd)$ and $\tau^2\in[\underline L_\trd,\overline U_\trd]$, there exist constants $c,C>0$ depending only on $K,p,\underline M_\trd,\wb M_\trd,\underline L_\trd,\overline U_\trd$ such that
\[
\sup_{\eta}\left|\frac{\partial}{\partial\eta}f_{G,K-p}\!\left(\frac{s^2}{\eta^2}\right)\right|\le C e^{-cs^2}\{(s^2)^{(K-p)/2-1}+(s^2)^{(K-p)/2}\}.
\]
Integrating over $[C_n,\infty)$ gives
\[
\sup_{\eta}\left|\int_{C_n}^\infty\frac{\partial}{\partial\eta}f_{G,K-p}\!\left(\frac{s^2}{\eta^2}\right)\dd s^2\right|
\le C\int_{C_n}^\infty e^{-cu}(u^{(K-p)/2-1}+u^{(K-p)/2})\dd u.
\]
Using the standard incomplete gamma bound,
\[
\int_t^\infty u^\alpha e^{-cu}\dd u\lesssim_{\alpha,c}e^{-ct}(1+t^\alpha),
\]
and taking $C_n=\widetilde\kappa\log n$ yields
\[
\sup_{\eta}\left|\int_{C_n}^\infty\frac{\partial}{\partial\eta}f_{G,K-p}\!\left(\frac{s^2}{\eta^2}\right)\dd s^2\right|
\lesssim e^{-c\widetilde\kappa\log n}(1+(\widetilde\kappa\log n)^{(K-p)/2}).
\]
Choosing $\widetilde\kappa$ sufficiently large gives the bound $O(n^{-2})$, proving \eqref{eq:prop_score_tail_chi_sq_simple}.
\end{proof}

Next, consider the following result.

\begin{lemm}\label{lem:entropy_joint_density_deriv}
Let 
\begin{align}
\label{eq:cal_f_bar_trd}
\wb{\mathcal F}_{\trd}=\Big\{f_{G,K-p}:\operatorname{supp}(G)\subset[\underline L_\trd,\overline U_\trd]\Big\},
\end{align}
where $f_{G,K-p}$\footnote{In this lemma and the subsequent proof, we shall occasionally abuse the notation slightly to use $f_G$ for $f_{G,K-p}$.} is defined in \eqref{eq:lik_1d_limma_trend}. For $\eta\in[\underline M_\trd/2,\,2\overline M_\trd]$, define the density and derivative semi-metrics on $(0,\infty)$ by
\[
\|f_{G_1}-f_{G_2}\|_{\infty,\eta}:=\sup_{s^2>0}\sup_{\eta\in[\underline M_\trd/2,\,2\overline M_\trd]}\left|f_{G_1,K-p}(s^2/\eta^2)-f_{G_2,K-p}(s^2/\eta^2)\right|,
\]
\[
\|f_{G_1}-f_{G_2}\|_{\infty,\partial,\eta}:=\sup_{s^2>0}\sup_{\eta\in[\underline M_\trd/2,\,2\overline M_\trd]}\left|\frac{\partial}{\partial\eta}f_{G_1,K-p}(s^2/\eta^2)-\frac{\partial}{\partial\eta}f_{G_2,K-p}(s^2/\eta^2)\right|.
\]
Let $\|\cdot\|_{\mathrm{joint}}:=\|\cdot\|_{\infty,\eta}\vee \|\cdot\|_{\infty,\partial,\eta}$. Then there exists a constant $C>0$, depending only on $K,p,\underline L_\trd,\overline U_\trd,\underline M_\trd,\overline M_\trd$, such that for all $\varepsilon\in(0,1)$,
\begin{align}
\label{eq:entropy_joint_density_deriv_global}
\log N\!\left(\varepsilon,\wb{\mathcal F}_{\trd},\|\cdot\|_{\mathrm{joint}}\right)\le C\bigl(\log(1/\varepsilon)\bigr)^2.
\end{align}
Moreover, the same bound holds if the suprema over $s^2>0$ in the definitions of $\|\cdot\|_{\infty,\eta}$ and $\|\cdot\|_{\infty,\partial,\eta}$ are replaced by suprema over any interval $[A_n,B_n]$ with $0<A_n<B_n<\infty$.
\end{lemm}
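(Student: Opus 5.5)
We follow the standard moment-matching route for entropy bounds of NPMLE mixture classes, as in Zhang (2009) and the $\chi^2$ analogue of Ignatiadis and Sen. Throughout write $d:=K-p$ and $u:=s^2/\eta^2$, and let $k(u\mid\tau^2)$ denote either $p_{\chi^2}(u\mid d,\tau^2)$ or its $\eta$-derivative from Lemma~\ref{lem:prop_chi_sq}(1). Both densities and derivatives are then linear functionals of $G$ against the kernel $k$, and we only need to build, for each $G$, a discrete $G'$ with few atoms that approximates $G$ uniformly over $(s^2,\eta)$.

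\emph{Truncation.} Since $\tau^2\in[\underline L_\trd,\overline U_\trd]$ and $\eta\in[\underline M_\trd/2,2\overline M_\trd]$, the kernel is bounded by $Ce^{-cu}(u^{d/2-1}+u^{d/2})$, and this bound holds uniformly. Hence for $s^2\ge B:=C'\log(1/\varepsilon)$ both the density and derivative kernels are below $\varepsilon$, for every $G$. So it suffices to control the suprema over $s^2\in(0,B]$. For $d=2$ the power $u^{d/2-1}$ is bounded and no difficulty arises near $0$. For small $s^2$ the kernel is smooth in $u$ with bounded derivatives, because the factor $u^{d/2-1}$ does not depend on $\tau^2$ and factors out of the $\tau$-dependence.

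\emph{Moment matching.} Factor $k(u\mid\tau^2)=u^{d/2-1}\,\tau^{-d}\exp(-du/(2\tau^2))\,q(u,\tau^{-2},\eta)$, where $q$ is a polynomial of degree at most one in $\tau^{-2}$. Set $w:=\tau^{-2}\in[1/\overline U_\trd,1/\underline L_\trd]$. Taylor-expand $w^{d/2}e^{-duw/2}$ in $w$ around the midpoint of this interval. Because $u\le B/\underline M_\trd^2\lesssim\log(1/\varepsilon)$, Taylor's remainder $\frac{(Cu)^m}{m!}$ becomes $\le\varepsilon$ once $m\asymp\log(1/\varepsilon)$; Stirling's formula gives this since $u\lesssim\log(1/\varepsilon)$. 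By Carathéodory's theorem, for each $G$ there is a discrete $G'$ with at most $m+2$ atoms in the same interval whose first $m+1$ moments in $w$ agree with those of $G$. Then $\|f_G-f_{G'}\|_{\mathrm{joint}}\lesssim\varepsilon\cdot\mathrm{poly}(\log(1/\varepsilon))$, and we rescale $\varepsilon$ to absorb the polynomial factor.

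\emph{Discretization and counting.} Next, round the atoms of $G'$ to an $\varepsilon$-grid in $w$ and its weights to an $\varepsilon/m$-grid on the simplex. The kernel is Lipschitz in $w$, with a constant polynomial in $\log(1/\varepsilon)$ on $u\le B$, and bounded, so this rounding changes the joint norm by $O(\varepsilon\,\mathrm{polylog})$. The number of resulting discrete measures is at most $(C/\varepsilon)^{2(m+2)}$, so the log covering number is $\lesssim m\log(1/\varepsilon)\asymp(\log(1/\varepsilon))^2$, which gives \eqref{eq:entropy_joint_density_deriv_global}. The final claim is immediate, since restricting the supremum to $[A_n,B_n]$ can only shrink the semi-norms.

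\emph{Main obstacle.} The delicate step is the Taylor control. The derivative kernel carries the extra factor $u w/\eta^3$, and the expansion must hold uniformly in $\eta$ and in $u$ up to order $\log(1/\varepsilon)$. We would handle this by bounding the $j$-th $w$-derivative of $w^{a}e^{-duw/2}(\alpha+\beta uw)$ by $C^j(1+u)^{j+1}$, which keeps the remainder at $(Cu)^{m+1}/m!\le\varepsilon$.
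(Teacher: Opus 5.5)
\emph{Gap in the Taylor step, which fails as written when $K-p$ is odd.}

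Your overall architecture is the right one, and it is the construction behind Lemma~S.4 of Ignatiadis--Sen that the paper invokes:
\begin{itemize}
\item truncate at $s^2\lesssim\log(1/\varepsilon)$;
\item replace $G$ by a Carath\'eodory measure with $O(\log(1/\varepsilon))$ atoms that matches finitely many moments;
\item round atoms and weights, and count.
\end{itemize}
You differ from the paper in handling the density and $\eta$-derivative kernels with one moment-matched $G'$. The paper instead uses $N(\varepsilon,d_1\vee d_2)\le N(\varepsilon,d_1)N(\varepsilon,d_2)$, cites Lemma~S.4 for the density part, and reruns the construction for the truncated derivative kernel. Your simultaneous treatment is legitimate, since both kernels have the form $u^{d/2-1}w^{d/2}e^{-duw/2}(\alpha+\beta uw)$ with $\alpha,\beta$ bounded uniformly in $\eta$.

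The problem is the step you yourself single out. The bound $C^j(1+u)^{j+1}$ on the $j$-th $w$-derivative of $w^{a}e^{-duw/2}(\alpha+\beta uw)$ is false when $a=d/2$ is a half-integer, i.e.\ when $K-p$ is odd. Already at $u=0$ this derivative is $\alpha\,a(a-1)\cdots(a-j+1)\,w^{a-j}$, whose modulus grows like $(j-1)!\,w^{a-j}$.

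With the correct growth, take the supremum over the intermediate point in the Lagrange remainder of your expansion about the midpoint $w_0$ of $[1/\overline U_\trd,1/\underline L_\trd]$. The remainder is then only of order $(h/w_{\min})^{m+1}$, up to polynomial factors, where $h$ is the half-width and $h/w_{\min}=(\overline U_\trd/\underline L_\trd-1)/2$. This does not go to zero once $\overline U_\trd\ge 3\underline L_\trd$. So the claimed remainder $(Cu)^{m+1}/m!$, and with it the atom count $m\asymp\log(1/\varepsilon)$, is not established for odd $K-p$. For even $K-p$, $w^{d/2}$ is a polynomial and your bound is correct.

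\emph{Repair.} Do not Taylor-expand the factor $w^{d/2}$; put it into the moments instead.
\begin{itemize}
\item Expand only $e^{-duw/2}(\alpha+\beta uw)$ in $w$. Its $j$-th derivative really is at most $C^j(1+u)^{j+1}$, so the remainder is at most $C^{m+1}(1+u)^{m+2}h^{m+1}/(m+1)!$.
\item Including the prefactor $u^{d/2-1}w^{d/2}$, this is $O(\varepsilon)$ for $u\lesssim\log(1/\varepsilon)$ once $m\asymp\log(1/\varepsilon)$ with a large enough constant.
\item Then use Carath\'eodory to match the generalized moments $\int w^{d/2+j}\,G(\mathrm{d}w)$, $0\le j\le m+1$, with a probability measure on the same interval having at most $m+3$ atoms.
\end{itemize}
Equivalently, expand $e^{-duw/2}=\sum_j(-duw/2)^j/j!$ about $u=0$, as in the construction the paper relies on (compare the proof of Lemma~\ref{lem:metric_entropy_2d}). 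This gives exactly the factorial remainder you want.

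Alternatively, keep plain moments in $w$ and bound the midpoint Taylor tail by a Cauchy estimate on the disk of radius $r\in(h,w_0)$ about $w_0$, on which $\operatorname{Re} w>0$. This gives geometric decay $(h/r)^{m}$ uniformly in $u$ after multiplying by $u^{d/2-1}$.

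With any of these changes, your truncation, rounding and counting steps yield the $(\log(1/\varepsilon))^2$ bound. The restriction to $[A_n,B_n]$ is immediate, as you say.
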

\begin{proof}
For any semi-metrics $d_1,d_2$ on a class $\mathcal F$, the elementary product-cover bound yields
\[
N(\varepsilon,\mathcal F,d_1\vee d_2)\le N(\varepsilon,\mathcal F,d_1)\,N(\varepsilon,\mathcal F,d_2),
\]
and hence
\[
\log N(\varepsilon,\mathcal F,d_1\vee d_2)\le \log N(\varepsilon,\mathcal F,d_1)+\log N(\varepsilon,\mathcal F,d_2).
\]
We apply this with $\mathcal F=\wb{\mathcal F}_{\trd}$, $d_1=\|\cdot\|_{\infty,\eta}$, and $d_2=\|\cdot\|_{\infty,\partial,\eta}$, which yields
\[
\log N\!\left(\varepsilon,\wb{\mathcal F}_\trd,\|\cdot\|_{\mathrm{joint}}\right)\le \log N\!\left(\varepsilon,\wb{\mathcal F}_\trd,\|\cdot\|_{\infty,\eta}\right)+\log N\!\left(\varepsilon,\wb{\mathcal F}_\trd,\|\cdot\|_{\infty,\partial,\eta}\right).
\]
By Lemma~S.4 of \citet{ignatiadis2025empirical}, there exists $C_1>0$ depending only on the problem parameters $K,p,\underline L_\trd,\overline U_\trd,\underline M_\trd,\overline M_\trd$ such that for all $\varepsilon\in(0,1)$,
\[
\log N\!\left(\varepsilon,\wb{\mathcal F}_\trd,\|\cdot\|_{\infty,\eta}\right)\le C_1\bigl(\log(1/\varepsilon)\bigr)^2.
\]
Next, we consider the derivative term. For $\tau^2\in[\underline L_\trd,\overline U_\trd]$, differentiating under the integral sign gives
\[
\frac{\partial}{\partial\eta}f_G(s^2/\eta^2)=\int \frac{\partial}{\partial\eta}p_{\chi^2}(s^2/\eta^2\mid K-p,\tau^2)\,G(\dd \tau^2).
\]
From Lemma~\ref{lem:prop_chi_sq}, we can conclude that there exists a constant $C_0>0$ depending only on the problem parameters $K,p,\underline L_\trd,\overline U_\trd,\underline M_\trd,\overline M_\trd$ such that for all $s^2>0$, $\eta\in[\underline M_\trd/2,2\overline M_\trd]$, and $\tau^2\in[\underline L_\trd,\overline U_\trd]$,
\[
\left|\frac{\partial}{\partial\eta}p_{\chi^2}(s^2/\eta^2\mid K-p,\tau^2)\right|\le C_0(1+s^2)(s^2)^{(K-p)/2-1}\exp(-C_0^{-1}s^2).
\]
Consequently,
\[
\sup_{G\in\wb{\mathcal F}_\trd}\sup_{\eta\in[\underline M_\trd/2,2\overline M_\trd]}\left|\frac{\partial}{\partial\eta}f_G(s^2/\eta^2)\right|\le C_0(1+s^2)(s^2)^{(K-p)/2-1}\exp(-C_0^{-1}s^2).
\]
Choose $B(\varepsilon):=C_2\log(1/\varepsilon)$, for an appropriate constant $C_2>0$ so that we have 
\[
\sup_{s^2\ge B(\varepsilon)}\sup_{G}\sup_{\eta}\left|\partial_\eta f_G(s^2/\eta^2)\right|\le \varepsilon/2.\]
Then any $\varepsilon$-cover of $\{f_G:G\in\wb{\mathcal F}_\trd\}$ in the semi-metric $\|\cdot\|_{\infty,\partial,\eta}$ may be constructed by covering only the truncated domain $s^2\in[0,B(\varepsilon)]$. By the above truncation,
\[
\log N\!\left(\varepsilon,\wb{\mathcal F}_\trd,\|\cdot\|_{\infty,\partial,\eta}\right)\le \log N\!\left(\varepsilon,\wb{\mathcal F}_\trd,\|\cdot\|_{\infty,\partial,[0,B(\varepsilon)],\eta}\right),
\]
where
\[
\|f_{G_1}-f_{G_2}\|_{\infty,\partial,[0,B(\varepsilon)],\eta}:=\sup_{s^2\in[0,B(\varepsilon)]}\sup_{\eta\in[\underline M_\trd/2,\,2\overline M_\trd]}\left|\frac{\partial}{\partial\eta}f_{G_1}(s^2/\eta^2)-\frac{\partial}{\partial\eta}f_{G_2}(s^2/\eta^2)\right|.
\]
On $s^2\in[0,B(\varepsilon)]$, $\eta\in[\underline M_\trd/2,2\overline M_\trd]$, and $\tau^2\in[\underline L_\trd,\overline U_\trd]$, the derivative kernel $(s^2,\eta,\tau^2)\mapsto \partial_\eta f_{\tau^2}(s^2/\eta^2)$ is smooth and uniformly bounded. Therefore, the same discretization-and-moment-matching construction used to prove Lemma~S.4 of \citet{ignatiadis2025empirical} applies to this derivative kernel on $[0,B(\varepsilon)]$, yielding a constant $C_2>0$ depending only on $\nu,\underline L_\trd,\overline U_\trd,\underline M_\trd,\overline M_\trd$ such that
\[
\log N\!\left(\varepsilon,\wb{\mathcal F}_\trd,\|\cdot\|_{\infty,\partial,\eta}\right)\le C_2\bigl(\log(1/\varepsilon)\bigr)^2.
\]

Combining the density and derivative bounds completes the proof with $C:=C_1+C_2$.
\end{proof}

\begin{lemm}
\label{lem:derivative_g_trd}
Fix $s^2>0$ and $|z|\ge \underline z$. Define
\begin{align}
\wh g_{\trd}(\eta;z,s^2)
&= C_{K,p}
\int_{0}^{\infty}
\frac{(t^2)^{-\frac{K-p+1}{2}} \left(s^2/\eta^2\right)^{\frac{K-p}{2}-1}}
{\sqrt{(K-p+1)t^2-(K-p)\left(s^2/\eta^2\right)}}
f_{\widehat{G}_{\trd},K-p+1}(t^2)\\
&~~~~~~~~~~~~~~~~~~~~\times \mathds{1}\!\left\{t^2 \ge \frac{(K-p)\left(s^2/\eta^2\right)+z^2/(\nu^2\eta^2)}{K-p+1}\right\}\, \dd t^2,
\end{align}
for $\eta\in(0,\infty)$. Then
\begin{align}
\frac{d}{d\eta}\wh g_{\trd}(\eta;z,s^2)
&= -C_{K,p}(K-p-2)\frac{s^2}{\eta^3}
\int_{0}^{\infty}
\frac{(t^2)^{-\frac{K-p+1}{2}} \left(s^2/\eta^2\right)^{\frac{K-p}{2}-2}}
{\sqrt{(K-p+1)t^2-(K-p)\left(s^2/\eta^2\right)}}
f_{\widehat{G}_{\trd},K-p+1}(t^2)\\
&~~~~~~~~~~~~~~~~~~~~~~~~~~~~~~~~~~~~~~~~~~~~~~~\times\mathds{1}\!\left\{t^2 \ge \frac{(K-p)\left(s^2/\eta^2\right)+z^2/(\nu^2\eta^2)}{K-p+1}\right\}
\, \dd t^2 \notag\\
&\quad -C_{K,p}(K-p)\frac{s^2}{\eta^3}
\int_{0}^{\infty}
\frac{(t^2)^{-\frac{K-p+1}{2}} \left(s^2/\eta^2\right)^{\frac{K-p}{2}-1}}
{\left((K-p+1)t^2-(K-p)\left(s^2/\eta^2\right)\right)^{3/2}}
f_{\widehat{G}_{\trd},K-p+1}(t^2)\\
&~~~~~~~~~~~~~~~~~~~~~~~~~~~~~~~~~~~~~~~~~~~~~~~\times\mathds{1}\!\left\{t^2 \ge \frac{(K-p)\left(s^2/\eta^2\right)+z^2/(\nu^2\eta^2)}{K-p+1}\right\}
\, \dd t^2 \notag\\
&\quad + C_{K,p}\,
f_{\widehat{G}_{\trd},K-p+1}\!\left(\frac{(K-p)\left(s^2/\eta^2\right)+z^2/(\nu^2\eta^2)}{K-p+1}\right)\times
\frac{2\big((K-p)s^2+z^2/\nu^2\big)}{(K-p+1)\eta^3}\\
&~~~~~~~~~~~~~~~~~~~~
\times\left(\frac{(K-p)\left(s^2/\eta^2\right)+z^2/(\nu^2\eta^2)}{K-p+1}\right)^{-\frac{K-p+1}{2}}
\frac{\left(s^2/\eta^2\right)^{\frac{K-p}{2}-1}}{\sqrt{z^2/(\nu^2\eta^2)}}
\end{align}
\end{lemm}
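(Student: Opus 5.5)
The identity is a Leibniz-rule computation for a parametric integral with a moving lower limit, so I would organise the proof that way. Write $\kappa:=K-p$ and $x(\eta):=s^2/\eta^2$, so that $x'(\eta)=-2s^2/\eta^3$. Define the lower limit and the integrand
\[
\ell(\eta):=\frac{\kappa x(\eta)+z^2/(\nu^2\eta^2)}{\kappa+1},
\qquad
h(\eta,t^2):=C_{K,p}\,(t^2)^{-\frac{\kappa+1}{2}}\,x(\eta)^{\frac{\kappa}{2}-1}\,D(\eta,t^2)^{-1/2}\,f_{\widehat G_\trd,\kappa+1}(t^2),
\]
where $D(\eta,t^2):=(\kappa+1)t^2-\kappa x(\eta)$. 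Then $\wh g_\trd(\eta;z,s^2)=\int_{\ell(\eta)}^\infty h(\eta,t^2)\,\dd t^2$. The plan is to establish
\[
\frac{d}{d\eta}\wh g_\trd=\int_{\ell(\eta)}^\infty \partial_\eta h(\eta,t^2)\,\dd t^2-\ell'(\eta)\,h(\eta,\ell(\eta)),
\]
and then identify the three displayed terms.

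The computation proceeds in three steps.
\begin{enumerate}
\item Differentiating the factor $x^{\kappa/2-1}$ gives $(\kappa/2-1)x^{\kappa/2-2}x'=-(\kappa-2)(s^2/\eta^3)x^{\kappa/2-2}$. This is the first term.
\item Since $\partial_\eta D=-\kappa x'=2\kappa s^2/\eta^3$, differentiating $D^{-1/2}$ gives $-\tfrac12 D^{-3/2}\partial_\eta D=-\kappa(s^2/\eta^3)D^{-3/2}$. This is the second term.
\item For the boundary term, $\ell'(\eta)=-2(\kappa s^2+z^2/\nu^2)/\{(\kappa+1)\eta^3\}$, and at $t^2=\ell(\eta)$ we have $D(\eta,\ell(\eta))=z^2/(\nu^2\eta^2)$. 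So $-\ell'(\eta)h(\eta,\ell(\eta))$ is exactly the third displayed term, including its $\sqrt{z^2/(\nu^2\eta^2)}$ denominator and its positive sign.
\end{enumerate}

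The main step is justifying this Leibniz formula, because the integrand has the factor $D^{-1/2}$, and $D^{-3/2}$ after differentiation, which could blow up. The hypothesis $|z|\ge\underline z$ resolves this. On the integration region $t^2\ge\ell(\eta)$ we have $D(\eta,t^2)\ge z^2/(\nu^2\eta^2)\ge \underline z^{\,2}/(\nu^2\eta^2)>0$, and this lower bound is uniform for $\eta$ in a small compact neighbourhood $[\eta_0-\delta,\eta_0+\delta]\subset(0,\infty)$.

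The next ingredients are bounds on $f_{\widehat G_\trd,\kappa+1}$ and on the $t$-decay. Because $\widehat G_\trd$ is supported in $[\underline L_\trd,\wb U_\trd]$, $f_{\widehat G_\trd,\kappa+1}$ is continuous and bounded. Moreover $\kappa\ge2$, so $(t^2)^{-(\kappa+1)/2}$ is integrable at infinity, with exponent at least $3/2$. Consequently both $h$ and $\partial_\eta h$ are dominated, uniformly over the neighbourhood, by a constant times $(t^2)^{-(\kappa+1)/2}\mathbf 1\{t^2\ge \min_\eta \ell(\eta)\}$, which is integrable because $\min_\eta\ell(\eta)>0$.

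To make the argument rigorous, I would split the difference quotient as
\[
\frac{\wh g_\trd(\eta+\epsilon)-\wh g_\trd(\eta)}{\epsilon}
=\int_{\ell(\eta+\epsilon)}^\infty\frac{h(\eta+\epsilon,t^2)-h(\eta,t^2)}{\epsilon}\,\dd t^2
-\frac1\epsilon\int_{\ell(\eta)}^{\ell(\eta+\epsilon)}h(\eta,t^2)\,\dd t^2 .
\]
The first piece converges to $\int_{\ell(\eta)}^\infty\partial_\eta h$ by the mean value theorem together with dominated convergence, using the domination above. The indicator boundary moves continuously, and $D$ stays bounded away from $0$ on the relevant region for $\epsilon$ small, since $D(\eta,\ell(\eta+\epsilon))\to z^2/(\nu^2\eta^2)>0$. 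The second piece converges to $\ell'(\eta)h(\eta,\ell(\eta))$ because $h(\eta,\cdot)$ is continuous near $\ell(\eta)$, which is again ensured by $D>0$ there. Combining the two limits with the three explicit derivative computations gives the stated formula.
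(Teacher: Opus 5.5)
Your proposal is correct and follows essentially the same route as the paper: write the function as an integral with moving lower limit $a(\eta)=\{(K-p)s^2/\eta^2+z^2/(\nu^2\eta^2)\}/(K-p+1)$ and apply Leibniz' rule. The two interior terms then come from differentiating $(s^2/\eta^2)^{\frac{K-p}{2}-1}$ and the inverse square root, and the boundary term comes from $a'(\eta)$ together with the identity $(K-p+1)a(\eta)-(K-p)s^2/\eta^2=z^2/(\nu^2\eta^2)$. The only difference is that you also justify the differentiation under the integral, using the uniform lower bound $z^2/(\nu^2\eta^2)\ge \underline z^{\,2}/(\nu^2\eta^2)>0$ on the square-root argument and the integrable tail $(t^2)^{-(K-p+1)/2}$; the paper invokes Leibniz' rule without this justification.
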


\begin{proof}
Write
\begin{align}
\wh g_{\trd}(\eta;z,s^2)
&=
C_{K,p}
\int_{\frac{ (K-p)\left(s^2/\eta^2\right) + z^2/(\nu^2\eta^2)}{K-p+1}}^{\infty}
\frac{ (t^2)^{-\,\frac{K-p+1}{2}} 
       \left(s^2/\eta^2\right)^{\frac{K-p}{2}-1} }
     { \sqrt{ (K-p+1)t^2 - (K-p)\left(s^2/\eta^2\right) } }
f_{\widehat{G}_{\trd},K-p+1}(t^2)\,\dd t^2 .
\end{align}
We differentiate this integral with respect to $\eta$ using Leibniz' rule.

Set
\begin{align}
\Phi(\eta,t^2)
:=
\frac{ (t^2)^{-\,\frac{K-p+1}{2}} 
       \left(s^2/\eta^2\right)^{\frac{K-p}{2}-1}}
     { \sqrt{ (K-p+1)t^2 - (K-p)\left(s^2/\eta^2\right) } }
f_{\widehat{G}_{\trd},K-p+1}(t^2),
\end{align}
and
\begin{align}
a(\eta):=
\frac{ (K-p)\left(s^2/\eta^2\right) + z^2/(\nu^2\eta^2)}
     { K-p+1 }.
\end{align}
Then
\begin{align}
\wh g_{\trd}(\eta;z,s^2)
=
C_{K,p}\int_{a(\eta)}^\infty \Phi(\eta,t^2)\,dt^2.
\end{align}
Hence, by Leibniz' rule,
\begin{align}
\frac{d}{d\eta}\wh g_{\trd}(\eta;z,s^2)
=
C_{K,p}\left\{
\int_{a(\eta)}^\infty \frac{\partial}{\partial \eta}\Phi(\eta,t^2)\,dt^2
-
a'(\eta)\Phi(\eta,a(\eta))
\right\}.
\end{align}
We now compute the two terms separately.

First, since
\begin{align}
\frac{d}{d\eta}\left(\frac{s^2}{\eta^2}\right)
=
-\frac{2s^2}{\eta^3},
\end{align}
we have
\begin{align}
\frac{\partial}{\partial \eta}
\left(s^2/\eta^2\right)^{\frac{K-p}{2}-1}
=
-\,(K-p-2)\frac{s^2}{\eta^3}
\left(s^2/\eta^2\right)^{\frac{K-p}{2}-2},
\end{align}
and
\begin{align}
\frac{\partial}{\partial \eta}
\left((K-p+1)t^2-(K-p)\left(s^2/\eta^2\right)\right)^{-1/2}
=
-\,(K-p)\frac{s^2}{\eta^3}
\left((K-p+1)t^2-(K-p)\left(s^2/\eta^2\right)\right)^{-3/2}.
\end{align}
Therefore
\begin{align}
\frac{\partial}{\partial \eta}\Phi(\eta,t^2)
&=
-\,(K-p-2)\frac{s^2}{\eta^3}
\frac{ (t^2)^{-\,\frac{K-p+1}{2}}
       \left(s^2/\eta^2\right)^{\frac{K-p}{2}-2} }
     { \sqrt{ (K-p+1)t^2-(K-p)\left(s^2/\eta^2\right) } }
f_{\widehat{G}_{\trd},K-p+1}(t^2)\notag\\
&\quad
-\,(K-p)\frac{s^2}{\eta^3}
\frac{ (t^2)^{-\,\frac{K-p+1}{2}}
       \left(s^2/\eta^2\right)^{\frac{K-p}{2}-1} }
     { \left((K-p+1)t^2-(K-p)\left(s^2/\eta^2\right)\right)^{3/2} }
f_{\widehat{G}_{\trd},K-p+1}(t^2).
\end{align}
Substituting this into the integral term and rewriting the lower limit using the indicator gives the first two terms in the claimed formula.

Next, for the boundary term,
\begin{align}
a'(\eta)
&=
\frac{d}{d\eta}
\left[
\frac{ (K-p)\left(s^2/\eta^2\right) + z^2/(\nu^2\eta^2)}
     { K-p+1 }
\right]\notag\\
&=
-\frac{2\big((K-p)s^2+z^2/\nu^2\big)}{(K-p+1)\eta^3}.
\end{align}
Moreover, by direct substitution,
\begin{align}
(K-p+1)a(\eta)-(K-p)\left(s^2/\eta^2\right)
=
\frac{z^2}{\nu^2\eta^2},
\end{align}
so that
\begin{align}
\Phi(\eta,a(\eta))
&=
\left(
\frac{ (K-p)\left(s^2/\eta^2\right) + z^2/(\nu^2\eta^2)}
     { K-p+1 }
\right)^{-\,\frac{K-p+1}{2}}
\frac{\left(s^2/\eta^2\right)^{\frac{K-p}{2}-1}}
     {\sqrt{z^2/(\nu^2\eta^2)}}\notag\\
&\hspace{3em}\times
f_{\widehat{G}_{\trd},K-p+1}\!\left(
\frac{ (K-p)\left(s^2/\eta^2\right) + z^2/(\nu^2\eta^2)}
     { K-p+1 }
\right).
\end{align}
Hence
\begin{align}
&-\,a'(\eta)\Phi(\eta,a(\eta))\\
&=
f_{\widehat{G}_{\trd},K-p+1}\!\left(
\frac{ (K-p)\left(s^2/\eta^2\right) + z^2/(\nu^2\eta^2)}
     { K-p+1 }
\right)\notag\\
&\hspace{1em}\times
\frac{2\big((K-p)s^2+z^2/\nu^2\big)}{(K-p+1)\eta^3}
\left(
\frac{ (K-p)\left(s^2/\eta^2\right) + z^2/(\nu^2\eta^2)}
     { K-p+1 }
\right)^{-\,\frac{K-p+1}{2}}
\frac{\left(s^2/\eta^2\right)^{\frac{K-p}{2}-1}}
     {\sqrt{z^2/(\nu^2\eta^2)}}.
\end{align}
Combining the integral term and the boundary term proves the result.
\end{proof}

\section{Proofs of validity for oracle p-values}
\subsection{Proof of Lemma~\ref{lem:bayes_trd_p_val_cs}}
Notice that if $\theta_i=0$, for any $t \ge 0$, by \eqref{eq:bayesian_cs_model_2}
\begin{align}
\label{eq:validity_seq_1}
    \mathbb P_G\left[|Z_i| \ge t \mid S^2_i,M_i\right] &= \mathbb E_G\left[\mathbb P_G\left[|Z_i| \ge t \mid \theta_i=0,\tau^2_i,S^2_i,M_i\right]\mid S^2_i,M_i\right]\\
    &=\mathbb E_G\left[2\Phi(-t/\{\nu\xi_0(M_i)\tau_i\}) \mid S^2_i, M_i\right]:=\PregFun(t,S^2_i,M_i;G).
\end{align}
Also observe that, by \eqref{eq:popn_emp_bayes_p_val} for all $i \in [n]$
\begin{align}
\label{eq:def_preg_func}
P^\trd_i=\PregFun(Z_i,S^2_i,M_i;G).
\end{align}
Therefore, the lemma follows using the probability integral transform conditioned on $S^2_i$ and $M_i$.

\subsection{Proof of Lemma~\ref{lem:val_p_mis}}
Under \eqref{eq:misspecified-limma-reg} and \eqref{eq:tau_mis_v_mis}, the penultimate equality in \eqref{eq:validity_seq_1} does not hold. Nevertheless, one can directly condition on $\misV{i}$ and conclude that under the modified working model, for all $\theta_i=0$, or equivalently, $\omega_i=\theta_i/\xi_\mis(M_i)=0$, we define $O_i:=Z_i/\xi_\mis(M_i)$. Clearly,
\[
O_i \mid \misT{i} \sim \dnorm(\omega_i, \nu^2\misT{i}), \qquad \mbox{for all $i \in [n]$.}
\]
Therefore, we have for any $t \ge 0$,
\begin{align}
\label{eq:validity_seq_2}
    \mathbb P_{G_\mis}\left[|O_i| \ge t \mid \misV{i}\right] &= \mathbb E_{G_\mis}\left[\mathbb P_{G_\mis}\left[|O_i| \ge t \mid \omega_i=0,\misT{i},\misV{i}\right]\mid \misV{i}\right]\\
    &=\mathbb E_{G_\mis}\left[2\Phi(-t/\{\nu\tau_{i,\mis}\}) \mid \misV{i}\right]:=\PregFun_\mis(t,\misV{i};G_\mis).
\end{align}
Again, by definition in \eqref{eq:popn_emp_bayes_p_val}, we have
\[
P^\trd_{\mis,i}:=\PregFun_\mis(O_i,\misV{i};G_\mis),
\]
we can conclude using the probability integral transform that under \eqref{eq:misspecified-limma-reg} and \eqref{eq:tau_mis_v_mis}
\[
P^\trd_{\mis,i} \mid \omega_i=0,\misV{i} \sim \mathrm{Unif}(0,1), \qquad \mbox{almost surely, under the mixing measure $G_\mis$ on $\misT{i}$.}
\]

\subsection{Proof of Lemma~\ref{lem:valid_oracle_p_vals}}
Observe that under Assumption~\ref{assu:design}, the triple $(Z_i,S^2_i,A_i)$ satisfy \eqref{eq:def_z_i_dis_2d} and hence we can conclude that 
\begin{align}
\label{eq:validity_seq_3}
    \mathbb P_H\left[|Z_i| \ge t \mid S^2_i,A_i\right] &= \mathbb E_H\left[\mathbb P_H\left[|Z_i| \ge t \mid \theta_i=0,\sigma^2_i,\mu_i,S^2_i,A_i\right]\mid S^2_i,A_i\right]\\
    &=\mathbb E_H\left[2\Phi(-t/\{\nu\sigma_i\}) \mid S^2_i,A_i\right]:=\PjtFun(t,S^2_i,A_i;H).
\end{align}
Finally, using
\[
P^\jt_i=\PjtFun(Z_i,S^2_i,A_i;H),
\]
and the probability integral transform, the lemma follows.

\section{Results for parametric prior specification}
\label{sec:param_prior}
In this section, we collect results when $G$ in \eqref{eq:np-model-trnd} or $H$ in \eqref{eq:limma_trend_2d_prior} belong to parametric families. This connects our results to the established results in \cite{smyth2004linear} or \cite{law2014voom}. 

\subsection{Parametric priors in \limmatrd{}}
Define the following version of pooled variances corresponding to each unit under consideration.
    \begin{align}
        \label{eq:param_pooled_var}
        \wt S^2_i:=\frac{(K-p)S^2_i+\kappa_0\xi^2_0(M_i)s^2_0}{(K-p)+\kappa_0}, \qquad \mbox{for $i \in [n]$.}
    \end{align}
Next, consider the following proposition.
\begin{proposition}
\label{prop:param_limtrd}
    Let the distribution of $(Z,S^2,M)$ be distributed as \eqref{eq:bayesian_cs_model_2} and \eqref{eq:reg-model-trnd}. Then 
    \begin{align}
    \label{eq:joint_t_stat}
    \wt T \mid S^2,M,\theta \sim t_{(K-p)+\kappa_0},
    \end{align}
    where $\wt T:=(Z-\theta)/(\nu \wt S)$ and 
    \begin{align}
    \label{eq:generic_wt_S}
        \wt S^2:=\frac{(K-p)S^2+\kappa_0\xi^2_0(M)s^2_0}{(K-p)+\kappa_0}
    \end{align}
\end{proposition}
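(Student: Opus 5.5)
The plan is the standard conjugacy argument for the scaled inverse-$\chi^2$ prior, carried out conditionally on $M$. Fix $M=m$ and write $\xi_0=\xi_0(m)$, $d=K-p$. Under \eqref{eq:reg-model-trnd}, the precision $\lambda:=1/\sigma^2$ given $M=m$ has a Gamma law with shape $\kappa_0/2$ and rate $\kappa_0 s_0^2\xi_0^2/2$. By \eqref{eq:bayesian_cs_model_2}, given $(\sigma^2,M)$ the statistic $dS^2$ has the law of $\sigma^2\chi^2_d$. Its density in $S^2$ is proportional to $\lambda^{d/2}(S^2)^{d/2-1}\exp(-\lambda dS^2/2)$. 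Multiplying by the prior density, I first show that the posterior of $\lambda$ given $(S^2,M)$ is again a Gamma law, with shape $(d+\kappa_0)/2$ and rate $\{dS^2+\kappa_0 s_0^2\xi_0^2\}/2=(d+\kappa_0)\wt S^2/2$. Here $\wt S^2$ is as in \eqref{eq:generic_wt_S}. Equivalently, conditionally on $(S^2,M)$ we have
\[
(d+\kappa_0)\,\wt S^2/\sigma^2 \sim \chi^2_{d+\kappa_0},
\]
which is a function of the conditioning variables and $\sigma^2$ only.

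Second, I use the independence built into \eqref{eq:bayesian_cs_model_2}: given $(\theta,\sigma^2,M)$, $Z$ is independent of $S^2$ with $Z\sim\dnorm(\theta,\nu^2\sigma^2)$. Hence, conditionally on $(S^2,M,\sigma^2,\theta)$, the quantity $W:=(Z-\theta)/(\nu\sigma)$ is standard normal. Its conditional law does not depend on $\sigma^2$ or $S^2$. So $W$ is independent of $\sigma^2$ conditionally on $(S^2,M,\theta)$, and $W\sim\dnorm(0,1)$.

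Third, I write
\[
\wt T=\frac{W}{\sqrt{\wt S^2/\sigma^2}}=\frac{W}{\sqrt{U/(d+\kappa_0)}},\qquad U:=(d+\kappa_0)\wt S^2/\sigma^2 .
\]
Conditionally on $(S^2,M,\theta)$, $W\sim\dnorm(0,1)$ and $U\sim\chi^2_{d+\kappa_0}$, and they are independent by the previous step, since $U$ is a measurable function of $\sigma^2$ once we condition on $(S^2,M)$. The definition of the Student $t$ distribution then gives \eqref{eq:joint_t_stat}.

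The only delicate point is the conditional independence in the third step. I would make it rigorous by computing, for bounded test functions $\varphi,\psi$, the quantity $\mathbb E[\varphi(W)\psi(U)\mid S^2,M]$. Conditioning first on $\sigma^2$ turns the $W$-part into $\mathbb E\varphi(N(0,1))$, and the remaining expectation factorizes. The rest of the argument is routine Gamma–Gamma conjugacy bookkeeping.
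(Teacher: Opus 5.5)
Your proof is correct, but it is organized differently from the paper's.

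\emph{The paper's route.} It never forms the posterior of $\sigma^2$. With $\tau^2=\sigma^{-2}$, it writes the joint density of $(Z,S^2,\tau^2)$ given $M$ and integrates out $\tau^2$. This gives $p(z,s^2\mid M)\propto (s^2)^{\frac{K-p}{2}-1}\bigl\{(z-\theta)^2/\nu^2+(K-p)s^2+\kappa_0 s_0^2\xi_0^2(M)\bigr\}^{-\frac{K-p+\kappa_0+1}{2}}$. It then changes variables from $z$ to $\wt t$. After accounting for the Jacobian, the joint density of $(\wt T,S^2)$ given $M$ factors as $(\wt t^{\,2}+K-p+\kappa_0)^{-\frac{K-p+\kappa_0+1}{2}}$ times a function of $s^2$ alone. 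That single computation delivers both the $t_{(K-p)+\kappa_0}$ law and the independence of $\wt T$ from $S^2$.

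\emph{Your route.} You condition in the opposite order. Conjugacy gives the Gamma posterior of the precision given $(S^2,M)$, and the normal over root-chi-square representation of the $t$ distribution finishes the argument. The only probabilistic subtlety, independence of $W$ and $U$ given $(S^2,M)$, is handled correctly by the tower property as you indicate.

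\emph{What each buys.} Your approach avoids the marginal integral, the Jacobian, and the constant bookkeeping hidden in the paper's $\asymp$ notation. The posterior you compute is exactly the $\Pi(\cdot\mid S^2,M)$ appearing in \eqref{eq:oracle_limma_trend}, so it links directly to Lemma~\ref{lem:limm_trnd_G}. The paper's density calculation also exhibits the marginal law of $S^2$ given $M$ as a by-product. It serves as the template reused in Propositions~\ref{prop:bayes_factor} and~\ref{prop:joint_parametric}. Your argument would carry over to those settings just as easily: replace $\nu^2$ by $\nu^2+v_0^2$ in $W$, or multiply the Gamma posterior by the factor coming from $A$ after integrating out $\mu$.
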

\begin{proof}
    Let us first re-parametrize $\sigma^2$ by the precision $\tau^2:=\sigma^{-2}$. Observe that, using \eqref{eq:bayesian_cs_model_1} and \eqref{eq:reg-model-trnd}, the joint density of $(Z,S^2,\tau^2)$ given $M$ can be written as
    \begin{align}
        &p(z,s^2,\tau^2 \mid M)\\
        &\asymp (\tau^2)^{\frac{(K-p)+\kappa_0+1}{2}-1}\cdot\exp\left(-\frac{1}{2}\left(\frac{(z-\theta)^2\tau^2}{\nu^2}+(K-p)s^2\tau^2+\kappa_0s^2_0\xi^2_0(M)\tau^2\right)\right)\cdot (s^2)^{\frac{K-p}{2}-1},
    \end{align}
    where the symbol $\asymp$ implies equality up to multiplication by absolute constants depending on $K,p,\nu$ and $\kappa_0$. Therefore, the joint distribution of $(Z,S^2)$ given $M$ after marginalizing $\tau^2$ is given by
    \begin{align}
        p(z,s^2 \mid M) &\asymp (s^2)^{\frac{K-p}{2}-1} \cdot \left(\frac{(z-\theta)^2}{\nu^2}+(K-p)s^2+\kappa_0s^2_0\xi^2_0(M)\right)^{-\frac{(\kappa_0+K-p+1)}{2}}.
    \end{align}
    By the change of variable 
    \[
    \wt t:=\frac{z-\theta}{\nu}\cdot\left(\frac{(K-p)s^2+\kappa_0s^2_0\xi^2_0(M)}{(K-p)+\kappa_0}\right)^{-1/2},
    \]
    we can get that
     \begin{align}
        &p(\wt t,s^2 \mid M)\\
        &\asymp (s^2)^{\frac{K-p}{2}-1} \times \left\{\left(\frac{\wt t^2}{(K-p)+\kappa_0}+1\right)\left((K-p)s^2+\kappa_0s^2_0\xi^2_0(M)\right)\right\}^{-\frac{(\kappa_0+K-p+1)}{2}}\\
        &~~~~~~~~~~~~~~~~~~~~~~~~~~~~~~~~~~~~~~~~~~~~~~~\times \left(\frac{(K-p)s^2+\kappa_0s^2_0\xi^2_0(M)}{(K-p)+\kappa_0}\right)^{1/2}\\
        & \asymp \left(\wt t^2+(K-p)+\kappa_0\right)^{-\frac{(\kappa_0+K-p+1)}{2}} \times (s^2)^{\frac{K-p}{2}-1} \times \left((K-p)s^2+\kappa_0s^2_0\xi^2_0(M)\right)^{-\frac{(\kappa_0+K-p)}{2}}.
    \end{align}
    This implies conditioned on $M$, $S^2 \indep \wt T$ and $\wt T \sim t_{(K-p)+\kappa_0}$.
\end{proof}
Next, consider the following lemma connecting the p-values $\{P^\trd_i\}$ defined in \eqref{eq:popn_emp_bayes_p_val} to the limma trend p-values from \cite{law2014voom}. 
\begin{lemm}
    \label{lem:limm_trnd_G}
    Assume that
    \[
    \frac{1}{\sigma^2_i} \mid M_i \sim \frac{\chi^2_{\kappa_0}}{\kappa_0s^2_0\xi^2_0(M_i)}.
    \]
    Then the p-values $\{P^\trd_i\}$ defined in \eqref{eq:popn_emp_bayes_p_val} (or, equivalently \eqref{eq:oracle_limma_trend} with the aforementioned prior) satisfy
    \(
       P^\trd_i=P^\ltrdp_i= 2\wb{F}_{t,(K-p)+\kappa_0}(|Z_i/(\nu\wt S_i)|),
   \)
    where $\wb{F}_{t,d_0}$ is the survival function of a $t$ distribution with $d_0$ degrees of freedom and $\wt S^2_i$ defined in \eqref{eq:param_pooled_var}. This aligns with the \limmatrd{} p-values from \cite{law2014voom}. 
\end{lemm}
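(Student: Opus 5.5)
The plan is to combine the conditional uniformity already established in Proposition~\ref{prop:param_limtrd} with the posterior-integral representation of $P^\trd_i$ from the proof of Lemma~\ref{lem:bayes_trd_p_val_cs}. Recall from \eqref{eq:def_preg_func} that $P^\trd_i=\PregFun(Z_i,S^2_i,M_i;G)$, where for $t\ge 0$
\[
\PregFun(t,S^2_i,M_i;G)=\mathbb E_G\!\left[2\Phi\!\left(-t/\{\nu\sigma_i\}\right)\mid S^2_i,M_i\right].
\]
This equals $\mathbb P_G[|Z_i|\ge t\mid S^2_i,M_i,\theta_i=0]$ by \eqref{eq:validity_seq_1}; the computation there uses only \eqref{eq:bayesian_cs_model_2} and never the value of $\theta_i$ beyond setting it to zero. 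Hence $\PregFun(t,s^2,m;G)$ is, as a deterministic function, the conditional survival function of $|Z|$ at $t$ under the model with $\theta=0$, conditioning on $S^2=s^2$ and $M=m$. The same formula is \eqref{eq:oracle_limma_trend}, so $P^\trd_i=P^\ltrdp_i$ follows directly from the definitions once $G$ is taken to be the law of $\tau_i^2=\sigma_i^2/\xi_0^2(M_i)$, namely $1/\tau_i^2\sim\chi^2_{\kappa_0}/(\kappa_0 s_0^2)$.

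Next, I would compute this conditional survival function explicitly using Proposition~\ref{prop:param_limtrd} with $\theta=0$. Under the parametric prior, $\wt T=Z/(\nu\wt S)\mid S^2,M\sim t_{(K-p)+\kappa_0}$. Since $\wt S$ is a measurable function of $(S^2,M)$, conditioning on $(S^2,M)=(s^2,m)$ gives
\[
\mathbb P[|Z|\ge t\mid s^2,m]=\mathbb P\big[|\wt T|\ge t/(\nu\wt S(s^2,m))\big]=2\wb F_{t,(K-p)+\kappa_0}\big(t/(\nu \wt S(s^2,m))\big),
\]
where the last step uses the symmetry of the $t$ distribution. Evaluating at $t=|Z_i|$, $s^2=S^2_i$, $m=M_i$ then yields $P^\trd_i=2\wb F_{t,(K-p)+\kappa_0}(|Z_i/(\nu\wt S_i)|)$, with $\wt S_i$ as in \eqref{eq:param_pooled_var}. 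Finally, I would note that this is exactly Step 3 of Algorithm~\ref{algo:limma_trend} with the true $(\xi_0,\kappa_0,s_0^2)$ in place of the estimates, which is the \limmatrd{} p-value of \citet{law2014voom}.

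The only delicate point is the first step: the identity defining $P^\trd_i$ must hold pointwise for every realization of $Z_i$, including non-null units, whereas Proposition~\ref{prop:param_limtrd} is a distributional statement. This is handled by treating both sides as functions of $(t,s^2,m)$ computed under the auxiliary model with $\theta=0$, and by choosing the continuous version of the conditional distribution. That version exists because the joint density was computed explicitly in the proof of Proposition~\ref{prop:param_limtrd}, so the equality holds for all $(t,s^2,m)$ and not merely almost everywhere.
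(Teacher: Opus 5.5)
Your proposal is correct and is essentially the paper's proof. Both arguments identify $P^\trd_i$ as the conditional null survival function of $|Z|$ given $(S^2,M)$, evaluated at $(|Z_i|,S_i^2,M_i)$, and then read off its value from the conditional $t_{(K-p)+\kappa_0}$ law of $\wt T$ in Proposition~\ref{prop:param_limtrd}. Your extra remark, which upgrades the almost-sure identity to a pointwise one using the explicit continuous joint density, is a detail the paper leaves implicit; it does not change the argument.
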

\begin{proof}
    Observe that $P^\trd_i$ is obtained by plugging in $(Z_i,S^2_i,M_i)$ in the generic formula
    \begin{align}
        \PregFun(z,s^2,m;G)&:=\mathbb P_G\left[|Z|\ge |z| \mid S^2=s, M=m\right],
    \end{align}
    where $(Z,S^2)$ are distributed as specified in \eqref{eq:bayesian_cs_model_1} and \eqref{eq:reg-model-trnd} with $\theta=0$.
    Using Proposition~\ref{prop:param_limtrd}, we get that if $G$ is same as \eqref{eq:reg-model-trnd}, then 
    \begin{align}
        \PregFun(z,s^2,m;G)&:=\mathbb P_G\left[|Z|\ge |z| \mid \theta=0, S^2=s, M=m\right]\\
        &= \mathbb P_G\left[|Z/(\nu\wt S)|\ge |z/(\nu\wt s)|\mid \theta=0, S^2=s, M=m\right]\\
        & = \mathbb P_G\left[|\wt T_{K-p+\kappa_0}|\ge |z/\nu\wt s|\right]\\
        & = 2\wb{F}_{t,(K-p)+\kappa_0}(|z/(\nu\wt s)|),
    \end{align}
    where $\wt S^2:=\frac{(K-p)S^2+\kappa_0\xi^2_0(M)s^2_0}{(K-p)+\kappa_0}$ and $\wt s^2:=\frac{(K-p)s^2+\kappa_0\xi^2_0(M)s^2_0}{(K-p)+\kappa_0}$. The above formula aligns with the \limmatrd{} p-values from \cite{law2014voom}.
\end{proof}

\subsection{Connection between partially Bayes and fully Bayesian hypothesis testing}
Next, we draw a connection with the partially Bayes approach of \limmatrd{} with fully Bayes hypothesis testing based on posterior odds. In that direction, let us impose the spike-and-slab prior on the $\theta_1,\ldots,\theta_n$, as follows:
\begin{align}
\label{eq:theta_ss_prior}
\theta_i \mid \sigma^2_i \sim \pi\delta_0+(1-\pi)\dnorm(0,v^2_0\sigma^2_i),
\end{align}
in addition to \eqref{eq:bayesian_cs_model_1} and \eqref{eq:reg-model-trnd}. Consider the  test statistics $\wt T_j=Z_j/(\nu \wt S_j)$ for $\wt S_j$ defined in \eqref{eq:param_pooled_var}. Then the posterior odds of the null hypothesis for testing the $j$-th hypothesis is given by
\begin{align}
\label{eq:bayes_factor_def}
\mathcal O_j:=\frac{p(\theta_j=0 \mid \wt T_j, S^2_j,M_j)}{p(\theta_j\neq 0 \mid \wt T_j, S^2_j,M_j)}.
\end{align}
Consider the following proposition on the Bayes factors.
\begin{proposition}
    \label{prop:bayes_factor}
Consider the prior on $\theta_i$ given by \eqref{eq:theta_ss_prior} along with the model from \eqref{eq:bayesian_cs_model_1} and \eqref{eq:reg-model-trnd}. Then the posterior odds $\mathcal O_1,\ldots,\mathcal O_n$ from \eqref{eq:bayes_factor_def} satisfies
\begin{align}
    \mathcal O_j & = \frac{\pi}{(1-\pi)}\cdot \left(\frac{\nu^2}{v^2_0+\nu^2}\right)^{-1/2}\cdot\left(\frac{\wt T^2_j+(K-p)+\kappa_0}{\wt T^2_j(\nu^2/(\nu^2+v^2_0))+(K-p)+\kappa_0}\right)^{-\frac{(K-p)+\kappa_0+1}{2}}, \quad \mbox{for all $j \in [n]$.}
\end{align}
    Furthermore, 
    \[
     \mathcal O_j \rightarrow \frac{\pi}{(1-\pi)}\cdot \left(\frac{\nu^2}{v^2_0+\nu^2}\right)^{-1/2}\exp\left(-\frac{\wt T^2_j}{2}\frac{v^2_0}{\nu^2+v^2_0}\right), \qquad \mbox{if $(K-p)+\kappa_0 \rightarrow \infty$.}
    \]
\end{proposition}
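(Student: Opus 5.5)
The plan is to compute the posterior odds directly. We reduce the data for unit $j$ to $(\wt T_j, S_j^2, M_j)$, which is equivalent to $(Z_j,S_j^2,M_j)$ because the map $z\mapsto z/(\nu\wt S_j)$ is a bijection once $(S_j^2,M_j)$ are fixed. By Bayes' rule,
\[
\mathcal O_j=\frac{\pi}{1-\pi}\cdot\frac{p(z_j\mid \theta_j=0,s_j^2,M_j)}{p(z_j\mid \theta_j\neq0,s_j^2,M_j)},
\]
and the Jacobian of the change of variables cancels in this ratio. So it suffices to compute the two conditional marginal densities of $Z_j$ given $(S_j^2,M_j)$, integrating out the precision $\tau^2=\sigma^{-2}$. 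We use the prior \eqref{eq:reg-model-trnd} and, for the alternative, \eqref{eq:theta_ss_prior}.

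\emph{Null.} Under $\theta_j=0$, Proposition~\ref{prop:param_limtrd} already shows that the density of $z$ given $(S^2,M)$ is proportional to
\[
\bigl(z^2/\nu^2+(K-p)s^2+\kappa_0 s_0^2\xi_0^2(M)\bigr)^{-(d+1)/2},\qquad d:=(K-p)+\kappa_0 .
\]
We carry along the normalizing constant that depends on $s^2$ and $M$, since it must cancel against the one in the alternative.

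\emph{Alternative.} Conditionally on $\sigma^2$, we have $\theta_j\sim\mathrm N(0,v_0^2\sigma^2)$ and $Z_j\mid\theta_j,\sigma^2\sim\mathrm N(\theta_j,\nu^2\sigma^2)$. Hence $Z_j\mid\sigma^2\sim \mathrm N(0,(\nu^2+v_0^2)\sigma^2)$, and the computation is the same as for the null with $\nu^2$ replaced by $\nu^2+v_0^2$. The Gaussian factor $(\nu^2\sigma^2)^{-1/2}$ becomes $((\nu^2+v_0^2)\sigma^2)^{-1/2}$, which contributes the ratio $(\nu^2/(\nu^2+v_0^2))^{-1/2}$. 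All remaining factors, including the $\chi^2$ density of $S^2$ and the $\mathrm{Inv}\chi^2$ prior, are identical under both hypotheses. Integrating the gamma kernel in $\tau^2$ gives the same power $-(d+1)/2$ in both cases, and the factors $(s^2)^{(K-p)/2-1}$ and the gamma constants cancel.

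Taking the ratio and writing $W:=(K-p)s^2+\kappa_0s_0^2\xi_0^2(M)=d\,\wt S^2$, we get
\[
\frac{p_0}{p_1}=\Bigl(\frac{\nu^2}{\nu^2+v_0^2}\Bigr)^{-1/2}\left(\frac{z^2/\nu^2+W}{z^2/(\nu^2+v_0^2)+W}\right)^{-(d+1)/2}.
\]
Dividing the numerator and denominator inside the bracket by $\wt S^2$, and using $z^2/(\nu^2\wt S^2)=\wt T^2$ and $z^2/((\nu^2+v_0^2)\wt S^2)=\wt T^2\nu^2/(\nu^2+v_0^2)$, gives exactly the displayed formula. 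The main point to check carefully is bookkeeping: whether the odds are defined conditionally on $\wt T_j$ or on $Z_j$ makes no difference (the Jacobian cancels), and the orientation of the exponent must match the statement.

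\emph{Limit.} With $t=\wt T_j$ fixed and $d\to\infty$, write the bracket as
\[
\frac{1+t^2/d}{1+t^2c/d},\qquad c:=\frac{\nu^2}{\nu^2+v_0^2}.
\]
Then
\[
-\frac{d+1}{2}\log\frac{1+t^2/d}{1+t^2c/d}\to -\frac{t^2(1-c)}{2},
\]
with $1-c=v_0^2/(\nu^2+v_0^2)$, by the expansion $\log(1+x)=x+O(x^2)$. This yields the stated exponential limit. No step is a real obstacle; the only care needed is tracking the constants that cancel and the sign of the exponent.
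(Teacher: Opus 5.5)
Your proposal is correct and is essentially the paper's argument: both integrate out the precision via normal--gamma conjugacy, with $\nu^2$ replaced by $\nu^2+v_0^2$ under the slab, and then take the ratio of the two marginals (your constant bookkeeping, exponent orientation and limit are all right). The only difference is packaging: the paper identifies the moderated statistic as $t_{(K-p)+\kappa_0}$ under the null and as a $\sqrt{(\nu^2+v_0^2)/\nu^2}$-scaled $t_{(K-p)+\kappa_0}$ under the slab, each independent of $(S_j^2,M_j)$, and takes the ratio of those $t$ densities. You instead cancel the constants in the unnormalized kernels in $z$ directly.
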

\begin{proof}
    Observe that if $\theta_i=0$, then by Proposition~\ref{prop:param_limtrd}, $\wt T_j \indep (S^2_j,M_j)$ and $\wt T_j \sim t_{K-p+\kappa_0}$. Next, observe that if $\theta_j \neq 0$, the only modification to \eqref{eq:bayesian_cs_model_1} is that now
    \[
    Z_j \mid \sigma^2_j \sim \dnorm(0,(v^2_0+\nu^2)\sigma^2_j).
    \]
    Then the joint distribution of $(Z,S^2,\tau^2)$ where $Z \mid \sigma^2,M \sim \dnorm(0,(v^2_0+\nu^2)\tau^{-2})$ and $(S^2,\tau^2,M)$ distributed as \eqref{eq:bayesian_cs_model_1} and \eqref{eq:reg-model-trnd}, is given by
     \begin{align}
        &p(z,s^2,\tau^2 \mid M)\\
        &\asymp (\tau^2)^{\frac{(K-p)+\kappa_0+1}{2}-1}\cdot\exp\left(-\frac{1}{2}\left(\frac{z^2\tau^2}{(v^2_0+\nu^2)}+(K-p)s^2\tau^2+\kappa_0s^2_0\xi^2_0(M)\tau^2\right)\right)\cdot (s^2)^{\frac{K-p}{2}-1}.
    \end{align}
    After marginalizing over $\tau^2$, the joint distribution of $(Z,S^2)$ given $M$ after marginalizing $\tau^2$ is given by
    \begin{align}
        p(z,s^2 \mid M) &\asymp (s^2)^{\frac{K-p}{2}-1} \cdot \left(\frac{z^2}{2(v^2_0+\nu^2)}+\frac{(K-p)s^2}{2}+\frac{\kappa_0s^2_0\xi^2_0(M)}{2}\right)^{-\frac{(\kappa_0+K-p+1)}{2}}.
    \end{align}
    By the change of variable 
    \[
    \wb t:=\frac{z}{\sqrt{v^2_0+\nu^2}}\cdot\left(\frac{(K-p)s^2+\kappa_0s^2_0\xi^2_0(M)}{(K-p)+\kappa_0}\right)^{-1/2},
    \]
    we can get that
     \begin{align}
        &p(\wb t,s^2 \mid M)\\
        & \asymp \left(\wb t^2+(K-p)+\kappa_0\right)^{-\frac{(\kappa_0+K-p+1)}{2}} \times (s^2)^{\frac{K-p}{2}-1} \times \left((K-p)s^2+\kappa_0s^2_0\xi^2_0(M)\right)^{-\frac{(\kappa_0+K-p)}{2}}.
    \end{align}
    Therefore, $\wb T:=\frac{Z}{\sqrt{v^2_0+\nu^2}}\cdot\left(\frac{(K-p)S^2+\kappa_0s^2_0\xi^2_0(M)}{(K-p)+\kappa_0}\right)^{-1/2} \sim t_{(K-p)+\kappa_0}$. This implies
    \[
    \wt T_j \mid \theta_j \neq 0, M_j \sim \frac{\sqrt{v^2_0+\nu^2}}{\nu}t_{(K-p)+\kappa_0}.
    \]
    Also, the distribution of $S^2_j \mid \theta_j \neq 0, M_j$ is same as the distribution of $S^2 \mid \theta_j=0, M_j$ for all $j \in [n]$. Therefore, the Bayes factor of the $j$-th unit is given by
    \begin{align}
        \mathcal O_j & = \frac{\pi\,p(\wt T_j \mid \theta_j =0,M_j)}{(1-\pi)\,p(\wt T_j \mid \theta_j \neq 0,M_j)}\\
        & = \frac{\pi}{(1-\pi)}\cdot \left(\frac{\nu^2}{v^2_0+\nu^2}\right)^{-1/2}\cdot\left(\frac{\wt T^2_j+(K-p)+\kappa_0}{\wt T^2_j(\nu^2/(\nu^2+v^2_0))+(K-p)+\kappa_0}\right)^{-\frac{(K-p)+\kappa_0+1}{2}}.
    \end{align}
    Then
    \[
    \mathcal O_j \rightarrow \frac{\pi}{(1-\pi)}\cdot \left(\frac{\nu^2}{v^2_0+\nu^2}\right)^{-1/2}\exp\left(-\frac{\wt T^2_j}{2}\frac{v^2_0}{\nu^2+v^2_0}\right), \qquad \mbox{if $(K-p)+\kappa_0 \rightarrow \infty$.}
    \]
\end{proof}
In the Bayesian literature it is common to use the logarithm of the Bayes factor $\log \mathcal O_1,\ldots,\log O_n$ to rank the significance of the hypotheses. In contrast, in the frequentist literature it is common to rank the significance of the hypotheses by the p-values. This principle is adopted in the Benjamini-Hochberg procedure. The above result Proposition~\ref{prop:bayes_factor} and Lemma~\ref{lem:limm_trnd_G} shows that the partially Bayes p-values from \limmatrd{} results in the same ranking of the hypotheses (albeit, in the reverse order) as one would get for the fully Bayes framework.

\subsection{Parametric priors in \jtlitrd{}}
Let us consider the following proposition when $H$ in \eqref{eq:limma_trend_2d_prior} takes a parametric form.
\begin{proposition}
\label{prop:joint_parametric}
    Suppose the prior $H$ for $(\mu_i,\sigma_i^2)$ follows the following parametric hierarchical  model:
    \begin{align}
    \label{eq:2d_param_prior}
    \mu_i \mid\sigma_i^2 \sim \mathcal{N}(a_0,b_0\sigma_i^2),\quad\frac{1}{\sigma_i^2} \sim \frac{1}{\kappa_0s_0^2}\chi_{\kappa_0}^2,
    \end{align}
    where $a_0\in\mathbb{R}$ and $b_0,\kappa_0,s_0^2>0,$ are known constant. Then,
$$
\frac{1}{\sigma_i^2}\mid A_i=a
\sim
\left(\kappa_0s_0^2+\frac{(a-a_0)^2}{b_0+1/K}\right)^{-1}\chi^2_{\kappa_0+1}.
$$
Furthermore,
\[
\check T \mid S^2,A,\theta \sim  t_{(K-p)+\kappa_0+1},
\]
where $\check T:=(Z-\theta)/(\nu \check S)$, where $\check S^2:=\left\{\kappa_0s_0^2+(K-p)S^2+\frac{(A-a_0)^2}{(b_0+1/K)}\right\}/(K-p+\kappa_0+1)$.
Moreover, the p-value $P_i^{\jt}=2\wb{F}_{t,(K-p)+\kappa_0+1}(|Z_i/(\nu \check S_i)|)$, where
\[
\check{S}^2_i =\frac{\kappa_0s_0^2+(K-p)S^2_i+\frac{(A_i-a_0)^2}{(b_0+1/K)}}{(K-p)+\kappa_0+1},
\]
and for any $\wt{d}>0$, $\wb{F}_{t,\wt{d}}$ is the survival function of the t-distribution with $\wt{d}$ degrees of freedom.
\end{proposition}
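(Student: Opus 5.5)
Under the prior \eqref{eq:2d_param_prior}, I will work with the precision $\lambda:=1/\sigma_i^2$ and compute posteriors by conjugacy in three steps, mirroring the proof of Proposition~\ref{prop:param_limtrd}.

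\emph{Step 1: the distribution of $A_i$ given $\lambda$.} Conditional on $\sigma_i^2$, we have $A_i\mid\mu_i,\sigma_i^2\sim\mathrm N(\mu_i,\sigma_i^2/K)$ from \eqref{eq:def_z_i_dis_2d}, and $\mu_i\mid\sigma_i^2\sim\mathrm N(a_0,b_0\sigma_i^2)$. Marginalizing $\mu_i$ gives $A_i\mid\sigma_i^2\sim\mathrm N(a_0,(b_0+1/K)\sigma_i^2)$. Its density in $\lambda$ is proportional to $\lambda^{1/2}\exp\{-\lambda(a-a_0)^2/(2(b_0+1/K))\}$. The prior density of $\lambda$ is proportional to $\lambda^{\kappa_0/2-1}\exp(-\kappa_0s_0^2\lambda/2)$. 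Multiplying the two gives a Gamma kernel $\lambda^{(\kappa_0+1)/2-1}\exp\{-\lambda[\kappa_0s_0^2+(a-a_0)^2/(b_0+1/K)]/2\}$. This is exactly the stated scaled $\chi^2_{\kappa_0+1}$ law for $1/\sigma_i^2\mid A_i=a$.

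\emph{Step 2: reduce to Proposition~\ref{prop:param_limtrd}.} Under Assumption~\ref{assu:design} (implicit here), Proposition~\ref{prop:orthog} gives that $(Z_i,S_i^2)$ is independent of $A_i$ given $(\beta_i,\sigma_i^2)$. Its law given $(\beta_i,\sigma_i^2)$ depends only on $(\theta_i,\sigma_i^2)$, so given $A_i$ the pair $(Z_i,S_i^2)$ follows \eqref{eq:bayesian_cs_model_1} with $\sigma_i^2$ drawn from the posterior of Step 1. That posterior is of the form \eqref{eq:reg-model-trnd}, with $\kappa_0$ replaced by $\kappa_0':=\kappa_0+1$ and $\kappa_0's_0'^2\xi_0^2(M)$ replaced by $c(A):=\kappa_0s_0^2+(A-a_0)^2/(b_0+1/K)$. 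The computation in the proof of Proposition~\ref{prop:param_limtrd} only uses this product $\kappa_0 s_0^2\xi_0^2(M)$ as a fixed constant given the conditioning variable. Repeating it, integrating $\lambda$ out of a Gamma kernel with shape $(K-p+\kappa_0'+1)/2$, yields
\[
\check T=(Z-\theta)/(\nu\check S)\sim t_{K-p+\kappa_0+1}
\]
independently of $S^2$, conditional on $A$. Here $\check S^2=\{c(A)+(K-p)S^2\}/(K-p+\kappa_0+1)$, as claimed.

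\emph{Step 3: the p-value.} As in Lemma~\ref{lem:limm_trnd_G}, I will write $P_i^\jt=\PjtFun(Z_i,S_i^2,A_i;H)=\mathbb P_H[|Z|\ge|z|\mid\theta=0,S^2,A]$, using \eqref{eq:validity_seq_3}. Then I rescale by $\nu\check S$, which is fixed given $(S^2,A)$. Step 2 then gives $2\wb F_{t,(K-p)+\kappa_0+1}(|Z_i/(\nu\check S_i)|)$.

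The main obstacle is Step 2. There I must check carefully that the conditional independence survives conditioning on $A_i$, which requires integrating out $\mu_i$ jointly rather than only marginally. I also need to confirm that the Jacobian bookkeeping in the $t$ change of variables gives $\kappa_0+1$ degrees of freedom, not $\kappa_0$.
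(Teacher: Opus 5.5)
Your proposal is correct and follows essentially the same route as the paper: the same conjugacy computation for the posterior of $1/\sigma_i^2$ given $A_i$ (after marginalizing $\mu_i$ to get $A_i\mid\sigma_i^2\sim\mathrm N(a_0,(b_0+1/K)\sigma_i^2)$), the same $t$ change of variables as in Proposition~\ref{prop:param_limtrd}, and the same p-value argument as in Lemma~\ref{lem:limm_trnd_G}. The only difference is packaging. The paper integrates the full joint density of $(Z,S^2,A,\mu,\sigma^{-2})$ over $(\mu,\sigma^{-2})$ directly, whereas you condition on $A$ and apply Proposition~\ref{prop:param_limtrd} with $\kappa_0+1$ and $c(A)$ in place of $\kappa_0$ and $\kappa_0 s_0^2\xi_0^2(M)$. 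That reduction is legitimate, so the obstacle you flag is not a real one: the law of $(Z,S^2)$ given $(\mu,\sigma^2,A)$ depends only on $(\theta,\sigma^2)$, so integrating out $\mu$ leaves exactly your Step~1 posterior as the mixing distribution.
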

\begin{proof}
    Let $\tau^2=\sigma^{-2}$. Consider $(Z,S^2,A,\mu,\sigma^2)$ distributed as \eqref{eq:def_z_i_dis_2d} and \eqref{eq:limma_trend_2d_prior} with $H$ taking the form in \eqref{eq:2d_param_prior}, we can conclude that 
    $$
    A\mid \tau^2 \sim \dnorm\left(a_0,\frac{b_0+1/K}{\tau^2}\right).
    $$
    Therefore, the joint likelihood of $(\tau^2,A)$ is given by
    \[
    p(\tau^2,a) \asymp \tau\exp\left\{-\frac{\tau^2}{2}\frac{(a-a_0)^2}{b_0+1/K}\right\}\times\exp\left(-\frac{\kappa_0s^2_0\tau^2}{2}\right)\times (\tau^2)^{\frac{\kappa_0}{2}-1}
    \]
    Here the symbol $\asymp$ implies the equality holds up to multiplication by an absolute constant depending $a_0,b_0,\kappa_0,K,p$ and $\nu$.
    This implies by conjugacy
    $$
    \frac{1}{\sigma^2}\mid A=a
    \sim
    \frac{1}{\,\kappa_0s_0^2+\frac{(a-a_0)^2}{b_0+1/K}\,}\chi^2_{\kappa_0+1}.
    $$
    Next, consider the joint density of $(Z,S^2,A,\mu,\sigma^2)$ given as follows.
    \begin{align*}
        &p(z,s^2,a,\mu,\tau^2)\\
        &\asymp \exp\left(-\frac{\tau^2}{2}\left\{\frac{(z-\theta)^2}{\nu^2}+(K-p)s^2+K(a-\mu)^2+\frac{(\mu-a_0)^2}{b_0}+\kappa_0s^2_0\right\}\right) \\
        &~~~~~~~~~~~\times (\tau^2)^{\frac{3+(K-p)+\kappa_0}{2}-1}(s^2)^{\frac{(K-p)}{2}-1}.
    \end{align*}
    We can marginalize the over $\mu$ and $\tau^2$ to obtain
    \begin{align}
    p(z,s^2,a) \asymp (s^2)^{\frac{K-p}{2}-1}\left(\frac{(z-\theta)^2}{\nu^2} + (K-p)s^2 + \kappa_0s_0^2 + \frac{(a - a_0)^2}{b_0 + 1/K}\right)^{-\frac{(K-p)+\kappa_0+2}{2}}.
    \end{align}
    Next, substituting 
    \[
    \check t:= \frac{z-\theta}{\nu} \cdot \left\{\frac{(K-p)s^2+\kappa_0s^2_0+\{(a-a_0)^2/(b_0+K^{-1})\}}{K-p+\kappa_0+1}\right\}^{-1/2},
    \]
    and proceeding as in the proof of Proposition~\ref{prop:param_limtrd}, we can show that $\check T \indep (S^2,A)$ and \eqref{eq:joint_t_stat} holds. The formula for p-values follows from
    \begin{align}
        \PjtFun(z,s^2,a;H):=\mathbb P_H\left[|Z| \ge |z| \mid S^2,A\right],
    \end{align}
    by repeating the arguments in the proof of Lemma~\ref{lem:limm_trnd_G}. 
\end{proof}

\section{Proof of asymptotic FDR control in Section~\ref{sec:reg_ltrd_ext}}
In this section, we prove the asymptotic FDR control result Theorem~\ref{thm:final_rate} under \eqref{eq:bayesian_cs_model_2} and \eqref{eq:well-specified-limma-reg}. We shall consistently assume Assumptions~\ref{assu:limma_trnd_var} and~\ref{assum:trend_estimation}.
\subsection{Approximate NPMLE property of $\wh G_\trd$}
\label{sec:approx_npmle_cs}
Observe that the solution $\wh G_\trd$ of \eqref{eq:def_small_ell} is not an exact NPMLE under \eqref{eq:bayesian_cs_model_2} and \eqref{eq:well-specified-limma-reg}. Therefore, we do not automatically have
\[
\frac{1}{n}\sum_{i=1}^n\log f_{\wh G_\trd,K-p}(S_i^2/\xi^2_0(M_i)) \ge \frac{1}{n}\sum_{i=1}^n\log f_{G,K-p}(S_i^2/\xi^2_0(M_i)).
\]
However, we shall show the following result.
\begin{lemm}
\label{lem:lik_ratio_limma_trnd}
There exist constants $C_\trd>0$ and $n_{\trd,1}\in\mathbb N_{\ge1}$, depending only on $h_1,h_2,K,p,\underline M_\trd,\wb M_\trd,\underline L_\trd,\wb U_\trd$, such that the following hold with probability at least $1-o(n^{-2})$
\begin{align}
\label{eq:log_lik_limma_trend_approx_cs}
\prod_{i=1}^n\frac{f_{\wh G_\trd,K-p}(S_i^2/\xi^2_0(M_i))}{f_{G,K-p}(S_i^2/\xi^2_0(M_i))}\ge e^{-n\,C_\trd\mathfrak R_{n,1}(\wh G_\trd)},
\end{align}
where for any $\rmG \in \mathcal G_\trd$
\[
\mathfrak R_{n,1}(\rmG)=\Delta_n^2(\log n)^2+\frac{(\log n)^2}{\sqrt{n}}\cdot \Delta_n+\frac{(\log n)}{\sqrt{n}}\cdot|\log \Delta_n|^{h_2/2}\,\Delta^{1-h_1/2}_n+\mathcal H\!\left(f_{\rmG,K-p},f_{G,K-p}\right)\Delta_n.
\]
\end{lemm}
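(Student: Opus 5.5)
The plan is a perturbation argument around the exact maximizing property of $\wh G_\trd$. Write
\[
\ell_n(\rmG,\xi):=\frac1n\sum_{i=1}^n\log f_{\rmG,K-p}\big(S_i^2/\xi^2(M_i)\big).
\]
By \eqref{eq:def_small_ell}, $\ell_n(\wh G_\trd,\wh\xi)\ge \ell_n(G,\wh\xi)$, since $G\in\mathcal G_\trd$. Hence
\[
\ell_n(\wh G_\trd,\xi_0)-\ell_n(G,\xi_0)\;\ge\; -\,\mathcal D_n,\qquad
\mathcal D_n:=\big[\ell_n(\wh G_\trd,\wh\xi)-\ell_n(\wh G_\trd,\xi_0)\big]-\big[\ell_n(G,\wh\xi)-\ell_n(G,\xi_0)\big],
\]
and the claim \eqref{eq:log_lik_limma_trend_approx_cs} reduces to showing $\mathcal D_n\lesssim \mathfrak R_{n,1}(\wh G_\trd)$ with probability $1-o(n^{-2})$.

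First I would work on a good event $\mathcal E_n$. On $\mathcal E_n$ we require $\max_i|M_i|\le \mathrm W_n$ (from Assumption~\ref{assu:limma_trnd_var}) and $S_i^2\in[A_n,B_n]$ for all $i$, with $A_n\asymp n^{-6/(K-p)}$ and $B_n\asymp\log n$ (from Lemma~\ref{lem:prop_chi_sq}(3)). A union bound gives $\mathbb P(\mathcal E_n^c)=o(n^{-2})$. On $\mathcal E_n$, Assumption~\ref{assum:trend_estimation} gives $|\delta(M_i)|\le\Delta_n$ with $\delta:=\wh\xi-\xi_0$, and both $\xi_0(M_i)$ and $\wh\xi(M_i)$ lie in $[\underline M_\trd,\overline M_\trd]$.

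For each $i$, Taylor expand $\eta\mapsto\log f_{\rmG,K-p}(S_i^2/\eta^2)$ around $\eta=\xi_0(M_i)$, for $\rmG\in\{\wh G_\trd,G\}$. Write $\psi_{\rmG}(s^2,\eta)$ for the score in \eqref{eq:deriv_log_density}. By Lemma~\ref{lem:prop_chi_sq}(2), the second derivative is a posterior variance plus a posterior mean of terms linear in $X\le B_n/\underline M_\trd^2$. It is therefore $O(\log^2 n)$ uniformly in $\rmG$ and $\eta$ on $\mathcal E_n$. This gives
\[
\mathcal D_n=\frac1n\sum_i\delta(M_i)\big[\psi_{\wh G_\trd}-\psi_G\big]\big(S_i^2,\xi_0(M_i)\big)+O\big(\Delta_n^2\log^2 n\big),
\]
which produces the first term of $\mathfrak R_{n,1}$. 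I then split the linear term into its mean and a centered empirical process, evaluated at the fixed pair $(\rmG',\delta')$ and taken uniformly over $\rmG'\in\mathcal G_\trd$ and $\delta'\in\mathcal X-\xi_0$ with $\|\delta'\|_{\mathrm W_n}\le\Delta_n$.

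For the mean, condition on $M_i$, so that $V_i^2=S_i^2/\xi_0^2(M_i)$ has density $f_{G}$. Let $\partial$ denote $\partial_\eta$ evaluated at $\eta=\xi_0(M_i)$. I use the identity $\mathbb E_G[\psi_{\rmG'}]-\mathbb E_G[\psi_G]=\int \partial f_{\rmG'}\,(f_G-f_{\rmG'})/f_{\rmG'}+\int(\partial f_{\rmG'}-\partial f_G)$. The second integral vanishes, because $\int f_{\rmG'}(s^2/\eta^2)\,\dd s^2=\eta^2$ for every $\rmG'$. For the first integral, write $f_G-f_{\rmG'}=(\sqrt{f_G}-\sqrt{f_{\rmG'}})(\sqrt{f_G}+\sqrt{f_{\rmG'}})$ and apply Cauchy--Schwarz. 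This bounds it by $\mathcal H(f_{\rmG'},f_G)$ times an $L^2$ norm of the score, which is controlled by Lemma~\ref{lem:prop_chi_sq}(1). The tail beyond $C_n$ is handled by Lemma~\ref{lem:prop_chi_sq}(4) and contributes $O(n^{-2})$. Multiplying by $\Delta_n$ gives the term $\mathcal H(f_{\wh G_\trd,K-p},f_{G,K-p})\Delta_n$, possibly up to a log factor that I would absorb into $C_\trd$ or the other terms.

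For the centered process, the summands are bounded on $\mathcal E_n$ by $\Delta_n\log n$, because the score is linear in $X$. I would apply Bernstein's inequality over a finite net and then chain. The net is built from two pieces. Lemma~\ref{lem:entropy_joint_density_deriv} controls $\rmG'\mapsto(f_{\rmG'},\partial f_{\rmG'})$ in $\|\cdot\|_{\mathrm{joint}}$ on $[A_n,B_n]$, at entropy cost $\log^2(1/\varepsilon)$. Assumption~\ref{assum:trend_estimation} gives entropy $(1/\varepsilon)^{h_1}|\log\varepsilon|^{h_2}$ for the trend class. Covering the ratio $\psi=\partial f/f$ requires a lower bound on $f_{\rmG'}$ on $[A_n,B_n]$, which is polynomial in $n$, so the net radius only costs logs. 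The Bernstein tail with the $\log^2$ entropy gives $(\log n)^2\Delta_n/\sqrt n$. Dudley's integral for the trend part, $\int_0^{\Delta_n}\varepsilon^{-h_1/2}|\log\varepsilon|^{h_2/2}\,\dd\varepsilon\asymp\Delta_n^{1-h_1/2}|\log\Delta_n|^{h_2/2}$, multiplied by the envelope scale $\log n/\sqrt n$, gives the third term of $\mathfrak R_{n,1}$.

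I expect this empirical process step to be the main obstacle. The difficulty is obtaining uniformity over the data-dependent $\wh G_\trd$ and $\wh\xi$ while keeping the score ratio stable where $f_{\rmG'}$ is tiny. Separability of $\mathcal X$ and the truncation to $\mathcal E_n$ are exactly what make the union bound and chaining legitimate.
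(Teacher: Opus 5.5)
Your proposal is correct and follows essentially the paper's proof. The shared steps are: a Taylor expansion in the trend around $\xi_0$ with an $O(\Delta_n^2\log^2 n)$ remainder; the mean of the linear term bounded by $\mathcal H(f_{\wh G_\trd,K-p},f_{G,K-p})\Delta_n$ via the score (normalization) identity and Cauchy--Schwarz; and the centered part handled by a cover from Lemma~\ref{lem:entropy_joint_density_deriv} with a polynomial density lower bound, a union bound over that cover, and Dudley chaining over the trend class. In the mean step no extra $\log n$ actually arises, because the score is $O(1+s^2)$ uniformly over $\mathcal G_\trd$ and so has bounded second moments; this matters because a $\log n$ factor could not have been absorbed into $C_\trd$. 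The only difference is bookkeeping: you merge the two expansions into a single difference of scores, so the Jacobian term $-2\log\xi$ cancels, whereas the paper bounds $\mathrm{Sub}_{n,1}$ and $\mathrm{Sub}_{n,2}$ separately, using that the full score at the true $G$ has conditional mean zero.
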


\begin{proof}
    Recall that under the completely specified set-up \eqref{eq:bayesian_cs_model_2} and \eqref{eq:well-specified-limma-reg}, the conditional density of $S_i^2$ given $M_i=m$ under when the true trend is $\xi$ and the mixing measure is $\rmG$ is proportional to $\eta^{-2}f_{\rmG,K-p}(S_i^2/\eta^2)$, where $\eta=\xi(m)$. Let us define the log-likelihood (under $\xi$ and $\rmG$) as follows:
\[
\ell(S_i^2;\xi,\rmG,M_i):=\log f_{\rmG,K-p}(S_i^2/\xi^2(M_i))-2\log\xi(M_i).
\]
Observe that in this framework, the following optimization program (as in \eqref{eq:def_small_ell})
\[
\argmax_{\rmG \in \mathcal G_\trd}\frac{1}{n}\sum_{i=1}^n
        \log f_{\rmG,K-p}\!\left(S_i^2/\widehat{\xi}^2(M_i)\right),
\]
is equivalent to 
\[
\argmax_{\rmG \in \mathcal G_\trd}\frac{1}{n}\sum_{i=1}^n\ell(S_i^2;\wh \xi,\rmG,M_i).
\]
By the construction of the NPMLE $\wh G_\trd$, 
\[
\frac{1}{n}\sum_{i=1}^n\ell(S_i^2;\wh\xi,\wh G_\trd,M_i)-\frac{1}{n}\sum_{i=1}^n\ell(S_i^2;\wh\xi,G,M_i)\ge0, \qquad \mbox{for $G$ defined in \eqref{eq:well-specified-limma-reg}.}
\]
Our goal is to show that with probability greater than $1-o(n^{-2})$, there exists a constant $C_\trd>0$ such that
\begin{align}
\frac{1}{n}\sum_{i=1}^n\ell(S_i^2;\xi_0,\wh G_\trd,M_i)&\ge \frac{1}{n}\sum_{i=1}^n\ell(S_i^2;\xi_0,G,M_i)-C_\trd\,\mathfrak R_{n,1}(\wh G_\trd)
\end{align}
where the function $\mathfrak R_{n,1}(\wh G_\trd)$ is defined in the lemma statement. This is also the approximate NPMLE property under the correctly scaled likelihood as defined in \citet{JiangZhang2009}. Define
\begin{align}
\label{eq:def_sub_1}
\mathrm{Sub}_{n,1}=\Big|\frac{1}{n}\sum_{i=1}^n\ell(S_i^2;\xi_0,\wh G_\trd,M_i)-\frac{1}{n}\sum_{i=1}^n\ell(S_i^2;\wh\xi,\wh G_\trd,M_i)\Big|
\end{align}
and
\begin{align}
\label{eq:def_sub_2}
\mathrm{Sub}_{n,2}=\Big|\frac{1}{n}\sum_{i=1}^n\ell(S_i^2;\xi_0,G,M_i)-\frac{1}{n}\sum_{i=1}^n\ell(S_i^2;\wh\xi,G,M_i)\Big|.
\end{align}
Then, we also have
\[
\frac{1}{n}\sum_{i=1}^n\ell(S_i^2;\xi_0,\wh G_\trd,M_i)\ge \frac{1}{n}\sum_{i=1}^n\ell(S_i^2;\xi_0,G,M_i)-\mathrm{Sub}_{n,1}-\mathrm{Sub}_{n,2}.
\]
So it suffices to show that
\begin{align}
\mathrm{Sub}_{n,1}+\mathrm{Sub}_{n,2} &\le C_\trd\,\mathfrak R_{n,1}(\wh G_\trd),
\end{align}
with probability greater than $1-o(n^{-2})$.
Recall $V_i^2=S_i^2/\xi_0^2(M_i)$. Define
\begin{align}
\label{eq:cala_trd}
\mathcal A_\trd=\{|M_i|\le \mathrm W_n,\;V_i^2\in[\widetilde A_n,\widetilde B_n]\ \text{for all }i\in[n]\},
\end{align}
where $\widetilde A_n=A_n/(2\wb M_\trd)^2$ and $\widetilde B_n=4B_n/\underline M_\trd^2$, with $A_n=\underline\kappa n^{-6/(K-p)}$ and $B_n=(3\overline\kappa\vee\widetilde\kappa)\log n$ as in Lemma~\ref{lem:prop_chi_sq}. By Assumptions~\ref{assu:limma_trnd_var}-\ref{assum:trend_estimation}, \eqref{eq:prop_4_chi_sq}, \eqref{eq:prop_5_chi_sq}, and a union bound, $\mathbb P(\mathcal A_\trd^c)\le 2n^{-2}$. Henceforth we work on $\mathcal A_\trd$.

For any mixing distribution $\rmG\in\mathcal G_\trd$, define
\begin{align}
\label{eq:def_d_and_h}
D(\eta,s^2;\rmG)&:=\frac{\partial}{\partial\eta}\log f_{\rmG,K-p}(s^2/\eta^2)-\frac{2}{\eta}, \quad \mbox{and}\\
H(\eta,s^2;\rmG)&:=\frac{\partial^2}{\partial\eta^2}\log f_{\rmG,K-p}(s^2/\eta^2)+\frac{2}{\eta^2}.
\end{align}
If we assume $\eta=\xi(m)$, where $\xi \in \mathcal X$ (defined in Assumption~\ref{assu:limma_trnd_var}), then $\eta\in[\underline M_\trd/2,2\wb M_\trd]$, and therefore, if $s^2 \in [A_n,B_n]$, the additional terms $-2/\eta$ and $2/\eta^2$ only change constants in the bounds from Lemma~\ref{lem:prop_chi_sq}. In particular, Lemma~\ref{lem:prop_chi_sq}(1) and Lemma~\ref{lem:prop_chi_sq}(3) imply that if $s^2 \in \left[(A_n \underline M^2_\trd)/(4\wb M^2_\trd),(4B_n\wb M^2_\trd)/(\underline M^2_\trd)\right]$ and $\eta \in [\underline M_\trd/2,2\wb M_\trd]$, then
\[
|D(\eta,s^2;\rmG)|\lesssim\log n, \quad \mbox{and} \quad |H(\eta,s^2;\rmG)|\lesssim(\log n)^2,
\]
uniformly over $\eta\in[\underline M_\trd/2,2\wb M_\trd]$, $s^2\in[ A_n,B_n]$, and $\rmG\in\mathcal G_\trd$.

\subsubsection{Bounding $\mathrm{Sub}_{n,1}$}

Applying a second-order Taylor expansion of $\ell(S_i^2;\wh\xi,\wh G_\trd,M_i)$ around $\xi_0(M_i)$ gives
\begin{align}
\label{eq:dec_sub_1}
&\frac{1}{n}\sum_{i=1}^n\ell(S_i^2;\wh\xi,\wh G_\trd,M_i)-\frac{1}{n}\sum_{i=1}^n\ell(S_i^2;\xi_0,\wh G_\trd,M_i)\\
&=\frac{1}{n}\sum_{i=1}^n D(\xi_0(M_i),S_i^2;\wh G_\trd)\big(\wh\xi(M_i)-\xi_0(M_i)\big)+\frac{1}{2n}\sum_{i=1}^n H(\eta_{t,i},S_i^2;\wh G_\trd)\big(\wh\xi(M_i)-\xi_0(M_i)\big)^2
\end{align}
for some $\eta_{t,i}=t_i\xi_0(M_i)+(1-t_i)\wh\xi(M_i)$ with $t_i\in[0,1]$. Since $|H(\eta_{t,i},S_i^2;\wh G_\trd)|\lesssim (\log n)^2$ on $\mathcal A_\trd$, we obtain
\[
\left\lvert\frac{1}{2n}\sum_{i=1}^n H(\eta_{t,i},S_i^2;\wh G_\trd)\big(\wh\xi(M_i)-\xi_0(M_i)\big)^2\right\rvert\lesssim (\log n)^2\Delta_n^2.
\]
It remains to bound the gradient term. Let us define 
\[
\rmD_i(v^2,m;\xi_0,\rmG):=D(\xi_0(m),\xi_0^2(m)v^2;\rmG).
\]
Then by definition of $V^2_i$, for any $G' \in \mathcal G_\trd$
\[
D(\xi_0(M_i),S_i^2;\rmG)=\rmD_i(V_i^2,M_i;\xi_0,\rmG).
\]
In the subsequent analysis, we shall condition on $M_1,\ldots,M_n$. Observe that under \eqref{eq:well-specified-limma-reg}, the marginal density of $V^2=S^2/\xi^2_0(M)$ is given by $f_{G,K-p}(v^2)$ (defined in \eqref{eq:lik_1d_limma_trend}).
Therefore, define the truncated centering term by
\[
\overline{\rmD}_i(\xi_0,\rmG,M_i):=\int_{\widetilde A_n}^{\widetilde B_n}\rmD_i(v^2,M_i;\xi_0,\rmG)f_{G,K-p}(v^2)\,\dd v^2.
\]
We decompose
\[
\frac{1}{n}\sum_{i=1}^n \rmD_i(V_i^2,M_i;\xi_0,\wh G_\trd)\big(\wh\xi(M_i)-\xi_0(M_i)\big)=U_{1,n}+U_{2,n},
\]
where
\[
U_{1,n}:=\frac{1}{n}\sum_{i=1}^n\big(\rmD_i(V_i^2,M_i;\xi_0,\wh G_\trd)-\overline{\rmD}_i(\xi_0,\wh G_\trd,M_i)\big)\big(\wh\xi(M_i)-\xi_0(M_i)\big)
\]
and
\[
U_{2,n}:=\frac{1}{n}\sum_{i=1}^n\overline{\rmD}_i(\xi_0,\wh G_\trd,M_i)\big(\wh\xi(M_i)-\xi_0(M_i)\big).
\]
\paragraph{Bounding $U_{1,n}$.}
Recall the class of mixture densities $\wb{\mathcal F}_{\trd}$ from Lemma~\ref{lem:entropy_joint_density_deriv}. Observe that, by definition $f_{\wh G_\trd,K-p} \in \wb{\mathcal F}_{\trd}$. Also, recall the pseudo-metric $\|\cdot\|_{\mathrm{joint}}$ from the same lemma. Consider an $\varepsilon/4$-cover of the class $\wb{\mathcal F}_{\trd}$ in the pseudo-metric $\|\cdot\|_{\mathrm{joint}}$. Let the cover be $f_{G_1,K-p},\ldots,f_{G_N,K-p}$. By Lemma~\ref{lem:entropy_joint_density_deriv},
\[
\log N\!\left(\wb{\mathcal F}_{\trd},\|\cdot\|_{\mathrm{joint}},\varepsilon\right)\lesssim_{K,p,\underline M_\trd,\wb M_\trd,\underline L_\trd,\overline U_\trd}\bigl(\log(1/\varepsilon)\bigr)^2.
\]

Let us focus on bounding the difference of scores between two mixture densities $f_{G_1,K-p}$ and $f_{G_2,K-p}$, where $G_1,G_2 \in \mathcal G_\trd$. Observe that
\begin{align} 
\label{Eq:def_score_diff}
&\sup_{\eta\in[\underline M_\trd/2,\,2\overline M_\trd]}\sup_{s^2\in[A_n, B_n]}\left|\frac{\frac{d f_{G_1,K-p}}{d\eta}\!\left(s^2/\eta^2\right)}{f_{G_1,K-p}\!\left(s^2/\eta^2\right)}-\frac{\frac{d f_{G_2,K-p}}{d\eta}\!\left(s^2/\eta^2\right)}{f_{G_2,K-p}\!\left(s^2/\eta^2\right)}\right|\nonumber\\
&\leq \sup_{\eta\in[\underline M_\trd/2,\,2\overline M_\trd]}\sup_{s^2\in[ A_n,B_n]}\left|\frac{\frac{d f_{G_1,K-p}}{d\eta}\!\left(s^2/\eta^2\right)-\frac{d f_{G_2,K-p}}{d\eta}\!\left(s^2/\eta^2\right)}{f_{G_1,K-p}\!\left(s^2/\eta^2\right)}\right|\nonumber\\ 
&~~~+\sup_{\eta\in[\underline M_\trd/2,\,2\overline M_\trd]}\sup_{s^2\in[ A_n,B_n]}\left|\frac{\frac{d f_{G_2,K-p}}{d\eta}\!\left(s^2/\eta^2\right)}{f_{G_2,K-p}\!\left(s^2/\eta^2\right)}\right|\nonumber\\ 
&~~~~~~\times\sup_{\eta\in[\underline M_\trd/2,\,2\overline M_\trd]}\sup_{s^2\in[A_n,B_n]}\frac{\left|f_{G_2,K-p}\!\left(s^2/\eta^2\right)-f_{G_1,K-p}\!\left(s^2/\eta^2\right)\right|}{f_{G_1,K-p}\!\left(s^2/\eta^2\right)}. 
\end{align}
Observe that for any $\rmG \in \mathcal G_\trd$ we have the following deterministic lower bound for $f_{\rmG ,K-p}$ on $[\wt A_n,\wt B_n]$. Since $\operatorname{supp}(\rmG )\subset[\underline L_\trd,\overline U_\trd]$, for $x=s^2/\eta^2$, 
\[ 
f_{\rmG ,K-p}(x)=\int p_{\chi^2}(x \mid K-p,\tau^2)\,\rmG (\dd \tau^2)\ge \inf_{\tau^2\in[\underline L_\trd,\overline U_\trd]} p_{\chi^2}(x \mid K-p,\tau^2), 
\] 
where $p_{\chi^2}(x \mid K-p,\tau^2)$ is the $\tau^2\chi^2_{K-p}$ density. 
Using the explicit form of the scaled $\chi^2$ density, for all $x\in[A_n,B_n]$ we have 
\[ 
\inf_{\tau^2\in[\underline L_\trd,\overline U_\trd]} p_{\chi^2}(x \mid K-p,\tau^2)\gtrsim_{K,p,\underline L_\trd,\overline U_\trd} x^{\frac{K-p}{2}-1}\exp\!\left(-\frac{K-p}{2\overline L_\trd}x\right). 
\] 
By the definition of $B_n$, it follows that there exists an absolute constant $C>0$ (depending only on $K,p,\underline L_\trd,\overline U_\trd,\overline\kappa_n$) such that 
\[
\inf_{\eta\in[\underline M_\trd/2,\,2\overline M_\trd]}\inf_{s^2\in[A_n,B_n]} f_{\rmG,K-p}(s^2/\eta^2)\gtrsim n^{-C}.
\]
Combining this with the score bound $|D(\eta,s^2;\rmG)|\lesssim\log n$ for $s^2\in[A_n,B_n]$, we obtain from \eqref{Eq:def_score_diff} that if $\|f_{G_1,K-p}-f_{G_2,K-p}\|_{\mathrm{joint}}\le\varepsilon/4$, then
\[
\sup_{x\in[\widetilde A_n,\widetilde B_n]}\sup_{\eta\in[\underline M_\trd/2,2\wb M_\trd]}\left|D(\eta,\eta^2x;G_1)-D(\eta,\eta^2x;G_2)\right|\lesssim (\log n)n^{2C}\varepsilon.
\]
Taking $\varepsilon=n^{-2C-2}(\log n)^{-1}$, the same cover becomes an $n^{-2}$-cover for the score class. Furthermore
\begin{align}
\label{eq:bound_N}
\log N \lesssim (\log n)^2.
\end{align}
Since $\wh G_\trd\in\mathcal G_\trd$, on $\mathcal A_\trd$ there exists $J\in[N]$ such that uniformly in $i$
\[
\left|\rmD_i(V_i^2,M_i;\xi_0,\wh G_\trd)-\rmD_i(V_i^2,M_i;\xi_0,G_J)\right|\lesssim n^{-2}
\]
and
\[
\left|\overline{\rmD}_i(\xi_0,\wh G_\trd,M_i)-\overline{\rmD}_i(\xi_0,G_J,M_i)\right|\lesssim n^{-2}.
\]
Since $\|\wh \xi-\xi_0\|_{\mathrm W_n} \le \Delta_n$ (by Assumption~\ref{assum:trend_estimation}), therefore
\[
\Bigg|U_{1,n}-\frac{1}{n}\sum_{i=1}^n\big(\rmD_i(V_i^2,M_i;\xi_0,G_J)-\overline{\rmD}_i(\xi_0,G_J,M_i)\big)\big(\wh\xi(M_i)-\xi_0(M_i)\big)\Bigg|\lesssim \frac{\Delta_n}{n^2}.
\]
Since $\wh \xi \in \mathcal X$, the forgoing inequality along with Assumption~\ref{assum:trend_estimation} implies
\begin{align}
|U_{1,n}|\lesssim \mathfrak V_n+\frac{\Delta_n}{n^2}+\frac{\log n}{n^2},
\end{align}
where
\[
\mathfrak V_n:=\max_{j\in[N]}\max_{\xi\in \wb{\mathcal X}}\left|\frac{1}{n}\sum_{i=1}^n\big(\rmD_i(V_i^2,M_i;\xi_0,G_j)-\overline{\rmD}_i(\xi_0,G_j,M_i)\big)\big(\xi(M_i)-\xi_0(M_i)\big)\right|,
\]
and $\wb{\mathcal X}:= \left\{\xi \in \mathcal X: \|\xi-\xi_0\|_{\mathrm W_n} \le \Delta_n\right\}$.
Let us define the processes $V_{n,j}(\xi):\wb{\mathcal X} \rightarrow \mathbb R$ for $j \in [N]$, as follows:
\begin{align}
    \label{eq:v_process}
    V_{n,j}(\xi):=\frac{1}{n}\sum_{i=1}^n\big(\rmD_i(V_i^2,M_i;\xi_0,G_j)-\overline{\rmD}_i(\xi_0,G_j,M_i)\big)\big(\xi(M_i)-\xi_0(M_i)\big).
\end{align}
Observe that on $\mathcal A_\trd$, each increment term in the processes $V_{n,j}(\xi)$ is mean-zero conditioned on $M_1,\ldots,M_n$. Therefore, for $\xi_1,\xi_2 \in \wb{\mathcal X}$, using Lemma~\ref{lem:prop_chi_sq} (1), we have that on $\mathcal A_\trd$, conditioned on $M_1,\ldots,M_n$ and $\tau^2_1,\ldots,\tau^2_n$, 
\begin{align}
    \left|V_{n,j}(\xi_1)-V_{n,j}(\xi_2)\right| \lesssim \frac{(\log n)}{\sqrt{n}}\|\xi_1-\xi_2\|_{\mathrm W_n},
\end{align}
where the semi-norm and $\mathrm W_n$ are defined in Assumption~\ref{assum:trend_estimation}. Therefore, the processes $\{V_{n,j}(\xi)\}$ has sub-Gaussian increments in terms of the norm $\|\cdot\|_{\mathrm W_n}$ on $\mathcal A_\trd$. Furthermore, since $\mathcal X$ is separable under the supremum norm on $\RR$, it remains separable under $\|\cdot\|_{\mathrm W_n}$. This implies $\sup_{\xi \in \wb{\mathcal X}}|V_{n,j}(\xi)|$ is measurable. 
Since $\Delta_n=o(1)$, we assume that $\Delta_n \in (0,1)$ (for finitely many $n$ where $\Delta_n>1$, one can use an analogous argument with a courser bound to control $\mathfrak V_n$). Using Assumption~\ref{assum:trend_estimation} and (8.15)-(8.16) of \cite{vershynin2025high}, we can conclude that conditioned on $M_1,\ldots,M_n$ and $\tau^2_1,\ldots,\tau^2_n$, on $\mathcal A_\trd$ we have
\begin{align}
   &\max_{\xi\in \wb{\mathcal X}}\left|\frac{1}{n}\sum_{i=1}^n\big(\rmD_i(V_i^2,M_i;\xi_0,G_j)-\overline{\rmD}_i(\xi_0,G_j,M_i)\big)\big(\xi(M_i)-\xi_0(M_i)\big)\right| \\
   & \lesssim \frac{(\log n)}{\sqrt{n}}\left\{(1+u)\Delta_n+\int_{0}^{\Delta_n}\sqrt{\log N(\varepsilon,\wb{\mathcal X},\|\cdot\|_{\mathrm W_n})}\dd \varepsilon\right\}\\
   & \lesssim \frac{(\log n)}{\sqrt{n}}\left\{(1+u)\Delta_n+\int_{0}^{\Delta_n}(1/\varepsilon)^{h_1/2}\log^{h_2/2} (1/\varepsilon)\dd \varepsilon\right\}\\
   &\overset{(1)}{\lesssim} \frac{(\log n)}{\sqrt{n}}\left\{(1+u)\Delta_n+\int_{-\log \Delta_n}^{\infty}t^{h_2/2}e^{-t\,(1-h_1/2)}\dd t\right\}\\
   &\overset{(2)}{\lesssim} \frac{(\log n)}{\sqrt{n}}\left\{(1+u)\Delta_n+|\log \Delta_n|^{h_2/2}\cdot \Delta^{1-h_1/2}_n\right\},
\end{align}
for all $j \in [N]$, with probability greater than $1-2\exp(-u^2)$. In the above display, (1) follows by the change of variable $\varepsilon=e^{-t}$ and (2) follows using the property that the incomplete gamma function
$\Gamma(x):= \int_{x}^\infty e^{-t}t^{s-1}\dd t$ satisfies $\Gamma(x)/(e^{-x}x^{s-1}) = O(1)$ as $x \rightarrow \infty$.

For $u:=\sqrt{5\log n}+\sqrt{\log N}$, using \eqref{eq:bound_N}, the forgoing inequality reduces to
\begin{align}
\label{eq:chaining_bound}
    &\max_{\xi\in \wb{\mathcal X}}\left|\frac{1}{n}\sum_{i=1}^n\big(\rmD_i(V_i^2,M_i;\xi_0,G_j)-\overline{\rmD}_i(\xi_0,G_j,M_i)\big)\big(\xi(M_i)-\xi_0(M_i)\big)\right| \\
     & \lesssim \frac{(\log n)^2}{\sqrt{n}}\cdot \Delta_n+\frac{(\log n)}{\sqrt{n}}\cdot|\log \Delta_n|^{h_2/2}\,\Delta^{1-h_1/2}_n.
\end{align}
Applying Hoeffding's inequality along with the union bound on $\mathcal A_\trd$, one can conclude from \eqref{eq:chaining_bound}, conditioned on $M_1,\ldots,M_n,$ and $\tau^2_1,\ldots,\tau^2_n$,
\begin{align}
\label{eq:frak_u_bound}
\mathfrak V_n \le K_{\trd,1} \frac{(\log n)^2}{\sqrt{n}}\cdot \Delta_n+\frac{(\log n)}{\sqrt{n}}\cdot|\log \Delta_n|^{h_2/2}\,\Delta^{1-h_1/2}_n
\end{align}
with probability greater than $1-o(n^{-2})$. Here the absolute constant $K_{\trd,1}>0$ (independent of $\tau^2_1,\ldots,\tau^2_n$ and $M_1,\ldots,M_n$) is chosen large enough. Furthermore, since the constant $K_{\trd,1}$ is independent of $\tau^2_1,\ldots,\tau^2_n$ and $M_1,\ldots,M_n$, using $\mathbb P(\mathcal A_\trd^c)\le 2n^{-2}$, we can also conclude that the assertion holds unconditionally with probability greater than $1-o(n^{-2})$. In other words, with probability greater than $1-o(n^{-2})$, we have
\begin{align}
\label{eq:bound_u_1_n}
U_{1,n} \lesssim_{K,p,\underline M_\trd,\wb M_\trd,\underline L_\trd,\overline U_\trd} \frac{(\log n)^2}{\sqrt{n}}\cdot \Delta_n+\frac{(\log n)}{\sqrt{n}}\cdot|\log \Delta_n|^{h_2/2}\,\Delta^{1-h_1/2}_n,
\end{align}
both for fixed $M_1,\ldots,M_n$ satisfying $\mathcal A_\trd$ and unconditionally.

\paragraph{Bounding $U_{2,n}$.}
By definition
\[
\overline{\rmD}_i(\xi_0,\wh G_\trd,M_i)=\int_{\widetilde A_n}^{\widetilde B_n}\rmD_i(v^2,M_i;\xi_0,\wh G_\trd)f_{G,K-p}(v^2)\,\dd v^2.
\]
We can add and subtract $f_{\wh G_\trd,K-p}(v^2)$ to obtain
\begin{align}
\label{eq:support_skel_eq_bar_d}
\overline{\rmD}_i(\xi_0,\wh G_\trd,M_i)&=\int_{\widetilde A_n}^{\widetilde B_n}\rmD_i(v^2,M_i;\xi_0,\wh G_\trd)\bigl(f_{G,K-p}(v^2)-f_{\wh G_\trd,K-p}(v^2)\bigr)\,\dd v^2\\
&+\int_{\widetilde A_n}^{\widetilde B_n}\rmD_i(v^2,M_i;\xi_0,\wh G_\trd)f_{\wh G_\trd,K-p}(v^2)\,\dd v^2.
\end{align}
If \eqref{eq:bayesian_cs_model_2} holds, $\rmD_i(\cdot,M_i;\xi_0,\wh G_\trd)$ is the score of the conditional likelihood of $v^2$ and therefore
\begin{align}
\label{eq:bayesian_score_eq}
\int_0^\infty \rmD_i(v^2,M_i;\xi_0,\wh G_\trd)f_{\wh G_\trd,K-p}(v^2)\,\dd v^2=0.
\end{align}
Now, one can decompose the term in the foregoing expression as
\begin{align}
&\int_0^\infty \rmD_i(v^2,M_i;\xi_0,\wh G_\trd)f_{\wh G_\trd,K-p}(v^2)\,\dd v^2\\
&=\int_{\widetilde A_n}^{\widetilde B_n}\rmD_i(v^2,M_i;\xi_0,\wh G_\trd)f_{\wh G_\trd,K-p}(v^2)\,\dd v^2+\int_{[\widetilde A_n,\widetilde B_n]^c}\rmD_i(v^2,M_i;\xi_0,\wh G_\trd)f_{\wh G_\trd,K-p}(v^2)\,\dd v^2.
\end{align}
By Lemma~\ref{lem:prop_chi_sq}~(5), if we restrict to $\mathcal A_\trd$, this tail integral is $O(n^{-2})$ uniformly over $i=1,\ldots,n$. Plugging the above in \eqref{eq:support_skel_eq_bar_d}, we get using Cauchy-Schwarz inequality and the definition of Hellinger distance
\begin{align}
|\overline{\rmD}_i(\xi_0,\wh G_\trd,M_i)|&\lesssim \left(\int \rmD_i(v^2,M_i;\xi_0,\wh G_\trd)^2\bigl(f_{G,K-p}(v^2)+f_{\wh G_\trd,K-p}(v^2)\bigr)\,\dd v^2\right)^{1/2}\\
&~~~~~~~~~~~~~~~~~~~~~~~~~~~~~~~~~~~~~~~~~\times\mathcal H\!\left(f_{G,K-p},f_{\wh G_\trd,K-p}\right)
+\frac{1}{n^2}.
\end{align}
Using the score bound from Lemma~\ref{lem:prop_chi_sq}(1) together with the bounded support assumptions on $G$ and the finite moments of the scaled $\chi^2_{K-p}$ distribution, the integral factor can be shown to be uniformly bounded by a constant depending only on $K,p,\underline M_\trd,\wb M_\trd,\underline L_\trd,\overline U_\trd$ using Fubini's theorem. Hence
\[
|\overline{\rmD}_i(\xi_0,\wh G_\trd,M_i)|\lesssim \mathcal H\!\left(f_{G,K-p},f_{\wh G_\trd,K-p}\right)+\frac{1}{n^2}.
\]

Consequently, on $\mathcal A_\trd$ we have
\begin{align}
\label{eq:sharp_u_2_n}
|U_{2,n}|\lesssim \mathcal H\!\left(f_{G,K-p},f_{\wh G_\trd,K-p}\right)\Delta_n+\frac{\Delta_n}{n^2}.
\end{align}
Combining the bounds for the Taylor remainder, $U_{1,n}$, and $U_{2,n}$, we conclude that with probability greater than $1-o(n^{-2})$,
\begin{align}
\label{eq:final_bd_sub_1_cs}
\mathrm{Sub}_{n,1}\lesssim \Delta_n^2(\log n)^2+\frac{(\log n)^2}{\sqrt{n}}\cdot \Delta_n+\frac{(\log n)}{\sqrt{n}}\cdot|\log \Delta_n|^{h_2/2}\,\Delta^{1-h_1/2}_n+\mathcal H\!\left(f_{\wh G_\trd,K-p},f_{G,K-p}\right)\Delta_n\nonumber\\
\end{align}
when \eqref{eq:bayesian_cs_model_2} and \eqref{eq:well-specified-limma-reg} holds, both for fixed $M_1,\ldots,M_n$ satisfying $\mathcal A_\trd$ and unconditionally. 

\subsubsection{Bounding $\mathrm{Sub}_{n,2}$}

We again use a second-order Taylor expansion, now applied to $\ell(S_i^2;\wh\xi,G,M_i)$ around $\xi_0(M_i)$:
\begin{align}
\label{eq:taylor_sub_n_2}
&\frac{1}{n}\sum_{i=1}^n\ell(S_i^2;\wh\xi,G,M_i)-\frac{1}{n}\sum_{i=1}^n\ell(S_i^2;\xi_0,G,M_i)\\
&=\frac{1}{n}\sum_{i=1}^n D(\xi_0(M_i),S_i^2;G)\big(\wh\xi(M_i)-\xi_0(M_i)\big)\\
&+\frac{1}{2n}\sum_{i=1}^n H(\widetilde\eta_{t,i},S_i^2;G)\big(\wh\xi(M_i)-\xi_0(M_i)\big)^2
\end{align}
for some $\widetilde\eta_{t,i}$ between $\xi_0(M_i)$ and $\wh\xi(M_i)$. 
As before, the Hessian term is bounded by $(\log n)^2\Delta_n^2$ on $\mathcal A_\trd$. Thus it remains to control the linear term. If \eqref{eq:bayesian_cs_model_2} holds, then $D(\eta,s^2;G)$ is proportional to the score of the full conditional density of $S_i^2$ given $M_i$ and we have
\[
\mathbb E\!\left[D(\xi_0(M_i),S_i^2;G)\mid M_i\right]=0.
\]
Proceeding exactly as in the proof of $U_{1,n}$, using 
the chaining argument, we can conclude that
\[
\left|\frac{1}{n}\sum_{i=1}^n D(\xi_0(M_i),S_i^2;G)\big(\wh\xi(M_i)-\xi_0(M_i)\big)\right|\lesssim \frac{(\log n)^{3/2}}{\sqrt{n}}\cdot \Delta_n+\frac{(\log n)}{\sqrt{n}}\cdot|\log \Delta_n|^{h_2/2}\,\Delta^{1-h_1/2}_n,
\] 
with probability greater than $1-o(n^{-2})$.
Therefore on $\mathcal A_\trd$,
\begin{align}
\label{eq:sharp_bound_sub_n_2_cs}
\mathrm{Sub}_{n,2}\lesssim \Delta_n^2(\log n)^2+\frac{(\log n)^{3/2}}{\sqrt{n}}\cdot \Delta_n+\frac{(\log n)}{\sqrt{n}}\cdot|\log \Delta_n|^{h_2/2}\,\Delta^{1-h_1/2}_n
\end{align}
under \eqref{eq:bayesian_cs_model_2} and \eqref{eq:well-specified-limma-reg}, with probability greater than $1-o(n^{-2})$.

When \eqref{eq:bayesian_cs_model_2} and \eqref{eq:well-specified-limma-reg} holds, noting that \eqref{eq:final_bd_sub_1_cs} and \eqref{eq:sharp_bound_sub_n_2_cs} holds on $\mathcal A_\trd$ and $\mathbb P(\mathcal A^c_\trd)=o(n^{-2})$, we can show that there exists a constant $C_\trd>0$ such that
\[
\frac{1}{n}\sum_{i=1}^{n}\ell(S^2_i,\wh \xi;\wh G_\trd,M_i)-\frac{1}{n}\sum_{i=1}^{n}\ell(S^2_i,\wh \xi;G,M_i) \ge -C_\trd\mathfrak R_{n,1}(\wh G_\trd),
\]
with probability greater than $1-o(n^{-2})$. 
\end{proof}

\subsection{Hellinger large deviation}
The next step to prove the asymptotic FDR control is to prove a large deviation property for 
\[
\mathcal H^2(f_{\wh G_\trd,K-p},f_{G,K-p}),
\]
where the Hellinger distance between the densities of two mixtures of scaled $\chi^2_{K-p}$ random variables is defined in (13) of \cite{ignatiadis2025empirical}. 

\begin{lemm} 
\label{lem:hell_limm_trnd_cs} 
Assume that $\tau^2_i \mid M_i \simiid G$ for all $i \in [n]$ (cf. \eqref{eq:well-specified-limma-reg}), where $G \in \mathcal G_\trd$. Fix an absolute constant $c_0>0$. Then there exist constants $\widetilde D>0$ and $n_\trd \in \mathbb N$, depending only on $h_1,h_2,K,p,\underline M_\trd,\overline M_\trd,\underline L_\trd,$ and $\overline U_\trd$, such that for all $n \ge n_\trd$, 
\[ \mathbb P\left[\mathcal H^2\left(f_{\wh G_\trd,K-p},f_{G,K-p}\right) \ge \widetilde D~\lambda^{2}_n\right]\lesssim \frac{1}{n^2}+\left(1+\log\log n+\log |\log \Delta_n|\right)~e^{-c_0(\log n)^2}, 
\]
where 
\begin{align} 
\label{eq:def_lambda_trd_n}
\lambda^2_n=\max\left\{\Delta^2_n (\log n)^2,\frac{(\log n)^{2}}{\sqrt{n}}\Delta_n,\frac{(\log n)}{\sqrt{n}}\cdot|\log \Delta_n|^{h_2/2}\,\Delta^{1-h_1/2}_n,\frac{(\log n)^2}{n}\right\}. 
\end{align} 
\end{lemm}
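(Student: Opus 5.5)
The plan is to follow the standard Hellinger large-deviation argument for approximate NPMLEs (Zhang 2009; Jiang and Zhang 2009; Ignatiadis and Sen 2025, Theorem 7 and Lemma S.4). The approximate likelihood inequality of Lemma~\ref{lem:lik_ratio_limma_trnd} feeds into a covering argument over $\wb{\mathcal F}_\trd$. The new feature is that the slack $C_\trd\mathfrak R_{n,1}(\wh G_\trd)$ itself contains the random term $\mathcal H(f_{\wh G_\trd,K-p},f_{G,K-p})\Delta_n$. I would therefore run the argument over a peeling of Hellinger shells. This peeling is the source of the $\log\log n+\log|\log\Delta_n|$ factor.

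\emph{Step 1 (deterministic reduction).} Work on the event of Lemma~\ref{lem:lik_ratio_limma_trnd} intersected with $\mathcal A_\trd$ from \eqref{eq:cala_trd}, which costs $o(n^{-2})$ in probability. On this event,
\begin{align}
\prod_{i=1}^n \frac{f_{\wh G_\trd,K-p}(V_i^2)}{f_{G,K-p}(V_i^2)} \ge \exp\bigl(-nC_\trd\{r_n+\mathcal H(f_{\wh G_\trd,K-p},f_{G,K-p})\Delta_n\}\bigr),
\end{align}
where $r_n$ collects the first three terms of $\mathfrak R_{n,1}$. Since $\Delta_n\le\lambda_n/\log n$ and $r_n\le\lambda_n^2$, on the shell $\{\mathcal H\in[2^{k}\lambda_n,2^{k+1}\lambda_n)\}$ the slack is at most $C(\lambda_n^2+2^{k}\lambda_n^2/\log n)$. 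This is $\ll 4^{k}\lambda_n^2\approx\mathcal H^2$ once $\widetilde D$ is large. So on every such shell, $\wh G_\trd$ is an $\epsilon$-approximate maximizer of the true likelihood in $V_i^2$, with $\epsilon$ a small fraction of $n\mathcal H^2$.

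\emph{Step 2 (shell-wise large deviation).} For each shell I take a $\|\cdot\|_\infty$ net of $\wb{\mathcal F}_\trd$ restricted to $[\widetilde A_n,\widetilde B_n]$ at scale $\eta_n=n^{-2}$. By Lemma~\ref{lem:entropy_joint_density_deriv} this net has log-cardinality $\lesssim(\log n)^2$. For each net element $f_j$ with $\mathcal H(f_j,f_G)\ge 2^{k}\lambda_n/2$, the bound $\mathbb E\prod_i\sqrt{(f_j+\eta_n)/f_G}(V_i^2)\le\exp(-n\mathcal H^2/2+n\eta_n\widetilde B_n)$ and Markov's inequality show that the likelihood ratio exceeds $\exp(-cn4^{k}\lambda_n^2)$ with probability at most $\exp(-c'n4^{k}\lambda_n^2)$. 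The truncation to $[\widetilde A_n,\widetilde B_n]$ is handled exactly as in Zhang (2009). Union-bounding over the net uses $n\lambda_n^2\ge(\log n)^2$, which matches the $(\log n)^2$ entropy. Each shell then fails with probability at most $e^{-c_0(\log n)^2}$ after enlarging $\widetilde D$.

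\emph{Step 3 (number of shells).} The Hellinger distance is bounded by $\sqrt2$, so the shells range over $2^k\lambda_n\le\sqrt2$, that is, $k\lesssim\log(1/\lambda_n)$. A naive count would give a $\log n$ factor. To obtain the stated count, I would instead peel on the scale at which the random term matters, namely dyadic in $\log$. Shells where $\mathcal H\Delta_n\le\lambda_n^2$ are handled jointly in one step. The remaining shells are indexed geometrically between $\lambda_n^2/\Delta_n$ and $\sqrt2$, with ratio $e^{2^j}$, which gives $O(\log\log n+\log|\log\Delta_n|)$ pieces. Summing the bounds yields the stated probability.

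The main obstacle is Step 1. One must check that the self-referential slack $\mathcal H\Delta_n$ is genuinely absorbed uniformly over shells. The bound of Lemma~\ref{lem:lik_ratio_limma_trnd} holds with a fixed probability and does not depend on the shell, so this should work. Still, the constants must be arranged so that the $\Delta_n$ coefficient does not grow with $\widetilde D$. This is where I would be most careful.
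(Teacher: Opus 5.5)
Your argument is correct, but it absorbs the self-referential slack $\mathcal H\Delta_n$ differently from the paper.

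\textbf{The paper's route.} Adapting Chen, the paper peels over shells $\{\mathcal H(f_{\widehat G},f_G)\in[\mu_{n,k+1},\mu_{n,k})\}$ with $\mu_{n,k}=\widetilde C\lambda_n^{1-2^{-k+1}}$. These shells have top-to-bottom ratio $\lambda_n^{-2^{-k}}$. So the slack is evaluated at the top of each shell and matched against its bottom, giving the recursion $\gamma_n^2+\mu_{n,k}\gamma_n\lesssim\mu_{n,k+1}^2$. This recursion reaches only $\lambda_n^{2(1-\epsilon)}$, after $\lceil\log_2(1/\epsilon)\rceil$ shells. The choice $\epsilon\asymp\min\{1/\log n,1/|\log\Delta_n|\}$ then gives $\lambda_n^{2(1-\epsilon)}\lesssim\lambda_n^2$. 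The factor $1+\log\log n+\log|\log\Delta_n|$ is exactly this shell count.

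\textbf{Your route.} You use bounded-ratio shells above $\sqrt{\widetilde D}\lambda_n$ together with $\Delta_n\le\lambda_n/\log n$. Then the slack $C(\lambda_n^2+\mathcal H\Delta_n)$ is a vanishing fraction of $\mathcal H^2$ outright, with no recursion and no $\epsilon$-loss. The paper's route is a generic self-improvement template; yours exploits that the slack is linear in $\mathcal H$ with coefficient $\Delta_n\ll\lambda_n$.

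This also makes your Step~3 superfluous. Each of the $O(\log n)$ dyadic shells fails with probability at most $e^{-c_1(\log n)^2}$, where $c_1$ increases with $\widetilde D$. Choosing $c_1>c_0$ absorbs the $\log n$ count and already gives the stated bound, even without the $\log\log$ factor. You could even drop shells entirely by letting each net element's Markov threshold depend on $\mathcal H(f_j,f_G)$.

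If you keep the wide shells of Step~3, note that your Step~1 computation covers only dyadic shells. Take the $j$-th wide shell $[\rho e^{2^j-1},\rho e^{2^{j+1}-1})$ with $\rho=\lambda_n^2/\Delta_n$. There the needed inequality $C\mathcal H_{\mathrm{top}}\Delta_n\le c\,\mathcal H_{\mathrm{bot}}^2$ reduces to $Ce\,\Delta_n^2\le c\,\lambda_n^2$. This holds because $\Delta_n\le\lambda_n/\log n$.

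\textbf{Two small repairs.}
\begin{enumerate}
\item The truncation term from Cauchy--Schwarz is of order $n\sqrt{\eta_n\widetilde B_n}$, not $n\eta_n\widetilde B_n$. It is still $O(\sqrt{\log n})$, hence negligible.
\item You should justify that the net element nearest $f_{\widehat G}$ inherits the shell's Hellinger lower bound. Sup-norm closeness on $[\widetilde A_n,\widetilde B_n]$ plus the tail bounds gives Hellinger distance $O(n^{-1}\sqrt{\log n})\ll\lambda_n$. Alternatively, use proper covers of each shell, as the paper does.
\end{enumerate}
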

The proof adapts the empirical Bayes arguments of \cite{chen2024empiricalbayesestimationprecision} to the present $\chi^2$ mixture setting.

\begin{proof}
    Define the event $\mathcal H_{\trd}$ as follows:
\begin{align}
    \mathcal H_{\trd}&:=\left\{\prod_{i=1}^n\frac{f_{\wh G_\trd,K-p}(S_i^2/\xi^2_0(M_i))}{f_{G,K-p}(S_i^2/\xi^2_0(M_i))} \ge e^{-n\, C_\trd\mathfrak R_{n,1}(\wh G_\trd)}\right\},
\end{align}
where for any $\rmG \in \mathcal G_\trd$, the misspecification cost $\mathfrak R_{n,1}(\rmG)$ is defined in Lemma~\ref{lem:lik_ratio_limma_trnd}. By Lemma~\ref{lem:lik_ratio_limma_trnd}, we can conclude that
\begin{align}
\label{eq:fundamental_eq_cs}
\mathbb P\!\left(\mathcal H^c_{\trd}\right)=o(n^{-2}).
\end{align}
We choose $n$ large enough so that $\lambda_n \in (0,1)$. Next define $K=\lceil |\log_2(1/\epsilon)|\rceil$. For each $k\in\{1,\ldots,K\}$, let

\[
\mathcal B_k=\left\{\rmG:\operatorname{supp}(\rmG)\subset[\underline L_\trd,\overline U_\trd],\;\mathcal H\!\left(f_{\rmG,K-p},f_{G,K-p}\right)\in[\mu_{n,k+1},\mu_{n,k})\right\},
\]
where $\mu_{n,k}=\widetilde C\,\lambda_n^{\,1-2^{-k+1}}$.
Then
\[
\left\{\mathcal H^2\!\left(f_{\wh G_\trd,K-p},f_{G,K-p}\right)\ge \widetilde C^2\,\lambda_n^{2(1-\epsilon)}\right\}\subseteq \bigcup_{k=1}^K\{\wh G_\trd\in \mathcal B_k\}.
\]
Combining with \eqref{eq:fundamental_eq_cs}, we can conclude that
\begin{align*}
\mathbb P\!\left[\mathcal H^2\!\left(f_{\wh G_\trd,K-p},f_{G,K-p}\right)\ge \widetilde C^2\,\lambda_n^{2(1-\epsilon)}\right]
&\le \mathbb P\!\left[\mathcal H^2\!\left(f_{\wh G_\trd,K-p},f_{G,K-p}\right)\ge \widetilde C^2\,\lambda_n^{2(1-\epsilon)},\;\mathcal H_{\trd}\right]+o(n^{-2})\\
&\le \mathbb P\!\left[\left\{\wh G_\trd\in \bigcup_{k=1}^K\mathcal B_k\right\}\cap \mathcal H_{\trd}\right]+o(n^{-2})\\
&\le \sum_{k=1}^K\mathbb P\left(\left\{\wh G_\trd\in\mathcal B_k\right\}\cap \mathcal H_{\trd}\right)+o(n^{-2}).
\end{align*}
Thus, it is enough to bound 
\(
\smash{\mathbb P(\{\wh G_\trd\in\mathcal B_k\}\cap \mathcal H_{\trd})},
\)
uniformly over $k\in[K]$.
Let $\mathcal S_{\trd,k}=\{f_{G_j,K-p}:j=1,\ldots,\mathcal J_k\}\subseteq \mathcal B_k$ be a minimal proper $n^{-2}$-cover of $\mathcal B_k$ under the metric $\|f_{G_1,K-p}-f_{G_2,K-p}\|_\infty=\sup_{x\ge0}|f_{G_1,K-p}(x)-f_{G_2,K-p}(x)|$. Since $\mathcal B_k$ is a subset of the class considered in Theorem~9 of \citet{ignatiadis2025empirical}, it follows from the proof therein that
\begin{align}
\label{eq:metric_entropy_limma_trd_corr}
\log\mathcal J_k\lesssim_{K,p,\underline M_\trd,\overline M_\trd,\underline L_\trd,\overline U_\trd}(\log n)^2.
\end{align}
Moreover, for every $\rmG\in\mathcal B_k$, there exists $f_{G_J,K-p}\in \mathcal S_{\trd,k}$ such that
\begin{align}
\label{eq:cover_density_corr}
f_{\rmG,K-p}(v^2)\le f_{G_J,K-p}(v^2)+\eta\le \max_{1\le j\le |\mathcal J_k|}f_{G_j,K-p}(v^2)+\eta \qquad \mbox{for all $v^2\ge0$,}
\end{align}
where here and below we take $\eta=n^{-2}$. Define the truncation function
\begin{align}
\label{eq:truncation_func}
\eta_\trd(z)=\eta\,\mathds 1_{\{|z|\le \overline B\}}+\eta\,\frac{\overline B^2}{z^2}\,\mathds{1}_{\{|z|>\overline B\}},
\end{align}
where
\begin{align}
\label{eq:b_limma_trd_corr}
\overline B\asymp \frac{\overline U_\trd}{\underline M_\trd}\log n.
\end{align}
For every $\rmG\in \mathcal B_k$ we have
\(
\mathcal H\!\left(f_{\rmG,K-p},f_{G,K-p}\right)\le \mu_{n,k}.
\)
Also define
\begin{align}
\label{eq:def_gamma_n_well}
\gamma_n^2=\Delta_n^2(\log n)^2+\frac{(\log n)^2}{\sqrt{n}}\cdot \Delta_n+\frac{(\log n)}{\sqrt{n}}\cdot|\log \Delta_n|^{h_2/2}\,\Delta^{1-h_1/2}_n.
\end{align}
Since
\[
\mathfrak R_{n,1}(\rmG)=\gamma_n^2+\mathcal H\!\left(f_{\rmG,K-p},f_{G,K-p}\right)\Delta_n,
\]
we obtain for every $\rmG\in\mathcal B_k$,
\begin{align}
\label{eq:rel_mu_n_k_corr}
\mathfrak R_{n,1}(\rmG)&\le \gamma_n^2+\mu_{n,k}\Delta_n\lesssim \gamma_n^2+\mu_{n,k}\gamma_n\lesssim \mu_{n,k+1}^2,
\end{align}
where the penultimate inequality follows from \eqref{eq:def_gamma_n_well}.
Indeed, $\gamma_n\lesssim \lambda_n$ by definition of $\lambda_n$, and $\mu_{n,k}\gamma_n\le \widetilde C\,\lambda_n^{2-2^{-k+1}}\asymp \mu_{n,k+1}^2$, while $\gamma_n^2\le \lambda_n^2\le \mu_{n,k+1}^2$ since $\lambda_n\in(0,1)$.

Consequently, if $\wh G_\trd\in \mathcal B_k$ and $\mathcal H_\trd$ holds, then
\[
\prod_{i=1}^n\frac{f_{\wh G_\trd,K-p}(S_i^2/\xi^2_0(M_i))}{f_{G,K-p}(S_i^2/\xi^2_0(M_i))}\ge e^{-nC_\trd\mathfrak R_{n,1}(\wh G_\trd)}\ge e^{-nC(\gamma_n^2+\mu_{n,k}\gamma_n)}
\]
for some constant $C>0$ depending only on the model parameters.

Let $V_i^2=S_i^2/\xi_0^2(M_i)$. Using \eqref{eq:cover_density_corr} and arguing exactly as in \citet[proof of Theorem~9]{ignatiadis2025empirical}, we obtain, for any $a>1$,
\begin{align*}
&\mathbb P\left[\left\{\wh G_\trd\in\mathcal B_k\right\}\cap \mathcal H_{\trd}\right]\\
&=\mathbb P\!\left[\mathcal H\!\left(f_{G,K-p},f_{\wh G_\trd,K-p}\right)\in[\mu_{n,k+1},\mu_{n,k}),\;\prod_{i=1}^n\frac{f_{\wh G_\trd,K-p}(V_i^2)}{f_{G,K-p}(V_i^2)}\ge e^{-nC(\gamma_n^2+\mu_{n,k}\gamma_n)}\right]\\
&\le \mathbb P\!\left[\sup_{j\in[\mathcal J_k]}\prod_{i=1}^n\frac{f_{G_j,K-p}(V_i^2)}{f_{G,K-p}(V_i^2)}\ge e^{-nCa(\gamma_n^2+\mu_{n,k}\gamma_n)}\right]+\mathbb P\!\left[\prod_{i=1}^n\frac{C_\sharp}{\eta_\trd(V_i^2)}\ge e^{nC(a-1)(\gamma_n^2+\mu_{n,k}\gamma_n)}\right],
\end{align*}
where $C_\sharp=C_\sharp(K,p,\underline M_\trd,\overline M_\trd,\underline L_\trd,\overline U_\trd)$ is the same constant that appears in the proof of Theorem~9 in \citet{ignatiadis2025empirical}.

For the first term, since \eqref{eq:bayesian_cs_model_2} holds, the variables $V_1^2,\ldots,V_n^2$ are i.i.d. with density $f_{G,K-p}$. Therefore, retracing the argument in \citet{ignatiadis2025empirical} and using \eqref{eq:metric_entropy_limma_trd_corr},
\begin{align*}
&\mathbb P\!\left[\sup_{j\in\mathcal J_k}\prod_{i=1}^n\frac{f_{G_j,K-p}(V_i^2)}{f_{G,K-p}(V_i^2)}\ge e^{-nCa(\gamma_n^2+\mu_{n,k}\gamma_n)}\right]\\
&\le \sum_{j\in\mathcal J_k} e^{\frac{nCa}{2}(\gamma_n^2+\mu_{n,k}\gamma_n)}\mathbb E\!\left[\left(\prod_{i=1}^n\frac{f_{G_j,K-p}(V_i^2)}{f_{G,K-p}(V_i^2)}\right)^{1/2}\right]\\
&\le \sum_{j\in\mathcal J_k}\exp\!\left(-n\mathcal H^2\!\left(f_{G_j,K-p},f_{G,K-p}\right)+n\sqrt{2\eta\overline B}+\frac{nCa}{2}(\gamma_n^2+\mu_{n,k}\gamma_n)\right).
\end{align*}
Since each center $f_{G_j,K-p}$ belongs to the proper cover $\mathcal S_{\trd,k}\subseteq \mathcal B_k$, we have
\[
\mathcal H^2\!\left(f_{G_j,K-p},f_{G,K-p}\right)\ge \mu_{n,k+1}^2=\widetilde C^2\lambda_n^{2(1-2^{-k})}.
\]
Hence
\begin{align}
&\mathbb P\!\left[\sup_{j\in\mathcal J_k}\prod_{i=1}^n\frac{f_{G_j,K-p}(V_i^2)}{f_{G,K-p}(V_i^2)}\ge e^{-nCa(\gamma_n^2+\mu_{n,k}\gamma_n)}\right]\\
&\le \exp\!\left(-n\widetilde C^2\lambda_n^{2(1-2^{-k})}+n\sqrt{2\eta\overline B}+\frac{nCa}{2}(\gamma_n^2+\mu_{n,k}\gamma_n)+\log|\mathcal J_k|\right)\\
&\le \exp\!\left(-n\widetilde C^2\lambda_n^{2(1-2^{-k})}+n\check C\lambda_n^{2(1-2^{-k})}\right),
\end{align}
where the last inequality follows from \eqref{eq:metric_entropy_limma_trd_corr} and \eqref{eq:rel_mu_n_k_corr}, and those inequalities also determine the absolute constant $\check C>0$.
Using \eqref{eq:metric_entropy_limma_trd_corr}, \eqref{eq:rel_mu_n_k_corr}, and the definition of $\lambda_n$, we may choose $\widetilde C>0$ sufficiently large so that
\[
\mathbb P\!\left[\sup_{j\in\mathcal J_k}\prod_{i=1}^n\frac{f_{G_j,K-p}(V_i^2)}{f_{G,K-p}(V_i^2)}\ge e^{-nCa(\gamma_n^2+\mu_{n,k}\gamma_n)}\right]\le \exp\!\left(-nc_0\lambda_n^{2(1-2^{-k})}\right)
\]
for some constant $c_0>0$. Since $\lambda_n\in(0,1)$ and $2(1-2^{-k})\le 2$, it follows that $\lambda_n^{2(1-2^{-k})}\ge \lambda_n^2$. Therefore
\[
\exp\!\left(-nc_0\lambda_n^{2(1-2^{-k})}\right)\le \exp(-nc_0\lambda_n^2)\le \exp(-c_0(\log n)^2),
\]
because $\lambda_n^2\ge (\log n)^2/n$ by \eqref{eq:def_lambda_trd_n}.

For the second term, Lemma~S5 of \citet{ignatiadis2025empirical} yields, after possibly enlarging $\widetilde C$,
\begin{align}
\label{eq:union_bound_2_corr}
\mathbb P\!\left[\prod_{i=1}^n\frac{C_\sharp}{\eta_\trd(V_i^2)}\ge e^{nC(a-1)(\gamma_n^2+\mu_{n,k}\gamma_n)}\right]\le \exp\!\left(-c_0(\log n)^2\right).
\end{align}
Combining the two bounds, we conclude that for every $k\in[K]$,
\[
\mathbb P\left[\left\{\wh G_\trd\in\mathcal B_k\right\}\cap \mathcal H_{\trd}\right]\le \exp\!\left(-c_0(\log n)^2\right).
\] 
Using the union bound over $k=1,\ldots,K$ and using \eqref{eq:fundamental_eq_cs},
\begin{align}
\label{eq:hell_bayes_cs}
\mathbb P\!\left[\mathcal H^2\!\left(f_{\wh G_\trd,K-p},f_{G,K-p}\right)\ge \widetilde C\,\lambda_n^{2(1-\epsilon)}\right]\le \lceil |\log(1/\epsilon)|\rceil e^{-c_0(\log n)^2}+\frac{1}{n^2}.
\end{align}
To finalize the proof, we choose
\begin{align}
\label{eq:def_eps_dada}
\epsilon=c\,\min\left\{\frac{1}{|\log \Delta_n|},\frac{1}{\log n}\right\},
\end{align}
for a sufficiently small absolute constant $c>0$ (possibly depending on $h_1$ and $h_2$), so that $\epsilon\in(0,1/2)$ for all $n$ large enough. 
Since $\lambda_n^2\ge (\log n)^2/n$ by \eqref{eq:def_lambda_trd_n}, we have
\[
\lambda_n^{-2\epsilon}\le
\left(\frac{n}{(\log n)^2}\right)^{\epsilon}=\exp\!\left(\epsilon\log n-2\epsilon\log\log n\right)\le e
\]
for all sufficiently large $n$, and hence $\lambda_n^{2(1-\epsilon)}\lesssim \lambda_n^2$ for all sufficiently large $n$. Using \eqref{eq:hell_bayes_cs} with this choice of $\epsilon$ gives
\begin{align}
\mathbb P\Bigg[\mathcal H^2\!\left(f_{\wh G_\trd,K-p},f_{G,K-p}\right)\ge \widetilde C\,\lambda_n^{2(1-\epsilon)}
\Bigg]\lesssim\left(1+\log\log n+\log|\log\Delta_n|\right)e^{-c_0(\log n)^2}+\frac{1}{n^2}.
\end{align}
Since $\lambda_n^{2(1-\epsilon)}\lesssim \lambda_n^2$, there exists a constant $\widetilde D>0$ such that, for all sufficiently large $n$,
\[
\left\{
\mathcal H^2\!\left(f_{\wh G_\trd,K-p},f_{G,K-p}\right)\ge \widetilde D\,\lambda_n^2\right\}
\subseteq\left\{\mathcal H^2\!\left(f_{\wh G_\trd,K-p},f_{G,K-p}\right)\ge \widetilde C\,\lambda_n^{2(1-\epsilon)}\right\},
\]
and hence
\begin{align}
\mathbb P\!\left[\mathcal H^2\!\left(f_{\wh G_\trd,K-p},f_{G,K-p}\right)\ge \widetilde D\,\lambda_n^2\right]\lesssim
\left(1+\log\log n+\log|\log\Delta_n|\right)e^{-c_0(\log n)^2}+\frac{1}{n^2},
\end{align}
which completes the proof.
\end{proof}

\subsection{Proof of Theorem \ref{thm:conv_p_trnd_orc}}
\label{sec:proof_trend_pval}
To prove Theorem \ref{thm:conv_p_trnd_orc}, recall that $V^2_i=S^2_i/\xi^2_0(M_i)$ for $i \in [n]$. For any $G' \in \mathcal G_\trd$ and $\xi \in \mathcal X$, define the function
\begin{align}
    \PFun(z,S^2_i,M_i;G',\xi):=\mathbb E_{ G'}\left[2\Phi(-z/\{\nu\xi(M_i)\tau\}) \mid S^2_i, M_i\right], \qquad \mbox{for $z \in \mathbb R$}
\end{align}
where $S^2_i,M_i$ are assumed to satisfy distributional specification in \eqref{eq:bayesian_cs_model_1} with $\tau^2_i \mid M_i \simiid G'$.
Next, following the proof of Proposition 11 of \cite[Proposition 11]{ignatiadis2025empirical}, we can show that 
\begin{align}
\label{eq:tweedie_trnd}
\PFun(z,S^2_i,M_i;G',\xi)&:= \frac{C_{K,p}}{f_{G',K-p}\!\left(S^2_i/\xi^2(M_i)\right)}\int_{0}^{\infty}\frac{ (t^2)^{-\,\frac{K-p+1}{2}} 
       \left(S_i^2 / \xi^2(M_i)\right)^{\frac{K-p}{2}-1}}
     { \sqrt{ (K-p+1)t^2 - (K-p)\left(S_i^2/\xi^2(M_i)\right) } } \\
&\quad\quad \times f_{G',K-p+1}(t^2)\,\mathbbm{1}\!\left\{
t^2 \ge \frac{ (K-p)\left(S_i^2/\xi^2(M_i)\right) + z^2/(\nu^2\xi^2(M_i))}
     { K-p+1 }\right\}\, \dd t^2,
\end{align}
where \(C_{K,p}\) is the positive normalising constant from \cite[Proposition 11]{ignatiadis2025empirical} with $\nu:=K-p$ and  
\(f_{\widehat{G}_{\trd},K-p+1}\) denoting the mixture density. 
Furthermore, for all $i \in [n]$, we can show that
\begin{align}
P^\trd_i&= \PregFun(Z_i,S^2_i,M_i;G)=\PFun(Z_i,S^2_i,M_i;G,\xi_0),\quad \wh P^\trd_i=\PFun(Z_i,S^2_i,M_i;\wh G_\trd,\wh\xi)
\end{align}
For any $z \in \mathbb R$, let us also define the following:
\begin{align}
\label{eq:def_n_trd_all}
\widehat{\calD}^{\trd}_i(S_i^2,\widehat{\xi}(M_i))
&:=f_{\wh G_\trd,K-p}(S^2_i/\wh \xi^2(M_i)), \\
\widehat{\calD}^{\trd}_i(S_i^2,\xi_0(M_i))
&:=f_{\wh G_\trd,K-p}(S^2_i/\xi^2_0(M_i)), \\
\calD^{\trd}_i(S_i^2,\xi_0(M_i))&:=
f_{G,K-p}\!\left(S^2_i/\xi^2_0(M_i)\right),\\
N^{\trd}_i(z,S_i^2,M_i,\xi_0,G)
&:=
\PFun(z,S^2_i,M_i;G,\xi_0)\,
\calD^{\trd}_i(S_i^2,\xi_0(M_i)),\\
N^{\trd}_i(z,S_i^2,M_i,\wh\xi,\wh G_\trd)
&:=
\PFun(z,S^2_i,M_i;\wh G_\trd,\wh \xi)\,
\widehat{\calD}^{\trd}_i(S_i^2,\widehat{\xi}(M_i)).
\end{align}
Also, introduce the quantity
\begin{align}
\label{eq:def_n_trd_1}
N^{\trd}_i(z,S_i^2,M_i,\xi_0,\wh G_\trd)
&:= \PFun(z,S^2_i,M_i;\wh G_\trd,\xi_0)\,\widehat{\calD}^{\trd}_i(S_i^2,\xi_0(M_i)).
\end{align}
Let us define the set
\begin{align}
\mathcal B_\trd &= \Bigg\{S^2_i \in [A_n,B_n]\,\mbox{and}\,M_i \in [-\mathrm W_n,\mathrm W_n]\,~\text{for all $i \in [n]$},\\
&~~~~~~~~~~~~~~~~~~~~~~~~~~~\|\wh \xi-\xi_0\|_{\infty,\mathrm W_n} \le \Delta_n,~\mathcal H^2\left(f_{\wh G_\trd,K-p},f_{G,K-p}\right)
    \le \widetilde D~\lambda^{2}_n\Bigg\},
\end{align}
where $A_n=\underline\kappa n^{-6/(K-p)}$ and $B_n=(3\overline\kappa\vee\widetilde\kappa)\log n$.
Observe that by Assumptions~\ref{assu:limma_trnd_var}-\ref{assum:trend_estimation}, Lemma~\ref{lem:prop_chi_sq}, and Lemma~\ref{lem:hell_limm_trnd_cs}, we can conclude that $\mathbb P(\mathcal B^c_\trd) \lesssim n^{-2}$.

Now observe that that using \eqref{eq:lik_1d_limma_trend}, we have
\begin{align}
    \min\{P^\trd_i,\wh P^\trd_i\} \ge 2 \left(1-\Phi(|z| / \nu \underline M^{1/2}\underline L^{1/2}_\trd)\right),
\end{align}
for all $i \in [n]$. Then for $|z| \le \underline z:=\nu\underline M^{1/2}\underline L^{1/2}_\trd z_{1 - \zeta/2}$, it holds that:
\[
\min\{P^\trd_i,\wh P^\trd_i\} \ge 2 \wb{\Phi}(z_{1 - \zeta/2}) = \zeta,
\]
for all $i \in [n]$.
Therefore, we can conclude that $P^\trd_i \wedge \zeta - \wh P^\trd_i \wedge \zeta = \zeta - \zeta = 0$ for $|Z_i| \le \underline z$. Furthermore, we also have
\begin{align}
\left|P^\trd_i \wedge \zeta-\wh P^\trd_i \wedge \zeta\right| & \le \left|P^\trd_i - \wh P^\trd_i\right|.
\end{align}
Furthermore, observe that
\begin{align}
\label{eq:master_conv_limm_trd}
    &\frac{1}{n}\sum_{i=1}^n\mathbb E\left[\left|P^\trd_i \wedge \zeta-\wh P^\trd_i \wedge \zeta\right|\right]\nonumber\\
    & \le \frac{1}{n}\sum_{i=1}^n\left\{\mathbb E\left[\left|P^\trd_i \wedge \zeta-\wh P^\trd_i \wedge \zeta\right|\mathds 1(\mathcal B_\trd)\right]+\mathbb E\left[\left|P^\trd_i \wedge \zeta-\wh P^\trd_i \wedge \zeta\right|\mathds 1(\mathcal B^c_\trd)\right]\right\}\nonumber\\
    & \lesssim_{K, p,\underline M_\trd,\overline M_\trd,\underline L_\trd,\overline U_\trd} \frac{1}{n^2}+\frac{1}{n}\sum_{i=1}^n\mathbb E\left[\left|P^\trd_i \wedge \zeta-\wh P^\trd_i \wedge \zeta\right|\mathds 1(\mathcal B_\trd)\right]\\
    & \le \frac{1}{n^2}+\frac{1}{n}\sum_{i=1}^n\mathbb E\left[\left|P^\trd_i \wedge \zeta-\wh P^\trd_i \wedge \zeta\right|\mathds 1(\mathcal B_\trd)\right].
\end{align}
Now, define $ \calD^\trd_{i,\star}(S^2_i,M_i):=(\widehat{\calD}^{\trd}_i(S_i^2,\widehat{\xi}(M_i))+\calD^{\trd}_i(S_i^2,\xi_0(M_i)))/2,$ for all $i \in [n]$. Proceeding as in the proof of Lemma S9 of \cite{ignatiadis2025empirical}, we can show using $P^\trd_i,\wh P^\trd_i \in [0,1]$ that
\begin{align}
    &\frac{1}{n}\sum_{i=1}^n\mathbb E\left[\left|P^\trd_i \wedge \zeta-\wh P^\trd_i \wedge \zeta\right|\mathds 1(\mathcal B_\trd)\right]\nonumber\\
    &\le\frac{1}{n}\sum_{i=1}^n\mathbb E\left[\left|P^\trd_i \wedge \zeta-\wh P^\trd_i \wedge \zeta\right|\mathds 1(\mathcal B_\trd) \cdot \mathds 1(|Z_i| \ge \underline z)\right]\nonumber\\
    & \le \frac{1}{n}\sum_{i=1}^n\mathbb E\left[\left|\frac{N^{\trd}_i(Z_i,S_i^2,M_i,\xi_0,G)-N^{\trd}_i(Z_i,S_i^2,M_i,\wh\xi,\wh G_\trd)}{\calD^\trd_{i,\star}(S^2_i,M_i)}\right|\mathds 1(\mathcal B_\trd) \cdot \mathds 1(|Z_i| \ge \underline z)\right]\nonumber\\
    & \le \frac{1}{n}\sum_{i=1}^n\mathbb E\left[\sup_{|z|\ge \underline z}\left|\frac{N^{\trd}_i(z,S_i^2,M_i,\xi_0,G)}{\calD^{\trd}_i(S_i^2,\xi_0(M_i))}-\frac{N^{\trd}_i(z,S_i^2,M_i,\wh\xi,\wh G_\trd)}{\widehat{\calD}^{\trd}_i(S_i^2,\widehat{\xi}(M_i))}\right|\mathds 1(\mathcal B_\trd)\right]\nonumber\\
    & \lesssim \frac{1}{n}\sum_{i=1}^n\mathbb E\left[\sup_{|z|\ge \underline z}\left|\frac{N^{\trd}_i(z,S_i^2,M_i,\xi_0,G)-N^{\trd}_i(z,S_i^2,M_i,\wh\xi,\wh G_\trd)}{\calD^\trd_{i,\star}(S^2_i,M_i)}\right|\mathds 1(\mathcal B_\trd)\right]\nonumber\\
    & \hskip 4em +\frac{1}{n}\sum_{i=1}^n\mathbb E\left[\left|\frac{\widehat{\calD}^{\trd}_i(S_i^2,\widehat{\xi}(M_i))-\calD^{\trd}_i(S_i^2,\xi_0(M_i))}{\calD^\trd_{i,\star}(S^2_i,M_i)}\right|\mathds 1(\mathcal B_\trd)\right].
\end{align}
Let us define
\begin{align}
\label{eq:def_frk_b}
\mathfrak B^\trd_1&=\frac{1}{n}\sum_{i=1}^n\mathbb E\left[\sup_{|z|\ge \underline z}\left|\frac{N^{\trd}_i(z,S_i^2,M_i,\xi_0,G)-N^{\trd}_i(z,S_i^2,M_i,\wh\xi,\wh G_\trd)}{\calD^\trd_{i,\star}(S^2_i,M_i)}\right|\mathds 1(\mathcal B_\trd)\right]\\
\mathfrak B^\trd_2&=\frac{1}{n}\sum_{i=1}^n\mathbb E\left[\left|\frac{\widehat{\calD}^{\trd}_i(S_i^2,\widehat{\xi}(M_i))-\calD^{\trd}_i(S_i^2,\xi_0(M_i))}{\calD^\trd_{i,\star}(S^2_i,M_i)}\right|\mathds 1(\mathcal B_\trd)\right].
\end{align}
\paragraph{Controlling $\mathfrak B^\trd_1$.}
We shall show that when we restrict ourselves to $\mathcal B_\trd$ and $|z|\ge \underline z$,
\[
\sup_{|z|\ge \underline z}\left|N^{\trd}_i(z,S_i^2,M_i,\xi_0,G)-N^{\trd}_i(z,S_i^2,M_i,\wh\xi,\wh G_\trd)\right| \le \zeta_n,
\]
almost surely, for some sequence $\zeta_n>0$ to be specified later. Then, proceeding as in the proof of Lemma S9 of \cite{ignatiadis2025empirical}, we can show that
\begin{align}
\label{eq:b_1_trnd_p_val}
    \mathfrak B^\trd_1 \lesssim \zeta_n\, \mathbb E\left[\frac{\mathds 1(\mathcal B_\trd)}{\calD^\trd_{1,\star}(S^2_1,M_1)}\right] &\le \zeta_n\, \mathbb E\left[\frac{\mathds 1\{S^2_1 \le B_n\}}{\calD^\trd_{1,\star}(S^2_1,M_1)}\right]\\
    &=\zeta_n\; \mathbb E\left[\mathbb E\left[\frac{\mathds 1\{S^2_1 \le B_n\}}{\calD^\trd_{1,\star}(S^2_1,M_1)} \Bigg\lvert\, M_1\right]\right]\\
    & \le 2\cdot \zeta_n\; \mathbb E\left[\mathbb E\left[\frac{\mathds 1\{S^2_1 \le B_n\}}{\calD^{\trd}_1(S_1^2,\xi_0(M_1))} \Bigg\lvert\, M_1\right]\right] \overset{(1)}{\lesssim} B_n \zeta_n =\zeta_n(\log n),
\end{align}
where inequality (1) follows since the density of $S^2_1 \mid M_1$ is proportional to $\calD^{\trd}_1(S_1^2,\xi_0(M_1))$ with the Jacobian term uniformly bounded in $[\underline M_\trd, \overline M_\trd]$. Furthermore, in the derivation, we have used the fact that the pairs $(S^2_1,M_1),\ldots,(S^2_n,M_n)$ are identically distributed.
Using the triangle inequality, we can show that for any $|z| \ge \underline z$
\begin{align}
\label{eq:decom_n_trd_1}
  &\left|N^{\trd}_i(z,S_i^2,M_i,\xi_0,G)-N^{\trd}_i(z,S_i^2,M_i,\wh\xi,\wh G_\trd)\right|  \\
  & \le \left|N^{\trd}_i(z,S_i^2,M_i,\xi_0,G)-N^{\trd}_i(z,S_i^2,M_i,\xi_0,\wh G_\trd)\right|+\left|N^{\trd}_i(z,S_i^2,M_i,\xi_0,\wh G_\trd)-N^{\trd}_i(z,S_i^2,M_i,\wh\xi,\wh G_\trd)\right|.
\end{align}
For the first term, we can use the techniques in Lemma S9 of \cite{ignatiadis2025empirical} to show that, if we restrict to $\mathcal B_\trd$, we have
\begin{align}
    \label{eq:part_1_term_b_1}
    &\left|N^{\trd}_i(z,S_i^2,M_i,\xi_0,G)-N^{\trd}_i(z,S_i^2,M_i,\xi_0,\wh G_\trd)\right|\\
    & \lesssim_{K, p,\underline M_\trd,\overline M_\trd,\underline L_\trd,\overline U_\trd,\zeta} \mathcal H(f_{\wh G_\trd,K-p},f_{G,K-p})|\log \mathcal H(f_{\wh G_\trd,K-p},f_{G,K-p})|^{1/2}\\
    &\lesssim \lambda_n \sqrt{|\log\lambda_n|}, \qquad \mbox{where $\lambda_n$ is defined in \eqref{eq:def_lambda_trd_n}}.
\end{align}
From the definition of $\lambda_n$ and using $\Delta_n \ll 1$, we can conclude that $|\log \lambda_n| \lesssim (\log n).$ Therefore, 
\begin{align}
    &\left|N^{\trd}_i(z,S_i^2,M_i,\xi_0,G)-N^{\trd}_i(z,S_i^2,M_i,\xi_0,\wh G_\trd)\right|\\
    &\lesssim \lambda_n \sqrt{|\log\lambda_n|}\lesssim \max\left\{\Delta_n (\log n)^{3/2},\frac{(\log n)^{3/2}}{n^{1/4}}\Delta^{1/2}_n,\frac{(\log n)}{n^{1/4}}|\log \Delta_n|^{h_2/4}\Delta^{\frac{1}{2}(1-h_1/2)}_n,\frac{(\log n)^{3/2}}{\sqrt{n}}\right\},\nonumber
\end{align}
for all $i \in [n]$. 

To control the second term, we consider the function
\begin{align}
    \label{eq:def_n_func}
    \wh g_{\trd}(\eta;z,s^2)
&:= 
C_{K,p}
\int_{0}^{\infty}
\frac{ (t^2)^{-\,\frac{K-p+1}{2}} 
       \left(s^2/\eta^2\right)^{\frac{K-p}{2}-1}}
     { \sqrt{ (K-p+1)t^2 - (K-p)\left(s^2/\eta^2\right) } } \\
&\quad\quad \times
f_{\widehat{G}_{\trd},K-p+1}(t^2)\,
\mathbbm{1}\!\left\{
t^2 \ge 
\frac{ (K-p)\left(s^2/\eta^2\right) + z^2/(\nu^2\eta^2)}
     { K-p+1 }
\right\}
\, dt^2 ,
\end{align}
for any $s^2>0$ and $|z| \ge \underline z$. Observe that
\begin{align}
N^{\trd}_i(z,S_i^2,M_i,\xi_0,\wh G_\trd)&:=\wh g_{\trd}(\xi_0(M_i);z,S^2_i)\\
N^{\trd}_i(z,S_i^2,M_i,\wh \xi,\wh G_\trd)&:=\wh g_{\trd}(\wh\xi(M_i);z,S^2_i).
\end{align}
Then using the expression of $\frac{d}{d\eta}\wh g_{\trd}(\eta;z,s^2)$ from Lemma~\ref{lem:derivative_g_trd} and Assumption~\ref{assum:trend_estimation}, we can conclude that for all $i \in [n]$,
\begin{align}
\label{eq:deriv_wh_g}
\left|\frac{d}{d\eta}\wh g_{\trd}(\check\eta_i;z,S^2_i)\right| \lesssim_{\overline M_\trd,\underline M_\trd,\underline L_\trd,\wb U_\trd,\underline z} (\log n),
\end{align}
if we restrict ourselves to $\mathcal B_\trd$, where
\[
\check\eta_i = t_i\xi_0(M_i)+(1-t_i)\wh \xi(M_i), \qquad \mbox{for any $t_1,\ldots,t_n \in [0,1]$.}
\]
Consequently using Assumption~\ref{assum:trend_estimation}, we can conclude that
\begin{align}
\label{eq:b_1_trd_new}
\sup_{|z|\ge \underline z}\left|N^{\trd}_i(z,S_i^2,M_i,\xi_0,\wh G_\trd)-N^{\trd}_i(z,S_i^2,M_i,\wh \xi,\wh G_\trd)\right| \lesssim_{K, p,\underline M_\trd,\overline M_\trd,\underline L_\trd,\overline U_\trd,\zeta} (\log n)~\Delta_n.
\end{align}
Therefore, we choose 
\[
\zeta_n = \max\left\{\Delta_n (\log n)^{3/2},\frac{(\log n)^{3/2}}{n^{1/4}}\Delta^{1/2}_n,\frac{(\log n)}{n^{1/4}}|\log \Delta_n|^{h_2/4}\Delta^{\frac{1}{2}(1-h_1/2)}_n,\frac{(\log n)^{3/2}}{\sqrt{n}}\right\},
\]
and conclude using \eqref{eq:b_1_trnd_p_val} and \eqref{eq:b_1_trd_new}, we have
\begin{align}
    \label{eq:bound_b_1_trd}
    \mathfrak B^\trd_1 &\lesssim \max\Big\{\Delta_n (\log n)^{5/2},\frac{(\log n)^{5/2}}{n^{1/4}}\Delta^{1/2}_n,\frac{(\log n)^2}{n^{1/4}}|\log \Delta_n|^{h_2/4}\Delta^{\frac{1}{2}(1-h_1/2)}_n,\frac{(\log n)^{5/2}}{\sqrt{n}}\Big\},
\end{align}
where the absolute constants in the foregoing inequality depend only on $h_1,h_2,K, p,\underline M_\trd,\overline M_\trd,\underline L_\trd,\overline U_\trd$ and $\zeta$.

\paragraph{Controlling $\mathfrak B^\trd_2$.}
To control $\mathfrak B^\trd_2$, observe that
\begin{align}
\label{eq:decomp_b_2_trd_1}
    \mathfrak B^\trd_2 & = \frac{1}{n}\sum_{i=1}^n\mathbb E\left[\left|\frac{\widehat{\calD}^{\trd}_i(S_i^2,\widehat{\xi}(M_i))-\calD^{\trd}_i(S_i^2,\xi_0(M_i))}{\calD^\trd_{i,\star}(S^2_i,M_i)}\right|\mathds 1(\mathcal B_\trd)\right]\\
    & \lesssim \frac{1}{n}\sum_{i=1}^n\mathbb E\left[\left|\frac{\widehat{\calD}^{\trd}_i(S_i^2,\widehat{\xi}(M_i))-\widehat{\calD}^{\trd}_i(S_i^2,\xi_0(M_i))}{\calD^\trd_{i,\star}(S^2_i,M_i)}\right|\mathds 1(\mathcal B_\trd)\right]\\
    &\qquad+\frac{1}{n}\sum_{i=1}^n\mathbb E\left[\left|\frac{\calD^{\trd}_i(S_i^2,\xi_0(M_i))-\widehat{\calD}^{\trd}_i(S_i^2,\xi_0(M_i))}{\calD^\trd_{i,\star}(S^2_i,M_i)}\right|\mathds 1(\mathcal B_\trd)\right]\\
    & \lesssim \frac{1}{n}\sum_{i=1}^n\mathbb E\left[\left|\frac{\widehat{\calD}^{\trd}_i(S_i^2,\widehat{\xi}(M_i))-\widehat{\calD}^{\trd}_i(S_i^2,\xi_0(M_i))}{\calD^\trd_{i}(S^2_i,\xi_0(M_i))}\right|\mathds 1(\mathcal B_\trd)\right]\\
    &\qquad+\frac{1}{n}\sum_{i=1}^n\mathbb E\left[\left|\frac{\calD^{\trd}_i(S_i^2,\xi_0(M_i))-\widehat{\calD}^{\trd}_i(S_i^2,\xi_0(M_i))}{\calD^\trd_{i,\star}(S^2_i,M_i)}\right|\mathds 1(\mathcal B_\trd)\right]
\end{align}
To analyze the second term, define 
\[
\wt \calD^\trd_{i,\star}(S^2_i,M_i):=\frac{\calD^{\trd}_i(S_i^2,\xi_0(M_i))+\widehat{\calD}^{\trd}_i(S_i^2,\xi_0(M_i))}{2}.
\]
Then we can write
\begin{align}
\label{eq:decomp_b_2_trd_2}
  &\frac{1}{n}\sum_{i=1}^n\mathbb E\left[\left|\frac{\calD^{\trd}_i(S_i^2,\xi_0(M_i))-\widehat{\calD}^{\trd}_i(S_i^2,\xi_0(M_i))}{\calD^\trd_{i,\star}(S^2_i,M_i)}\right|\mathds 1(\mathcal B_\trd)\right] \\
  & \lesssim \frac{1}{n}\sum_{i=1}^n\mathbb E\left[\left|\frac{\calD^{\trd}_i(S_i^2,\xi_0(M_i))-\widehat{\calD}^{\trd}_i(S_i^2,\xi_0(M_i))}{\wt \calD^\trd_{i,\star}(S^2_i,M_i)}\right|\mathds 1(\mathcal B_\trd)\right]\\
  &\qquad+\frac{1}{n}\sum_{i=1}^n\mathbb E\left[\left|\frac{\calD^{\trd}_i(S_i^2,\xi_0(M_i))-\widehat{\calD}^{\trd}_i(S_i^2,\xi_0(M_i))}{\wt \calD^\trd_{i,\star}(S^2_i,M_i)}\right|\times\left|\frac{\wt \calD^\trd_{i,\star}(S^2_i,M_i)- \calD^\trd_{i,\star}(S^2_i,M_i)}{ \calD^\trd_{i,\star}(S^2_i,M_i)}\right|\mathds 1(\mathcal B_\trd)\right].
\end{align}
Next, observe that
\begin{align}
\label{eq:decomp_corrected_thm_7}
    & \frac{1}{n}\sum_{i=1}^n\mathbb E\left[\left|\frac{\calD^{\trd}_i(S_i^2,\xi_0(M_i))-\widehat{\calD}^{\trd}_i(S_i^2,\xi_0(M_i))}{\wt \calD^\trd_{i,\star}(S^2_i,M_i)}\right|\times\left|\frac{\wt \calD^\trd_{i,\star}(S^2_i,M_i)- \calD^\trd_{i,\star}(S^2_i,M_i)}{ \calD^\trd_{i,\star}(S^2_i,M_i)}\right|\mathds 1(\mathcal B_\trd)\right]\\
    & \asymp \frac{1}{n}\sum_{i=1}^n\mathbb E\left[\left|\frac{\calD^{\trd}_i(S_i^2,\xi_0(M_i))-\widehat{\calD}^{\trd}_i(S_i^2,\xi_0(M_i))}{\calD^{\trd}_i(S_i^2,\xi_0(M_i))+\widehat{\calD}^{\trd}_i(S_i^2,\xi_0(M_i))}\right|\times\left|\frac{\wh\calD^\trd_{i}(S^2_i,\xi_0(M_i))- \wh\calD^\trd_{i}(S^2_i,\wh \xi(M_i))}{ \calD^\trd_{i,\star}(S^2_i,M_i)}\right|\mathds 1(\mathcal B_\trd)\right]\\
    & \lesssim \frac{1}{n}\sum_{i=1}^n\mathbb E\left[\left|\frac{\wh\calD^\trd_{i}(S^2_i,\xi_0(M_i))- \wh\calD^\trd_{i}(S^2_i,\wh \xi(M_i))}{ \calD^\trd_{i,\star}(S^2_i,M_i)}\right|\mathds 1(\mathcal B_\trd)\right]\\
    & \lesssim \frac{1}{n}\sum_{i=1}^n\mathbb E\left[\left|\frac{\wh\calD^\trd_{i}(S^2_i,\xi_0(M_i))- \wh\calD^\trd_{i}(S^2_i,\wh \xi(M_i))}{ \calD^\trd_{i}(S^2_i,\xi_0(M_i))}\right|\mathds 1(\mathcal B_\trd)\right].
\end{align}
The foregoing inequality along with \eqref{eq:decomp_b_2_trd_2} allows us to re-write \eqref{eq:decomp_b_2_trd_1} as follows:
\begin{align}
\label{eq:decomp_b_2_trd_final}
    \mathfrak B^\trd_2 & = \frac{1}{n}\sum_{i=1}^n\mathbb E\left[\left|\frac{\widehat{\calD}^{\trd}_i(S_i^2,\widehat{\xi}(M_i))-\calD^{\trd}_i(S_i^2,\xi_0(M_i))}{\calD^\trd_{i,\star}(S^2_i,M_i)}\right|\mathds 1(\mathcal B_\trd)\right]\\
    & \lesssim \frac{1}{n}\sum_{i=1}^n\mathbb E\left[\left|\frac{\calD^{\trd}_i(S_i^2,\xi_0(M_i))-\widehat{\calD}^{\trd}_i(S_i^2,\xi_0(M_i))}{\wt \calD^\trd_{i,\star}(S^2_i,M_i)}\right|\mathds 1(\mathcal B_\trd)\right]\\
    &\qquad+\frac{1}{n}\sum_{i=1}^n\mathbb E\left[\left|\frac{\wh\calD^\trd_{i}(S^2_i,\xi_0(M_i))- \wh\calD^\trd_{i}(S^2_i,\wh \xi(M_i))}{ \calD^\trd_{i}(S^2_i,\xi_0(M_i))}\right|\mathds 1(\mathcal B_\trd)\right].
\end{align}
Proceeding as in the proof of Lemma S.10 of \citet{ignatiadis2025empirical} and using Lemma \ref{lem:hell_limm_trnd_cs}, we can show that
\begin{align}
\label{eq:part_2_b_2_bound}
   &\frac{1}{n}\sum_{i=1}^n\mathbb E\left[\left|\frac{\calD^{\trd}_i(S_i^2,\xi_0(M_i))-\widehat{\calD}^{\trd}_i(S_i^2,\xi_0(M_i))}{\wt \calD^\trd_{i,\star}(S^2_i,M_i)}\right|\mathds 1(\mathcal B_\trd)\right]\\
   & \lesssim_{K,p,\underline M_\trd,\overline M_\trd,\underline L_\trd,\overline U_\trd} \frac{(\log n)}{\sqrt{n}}+\lambda_n\\
   &\lesssim \Delta_n(\log n)+\frac{(\log n)}{n^{1/4}}\Delta^{1/2}_n+\frac{(\log n)^{1/2}}{n^{1/4}}|\log \Delta_n|^{h_2/4}\Delta^{\frac{1}{2}(1-h_1/2)}_n+\frac{(\log n)}{\sqrt{n}}.
\end{align}
Next, using mean value theorem for the function $\eta \mapsto f_{\widehat{G}_{\trd},K-p}(s^2/\eta^2)$ and Lemma \ref{lem:prop_chi_sq} (1), we can conclude that for all $i \in \{1,\ldots,n\}$
\begin{align}
&\left|\widehat{\calD}^{\trd}_i(S_i^2,\widehat{\xi}(M_i))-\widehat{\calD}^{\trd}_i(S_i^2,\xi_0(M_i))\right|\\
& \le  
    \|\wh \xi-\xi_0\|_\infty~\left(\mathbb E_{\tau^2 \sim \wh G_\trd}\left[\frac{2-K+p}{\widetilde\eta_i}+\frac{(K-p)X}{\tau^2\widetilde\eta_i}\,\Big|\,X=S^2_i/\widetilde\eta^2_i\right]\right)\times f_{\widehat{G}_{\trd},K-p}(S^2_i/\wt\eta^2_i),
\end{align}
where $\widetilde \eta_i =t\xi_0(M_i)+(1-t)\wh \xi(M_i)$ for some $t \in (0,1)$. Since, $S^2_i \lesssim \log n$ for all $i \in [n]$, when we restrict ourselves to $\mathcal B_\trd$, using Assumption~\ref{assum:trend_estimation} and observing that $f_{\widehat{G}_{\trd},K-p}(S^2_i/\wt\eta^2_i)$ remains bounded by a constant independent of $n$ as $\wh G_\trd \in \mathcal G_\trd$, we can conclude that
\begin{align}
\left|\widehat{\calD}^{\trd}_i(S_i^2,\widehat{\xi}(M_i))-\widehat{\calD}^{\trd}_i(S_i^2,\xi_0(M_i))\right| &\lesssim_{K,p,\underline M_\trd,\overline M_\trd,\underline L_\trd,\overline U_\trd} \Delta_n(\log n).
\end{align}
Using the foregoing inequality and the techniques from \eqref{eq:b_1_trnd_p_val}, we can show that
\begin{align}
\label{eq:part_1_b_2_bound}
    \frac{1}{n}\sum_{i=1}^n\mathbb E\left[\left|\frac{\widehat{\calD}^{\trd}_i(S_i^2,\widehat{\xi}(M_i))-\widehat{\calD}^{\trd}_i(S_i^2,\xi_0(M_i))}{\calD^\trd_{i}(S^2_i,\xi_0(M_i))}\right|\mathds 1(\mathcal B_\trd)\right] &\lesssim_{K,p,\underline M_\trd,\overline M_\trd,\underline L_\trd,\overline U_\trd}  \Delta_n(\log n)^{2}. 
\end{align}
Combining \eqref{eq:part_2_b_2_bound} and \eqref{eq:part_1_b_2_bound}, we can conclude that
\begin{align}
    \label{eq:bound_b_2_trd}
    \mathfrak B^\trd_2 \lesssim_{K,p,\underline M_\trd,\overline M_\trd,\underline L_\trd,\overline U_\trd} \Delta_n(\log n)^{2}+\frac{(\log n)}{n^{1/4}}\Delta^{1/2}_n+\frac{(\log n)^{1/2}}{n^{1/4}}|\log \Delta_n|^{h_2/4}\Delta^{\frac{1}{2}(1-h_1/2)}_n+\frac{(\log n)}{\sqrt{n}}.\nonumber\\
\end{align}
Combining \eqref{eq:master_conv_limm_trd} along with \eqref{eq:bound_b_1_trd} and \eqref{eq:bound_b_2_trd}, we can conclude that
\begin{align}
    &\frac{1}{n}\sum_{i=1}^n\mathbb E\left[\left|P^\trd_i \wedge \zeta-\wh P^\trd_i \wedge \zeta\right|\right]\\
    &\lesssim_{K,p,\underline M_\trd,\overline M_\trd,\underline L_\trd,\overline U_\trd} \max\left\{\Delta_n(\log n)^{5/2},\frac{(\log n)^{5/2}}{n^{1/4}}\Delta^{1/2}_n,\frac{(\log n)^2}{n^{1/4}}|\log \Delta_n|^{h_2/4}\Delta^{\frac{1}{2}(1-h_1/2)}_n,\frac{(\log n)^{5/2}}{\sqrt{n}}\right\}.
\end{align}
This implies Theorem \ref{thm:conv_p_trnd_orc}.

\subsection{Proof of Proposition \ref{prop:avg_sign_limma_trnd}}
The proof of the proposition follows by exactly retracing the arguments of Proposition 15 of \cite{ignatiadis2025empirical} along with Theorem~\ref{thm:conv_p_trnd_orc}.

\subsection{Proof of Theorem~\ref{thm:final_rate_cs}}
The proof of the theorem proceeds by integrating Proposition~\ref{prop:avg_sign_limma_trnd} along with the fact that the oracle p-values $\{P^\trd_i\}$ are critically dense at $\alpha$ in the sense Assumption~\ref{asm:1d_limma_trend_bh}. In particular, this theorem can be proved by following the same strategy adopted for proving Theorem~\ref{thm:final_rate} for which we provide a detailed description in Supplement~\ref{sec:proof_fdr_joint}. 

\section{Proof of asymptotic FDR control in Section~\ref{sec:reg_ltrd_int}}
In this section, we shall consistently assume Assumptions~\ref{assum:missp_trend_estimation} and~\ref{assu:misspec_limma_trnd_var}.
\subsection{Approximate NPMLE property of $\wh G_\trd$}
\label{sec:approx_npmle_ms}
We begin by proving an approximate NPMLE property of $\wh G_\trd$ (analogous to Lemma~\ref{lem:lik_ratio_limma_trnd}) under the working model \eqref{eq:misspecified-limma-reg} and \eqref{eq:tau_mis_v_mis}.
\begin{lemm}
\label{lem:lik_ratio_limma_trnd_ms}
Consider the working model specified through \eqref{eq:bayesian_cs_model_2}, \eqref{eq:misspecified-limma-reg} and \eqref{eq:tau_mis_v_mis}. There exist constants $D_\trd>0$ and $n_{\trd,2}\in\mathbb N_{\ge1}$, depending only on $K,p,\underline M_\trd,\wb M_\trd,\underline L_\trd,\wb U_\trd$, such that the following hold with probability at least $1-o(n^{-2})$,
\begin{align}
\label{eq:log_lik_limma_trend_approx_ms}
\prod_{i=1}^n\frac{f_{\wh G_\trd,K-p}(\misV{i})}{f_{G_\mis,K-p}(\misV{i})}\ge e^{-n\,D_\trd\mathfrak R_{n,2}},
\end{align}
where
\[
\mathfrak R_{n,2}=\frac{(\log n)^2}{\sqrt{n}}\cdot \Delta_n+\frac{(\log n)}{\sqrt{n}}\cdot|\log \Delta_n|^{h_2/2}\,\Delta^{1-h_1/2}_n+\Delta_n(\log n)^2.
\]
\end{lemm}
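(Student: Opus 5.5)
We mirror the proof of Lemma~\ref{lem:lik_ratio_limma_trnd}, with $(\xi_0,G,V_i^2)$ replaced by $(\xi_\mis,G_\mis,\misV{i})$. Write $\ell(S_i^2;\xi,\rmG,M_i)=\log f_{\rmG,K-p}(S_i^2/\xi^2(M_i))-2\log\xi(M_i)$. The defining property of $\wh G_\trd$ gives
\[
\tfrac1n\sum_i\ell(S_i^2;\wh\xi,\wh G_\trd,M_i)\ge\tfrac1n\sum_i\ell(S_i^2;\wh\xi,G_\mis,M_i).
\]
The claim then reduces to bounding $\mathrm{Sub}_{n,1}$ and $\mathrm{Sub}_{n,2}$ as in \eqref{eq:def_sub_1}--\eqref{eq:def_sub_2}, now with $\xi_0$ replaced by $\xi_\mis$ and $G$ by $G_\mis$, by $D_\trd\mathfrak R_{n,2}$ with probability $1-o(n^{-2})$.

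\textbf{Good event.} We work on $\mathcal A_\mis=\{|M_i|\le \mathrm W_n,\ \misV{i}\in[\wt A_n,\wt B_n]\ \forall i\}$. Conditionally on $\misT{i}$, the variable $\misV{i}/\misT{i}\sim\chi^2_{K-p}/(K-p)$. Since $\misT{i}\in[\underline L_\trd,\wb U_\trd]$ a.s. (Assumption~\ref{assu:misspec_limma_trnd_var}), Lemma~\ref{lem:prop_chi_sq}(3) applies, as remarked after that lemma, and gives $\mathbb P(\mathcal A_\mis^c)\lesssim n^{-2}$. On this event, with $\eta$ ranging between $\xi_\mis(M_i)$ and $\wh\xi(M_i)$ (both in $[\underline M_\trd,\wb M_\trd]$ by Assumption~\ref{assum:missp_trend_estimation}), we have $|D|\lesssim\log n$ and $|H|\lesssim(\log n)^2$. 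A second-order Taylor expansion in $\eta$ around $\xi_\mis(M_i)$ then bounds the Hessian remainders of both $\mathrm{Sub}_{n,1}$ and $\mathrm{Sub}_{n,2}$ by $(\log n)^2\Delta_n^2\le(\log n)^2\Delta_n$.

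\textbf{Linear terms.} The key point is that $\misV{i}$ are independent with marginal density $f_{G_\mis,K-p}$ conditionally on nothing. This holds because $(\misT{i},M_i)$ are i.i.d. with $\misT{i}\sim G_\mis$ unconditionally; we must avoid conditioning on $M_i$, since $\misT{i}$ and $M_i$ may be dependent under \eqref{eq:misspecified-limma-reg}. For $\mathrm{Sub}_{n,2}$, we split $D(\xi_\mis(M_i),S_i^2;G_\mis)$ into its truncated centered part and its mean. The centered process indexed by $\xi\in\{\xi\in\mathcal X:\|\xi-\xi_\mis\|_{\mathrm W_n}\le\Delta_n\}$ has sub-Gaussian increments of scale $(\log n)/\sqrt n$. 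The chaining bound with the entropy of Assumption~\ref{assum:missp_trend_estimation}, exactly as in \eqref{eq:chaining_bound}, yields
\[
\frac{(\log n)^2}{\sqrt n}\Delta_n+\frac{\log n}{\sqrt n}|\log\Delta_n|^{h_2/2}\Delta_n^{1-h_1/2}.
\]
Here the dependence between $M_i$ and $\misV{i}$ is the main obstacle. We would handle it by working unconditionally on pairs $(M_i,\misV{i})$, which are i.i.d., so the summands $(D_i-\mathbb E D_i)(\xi(M_i)-\xi_\mis(M_i))$ are still independent, bounded on $\mathcal A_\mis$, and have sub-Gaussian increments. The centering then becomes $\mathbb E[D_i(\xi(M_i)-\xi_\mis(M_i))]$, which need not vanish. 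We simply bound it crudely by $\|\xi-\xi_\mis\|\cdot\sup|\mathbb E[D\mid M]|\lesssim\Delta_n\log n$, absorbed in the $\Delta_n(\log n)^2$ term of $\mathfrak R_{n,2}$. This crude step is why $\mathfrak R_{n,2}$ carries $\Delta_n(\log n)^2$, linear rather than quadratic in $\Delta_n$, and no Hellinger term.

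\textbf{Handling $\mathrm{Sub}_{n,1}$.} The same treatment applies, except that $\wh G_\trd$ is data dependent. We cover $\wb{\mathcal F}_\trd$ in $\|\cdot\|_{\mathrm{joint}}$ at scale $n^{-2C-2}$ via Lemma~\ref{lem:entropy_joint_density_deriv}, with log-cardinality $(\log n)^2$. Using the polynomial lower bound $f_{\rmG,K-p}\gtrsim n^{-C}$ on the truncation window, this gives a uniform $n^{-2}$-approximation of the scores. A union bound over the cover, with $u=\sqrt{5\log n}+\sqrt{\log N}$, gives the same chaining rate. The mean term is again bounded crudely by $\Delta_n\log n$ uniformly over $\rmG\in\mathcal G_\trd$, avoiding the score identity \eqref{eq:bayesian_score_eq}, which would need $\wh G_\trd$ to be the truth. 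Summing the three contributions and adding $\mathbb P(\mathcal A_\mis^c)$ gives \eqref{eq:log_lik_limma_trend_approx_ms}.
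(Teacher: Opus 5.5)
Your proposal is correct and follows essentially the same route as the paper. Both use the $\mathrm{Sub}_{n,1}/\mathrm{Sub}_{n,2}$ split on the truncation event, a second-order Taylor expansion with Hessian remainder $\lesssim(\log n)^2\Delta_n^2$, chaining for the centered score process (with a cover of $\wb{\mathcal F}_\trd$ for $\mathrm{Sub}_{n,1}$), and a crude $\Delta_n\log n$ bound on the non-vanishing mean, needed because the score identity \eqref{eq:bayesian_score_eq} fails under misspecification.

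The differences are minor. The paper centers conditionally on $M_i$ using the conditional density of $V^2_{i,\mis}$ given $M_i$, so conditioning on $M_i$ need not be avoided. It also bounds the linear term of $\mathrm{Sub}_{n,2}$ directly via $|D(\xi_\mis(M_i),S_i^2;G_\mis)|\lesssim\log n$, without chaining. In fact, since that bound holds uniformly over $\rmG\in\mathcal G_\trd$ on the good event, the chaining and covering are dispensable in both terms: the crude bound alone already gives $\Delta_n(\log n)^2$.
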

However, due to the change in the working model, our proof requires some modification.

\begin{proof}
    Observe that under the working model defined by \eqref{eq:bayesian_cs_model_2}, \eqref{eq:misspecified-limma-reg} and \eqref{eq:tau_mis_v_mis}, the factor $\eta^{-2}$ disappears from the marginal density of $V^2$. Consequently, we redefine
    \[
\ell(S_i^2;\xi,\rmG,M_i)=\log f_{\rmG,K-p}(S_i^2/\xi^2(M_i)).
\]
Based on this definition, we aim to show that with probability greater than $1-o(n^{-2})$,
\begin{align}
    \frac{1}{n}\sum_{i=1}^n\ell(S^2_i;\xi_\mis,\wh G_\trd,M_i) \ge \frac{1}{n}\sum_{i=1}^n\ell(S^2_i;\xi_\mis,\misG,M_i)-D_\trd\,\mathfrak R_{n,2},
\end{align}
where $\mathfrak R_{n,2}$ is defined in the lemma statement. As in the proof of Lemma~\ref{lem:lik_ratio_limma_trnd}, we adopt the decomposition of 
\[
\frac{1}{n}\sum_{i=1}^n\ell(S^2_i;\xi_\mis,\wh G_\trd,M_i) - \frac{1}{n}\sum_{i=1}^n\ell(S^2_i;\xi_\mis,\misG,M_i),
\]
into $\mathrm{Sub}_{n,1}$ and $\mathrm{Sub}_{n,2}$ (defined in \eqref{eq:def_sub_1} and \eqref{eq:def_sub_2}) and show that 
\[
\mathrm{Sub}_{n,1}+\mathrm{Sub}_{n,2} \le D_\trd\,\mathfrak R_{n,2},
\]
with probability greater than $1-o(n^{-2})$. Then, arguing as in the proof of Lemma~\ref{lem:lik_ratio_limma_trnd}, the lemma follows.

To control the fluctuation of $\mathrm{Sub}_{n,1}$, we can restrict ourselves to
\[
\mathcal A_{\mis,\trd}=\{|M_i| \le \mathrm W_n,\misV{i}\in[\wt A_n,\wt B_n]\ \text{for all }i\in[n]\}, \qquad \mbox{as in \eqref{eq:cala_trd}.}
\]
By Assumptions~\ref{assum:missp_trend_estimation}-\ref{assu:misspec_limma_trnd_var}, \eqref{eq:bayesian_cs_model_2}, \eqref{eq:misspecified-limma-reg}, \eqref{eq:tau_mis_v_mis} and Lemma~\ref{lem:prop_chi_sq} (3), we have $\mathbb P(\mathcal A_{\mis,\trd}) \lesssim n^{-2}$. Next, we use Taylor expansion of $\ell(S^2_i;\wh\xi,\wh G_\trd,M_i)$ around $\ell(S^2_i;\xi_\mis,\wh G_\trd,M_i)$ as before. In that direction, we redefine the gradient and the Hessian functions from
\eqref{eq:def_d_and_h} as follows:
\begin{align}
D(\eta,s^2;\rmG)&:=\frac{\partial}{\partial\eta}\log f_{\rmG,K-p}(s^2/\eta^2), \quad \mbox{and}\\
H(\eta,s^2;\rmG)&:=\frac{\partial^2}{\partial\eta^2}\log f_{\rmG,K-p}(s^2/\eta^2).
\end{align}
Then, we control
\begin{align}
    \frac{1}{n}\sum_{i=1}^n D(\xi_\mis(M_i),S_i^2;\wh G_\trd)\big(\wh\xi(M_i)-\xi_\mis(M_i)\big)+\frac{1}{2n}\sum_{i=1}^n H(\eta_{t,i},S_i^2;\wh G_\trd)\big(\wh\xi(M_i)-\xi_\mis(M_i)\big)^2.
\end{align}
Within $\mathcal A_{\mis,\trd}$, the Hessian term can be handled using exactly the same technique as in the proof of Lemma~\ref{lem:lik_ratio_limma_trnd}. In particular, using Assumption~\ref{assu:misspec_limma_trnd_var} and Lemma~\ref{lem:prop_chi_sq} (2)
\[
\left\lvert\frac{1}{2n}\sum_{i=1}^n H(\eta_{t,i},S_i^2;\wh G_\trd)\big(\wh\xi(M_i)-\xi_\mis(M_i)\big)^2\right\rvert\lesssim (\log n)^2\,\|\wh \xi-\xi_\mis\|^2_\infty \lesssim (\log n)^2\Delta_n^2.
\]
For the gradient term, we need to redefine the centering term used to construct $U_{1,n}$ and $U_{2,n}$. In particular, we redefine
\[
\overline{\rmD}_i(\xi_\mis,\rmG,M_i):=\int_{\widetilde A_n}^{\widetilde B_n}\rmD_i(v^2,M_i;\xi_\mis,\rmG)f_{M_i,K-p}(v^2)\,\dd v^2,
\]
where
\[
f_{M_i,K-p}(v^2)=\int_0^{\infty}p_{\chi^2}(v^2\mid K-p,\tau^2)\,G_{M_i, \tau^2}(\dd \tau^2 \mid M_i),
\]
where $G_{m, \tau^2}(\,\cdot\,\mid m)$ is the conditional distribution of $\misT{i}=\sigma^2_i/\xi^2_\mis(M_i)$ given $M_i=m$ (these are same for all $i \in [n]$ by exchangeability).
We decompose
\[
\frac{1}{n}\sum_{i=1}^n \rmD_i(\misV{i},M_i;\xi_\mis,\wh G_\trd)\big(\wh\xi(M_i)-\xi_\mis(M_i)\big)=U_{1,n}+U_{2,n},
\]
where
\[
U_{1,n}:=\frac{1}{n}\sum_{i=1}^n\big(\rmD_i(\misV{i},M_i;\xi_\mis,\wh G_\trd)-\overline{\rmD}_i(\xi_\mis,\wh G_\trd,M_i)\big)\big(\wh\xi(M_i)-\xi_\mis(M_i)\big)
\]
and
\[
U_{2,n}:=\frac{1}{n}\sum_{i=1}^n\overline{\rmD}_i(\xi_\mis,\wh G_\trd,M_i)\big(\wh\xi(M_i)-\xi_\mis(M_i)\big).
\]
Using the same techniques as the proof of Lemma~\ref{lem:lik_ratio_limma_trnd}, we can conclude that
\begin{align}
\label{eq:bound_u_1_n_mis}
U_{1,n} \lesssim \frac{(\log n)^2}{\sqrt{n}}\cdot \Delta_n+\frac{(\log n)}{\sqrt{n}}\cdot|\log \Delta_n|^{h_2/2}\,\Delta^{1-h_1/2}_n,
\end{align}
with probability greater than $1-o(n^{-2})$. However, due to the different centering used in the construction of $U_{2,n}$, the same analysis will not go through for bounding the fluctuation of $U_{2,n}$. In particular, under \eqref{eq:misspecified-limma-reg}, the score identity \eqref{eq:bayesian_score_eq} does not hold. 
Consequently, we cannot get a sharp bound on $U_{2,n}$ akin to \eqref{eq:sharp_u_2_n}. Instead, we use the definition of $\overline{\rmD}_i(\xi_\mis,\wh G_\trd,M_i)$, along with the definition of $\mathcal A_{\mis,\trd}$ to conclude using Lemma~\ref{lem:prop_chi_sq} (1) that
\[
|\overline{\rmD}_i(\xi_\mis,\wh G_\trd,M_i)| \lesssim (\log n),
\]
where the constant depends on $K,p,\underline L_\trd,\overline U_\trd,\underline M_\trd$ and $\overline M_\trd$. Therefore on $\mathcal A_{\trd,\mis}$, we have
\[
|U_{2,n}|\lesssim \Delta_n(\log n).
\] 
This implies with probability greater than $1-o(n^{-2})$,
\begin{align}
    \label{eq:sub_n_1_mis}
    \mathrm{Sub}_{n,1} \lesssim \frac{(\log n)^2}{\sqrt{n}}\cdot \Delta_n+\frac{(\log n)}{\sqrt{n}}\cdot|\log \Delta_n|^{h_2/2}\,\Delta^{1-h_1/2}_n+\Delta_n(\log n).
\end{align}
Next, we focus on bounding $\mathrm{Sub}_{n,2}$. Once again, we use Taylor expansion to get an identity similar to \eqref{eq:taylor_sub_n_2}. The Hessian term can be bounded as in the proof of Lemma~\ref{lem:lik_ratio_limma_trnd}. However, under the working model in this framework, $D(\xi_\mis(M_i),S_i^2; G_\mis)$ is not the conditional score and hence the argument used to bound $U_{2,n}$ in Lemma~\ref{lem:lik_ratio_limma_trnd} cannot be used here. Instead, using Lemma~\ref{lem:prop_chi_sq} (1), we can show that on $\mathcal A_\trd$, we have
\[
|D(\xi_\mis(M_i),S_i^2; G_\mis)| \lesssim (\log n),
\]
where the constant in the inequality depends only on $K,p,\underline L_\trd,\overline U_\trd,\underline M_\trd$, and $\wb M_\trd$. Therefore, under this framework, we have
\begin{align}
\label{eq:sharp_bound_sub_n_2_ms}
    \mathrm{Sub}_{n,2}\lesssim \Delta_n^2(\log n)^2+\Delta_n(\log n).
\end{align}
Combining \eqref{eq:sub_n_1_mis} and $\eqref{eq:sharp_bound_sub_n_2_ms}$, we can conclude the lemma by noting that $\Delta_n \ll 1$ for large enough $n$.
\end{proof}

\subsection{Hellinger large deviation}
Next, we utilize the approximate NPMLE property of $\wh G_\trd$ to provide a result analogous to Lemma~\ref{lem:hell_limm_trnd_cs} for $\smash{\mathcal H^2(f_{\wh G_\trd,K-p},f_{G,K-p})}$ under the working model specified by \eqref{eq:bayesian_cs_model_2}, \eqref{eq:misspecified-limma-reg}, and \eqref{eq:tau_mis_v_mis}. Under this working model, we have the following result.

\begin{lemm}
    \label{lem:hell_limm_trnd_ms}
   Fix an absolute constant $c_0>0$. Then there exist constants $\widetilde D_2>0$ and $n_{\trd,2} \in \mathbb N$, depending only on $h_1,h_2,K,p,\underline M_\trd,\overline M_\trd,\underline L_\trd,$ and $\overline U_\trd$, such that for all $n \ge n_{\trd,2}$, 
   \begin{align}
       \mathbb P\left[\mathcal H^2\left(f_{\wh G_\trd,K-p},f_{G_\mis,K-p}\right) \ge \widetilde D_2\,\lambda^2_{n,2}\right]\lesssim \frac{1}{n^2}+e^{-c_0(\log n)^2},
   \end{align}
   where
   \begin{align}
   \label{eq:lamb_n_2_mis}
   \lambda^2_{n,2}:=\max\left\{\Delta_n (\log n)^2,\frac{(\log n)^2}{\sqrt{n}}\cdot \Delta_n,\frac{(\log n)}{\sqrt{n}}\cdot|\log \Delta_n|^{h_2/2}\,\Delta^{1-h_1/2}_n,\frac{(\log n)^2}{n}\right\}.
   \end{align}
\end{lemm}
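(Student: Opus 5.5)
We follow the proof of Lemma~\ref{lem:hell_limm_trnd_cs}, replacing Lemma~\ref{lem:lik_ratio_limma_trnd} by Lemma~\ref{lem:lik_ratio_limma_trnd_ms}. Define
\[
\mathcal H_{\trd,\mis}:=\left\{\prod_{i=1}^n\frac{f_{\wh G_\trd,K-p}(\misV{i})}{f_{G_\mis,K-p}(\misV{i})}\ge e^{-nD_\trd\mathfrak R_{n,2}}\right\}.
\]
By Lemma~\ref{lem:lik_ratio_limma_trnd_ms}, $\mathbb P(\mathcal H_{\trd,\mis}^c)=o(n^{-2})$. The important structural point is that $\misV{i}$ is not conditionally distributed as $f_{G_\mis,K-p}$ given $M_i$. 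However, unconditionally the pairs $(\misT{i},M_i)$ are i.i.d., and $\misV{i}\mid \misT{i}\sim \misT{i}\chi^2_{K-p}/(K-p)$. Hence $\misV{1},\ldots,\misV{n}$ are i.i.d.\ with marginal density exactly $f_{G_\mis,K-p}$, where $G_\mis$ is the unconditional law of $\misT{i}$. This is all that the likelihood-ratio large-deviation argument of \citet[Theorem~9]{ignatiadis2025empirical} needs. We also have $G_\mis\in\mathcal G_\trd$ by Assumption~\ref{assu:misspec_limma_trnd_var}.

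The main simplification relative to Lemma~\ref{lem:hell_limm_trnd_cs} is that $\mathfrak R_{n,2}$ does not depend on $\wh G_\trd$. There is therefore no need for the peeling over shells $\mathcal B_k$ with exponents $1-2^{-k+1}$, which was used only to absorb the term $\mathcal H\,\Delta_n$ in $\mathfrak R_{n,1}$. Note that $\mathfrak R_{n,2}\lesssim\lambda_{n,2}^2$, since its dominant piece $\Delta_n(\log n)^2$ appears explicitly in \eqref{eq:lamb_n_2_mis}. We work directly with the single set
\[
\mathcal B:=\left\{\rmG\in\mathcal G_\trd:\mathcal H^2(f_{\rmG,K-p},f_{G_\mis,K-p})\ge \widetilde D_2\lambda_{n,2}^2\right\}.
\]
We take a proper $n^{-2}$-cover $\{f_{G_j,K-p}\}_{j\le\mathcal J}\subseteq\mathcal B$ in sup-norm with $\log\mathcal J\lesssim(\log n)^2$, as in \eqref{eq:metric_entropy_limma_trd_corr}. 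We use the envelope inequality \eqref{eq:cover_density_corr} with truncation $\eta_\trd$ from \eqref{eq:truncation_func}, and split with some $a>1$:
\[
\mathbb P[\wh G_\trd\in\mathcal B,\mathcal H_{\trd,\mis}]\le \mathbb P\Big[\max_j\prod_i\tfrac{f_{G_j,K-p}(\misV{i})}{f_{G_\mis,K-p}(\misV{i})}\ge e^{-naD_\trd\mathfrak R_{n,2}}\Big]+\mathbb P\Big[\prod_i\tfrac{C_\sharp}{\eta_\trd(\misV{i})}\ge e^{n(a-1)D_\trd\mathfrak R_{n,2}}\Big].
\]

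For the first term, we use Markov's inequality on the square root of the likelihood ratio and the i.i.d.\ marginal $f_{G_\mis,K-p}$. Each summand is then bounded by
\[
\exp\!\left(-n\mathcal H^2(f_{G_j},f_{G_\mis})+n\sqrt{2\eta\overline B}+\tfrac{na}{2}D_\trd\mathfrak R_{n,2}\right).
\]
Since $f_{G_j}\in\mathcal B$, the first exponent is at most $-n\widetilde D_2\lambda_{n,2}^2$. Meanwhile $n\sqrt{2\eta\overline B}\lesssim\sqrt{\log n}$, $\log\mathcal J\lesssim(\log n)^2\le n\lambda_{n,2}^2$, and $n\mathfrak R_{n,2}\lesssim n\lambda_{n,2}^2$. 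Choosing $\widetilde D_2$ large therefore gives a bound $e^{-c_0 n\lambda_{n,2}^2}\le e^{-c_0(\log n)^2}$. The second term is handled by Lemma~S5 of \citet{ignatiadis2025empirical}, which applies because it uses only the tails of the i.i.d.\ $\misV{i}$ (Lemma~\ref{lem:prop_chi_sq}(3) and the remark after it). It yields $e^{-c_0(\log n)^2}$, provided $n(a-1)D_\trd\mathfrak R_{n,2}\gtrsim(\log n)^2$. This holds after enlarging the threshold, since $\lambda_{n,2}^2\ge(\log n)^2/n$, and we may replace $\mathfrak R_{n,2}$ by the larger $\lambda_{n,2}^2$ in the definition of the event.

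The step I expect to need the most care is the bookkeeping in that last replacement. The event we need to control is $\mathcal H_{\trd,\mis}$, which has rate $\mathfrak R_{n,2}$. Using the weaker threshold $e^{-nC\lambda_{n,2}^2}$ is harmless because $\mathfrak R_{n,2}\lesssim\lambda_{n,2}^2$, so $\mathcal H_{\trd,\mis}$ is contained in the event with the weaker threshold. One must then check that both exponents stay dominated by $\widetilde D_2 n\lambda_{n,2}^2$ uniformly in $n\ge n_{\trd,2}$. Adding the $o(n^{-2})$ from $\mathcal H_{\trd,\mis}^c$ and from the truncation events completes the proof.
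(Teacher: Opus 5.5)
Your proposal is correct and follows essentially the paper's argument. Both use a single non-peeled set $\mathcal B_\mis$ (peeling is unnecessary because $\mathfrak R_{n,2}$ does not involve $\wh G_\trd$), a proper $n^{-2}$ sup-norm cover with log-cardinality $\lesssim(\log n)^2$, and a split into a Markov/Hellinger term and a Lemma~S5 truncation term, relying on the $V_{i,\mis}^2$ being i.i.d.\ with marginal $f_{G_\mis,K-p}$. The only difference is cosmetic: the paper splits with an additive slack $\gamma_n\asymp(\log n)^2$, whereas you use a multiplicative factor $a$ after enlarging $\mathfrak R_{n,2}$ to $\lambda^2_{n,2}$; the two bookkeeping devices are equivalent.
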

\begin{proof}
Define event $\mathcal H_{\trd,\mis}$ as follows:
\begin{align}
    \mathcal H_{\trd,\mis}&:=\left\{\prod_{i=1}^n\frac{f_{\wh G_\trd,K-p}(S_i^2/\xi^2_\mis(M_i))}{f_{G_\mis,K-p}(S_i^2/\xi^2_\mis(M_i))} \ge e^{-n\, \mathfrak R_{n,2}}\right\},
\end{align}
where $\mathfrak R_{n,2}$ defined in Lemma~\ref{lem:lik_ratio_limma_trnd_ms}. By part (2) of Lemma~\ref{lem:lik_ratio_limma_trnd_ms}, any solution $\wh G_\trd$ of \eqref{eq:def_small_ell} satisfies
\begin{align}
\label{eq:fundamental_eq_ms}
\mathbb P\!\left(\mathcal H^c_{\trd,\mis}\right) \le o(n^{-2}),
\end{align}
under the working model.
Therefore, for all $\wt D_2>0$, we have
\begin{align}
    &\mathbb P\left[\mathcal H^2\left(f_{\wh G_\trd,K-p},f_{G_\mis,K-p}\right) \ge \widetilde D_2\,\lambda^2_{n,2}\right]\\
    & \le \mathbb P\left[\mathcal H^2\left(f_{\wh G_\trd,K-p},f_{G_\mis,K-p}\right) \ge \widetilde D_2\,\lambda^2_{n,2},\, \mathcal H_{\trd,\mis}\right]+o(n^{-2}).
\end{align}
Therefore, we focus on bounding
\[
\mathbb P\left[\mathcal H^2\left(f_{\wh G_\trd,K-p},f_{G_\mis,K-p}\right) \ge \widetilde D_2\,\lambda^2_{n,2},\, \mathcal H_{\trd,\mis}\right].
\]
We proceed as in the proof of Theorem 9 in \cite{ignatiadis2025empirical} by constructing a proper minimal $n^{-2}$ cover $\mathcal S_{\trd,\mis}=\{f_{G_j,K-p}:j=1,\ldots,\mathcal R_\mis\}\subseteq \mathcal B_\mis$ where
\[
\mathcal B_\mis:=\left\{f_{\rmG,K-p}: \rmG \in \mathcal G_\trd,\, \mathcal H^2\left(f_{\rmG,K-p},f_{G_\mis,K-p}\right) \ge \widetilde D_2\,\lambda^2_{n,2}\right\}.
\]
Once again, since $\mathcal B_\mis$ is a subset of the class of densities considered Theorem~9 of \citet{ignatiadis2025empirical}, therefore
\[
\log\mathcal R_\mis \lesssim (\log n)^2.
\]
Recall the truncation function $\eta_\trd(\cdot)$ introduced in \eqref{eq:truncation_func}. 
Proceeding as in the proof of Theorem~9 of \citet{ignatiadis2025empirical}, we can conclude that for any $\gamma_n>0$
\begin{align}
    &\mathbb P\left[\mathcal H^2\left(f_{\wh G_\trd,K-p},f_{G_\mis,K-p}\right) \ge \widetilde D_2\,\lambda^2_{n,2},\, \mathcal H_{\trd,\mis}\right]\\
    & \le \mathbb P\!\left[\sup_{j\in[\mathcal R_\mis]}\prod_{i=1}^n\frac{f_{G_j,K-p}(\misV{i})}{f_{G_\mis,K-p}(\misV{i})}\ge e^{-2\gamma_n-n\,\mathfrak R_{n,2}}\right]+\mathbb P\!\left[\prod_{i=1}^n\frac{C_\sharp}{\eta_\trd(\misV{i})}\ge e^{2\gamma_n}\right].
\end{align}
By choosing $\gamma_n \asymp (\log n)^2$, and noting that the marginal density of $\misV{i}$ is $f_{G_\mis,K-p}$, we can conclude that
\begin{align}
\label{eq:entropy_ms_bayesian}
\mathbb P\!\left[\sup_{j\in[\mathcal R_\mis]}\prod_{i=1}^n\frac{f_{G_j,K-p}(\misV{i})}{f_{G_\mis,K-p}(\misV{i})}\ge e^{-2\gamma_n-n\,\mathfrak R_{n,2}}\right] \le \exp(-c_0(\log n)^2),
\end{align}
for some absolute constant $c_0>0$, by proceeding as in the proof of Theorem~9 of \citet{ignatiadis2025empirical}. Using Lemma~S5 of \citet{ignatiadis2025empirical}, we can also conclude that
\begin{align}
\label{eq:truncate_ms_bayesian}
\mathbb P\!\left[\prod_{i=1}^n\frac{C_\sharp}{\eta_\trd(\misV{i})}\ge e^{2\gamma_n}\right] \le \exp(-c_0(\log n)^2).
\end{align}
The conclusion of the lemma follows by combining \eqref{eq:fundamental_eq_ms}, \eqref{eq:entropy_ms_bayesian} and \eqref{eq:truncate_ms_bayesian}.
\end{proof}

\subsection{Convergence of estimated p-values}
To prove Theorem~\ref{thm:final_rate_mis}, we begin by proving the following convergence result of the estimated p-values $\{\wh P^\trd_i\}$ analogous to Theorem~\ref{thm:conv_p_trnd_orc} in the misspecified setting.

\begin{proposition}
    \label{prop:avg_sign_limma_trnd_ms}
    Under the working model specified by \eqref{eq:bayesian_cs_model_2}, \eqref{eq:misspecified-limma-reg}, \eqref{eq:tau_mis_v_mis}, for any $\zeta \in \left(\frac12,1\right)$, we have a constant $\mathfrak C_{\mis}>0$ (possibly depending on $h_1,h_2,K,p,\underline M_\trd,\wb M_\trd,\underline L_\trd,\wb U_\trd$) such that
    \begin{align}
        \frac{1}{n}\sum_{i=1}^n 
\mathbb E\!\left[\left|\misP{i}\wedge \zeta - \wh P_i^{\trd} \wedge \zeta\right|\right] \le \mathfrak C_{\mis} \cdot\mathfrak L_{n,\mis},
    \end{align}
    where 
    \[
    \mathfrak L_{n,\mis}:= \max\Big\{\Delta^{1/2}_n (\log n)^{5/2},\frac{(\log n)^{5/2}}{n^{1/4}}\Delta^{1/2}_n,\frac{(\log n)^2}{n^{1/4}}|\log \Delta_n|^{h_2/4}\Delta^{\frac{1}{2}(1-h_1/2)}_n,\frac{(\log n)^{5/2}}{\sqrt{n}}\Big\}.
    \]
\end{proposition}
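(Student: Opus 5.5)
The plan is to retrace the proof of Theorem~\ref{thm:conv_p_trnd_orc} (Supplement~\ref{sec:proof_trend_pval}) with $(\xi_0,G)$ replaced by $(\xi_\mis,G_\mis)$. The only substantive change is that the Hellinger rate now comes from Lemma~\ref{lem:hell_limm_trnd_ms} in place of Lemma~\ref{lem:hell_limm_trnd_cs}.

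First, I would write both p-values through the Eddington--Tweedie representation \eqref{eq:tweedie_trnd}:
\[
\misP{i}=\PFun(Z_i,S_i^2,M_i;G_\mis,\xi_\mis),\qquad \wh P^\trd_i=\PFun(Z_i,S_i^2,M_i;\wh G_\trd,\wh\xi).
\]
Note that $\misV{i}$ has marginal density $f_{G_\mis,K-p}$, and that Lemma~\ref{lem:val_p_mis} identifies $\misP{i}$ as $\mathbb E_{G_\mis}[2\Phi(-|O_i|/(\nu\tau_{i,\mis}))\mid \misV{i}]$, which has the same Tweedie form. Next, I would define the good event $\mathcal B_{\trd,\mis}$. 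On it, $S_i^2\in[A_n,B_n]$ and $|M_i|\le \mathrm W_n$ for all $i$, $\|\wh\xi-\xi_\mis\|_\infty\le\Delta_n$, and $\mathcal H^2(f_{\wh G_\trd,K-p},f_{G_\mis,K-p})\le \widetilde D_2\lambda_{n,2}^2$. Lemma~\ref{lem:prop_chi_sq}(3) (which also covers $\misV{i}$), Assumption~\ref{assu:misspec_limma_trnd_var} and Lemma~\ref{lem:hell_limm_trnd_ms} give $\mathbb P(\mathcal B_{\trd,\mis}^c)\lesssim n^{-2}$. As before, truncating at $\zeta$ kills the region $|Z_i|\le\underline z$. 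This leaves the two terms $\mathfrak B_1,\mathfrak B_2$ of \eqref{eq:def_frk_b}, with $\calD_i$ now equal to $f_{G_\mis,K-p}(S_i^2/\xi_\mis^2(M_i))$.

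For $\mathfrak B_1$, I would split the numerator difference as in \eqref{eq:decom_n_trd_1}. The piece that changes only the prior is handled by the Lemma S9 argument of \citet{ignatiadis2025empirical}, giving $\lesssim \lambda_{n,2}\sqrt{|\log\lambda_{n,2}|}\lesssim\lambda_{n,2}\sqrt{\log n}$. The piece that changes only the trend is handled by the mean value theorem and Lemma~\ref{lem:derivative_g_trd}, giving $\lesssim(\log n)\Delta_n$. The change of measure in \eqref{eq:b_1_trnd_p_val} costs a further factor $\log n$. It is valid because the density of $\misV{i}$ is exactly $f_{G_\mis,K-p}$, so $\mathbb E[\mathds 1\{\misV{1}\le \wt B_n\}/f_{G_\mis,K-p}(\misV{1})]\lesssim \log n$. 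Here $\xi_\mis$ is bounded in $[\underline M_\trd,\overline M_\trd]$, which is part of Assumption~\ref{assum:missp_trend_estimation}, so the Jacobian is harmless. The term $\mathfrak B_2$ is treated exactly as in \eqref{eq:decomp_b_2_trd_final}: the Hellinger piece gives $\lesssim(\log n)/\sqrt n+\lambda_{n,2}$, and the trend piece gives $\lesssim\Delta_n(\log n)^2$.

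Collecting terms gives the bound $\lambda_{n,2}(\log n)^{3/2}+\Delta_n(\log n)^2$. Substituting $\lambda_{n,2}$ from \eqref{eq:lamb_n_2_mis}, the leading piece is $\sqrt{\Delta_n}\log n$, so $\lambda_{n,2}(\log n)^{3/2}\lesssim\Delta_n^{1/2}(\log n)^{5/2}$; the other pieces reproduce the remaining entries of $\mathfrak L_{n,\mis}$. This explains why the first entry of $\mathfrak L_{n,\mis}$ is $\Delta_n^{1/2}$ rather than $\Delta_n$: the score identity fails under misspecification, so Lemma~\ref{lem:lik_ratio_limma_trnd_ms} only yields a slack linear in $\Delta_n$.

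The main obstacle is the trend-perturbation term in $\mathfrak B_1$. There, \eqref{eq:deriv_wh_g} must hold uniformly in $|z|\ge\underline z$ for the random $\wh G_\trd$, and this needs a lower bound on $f_{\wh G_\trd,K-p+1}$-type quantities near the lower limit. I expect it to go through because every density bound used in that step depends on the prior only through the support $[\underline L_\trd,\overline U_\trd]$, which Assumption~\ref{assu:misspec_limma_trnd_var} enforces for $\tau_{i,\mis}^2$.
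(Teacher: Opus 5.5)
Your proposal is correct and follows the paper's proof essentially step for step. It uses the same Eddington--Tweedie representation with $(G_\mis,\xi_\mis)$ versus $(\wh G_\trd,\wh\xi)$, the same good event built from Lemma~\ref{lem:hell_limm_trnd_ms}, the same $\mathfrak B_1/\mathfrak B_2$ split, and the same repair of the change-of-measure step via the unconditional density $f_{G_\mis,K-p}$ of $\misV{i}$, giving $\lambda_{n,2}(\log n)^{3/2}+\Delta_n(\log n)^2\lesssim\mathfrak L_{n,\mis}$.

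The step you flag as the main obstacle is milder than you suggest. Bounding $\frac{d}{d\eta}\wh g_{\trd}$ from Lemma~\ref{lem:derivative_g_trd} needs no lower bound on any mixture density. It needs only uniform upper bounds on $f_{\wh G_\trd,K-p+1}$ over the compactly supported class, together with $(K-p+1)t^2-(K-p)s^2/\eta^2\ge \underline z^2/(\nu^2\eta^2)$ on the integration region when $|z|\ge\underline z$.
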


\begin{proof}
    To prove the above theorem, recall the Tweedie formula based representation of $\{P^\trd_i\}$ from \eqref{eq:tweedie_trnd}. One can recognize that $\{\misP{i}\}$ by replacing $\xi_0$ by $\xi_\mis$ and $G$ by $G_\mis$ in the formula.
    In particular, using the definition of the function $\PFun(\cdot)$, 
    we can conclude that
    \begin{align}
    \label{eq:tweedie_mis_p_val}
    \misP{i}:=\PFun(Z_i,S^2_i,M_i;G_\mis,\xi_\mis), \quad \mbox{and} \quad \wh P^\trd_i:=\PFun(Z_i,S^2_i,M_i;\wh G_\trd,\wh \xi).
    \end{align}
    One can further rewrite these as a function of only $ O_{\mis,i}:=Z_i/\xi_\mis(M_i), \wh O_{i}:=Z_i/\wh \xi(M_i), \misV{i}=S^2_i/\xi^2_\mis(M_i)$ and $\wh V^2_i=S^2_i/\wh\xi^2(M_i)$ suppressing the dependence on $M_i$. In particular, $\misP{i}:=\PregFun_\mis(O_{\mis,i},\misV{i};G_\mis)$ and $P^\trd_i:=\PregFun_\mis(\wh O_i,\wh V^2_i;\wh G_\trd)$,
    where
\begin{align}
\label{eq:misp_p_val_func}
\PregFun_\mis(x,v^2_\mis;G'):=\mathbb E_{\tau \sim G'}\left[2\Phi(-x/\{\nu\tau\}) \mid V^2_\mis=v^2_\mis\right], \quad \mbox{for $x \ge 0$ and $V^2_\mis \sim \frac{\tau^2\chi^2_{K-p}}{(K-p)}$.}
\end{align}
    However, it is important to observe that the formulas for the p-values remain the same in both the frameworks described in Sections~\ref{sec:reg_ltrd_ext} and Sections~\ref{sec:reg_ltrd_int}, only the distributional specification changes. 
    We modify the definition of the set $\mathcal B_\trd$ as follows
    \begin{align}
\mathcal B_\trd &= \Bigg\{|M_i| \le \mathrm W_n,\misV{i} \in [\wt A_n,\wt B_n],\,~\text{for all $i \in [n]$},~\|\wh \xi-\xi_\mis\|_{\infty} \le \Delta_n,\\
&~~~~~~~~~~~~~~~~~~~~~~~~~~~~\mathcal H^2\left(f_{\wh G_\trd,K-p},f_{G_\mis,K-p}\right)
    \le \widetilde D_2~\lambda^{2}_{n,2}\Bigg\},
\end{align}
where the quantities $\wt A_n, \wt B_n,$ and $\lambda_{n,2}$ are defined in the statement of Lemma~\ref{lem:hell_limm_trnd_ms}, Assumption~\ref{assu:limma_trnd_var}, and the proof of Theorem~\ref{thm:conv_p_trnd_orc}. For any $z \in \mathbb R$, let us define the terms corresponding to \eqref{eq:def_n_trd_all} for this working model. However, to avoid making the paper notation heavy, we shall recycle the notations from \eqref{eq:def_n_trd_all}. In this modified setting, the corresponding quantities are re-defined as follows:
\begin{align}
\label{eq:def_n_trd_all_1}
\widehat{\calD}^{\trd}_i(S_i^2,\widehat{\xi}(M_i))
&:=f_{\wh G_\trd,K-p}(S^2_i/\wh \xi^2(M_i)), \\
\widehat{\calD}^{\trd}_i(S_i^2,\xi_\mis(M_i))
&:=f_{\wh G_\trd,K-p}(S^2_i/\xi^2_\mis(M_i)), \\
\calD^{\trd}_i(S_i^2,\xi_\mis(M_i))&:=
f_{G_\mis,K-p}\!\left(S^2_i/\xi^2_\mis(M_i)\right),\\
N^{\trd}_i(z,S_i^2,M_i,\xi_\mis,G_\mis)
&:=
\PFun(z,S^2_i,M_i;G_\mis,\xi_\mis)\,
\calD^{\trd}_i(S_i^2,\xi_\mis(M_i)),\\
N^{\trd}_i(z,S_i^2,M_i,\wh\xi,\wh G_\trd)
&:=
\PFun(z,S^2_i,M_i;\wh G_\trd,\wh \xi)\,
\widehat{\calD}^{\trd}_i(S_i^2,\widehat{\xi}(M_i)).
\end{align}
Using Lemma~\ref{lem:hell_limm_trnd_ms}, one can show that even under the new working model in this framework
\begin{align}
    \frac{1}{n}\sum_{i=1}^n 
\mathbb E\!\left[\left|P^\trd_{\mis,i} \wedge \zeta - \wh P_i^{\trd} \wedge \zeta\right|\right] \lesssim \frac{1}{n^2}+\mathfrak B^\trd_1+\mathfrak B^\trd_2,
\end{align}
where $\mathfrak B^\trd_1$ and $\mathfrak B^\trd_2$ are defined analogous to \eqref{eq:def_frk_b} using the terms defined in \eqref{eq:def_n_trd_all_1}.

First, focus on bounding $\mathfrak B^\trd_1$ using the techniques developed in the proof of Theorem~\ref{thm:conv_p_trnd_orc}. However, in this proof we shall replace $\xi_0$ by $\xi_\mis$ in all definitions involving the former function. Observe that the arguments in \eqref{eq:b_1_trnd_p_val} do not replicate in this framework. Nevertheless, under the assumptions of the framework, we have the following
\begin{align}
    \mathcal B_\trd \subseteq \cap_{i=1}^n\{\misV{i} \in [0,\wt B_n]\},
\end{align}
where $\wt B_n$ is defined in Lemma~\ref{lem:prop_chi_sq}. This implies $\mathds 1(\mathcal B_\trd) \le \mathds 1\{\misV{i} \in [0,\wt B_n]\}$. Now, observe that the unconditional density of $\misV{i}$ under this framework is exactly $\mathcal D^\trd_i(S^2_i,\xi_\mis):=f_{G_\mis,K-p}(\misV{i})$. Since $\wt B_n \asymp \log n$, therefore if
\begin{align}
\label{eq:n_decomp_mis}
\sup_{|z|\ge \underline z}\left|N^{\trd}_i(z,S_i^2,M_i,\xi_\mis,G_\mis)-N^{\trd}_i(z,S_i^2,M_i,\wh\xi,\wh G_\trd)\right| \le \zeta_{n,\mis}, 
\end{align}
for some sequence $\zeta_{n,\mis}$ (independent of $M_1,\ldots,M_n$), then $\mathfrak B^\trd_1 \lesssim \zeta_{n,\mis} (\log n)$. We shall prove that $\zeta_{n,\mis}$ exists.
Recall the decomposition of the left hand side of \eqref{eq:n_decomp_mis} from \eqref{eq:decom_n_trd_1} as
\begin{align}
\label{eq:decom_n_trd_1_ms_hello}
  &\left|N^{\trd}_i(z,S_i^2,M_i,\xi_\mis,G_\mis)-N^{\trd}_i(z,S_i^2,M_i,\wh\xi,\wh G_\trd)\right|  \\
  & \le \left|N^{\trd}_i(z,S_i^2,M_i,\xi_\mis,G_\mis)-N^{\trd}_i(z,S_i^2,M_i,\xi_\mis,\wh G_\trd)\right|\\
  &~~~~~~~+\left|N^{\trd}_i(z,S_i^2,M_i,\xi_\mis,\wh G_\trd)-N^{\trd}_i(z,S_i^2,M_i,\wh\xi,\wh G_\trd)\right|.
\end{align}
Observe that bounding the first term in that decomposition is a Cauchy-Schwarz argument that results in
\[
\mathcal H(f_{\wh G_\trd,K-p},f_{G_\mis,K-p})|\log \mathcal H(f_{\wh G_\trd,K-p},f_{G_\mis,K-p})|^{1/2}.
\]
Under the modified working model, using Lemma~\ref{lem:hell_limm_trnd_ms} 
this implies
\begin{align}
    &\sup_{|z| \ge \underline z}\left|N^{\trd}_i(z,S_i^2,M_i,\xi_\mis,G_\mis)-N^{\trd}_i(z,S_i^2,M_i,\xi_\mis,\wh G_\trd)\right|\\
    &\lesssim \lambda_{n,2} \sqrt{|\log \lambda_{n,2}|}\\
    &\lesssim \max\left\{\Delta^{1/2}_n (\log n)^{3/2},\frac{(\log n)^{3/2}}{n^{1/4}}\Delta^{1/2}_n,\frac{(\log n)}{n^{1/4}}|\log \Delta_n|^{h_2/4}\Delta^{\frac{1}{2}(1-h_1/2)}_n,\frac{(\log n)^{3/2}}{\sqrt{n}}\right\},
\end{align}
where $\lambda_{n,2}$ is defined in \eqref{eq:lamb_n_2_mis}. Here we have used $\log |\lambda_{n,2}| \lesssim (\log n)$. Next, observe that if we restrict ourselves to $\mathcal B_\trd$, the gradient in \eqref{eq:deriv_wh_g} continues to be uniformly bounded by a constant factor of $\log n$ and by Assumption~\ref{assum:missp_trend_estimation}, $\|\wh \xi-\xi_\mis\|_\infty$ is uniformly bounded by $\Delta_n$. This implies, even under the working model \eqref{eq:bayesian_cs_model_2}, \eqref{eq:misspecified-limma-reg}, and \eqref{eq:tau_mis_v_mis}, the relation in \eqref{eq:b_1_trd_new} holds with $\xi_0$ replaced with $\xi_\mis$. Consequently
\begin{align}
\label{eq:b_1_trd_mis}
    \mathfrak B^\trd_1 \lesssim \max\Big\{\Delta^{1/2}_n (\log n)^{5/2},\frac{(\log n)^{5/2}}{n^{1/4}}\Delta^{1/2}_n,\frac{(\log n)^2}{n^{1/4}}|\log \Delta_n|^{h_2/4}\Delta^{\frac{1}{2}(1-h_1/2)}_n,\frac{(\log n)^{5/2}}{\sqrt{n}}\Big\}.
\end{align}
For bounding $\mathfrak B^\trd_2$, observe that
\begin{align}
    \mathfrak B^\trd_2 & \lesssim \frac{1}{n}\sum_{i=1}^n\mathbb E\left[\left|\frac{\widehat{\calD}^{\trd}_i(S_i^2,\widehat{\xi}(M_i))-\widehat{\calD}^{\trd}_i(S_i^2,\xi_\mis(M_i))}{\calD^\trd_{i,\star}(S^2_i,M_i)}\right|\mathds 1(\mathcal B_\trd)\right]\\
    &+\frac{1}{n}\sum_{i=1}^n\mathbb E\left[\left|\frac{\calD^{\trd}_i(S_i^2,\xi_\mis(M_i))-\widehat{\calD}^{\trd}_i(S_i^2,\xi_\mis(M_i))}{\calD^\trd_{i,\star}(S^2_i,M_i)}\right|\mathds 1(\mathcal B_\trd)\right].
\end{align}
Proceeding as in \eqref{eq:decomp_corrected_thm_7}, we get
\begin{align}
    \mathfrak B^\trd_2 & = \frac{1}{n}\sum_{i=1}^n\mathbb E\left[\left|\frac{\widehat{\calD}^{\trd}_i(S_i^2,\widehat{\xi}(M_i))-\calD^{\trd}_i(S_i^2,\xi_\mis(M_i))}{\calD^\trd_{i,\star}(S^2_i,M_i)}\right|\mathds 1(\mathcal B_\trd)\right]\\
    & \lesssim \frac{1}{n}\sum_{i=1}^n\mathbb E\left[\left|\frac{\calD^{\trd}_i(S_i^2,\xi_\mis(M_i))-\widehat{\calD}^{\trd}_i(S_i^2,\xi_\mis(M_i))}{\wt \calD^\trd_{i,\star}(S^2_i,M_i)}\right|\mathds 1(\mathcal B_\trd)\right]\\
    &\qquad+\frac{1}{n}\sum_{i=1}^n\mathbb E\left[\left|\frac{\wh\calD^\trd_{i}(S^2_i,\xi_\mis(M_i))- \wh\calD^\trd_{i}(S^2_i,\wh \xi(M_i))}{ \calD^\trd_{i}(S^2_i,\xi_\mis(M_i))}\right|\mathds 1(\mathcal B_\trd)\right],
\end{align}
where $\wt \calD^\trd_{i,\star}(S^2_i,M_i)$ is defined as 
\[
\wt \calD^\trd_{i,\star}(S^2_i,M_i):=\frac{\calD^{\trd}_i(S_i^2,\xi_\mis(M_i))+\widehat{\calD}^{\trd}_i(S_i^2,\xi_\mis(M_i))}{2}.
\]
Observing that the marginal density of $\misV{i}$ is $\calD^{\trd}_i(S_i^2,\xi_\mis(M_i))$ (a function of only $\misV{i}$), one can retrace the arguments from the proof of Theorem~\ref{thm:conv_p_trnd_orc} to show that
\begin{align}
&\frac{1}{n}\sum_{i=1}^n\mathbb E\left[\left|\frac{\calD^{\trd}_i(S_i^2,\xi_\mis(M_i))-\widehat{\calD}^{\trd}_i(S_i^2,\xi_\mis(M_i))}{\wt \calD^\trd_{i,\star}(S^2_i,M_i)}\right|\mathds 1(\mathcal B_\trd)\right]\\
& \lesssim \Delta^{1/2}_n(\log n)+\frac{(\log n)}{n^{1/4}}\Delta^{1/2}_n+\frac{(\log n)^{1/2}}{n^{1/4}}|\log \Delta_n|^{h_2/4}\Delta^{\frac{1}{2}(1-h_1/2)}_n+\frac{(\log n)}{\sqrt{n}}.
\end{align}
Furthermore, on $\mathcal B_\trd$, one can uniformly bound
\[
\left|\wh\calD^\trd_{i}(S^2_i,\xi_\mis(M_i))- \wh\calD^\trd_{i}(S^2_i,\wh \xi(M_i))\right| \lesssim (\log n)\Delta_n.
\]
Consequently, we again have
\[
\frac{1}{n}\sum_{i=1}^n\mathbb E\left[\left|\frac{\wh\calD^\trd_{i}(S^2_i,\xi_\mis(M_i))- \wh\calD^\trd_{i}(S^2_i,\wh \xi(M_i))}{ \calD^\trd_{i}(S^2_i,\xi_\mis(M_i))}\right|\mathds 1(\mathcal B_\trd)\right] \lesssim \Delta_n(\log n)^2.
\]
Combining the two foregoing inequalities, we can conclude that
\begin{align}
    \label{eq:mis_b_2_trd}
    \mathfrak B^\trd_2 &\lesssim \Delta^{1/2}_n(\log n)^2+\frac{(\log n)}{n^{1/4}}\Delta^{1/2}_n+\frac{(\log n)^{1/2}}{n^{1/4}}|\log \Delta_n|^{h_2/4}\Delta^{\frac{1}{2}(1-h_1/2)}_n+\frac{(\log n)}{\sqrt{n}}.
\end{align}
Finally, combining \eqref{eq:b_1_trd_mis} and \eqref{eq:mis_b_2_trd}, the lemma follows.
\end{proof}

Next, we consider the following lemma on approximate uniformity of $\{\wh P^\trd_i: \theta_i=0\}$ under \eqref{eq:bayesian_cs_model_2}, \eqref{eq:misspecified-limma-reg}, and \eqref{eq:tau_mis_v_mis}.
\begin{lemm}
    \label{lem:approx_unif_mis_strong}
    Fix $\zeta \in (1/2,1)$ and assume that $n_0/n \rightarrow \pi_0\in (0,1)$, as $n \rightarrow \infty$, where $n_0=|\mathcal H_0|$. Then, under the working model specified by \eqref{eq:bayesian_cs_model_2}, \eqref{eq:misspecified-limma-reg}, and \eqref{eq:tau_mis_v_mis}, for any $\varepsilon>0$
    \begin{align}
        \limsup_{n \rightarrow \infty}\mathbb P\left[\sup_{t \in [0,1]}\Bigg|\frac{1}{n_0}\sum_{i \in \mathcal H_0}\{\mathbb P[\wh P^\trd_i \le t\mid \misV{1},\ldots,\misV{n}]-t\}\Bigg| \ge \varepsilon\right]=0.
    \end{align}
    Furthermore, we also have
    \[
\limsup_{n \rightarrow \infty}\sup_{t \in [0,1]}\mathbb E\left[\Bigg|\frac{1}{n_0}\sum_{i \in \mathcal H_0}\{\mathbb P[\wh P^\trd_i \le t\mid \misV{1},\ldots,\misV{n}]-t\}\Bigg|\right] =0.
\]
\end{lemm}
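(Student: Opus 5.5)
The plan is to compare the estimated p-values $\wh P^\trd_i$ with the oracle misspecified p-values $\misP{i}$. By Lemma~\ref{lem:val_p_mis}, the oracle p-values are exactly uniform conditional on $\misV{i}$ for every null $i$. Since the tuples are independent across $i$, conditioning on all of $\misV{1},\ldots,\misV{n}$ is the same as conditioning on $\misV{i}$ alone for $\misP{i}$. Hence $\mathbb P[\misP{i}\le t\mid \misV{1},\ldots,\misV{n}]=t$ exactly for $i\in\mathcal H_0$, and the whole task reduces to transferring this to $\wh P^\trd_i$ using Proposition~\ref{prop:avg_sign_limma_trnd_ms}.

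\textbf{Step 1: sandwich $\wh P^\trd_i$ around $\misP{i}$.} For $t\in[0,\zeta)$ and any $\delta>0$,
\[
\mathds 1\{\misP{i}\le t-\delta\}-\mathds 1\{|\misP{i}\wedge\zeta-\wh P^\trd_i\wedge\zeta|>\delta\}\le \mathds 1\{\wh P^\trd_i\le t\}\le \mathds 1\{\misP{i}\le t+\delta\}+\mathds 1\{|\misP{i}\wedge\zeta-\wh P^\trd_i\wedge\zeta|>\delta\}.
\]
Take conditional expectations given $\mathcal V:=(\misV{1},\ldots,\misV{n})$ and average over $\mathcal H_0$. Exact conditional uniformity then gives
\[
\sup_{t\in[0,\zeta]}\Bigg|\frac1{n_0}\sum_{i\in\mathcal H_0}\{\mathbb P[\wh P^\trd_i\le t\mid\mathcal V]-t\}\Bigg|\le \delta+\frac1{n_0}\sum_{i\in\mathcal H_0}\mathbb P\big[|\misP{i}\wedge\zeta-\wh P^\trd_i\wedge\zeta|>\delta\mid \mathcal V\big].
\]
The right-hand side does not depend on $t$, so the bound is automatically uniform in $t$.

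\textbf{Step 2: control the remainder.} Take expectations and apply Markov's inequality. Using $n_0/n\to\pi_0>0$,
\[
\mathbb E\Big[\frac1{n_0}\sum_{i\in\mathcal H_0}\mathbb P[\,\cdot>\delta\mid\mathcal V]\Big]\le \frac{n}{n_0\delta}\cdot\frac1n\sum_{i=1}^n\mathbb E|\misP{i}\wedge\zeta-\wh P^\trd_i\wedge\zeta|\le \frac{2\mathfrak C_\mis\mathfrak L_{n,\mis}}{\pi_0\delta}.
\]
Since $\Delta_n=o(1)$, we have $\mathfrak L_{n,\mis}\to 0$, so this vanishes for each fixed $\delta$. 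Letting $\delta\downarrow 0$ afterwards shows that the expected supremum over $t\in[0,\zeta]$ tends to zero. This yields both the in-probability statement (via Markov's inequality) and the expectation statement, the latter because $\sup_t\mathbb E\le\mathbb E\sup_t$.

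\textbf{Step 3: extend from $[0,\zeta]$ to $[0,1]$.} This is the main obstacle, because truncation at $\zeta$ loses all information about p-values above $\zeta$. I would try to reuse the argument of Proposition~\ref{prop:avg_sign_limma_trnd_ms} with $\zeta$ replaced by $1$: the Tweedie representation \eqref{eq:tweedie_trnd} applies for all $z$, and only the restriction $|z|\ge\underline z$ used $\zeta<1$. If that fails, I would use a monotonicity argument instead. For $t\in(\zeta,1]$, the upper deviation is bounded by $1-t\le 1-\zeta$ plus the deviation at $\zeta$, and the lower deviation is handled the same way, using that $t\mapsto \mathbb P[\wh P^\trd_i\le t\mid\mathcal V]$ is nondecreasing and at most $1$. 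Together these give $\limsup$ of the supremum over $(\zeta,1]$ at most $1-\zeta$. Since $\zeta\in(1/2,1)$ is arbitrary in Proposition~\ref{prop:avg_sign_limma_trnd_ms}, letting $\zeta\uparrow1$ along a sequence (with constants depending on $\zeta$ but fixed before $n\to\infty$) gives the claim on all of $[0,1]$.
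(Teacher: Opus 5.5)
Your argument has the same skeleton as the paper's proof. You sandwich $\mathds 1\{\wh P^\trd_i\le t\}$ between $\mathds 1\{\misP{i}\le t\pm\delta\}$ up to the truncated discrepancy; the paper does this via Lemma~\ref{lem:s11}, with $\delta^{-1}|\wh P^\trd_i\wedge\zeta'-\misP{i}\wedge\zeta'|$ in place of your indicator of $\{|\cdot|>\delta\}$. You then use Lemma~\ref{lem:val_p_mis} for the oracle, and Proposition~\ref{prop:avg_sign_limma_trnd_ms} plus Markov for the remainder. Two differences are worth noting, and both favour you.

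First, you observe that, by independence across units, $\mathbb P[\misP{i}\le t\mid\mathcal V]=t$ exactly, where $\mathcal V=(\misV{1},\ldots,\misV{n})$. So after conditioning, the oracle term contributes nothing. The paper instead bounds $\mathbb E[\sup_t|n_0^{-1}\sum_{i\in\mathcal H_0}(\mathds 1\{\misP{i}\le t\}-t)|\mid\mathcal V]$ via Lemma~\ref{lem:bdkw}, which is valid but superfluous. The paper also ties $\delta=\mathfrak L_{n,\mis}^{1/4}/3$ to $n$ and passes through convergence in probability and Slutsky. Your route (fixed $\delta$, then $\delta\downarrow0$ after $n\to\infty$) bounds $\mathbb E\sup_t$ directly and gives both claims at once.

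Second, the paper derives its bound only for $t\le\zeta'-\delta$, yet then states it with $\sup_{t\in[0,1]}$. Your monotonicity argument supplies exactly the missing step. It uses that $t\mapsto n_0^{-1}\sum_{i\in\mathcal H_0}\mathbb P[\wh P^\trd_i\le t\mid\mathcal V]$ is nondecreasing and at most $1$.

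Three corrections are needed.

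(i) The upper half of your Step~1 sandwich fails when $t+\delta\ge\zeta$. Suppose $t<\zeta\le t+\delta$, $\wh P^\trd_i=t$ and $\misP{i}>t+\delta$. Then the truncated discrepancy is $\zeta-t\le\delta$, but the left side equals $1$ and the right side equals $0$. The inequality holds only for $t<\zeta-\delta$; this is the role of the larger truncation level $\zeta'=(1+\zeta)/2$ in the paper. So Step~2 actually gives $\sup_{t\in[0,\zeta-\delta)}|\cdot|\le\delta+n_0^{-1}\sum_{i\in\mathcal H_0}\mathbb P[|\misP{i}\wedge\zeta-\wh P^\trd_i\wedge\zeta|>\delta\mid\mathcal V]$, not the supremum over $[0,\zeta]$. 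Applying your Step~3 monotonicity at a level $s<\zeta-\delta$ then gives $\limsup_n\mathbb E\sup_{t\in[0,1]}|\cdot|\le 2\delta+1-\zeta$. You conclude by letting $\delta\downarrow0$ and then $\zeta\uparrow1$.

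(ii) Drop the first option in Step~3. The proof of Proposition~\ref{prop:avg_sign_limma_trnd_ms} needs $|z|\ge\underline z>0$, where $\underline z$ is determined by $\zeta<1$. This is because the bounds on the Tweedie integral and on $\frac{d}{d\eta}\wh g_{\trd}$ (Lemma~\ref{lem:derivative_g_trd}) involve negative powers of $|z|$. So $\zeta=1$ is not available, and the monotonicity route is the right one.

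(iii) $\Delta_n=o(1)$ alone does not give $\mathfrak L_{n,\mis}\to0$, because $\mathfrak L_{n,\mis}\ge\Delta_n^{1/2}(\log n)^{5/2}$; you need $\Delta_n(\log n)^5\to0$. The paper's proof also asserts $\mathfrak L_{n,\mis}\to0$ without comment. This is therefore an implicit rate condition shared by both arguments, not a flaw specific to yours.
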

\begin{proof}
    Let us fix $i \in \mathcal H_0$. Choose $\zeta'=(1+\zeta)/2 \in (0,1/2)$ and $\delta\in (0,\zeta')$. Then by Lemma S11 of \cite{ignatiadis2025empirical}, we have that for any $t \le \zeta'-\delta$
    \begin{align}
        \mathds 1\{\wh P^\trd_i \le t\} \le \mathds 1\{\misP{i} \le t+\delta\}+\frac{1}{\delta}\left|\wh P^\trd_i \wedge \zeta'-\misP{i} \wedge \zeta'\right|
    \end{align}
    This implies
    \begin{align}
        \mathds 1\{\wh P^\trd_i \le t\}-t &\le \mathds 1\{\misP{i} \le t+\delta\}-(t+\delta)+\delta+\frac{1}{\delta}\left|\wh P^\trd_i \wedge \zeta'-\misP{i} \wedge \zeta'\right|,
    \end{align}
    which in turn implies
    \begin{align}
        \frac{1}{n_0}\sum_{i \in \mathcal H_0}\{\mathds 1\{\wh P^\trd_i \le t\}-t\} & \le \frac{1}{n_0}\sum_{i \in \mathcal H_0}\left\{\mathds 1\{\misP{i} \le t+\delta\}-(t+\delta)\right\}+\delta+\frac{1}{n_0\delta}\sum_{i \in \mathcal H_0}\left|\wh P^\trd_i \wedge \zeta'-\misP{i} \wedge \zeta'\right|\\
        &  \le \frac{1}{n_0}\sum_{i \in \mathcal H_0}\left\{\mathds 1\{\misP{i} \le t+\delta\}-\mathbb P_G\left[\misP{i} \le t+\delta \mid \misV{i}\right]\right\}+\delta\\
        &~~~~~~~~~~+\frac{1}{n\delta}\sum_{i \in \mathcal H_0}\left|\wh P^\trd_i \wedge \zeta'-\misP{i} \wedge \zeta'\right|,
    \end{align}
    where we use 
    \[
    \mathbb P_G\left[\misP{i} \le t+\delta \mid \misV{i}\right] = t+\delta, \quad \mbox{which follows from Lemma~\ref{lem:val_p_mis}.}
    \]
    Taking the conditional expectation, we get
    \begin{align}
        &\left|\frac{1}{n_0}\sum_{i \in \mathcal H_0}\{\mathbb P[\wh P^\trd_i \le t\mid \misV{1},\ldots,\misV{n}]-t\}\right|\\
        & \le \mathbb E\left[\Bigg|\frac{1}{n_0}\sum_{i \in \mathcal H_0}\left\{\mathds 1\{\misP{i} \le t+\delta\}-\mathbb P_G\left[\misP{i} \le t+\delta \mid \misV{i}\right]\right\}\Bigg|\,\Big|\, \misV{1},\ldots,\misV{n}\right]+\delta\\
        &~~~~~~+\frac{1}{n_0\delta}\sum_{i \in \mathcal H_0}\mathbb E\left[\left|\wh P^\trd_i \wedge \zeta'-\misP{i} \wedge \zeta'\right|\mid \misV{1},\ldots,\misV{n}\right]\\
         & \le \mathbb E\left[\sup_{t \in [0,1]}\Bigg|\frac{1}{n_0}\sum_{i \in \mathcal H_0}\left\{\mathds 1\{\misP{i} \le t\}-\mathbb P_G\left[\misP{i} \le t \mid \misV{i}\right]\right\}\Bigg|\;\Big|\, \misV{1},\ldots,\misV{n}\right]+\delta\\
        &~~~~~~+\frac{1}{n_0\delta}\sum_{i \in \mathcal H_0}\mathbb E\left[\left|\wh P^\trd_i \wedge \zeta'-\misP{i} \wedge \zeta'\right|\mid \misV{1},\ldots,\misV{n}\right].
    \end{align}
This also implies
\begin{align}
    &\sup_{t \in [0,1]}\Bigg|\frac{1}{n_0}\sum_{i \in \mathcal H_0}\{\mathbb P[\wh P^\trd_i \le t\mid \misV{1},\ldots,\misV{n}]-t\}\Bigg|\\
    & \le \mathbb E\left[\sup_{t \in [0,1]}\Bigg|\frac{1}{n_0}\sum_{i \in \mathcal H_0}\left\{\mathds 1\{\misP{i} \le t\}-\mathbb P_G\left[\misP{i} \le t\mid \misV{i}\right]\right\}\Bigg|\;\Big|\, \misV{1},\ldots,\misV{n}\right]+\delta\\
        &~~~~~~+\frac{n}{n_0}\cdot\frac{1}{n\delta}\sum_{i \in \mathcal H_0}\mathbb E\left[\left|\wh P^\trd_i \wedge \zeta'-\misP{i} \wedge \zeta'\right|\mid \misV{1},\ldots,\misV{n}\right].
\end{align}
Recall $\mathfrak L_{n,\mis}$ from Proposition~\ref{prop:avg_sign_limma_trnd_ms}. By Lemma~\ref{lem:bdkw} and $n_0/n \rightarrow \pi_0 \in (0,1)$, we get that 
\[
\mathbb E\left[\sup_{t \in [0,1]}\Bigg|\frac{1}{n_0}\sum_{i \in \mathcal H_0}\left\{\mathds 1\{\misP{i} \le t\}-\mathbb P_G\left[\misP{i} \le t\mid \misV{i}\right]\right\}\Bigg|\;\Big|\, \misV{1},\ldots,\misV{n}\right] \le \frac{\mathfrak L^{1/4}_{n,\mis}}{3},
\]
almost surely.
This implies, as $n \rightarrow 0$, we have 
\[
\limsup_{n \rightarrow 0} \mathbb E\left[\sup_{t \in [0,1]}\Bigg|\frac{1}{n_0}\sum_{i \in \mathcal H_0}\left\{\mathds 1\{\misP{i} \le t\}-\mathbb P_G\left[\misP{i} \le t\mid \misV{i}\right]\right\}\Bigg|\;\Big|\, \misV{1},\ldots,\misV{n}\right]=0,\; \mbox{a.s.}
\]
Next, take $\delta=\mathfrak L^{1/4}_{n,\mis}/3$ and define
\[
A_{\mis,n}:=\left\{\frac{1}{n}\sum_{i \in \mathcal H_0}\mathbb E\left[\left|\wh P^\trd_i \wedge \zeta'-\misP{i} \wedge \zeta'\right|\mid \misV{1},\ldots,\misV{n}\right]<\frac{\mathfrak L^{1/2}_{n,\mis}}{3}\right\}.
\]
Using Proposition~\ref{prop:avg_sign_limma_trnd_ms}, $n_0/n \rightarrow \pi_0 \in (0,1)$ and Markov's inequality, we get
\begin{align}
    \mathbb P(A^c_{\mis,n}) \lesssim_{\pi_0} \mathfrak L^{1/2}_{n,\mis}.
\end{align}
Since $\mathfrak L_{n,\mis} \rightarrow 0$, this further implies
\begin{align}
    \frac{n}{n_0}\cdot\frac{1}{n\delta}\sum_{i \in \mathcal H_0}\mathbb E\left[\left|\wh P^\trd_i \wedge \zeta'-\misP{i} \wedge \zeta'\right|\mid \misV{1},\ldots,\misV{n}\right],
\end{align}
converges to 0 in probability \citep{billingsley1995probability}.
Therefore, by Slutsky's theorem for any fixed $\varepsilon>0$, we can conclude that
\begin{align}
    &\limsup_{n \rightarrow \infty}\mathbb P\left[\sup_{t \in [0,1]}\Bigg|\frac{1}{n_0}\sum_{i \in \mathcal H_0}\{\mathbb P[\wh P^\trd_i \le t\mid \misV{1},\ldots,\misV{n}]-t\}\Bigg| \ge \varepsilon\right]=0.
\end{align}

Since $\sup_{t \in [0,1]}|n^{-1}_0\sum_{i \in \mathcal H_0}\{\mathbb P[\wh P^\trd_i \le t\mid \misV{1},\ldots,\misV{n}]-t\}|$ is bounded, therefore the foregoing relation automatically implies
\[
\limsup_{n \rightarrow \infty}\mathbb E\left[\sup_{t \in [0,1]}\Bigg|\frac{1}{n_0}\sum_{i \in \mathcal H_0}\{\mathbb P[\wh P^\trd_i \le t\mid \misV{1},\ldots,\misV{n}]-t\}\Bigg|\right] =0,
\]
which further implies using the Jensen's inequality that 
\[
\limsup_{n \rightarrow \infty}\sup_{t \in [0,1]}\mathbb E\left[\Bigg|\frac{1}{n_0}\sum_{i \in \mathcal H_0}\{\mathbb P[\wh P^\trd_i \le t\mid \misV{1},\ldots,\misV{n}]-t\}\Bigg|\right] =0.
\]
This concludes the lemma.
\end{proof}
We also have the following (slightly restrictive) non-asymptotic version of the foregoing lemma, which is crucial to prove the FDR control.
\begin{lemm}
    \label{lem:approx_unif_mis}
    Fix $\zeta \in (1/2,1)$ and assume that $n_0/n \rightarrow \pi_0\in (0,1)$, as $n \rightarrow \infty$, where $n_0=|\mathcal H_0|$. Then, under the working model specified by \eqref{eq:bayesian_cs_model_2}, \eqref{eq:misspecified-limma-reg}, and \eqref{eq:tau_mis_v_mis}, there exists a constant $\mathfrak D_{3,\mis}>0$ such that 
    \begin{align}
        &\mathbb E\left[\sup_{t \in [0,\zeta]}\Bigg|\left\{\frac{1}{n}\sum_{i \in \mathcal H_0}\mathds 1\{\wh P^\trd_i \le t\}-\pi_0t\right\}\Bigg|\right]\\
        &\le \mathfrak D_{3,\mis}\cdot\max\left\{\Delta^{1/4}_n(\log n)^{5/4},\frac{(\log n)^{5/4}}{n^{1/8}}\Delta^{1/4}_n,\frac{(\log n)}{n^{1/8}}\cdot |\log \Delta_n|^{h_2/8}\Delta^{\frac{1}{4}(1-h_1/2)}_n,\frac{(\log n)^{5/4}}{n^{1/4}}\right\}.
    \end{align}
\end{lemm}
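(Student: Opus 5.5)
The plan is to compare the empirical process of the estimated null p-values with that of the oracle misspecified p-values $\misP{i}$, and to control the gap through Proposition~\ref{prop:avg_sign_limma_trnd_ms}, using a smoothing argument in which the rate is traded against a window width $\delta$. The target exponents (the fourth root of $\mathfrak L_{n,\mis}$) suggest taking $\delta \asymp \mathfrak L_{n,\mis}^{1/2}$ and paying $\delta^{-1}\mathfrak L_{n,\mis}$, both of order $\mathfrak L^{1/2}_{n,\mis}$. A further square root then comes from Markov/Cauchy--Schwarz. So we aim for the stated bound, which is $\lesssim \mathfrak L^{1/2}_{n,\mis}$ up to constants in the exponent bookkeeping; indeed the displayed maximum is exactly $\mathfrak L^{1/2}_{n,\mis}$ with $\Delta_n^{1/2}$ replaced by $\Delta_n^{1/4}$.

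\textbf{Step 1: sandwich.} Set $\zeta'=(1+\zeta)/2$. Apply Lemma S11 of \citet{ignatiadis2025empirical}, as in the proof of Lemma~\ref{lem:approx_unif_mis_strong}. For $t\le\zeta$ and $\delta\le \zeta'-\zeta$ it gives
\[
\mathds 1\{\misP{i}\le t-\delta\}-\delta^{-1}\bigl|\wh P^\trd_i\wedge\zeta'-\misP{i}\wedge\zeta'\bigr| \le \mathds 1\{\wh P^\trd_i\le t\} \le \mathds 1\{\misP{i}\le t+\delta\}+\delta^{-1}\bigl|\wh P^\trd_i\wedge\zeta'-\misP{i}\wedge\zeta'\bigr|.
\]
Averaging over $i\in\mathcal H_0$ and subtracting $\pi_0 t$ then bounds $\sup_{t\le\zeta}|n^{-1}\sum_{\mathcal H_0}\mathds 1\{\wh P^\trd_i\le t\}-\pi_0 t|$ by three terms: the oracle empirical-process deviation $\sup_{s\in[0,1]}|n^{-1}\sum_{\mathcal H_0}(\mathds 1\{\misP{i}\le s\}-s)|$, the bias $\delta+|n_0/n-\pi_0|$, and the remainder $(n\delta)^{-1}\sum_{i}|\wh P^\trd_i\wedge\zeta'-\misP{i}\wedge\zeta'|$.

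\textbf{Step 2: oracle term.} By Lemma~\ref{lem:val_p_mis}, the null $\misP{i}$ are exactly uniform conditionally on $\misV{i}$. They are also independent across $i$, by the independence in Assumption~\ref{assu:misspec_limma_trnd_var}. The DKW inequality (Lemma~\ref{lem:bdkw}) therefore bounds the expectation of the oracle deviation by $\lesssim n_0^{-1/2}\lesssim n^{-1/2}$, which is dominated by the last entry of the claimed maximum.

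\textbf{Step 3: remainder.} Taking expectations, Proposition~\ref{prop:avg_sign_limma_trnd_ms} applied with $\zeta'$ in place of $\zeta$ bounds the remainder by $\mathfrak C_\mis\mathfrak L_{n,\mis}/\delta$. Choosing $\delta=\mathfrak L^{1/2}_{n,\mis}$, which is admissible for large $n$ since $\mathfrak L_{n,\mis}\to0$, gives a total of order $\mathfrak L^{1/2}_{n,\mis}+n^{-1/2}+|n_0/n-\pi_0|$.

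\textbf{Main obstacle.} The step I expect to be delicate is the term $|n_0/n-\pi_0|$, since the statement only assumes convergence with no rate. I would handle it by reading $\pi_0$ as $n_0/n$ in the finite-$n$ bound, which is what the subsequent FDR argument needs; failing that, I would absorb the term into the constant $\mathfrak D_{3,\mis}$ for $n$ large. Next in difficulty is matching exponents: $\mathfrak L^{1/2}_{n,\mis}$ is $\Delta_n^{1/4}(\log n)^{5/4}$ and so on, which agrees with the displayed maximum.
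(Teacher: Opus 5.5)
Your proposal is essentially the paper's proof: DKW (Lemma~\ref{lem:bdkw}) applied to the exactly uniform, independent oracle null p-values from Lemma~\ref{lem:val_p_mis}, the Lemma~S11 sandwich at level $\zeta'$, Proposition~\ref{prop:avg_sign_limma_trnd_ms} for the remainder, and the choice $\delta\asymp\mathfrak L^{1/2}_{n,\mis}$, which reproduces the displayed maximum exactly. Your flag on $|n_0/n-\pi_0|$ is well taken, since the paper drops this term without justification. Your first repair is the right one: state the bound with $n_0/n$ in place of $\pi_0$, which is all the proof of Theorem~\ref{thm:final_rate_mis} needs. Absorbing the term into $\mathfrak D_{3,\mis}$ does not work, because evaluating at $t=\zeta$ shows the left-hand side is at least $\zeta\,|n_0/n-\pi_0|$ minus a term of the order of the rate, and $n_0/n\to\pi_0$ comes with no rate.
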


\begin{proof}
First, observe that under the working model, using Lemmas~\ref{lem:val_p_mis} and~\ref{lem:bdkw}, we can conclude that for any $t \in [0,1]$,
\[
\mathbb E\left[\sup_{t \in [0,1]}\Bigg|\frac{1}{n_0} \sum_{i \in \mathcal H_0}\left(\mathds 1\{\misP{i} \le t\}-t\right)\Bigg|\right] \lesssim \frac{1}{\sqrt{n_0}}.
\]
Since $n_0/n \rightarrow \pi_0 \in (0,1)$, therefore we can also conclude that
\[
\mathbb E\left[\sup_{t \in [0,1]}\Bigg|\frac{1}{n} \sum_{i \in \mathcal H_0}\mathds 1\{\misP{i} \le t\}-\pi_0t\Bigg|\right] \lesssim \frac{1}{\sqrt{n}}.
\]
    Using Lemma~\ref{lem:val_p_mis}, we can show by retracing the steps in the proof of Proposition~15* of \cite{ignatiadis2025empirical} that there exists a $n_\mis$ such that for all $n \ge n_\mis$, any fixed $\zeta \in (1/2,1)$ and $\delta \in [0,(1+\zeta)/2]$
    \begin{align}
    \label{eq:bad_bound_p_val_mis}
        \mathbb E\left[\sup_{t \in [0,\zeta]}\Bigg|\frac{1}{n}\sum_{i \in \mathcal H_0}\mathds 1\{\wh P^\trd_i \le t\}-\pi_0t\Bigg|\right]\lesssim \frac{1}{\sqrt{n}}+\delta+\frac{1}{n\delta}\sum_{i=1}^n 
\mathbb E\!\left[\left|\misP{i}\wedge \zeta' - \wh P_i^{\trd} \wedge \zeta'\right|\right].
    \end{align}
    By Proposition~\ref{prop:avg_sign_limma_trnd_ms}, we can take $\delta \asymp \mathfrak L^{1/2}_{n,\mis}$ (with the constant appropriately chosen to ensure $\delta  \in [0,(1+\zeta)/2]$) to get
    \begin{align}
        &\mathbb E\left[\sup_{t \in [0,\zeta]}\Bigg|\frac{1}{n}\sum_{i \in \mathcal H_0}\mathds 1\{\wh P^\trd_i \le t\}-\pi_0t\Bigg|\right]\\
        &\lesssim \frac{1}{\sqrt{n}}+\delta+\frac{1}{n\delta}\sum_{i=1}^n 
\mathbb E\!\left[\left|\misP{i}\wedge \zeta' - \wh P_i^{\trd} \wedge \zeta'\right|\right]\\
        &\le \max\left\{\Delta^{1/4}_n(\log n)^{5/4},\frac{(\log n)^{5/4}}{n^{1/8}}\Delta^{1/4}_n,\frac{(\log n)}{n^{1/8}}\cdot |\log \Delta_n|^{h_2/8}\Delta^{\frac{1}{4}(1-h_1/2)}_n,\frac{(\log n)^{5/4}}{n^{1/4}}\right\}.
    \end{align}
    This concludes the lemma.
\end{proof}

Next, we consider the proof of Theorem~\ref{thm:final_rate_mis}. However, a major part of the proof follows techniques to be described later in the proof of Theorem~\ref{thm:final_rate}. In the interest of brevity, we shall not repeat them in this proof and request the readers to refer to the proof of Theorem~\ref{thm:final_rate} for the details of the steps.

\begin{proof}[Proof of Theorem~\ref{thm:final_rate_mis}]
    Let $\zeta =\max\{3/4,(1+\alpha)/2\}$ and recall the Benjamini-Hochberg procedure employing $\{\wh P^\trd_i\}$.
    For any $t \in (0,1)$, define
    \[
    R^\mis_n(t):=\sum_{i=1}^n\mathds 1\{\wh P^\trd_i \le t\}, \qquad \mbox{and} \qquad V^\mis_n(t):=\sum_{i \in \mathcal H_0}\mathds 1\{\wh P^\trd_i \le t\}.
    \]
    One can use the self-characterization property of the Benjamini-Hochberg procedure to derive that the BH threshold $\wh t_\mis$ in this framework can be re-expressed as follows:
    \begin{align}
        \label{eq:trd_bh_ms}
        \wh t_\mis:=\sup\left\{t \in [0,1]:\frac{R^\mis_n(t)}{n} \ge \frac{t}{\alpha}\right\},
    \end{align}
    The BH procedure rejects all hypotheses $H_i:\theta_i=0$ such that $\wh P^\trd_i \le \wh t_\mis$. Since the oracle p-values $\{\misP{i}\}$ are critically dense at $\alpha$, proceeding as in the proof of Theorem~\ref{thm:final_rate}, one can show that if we take 
    \[
    \alpha_{0} \in \left(\liminf_{n\to\infty}\inf_{t \in [t_0,t_1]}\frac{1}{nt}\sum_{i=1}^n\mathbb P\!\left[\misP{i} \le t\right],\alpha\right),
    \] 
    and $\kappa_n=5\alpha/4(\alpha-\alpha_0)$,
    then 
    \begin{align}
    \label{eq:mis_nn_rej}
    \limsup_{n \rightarrow \infty}\mathbb P\left(R^\mis_n(\wh t_\mis) < n\kappa_n\right)=0.
    \end{align}
    Furthermore, by the definition of \emph{false discovery proportion}, we have
    \begin{align}
        \frac{V^\mis_n(\wh t_\mis)}{R^\mis_n(\wh t_\mis) \vee 1}& = \frac{n\pi_0\wh t_\mis}{R^\mis_n(\wh t_\mis)\vee 1} +\frac{n}{R^\mis_n(\wh t_\mis)\vee 1}\left\{\frac{1}{n}\sum_{i \in \mathcal H_0}\mathds 1\{\wh P^\trd_i \le \wh t_\mis\}-\pi_0\wh t_\mis\right\}.
    \end{align}
    Define
    \(
    \mathcal F_\mis := \{R^\mis_n(\wh t_\mis) \ge n\kappa_n \}.
    \)
    By \eqref{eq:mis_nn_rej}, we have $\smash{\limsup_{n \rightarrow \infty}\mathbb P(\mathcal F^c_\mis)=0.}$  The definition of $\wh t_\mis$ also implies
    \(
    \smash{(n/R^\mis_n(\wh t_\mis)) \le (\alpha/\wh t_\mis).}
    \) 
    In particular $\wh t_\mis \le \alpha \le \zeta$.
    By the definition of the \emph{false discovery proportion}, we have
    \begin{align}
    \frac{V^\mis_n(\wh t_\mis)}{R^\mis_n(\wh t_\mis) \vee 1}& \le \mathds 1(\mathcal F^c_\mis)+\frac{V^\mis_n(\wh t_\mis)}{R^\mis_n(\wh t_\mis) \vee 1}\cdot \mathds 1(\mathcal F_\mis)\\
    & \le \mathds 1(\mathcal F^c_\mis) + \frac{n\pi_0\wh t_\mis}{R^\mis_n(\wh t_\mis)}\cdot \mathds 1(\mathcal F_\mis)+\frac{n\mathds 1(\mathcal F_\mis)}{R^\mis_n(\wh t_\mis)}\left\{\frac{1}{n}\sum_{i \in \mathcal H_0}\mathds 1\{\wh P^\trd_i \le \wh t_\mis\}-\pi_0\wh t_\mis\right\}\\
    & \le \mathds 1(\mathcal F^c_\mis) + \frac{n\pi_0\wh t_\mis}{R^\mis_n(\wh t_\mis)}\cdot \mathds 1(\mathcal F_\mis)+\frac{n\mathds 1(\mathcal F_\mis)}{R^\mis_n(\wh t_\mis)}\sup_{t \in [0,\alpha]}\left\{\frac{1}{n}\sum_{i \in \mathcal H_0}\mathds 1\{\wh P^\trd_i \le t\}-\pi_0t\right\}\\
    &\le \mathds 1(\mathcal F^c_\mis) + \frac{n\pi_0\wh t_\mis}{R^\mis_n(\wh t_\mis)}\cdot \mathds 1(\mathcal F_\mis)+\frac{1}{\kappa_n}\sup_{t \in [0,\alpha]}\left\{\frac{1}{n}\sum_{i \in \mathcal H_0}\mathds 1\{\wh P^\trd_i \le t\}-\pi_0t\right\}\\
    &\le \mathds 1(\mathcal F^c_\mis) + \pi_0\alpha+\frac{1}{\kappa_n}\sup_{t \in [0,\zeta]}\left|\frac{1}{n}\sum_{i \in \mathcal H_0}\mathds 1\{\wh P^\trd_i \le t\}-\pi_0t\right|.
    \end{align}
    Taking the expectation on both sides 
    \begin{align}
        \fdr^\mis_n \le \mathbb P\left(R^\mis_n(\wh t_\mis) < n\kappa_n\right) + \pi_0\alpha+\frac{1}{\kappa_n}\mathbb E\left[\sup_{t \in [0,\zeta]}\Bigg|\frac{1}{n}\sum_{i \in \mathcal H_0}\mathds 1\left(\wh P^\trd_i \le t\right)-\pi_0t\Bigg|\right].
    \end{align}
    Using Lemma~\ref{lem:approx_unif_mis} and \eqref{eq:mis_nn_rej}, we conclude that
    \[
    \limsup_{n \rightarrow \infty}\fdr^\mis_n \le \pi_0\alpha.
    \]
\end{proof}

\section{Proofs of Results from Section \ref{sec:joint_limma_trend}}
In this section, we shall consistently assume Assumption~\ref{assu:compact_2d}.
\subsection{Proof of Theorem \ref{thm:tweedie_2_d}}
Recall the definition
\[
\PjtFun(z,s^2,a; H) := \mathbb{E}_{H}\left[2\left(1 - \Phi\left(\frac{|z|}{\nu \sigma}\right)\right) \bigg| S^2 = s^2, A=a\right],
\]
from \eqref{eq:validity_seq_3}.
Under Assumption~\ref{assu:design}, the pair $(S^2_i,A_i)$ satisfies \eqref{eq:def_z_i_dis_2d} and hence the conditional expectation in the definition of the p-value can be expressed as
\[
\PjtFun(z,s^2,a; H) = \frac{2}{f_{H,K-p}(s^2,a)} \int_{\mathbb{R} \times \mathbb{R}_{\ge 0}} \left\{1 - \Phi\left(\frac{|z|}{\nu\sigma}\right)\right\} p_{K-p}(s^2,a \mid \mu, \sigma^2) \, H(\dd\mu, \dd\sigma^2),
\]
where $p_{K-p}(s^2, a \mid \mu, \sigma^2)$ is the joint likelihood of $(S^2,A)$ given $(\mu, \sigma^2)$ defined in under \eqref{eq:def_z_i_dis_2d} and $H$ is the prior on $(\mu,\sigma^2)$ defined in \eqref{eq:limma_trend_2d_prior}.
The marginal density $f_{H,K-p}(s^2,a)$ of $(S^2,A)$ is defined in \eqref{eq:2_d_marginal}.
Define the constant
\[
\mathscr C_{K,p}(s^2) := \sqrt{\frac{2}{\pi}} \cdot \frac{(K-p)^{(K-p)/2} (s^2)^{\frac{K-p}{2} - 1}}{2^{(K-p)/2} \, \Gamma\left(\frac{K-p}{2}\right)}.
\]
Without loss of generality, assume $z \ge 0$. Using the integral representation of $1 - \Phi(x)$, we write:
\begin{align*}
& \PjtFun(z,s^2,a; H) \\
&= \frac{2}{f_{H,K-p}(s^2, a)} \int_{\mathbb{R} \times \mathbb{R}_{\ge 0}} \int_z^\infty \frac{1}{\sqrt{2\pi} \nu \sigma} \exp\left(-\frac{u^2}{2\nu^2 \sigma^2}\right) \, p(s^2, a \mid \mu, \sigma^2) \, \dd u \, H(\dd\mu, \dd\sigma^2) \\
&= \frac{\mathscr C_{K,p}(s^2)}{f_{H,K-p}(s^2, a)} \int_{\mathbb{R} \times \mathbb{R}_{\ge 0}} \int_{z/\nu}^\infty (\sigma^2)^{-(K-p+1)/2} \, \exp\left( -\frac{(K-p) s^2 + u^2}{2\sigma^2} \right) \\
& \hskip 12em \times \frac{\sqrt{K}}{\sigma} \, \phi\left(\frac{\sqrt{K}(a - \mu)}{\sigma}\right) \, \dd u \, H(\dd\mu, \dd\sigma^2).
\end{align*}
Using \eqref{eq:2_d_marginal}, the above display yields
\begin{align*}
\PjtFun(z,s^2,a; H) &= \frac{\mathscr C_{K,p}(s^2)}{f_{H,K-p}(s^2, a)} \int_{z/\nu}^\infty \Gamma\left(\frac{K - p + 1}{2}\right)\left(\frac{2}{K - p + 1}\right)^{(K - p + 1)/2} \\
& \quad \times \left(\frac{(K - p) s^2 + u^2}{K - p + 1}\right)^{-(K - p - 1)/2} f_{H,K-p+1}\left(\frac{(K - p) s^2 + u^2}{K - p + 1}, a\right) \dd u.
\end{align*}
Finally, by a change of variables $t^2 = \frac{(K - p) s^2 + u^2}{K - p + 1}$, one can match this to the integral representation stated in the theorem, by retracing the steps used to prove Proposition 11 of~\cite{ignatiadis2025empirical}. This yields
\begin{align*}
\PjtFun(z,s^2,a; H) = C_{K,p} \cdot \frac{(s^2)^{\frac{K-p}{2} - 1}}{f_{H,K-p}(s^2, a)} \int_0^{\infty} \frac{(t^2)^{-\frac{K-p-1}{2}}}{\sqrt{(K-p+1)t^2 - (K-p)s^2}} \\
\hskip 8em \times f_{H,K-p+1}(t^2, a) \cdot \mathds{1} \left\{ t^2 \ge \frac{(K-p)s^2 + (z^2/\nu^2)}{K-p + 1} \right\} \, \dd t^2,
\end{align*}
where 
\[
C_{K,p} = 
\frac{\bigl(1 + \tfrac{1}{K-p}\bigr)^{-(K-p)/2} \, \Gamma\!\left(\tfrac{K-p+1}{2}\right)}%
{\sqrt{\pi}\,(K-p+1)^{-1/2}\, \Gamma\!\left(\tfrac{K-p}{2}\right)}.
\]

\subsection{Hellinger Convergence of the Estimated Marginal Density}

To prove Theorem~\ref{prop:asymp_p_val}, we first establish a large deviation inequality for
\(
\smash{\mathcal H^2\big(f_{\wh H,K-p}, f_{H,K-p}\big),}
\)
where for two bivariate densities $g_1$ and $g_2$ supported on
$\mathbb R_{\ge 0} \times \mathbb R$, corresponding to the joint
distribution of the summary statistics $(S^2,A)$, the squared Hellinger
distance is defined as
\[
\mathcal H^2(g_1,g_2):=\int_{\mathbb R_{\ge 0}\times\mathbb R}\left(\sqrt{g_1(s^2,a)} - \sqrt{g_2(s^2,a)}\right)^2\, \dd s^2\, \dd a.
\]

In that direction, define the class of joint marginal densities
\begin{align}
\label{eq:def_f_joint}
\mathcal F_\jt:=\left\{\big(f_{H,K-p}, f_{H,K-p+1}\big):\; H \in\mathcal G_H\right\},
\end{align}
where $\mathcal G_H$ is defined in \eqref{eq:joint_npmle}. To control the metric entropy of $\mathcal F_\jt$, fix $\varepsilon>0$ and localize the densities to a compact rectangular set
\(
\mathcal K_\varepsilon:=(0,\wb B_\varepsilon]\times[-\wb A_\varepsilon,\wb A_\varepsilon],
\)
where the truncation levels $\wb A_\varepsilon$ and $\wb B_\varepsilon$ (depending on $\varepsilon$) are specified below in \eqref{eq:def_a_varepsilon}.

For two elements
$(f_{H,K-p}, f_{H,K-p+1})$ and $(f_{H',K-p}, f_{H',K-p+1})$ in $\mathcal F_\jt$, define the localized uniform semi-norm
\begin{align}
\label{eq:def_metric_cover}
d_{\mathcal K_\varepsilon,\infty}\big((f_{H,K-p}, f_{H,K-p+1}),
(f_{H',K-p}, f_{H',K-p+1})\big)&:=
\max\Big\{\|f_{H,K-p}-f_{H',K-p}\|_{\mathcal K_\varepsilon,\infty},\nonumber\\
&\hskip 3em \|f_{H,K-p+1}-f_{H',K-p+1}\|_{\mathcal K_\varepsilon,\infty}\Big\},
\end{align}
where, for $\kappa\in\{K-p,K-p+1\}$,
\[
\|f_{H,\kappa}-f_{H',\kappa}\|_{\mathcal K_\varepsilon,\infty}
:=\sup_{(s^2,a)\in\mathcal K_\varepsilon}\big|f_{H,\kappa}(s^2,a)-f_{H',\kappa}(s^2,a)\big|.
\]

We choose $\wb A_\varepsilon$ and $\wb B_\varepsilon$ so that the
contribution of the tails outside $\mathcal K_\varepsilon$
is uniformly negligible over $\mathcal F_\jt$.
Specifically, we require that
\begin{align}
\label{eq:character2d_rem_outside_cpt}
\sup_{H\in\mathcal G_H}\sup_{(s^2,a)\notin\mathcal K_\varepsilon}
\big|
f_{H,\kappa}(s^2,a)\big|\;\le\;\frac{\varepsilon}{2},\qquad
\kappa\in\{K-p,K-p+1\}.
\end{align}

To achieve this, define
\begin{align}
\label{eq:def_a_varepsilon}
\wb A_\varepsilon:=M+\sqrt{\frac{\wb U}{K}\log\!\left(\frac{8}{\pi\varepsilon^2}\right)},\quad \wb B_\varepsilon
:=\wb U\cdot\max\left\{1,\frac{2}{K-p}\log\!\left(\frac{2}{\varepsilon}\right)\right\}.
\end{align}
By Lemma S1C of \citet{ignatiadis2025empirical}
and the proof of Lemma D.2 in \citet{saha_guntu},
this choice ensures that \eqref{eq:character2d_rem_outside_cpt}
holds uniformly over $\mathcal F_\jt$.

It remains to bound the metric entropy of $\mathcal F_\jt$
under the semi-norm $d_{\mathcal K_\varepsilon,\infty}$.
To this end, we construct explicitly an $\varepsilon$-cover
of $\mathcal F_\jt$ under this semi-norm, which yields
a corresponding upper bound on the log covering number.

\begin{lemm}
\label{lem:metric_entropy_2d}
Let $\mathcal F_\jt$ be the class defined in~\eqref{eq:def_f_joint}, equipped with the semi-norm $d_{\mathcal K_\varepsilon,\infty}$.
Then
\[
\log \mathcal N(\mathcal F_\jt,d_{\mathcal K_\varepsilon,\infty},\varepsilon)\lesssim_{M,\underline L,\wb U,K,p}
\left(\log\left(\frac{1}{\varepsilon}\right)\right)^3 .
\]
\end{lemm}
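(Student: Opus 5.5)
We construct an explicit finite cover of the class of compactly supported priors, by discretizing the mean coordinate and matching moments in the variance coordinate. The bound on $\mathcal K_\varepsilon$ should follow from a polynomial (Taylor) approximation of the kernel $p_\kappa(s^2,a\mid\mu,\sigma^2)$ in both $\mu$ and $\sigma^2$, as in Lemma~S.4 of \citet{ignatiadis2025empirical} and the Gaussian-mixture entropy bounds of \citet{saha_guntu}. Since the two coordinates are handled separately, the entropy picks up an extra logarithmic factor relative to the univariate $(\log(1/\varepsilon))^2$.

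\textbf{Step 1 (reduction to a discrete mixing measure).} Fix $H\in\mathcal G_H$ and partition $[-M,M]\times[\underline L,\wb U]$ into rectangles $I_j\times J_k$ of side length $\asymp 1/\log(1/\varepsilon)$ in $\mu$, or a fixed constant if the Taylor radius allows. On $\mathcal K_\varepsilon$ and for $(\mu,\sigma^2)$ in this compact set, the map $\mu\mapsto\phi(\sqrt K(a-\mu)/\sigma)$ is entire. Its Taylor expansion around the center $\mu_j$ of $I_j$, truncated at degree $m\asymp\log(1/\varepsilon)$, therefore has remainder $\le\varepsilon$, uniformly in $a\in[-\wb A_\varepsilon,\wb A_\varepsilon]$ and $\sigma^2\in[\underline L,\wb U]$, with the usual Hermite-polynomial growth bounds. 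Similarly, $\sigma^2\mapsto p_{\chi^2}(s^2\mid\kappa,\sigma^2)\,\sigma^{-1}$ is analytic on $[\underline L,\wb U]$, uniformly in $s^2\in(0,\wb B_\varepsilon]$. Its Taylor expansion of degree $m$ around the center of $J_k$ is accurate to $\varepsilon$ because $\wb B_\varepsilon\asymp\log(1/\varepsilon)$ and $\underline L>0$; here I would reuse the bounds in Lemma~S.4 of \citet{ignatiadis2025empirical}.

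\textbf{Step 2 (moment matching).} By Carathéodory's theorem, there is a discrete $H'$ that matches all mixed moments $\int_{I_j\times J_k}(\mu-\mu_j)^{r}(\sigma^2-\sigma_k^2)^{t}\,H(\dd\mu,\dd\sigma^2)$, $r,t\le m$, on each cell. The same $H'$ works simultaneously for $\kappa=K-p$ and $\kappa=K-p+1$, because the moments do not depend on $\kappa$. This $H'$ has at most $N_{\rm cell}\cdot(m+1)^2$ atoms, so $\|f_{H,\kappa}-f_{H',\kappa}\|_{\mathcal K_\varepsilon,\infty}\lesssim\varepsilon$. With $N_{\rm cell}\asymp\log(1/\varepsilon)$ cells in $\mu$ and $O(1)$ cells in $\sigma^2$, the total number of atoms is $D\asymp(\log(1/\varepsilon))^3$. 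Alternatively, one can use a cruder $O(1)$ grid in $\mu$ with degree $m$, which gives $m^2\asymp(\log 1/\varepsilon)^2$ atoms; either way the final bound is at most cubic.

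\textbf{Step 3 (discretizing atoms and weights).} The map $(\mu,\sigma^2,w)\mapsto\sum_\ell w_\ell p_\kappa(s^2,a\mid\mu_\ell,\sigma_\ell^2)$ is Lipschitz in sup norm on $\mathcal K_\varepsilon$, with constant polynomial in $1/\varepsilon$ coming from the factor $\sigma^{-1}$ and the derivatives in $\mu,\sigma^2$. The weights are Lipschitz in $\ell_1$ with constant $\sup p_\kappa\lesssim 1$. Rounding atom locations to an $\varepsilon^{c}$-grid and weights to an $\varepsilon$-net of the simplex in $\ell_1$ therefore costs $O(\varepsilon)$. The resulting count is $(C/\varepsilon^{c})^{2D}\cdot(C/\varepsilon)^{D}$, so the log covering number is $\lesssim D\log(1/\varepsilon)$.

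\textbf{Main obstacle.} Combining the steps gives $\log\mathcal N\lesssim D\log(1/\varepsilon)$, which matches the cubic statement only if $D\lesssim(\log 1/\varepsilon)^2$. I therefore expect the main difficulty to be Step 1: showing that a fixed number of $\mu$-cells suffices, with Taylor degree $\asymp\log(1/\varepsilon)$ in each variable, uniformly over $|a|\le\wb A_\varepsilon\asymp\sqrt{\log(1/\varepsilon)}$. I would first try the Hermite bound $|\partial_\mu^r\phi|\le C\sqrt{r!}$ with an $\mu$-cell width that is a small absolute constant. This makes the remainder of order $(c e/\sqrt m)^m\le\varepsilon$ for $m\asymp\log(1/\varepsilon)$, mirroring \citet{saha_guntu}. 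The tensor-product moment count is then $m^2$, giving $D\asymp(\log 1/\varepsilon)^2$ and the claimed $(\log(1/\varepsilon))^3$.
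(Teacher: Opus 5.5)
Your route differs from the paper's and can be made to work, but Step~1 as written does not deliver what Step~2 needs. The kernel $k(\mu,\sigma^2)=\sigma^{-1}\phi(\sqrt K(a-\mu)/\sigma)\,p_{\chi^2}(s^2\mid\kappa,\sigma^2)$ is not a product of a function of $\mu$ and a function of $\sigma^2$. The coefficients $\sigma^{-r}\phi^{(r)}(\sqrt K(a-\mu_j)/\sigma)$ of your $\mu$-expansion still depend on $\sigma$. So expanding only $p_{\chi^2}\sigma^{-1}$ in $\sigma^2$ does not give a polynomial in $(\mu-\mu_j,\sigma^2-\sigma_k^2)$ with coefficients depending only on $(a,s^2)$, and that is exactly what matching mixed moments requires.

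Your Hermite bound $|\phi^{(r)}|\le C\sqrt{r!}$ controls only pure $\mu$-derivatives; for that direction it is correct, and even uniform over all $a\in\mathbb R$. What you need is a bound on the joint Taylor coefficients, uniform in $(a,s^2)$. One clean way to get it:
\begin{itemize}
\item Take complex $\mu=\mu_0+\delta$ and $\sigma^2=\sigma_0^2+\eta$ with $|\delta|,|\eta|\le\rho:=\underline L/4$. Then $\mathrm{Re}(1/\sigma^2)\ge c>0$ and $|1/\sigma^2|\le C$.
\item Completing the square in $x=a-\mu_0$ shows $|k|\le C'$ uniformly over all $a\in\mathbb R$ and $s^2>0$; for the $\chi^2$ factor this uses $\kappa\ge 2$.
\item Cauchy's estimates then bound the coefficient of $\delta^r\eta^t$ by $C'\rho^{-r-t}$.
\item On cells of half-width $\rho/2$, whose number depends only on $M,\underline L,\overline U$, the tensor truncation at $r,t\le m$ has error $\lesssim 2^{-m}$. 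So $m\asymp\log(1/\varepsilon)$ suffices.
\end{itemize}
This also settles your ``main obstacle'': only the constant-width partition works. Your remark in Step~2 that $\asymp\log(1/\varepsilon)$ cells in $\mu$ would still give a cubic bound is wrong. That partition gives $D\asymp(\log(1/\varepsilon))^3$ atoms and hence a quartic bound.

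With that repair, your argument is a genuinely different proof from the paper's. The paper expands in the \emph{data} variables. On $\mathcal K_\varepsilon$ it truncates the power series of $\exp\{-K(a-\mu)^2/(2\sigma^2)\}$ at order $J$ and of $\exp\{-\kappa s^2/(2\sigma^2)\}$ at order $\overline J$. The result is a finite expansion in powers of $a$ and $s^2$ whose coefficients are the global moments $\int\mu^\ell(\sigma^2)^{-(r+(\kappa+1)/2)}\,H(d\mu,d\sigma^2)$, with $r\le J+\overline J$ and $\ell\le 2\min(r,J)$. Since $J\asymp\overline A_\varepsilon^2$ and $\overline J\asymp\overline B_\varepsilon$ are both $\asymp\log(1/\varepsilon)$, there are $(J+1)^2+\overline J(2J+1)\asymp(\log(1/\varepsilon))^2$ constraints. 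Carath\'eodory plus discretization of atoms and weights then gives the cubic bound, exactly as in your Step~3.

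The two routes trade off as follows:
\begin{itemize}
\item \emph{Paper's route.} The non-product $\sigma$-dependence never arises, because it is absorbed into the moments, and only elementary series with explicit constants are used. The cost is that it needs the localization to $\mathcal K_\varepsilon$, and its moments depend on $\kappa$, so both degrees of freedom must be matched.
\item \emph{Your route.} The parameter-space expansion gives $\kappa$-free moments, so one matching serves both $f_{H,K-p}$ and $f_{H,K-p+1}$. It also gives an approximation uniform over all $(s^2,a)$, so the truncation to $\mathcal K_\varepsilon$ becomes unnecessary. The price is the analytic-continuation (or mixed-derivative) estimate above.
\end{itemize}
Both end with $D\asymp(\log(1/\varepsilon))^2$ atoms and $\log\mathcal N\lesssim D\log(1/\varepsilon)$.
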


\begin{proof}
Fix $\varepsilon>0$. We construct an $\varepsilon$-cover of $\mathcal F_\jt$ under $d_{\mathcal K_\varepsilon,\infty}$. We first approximate the Gaussian component of $f_{H,\kappa}(a,s^2)$,
for $\kappa\in\{K-p,K-p+1\}$, by Taylor expansion in $a$. Uniformly over $(s^2,a)\in\mathcal K_\varepsilon$, we get
\[
\frac{\sqrt K}{\sqrt{2\pi}\sigma}
\exp\!\left(-\frac{K}{2\sigma^2}(a-\mu)^2\right)=
\frac{\sqrt K}{\sqrt{2\pi}}\sum_{i=0}^{J}\frac{K^i}{2^i \sigma^{2i+1} i!}(-1)^i (a-\mu)^{2i}+R_J(a,\mu,\sigma^2),
\]
with
\[
|R_J(a,\mu,\sigma^2)|
\le\frac{2^{J+1}K^{J+3/2}}{\sqrt{2\pi}\,\underline L^{J+3/2}(J+1)!}\wb A_\varepsilon^{2J + 2}, \qquad \mbox{for $a \in [-\wb A_\varepsilon,\wb A_\varepsilon]$.}
\]
Similarly, expanding the $s^2$-dependent component yields
\[
\frac{\kappa^{\kappa/2}}{2^{\kappa/2}\Gamma(\kappa/2)}
\exp\!\left(-\frac{\kappa s^2}{2\sigma^2}\right)
\left(\frac{s^2}{\sigma^2}\right)^{\frac{\kappa}{2}-1}
\frac{1}{\sigma^2}
=\sum_{j=0}^{\wb J}
\frac{(-1)^j \kappa^{j+\kappa/2} (s^2)^{j+\frac{\kappa}{2}-1}}
{\Gamma(\kappa/2) 2^{j+\kappa/2} \sigma^{2j+\kappa} j!}+R_{\wb J}(s^2,\sigma^2),
\]
where, using Stirling's bound $(m!)\ge (m/e)^m$,
\[
|R_{\wb J}(s^2,\sigma^2)|
\lesssim_{\kappa,\underline L,\wb B_\varepsilon}(\wb B_\varepsilon)^{\kappa/2-1}\underline L^{-\kappa/2}\left(\frac{e\,\kappa\,\wb B_\varepsilon}{2\underline L(\wb J+1)}\right)^{\wb J+1},
\quad \mbox{uniformly over $(s^2,\sigma^2)\in(0,\wb B_\varepsilon]\times[\underline L,\wb U]$}.
\]
Therefore, if we approximate $f_{H,\kappa}(a,s^2)$ using
\[
\sum_{i=0}^J\sum_{j=0}^{\wb J}\frac{(a-\mu)^{2i}}{(\sigma^2)^{i+j+\frac{\kappa+1}{2}}},
\]
then the total remainder can be bounded by
\begin{align}
\sup_{\substack{(s^2,a) \in \mathcal K_\varepsilon\\(\sigma^2,\mu) \in [\underline L,\wb U] \times [-M,M]}}
&\left\{|p_{\chi^2}(s^2 \mid \kappa, \sigma^2)||R_J(a,\mu,\sigma^2)|
+ \left|\frac{\sqrt K}{\sqrt{2\pi}\sigma}\right||R_{\wb J}(s^2,\sigma^2)|\right\} \notag\\
&\le  
\frac{\sqrt{K}\kappa^{\kappa/2}}{\sqrt{2\pi}\,2^{\kappa/2}\Gamma(\kappa/2)}
(\wb B_\varepsilon)^{\kappa/2-1}\underline L^{-(\kappa+1)/2}
\left(\frac{e\,\kappa\,\wb B_\varepsilon}{2\underline L(\wb J+1)}\right)^{\wb J+1}
\notag\\
&\qquad
+
\frac{\kappa(\kappa-2)^{\kappa/2-1}e^{-(\kappa-2)/2}}{2^{\kappa/2}\Gamma(\kappa/2)}
\cdot
\frac{2^{J+1}K^{J+3/2}}{\sqrt{2\pi}\,\underline L^{J+5/2}(J+1)!}\wb A_\varepsilon^{2J + 2}.
\end{align}

We choose $J$ and $\wb J$ large enough to ensure the total remainder is bounded by $\varepsilon/8$, i.e.,
\begin{align}
\label{eq:2d_rem_bod}
&
\frac{\sqrt{K}\kappa^{\kappa/2}}{\sqrt{2\pi}\,2^{\kappa/2}\Gamma\!\left(\frac{\kappa}{2}\right)}
(\wb B_\varepsilon)^{\kappa/2-1}\underline L^{-(\kappa+1)/2}
\left( \frac{e\kappa\wb B_\varepsilon}{2\underline L(\wb J + 1)} \right)^{\wb J + 1}
\notag\\
&\qquad
+
\frac{\kappa(\kappa-2)^{\kappa/2-1}e^{-(\kappa-2)/2}}{2^{\kappa/2}\Gamma(\kappa/2)}
\cdot
\frac{2^{J+1}K^{J+3/2}}{\sqrt{2\pi}\,\underline L^{J+5/2}(J+1)!}\wb A_\varepsilon^{2J + 2}
\le \frac{\varepsilon}{8}.
\end{align}

In that direction, observe that if we choose
\[
J
=
4\,\max\left\{
\frac{2e^2 K \wb A_\varepsilon^2}{\underline L},\;
\log\!\left(
\frac{16}{\varepsilon}\cdot
\frac{\kappa(\kappa-2)^{\kappa/2-1}e^{-(\kappa-2)/2}}{2^{\kappa/2}\Gamma(\kappa/2)}
\cdot
\frac{\sqrt K}{\sqrt{2\pi}\,\underline L^{3/2}}
\right)
\right\},
\]
then using $(C/m)^m \le e^{-m}$ for all $m \ge eC$, we can conclude that the second term in \eqref{eq:2d_rem_bod} is bounded above by $\varepsilon/16$.

Similarly, if we choose
\[
\wb J =
4\max_{\kappa \in \{K-p,K-p+1\}}\left\{
\frac{e^2 \kappa \wb B_\varepsilon}{2\underline L},\;
\log\!\left(\frac{16}{\varepsilon}\cdot
\frac{\sqrt K\,\kappa^{\kappa/2}}{\sqrt{2\pi}\,2^{\kappa/2}\Gamma(\kappa/2)}
\cdot
\frac{(\wb B_\varepsilon)^{\kappa/2-1}}{\underline L^{(\kappa+1)/2}}
\right)\right\},
\]
then the first term in \eqref{eq:2d_rem_bod} is bounded by $\varepsilon/16$.

Thus each density $f_{H,\kappa}$ is uniformly approximated on $\mathcal K_\varepsilon$
by a polynomial mixture whose coefficients depend only on the moments
\[
\int\frac{(a-\mu)^{2i}}{(\sigma^2)^{r+\frac{\kappa+1}{2}}}\,H(\dd \mu,\dd\sigma^2),\quad \mbox{where $r=i+j$ and}\;0\le i\le J,\; 0\le j\le \wb J,
\]
for $\kappa\in\{K-p,K-p+1\}$. Expanding the polynomial $(a-\mu)^{2i}$ for all $i \in \{0,\ldots, J\}$, we get
\[
\sum_{\ell=0}^{2i}\binom{2 i}{\ell}(-1)^{\ell}a^{2i-\ell}\int\frac{\mu^\ell}{(\sigma^2)^{r+\frac{\kappa+1}{2}}}\,H(\dd \mu,\dd\sigma^2),\quad \mbox{where $r=i+j$ and}\;0\le i\le J,\; 0\le j\le \wb J.
\]
These distinct moment conditions are indexed by $(r,\ell)$ with $0\le r\le J+\wb J$ and $0\le \ell\le 2\min(r,J)$, giving
\[
\sum_{r=0}^{J+\wb J}\bigl(2\min(r,J)+1\bigr)
= \sum_{r=0}^{J}(2r+1) + \sum_{r=J+1}^{J+\wb J}(2J+1)
= (J+1)^2 + \wb J(2J+1)
\]
such linear constraints.
By Carath\'{e}odory's theorem
(see \citet{saha_guntu} and \citet{ignatiadis2025empirical}),
there exists a discrete measure $\wt H$ for the pair $(\mu,\sigma^2)$ supported on at most
\begin{align}
\label{eq:bound_j_2d}
\wt J := 2[(J+1)^2 + \wb J(2J+1)]+1
\end{align}
atoms in $[-M,M]\times[\underline L,\wb U]$ that matches all these moments.
Consequently,
\[
d_{\mathcal K_\varepsilon,\infty}
\big((f_{H,K-p},f_{H,K-p+1}),(f_{\wt H,K-p},f_{\wt H,K-p+1})\big)\le \frac{\varepsilon}{4}.
\]

It therefore suffices to construct a $\varepsilon/4$ cover for the class of discrete mixtures
with at most $\wt J$ support points in $[-M,M] \times [\underline L,\wb U]$. To achieve this, it is enough to construct  $\varepsilon/12$-covers of $[-M,M]$ and $[\underline L,\wb U]$ respectively, and an a
$\varepsilon/12$-cover of the probability simplex
$\Delta(\wt J)$ under $\ell_1$ distance. Let $\mathcal N_\mu$ and $\mathcal N_{\sigma^2}$ be
$\varepsilon/12$-covers of $[-M,M]$ and $[\underline L,\wb U]$ respectively. Since these intervals are compact,
\[
|\mathcal N_\mu| + |\mathcal N_{\sigma^2}|\lesssim_{M,\underline L,\wb U} \frac{1}{\varepsilon}.
\]
Let $\mathcal N(\Delta(\wt J))$ be a
$\varepsilon/12$-cover of the probability simplex
$\Delta(\wt J)$ under $\ell_1$ distance.
By Lemma~S.1 of \citet{ignatiadis2025empirical},
\[
\log |\mathcal N(\Delta(\wt J))|\lesssim\wt J \log\!\left(\frac{1}{\varepsilon}\right).
\]
One can combine these discretizations to show that $\mathcal F_\jt$ admits an $\varepsilon$-cover in terms of $d_{\mathcal K_\varepsilon,\infty}$ metric whose
cardinality is bounded by
\[
\log |\mathcal N(\mathcal F_\jt,d_{\mathcal K_\varepsilon,\infty},\varepsilon)|\lesssim \wt J \log\!\left(\frac{1}{\varepsilon}\right).
\]
Using~\eqref{eq:bound_j_2d},
\[
\wt J = 2[(J+1)^2 + \wb J(2J+1)]+1\lesssim_{K,p,M,\wb U,\underline L} \left(\log\left(\frac{1}{\varepsilon}\right)\right)^2,
\]
and therefore
\[
\log |\mathcal N| \lesssim_{K,p,M,\wb U,\underline L} \left(\log\left(\frac{1}{\varepsilon}\right)\right)^3.
\]
\end{proof}

With the above lemma, we can prove the following large deviation inequality for the Hellinger distance between the estimated and the true marginal densities of the pair $(S^2,A)$. We note that such a result was also shown recently by~\citet[Appendix C]{ho2025largescale}. For self-containedness, we record our independent proof here as well. 

\begin{theorem}
\label{thm:double_hellinger}
Let $\widehat{H}$ be the estimator defined by \eqref{eq:joint_npmle}. Fix an absolute constant $c_0>2$. Then there exists constants $C > 0$ and $n_0 \in \mathbb N_{\ge 1}$, depending only on $K, p, M, \underline{L}, \overline{U}$, such that 
\[
\mathbb{P}\left[ \mathcal{H}^2\left(f_{\widehat{H},K-p},f_{H,K-p}\right) \geq C \frac{(\log n)^3}{n} \right] \leq \exp\left(-c_0\,(\log n)\right), \quad \text{for all } n \ge n_0.
\]
\end{theorem}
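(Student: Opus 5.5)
We will follow the classical Wong--Shen / Zhang (2009) / Saha--Guntuboyina argument for Hellinger rates of the NPMLE. This is the same template used for Theorem~9 of \citet{ignatiadis2025empirical} and in the proof of Lemma~\ref{lem:hell_limm_trnd_cs} above. It simplifies here because $\wh H$ is an exact NPMLE, so no misspecification cost $\mathfrak R_{n,1}$ appears. Write $W_i=(S_i^2,A_i)$; these are i.i.d.\ with density $f_{H,K-p}$ under Assumption~\ref{assu:compact_2d}. Set $t_n^2:=C(\log n)^3/n$ and $\eta:=n^{-2}$. By exact optimality,
\[
\prod_{i=1}^n f_{\wh H,K-p}(W_i)/f_{H,K-p}(W_i)\ge 1.
\]
The event $\{\mathcal H^2(f_{\wh H,K-p},f_{H,K-p})\ge t_n^2\}$ is therefore contained in the event that some $H'\in\mathcal G_H$ with $\mathcal H(f_{H',K-p},f_{H,K-p})\ge t_n$ has likelihood ratio at least $1$.

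\textbf{Step 1: discretize.} Take $\varepsilon=\eta$ in Lemma~\ref{lem:metric_entropy_2d}. Combined with the tail bound \eqref{eq:character2d_rem_outside_cpt}, this gives a proper $\eta$-net $\{f_{H_j,K-p}\}_{j\le N}$ of $\{f_{H',K-p}: \mathcal H(f_{H',K-p},f_{H,K-p})\ge t_n\}$ in sup norm over all of $\mathbb R_{\ge0}\times\mathbb R$, with
\[
\log N\lesssim(\log(1/\eta))^3\asymp(\log n)^3.
\]
Every $f_{H',K-p}$ in the set is then bounded by $f_{H_j,K-p}+\eta$ for some $j$. Because Hellinger distance is not controlled by sup norm on an unbounded domain, we replace $\eta$ by the envelope $\eta_{\jt}(w)$. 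This envelope equals $\eta$ on a box $[0,\wb B]\times[-\wb A,\wb A]$ with $\wb A,\wb B\asymp\log n$, and decays like $\eta\,\wb B^2\wb A^2/(s^4a^2)$ outside it (the analogue of \eqref{eq:truncation_func}). It is integrable with $\int\eta_{\jt}\lesssim \eta(\log n)^2$. Outside the box we use the uniform Gaussian and $\chi^2$ tail bounds for $f_{H',K-p}$, which is where \eqref{eq:def_a_varepsilon} enters.

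\textbf{Step 2: Chernoff/Hellinger bound.} The ratio at $\wh H$ is at most
\[
\max_j\prod_i (f_{H_j}+\eta_{\jt})(W_i)/f_H(W_i).
\]
We split this with the threshold $e^{-2\gamma_n}$ and $\gamma_n\asymp n t_n^2$, exactly as in the two-term display in the proof of Lemma~\ref{lem:hell_limm_trnd_cs}. For the first term we use Markov's inequality on square roots together with independence:
\[
\mathbb E\Bigl[\prod_i\sqrt{f_{H_j}(W_i)/f_H(W_i)}\Bigr]=\bigl(1-\tfrac12\mathcal H^2(f_{H_j},f_H)\bigr)^n\le e^{-nt_n^2/2}.
\]
A union bound over $N$ elements, plus the term $n\sqrt{2\int\eta_{\jt}}=o(1)$, then bounds the first term by
\[
\exp\bigl(\log N+\gamma_n-nt_n^2/2\bigr).
\]
For the second term, which is the probability that $\prod_i(1+\eta_{\jt}(W_i)/f_{H_j}(W_i))$ is large, we use the deterministic lower bound on $f_{H',K-p}$ on the box and invoke Lemma~S5 of \citet{ignatiadis2025empirical}, adapted to two dimensions. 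The adaptation is that the Gaussian factor in $a$ contributes an extra polynomial-in-$n$ lower bound on the box. This yields a tail of order $e^{-c(\log n)^2}$.

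\textbf{Step 3: choose constants.} We take $C$ large enough that $nt_n^2/2-\log N-\gamma_n\ge c_0\log n$; this works because both $\log N$ and $nt_n^2$ scale as $(\log n)^3$. Adding the probability that some $W_i$ leaves the box, which is at most $n^{-c_0}$ after enlarging $\wb A,\wb B$ by constants, completes the bound. The main obstacle is Step 2's truncation term. The joint density can be extremely small where $S^2$ is near $0$ or $|A|$ is large, and bounding the envelope ratio in these regions requires care with two coordinates simultaneously. We would handle this by defining the envelope as a product of one-dimensional envelopes and reusing the one-dimensional bounds coordinatewise.
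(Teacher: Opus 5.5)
\paragraph{Gap: the second factor in Step~2.}
Your template is the paper's: a proper $n^{-2}$-net with $\log N\lesssim(\log n)^3$ from Lemma~\ref{lem:metric_entropy_2d}, an integrable envelope, Markov on square roots with a union bound, and $t_n^2=C(\log n)^3/n$. The gap is in Step~2, at what you call the main obstacle.

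The bound $L_n(\widehat H)\le\max_j\prod_i(f_{H_j}+\eta_{\jt})(W_i)/f_H(W_i)$ does not follow from a sup-norm net at data points off the box. There the net only gives $f_{\widehat H}\le f_{H_j}+2\eta$, whereas $\eta_{\jt}<\eta$.

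The paper, following \citet{Zhang2009GMLE} and Theorem~9 of \citet{ignatiadis2025empirical}, handles this differently. Off $\mathfrak R_\eta$ it bounds $f_{\widehat H,K-p}$ by its uniform sup $\mathscr R_{K,p}$, then multiplies and divides by $f_{H_j,K-p}+2\eta(\cdot)$. The second factor is therefore $\prod_{i:W_i\notin\mathfrak R_\eta}\mathscr R_{K,p}/\eta(W_i)$, with $\eta(\cdot)$ the envelope of \eqref{eq:def_eta}. This is a ratio of two explicit functions, and Lemma~S5 controls it using only the polynomial decay of the envelope and the tails of $W_i$. No lower bound on any mixture density enters.

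Your factor $\prod_i(1+\eta_{\jt}(W_i)/f_{H_j}(W_i))$ has three problems:
\begin{itemize}
\item It is not the factor that actually arises.
\item It double counts: your first-term correction $n\sqrt{2\int\eta_{\jt}}$ already absorbs $\eta_{\jt}$ via Cauchy--Schwarz.
\item It cannot be controlled as you propose. For $K-p>2$, $f_{H_j,K-p}(s^2,a)\to0$ as $s^2\to0$, so a box starting at $s^2=0$ admits no positive deterministic lower bound. Even with $s^2$ cut away from zero, the lower bound is only $n^{-C}$ with $C$ growing in $\overline U/\underline L$. With $\eta=n^{-2}$, the ratios $\eta/f_{H_j}(W_i)$ need not be small, and $\sum_i\eta/f_{H_j}(W_i)$ need not be $O((\log n)^3)$.
\end{itemize}
The small-density difficulty you flag is an artifact of this decomposition, not of the problem.

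\paragraph{Repair.}
The fix is short, and your Step~3 already contains the ingredient. Since the target is only $n^{-c_0}$, take $\mathfrak R=(0,\overline B]\times[-\overline A,\overline A]$ with $\overline B\asymp\log n$ and $\overline A\asymp\sqrt{\log n}$, chosen so that $\mathbb P(\mathcal E^c)\le n^{-c_0}/2$ for $\mathcal E:=\{W_i\in\mathfrak R\ \text{for all } i\}$. On $\mathcal E\cap\{\mathcal H^2(f_{\widehat H,K-p},f_{H,K-p})\ge t_n^2\}$, your global net gives $1\le L_n(\widehat H)\le\max_j\prod_i(f_{H_j,K-p}+2\eta)(W_i)/f_{H,K-p}(W_i)$. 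By independence and Cauchy--Schwarz on $\mathfrak R$,
\[
\mathbb P\bigl[\mathcal E\cap\{\mathcal H^2(f_{\widehat H,K-p},f_{H,K-p})\ge t_n^2\}\bigr]\le\sum_{j\le N}\Bigl(\int_{\mathfrak R}\sqrt{(f_{H_j,K-p}+2\eta)\,f_{H,K-p}}\Bigr)^{n}\le N\exp\Bigl(-\frac{nt_n^2}{2}+n\sqrt{2\eta\,|\mathfrak R|}\Bigr).
\]
Here $n\sqrt{2\eta|\mathfrak R|}\asymp(\log n)^{3/4}$. This is not $o(1)$ as you wrote, but it is negligible against $nt_n^2\asymp(\log n)^3$. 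No $\gamma_n$-split and no second factor are needed.

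Two alternatives also work. You can enlarge the box so that every $f_{H',K-p}$ lies below $\eta_{\jt}$ off it; then $f_{\widehat H,K-p}\le f_{H_j,K-p}+2\eta_{\jt}$ holds everywhere, and the first factor alone suffices. Or you can keep the envelope and use the paper's second factor described above.
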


\begin{proof}
Let us define the event
\[
A_{n,\varepsilon_n} = \left\{ \mathcal{H}^2\left(f_{\widehat{H},K-p},f_{H,K-p}\right) \ge \varepsilon_n \right\},
\]
for some $\varepsilon_n$ (to be specified later) and aim to characterize $\mathbb{P}[A_{n,\varepsilon_n}]$. Consider the subset of $\mathcal{F}_\jt$:
\[
\mathcal{F}_{\jt,\varepsilon_n} = \left\{ (f_{H',K-p},f_{H',K-p+1}) \in \mathcal{F}_\jt : \mathcal{H}^2\left(f_{H',K-p},f_{H,K-p}\right) \ge \varepsilon_n \right\}.
\]
Fix $\eta = 1/n^2$ and let $\mathcal{C}_{\mathcal{F}_{\jt,\varepsilon_n}} = \{ (f_{H_1,K-p},f_{H_1,K-p+1}), \ldots, (f_{H_{\wb N},K-p},f_{H_{\wb N},K-p+1}) \}$ be a proper $\eta$-cover of $\mathcal{F}_{\jt,\varepsilon_n}$ under the semi-norm $d_{\mathcal{K}_\eta,\infty}$ (defined in \eqref{eq:def_metric_cover}). Let the sub collection of the densities involving degrees of freedom $K-p$ be denoted by $\mathcal{C}^{(K-p)}_{\mathcal{F}_{\jt,\varepsilon_n}}$. In other words,
\[
\mathcal{C}^{(K-p)}_{\mathcal{F}_{\jt,\varepsilon_n}}=\left\{f_{H_1,K-p},\ldots,f_{H_{\wb N},K-p}\right\}.
\]
By the definition of the cover we have for $(f_{K-p},f_{K-p+1}) \in \mathcal{F}_{\jt,\varepsilon_n}$,
\[
\inf_{g \in \mathcal{C}^{(K-p)}_{\mathcal{F}_{\jt,\varepsilon_n}}} \left\|f_{K-p} - g\right\|_{\mathcal{K}_\eta, \infty} \le \eta.
\]
Although Lemma~\ref{lem:metric_entropy_2d} provides an upper bound on the covering number of $\mathcal{F}_\jt$, the resulting cover may be improper for the subset $\mathcal{F}_{\jt,\varepsilon_n}$. However, any such improper cover can be converted into a proper one at the expense of doubling the covering radius. Thus,
\[
\log(\wb N) \lesssim_{M, \underline{L}, \overline{U}} \left( \log \left( \frac{1}{\eta} \right) \right)^3.
\]

Let $\mathfrak{R}_\eta= (0,\wb B_\eta] \times [-\wb A_\eta,\wb A_\eta]$ (where $\wb A_\eta$ and $\wb B_\eta$ is defined as in \eqref{eq:def_a_varepsilon} with $\varepsilon=\eta$), and define, for $z = (z_1, z_2) \in \mathbb R_{\ge 0} \times \mathbb R$,
\begin{align}
\label{eq:def_eta}
\eta(z) &= \eta \cdot \mathds{1}\{ z \in \mathfrak{R}_\eta \} 
+ \frac{\eta A_\eta}{z_2^2} \cdot \mathds{1}\{  z_1 \in (0, B_\eta],\,|z_2| > A_\eta \} \nonumber \\
&\quad + \frac{\eta B_\eta}{z_1^2} \cdot \mathds{1}\{z_1 \notin (0, B_\eta] ,\, |z_2| \le A_\eta \}
+ \frac{\eta A_\eta B_\eta}{z_1^2 z_2^2} \cdot \mathds{1}\{  z_1 > B_\eta ,\,|z_2| > A_\eta\}.
\end{align}
Then,
\begin{align}
\label{eq:int_eta}
\int_{\mathbb{R}_{\ge 0} \times \mathbb{R}} \eta(z)\,\dd z \le 4(A_\eta+1)(B_\eta+1)\eta.
\end{align}
Now, on the event $A_{n,\varepsilon_n}$, there exists $\widehat{\imath} \in \{1, \ldots, \wb N\}$ such that
\[
\| f_{\widehat{H},K-p} - f_{H_{\widehat{\imath}},K-p} \|_{\mathfrak{R}_\eta, \infty} \le 2\eta.
\]
This implies
\[
f_{\widehat{H},K-p}(s^2,a) \le 
\begin{cases}
\max_{j \in [\wb N]} \left\{ f_{H_j,K-p}(s^2,a) + 2\eta \right\} & \text{if } (s^2,a) \in \mathfrak{R}_\eta, \\
\mathscr{R}_{K,p} & \text{otherwise},
\end{cases}
\]
where we define
\[
\mathscr{R}_{K,p} = \frac{\sqrt{K}}{\sqrt{2\pi}\underline L^{3/2}} \cdot \frac{(K-p)^{(K-p)/2}e^{-\frac{K-p-2}{2}}}{2^{(K-p)/2} \, \Gamma\left(\frac{K-p}{2}\right)}.
\]

Define the likelihood ratio
\[
L_n(f_{H',K-p}, f_{H,K-p}) = \prod_{i=1}^{n} \frac{f_{H',K-p}(S_i^2,A_i)}{f_{H,K-p}(S_i^2,A_i)}.
\]
Since $f_{\widehat{H},K-p}$ is the NPMLE, it satisfies $L_n(f_{\widehat{H},K-p}, f_{H,K-p}) \ge 1$. Following the proof of Theorem 9 in \cite{ignatiadis2025empirical}, we obtain on $A_{n,\varepsilon_n}$:
\begin{align*}
L_n(f_{\widehat{H},K-p}, f_{H,K-p}) & \le  \max_{j \in [\wb N]} \left\{ \prod_{i : (S_i^2,A_i) \in \mathfrak{R}_\eta} \frac{f_{H_j,K-p}(S_i^2,A_i) + 2\eta(S^2_i,A_i)}{f_{H,K-p}(S_i^2,A_i)} \times  \prod_{i : (S_i^2,A_i) \notin \mathfrak{R}_\eta}\frac{\mathscr{R}_{K,p}}{f_{H,K-p}(S_i^2,A_i)}\right\}\\
&= \max_{j \in [\wb N]} \Bigg\{ \prod_{i=1}^n \frac{f_{H_j,K-p}(S_i^2,A_i) + 2\,\eta(S^2_i,A_i)}{f_{H,K-p}(S_i^2,A_i)} \\
&~~~~~~~~~~~~~~~~~~~~~~~~~~~~~~~~~\times  \prod_{i : (S_i^2,A_i) \notin \mathfrak{R}_\eta}\frac{\mathscr{R}_{K,p}}{f_{H_j,K-p}(S_i^2,A_i) + 2\,\eta(S^2_i,A_i)}\Bigg\}\\
&\lesssim \max_{j \in [\wb N]} \left\{ \prod_{i=1}^n \frac{f_{H_j,K-p}(S_i^2,A_i) + 2\,\eta(S^2_i,A_i)}{f_{H,K-p}(S_i^2,A_i)} \times  \prod_{i : (S_i^2,A_i) \notin \mathfrak{R}_\eta}\frac{\mathscr{R}_{K,p}}{\eta(S^2_i,A_i)}\right\}\\
&\le \max_{j \in [\wb N]} \left\{ \prod_{i=1}^n \frac{f_{H_j,K-p}(S_i^2,A_i) + 2\,\eta(S^2_i,A_i)}{f_{H,K-p}(S_i^2,A_i)} \right\} \times \prod_{i : (S_i^2,A_i) \notin \mathfrak{R}_\eta} \frac{\mathscr{R}_{K,p}}{\eta(S_i^2,A_i)}.
\end{align*}

Fix $\gamma > 0$. Then,
\begin{align}
\mathbb{P}\left[ \mathcal{H}^2(f_{\widehat{H},K-p}, f_{H,K-p}) \ge \varepsilon_n \right]
&= \mathbb{P}[A_{n,\varepsilon_n}] \nonumber \\
&= \mathbb{P}\left[ A_{n,\varepsilon_n}\, L_n(f_{\widehat{H},K-p}, f_{H,K-p}) \ge 1 \right] \nonumber \\
&\le \mathbb{P}\left[\max_{j \in [\wb N]} \left\{ \prod_{i=1}^n \frac{f_{H_j,K-p}(S_i^2,A_i) + 2\,\eta(S^2_i,A_i)}{f_{H,K-p}(S_i^2,A_i)} \right\} \ge e^{-2\gamma} \right] \nonumber \\
&\quad + \mathbb{P}\left[ \prod_{i : (S_i^2,A_i) \notin \mathfrak{R}_\eta} \frac{\mathscr{R}_{K,p}}{\eta(S_i^2,A_i)} \ge e^{2\gamma} \right].
\end{align}
Following \cite{ignatiadis2025empirical}, using \eqref{eq:int_eta} we obtain 
\[
\mathbb{P}\left[ \max_{j \in [\wb N]} \left\{ \prod_{i=1}^n \frac{f_{H_j,K-p}(S_i^2,A_i) + 2\,\eta(S^2_i,A_i)}{f_{H,K-p}(S_i^2,A_i)} \right\} \ge e^{-2\gamma} \right] 
\le \exp\left( -n \varepsilon_n + n\sqrt{8\eta \wb A_\eta \wb B_\eta} + \gamma + \log \wb N \right),
\]
for all large $n \in \mathbb N_{\ge 0}$.
Define $\varepsilon_n = C (\log n)^3 / n$ and $\gamma = n t^2 \varepsilon_n$. Using the independence of $S_i^2$ and $A_i$, together with the arguments of \cite[Lemma~S.5]{ignatiadis2025empirical} and \cite[Theorem~1]{Zhang2009GMLE}, we can choose constants $C,t>0$ and $n_0 \in \mathbb N_{\ge 1}$ such that for all $n \ge n_0$,
\[
\mathbb{P}\left[ \prod_{i \in \mathcal{I}^c} \frac{\mathscr{R}_{K,p}}{\eta(S_i^2,A_i)} \ge e^{2\gamma} \right]
\le \exp\!\left(-c_0 \log n\right),
\]
for some constant $c_0>2$. Furthermore, using Lemma~\ref{lem:metric_entropy_2d}, we may choose $C>0$ sufficiently large so that
\[
\mathbb{P}\left[ \max_{j \in [\wb N]} \left\{ \prod_{i=1}^n \frac{f_{H_j,K-p}(S_i^2,A_i) + 2\,\eta(S^2_i,A_i)}{f_{H,K-p}(S_i^2,A_i)} \right\} \ge e^{-2\gamma} \right]
\le \exp\left(-c_0 (\log n)\right).
\]
Combining the two preceding bounds completes the proof.

\end{proof}

\subsection{Proof of Proposition \ref{prop:asymp_p_val}}
To control the difference between the true $p$-values $\{P^\jt_i\}$ (defined in \eqref{eq:2d_pb_pvalue_general_case}) and the NPMLE-based $p$-values $\{\wh P^\jt_i\}$ (obtained by replacing $H$ by the NPMLE $\wh H$), we shall combine the Eddington-Tweedie type representation of $\{P^\jt_i\}$ established in Theorem \ref{thm:tweedie_2_d}, along with the control on the estimation error of $f_{H,K-p}$ by $f_{\wh H,K-p}$ in terms of Hellinger distance quantified in Theorem \ref{thm:double_hellinger}. However, recall that the Eddington--Tweedie-type formula for the $p$-value involves both the densities $f_{H,K-p}$ and $f_{H,K-p+1}$. While the construction of $\widehat{H}$ based on the density $f_{\widehat{H},K-p}$ ensures that $f_{\widehat{H},K-p}$ and $f_{H,K-p}$ are close in Hellinger distance, and consequently also in the $L_2$
distance, this conclusion does not automatically extend to the pair
$(f_{\widehat H,K-p+1}, f_{H,K-p+1})$. Here, the $L_2$ distance between two densities
$f$ and $g$ is defined as
\[
\|f-g\|_{L_2}
:= \left[ \int_{\mathbb R_{\ge 0} \times \mathbb R}
\bigl(f(s^2,a)-g(s^2,a)\bigr)^2\,\dd s^2\,\dd a \right]^{1/2}.
\]
Nevertheless, under the assumptions of our model, it is possible to transfer the
Hellinger control at level $K-p$ to obtain an $L_2$ bound at level $K-p+1$. In
particular, the following lemma bounds the $L_2$ distance between
$f_{H_1,K-p+1}$ and $f_{H_2,K-p+1}$, for any priors $H_1,H_2\in\mathcal G_H$, in terms
of the Hellinger distance between the corresponding densities
$f_{H_1,K-p}$ and $f_{H_2,K-p}$.

\begin{lemm}
\label{lem:l_2_one_dof}
Consider the $L_2$ distance between the densities $f_{H_1,K-p+1}$ and $f_{H_2,K-p+1}$ where the priors $H_1,H_2 \in \mathcal G_H$. Then we have
\begin{align}
\label{eq:l_2_dist_density}
&\left\| f_{H_1,K-p+1} - f_{H_2,K-p+1} \right\|^2_{L_2}\nonumber\\
&\hskip 5em\lesssim_{K, p, M, \underline{L}, \overline{U}}
\left(1+\left| \log \mathcal{H} \left( f_{H_1,K-p}, f_{H_2,K-p} \right) \right|\right)
\cdot \mathcal{H}^2 \left( f_{H_1,K-p}, f_{H_2,K-p} \right).
\end{align}
\end{lemm}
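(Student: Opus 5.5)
The argument compares the two densities at the level of the mixing measures, using Fourier analysis in the location variable $a$ and the scale structure in $s^2$. Write $\Delta_\kappa := f_{H_1,\kappa}-f_{H_2,\kappa}$ for $\kappa\in\{K-p,K-p+1\}$. Then $\Delta_\kappa(s^2,a)=\int p_\kappa(s^2,a\mid\mu,\sigma^2)\,(H_1-H_2)(\dd\mu,\dd\sigma^2)$.

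The key structural observation is that, for fixed $\sigma^2$, the kernels at degrees of freedom $K-p$ and $K-p+1$ differ only through a power of $s^2/\sigma^2$, a constant, and the exponent in $\exp\{-\kappa s^2/(2\sigma^2)\}$. The plan is therefore to express $p_{K-p+1}$ as a smoothing operator applied to $p_{K-p}$, acting only in the $s^2$ direction and not depending on $\sigma^2$ except through quantities bounded on $[\underline L,\wb U]$.

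A cleaner route, in the spirit of Lemma~S.10 of \citet{ignatiadis2025empirical} and the Saha--Guntuboyina argument, is to compare both densities against a common bounded dominating kernel. Concretely, I would split $\mathbb R_{\ge0}\times\mathbb R$ into the compact set $\mathcal K_\varepsilon$ of \eqref{eq:def_a_varepsilon} and its complement, with $\varepsilon\asymp\mathcal H(f_{H_1,K-p},f_{H_2,K-p})$. The complement is controlled by the uniform tail bound \eqref{eq:character2d_rem_outside_cpt} together with integrable Gaussian and $\chi^2$ tails, which gives a contribution $\lesssim \varepsilon^2$ up to logarithms.

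On $\mathcal K_\varepsilon$ the main step is to bound $\|\Delta_{K-p+1}\|_{L_2(\mathcal K_\varepsilon)}^2$ by $\|\Delta_{K-p}\|_{L_2}^2$ times a factor of order $|\mathcal K_\varepsilon|\asymp \wb A_\varepsilon\wb B_\varepsilon\lesssim 1+|\log\varepsilon|$ (after squaring with the right powers). The $L_2$ norm of $\Delta_{K-p}$ is in turn controlled by $\mathcal H^2$, because the densities are uniformly bounded by $\mathscr R_{K,p}$: indeed $(f-g)^2\le(\sqrt f-\sqrt g)^2(\sqrt f+\sqrt g)^2\le 4\mathscr R_{K,p}(\sqrt f-\sqrt g)^2$.

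The transfer from level $K-p$ to level $K-p+1$ uses the Taylor/moment representation from the proof of Lemma~\ref{lem:metric_entropy_2d}. Both densities are, up to error $\varepsilon$ on $\mathcal K_\varepsilon$, linear functionals of the same moments $\int \mu^\ell(\sigma^2)^{-r-(\kappa+1)/2}\,\dd(H_1-H_2)$. Taking the Fourier transform in $a$ turns the Gaussian factor into $e^{-\omega^2\sigma^2/(2K)}e^{i\omega\mu}$. For each frequency $\omega$ and each $s^2$, the $K-p+1$ integrand equals the $K-p$ integrand multiplied by the factor $c\,(s^2/\sigma^2)^{1/2}e^{-s^2/(2\sigma^2)}\sigma^{-1}$. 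This factor is a smooth function of $\sigma^2$ on $[\underline L,\wb U]$, and it can be expanded in a series whose coefficients are bounded polynomially in $s^2\le\wb B_\varepsilon$. Truncating this series at order $\asymp\log(1/\varepsilon)$ and applying Plancherel yields the claimed bound, with the logarithmic factor coming from $\wb B_\varepsilon$ and the series length.

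The main obstacle is this last step: dividing by the $(K-p)$-kernel is not a bounded operation on signed measures, so the claim has to be realised via a genuine smoothing identity rather than pointwise. What I would try first is the following: for fixed $\sigma$, write $p_{\chi^2}(\cdot\mid K-p+1,\sigma^2)$ as a convolution of $p_{\chi^2}(\cdot\mid K-p,\sigma^2)$ with a $\chi^2_1$ kernel, up to the rescalings $\kappa s^2$. The $\chi^2_1$ factor depends on $\sigma^2$, so I would absorb that dependence by a Mellin transform in $s^2$: scale mixtures become multiplicative in the Mellin variable, and Gamma-function ratios bounded on vertical lines give the $L_2$ comparison. If the Mellin route fails, I would fall back to the truncated moment-matching argument above, accepting the extra $|\log\mathcal H|$ factor that already appears in the statement.
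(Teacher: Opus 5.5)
Your closing plan, a Mellin transform in $s^2$, is the right mechanism and is essentially how the paper argues. Under that transform the $\chi^2_\kappa/\kappa$ scale mixtures factor, and the degrees of freedom enter only through a $\sigma$-free multiplier. The paper uses a joint Mellin--Fourier transform in $(s^2,a)$. A Mellin transform in $s^2$ for each fixed $a$ factors the same way: $s^2\mapsto f_{H,\kappa}(s^2,a)$ is then a scale mixture against the $\kappa$-free measure obtained by weighting $H$ with the $\mathrm N(\mu,\sigma^2/K)$ density at $a$.

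But the step you use to close the argument is false. The multiplier is
\[
r(t)=\frac{\mathbb E\bigl[(\chi^2_{K-p+1}/(K-p+1))^{c-1+it}\bigr]}{\mathbb E\bigl[(\chi^2_{K-p}/(K-p))^{c-1+it}\bigr]}.
\]
Up to factors of bounded modulus this is $\Gamma(x+\tfrac{1}{2}+it)/\Gamma(x+it)$ with $x=(K-p)/2+c-1$, and by Stirling $|r(t)|\asymp|t|^{1/2}$ as $|t|\to\infty$. So the Gamma ratio is \emph{not} bounded on the vertical line, and Plancherel does not give $\|\Delta_{K-p+1}\|_{L_2}\lesssim\|\Delta_{K-p}\|_{L_2}$. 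If it did, the lemma would hold with no logarithm at all, which should have been a warning sign.

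The missing idea is a frequency cut-off.
\begin{itemize}
\item Take $c=1/2$, so that Mellin--Plancherel returns the unweighted $L_2$ norm. This is admissible because $c+(K-p)/2-1>0$ when $K-p\ge 2$.
\item Split at $|t|=T_0$. On $|t|<T_0$, use $|r|^2\lesssim T_0$; this gives a contribution $T_0\,\rho$ with $\rho:=\|\Delta_{K-p}\|_{L_2}^2$.
\item On $|t|\ge T_0$, do not compare with level $K-p$ at all. Bound the two level-$(K-p+1)$ transforms separately, using that $|(\sigma^2)^{c-1+it}|=(\sigma^2)^{c-1}$ is bounded on $[\underline L,\overline U]$ and Stirling for the $\chi^2_{K-p+1}$ moment. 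This gives decay $|t|^{c+(K-p)/2-1}e^{-\pi|t|/2}$ uniformly over $\mathcal G_H$, hence a tail contribution $\lesssim T_0^{K-p-1}e^{-\pi T_0}$.
\item Choose $T_0\asymp 1+|\log\rho|$. This yields $\|\Delta_{K-p+1}\|_{L_2}^2\lesssim(1+|\log\rho|)\rho$. Your bound $\rho\lesssim\mathcal H^2$ and the monotonicity of $x\mapsto x(1+|\log x|)$ then finish the proof.
\end{itemize}
This balancing is where the $|\log\mathcal H|$ factor comes from, not $|\mathcal K_\varepsilon|$ or a series length.

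Your moment-matching fallback does not supply this step.
\begin{itemize}
\item The argument of Lemma~\ref{lem:metric_entropy_2d} runs in the direction ``moments close $\Rightarrow$ densities close.'' You need the converse. Small $\|\Delta_{K-p}\|_{L_2}$ gives no usable control of the individual generalized moments of $H_1-H_2$, because the map from moments to densities is not boundedly invertible.
\item The level-$(K-p)$ and level-$(K-p+1)$ expansions involve powers of $\sigma^2$ that differ by $1/2$, so they are not functionals of the same moments.
\end{itemize}
The claimed transfer with a factor $|\mathcal K_\varepsilon|$ is therefore unsupported.
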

Using the foregoing lemma, it suffices to characterize the convergence rate of 
\(
\mathcal{H}^2 \left( f_{\wh H,K-p}, f_{H,K-p} \right)
\)
which is established in Theorem~\ref{thm:double_hellinger}, in order to derive a corresponding convergence rate for $f_{\widehat H,K-p+1}$ toward $f_{H,K-p+1}$ in the $L_2$ metric defined in \eqref{eq:l_2_dist_density}. The proof of this lemma involves nontrivial arguments based on integral transforms. A similar result was established in Lemma 12 of \citet{ignatiadis2025empirical} in the context of univariate densities involving $S^2$, where the Mellin transform of the marginal density was employed. However, this approach does not directly apply in our setting, as the joint density of $(S^2,A)$ is not supported on the positive orthant. Instead, we introduce a new integral transform that combines the Mellin transform (for the $S^2$ coordinate) and the Fourier transform (for the $A$ coordinate), and leverage its properties to derive the desired inequality.

\begin{proof}[Proof of Proposition \ref{prop:asymp_p_val}]
Let us begin by observing that for any joint prior $H' \in \mathcal G_H$, we have
\begin{align}
    \PjtFun(z,S^2_i,A_i;H') \ge 2 \left(1-\Phi(|z| / \nu \underline L^{1/2})\right),
\end{align}
for $\PjtFun(\cdot)$ defined in the proof of Lemma~\ref{lem:valid_oracle_p_vals}
Then for $|z| \le z_{1 - \zeta/2}$, i.e., $|z| \le \underline z:=\nu\underline L^{1/2} z_{1 - \zeta/2}$, it holds that:
\[
\PjtFun(z,S^2_i,A_i;H') \ge 2 \wb{\Phi}(z_{1 - \zeta/2}) = \zeta, 
\]
where $\wb{\Phi}(\cdot)$ is the survival function of the standard Gaussian distribution. Since both the true mixing measure $H$ and the estimated mixing measure $\wh H$ belong to $\mathcal G_H$, therefore we can conclude that
\begin{align}
  \PjtFun(z,S^2_i,A_i;H) \ge \zeta, \qquad \mbox{and} \qquad   \PjtFun(z,S^2_i,A_i;\wh H) \ge \zeta.
\end{align}
Consequently, for all $|z| \le z_{1 - \zeta/2}$, we have
\[
\PjtFun(z,S^2_i,A_i;\wh H) \wedge \zeta - \PjtFun(z,S^2_i,A_i;H) \wedge \zeta = \zeta - \zeta = 0, \quad \mbox{for $|z| \le \nu \underline L^{1/2} z_{1 - \zeta/2}$.}
\]
Furthermore, since the projection of a number in $[0,1]$ to the set $[0,\zeta]$ we have
\begin{align}
\left|\PjtFun(z,S^2_i,A_i;\wh H) \wedge \zeta - \PjtFun(z,S^2_i,A_i;H) \wedge \zeta\right| & \le \left|\PjtFun(z,S^2_i,A_i;\wh H) - \PjtFun(z,S^2_i,A_i;H)\right|.
\end{align}
Combining the two above relations and letting $\widetilde{Z}_i := \operatorname{sign}(Z_i) \cdot (|Z_i| \vee \underline z)$, we find the following:
\begin{align}
    \left|\PjtFun(Z_i,S^2_i,A_i;\wh H) \wedge \zeta - \PjtFun(Z_i,S^2_i,A_i;H) \wedge \zeta\right| 
    &\le \left|\PjtFun(\widetilde{Z}_i,S^2_i,A_i;\wh H) - \PjtFun(\widetilde{Z}_i,S^2_i,A_i;H)\right| \\
    &\le \sup_{|z| \ge \underline z} \left|\PjtFun(z,S^2_i,A_i;\wh H) - \PjtFun(z,S^2_i,A_i;H)\right|.
\end{align}
Therefore, it is enough to control
\[
\frac{1}{n}\sum_{i=1}^n\mathbb E\left[\sup_{|z| \ge \underline z} \left|\PjtFun(z,S^2_i,A_i;\wh H) - \PjtFun(z,S^2_i,A_i;H)\right|\right].
\]

Let $\mathcal A$ be the event:
\begin{equation}
\label{eq:def_cal_A_2d}
\mathcal A := \left\{ \mathcal H^2\left(f_{\wh H,K-p}, f_{H,K-p}\right) < C \frac{(\log n)^3}{n} \right\}, \qquad \mbox{for $C>0$ defined in Theorem \ref{thm:double_hellinger}.}
\end{equation}
Consider the following decomposition.
\begin{align}
\label{eq:show_exp_later_comp}
&\mathbb E\left[\sup_{|z| \ge \underline z} \left|\PjtFun(z,S^2_i,A_i;\wh H) - \PjtFun(z,S^2_i,A_i;H)\right|\right]\\
&\le \mathbb{P}[\mathcal A^c] + \mathbb{E} \left[\sup_{|z| \ge \underline z} \left|\PjtFun(z,S^2_i,A_i;\wh H) - \PjtFun(z,S^2_i,A_i;H)\right| \cdot \mathds{1}(\mathcal A) \right].
\end{align}
In the above decomposition, we have used the fact that $\PjtFun(z,S^2_i,A_i;\wh H), \PjtFun(z,S^2_i,A_i;H) \in [0, 1]$. Next, for any distribution $H'$ supported on $\mathbb R \times \mathbb R_{\ge 0}$ and any $z \in \mathbb{R}$, $s^2 > 0$, $a \in \mathbb R$, let us write:
\begin{align}
\label{eq:def_n_i_z}
N(z, s^2, a; H') &:= \PjtFun(z,s^2,a;H') \cdot f_{H',K-p}(s^2,a)\\
N_i(z, H') &:= N(z, S^2_i, A_i; H'), \\
D_i(H') &:= f_{H',K-p}(S^2_i, A_i).
\end{align}
By these definitions, it holds that $\PjtFun(z,S^2_i,A_i;\wh H) = N_i(z, \widehat{H})/D_i(\widehat{H})$. Let us define $\widehat{H}_\star:=(\widehat{H} + H)/2$.
Then:
\[
\begin{aligned}
&\left|\PjtFun(z,S^2_i,A_i;\wh H) - \PjtFun(z,S^2_i,A_i;H)\right|\\
&= \left| \frac{N_i(z, {H})}{D_i({H})} - \frac{N_i(z, \widehat{H})}{D_i(\widehat{H})} \right| \\
&= \left| \frac{N_i(z, {H})}{D_i({H})}-\frac{N_i(z, {H})}{D_i(\widehat{H}_\star)}+\frac{N_i(z, {H})}{D_i(\widehat{H}_\star)} - \frac{N_i(z, \wh H)}{D_i(\wh H_\star)} + \frac{N_i(z, \wh H)}{D_i(\wh H_\star)} - \frac{N_i(z, \widehat{H})}{D_i(\widehat{H})} \right| \\
&\le \frac{N_i(z, {H})}{D_i({H})}\frac{|D_i(\widehat{H}_\star)-D_i({H})|}{D_i(\widehat{H}_\star)} + \frac{|N_i(z, {H})-N_i(z, \wh H)|}{D_i(\widehat{H}_\star)}+\frac{N_i(z, \wh{H})}{D_i(\wh{H})}\frac{|D_i(\widehat{H}_\star)-D_i(\wh{H})|}{D_i(\widehat{H}_\star)}\\
&\le \frac{|N_i(z, {H})-N_i(z, \wh H)|}{D_i(\widehat{H}_\star)}+\frac{|D_i(\widehat{H})-D_i({H})|}{D_i(\widehat{H}_\star)}
\end{aligned}
\]
In the last step we have used two facts: first, it holds that $N_i(z, H') / D_i(H') \in [0,1]$ for all $H' \in \mathcal G_H$ (since they correspond to conditional $p$-values), and second, the map $H' \mapsto D_i(H')$ is linear, which implies that:
\[
D_i(H) - D_i(\widehat{H}_\star) = \frac{D_i(H) - D_i(\widehat{H})}{2}, \quad
D_i(\widehat{H}_\star) - D_i(\widehat{H}) = \frac{D_i(\wh{H}_\star) - D_i(\widehat{H})}{2}.
\]
Combining the above results, we have
\begin{align}
\label{eq:final_master_eq_2d}
    &\frac{1}{n}\sum_{i=1}^n\mathbb E\left[\sup_{|z| \ge \underline z} \left|\PjtFun(z,S^2_i,A_i;\wh H) - \PjtFun(z,S^2_i,A_i;H)\right|\right]\\
&\le \mathbb{P}[\mathcal A^c] + \frac{1}{n}\sum_{i=1}^n\mathbb{E} \left[ \sup_{|z| \ge \underline z} \left|\PjtFun(z,S^2_i,A_i;\wh H) - \PjtFun(z,S^2_i,A_i;H)\right| \cdot \mathds{1}(\mathcal A) \right]\\
& \le  \mathbb{P}[\mathcal A^c] + \frac{1}{n}\sum_{i=1}^n\mathbb{E} \left[ \sup_{|z| \ge \underline z} \frac{|N_i(z, {H})-N_i(z, \wh H)|}{D_i(\widehat{H}_\star)} \cdot \mathds{1}(\mathcal A) \right]+\frac{1}{n}\sum_{i=1}^n\mathbb{E} \left[ \frac{|D_i(\widehat{H})-D_i({H})|}{D_i(\widehat{H}_\star)} \cdot \mathds{1}(\mathcal A) \right]\\
& \le \mathbb{P}[\mathcal A^c] + T_1 +T_2,
\end{align}
where
\begin{align}
&T_1=\frac{1}{n}\sum_{i=1}^n\mathbb{E} \left[ \sup_{|z| \ge \underline z} \frac{|N_i(z, {H})-N_i(z, \wh H)|}{D_i(\widehat{H}_\star)} \cdot \mathds{1}(\mathcal A) \right], \quad \mbox{and} \\
&\quad T_2=\frac{1}{n}\sum_{i=1}^n\mathbb{E} \left[ \frac{|D_i(\widehat{H})-D_i({H})|}{D_i(\widehat{H}_\star)} \cdot \mathds{1}(\mathcal A) \right].
\end{align}
By Theorem \ref{thm:double_hellinger}, we have
\(
\mathbb{P}[\mathcal A^c] \le  \exp\left(-c_0 (\log n)\right), 
\)
where $c_0>2$.
Now, we consider the following two lemmas about the terms $T_1$ and $T_2$.
\begin{lemm}
\label{lemm:lem_ratio_bound_2d}
    For all $n \in \mathbb N_{\ge 1}$, we have
    \begin{align}
    \label{eq:main_n_term}
        \frac{1}{n}\sum_{i=1}^n\mathbb{E} \left[ \sup_{|z| \ge \underline z} \frac{|N_i(z, {H})-N_i(z, \wh H)|}{D_i(\widehat{H}_\star)} \cdot \mathds{1}(\mathcal A) \right] \lesssim_{K,p,M,\wb U,\underline L,\underline z} \frac{(\log n)^{13/4}}{\sqrt{n}},
    \end{align}
    where the set $\mathcal A$ is defined in \eqref{eq:def_cal_A_2d}.
\end{lemm}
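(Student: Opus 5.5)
The argument follows the template of Lemma S9 in \citet{ignatiadis2025empirical}, adapted to the bivariate marginal. By Theorem~\ref{thm:tweedie_2_d}, for $|z|\ge\underline z$ we can write
\[
N(z,s^2,a;H')=C_{K,p}(s^2)^{\frac{K-p}{2}-1}\int_{a_z(s^2)}^\infty \frac{(t^2)^{-\frac{K-p-1}{2}}f_{H',K-p+1}(t^2,a)}{\sqrt{(K-p+1)t^2-(K-p)s^2}}\,\dd t^2,
\qquad a_z(s^2)=\frac{(K-p)s^2+z^2/\nu^2}{K-p+1}.
\]
Hence $N_i(z,H)-N_i(z,\wh H)$ is a linear functional of $g:=f_{H,K-p+1}-f_{\wh H,K-p+1}$, evaluated at the fixed slice $a=A_i$. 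Since $|z|\ge\underline z$, we have $(K-p+1)t^2-(K-p)s^2\ge \underline z^2/\nu^2$ on the domain of integration, so the square-root singularity is harmless. Likewise $t^2\ge \underline z^2/\{\nu^2(K-p+1)\}$, so the factor $(t^2)^{-(K-p-1)/2}$ is bounded. Taking the supremum over $|z|\ge\underline z$ only enlarges the domain, so
\[
\sup_{|z|\ge\underline z}|N_i(z,H)-N_i(z,\wh H)|\lesssim (S_i^2)^{\frac{K-p}{2}-1}\int_{c}^\infty |g(t^2,A_i)|\,\dd t^2,
\]
with $c>0$ depending only on $\underline z,\nu,K,p$.

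I would then truncate. Let $\mathcal T_n=\{S_i^2\le B_n,\ |A_i|\le \overline A_n\}$ with $B_n,\overline A_n\asymp\log n$. Lemma~\ref{lem:prop_chi_sq}(3) and Gaussian tails (using $|\mu|\le M$, $\sigma^2\le\overline U$) give $\mathbb P(\mathcal T_n^c)\lesssim n^{-2}$, and the ratio is at most $1$ after the usual reduction, so this contributes $O(n^{-2})$. On $\mathcal T_n$ I split the $t^2$-integral at $B_n$ as well. The tail $\int_{B_n}^\infty |g|$ is $O(n^{-2})$ uniformly by the bound \eqref{eq:character2d_rem_outside_cpt}. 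On $[c,B_n]$, Cauchy–Schwarz gives $\int_c^{B_n}|g(t^2,A_i)|\,\dd t^2\le B_n^{1/2}\|g(\cdot,A_i)\|_{L_2(\dd t^2)}$.

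Next I take the expectation over the $i$-th observation, using $D_i(\wh H_\star)\ge D_i(H)/2$. The expectation of $h(S_i^2,A_i)/D_i(\wh H_\star)$ is then bounded by $2\int h(s^2,a)\,\dd s^2\,\dd a$, which removes the density from the denominator. This step is complicated by the dependence of $\wh H$ on $(S_i^2,A_i)$. I would handle it as in \citet{ignatiadis2025empirical}: on $\mathcal A$ the class of admissible $\wh H$ is covered by the metric-entropy bound (Lemma~\ref{lem:metric_entropy_2d}) at resolution $n^{-2}$, so one replaces $\wh H$ by a deterministic cover element and pays $\log\mathcal N\lesssim(\log n)^3$ via a union/max bound. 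Alternatively one passes to the average $\frac1n\sum_i$, which is what the statement asks for, and bounds it by a supremum over the cover.

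For the main term I integrate over $(s^2,a)\in\mathcal T_n$:
\[
\int_{|a|\le\overline A_n}\int_0^{B_n}(s^2)^{\frac{K-p}{2}-1}B_n^{1/2}\|g(\cdot,a)\|_{L_2}\,\dd s^2\,\dd a\lesssim B_n^{\frac{K-p}{2}+\frac12}\,\overline A_n^{1/2}\,\|g\|_{L_2},
\]
using Cauchy–Schwarz in $a$. Here $K-p\ge2$ keeps the $s^2$ power integrable at $0$. The polylogarithmic factors in $B_n$ should be reduced: working with the actual decay of $(s^2)^{\frac{K-p}{2}-1}$ against the density should leave only a $\log n$ factor (the exact $s^2$-bookkeeping follows \citealp{ignatiadis2025empirical}). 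Lemma~\ref{lem:l_2_one_dof} together with the event $\mathcal A$ gives
\[
\|g\|_{L_2}^2\lesssim (1+|\log\mathcal H|)\,\mathcal H^2\lesssim \log n\cdot\frac{(\log n)^3}{n},
\]
so $\|g\|_{L_2}\lesssim(\log n)^2/\sqrt n$. Collecting the remaining factors ($\overline A_n^{1/2}\asymp(\log n)^{1/2}$, $B_n^{1/2}$ from the $t^2$ Cauchy–Schwarz, and a further $(\log n)^{1/4}$) is meant to reach $(\log n)^{13/4}/\sqrt n$.

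The main obstacle is the dependence issue in the third paragraph: $\wh H$ is built from the same $(S_i^2,A_i)$, so the change-of-measure identity is not literally valid. I would attempt the cover-based decoupling, carrying the extra entropy factor, and reconcile the exact power of $\log n$ by careful accounting. I am least confident that the exponent comes out to exactly $13/4$.
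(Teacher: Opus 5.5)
Your skeleton matches the paper's. It uses the Tweedie representation of Theorem~\ref{thm:tweedie_2_d}, truncation of $(S_i^2,A_i)$, a split of the $t^2$-integral, and decoupling of $\wh H$ through a cover of the Hellinger ball $\{H':\mathcal H^2(f_{H',K-p},f_{H,K-p})\le C(\log n)^3/n\}$. It then finishes with the change of measure through $D_i(H)/2$, Cauchy--Schwarz, and Lemma~\ref{lem:l_2_one_dof}. However, two steps do not deliver the lemma as you have written them.

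\textbf{Decoupling.} Replacing $\wh H$ by a cover element and paying $\log\mathcal N$ only works if, for each \emph{fixed} cover element $H_k$, the average $V_k:=\frac1n\sum_i\sup_{|z|\ge\underline z}|N_i(z,H_k)-N_i(z,H)|\,\mathds 1(|A_i|\le\mathtt A_n,S_i^2\le\mathtt B_n)/D_i(\bar H_k)$ concentrates around its mean at rate $n^{-1/2}$. Here the $t^2$-integral is truncated. A plain union bound over $N=e^{O((\log n)^3)}$ elements is useless. The paper obtains concentration by choosing the denominator $\bar H_k:=(H_k+H)/2$, rather than $\wh H_\star$ or $H$ alone. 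Since $N_i(z,H')\le D_i(H')$, every summand is then at most $2$. The bounded-differences inequality gives $\mathbb E\max_k|V_k-\mathbb E V_k|\lesssim\sqrt{\log N/n}\lesssim(\log n)^{3/2}/\sqrt n$. Only after this is the change of measure $\mathbb E[h(S_i^2,A_i)/D_i(\bar H_k)]\le 2\int h$ legitimate, because $H_k$ is deterministic. You must also account for the swaps $D_i(\wh H_\star)\to D_i(\bar H_j)$ and $f_{\wh H,\cdot}\to f_{H_j,\cdot}$ on $\mathcal A$. These cost $O((\log n)^2/n)$ after change of measure. Finally, Lemma~\ref{lem:l_2_one_dof} must be applied to the cover elements $H_k$, not to $\wh H$; this is why the cover must be proper for the Hellinger ball.

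\textbf{Exponent.} Your chain gives $B_n^{(K-p)/2+1/2}\,\overline A_n^{1/2}\,\|g\|_{L_2}\asymp(\log n)^{(K-p)/2+3}/\sqrt n$. For $K-p\ge 2$ this is never $(\log n)^{13/4}/\sqrt n$, and the ``further $(\log n)^{1/4}$'' is not derived.
\begin{itemize}
\item Bounding $(t^2)^{-(K-p-1)/2}$ by a constant discards exactly the decay that cancels the prefactor $(s^2)^{(K-p)/2-1}$.
\item The crude step $\int_c^{B_n}|g|\le B_n^{1/2}\|g(\cdot,a)\|$ wastes another $(\log n)^{1/2}$.
\end{itemize}
Instead, apply Cauchy--Schwarz jointly in $(t^2,a)$ against the kernel $k_{s^2}(t^2):=(t^2)^{-(K-p-1)/2}\{(K-p+1)t^2-(K-p)s^2\}^{-1/2}\mathds 1\{t^2\ge a_{\underline z}(s^2)\}$. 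It satisfies $\int k_{s^2}^2\,\mathrm{d}t^2\lesssim \underline z^{-2}\{(K-p)s^2+\underline z^2/\nu^2\}^{-(K-p-2)}$; for $K-p=2$ a logarithmic variant is still $\lesssim\nu^2/\underline z^2$. Consequently:
\begin{itemize}
\item $(s^2)^{(K-p)/2-1}\|k_{s^2}\|_{L_2}=O(1)$ uniformly in $s^2$.
\item The $s^2$-integral over $(0,\mathtt B_n]$ costs $\mathtt B_n\asymp\log n$.
\item The $a$-integral costs $\mathtt A_n^{1/2}\asymp(\log n)^{1/4}$, because Gaussian tails only need $\mathtt A_n\asymp\sqrt{\log n}$; you took $\overline A_n\asymp\log n$.
\end{itemize}
Combined with $\|f_{H_k,K-p+1}-f_{H,K-p+1}\|_{L_2}\lesssim\mathcal H|\log\mathcal H^2|^{1/2}\lesssim(\log n)^2/\sqrt n$, this gives exactly $(\log n)^{1+1/4+2}/\sqrt n=(\log n)^{13/4}/\sqrt n$.

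A smaller point: the tail $\int_{B_n}^\infty|g|$ needs the uniform exponential decay of $f_{H',K-p+1}(t^2,a)$ in $t^2$. The pointwise bound \eqref{eq:character2d_rem_outside_cpt} does not control an integral over an unbounded range.
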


\begin{lemm}
\label{lemm:density_ratio_bound_2d}
    For all $n \in \mathbb N_{\ge 1}$, we have
    \[
    \frac{1}{n}\sum_{i=1}^n\mathbb{E} \left[ \frac{|D_i(\widehat{H})-D_i({H})|}{D_i(\widehat{H}_\star)} \cdot \mathds{1}(\mathcal A) \right]
     \lesssim_{K,p,M,\wb U,\underline L} \frac{(\log n)^{3/2}}{\sqrt{n}}
     \]
    where the set $\mathcal A$ is defined in \eqref{eq:def_cal_A_2d}.
\end{lemm}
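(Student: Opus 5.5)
The plan is to follow the template of Lemma S.10 of \citet{ignatiadis2025empirical}, now for the bivariate marginal of $(S^2,A)$. The rate $(\log n)^{3/2}/\sqrt n$ is the square root of the Hellinger rate $(\log n)^3/n$ from Theorem~\ref{thm:double_hellinger}, which holds on $\mathcal A$. Write $D_i(\wh H)-D_i(H) = 2\{D_i(\wh H_\star)-D_i(H)\}$ by linearity, so the summand equals $2|D_i(\wh H_\star)-D_i(H)|/D_i(\wh H_\star)$. Since $(S_i^2,A_i)$ are identically distributed with density $f_{H,K-p}$, and $\wh H$ depends on all the data, a plain change of variables is not available. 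Instead I will bound the summand pointwise by a functional of $\wh H$ alone, integrated against $f_{H,K-p}$, plus a uniform-in-cover correction.

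\textbf{Key pointwise step.} Write $g=f_{H,K-p}$ and $\wh g_\star=f_{\wh H_\star,K-p}$. Factor
\[
\frac{|\wh g_\star-g|}{\wh g_\star}=\frac{|\sqrt{\wh g_\star}-\sqrt g|(\sqrt{\wh g_\star}+\sqrt g)}{\wh g_\star}.
\]
Because $\wh g_\star\ge g/2$, we get $\sqrt g/\sqrt{\wh g_\star}\le\sqrt2$, so the ratio is at most $(1+\sqrt2)\,|\sqrt{\wh g_\star}-\sqrt g|/\sqrt{\wh g_\star}\le (1+\sqrt 2)\sqrt2\,|\sqrt{\wh g_\star}-\sqrt g|/\sqrt g$. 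Taking expectation of this last quantity under the density $g$ gives $\int|\sqrt{\wh g_\star}-\sqrt g|\sqrt g$, which by Cauchy--Schwarz is at most $\mathcal H(\wh g_\star,g)\le \mathcal H(f_{\wh H,K-p},g)$, using convexity of $\mathcal H^2$.

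\textbf{Handling the dependence between $\wh H$ and the data.} To make the previous step rigorous, cover $\mathcal G_H$ with the proper $n^{-2}$-cover from Lemma~\ref{lem:metric_entropy_2d}, whose log-cardinality is $\lesssim(\log n)^3$. Then replace $\wh H$ by the nearest cover element $H_j$ whose Hellinger distance to $H$ is $\lesssim\sqrt{(\log n)^3/n}$. The replacement error on the truncation region $\mathfrak R_\eta$ is $n^{-2}/D_i$. Its expectation is controlled by $n^{-2}\,\mathrm{Leb}(\mathfrak R_\eta)\lesssim n^{-2}\log n$, because $\mathbb E[\mathds 1\{(S^2,A)\in\mathfrak R_\eta\}/g]=\mathrm{Leb}(\mathfrak R_\eta)$. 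Outside $\mathfrak R_\eta$ the summand is bounded, and the tail probability is $O(n^{-2})$ by \eqref{eq:character2d_rem_outside_cpt}.

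\textbf{Averaging over the cover.} It remains to control $\frac1n\sum_i\max_j$ over the localized cover elements. I would apply $\mathbb E\max_j\le$ (mean of each term) $+$ a sub-Gaussian or Bernstein deviation. The mean of each term is $\lesssim\sqrt{(\log n)^3/n}$ by the key pointwise step. The summands $|\sqrt{g_j}-\sqrt g|/\sqrt g$ have bounded second moment, namely $\mathcal H^2(g_j,g)\le 1$. A maximal inequality for averages over $(\log n)^3$ log-many functions then adds a term of order $\sqrt{(\log n)^3/n}$. Unboundedness of the summands would be handled by truncating where $g\ge n^{-C}$ on $\mathfrak R_\eta$, which gives envelope $n^{C}$ but only on an event whose contribution is a Bernstein term $n^{C}(\log n)^3/n$. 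Because that term could be too large, I would rather use the Cauchy--Schwarz bound averaged over $i$ first: $\frac1n\sum_i|\sqrt{\wh g_\star}-\sqrt g|(S_i^2,A_i)/\sqrt{g}$ is bounded via the empirical version of $\int(\sqrt{\wh g_\star}-\sqrt g)^2/g\cdot g$. That empirical quantity is controlled uniformly over the cover by the same likelihood-ratio argument as in Theorem~\ref{thm:double_hellinger}.

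\textbf{Main obstacle.} The expected difficulty is exactly this last uniformity step, namely converting population Hellinger control into control of the empirical average without losing more than logarithmic factors. Combining the pieces gives the bound $(\log n)^{3/2}/\sqrt n$.
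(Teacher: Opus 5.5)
Your architecture matches the paper's: pass from $\wh H$ to the nearest element $H_j$ of a proper cover of the Hellinger-localized class, pay $\eta\cdot\mathrm{Leb}(\mathfrak R)$ for the replacement, bound each cover element's mean by Hellinger via Cauchy--Schwarz, then add a maximal deviation over the cover. The gap is the deviation step, which you flag as the main obstacle but do not resolve.

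\textbf{Where the argument breaks.} You create the difficulty yourself in the key pointwise step. By replacing $1/\sqrt{\wh g_\star}$ with $\sqrt2/\sqrt g$, you turn a bounded summand into the unbounded $|\sqrt{g_{j,\star}}-\sqrt g|/\sqrt g$. No sub-Gaussian maximal inequality over $e^{O((\log n)^3)}$ such functions is available. Your fallback is to control $\frac1n\sum_i\bigl(\sqrt{g_{j,\star}/g}(S_i^2,A_i)-1\bigr)^2$ uniformly ``by the likelihood-ratio argument of Theorem~\ref{thm:double_hellinger}''. This does not follow, for two reasons.
\begin{itemize}
\item That argument bounds $\mathbb E\prod_i\sqrt{g_j/g}(S_i^2,A_i)$. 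This controls the upper tail of the empirical log-likelihood ratio, which is the wrong direction for bounding an empirical squared-Hellinger quantity from above.
\item The variables $(\sqrt{g_j/g}-1)^2$ are heavy-tailed; only their mean $\mathcal H^2(g_j,g)$ is controlled. A per-element Markov bound cannot survive a union bound over the cover, and Bernstein with envelope $n^{C}$ is too lossy, as you note.
\end{itemize}

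\textbf{The missing observation.} No conversion is needed, because by linearity
$|D_i(\wh H)-D_i(H)|/D_i(\wh H_\star)=2|D_i(\wh H)-D_i(H)|/(D_i(\wh H)+D_i(H))\le 2$.
The paper keeps this bounded form. After replacing $\wh H$ by $H_{\wh j}$, it works with
$Y_{ij}:=|D_i(H)-D_i(H_j)|/(D_i(H)+D_i(H_j))\in[0,1]$.
\begin{itemize}
\item \emph{Deviation.} Because $Y_{ij}$ is bounded, bounded differences (Hoeffding) plus a union bound over the $N$ cover elements, with $\log N\lesssim(\log n)^3$ from Lemma~\ref{lem:metric_entropy_2d}, give $\mathbb E\max_j\bigl|\frac1n\sum_i(Y_{ij}-\mathbb E Y_{ij})\bigr|\lesssim\sqrt{\log N/n}\lesssim(\log n)^{3/2}/\sqrt n$.
\item \emph{Mean.} The mean is still controlled by Hellinger. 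With $g_j=f_{H_j,K-p}$ and $g=f_{H,K-p}$, Cauchy--Schwarz gives $\mathbb E Y_{ij}=\int|\sqrt{g_j}-\sqrt g|\,(\sqrt{g_j}+\sqrt g)\,g/(g_j+g)\le\sqrt2\,\mathcal H(g_j,g)$. This uses $(\sqrt{g_j}+\sqrt g)^2\le 2(g_j+g)$ and $g^2/(g_j+g)\le g$.
\end{itemize}
So the comparison $g\lesssim \wh g_\star$ is used only inside the integral, never pointwise. Since every $H_j$ lies in the localized class $\{\mathcal H^2\le C(\log n)^3/n\}$, both pieces are $\lesssim(\log n)^{3/2}/\sqrt n$, which closes your argument.
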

Plugging in the conclusion of the above two lemmas in \eqref{eq:final_master_eq_2d}, the assertion of the proposition follows.
\end{proof}

\subsubsection{Proof of Lemma~\ref{lem:l_2_one_dof}}
To prove Lemma~\ref{lem:l_2_one_dof}, we begin by defining the following integral transform for the density $f_{H,K-p}$ in \eqref{eq:2_d_marginal}.

\begin{definition}[Mellin--Fourier transform]
\label{def:mellin_fourier_transform}
Let $(X_1,X_2)$ be a random vector with joint density $f_{H,K-p}(x,y)$
corresponding to a prior $H$ satisfying
\begin{align}
\label{Eq:def_mell_trans_assm}
\mathsf{supp}(H) \subseteq [-M,M] \times [\underline L,\overline U].
\end{align}
For $c \in \mathbb R$, the \emph{Mellin--Fourier transform} of $(X_1,X_2)$
is defined as
\begin{align}
\label{eq:mellin_f_tran}
\widehat f_{H,K-p}(t_1,t_2;c)
= \int_{0}^{\infty}\int_{-\infty}^{\infty}
x^{\,c-1+i t_1}\, e^{i t_2 y}\,
f_{H,K-p}(x,y)\, \dd y\, \dd x,
\end{align}
whenever the integral is well defined.
\end{definition}
Analogously, the same transform can be defined for the density $f_{H,K-p+1}$. Using the arguments from the proof of Lemma 12 in \cite{ignatiadis2025empirical}, we show that the integral transform is finite for both $f_{H,K-p}$ and $f_{H,K-p+1}$ whenever $c+(K-p)/2-1>0$.

With the change of variable $(x,y)\mapsto (e^u,y)$ we may write
\[
\widehat f_{H,K-p}(t_1,t_2;c)
=\int_{-\infty}^\infty\int_{-\infty}^\infty
e^{i t_1 u}\,e^{i t_2 y}\,F_{H,K-p}(u,y;c)\,du\,dy,
\]
where $F_{H,K-p}(u,y;c)=e^{uc}f_{H,K-p}(e^u,y)$.  
Thus $\widehat f_{H,K-p}(t_1,t_2;c)$ is the Fourier transform of the function $F_{H,K-p}(u,y;c)$.  
By the Plancherel isometry \citep{bracewell1999fourier},
\begin{align}
\label{eq:reduction_square_int_trn}
    &\frac{1}{(2\pi)^2}\int_{-\infty}^\infty\int_{-\infty}^\infty
    \Big|\widehat f_{H_1,K-p+1}(t_1,t_2;c)-\widehat f_{H_2,K-p+1}(t_1,t_2;c)\Big|^2\,\dd t_1\,\dd t_2 \nonumber\\
    &\quad = \int_{-\infty}^\infty\int_{-\infty}^\infty 
    \Big(F_{H_1,K-p+1}(u,y;c)-F_{H_2,K-p+1}(u,y;c)\Big)^2\,\dd u\,\dd y \nonumber\\
    &\quad = \int_{-\infty}^\infty\int_{-\infty}^\infty 
    e^{2uc}\,\Big(f_{H_1,K-p+1}(e^u,y)-f_{H_2,K-p+1}(e^u,y)\Big)^2\,\dd u\,\dd y \nonumber\\
    &\quad = \int_{0}^\infty\int_{-\infty}^\infty 
    x^{2c-1}\,\Big(f_{H_1,K-p+1}(x,y)-f_{H_2,K-p+1}(x,y)\Big)^2\,\dd x\,\dd y,
\end{align}
for any pair of densities $H_1,H_2$ satisfying \eqref{Eq:def_mell_trans_assm}.  
An analogous identity holds for $f_{H_1,K-p}$ and $f_{H_2,K-p}$. Observe that using the arguments from the proof of \cite[Lemma 12]{ignatiadis2025empirical}, we can show that the integral in \eqref{eq:reduction_square_int_trn} is finite.

In this proof, we consider the Mellin–Fourier transform of $(S^2,A)$ (denoted by $\mathcal M_c(t_1,t_2;H)$), where 
\[
A \sim \mathrm{N}(\mu, \sigma^2/K), 
\qquad (K-p)S^2 \sim \sigma^2 \chi^2_{K-p}, 
\qquad (\mu,\sigma^2) \sim H.
\]
We write
\[
\mathcal M_c(t_1,t_2;H)
:=\mathbb E_H\!\left[(S^2)^{c+i t_1-1}e^{i t_2 A}\right],
\]
Since $S^2 \sim \sigma^2 \chi^2_{K-p}/(K-p)$ and 
$A \sim \mu+\sigma Z/\sqrt{K}$ with $Z\sim \mathrm{N}(0,1)$ and the $\chi^2_{K-p}$ variable independent of $(\mu,\sigma^2)$. Then, using independence, we have
\begin{align}
    \mathcal M_{c,K-p}(t_1,t_2;H) 
    &= \mathbb E_H\left[(\sigma^2)^{c+i t_1-1}\left(\frac{\chi^2_{K-p}}{K-p}\right)^{c+i t_1-1}e^{i t_2\left(\mu+\frac{\sigma Z}{\sqrt{K}}\right)}\right] \\
    &= \mathbb E_H\left[(\sigma^2)^{c+i t_1-1}e^{i t_2\left(\mu+\frac{\sigma Z}{\sqrt{K}}\right)}\right]\,
       \mathbb E\left[\left(\frac{\chi^2_{K-p}}{K-p}\right)^{c+i t_1-1}\right],
\end{align}
where the expectation is over $\chi^2_{K-p}$, $Z$, and $H$.  
Note that only the second factor depends on $K$ and $p$. 

Fix $c$ satisfying $c+(K-p)/2-1>0$. By retracing the argument in the proof of Lemma 12 of \citet{ignatiadis2025empirical}, we conclude that 
\[
\mathcal M_{c,K-p+1}(t_1,t_2;H)
=\mathcal M_{c,K-p}(t_1,t_2;H)\,\cdot r_{K,p,c}(t_1),
\]
for some function $r_{K,p,c}:\mathbb R \rightarrow \mathbb C$ satisfying
\begin{align}
\label{eq:bound_r_low_t}
|r_{K,p,c}(t_1)| \le \mathfrak K_{K,p,c}|T_0|^{1/2},
\quad \text{for all $t\in \mathbb R$ with $|t_1|\le T_0$ where $T_0>\mathfrak r_{K,p,c}$},
\end{align}
and $\mathfrak K_{K,p,c}, \mathfrak r_{K,p,c}$ are absolute positive constants. Note that the function $r_{K,p,c}$ is the same as the $r(t)$ function defined in \cite[Lemma 12]{ignatiadis2025empirical}. Furthermore, since $\mathsf{supp}(H)\subseteq [-M,M] \times [\underline L,\overline U]$,
\[
\Bigg|\mathbb E_{Z}\Big[e^{i t_2\left(\mu+\sigma Z/\sqrt{K}\right)} \mid \sigma,\mu\Big]\Bigg|
\le \Bigg|\mathbb E_{Z}\Big[e^{i t_2\left(\sigma Z/\sqrt{K}\right)} \mid \sigma,\mu\Big]\Bigg| \le \exp\{-\underline L t^2_2/(2K)\}, 
\]
almost surely under $H$.
Combining the forgoing displays with Assumption~\ref{assu:compact_2d} and Lemma S.8 of \citet{ignatiadis2025empirical} gives a constant $\mathfrak L(K,p,c)>0$ such that
\begin{align}
\label{eq:huge_cound_mf_tenasform}
\big|\mathcal M_{c,K-p+1}(t_1,t_2;H)\big|
\lesssim_{K,p,c,M,\underline L,\wb U}
|t_1|^{\,c+(K-p)/2-1}\exp(-\pi|t_1|/2)\exp(-\underline L^2t^2_2/2K),
\end{align}
for all $|t_1|>\mathfrak L(K,p,c)$.

Observe that for any two priors $H_1$ and $H_2$ satisfying the assumptions of the lemma, we have
\begin{align}
    &\int_{0}^{\infty}\int_{-\infty}^\infty 
    x^{2c-1}\Big(f_{H_1,K-p+1}(x,y)-f_{H_2,K-p+1}(x,y)\Big)^2\,\dd y\,\dd x \nonumber\\
    &=\frac{1}{(2\pi)^2}\int_{-\infty}^\infty\int_{-\infty}^\infty
    \Big|\mathcal M_{c,K-p+1}(t_1,t_2;H_1)-\mathcal M_{c,K-p+1}(t_1,t_2;H_2)\Big|^2\,\dd t_2\,\dd t_1 \nonumber\\
    &=\frac{1}{(2\pi)^2}\int_{-\infty}^\infty\int_{-\infty}^\infty
    |r_{K,p,c}(t_1)|^2\,
    \Big|\mathcal M_{c,K-p}(t_1,t_2;H_1)-\mathcal M_{c,K-p}(t_1,t_2;H_2)\Big|^2\,\dd t_2\,\dd t_1.
\end{align}
For any $T>0$, we can decompose the integral into two regions:
\begin{align}
    \mathrm{I}&=\frac{1}{(2\pi)^2}\int_{|t_1|\ge T}\int_{-\infty}^\infty
    |r_{K,p,c}(t_1)|^2\,
    \Big|\mathcal M_{c,K-p}(t_1,t_2;H_1)-\mathcal M_{c,K-p}(t_1,t_2;H_2)\Big|^2\,\dd t_2\,\dd t_1,\\
    \mathrm{II}&=\frac{1}{(2\pi)^2}\int_{|t_1|<T}\int_{-\infty}^\infty
    |r_{K,p,c}(t_1)|^2\,
    \Big|\mathcal M_{c,K-p}(t_1,t_2;H_1)-\mathcal M_{c,K-p}(t_1,t_2;H_2)\Big|^2\,\dd t_2\,\dd t_1.
\end{align}
Noting that
\[
\int_{-\infty}^\infty \exp(-\underline L t^2_2/2)\,\dd t_2 \lesssim_{\underline L} 1,
\]
we may apply the same techniques as in the proof of Lemma 12 of \citet{ignatiadis2025empirical} along with \eqref{eq:bound_r_low_t} and \eqref{eq:huge_cound_mf_tenasform} to deduce that for any $T_0>\mathfrak r_{K,p,c}\vee \mathfrak L(K,p,c) \vee (2c+K-p)$ 
\begin{align}
    \mathrm{I}& \lesssim_{K,p,c,M,\underline L,\wb U} T^{2c+K-p-2}_0\exp(-\pi T_0),\\
    \mathrm{II}& \lesssim_{K,p,c,M,\underline L,\wb U}T_0\int_{0}^\infty\int_{-\infty}^\infty 
    x^{2c-1}\Big(f_{H_1,K-p}(x,y)-f_{H_2,K-p}(x,y)\Big)^2\,\dd x\,\dd y.
\end{align}
Combining the two aforementioned relations, we have
\begin{align}
    &\int_{0}^\infty\int_{-\infty}^\infty 
    x^{2c-1}\Big(f_{H_1,K-p+1}(x,y)-f_{H_2,K-p+1}(x,y)\Big)^2\,\dd x\,\dd y \nonumber\\
    &\quad\lesssim_{K,p,c,M,\underline L,\wb U} T^{2c+K-p-2}_0\exp(-\pi T_0)+T_0\int_{0}^\infty\int_{-\infty}^\infty 
    x^{2c-1}\Big(f_{H_1,K-p}(x,y)-f_{H_2,K-p}(x,y)\Big)^2\,\dd x\,\dd y\\
    &\quad\lesssim_{K,p,c,M,\underline L,\wb U} \exp(-\pi T_0)+T_0\int_{0}^\infty\int_{-\infty}^\infty 
    x^{2c-1}\Big(f_{H_1,K-p}(x,y)-f_{H_2,K-p}(x,y)\Big)^2\,\dd x\,\dd y,
\end{align}
where the final inequality follows from (S2) of \cite{ignatiadis2025empirical}.
Finally, plugging in $c=1/2$, taking $T_0=\mathfrak r_{K,p,c}\vee \mathfrak L(K,p,c) \vee (2c+K-p) \vee |\log \rho_{K,p}|$ in the foregoing calculations, where
\(
\smash{\rho_{K,p}:=\left\| f_{H_1,K-p} - f_{H_2,K-p} \right\|^2_{L_2},}
\)
we obtain the following by retracing the steps of the proof of Lemma 12 of \citet{ignatiadis2025empirical}:
\[
\left\| f_{H_1,K-p+1} - f_{H_2,K-p+1} \right\|^2_{L_2}
\;\le\; \Big(1+\big|\log \left\| f_{H_1,K-p} - f_{H_2,K-p} \right\|^2_{L_2}\big|\Big)
\cdot \left\| f_{H_1,K-p} - f_{H_2,K-p} \right\|^2_{L_2}.
\] 
Finally, since
\[
\left\| f_{H_1,K-p} - f_{H_2,K-p} \right\|^2_{L_2} \lesssim \mathcal H^2(f_{H_1,K-p},f_{H_2,K-p}),
\]
and the map $x \mapsto x(1+|\log x|)$ is monotone increasing, the assertion of Lemma~\ref{lem:l_2_one_dof} follows.

\subsubsection{Proof of Lemma \ref{lemm:lem_ratio_bound_2d}}
Define
\begin{align}
\label{eq:def_a_eps_here}
\mathtt A_n := M + \sqrt{\frac{4\wb U}{K}\log n}, 
\qquad
\mathtt B_n := \wb U \cdot \max \left\{ 1, \frac{8}{K-p}\log n \right\}.
\end{align}
Recall that
\[
D_i(\widehat{H}_\star) = \frac{D_i(H)}{2} + \frac{D_i(\wh H)}{2}\ge \frac{1}{2}\max\left\{D_i(H),D_i(\wh H)\right\}.
\]
Also recall the event $\mathcal A$ defined in \eqref{eq:def_cal_A_2d}.
Using the above inequality together with the fact that
\begin{align}
\label{eq:its_n}
\frac{N_i(z, H')}{D_i(H')} \in [0,1] \qquad \text{for any distribution } H',
\end{align}
we conclude that
\begin{align}
\label{eq:greater_than_t_2d_0}
&\mathbb E\left[\sup_{|z| \ge \underline z}\left|\frac{N_i(z,H)-N_i(z,\wh H)}{D_i(\widehat H_\star)}\right|\cdot\mathds{1}(\mathcal A)\right]\nonumber\\
&\le 4\,\mathbb E\left[\mathds{1}\!\left(S_i^2 \ge\Bt_n\right)\right]
+ 4\,\mathbb E\left[\mathds{1}\!\left(|A_i| > \At_n\right)\right]
\nonumber\\
&\quad +\mathbb E\left[\sup_{|z| \ge \underline z}\left|\frac{N_i(z,H)-N_i(z,\wh H)}{D_i(\widehat H_\star)}\right|
\mathds{1}\!\left(|A_i| \le \At_n,\,S_i^2 \le \Bt_n\right)\cdot\mathds{1}(\mathcal A)\right].
\end{align}
By definition of $\Bt_n$ and $\At_n$, we can conclude that
\[
\frac{1}{n}\sum_{i=1}^n\mathbb E\left[\mathds{1}(S^2_i \ge \Bt_n)\right] \le  \frac{1}{n}, \quad \mbox{and} \quad \frac{1}{n}\sum_{i=1}^n\mathbb E\left[\mathds{1}(|A_i|>\At_n)\right] \le  \frac{1}{n}.
\]
Therefore, it suffices to control
\[
\frac{1}{n}\sum_{i=1}^n\mathbb E\left[ \sup_{|z| \ge \underline z}\left|\frac{N_i(z, {H})-N_i(z, \wh H)}{D_i(\widehat{H}_\star)}\right|~\mathds{1}(|A_i| \le \At_n,S^2_i \le \Bt_n)\cdot\mathds{1}(\mathcal A)\right].
\]
By \eqref{eq:def_n_i_z} and Theorem \ref{thm:tweedie_2_d}, we have
\begin{align}
    &\sup_{|z| \ge \underline z}\left|N_i(z, {H})-N_i(z, \wh H)\right|\\
    & =\sup_{|z| \ge \underline z}\Bigg|\int_{0}^{\infty}\frac{C_{K,p}\,(s^2)^{(K-p)/2-1}(t)^{-(K-p-1)}}{\sqrt{(K-p+1)t^2 - (K-p)s^2}}\mathds{1} \left\{ t^2 \ge \frac{(K-p)s^2 + (z^2/\nu^2)}{K-p + 1} \right\}\\
    & \hskip 5em \times (f_{H,K-p+1}(t^2,a)-f_{\wh H,K-p+1}(t^2,a))\,\dd t^2\Bigg|,
\end{align}
where the absolute constant $C_{K,p}>0$ is defined in the referred theorem.
Let us take $T=16 \wb U\cdot \max \left\{ 1, (K-p)^{-1}\log n \right\}$ and decompose the integral in the foregoing expression as follows:
\begin{align}
     &\sup_{|z| \ge \underline z}\left|N_i(z, {H})-N_i(z, \wh H)\right|\\
    & \le \sup_{|z| \ge \underline z}\Bigg|\int_{0}^{T}\frac{C_{K,p}\,(S^2_i)^{(K-p)/2-1}(t)^{-(K-p-1)}}{\sqrt{(K-p+1)t^2 - (K-p)S^2_i}}\mathds{1} \left\{ t^2 \ge \frac{(K-p)S^2_i + (z^2/\nu^2)}{K-p + 1} \right\}\\
    & \hskip 5em \times (f_{H,K-p+1}(t^2,A_i)-f_{\wh H,K-p+1}(t^2,A_i))\,\dd t^2\Bigg|\\
    & + \sup_{|z| \ge \underline z}\Bigg|\int_{T}^{\infty}\frac{C_{K,p}\,(S^2_i)^{(K-p)/2-1}(t)^{-(K-p-1)}}{\sqrt{(K-p+1)t^2 - (K-p)S^2_i}}\mathds{1} \left\{ t^2 \ge \frac{(K-p)S^2_i + (z^2/\nu^2)}{K-p + 1} \right\}\\
    & \hskip 5em \times (f_{H,K-p+1}(t^2,A_i)-f_{\wh H,K-p+1}(t^2,A_i))\,\dd t^2\Bigg|.
\end{align}
 
Using Lemma S.1 of \cite{ignatiadis2025empirical}, we have that for all 
distributions $\wt H$ on $\mathbb R \times \mathbb R_{\ge 0}$ and all $t^2 \ge T$,
\[
|f_{\wt H,K-p+1}(t^2,a)| \lesssim_{K,p,M,\wb U,\underline L}\exp\left(-\frac{(K-p)t^2}{8\wb U}\right), \qquad \text{for all } a \in \mathbb R.
\]
If, $K-p>2$, if $S_i^2 \le \Bt_n$, retracing the proof of Lemma S.9 of 
\cite{ignatiadis2025empirical} yields
\begin{align}
&\int_{T}^{\infty}\frac{C_{K,p}^2\,(t^2)^{-(K-p-1)}}{(K-p+1)t^2 - (K-p)S_i^2}\mathds{1} \!\left\{t^2 \ge \frac{(K-p)S_i^2 + (z^2/\nu^2)}{K-p + 1}\right\}\,\dd t^2\\
&\lesssim_{K,p,\nu}\frac{1}{z^2}\min\left\{\left\{(K-p)S_i^2 + (z^2/\nu^2)\right\}^{K-p-2},T^{-(K-p-2)}\right\}.
\end{align}
If $K-p=2$, then retracing the arguments of \cite{ignatiadis2025empirical}, we get
\begin{align}
\int_{T}^{\infty}\frac{C_{K,p}^2\,(t^2)^{-1}}{3t^2 - 2S_i^2}\mathds{1} \!\left\{t^2 \ge \frac{2S_i^2 + (z^2/\nu^2)}{3}\right\}\,\dd t^2\lesssim_{K,p,\nu}\frac{1}{S^2_i}\log\left(\frac{2\nu^2S^2_i}{z^2}+1\right)\lesssim \frac{1}{z^2}.
\end{align}
This implies, for any $K-p \ge 2$, we have
\begin{align}
&\int_{T}^{\infty}\frac{C_{K,p}^2\,(t^2)^{-(K-p-1)}}{(K-p+1)t^2 - (K-p)S_i^2}\mathds{1} \!\left\{t^2 \ge \frac{(K-p)S_i^2 + (z^2/\nu^2)}{K-p + 1}\right\}\,\dd t^2\\
&\lesssim \frac{1}{z^2}\min\left\{\left\{(K-p)S_i^2 + (z^2/\nu^2)\right\}^{K-p-2},T^{-(K-p-2)}\right\}.
\end{align}

Using the Cauchy--Schwarz inequality together with the above bound and the exponential tail estimate, we obtain
\begin{align}
\label{eq:greater_than_t_2d}
&\sup_{|z| \ge \underline z}\Bigg|\int_{T}^{\infty}
\frac{C_{K,p}\,(S_i^2)^{(K-p)/2-1}(t^2)^{-(K-p-1)}}{\sqrt{(K-p+1)t^2 - (K-p)S_i^2}}\mathds{1} \!\left\{t^2 \ge \frac{(K-p)S_i^2 + (z^2/\nu^2)}{K-p + 1}\right\}\\
&\hskip 15em\times\bigl(f_{H,K-p+1}(t^2,A_i)-f_{\wh H,K-p+1}(t^2,A_i)\bigr)\,\dd t^2\Bigg|\nonumber\\
&\lesssim_{K,p,M,\wb U,\underline L,\nu,\underline z}
\left\{\left(\frac{S_i^2}{T}\right)^{(K-p)/2-1} \vee 1\right\} \times
\exp\left(-\frac{(K-p)T}{8\wb U}\right).
\end{align}
Since $S_i^2 \le \Bt_n$ and $T \ge \frac{16\wb U}{K-p}\log n$, the polynomial term is bounded and the exponential term satisfies
\[
\exp\left(-\frac{(K-p)T}{8\wb U}\right)\le\exp\left(-2\log n\right)=n^{-2}.
\]
Consequently,
\begin{align}
&\sup_{|z| \ge \underline z}\Bigg|\int_{T}^{\infty}
\frac{C_{K,p}\,(S_i^2)^{(K-p)/2-1}(t^2)^{-(K-p-1)}}{\sqrt{(K-p+1)t^2 - (K-p)S_i^2}}\mathds{1} \!\left\{t^2 \ge \frac{(K-p)S_i^2 + (z^2/\nu^2)}{K-p + 1}\right\}\\
&\hskip 15em\times\bigl(f_{H,K-p+1}(t^2,A_i)-f_{\wh H,K-p+1}(t^2,A_i)\bigr)\,\dd t^2\Bigg|\\
&\hskip 15em \lesssim_{K,p,M,\wb U,\underline L,\nu,\underline z} \frac{1}{n^2},\qquad\text{for all } S_i^2 \le \Bt_n.
\end{align}
This implies
\begin{align}
\label{eq:greater_than_t_2d_2}
    &\frac{1}{n}\sum_{i=1}^n\mathbb E\left[\frac{1}{D_i(\wh H_\star)} \times\sup_{|z| \ge \underline z}\Bigg|\int_{T}^{\infty}\frac{C_{K,p}\,(S^2_i)^{(K-p)/2-1}(t)^{-(K-p-1)}}{\sqrt{(K-p+1)t^2 - (K-p)S^2_i}}\mathds{1} \left\{ t^2 \ge \frac{(K-p)S^2_i + (z^2/\nu^2)}{K-p + 1} \right\}\right.\nonumber\\
    & \left.\hskip 10em \times (f_{H,K-p+1}(t^2,A_i)-f_{\wh H,K-p+1}(t^2,A_i))\,\dd t^2\Bigg|~\mathds{1}(|A_i| \le \At_n,S^2_i \le \Bt_n)\cdot\mathds{1}(\mathcal A)\right] \nonumber \\
    & \hskip 15em\lesssim_{K,p,M,\wb U,\underline L,\nu,\underline z}\frac{(\log n)^{3/2}}{n^2}.
\end{align}
Next, we consider the subclass of densities $\mathcal F_{\jt,\mathcal A} \subset \mathcal F_\jt$ where
\begin{align}
    \mathcal F_{\jt,\mathcal A}:=\left\{(f_{H',K-p},f_{H',K-p+1}) \in \mathcal{F}_\jt : \mathcal H^2\left(f_{H',K-p}, f_{H,K-p}\right) \le C \frac{(\log n)^3}{n} \right\},
\end{align}
for $C>0$ defined Theorem \ref{thm:double_hellinger}. Take $\mathfrak{R}= (0,\Bt_n] \times [-\At_n,\At_n]$ and consider a $\eta=1/n$ cover of the class of densities in $\mathcal F_{\jt,\mathcal A}$ under the semi-norm $d_{\mathfrak{R}}$ (defined in \eqref{eq:def_f_joint}). Let this cover be given by 
\begin{align}
\label{eq:cover_imp_N}
    \mathcal S=\{(f_{H_1,K-p},f_{H_1,K-p+1}),\ldots,(f_{H_{N},K-p},f_{H_{N},K-p+1})\}.
\end{align}
Then, if $\mathcal A$ holds, there exists $j \in \{1,\ldots,N\}$ such that
\begin{align}
\label{eq:ncover_cons}
&\sup_{(t^2,a) \in \mathfrak R}\left|f_{\wh H,K-p+1}(t^2,a)-f_{H_j,K-p+1}(t^2,a)\right| \le \frac{1}{n}, \quad \mbox{and}\\
&\sup_{(t^2,a) \in \mathfrak R}\left|f_{\wh H,K-p}(t^2,a)-f_{H_j,K-p}(t^2,a)\right| \le \frac{1}{n}. 
\end{align}
Define $\wh H_j=(H_j+H)/2$ and 
\begin{align}
\label{eq:n_i_z_t_def}
N_i(z,G;T)=&\int_{0}^{T}\frac{C_{K,p}\,(S^2_i)^{(K-p)/2-1}(t)^{-(K-p-1)}}{\sqrt{(K-p+1)t^2 - (K-p)S^2_i}}\\
&~~~~~~~~~~~~~~~\times\mathds{1} \left\{ t^2 \ge \frac{(K-p)S^2_i + (z^2/\nu^2)}{K-p + 1} \right\}f_{G,K-p+1}(t^2,A_i)\,\dd t^2,\nonumber\\
\end{align}
for any mixing measure $G$. 

Using \eqref{eq:greater_than_t_2d_2}, we can show that
\begin{align}
\label{eq:greater_than_t_2d_2_h}
    &\frac{1}{n}\sum_{i=1}^n\mathbb E\left[\frac{1}{D_i(\wh H_\star)} \times\sup_{|z| \ge \underline z}\Bigg|\int_{T}^{\infty}\frac{C_{K,p}\,(S^2_i)^{(K-p)/2-1}(t)^{-(K-p-1)}}{\sqrt{(K-p+1)t^2 - (K-p)S^2_i}}\mathds{1} \left\{ t^2 \ge \frac{(K-p)S^2_i + (z^2/\nu^2)}{K-p + 1} \right\}\right.\nonumber\\
    & \left.\hskip 10em \times (f_{H_j,K-p+1}(t^2,A_i)-f_{\wh H,K-p+1}(t^2,A_i))\,\dd t^2\Bigg|~\mathds{1}(|A_i| \le \At_n,S^2_i \le \Bt_n)\cdot\mathds{1}(\mathcal A)\right] \nonumber \\
    & \hskip 15em\lesssim_{K,p,M,\wb U,\underline L,\nu,\underline z}\frac{(\log n)^{3/2}}{n^2},
\end{align}
and
\begin{align}
\label{eq:greater_than_t_2d_2_l}
    &\frac{1}{n}\sum_{i=1}^n\mathbb E\left[\frac{1}{D_i(\wh H_\star)} \times\sup_{|z| \ge \underline z}\Bigg|\int_{T}^{\infty}\frac{C_{K,p}\,(S^2_i)^{(K-p)/2-1}(t)^{-(K-p-1)}}{\sqrt{(K-p+1)t^2 - (K-p)S^2_i}}\mathds{1} \left\{ t^2 \ge \frac{(K-p)S^2_i + (z^2/\nu^2)}{K-p + 1} \right\}\right.\nonumber\\
    & \left.\hskip 10em \times (f_{H_j,K-p+1}(t^2,A_i)-f_{H,K-p+1}(t^2,A_i))\,\dd t^2\Bigg|~\mathds{1}(|A_i| \le \At_n,S^2_i \le \Bt_n)\cdot\mathds{1}(\mathcal A)\right] \nonumber \\
    & \hskip 15em\lesssim_{K,p,M,\wb U,\underline L,\nu,\underline z}\frac{(\log n)^{3/2}}{n^2}.
\end{align}
Using the foregoing inequalities, we can decompose the left-hand side of \eqref{eq:main_n_term} as follows:
\begin{align}
\label{eq:reduction_2d_exp_max}
  & \frac{1}{n}\mathbb E\left[\sum_{i=1}^{n} \sup_{|z| \ge \underline z}\left|\frac{N_i(z, {H})-N_i(z, \wh H)}{D_i(\widehat{H}_\star)}\right|~\mathds{1}(|A_i| \le \At_n,S^2_i \le \Bt_n) \cdot \mathds{1}(\mathcal A)\right]\\
  & \lesssim \frac{1}{n}\sum_{i=1}^{n}\mathbb E\left[\sup_{|z| \ge \underline z}\left(\frac{N_i(z, \wh H)}{D_i(\widehat{H}_\star)}\right)\left|\frac{D_i(\widehat{H}_\star)-D_i(\widehat{H}_j)}{D_i(\widehat{H}_j)}\right|~\mathds{1}(|A_i| \le \At_n,S^2_i \le \Bt_n) \cdot \mathds{1}(\mathcal A)\right]\\
  & \hskip 1em + \frac{1}{n}\sum_{i=1}^{n}\mathbb E\left[ \sup_{|z| \ge \underline z}\left|\frac{N_i(z,H_j)-N_i(z,\wh H)}{D_i(\wh H_j)}\right|~\mathds{1}(|A_i| \le \At_n,S^2_i \le \Bt_n) \cdot \mathds{1}(\mathcal A)\right]\\
  & \hskip 1em + \frac{1}{n}\sum_{i=1}^{n}\mathbb E\left[ \sup_{|z| \ge \underline z}\left|\frac{N_i(z,H_j)-N_i(z,H)}{D_i(\wh H_j)}\right|~\mathds{1}(|A_i| \le \At_n,S^2_i \le \Bt_n) \cdot \mathds{1}(\mathcal A)\right]\\
  & \hskip 1em + \frac{1}{n}\sum_{i=1}^{n}\mathbb E\left[\sup_{|z| \ge \underline z}\left(\frac{N_i(z, H)}{D_i(\widehat{H}_\star)}\right)\left|\frac{D_i(\widehat{H}_\star)-D_i(\widehat{H}_j)}{D_i(\widehat{H}_j)}\right|~\mathds{1}(|A_i| \le \At_n,S^2_i \le \Bt_n) \cdot \mathds{1}(\mathcal A)\right]\\
  & \lesssim \frac{1}{n}\sum_{i=1}^{n}\mathbb E\left[\sup_{|z| \ge \underline z}\left(\frac{N_i(z, \wh H)}{D_i(\widehat{H}_\star)}\right)\left|\frac{D_i(\widehat{H}_\star)-D_i(\widehat{H}_j)}{D_i(\widehat{H}_j)}\right|~\mathds{1}(|A_i| \le \At_n,S^2_i \le \Bt_n) \cdot \mathds{1}(\mathcal A)\right]\\
  & \hskip 1em + \frac{1}{n}\sum_{i=1}^{n}\mathbb E\left[\sup_{|z| \ge \underline z}\left(\frac{N_i(z, H)}{D_i(\widehat{H}_\star)}\right)\left|\frac{D_i(\widehat{H}_\star)-D_i(\widehat{H}_j)}{D_i(\widehat{H}_j)}\right|~\mathds{1}(|A_i| \le \At_n,S^2_i \le \Bt_n) \cdot \mathds{1}(\mathcal A)\right]\\
  &\hskip 1em +\frac{1}{n}\sum_{i=1}^{n}\mathbb E\left[ \sup_{|z| \ge \underline z}\left|\frac{N_i(z, {H}_j;T)-N_i(z, \wh H;T)}{D_i(\widehat{H}_j)}\right|~\mathds{1}(|A_i| \le \At_n,S^2_i \le \Bt_n) \cdot \mathds{1}(\mathcal A)\right]\\
  &\hskip 1em +\mathbb E\left[\frac{1}{n}\sum_{i=1}^{n}\left\{\sup_{|z| \ge \underline z}\left|\frac{N_i(z, {H}_j;T)-N_i(z, H;T)}{D_i(\widehat{H}_j)}\right|~\mathds{1}(|A_i| \le \At_n,S^2_i \le \Bt_n)\right\}\right]+\frac{(\log n)^{3/2}}{n^2}\\
  & \lesssim \frac{1}{n}\sum_{i=1}^{n}\mathbb E\left[\sup_{|z| \ge \underline z}\left(\frac{N_i(z, \wh H)}{D_i(\widehat{H}_\star)}\right)\left|\frac{D_i(\widehat{H}_\star)-D_i(\widehat{H}_j)}{D_i(\widehat{H}_j)}\right|~\mathds{1}(|A_i| \le \At_n,S^2_i \le \Bt_n) \cdot \mathds{1}(\mathcal A)\right]\\
  & \hskip 1em + \frac{1}{n}\sum_{i=1}^{n}\mathbb E\left[\sup_{|z| \ge \underline z}\left(\frac{N_i(z, H)}{D_i(\widehat{H}_\star)}\right)\left|\frac{D_i(\widehat{H}_\star)-D_i(\widehat{H}_j)}{D_i(\widehat{H}_j)}\right|~\mathds{1}(|A_i| \le \At_n,S^2_i \le \Bt_n) \cdot \mathds{1}(\mathcal A)\right]\\
  &\hskip 1em +\frac{1}{n}\sum_{i=1}^{n}\mathbb E\left[ \sup_{|z| \ge \underline z}\left|\frac{N_i(z, {H}_j;T)-N_i(z, \wh H;T)}{D_i(\widehat{H}_j)}\right|~\mathds{1}(|A_i| \le \At_n,S^2_i \le \Bt_n) \cdot \mathds{1}(\mathcal A)\right]\\
  &\hskip 1em +\mathbb E\left[\sup_{k \in [N]}\left(\frac{1}{n}\sum_{i=1}^{n}\left\{\sup_{|z| \ge \underline z}\left|\frac{N_i(z, {H}_k;T)-N_i(z, H;T)}{D_i(\widehat{H}_k)}\right|~\mathds{1}(|A_i| \le \At_n,S^2_i \le \Bt_n)\right\}\right)\right]+\frac{(\log n)^{3/2}}{n^2}.
\end{align}
If $|a| \le \At_n$, then
\begin{align}
\label{eq:truncated_int_1_hell}
&\int_{0}^{T}\frac{C_{K,p}(s^2)^{(K-p-2)/2}(t)^{-(K-p-1)}}{\sqrt{(K-p+1)t^2 - (K-p)s^2}}\mathds{1} \left\{ t^2 \ge \frac{(K-p)s^2 + (z^2/\nu^2)}{K-p + 1} \right\}\\
& \hskip 15em \times\left|f_{\wh H,K-p+1}(a, t^2)-f_{H_j,K-p+1}(a, t^2)\right|\,\dd t^2\\
    & \le C_{K,p}(s^2)^{(K-p-2)/2}\,\left(\int_{0}^{T}\frac{(t^2)^{-(K-p-1)}}{(K-p+1)t^2 - (K-p)s^2}\mathds{1} \left\{ t^2 \ge \frac{(K-p)s^2 + (z^2/\nu^2)}{K-p + 1} \right\}\,\dd t^2\right)^{1/2} \times\\
    & \hskip 4em \left(\int_{0}^{T}(f_{\wh H,K-p+1}(a, t^2)-f_{H_j,K-p+1}(a, t^2))^2\,\dd t^2\right)^{1/2}\\
    & \lesssim_{K,p,M,\wb U,\underline L,\nu,\underline z}  \frac{C_{K,p}(s^2)^{(K-p-2)/2}}{\sqrt{z}((K-p+1)s^2+(z^2/\nu^2))^{\frac{K-p-2}{2}}} \times \frac{(\log n)^{1/2}}{n},
\end{align}
where the last inequality follows from the Cauchy Schwartz inequality, \eqref{eq:ncover_cons}, and the techniques outlined in the proof of Lemma S9 of \cite{ignatiadis2025empirical}. Therefore, for all $|z|\ge \underline z$ we have
\begin{align}
\label{eq:closet_cover_2d}
 & \left|N_i(z, {H}_j;T)-N_i(z, \wh H;T)\right| \lesssim_{K,p,M,\wb U,\underline L,\nu,\underline z}  \frac{(\log n)^{1/2}}{n}, \quad \mbox{if $|A_i| \le \At_n, S^2_i \le \Bt_n$ and $\mathcal A$ holds.}\nonumber\\
\end{align}

Observe that $\smash{D_i(\widehat{H}_\star) \ge \max\left\{\frac{D_i(\wh H)}{2},\frac{D_i(H)}{2}\right\}}$ implies
\[
\sup_{|z| \ge \underline z}\left(\frac{N_i(z,\wh H)}{D_i(\wh H_\star)}\right) \le 2, \quad \mbox{and} \quad \sup_{|z| \ge \underline z}\left(\frac{N_i(z, H)}{D_i(\widehat{H}_\star)}\right) \le 2, \quad \mbox{for all $i=1,\ldots,n$.}
\]
Using the foregoing inequalities and \eqref{eq:ncover_cons}, we have
\begin{align}
\label{eq:its_n_1}
&\frac{1}{n}\sum_{i=1}^{n}\mathbb E\left[\sup_{|z| \ge \underline z}\left(\frac{N_i(z,\mathrm H)}{D_i(\widehat{H}_\star)}\right)\left|\frac{D_i(\widehat{H}_\star)-D_i(\widehat{H}_j)}{D_i(\widehat{H}_j)}\right|~\mathds{1}(|A_i| \le \At_n,S^2_i \le \Bt_n) \cdot \mathds{1}(\mathcal A)\right] \\
&\lesssim_{K,p,M,\wb U,\underline L,\underline z} \frac{(\log n)^{3/2}}{n}, \quad \mbox{where $\mathrm H \in \{\wh H,H\}$.}
\end{align}
Furthermore, using \eqref{eq:closet_cover_2d}, we have
\begin{align}
\label{eq:its_n_2}
\frac{1}{n}\sum_{i=1}^{n}\mathbb E\left[ \sup_{|z| \ge \underline z}\left|\frac{N_i(z, {H}_j;T)-N_i(z, \wh H;T)}{D_i(\widehat{H}_j)}\right|~\mathds{1}(|A_i| \le \At_n,S^2_i \le \Bt_n) \cdot \mathds{1}(\mathcal A)\right] \lesssim_{K,p,M,\wb U,\underline L,\underline z} \frac{(\log n)^{2}}{n}.\nonumber\\
\end{align}
Combining \eqref{eq:its_n_1} and \eqref{eq:its_n_2} with \eqref{eq:reduction_2d_exp_max}, we get
  \begin{align}
  \label{eq:reduction_2d_exp_max_2}
  & \frac{1}{n}\mathbb E\left[\sum_{i=1}^{n} \sup_{|z| \ge \underline z}\left|\frac{N_i(z, {H})-N_i(z, \wh H)}{D_i(\widehat{H}_\star)}\right|~\mathds{1}(|A_i| \le \At_n,S^2_i \le \Bt_n) \cdot \mathds{1}(\mathcal A)\right]\\
  & \lesssim_{K,p,M,\wb U,\underline L,\underline z}  \frac{(\log n)^2}{n}+\mathbb E\left[\max_{k=1,\ldots,N}V_k\left((S^2_1,A_1),\ldots,(S^2_n,A_n)\right)\right],
\end{align}
where for any $k \in \{1,\ldots,N\}$
\[
V_k\left((S^2_1,A_1),\ldots,(S^2_n,A_n)\right) = \frac{1}{n}\sum_{i=1}^{n}\left\{\sup_{|z| \ge \underline z}\left|\frac{N_i(z, {H}_k;T)-N_i(z, H;T)}{D_i(\widehat{H}_k)}\right|~\mathds{1}(|A_i| \le \At_n,S^2_i \le \Bt_n)\right\}.
\]
Therefore, it suffices to control
\[
\mathbb E\left[\max_{k=1,\ldots,N}V_k\left((S^2_1,A_1),\ldots,(S^2_n,A_n)\right)\right].
\]
Next, observe that since
\[
\sup_{|z| \ge \underline z}\left|\frac{N_i(z, {H}_j;T)-N_i(z, H;T)}{D_i(\widehat{H}_j)}\right| \le 2, \quad \mbox{for all $i \in \{1,\ldots,n\}$ and $j \in \{1,\ldots,N\}$.}
\]
Therefore, for two sequences of points 
\[
\mathfrak a=\left\{(s^2_1,a_1),\ldots,(s^2_i,a_i),\ldots,(s^2_n,a_n)\right\}, \quad \mbox{and} \quad 
\mathfrak a^\prime=\left\{(s^2_1,a_1),\ldots,((s^\prime_i)^2,a^\prime_i),\ldots,(s^2_n,a_n)\right\}
\]
differeing only at the $i$-th entry we have
\begin{align}
 |V_k\left(\mathfrak a\right)-V_k\left(\mathfrak a^\prime\right)| \le \frac{2}{n}.   
\end{align}
Using the bounded differences inequality \cite[Theorem 6.2]{10.1093/acprof:oso/9780199535255.001.0001}, we get for all $j \in \{1,\ldots,N\}$
\[
\mathbb P\Big[\left|V_k\left((S^2_1,A_1),\ldots,(S^2_n,A_n)\right)-\mathbb E\left[V_k\left((S^2_1,A_1),\ldots,(S^2_n,A_n)\right)\right]\right|>t\Big] \le \exp\left(-\frac{nt^2}{4}\right).
\]
Therefore, we have
\begin{align}
\label{eq:2d_bdd_diff}
    &\mathbb E\Bigg[\max_{k \in \{1,\ldots,N\}}\left|V_k\left((S^2_1,A_1),\ldots,(S^2_n,A_n)\right)-\mathbb E\left[V_k\left((S^2_1,A_1),\ldots,(S^2_n,A_n)\right)\right]\right|\Bigg]\\
    &=\int_0^\infty\mathbb P\left[\max_{k \in \{1,\ldots,N\}}\left|V_k\left((S^2_1,A_1),\ldots,(S^2_n,A_n)\right)-\mathbb E\left[V_k\left((S^2_1,A_1),\ldots,(S^2_n,A_n)\right)\right]\right|>t\right]\,\dd t\\
    & = x_0 +N\int_{x_0}^\infty\exp\left(-\frac{nt^2}{4}\right)\,\dd t\le x_0 +\frac{2N}{nx_0}\exp\left(-\frac{nx^2_0}{4}\right), \quad \mbox{for any $x_0> 0$.}
\end{align}
Taking $x_0 \asymp_{K,p,M,\wb U,\underline L,\underline z} \sqrt{\frac{8\log N}{n}}$. Using Lemma \ref{lem:metric_entropy_2d} and some algebraic manipulation, we get
\begin{align}
\label{eq:2d_bdd_diff_1}
\mathbb E\Bigg[\max_{j \in \{1,\ldots,N\}}\left|V_j\left((S^2_1,A_1),\ldots,(S^2_n,A_n)\right)-\mathbb E\left[V_j\left((S^2_1,A_1),\ldots,(S^2_n,A_n)\right)\right]\right|\Bigg] \lesssim_{K,p,M,\wb U,\underline L,\underline z} \frac{(\log n)^{3/2}}{\sqrt{n}}.
\end{align}
Finally, we must control
\[
\max_{j \in \{1,\ldots,N\}}\mathbb E\left[V_j\left((S^2_1,A_1),\ldots,(S^2_n,A_n)\right)\right].
\]
In that direction, proceeding as in the proof of Lemma S9 of \cite{ignatiadis2025empirical} and using the definition of $V_1,\ldots,V_N$, we can get 
\begin{align}
\label{eq:its_n_3}
    &\mathbb E\left[V_j\left((S^2_1,A_1),\ldots,(S^2_n,A_n)\right)\right]\nonumber\\
    & \lesssim 2 \int_0^{\Bt_n}\int_{-\At_n}^{\At_n}\sup_{|z|\ge \underline z}\left|\int_{0}^{T}\frac{C_{K,p}\,(s^2)^{(K-p)/2-1}(t)^{-(K-p-1)}}{\sqrt{(K-p+1)t^2 - (K-p)s^2}} \times \right.\nonumber\\
    & \hskip 8em \left.\mathds{1} \left\{ t^2 \ge \frac{(K-p)s^2 + (z^2/\nu^2)}{K-p + 1} \right\}(f_{H,K-p+1}(t^2,a)-f_{H_j,K-p+1}(t^2,a))\,\dd t^2\right|\dd a\,\dd s^2\nonumber\\
    & \lesssim 2 \int_0^{\Bt_n}\int_{-\At_n}^{\At_n}\int_{0}^{T}\left|\frac{C_{K,p}\,(s^2)^{(K-p)/2-1}(t)^{-(K-p-1)}}{\sqrt{(K-p+1)t^2 - (K-p)s^2}} \times \right.\nonumber\\
    & \hskip 8em \left.\mathds{1} \left\{ t^2 \ge \frac{(K-p)s^2 + (\underline z^2/\nu^2)}{K-p + 1} \right\}(f_{H,K-p+1}(t^2,a)-f_{H_j,K-p+1}(t^2,a))\right|\,\dd t^2\,\dd a\,\dd s^2\nonumber\\
    & \lesssim_{K,p}\int_0^{\Bt_n} \Bigg\{\left(\int_{-\At_n}^{\At_n}\int_{0}^{T}\frac{(t^2)^{-(K-p-1)}}{(K-p+1)t^2 - (K-p)s^2}\mathds{1} \left\{ t^2 \ge \frac{(K-p)s^2 + (\underline z^2/\nu^2)}{K-p + 1} \right\}\,\dd t^2\,\dd a\right)\Bigg\}^{1/2}\nonumber\\
    &\hskip 15em \times (s^2)^{(K-p)/2-1} \left\| f_{{H}_j,K-p+1} - f_{H,K-p+1} \right\|_{L_2}\,\dd s^2\nonumber\\
    & \lesssim_{K,p,M,\wb U,\underline L,\underline z} \Bt_n\sqrt{\At_n}\times \left\| f_{{H}_j,K-p+1} - f_{H,K-p+1} \right\|_{L_2}\nonumber\\
    & \lesssim_{K,p,M,\wb U,\underline L,\underline z} \Bt_n\sqrt{\At_n}\times\mathcal H(f_{{H}_j,K-p},f_{H,K-p})\left|\log \mathcal H^2(f_{{H}_j,K-p},f_{H,K-p})\right|^{1/2}.
\end{align}
In the foregoing display, the penultimate inequality follows using the arguments similar to \eqref{eq:truncated_int_1_hell}.
Using Lemma \ref{lem:l_2_one_dof} and the definition of the class of densities $\mathcal A$, we can conclude from the above display that
\[
\mathbb E\left[V_j\left((A_1,S^2_1),\ldots,(A_n,S^2_n)\right)\right] \lesssim_{K,p,M,\wb U,\underline L,\underline z} \frac{(\log n)^{13/4}}{\sqrt{n}}, \quad \mbox{for all $j \in \{1,\ldots,N\}$.}
\]
Therefore,
\[
\max_{j \in \{1,\ldots,N\}}\mathbb E\left[V_j\left((A_1,S^2_1),\ldots,(A_n,S^2_n)\right)\right] \lesssim_{K,p,M,\wb U,\underline L,\underline z} \frac{(\log n)^{13/4}}{\sqrt{n}}.
\]
Combining the above display with \eqref{eq:reduction_2d_exp_max_2} and \eqref{eq:2d_bdd_diff_1}, the result follows.

\subsection{Proof of Lemma \ref{lemm:density_ratio_bound_2d}}
To show this theorem, recall $\mathcal S$, the $\eta=1/n$ cover of $\mathcal F_{\jt,\mathcal A}$ defined in \eqref{eq:def_f_joint} and $\mathfrak R=(0,\Bt_n] \times [-\At_n,\At_n]$ where $\At_n$ and $\Bt_n$ are defined in \eqref{eq:def_a_eps_here}. If $\mathcal A$ holds, then we can get an index $\wh j \in \{1,\ldots,N\}$ such that
    \[
    \sup_{(t^2,a) \in \mathfrak R}\left|f_{\wh H,K-p}(t^2,a)-f_{H_{\wh j},K-p}(t^2,a)\right| \le \frac{1}{n}.
    \]
    Let us further define the event
    \[
    \mathcal E:=\left\{|A_i| \le \At_n,\, S_i^2 \in (0,\Bt_n],~~~ \text{for all $i=1,\ldots,n$}\right\}.
    \]
    By the construction of $\At_n$ and $\Bt_n$ in \eqref{eq:def_a_eps_here}, one can get that $\mathbb P[\mathcal E^c] \le n^{-1}$.
    Now consider the decomposition
    \begin{align}
    \label{eq:diff_density_2d_master}
        &\frac{1}{n}\sum_{i=1}^n \mathbb{E}\!\left[
    \frac{|D_i(H)-D_i(\widehat H)|}{D_i(\widehat H_\star)}
    \,\mathds{1}(\mathcal A)\right]\\
    & \lesssim \mathbb P(\mathcal E^c)+\frac{2}{n}\sum_{i=1}^n\mathbb{E}\!\left[
    \frac{|D_i(H)-D_i(H_{\wh j})|}{D_i(H)+D_i(H_{\wh j})}
    \cdot\mathds{1}(\mathcal A \cap \mathcal E)\right]\\
    &~~~~~~~~~~+\frac{2}{n}\sum_{i=1}^n\mathbb{E}\!\left[
    \left|\frac{D_i(H)-D_i(H_{\wh j})}{D_i(H)+D_i(H_{\wh j})}-\frac{D_i(H)-D_i(\wh H)}{D_i(H)+D_i(\wh H)}\right|
    \cdot\mathds{1}(\mathcal A \cap \mathcal E)\right]\\
    & \lesssim \frac{1}{n}+\frac{2}{n}\sum_{i=1}^n\mathbb{E}\!\left[
    \frac{|D_i(H)-D_i(H_{\wh j})|}{D_i(H)+D_i(H_{\wh j})}
    \cdot\mathds{1}(\mathcal A \cap \mathcal E)\right]\\
    &~~~~~~~~~~+\frac{2}{n}\sum_{i=1}^n\mathbb{E}\!\left[
    \left|\frac{D_i(H)-D_i(H_{\wh j})}{D_i(H)+D_i(H_{\wh j})}-\frac{D_i(H)-D_i(\wh H)}{D_i(H)+D_i(\wh H)}\right|
    \cdot\mathds{1}(\mathcal A \cap \mathcal E)\right]
    \end{align}
    Proceeding as in the proof of Lemma S10 of \cite{ignatiadis2025empirical}, we can show that
    \[
    \left|\frac{D_i(H)-D_i(H_{\wh j})}{D_i(H)+D_i(H_{\wh j})}-\frac{D_i(H)-D_i(\wh H)}{D_i(H)+D_i(\wh H)}\right| \lesssim \frac{\eta}{D_i(H)}. 
    \]
    Using the above relation and the definition of $\mathcal E$, we can conclude that
    \begin{align}
        \label{eq:diff_density_2d_slave}
        \frac{2}{n}\sum_{i=1}^n\mathbb{E}\!\left[
    \left|\frac{D_i(H)-D_i(H_{\wh j})}{D_i(H)+D_i(H_{\wh j})}-\frac{D_i(H)-D_i(\wh H)}{D_i(H)+D_i(\wh H)}\right|
    \cdot\mathds{1}(\mathcal A \cap \mathcal E)\right] \lesssim \eta~B_\eta A_\eta \lesssim_{K,p,M,\wb U,\underline L}   \frac{(\log n)^{3/2}}{n}. 
    \end{align}
    Next, we consider 
    \[
    \frac{1}{n}\sum_{i=1}^n\mathbb{E}\!\left[
    \frac{\left|D_i(H)-D_i(H_{\wh j})\right|}{D_i(H)+D_i(H_{\wh j})}
    \,\mathds{1}(\mathcal A \cap \mathcal E)\right],
    \]
    which can be upper-bounded by
    \[
    \frac{1}{n}\sum_{i=1}^n\mathbb{E}\!\left[\max_{j \in \{1,\ldots,N\}}
    \frac{\left|D_i(H)-D_i(H_{j})\right|}{D_i(H)+D_i(H_{j})}
    \,\mathds{1}(\mathcal A \cap \mathcal E)\right]
    \]
    We can decompose the above expression as
    \begin{align}
    &\frac{1}{n}\sum_{i=1}^n\mathbb{E}\!\left[\max_{j \in \{1,\ldots,N\}}
    \left|\frac{D_i(H)-D_i(H_{j})}{D_i(H)+D_i(H_{j})}-\mathbb E\left[\frac{D_i(H)-D_i(H_{j})}{D_i(H)+D_i(H_{j})}\right]\right|\right]\\
    &~~~~~~~~+\max_{j \in \{1,\ldots,N\}}\left\{\frac{1}{n}\sum_{i=1}^n\mathbb E\left[\left|\frac{D_i(H)-D_i(H_{j})}{D_i(H)+D_i(H_{ j})}\right|\right]\right\}.
    \end{align}
    Using the definition of $\mathcal S$ along with the techniques used in the proof of Lemma S10 of \cite{ignatiadis2025empirical} and Theorem \ref{thm:double_hellinger}, we can show that
    \begin{align}
    \label{eq:diff_density_2d_slave_2}
     \max_{j \in \{1,\ldots,N\}}\left\{\frac{1}{n}\sum_{i=1}^n\mathbb E\left[\left|\frac{D_i(H)-D_i(H_{j})}{D_i(H)+D_i(H_{ j})}\right|\right]\right\} \lesssim_{K,p,M,\wb U,\underline L}  \max_{j \in \{1,\ldots,N\}}\mathcal{H}\left(f_{H,K-p}, f_{H_j,K-p}\right) \lesssim \frac{(\log n)^{3/2}}{\sqrt{n}},
    \end{align}
    and 
    \[
    \frac{1}{n}\sum_{i=1}^n\mathbb{E}\!\left[\max_{j \in \{1,\ldots,N\}}
    \left|\frac{D_i(H)-D_i(H_{\wh j})}{D_i(H)+D_i(H_{\wh j})}-\mathbb E\left[\frac{D_i(H)-D_i(H_{j})}{D_i(H)+D_i(H_{j})}\right]\right|\right] \lesssim_{K,p,M,\wb U,\underline L}   \frac{(\log n)^{3/2}}{\sqrt{n}}.
    \]
    Combining the above two displays with \eqref{eq:diff_density_2d_master} and \eqref{eq:diff_density_2d_slave}, the lemma follows.

\subsection{Proof of Proposition \ref{prop:valid_p_values}}
\label{sec:proof_valid_p_joint}
The proof of this proposition follows by combining the steps used to prove Proposition 15 of \cite{ignatiadis2025empirical} and using Proposition \ref{prop:asymp_p_val}.
\subsection{Proof of Theorem \ref{thm:final_rate}}
\label{sec:proof_fdr_joint}

Let $\widehat F_n^\jt$ denote the empirical distribution function of the empirical Bayes $p$-values $\{\widehat P_i^\jt\}_{i=1}^n$. Define the data-driven Benjamini–Hochberg threshold
\begin{align}
\label{eq:bh_threshold_est}
\wh t_\jt := \sup\left\{ t \in (0,1) : \frac{t}{\widehat F_n^\jt(t) \vee n^{-1}} \le \alpha \right\}.
\end{align}
At nominal level $\alpha \in (0,1)$, the Benjamini–Hochberg procedure rejects hypothesis $j \in \{1,\ldots,n\}$ whenever $\widehat P_j^\jt \le \wh t_\jt$.
Analogously, let $F_n^\jt$ denote the empirical distribution function of the oracle $p$-values $\{P^\jt_i\}_{i=1}^n$, and define the corresponding oracle threshold
\begin{align}
\label{eq:bh_threshold_oracle}
\wh t_{\jt,\orc} := \sup\left\{ t \in (0,1) : \frac{t}{F_n^\jt(t) \vee n^{-1}} \le \alpha \right\}.
\end{align}
Define the associated empirical processes
\begin{align}
\label{eq:emp_prc_jt_fdr}
V_{n,\jt}(t) := \sum_{i \in \mathcal H_0} \mathds{1}\{\widehat P_i^\jt \le t\}, \qquad R_{n,\jt}(t) := \sum_{i=1}^n \mathds{1}\{\widehat P_i^\jt \le t\},
\end{align}
where $\mathcal H_0$ denotes the set of true null hypotheses and $\pi_0 := |\mathcal H_0|/n$.
Let $V_n^\jt := V_{n,\jt}(\wh t_\jt)$ and $R_n^\jt := R_{n,\jt}(\wh t_\jt)$. The false discovery rate (FDR) of the procedure is given by
\begin{align}
\fdr_n^\jt := \mathbb E\left[ \frac{V_{n,\jt}(\wh t_\jt)}{R_{n,\jt}(\wh t_\jt) \vee 1} \right].
\end{align}

To establish Theorem~\ref{thm:final_rate}, we rely on the following auxiliary results.

\begin{lemm}[Lemma S11 of \citet{ignatiadis2025empirical}]
\label{lem:s11}
Fix $\zeta \in (\alpha,1)$ and let $t \in (t_0,t_1)$, where $t_0$ and $t_1$ are defined in Assumption~\ref{asm:1d_limma_trend_bh}. Suppose $\delta \in (0, t_0 \wedge (t_1 - t_0))$. Then, for each $i \in \{1,\ldots,n\}$,
\[
\mathds{1}(\wh P_i^\jt \le t) - \mathds{1}(P_i^\jt \le t+\delta)
\le \frac{1}{\delta} \left| \wh P_i^\jt \wedge \zeta - P_i^\jt \wedge \zeta \right|.
\]
Similarly, we have for each $i \in \{1,\ldots,n\}$,
\[
\mathds{1}(\wh P_i^\jt \le t) - \mathds{1}(P_i^\jt \le t-\delta)
\ge -\frac{1}{\delta} \left|\wh P_i^\jt \wedge \zeta - P_i^\jt \wedge \zeta \right|.
\]
\end{lemm}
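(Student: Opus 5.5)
This is a deterministic pointwise inequality, so I would prove it by direct case analysis on the indicators, with no probability involved. Write $a:=\wh P_i^\jt$ and $b:=P_i^\jt$, both in $[0,1]$. The facts I will use are $t+\delta<t_1<\alpha<\zeta$, and for the second inequality $t-\delta>t_0-\delta>0$. These follow from $t\in(t_0,t_1)$, $\delta<t_1-t_0$ (so $t+\delta<t_0+t_1-t_0+(t-t_0)$; more carefully $t+\delta< t_1+\delta$, and it suffices that $t+\delta\le\zeta$), and $t_1<\alpha<\zeta$.

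For the upper bound, the left-hand side $\mathds 1(a\le t)-\mathds 1(b\le t+\delta)$ lies in $\{-1,0,1\}$. It is positive only when $a\le t$ and $b>t+\delta$; in every other case it is $\le 0$ and the bound is trivial. In the positive case I compare the truncated values. Since $a\le t<\zeta$, we have $a\wedge\zeta=a\le t$. Also $b\wedge\zeta\ge (t+\delta)\wedge\zeta$. Hence $|a\wedge\zeta-b\wedge\zeta|\ge (t+\delta)\wedge\zeta - t$. If $t+\delta\le\zeta$, this is at least $\delta$, so the right-hand side is at least $1$, which equals the left-hand side.

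The one point to check is that $t+\delta\le\zeta$. This holds because $t+\delta<t_1+\delta$, and $\delta<t_0$ gives $t+\delta<t_1+t_0$, which is not obviously below $\zeta$. So I would instead use $\delta<t_1-t_0$ to get $t+\delta<t+t_1-t_0$, which is also not directly conclusive. I expect this to be the main obstacle, and I would resolve it as follows: if $\zeta< t+\delta$, the gap is $\zeta-t\ge \zeta-t_1>0$. The bound then still holds after replacing $\delta$ by $\min\{\delta,\zeta-t_1\}$, and I would argue (as Lemma S11 of \citet{ignatiadis2025empirical} implicitly does) that the stated constraints are intended to ensure $t+\delta\le\zeta$, e.g.\ by taking $\zeta\ge t_1+t_0$. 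In the application $\zeta>\alpha>t_1$ and $\delta\to0$, so this is harmless.

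For the lower bound, $\mathds 1(a\le t)-\mathds 1(b\le t-\delta)$ is negative only if $a>t$ and $b\le t-\delta$. Then $b\wedge\zeta=b\le t-\delta$ (using $t-\delta<\zeta$) and $a\wedge\zeta\ge t\wedge\zeta=t$ (using $t<t_1<\zeta$). So the difference of truncations is at least $\delta$, which gives $-1\ge-\frac1\delta|a\wedge\zeta-b\wedge\zeta|$. In all other cases the left-hand side is $\ge0$. Positivity of $t-\delta$ is not even needed; it only guarantees the event is nontrivial.
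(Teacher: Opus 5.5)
Your case analysis is the right argument, and it is essentially all there is to the cited Lemma S11 of \citet{ignatiadis2025empirical}; the paper gives no proof beyond that citation. Your treatment of the second inequality is complete. In fact it uses only $t\le\zeta$, with any $\delta>0$. That also covers the paper's own use at $t=t_1$ in the proof of Theorem~\ref{thm:final_rate}, even though $t_1\notin(t_0,t_1)$.

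The difficulty you hit with the first inequality is real. It is a defect of the statement as transcribed, not of your proof. The hypotheses $t\in(t_0,t_1)$, $\delta<t_0\wedge(t_1-t_0)$ and $\zeta\in(\alpha,1)$ only give $t+\delta<t_1+\min\{t_0,t_1-t_0\}$. That quantity can exceed $\zeta$, and then the inequality fails. For example, take $\alpha=0.1$, $\zeta=0.105$, $t_0=0.04$, $t_1=0.09$, $\delta=0.03$ and $t=0.085$. The values $\widehat P_i=0.085$ and $P_i=0.5$ make the left-hand side equal to $1$, while the right-hand side is $(\zeta-t)/\delta=2/3$.

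So you should neither ``argue intent'' nor replace $\delta$ by $\min\{\delta,\zeta-t_1\}$. The latter proves a different inequality, with $1/\min\{\delta,\zeta-t\}$ in place of $1/\delta$. Instead, state the first inequality under the extra hypothesis $t+\delta\le\zeta$, and your argument goes through verbatim. Alternatively, require $\zeta\ge t_1+\min\{t_0,t_1-t_0\}$, which makes $t+\delta\le\zeta$ automatic on the stated range.

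This corrected form is exactly how the paper uses the upper bound. The proof of Lemma~\ref{lem:approx_unif_mis_strong} invokes it only for $t\le\zeta'-\delta$. The FDR proofs use only the lower bound. Nothing downstream is affected.

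Finally, delete your opening claim that $t+\delta<t_1$; it is false whenever $t$ is close to $t_1$. Delete the garbled parenthetical after it as well. The only facts the proof needs are $t\le\zeta$ for the second inequality and $t+\delta\le\zeta$ for the first.
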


\begin{lemm}[Lemma S12 of \citet{ignatiadis2025empirical} (Bretagnolle--Dvoretzky--Kiefer--Wolfowitz)]
\label{lem:bdkw}
Let $U_{i,n} \in [0,1]$, $i=1,\ldots,n$, be independent random variables (not necessarily identically distributed). Then, for every $\varepsilon \ge 0$,
\[
\mathbb P\left( \sup_{t \in [0,1]} \left| \frac{1}{n} \sum_{i=1}^n \mathds{1}\{U_{i,n} \le t\} - \mathbb P(U_{i,n} \le t) \right| \ge \varepsilon \right) \le 2 \exp(-n\varepsilon^2).
\]
In addition,
\[
\mathbb E\left[ \sup_{t \in [0,1]} \left| \frac{1}{n} \sum_{i=1}^n \mathds{1}\{U_{i,n} \le t\} - \mathbb P(U_{i,n} \le t) \right| \right] \le \sqrt{\frac{2e}{n}}.
\]
\end{lemm}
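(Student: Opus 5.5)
Since this is Lemma S12 of \citet{ignatiadis2025empirical}, restated for self-containedness, the plan is to establish the tail inequality first and then read off the expectation bound from it by integrating the tail. Write $\bar F(t) := n^{-1}\sum_{i=1}^n \mathbb P(U_{i,n} \le t)$ and $\widehat F(t) := n^{-1}\sum_{i=1}^n \mathds 1\{U_{i,n}\le t\}$.

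\emph{Reduction to a standard representation.} Let $F_i$ be the distribution function of $U_{i,n}$. Take $V_1,\ldots,V_n$ i.i.d.\ $\mathrm{Unif}[0,1]$ and use the quantile representation $U_{i,n} \overset{d}{=} F_i^{-1}(V_i)$, which gives $\{U_{i,n}\le t\}=\{V_i \le F_i(t)\}$. The supremum over $t\in[0,1]$ then becomes a supremum over the monotone curve $t\mapsto (F_1(t),\ldots,F_n(t))$ in $[0,1]^n$. Right-continuity of the $F_i$ means this supremum can be taken over a countable set, so it is measurable.

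\emph{Tail inequality.} If the $F_i$ all coincide, $\widehat F$ is an i.i.d.\ empirical distribution function and Massart's sharp DKW inequality gives $2e^{-2n\varepsilon^2}$, which is stronger than claimed. For non-identical $F_i$ I would invoke Bretagnolle's extension of DKW to independent, non-identically distributed variables, which yields precisely the weaker constant $2e^{-n\varepsilon^2}$. In outline, that argument handles the two one-sided deviations $\sup_t(\widehat F-\bar F)$ and $\sup_t(\bar F-\widehat F)$ separately and bounds each by $e^{-n\varepsilon^2}$. It uses an exponential supermartingale in $t$ (a stopping-time argument at the first $t$ where the deviation exceeds $\varepsilon$) together with a Bernstein/Hoeffding-type moment generating function bound for sums of independent Bernoulli variables with heterogeneous success probabilities; a union bound over the two sides then gives the factor $2$.

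\emph{Expectation bound.} Write $W:=\sup_t|\widehat F(t)-\bar F(t)| \in [0,1]$. By the tail inequality,
\[
\mathbb E[W]=\int_0^\infty \mathbb P(W\ge \varepsilon)\,\dd\varepsilon \le \int_0^\infty 2e^{-n\varepsilon^2}\,\dd\varepsilon = \sqrt{\pi/n} \le \sqrt{2e/n},
\]
since $\pi<2e$. A cruder route also gives $\sqrt{2e/n}$: bound $\mathbb P(W\ge\varepsilon)$ by $1$ up to a cutoff $\varepsilon_0$ and use the exponential tail beyond it, then optimize $\varepsilon_0$.

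\emph{Main obstacle.} The hard step is the tail inequality with an $n$-free constant in the heterogeneous case. Symmetrization combined with a VC/chaining bound for half-lines gives only $\mathbb E[W]\lesssim\sqrt{\log n/n}$, or $C/\sqrt n$ with a loose $C$. McDiarmid's inequality controls only $W-\mathbb E W$ and not $W$ itself. I would therefore not reprove this step and would cite Bretagnolle's theorem directly, as \citet{ignatiadis2025empirical} do. Only if a self-contained argument were required would I attempt the one-sided martingale/stopping argument above.
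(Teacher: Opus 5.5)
Your plan is correct and follows the paper's route. The paper gives no proof of its own: it imports both parts verbatim as Lemma S12 of \citet{ignatiadis2025empirical}, so the heterogeneous DKW tail bound rests on Bretagnolle's inequality, which you likewise cite rather than reprove. Your expectation bound, $\int_0^\infty 2e^{-n\varepsilon^2}\,\dd\varepsilon=\sqrt{\pi/n}\le\sqrt{2e/n}$, is valid and slightly sharper; note that the supermartingale/stopping-time outline is only a heuristic and the argument rests entirely on the citation.
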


\begin{proof}[Proof of Theorem \ref{thm:final_rate}]
To prove Theorem \ref{thm:final_rate}, we adopt a leave-one-out argument in the spirit of \citet{ignatiadis2025empirical}. 
For each $i\in\{1,\ldots,n\}$, let $R^\jt_{n,i}$ denote the number of discoveries produced by the BH procedure applied to the leave-one-out p-values $\{\wh P^{\jt,-i}_j\}_{j=1}^n$, obtained from the modified data in which $Z_i$ is replaced by
\[
Z^\infty_i=\begin{cases}
\infty & \text{if $Z_i \ge 0$,}\\
-\infty & \text{if $Z_i < 0$.}
\end{cases}
\]
Note that the estimated prior $\wh H$ remains unchanged across these modified datasets, since $\wh H$ is computed only from the pairs $\{(S_i^2,A_i)\}_{i=1}^n$. 
Using arguments analogous to the proof of Lemma C.1 of \citet{roquain_verzelen}, we have that if the $i$-th hypothesis is rejected, then $R^\jt_n = R^\jt_{n,i}\ge 1$. Hence, for any $i\in\mathcal H_0$,
\begin{align}
\mathbb E_{H}\!\left[\frac{\mathds{1}\{\wh P^\jt_i \le \alpha R^\jt_n/n\}}{R^\jt_n \vee 1}\right]
&=
\mathbb E_{H}\!\left\{\mathbb E_{H}\!\left[\frac{\mathds{1}\{\wh P^\jt_i \le \alpha R^\jt_n/n\}}{R^\jt_n \vee 1}
~\Bigg|~(S_1^2,A_1), \ldots, (S_n^2,A_n), R^\jt_{n,i}\right]\right\}
\nonumber\\
&=
\mathbb E_{H}\!\left\{\mathbb E_{H}\!\left[\frac{\mathds{1}\{\wh P^\jt_i \le \alpha R^\jt_{n,i}/n\}}{R^\jt_{n,i}}
~\Bigg|~(S_1^2,A_1), \ldots, (S_n^2,A_n), R^\jt_{n,i}\right]\right\},
\end{align}
almost surely.

Let $\zeta:=\max\{3/4,(1+\alpha)/2\}$ and $\underline z=\underline L^{1/2}z_{1-\zeta/2}$. Define
\[
\Delta_i
:=
\sup_{z:|z|>\underline z}
\left|
\PjtFun(z,S^2_i,A_i;\wh H) - \PjtFun(z,S^2_i,A_i;H)
\right|.
\]
Repeating the arguments in the proof of Theorem 17 of \citet{ignatiadis2025empirical}, we obtain that for all $i\in\mathcal H_0$,
\[
\mathbb E_{H}\!\left[\frac{\mathds{1}\{\wh P^\jt_i \le \alpha R^\jt_{n,i}/n\}}{R^\jt_{n,i}}
~\Bigg|~ (S_1^2,A_1), \ldots, (S_n^2,A_n),R^\jt_{n,i}\right]
\le \frac{\alpha}{n}+\frac{\Delta_i}{R^\jt_{n,i}},
\qquad \text{a.s.}
\]
Moreover, by construction of the leave-one-out p-values, $R^\jt_{n,i}\ge R^\jt_n$ for all $i\in\mathcal H_0$. 
Retracing the remainder of the proof of Theorem 17 of \citet{ignatiadis2025empirical}, we conclude that
\begin{align}
\fdr^\jt_n
&=
\mathbb E_{H}\!\left[\sum_{i \in \mathcal H_0}\frac{\mathds{1}\{\wh P^\jt_i \le \alpha R^\jt_n/n\}}{R^\jt_n \vee 1}\right]
\nonumber\\
&\le
\frac{n_0}{n}\alpha
+
\mathbb E_H\!\left[\min\left\{\sum_{i \in \mathcal H_0}\frac{\Delta_i}{R^\jt_{n,i}},1\right\}\right]
\nonumber\\
&\le
\frac{n_0}{n}\alpha
+
\mathbb E_H\!\left[\min\left\{\sum_{i \in \mathcal H_0}\frac{\Delta_i}{R^\jt_{n}},1\right\}\right].
\end{align}
Consequently,
\[
\left(\fdr^\jt_n-\frac{n_0}{n}\alpha\right)_+
\le
\mathbb E_H\!\left[\min\left\{\sum_{i \in \mathcal H_0}\frac{\Delta_i}{R^\jt_{n}},1\right\}\right].
\]

Fix any sequence $\kappa_n\in(0,1)$ such that
\[
\mathbb P\!\left[R^\jt_n <n\kappa_n\right] \le \eta_n,
\qquad \eta_n\in(0,1).
\]
Then, using Proposition \ref{prop:asymp_p_val} together with the definition of $\zeta$ and $\underline z$, we have
\begin{align}
\label{eq:start_rate_fdr}
\left(\fdr_n^\jt-\frac{n_0}{n}\alpha\right)_+
&\le
\mathbb P[R^\jt_n < n\kappa_n]
+
\mathbb E\!\left[\sum_{i \in \mathcal H_0}\frac{\Delta_i}{n\kappa_n}\right]
\nonumber\\
&\lesssim_{\underline L,\wb U,M,K,p,\zeta}
\eta_n+\frac{(\log n)^{13/4}}{\kappa_n\sqrt{n}}.
\end{align}

It remains to choose $\kappa_n$ and $\eta_n$. 
Take any $t\in[t_0,\alpha]$ and $\delta\in(0,t_0\wedge(t_1-t_0))$, where $t_0,t_1$ are as in Assumption \ref{asm:1d_limma_trend_bh}. 
By Lemma \ref{lem:s11},
\begin{align}
\frac{R_{n,\jt}(t)}{n}
&=
\frac{1}{n}\sum_{i=1}^n\mathds{1}\{\wh P^\jt_i \le t\}
\nonumber\\
&\ge
\frac{1}{n}\sum_{i=1}^n\mathds{1}\{P^\jt_i \le t-\delta\}
-\frac{1}{n\delta}\sum_{i=1}^n\left|\wh P_i^\jt \wedge \zeta - P_i^\jt \wedge \zeta \right|.
\end{align}
In particular, taking $t=t_1$ yields
\begin{align}
\frac{R_{n,\jt}(t_1)}{n}
&=
\frac{1}{n}\sum_{i=1}^n\mathds{1}\{\wh P^\jt_i \le t_1\}
\nonumber\\
&\ge
\frac{1}{n}\sum_{i=1}^n\mathds{1}\{P^\jt_i \le t_1-\delta\}
-\frac{1}{n\delta}\sum_{i=1}^n\left|\wh P_i^\jt \wedge \zeta - P_i^\jt \wedge \zeta \right|
\nonumber\\
&=
\frac{1}{n}\sum_{i=1}^n\mathbb P\!\left[P^\jt_i \le t_1-\delta\right]
-\frac{1}{n\delta}\sum_{i=1}^n\left|\wh P_i^\jt \wedge \zeta - P_i^\jt \wedge \zeta \right|
\nonumber\\
&\quad
+\frac{1}{n}\sum_{i=1}^n\left(\mathds{1}\{P^\jt_i \le t_1-\delta\}-\mathbb P\!\left[P^\jt_i \le t_1-\delta\right]\right).
\label{eq:Rjt_lower_decomp}
\end{align}

By Assumption \ref{asm:1d_limma_trend_bh}, there exists
\[
\alpha_{0,\jt} \in \left(\liminf_{n\to\infty}\inf_{t \in [t_0,t_1]}\frac{1}{nt}\sum_{i=1}^n\mathbb P\!\left[P^\jt_i \le t\right],\alpha\right)
\]
such that
\begin{align}
\liminf_{n\to\infty}\frac{1}{n}\sum_{i=1}^n\mathbb P\!\left[P^\jt_i \le t_1-\delta\right]
\ge (t_1-\delta)\cdot\liminf_{n\to\infty}\inf_{t \in [t_0,t_1]}\frac{1}{nt}\sum_{i=1}^n\mathbb P\!\left[P^\jt_i \le t\right] \ge
\frac{t_1-\delta}{\alpha_{0,\jt}},
\end{align}
Equivalently, for any $\mathfrak P>0$ there exists $n_0(\mathfrak P)\ge 1$ such that for all $n\ge n_0(\mathfrak P)$,
\[
\frac{1}{n}\sum_{i=1}^n\mathbb P\!\left[P^\jt_i \le t_1-\delta\right]
\ge
\frac{t_1-\delta}{\alpha_{0,\jt}}+\mathfrak P.
\]

Define the random error term
\[
X_n
:=
\frac{1}{n\delta}\sum_{i=1}^n\left|\wh P_i^\jt \wedge \zeta - P_i^\jt \wedge \zeta \right|
+
\sup_{t\in(0,1)}
\left|
\frac{1}{n}\sum_{i=1}^n\left(\mathds{1}\{P^\jt_i \le t\}-\mathbb P\!\left[P^\jt_i \le t\right]\right)
\right|.
\]
By Markov's inequality together with Proposition \ref{prop:asymp_p_val} and Lemma \ref{lem:bdkw}, there exists a constant $C>0$ such that
\[
\mathbb P\!\left[X_n>\frac{\delta}{4\alpha_{0,\jt}}\right]
\le
\frac{4\alpha_{0,\jt}}{\delta}\,\mathbb E[X_n]
\le
C\cdot \frac{4\alpha_{0,\jt}\,(\log n)^{13/4}}{\delta^2\sqrt{n}}.
\]
On the event $\{X_n\le \delta/(4\alpha_{0,\jt})\}$, combining with \eqref{eq:Rjt_lower_decomp} and the deterministic lower bound above gives, for all $n\ge n_0(\mathfrak P)$,
\[
\frac{R_{n,\jt}(t_1)}{n}
\ge
\frac{t_1-\delta}{\alpha_{0,\jt}}+\mathfrak P-\frac{\delta}{4\alpha_{0,\jt}}.
\]
Now set
\[
\delta:=\frac{4t_1(\alpha-\alpha_{0,\jt})}{5\alpha},
\]
so that $(t_1-\delta)/\alpha_{0,\jt}-\delta/(4\alpha_{0,\jt})=t_1/\alpha$ and hence, for all $n\ge n_0(\mathfrak P)$,
\[
\frac{R_{n,\jt}(t_1)}{n}
\ge
\frac{t_1}{\alpha}+\mathfrak P
\ge
\frac{t_1}{\alpha}
\qquad \text{on the event }\left\{X_n\le\frac{\delta}{4\,\alpha_{0,\jt}}\right\}.
\]

By the self-consistency characterization of the BH threshold $\wh t_\jt$ at level $\alpha$,
\[
\wh t_\jt:=\sup\left\{t\in[0,1]:\frac{R_{n,\jt}(t)}{n}\ge \frac{t}{\alpha}\right\},
\]
the inequality $R_{n,\jt}(t_1)/n \ge t_1/\alpha$ implies $\wh t_\jt \ge t_1$. Since $R_{n,\jt}(t)$ is non-decreasing in $t$, it follows that
\[
\frac{R^\jt_n}{n}
=
\frac{R_{n,\jt}(\wh t_\jt)}{n}
\ge
\frac{R_{n,\jt}(t_1)}{n}
\ge
\frac{t_1}{\alpha},
\]
on the same event. Consequently, taking
\[
\kappa_n:=\frac{t_1}{\alpha},
\qquad
\eta_n:=C\cdot \frac{\alpha_{0,\jt}\,(\log n)^{13/4}}{\delta\sqrt{n}},
\]
and substituting into \eqref{eq:start_rate_fdr}, we obtain for all $n\ge n_0(\mathfrak P)$,
\begin{align}
\label{eq:start_fdr_2}
\left(\fdr^\jt_n-\frac{n_0}{n}\alpha\right)_+
&\lesssim_{\underline L,\wb U,M,K,p,\zeta}
C\cdot \frac{4\alpha_{0,\jt}\,(\log n)^{13/4}}{\delta^2\sqrt{n}}
+\frac{\alpha(\log n)^{13/4}}{t_1\sqrt{n}}.
\end{align}
In particular, for any $\varrho>13/4$,
\[
\sqrt{n}(\log n)^{-\varrho}\left(\fdr^\jt_n-\frac{n_0}{n}\alpha\right)_+
\longrightarrow 0.
\]
\end{proof}

\subsection{Power properties of the Benjamini Hochberg procedure using the estimated \jtlitrd{} p-values}
\label{sec:power_jt_bh}
To analyze the asymptotic power of the BH procedure, we impose a stronger assumption on the number of true null hypotheses
$n_0 := \operatorname{card}(\mathcal H_0)$ (where $\mathcal H_0$ denotes the subset of true null hypotheses) and on the mixing distribution, analogous to Assumption~8 of \citet{ignatiadis2025empirical}.

\begin{assumption}
\label{asm:2d_limma_trend_bh}
As $n \to \infty$, assume that $\frac{n_0}{n} \to \pi_0 \in (0,1)$, and that
\[
\frac{1}{n}\sum_{i=1}^n\mathbb P_{H}\!\left[
P^\jt_i \le t\right]\;\longrightarrow\;H_\infty^{\jt}(t),
\qquad t \in (0,1),
\]
for some distribution function $H_\infty^{\jt}$. Define
\[
t_\infty^{\jt}
:=
\sup\left\{
t \in (0,1): \frac{t}{H_\infty^{\jt}(t)} \le \alpha
\right\}.
\]
We further assume that $t_\infty^{\jt} \in (0,\alpha)$ and that the mapping
$t \mapsto t / H_\infty^{\jt}(t)$ is strictly increasing in a neighborhood of $t_\infty^{\jt}$.
\end{assumption}

The first condition ensures that the proportion of true null hypotheses is not asymptotically negligible, thereby yielding a nontrivial number of rejections. The second condition guarantees stability of the BH threshold by requiring local uniqueness of the solution to the defining inequality. More general versions of this assumption appear in Theorem~4 of \citet{storey_2004}, where the null proportion $\pi_0$ is estimated from the data and incorporated into the BH procedure. Related regularity conditions have also been studied by \citet{ferreira_zwinderman} and \citet{DuZhang2014SIM}.

We consider the following measures of power:
\begin{align}
\mathrm{Pow}_n^{\jt}
&:= \mathbb E\!\left[\frac{R_n^{\jt}-V_n^{\jt}}{n-n_0}\right],
\qquad
\mathrm{FNDR}_n^{\jt}
:= \mathbb E\!\left[
\frac{(\mathcal D_n^{\jt})^c \cap \mathcal H_0^c}{(n-R_n^{\jt}) \vee 1}
\right],
\end{align}
where $\mathcal D_n^{\jt}$ denotes the rejection set produced by the BH procedure applied to the estimated p-values $\{\widehat P_i^{\jt}\}$. The first measure of power was introduced in \citet{ferreira_zwinderman}, while the second was proposed by \citet{GenoveseWasserman2002FDR}. Both quantities were analyzed for the empirical Bayes \texttt{limma} procedure in \citet{ignatiadis2025empirical}.

For comparison, we define the corresponding quantities for the BH procedure applied to the oracle p-values $\{P_i^{\jt}\}$:
\begin{align}
\mathrm{Pow}_n^{\jt,\orc}
&:= \mathbb E\!\left[\frac{R_n^{\jt,\orc}-V_n^{\jt,\orc}}{n-n_0}\right],
\qquad
\mathrm{FNDR}_n^{\jt,\orc}
:= \mathbb E\!\left[
\frac{(\mathcal D_n^{\jt,\orc})^c \cap \mathcal H_0^c}{(n-R_n^{\jt,\orc}) \vee 1}
\right],
\end{align}
where $V_n^{\jt,\orc}$ and $R_n^{\jt,\orc}$ denote the numbers of false discoveries and total discoveries, respectively, produced by the BH procedure using oracle p-values, and $\mathcal D_n^{\jt,\orc}$ is the corresponding rejection set.

We now state our main result on asymptotic power equivalence.

\begin{theorem}
\label{thm:2d_power}
Under Assumptions~\ref{assu:compact_2d} and \ref{asm:2d_limma_trend_bh},
\[
\limsup_{n \to \infty}
\left|\mathrm{Pow}_n^{\jt}-\mathrm{Pow}_n^{\jt,\orc}\right| = 0,
\qquad
\limsup_{n \to \infty}
\left|\mathrm{FNDR}_n^{\jt}-\mathrm{FNDR}_n^{\jt,\orc}\right| = 0.
\]
\end{theorem}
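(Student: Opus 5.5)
The plan is to follow the argument behind the analogous power result for empirical Bayes \texttt{limma} in \citet{ignatiadis2025empirical}: show that both the estimated and the oracle BH thresholds converge in probability to the deterministic limit $t_\infty^{\jt}$. Once that is done, the numbers of true discoveries and of non-discoveries, each normalized by $n$, agree asymptotically between the two procedures. Two earlier results supply the raw ingredients: Proposition~\ref{prop:asymp_p_val}, which bounds $n^{-1}\sum_i \mathbb E|\wh P_i^\jt\wedge\zeta - P_i^\jt\wedge\zeta|\lesssim (\log n)^{13/4}/\sqrt n$, and Lemma~\ref{lem:bdkw}, which gives uniform concentration of empirical distribution functions of independent p-values. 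Fix $\zeta=\max\{3/4,(1+\alpha)/2\}$ throughout.

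\textbf{Step 1 (oracle processes).} By Lemma~\ref{lem:bdkw} and Assumption~\ref{asm:2d_limma_trend_bh}, $\sup_t|F_n^\jt(t)-H_\infty^\jt(t)|\to 0$ in probability. I will apply the same lemma to the alternative units separately, with pointwise limits obtained along subsequences (Helly) or assumed via the convergence of the averages. This gives convergence of $n^{-1}\sum_{i\notin\mathcal H_0}\mathds 1\{P_i^\jt\le t\}$ to a deterministic limit $H_{1,\infty}(t)$ uniformly on a neighbourhood of $t_\infty^\jt$, where the limit is continuous. \textbf{Step 2 (transfer to estimated p-values).} Using Lemma~\ref{lem:s11} with $\delta_n=((\log n)^{13/4}/\sqrt n)^{1/2}$, sandwich $\wh F_n^\jt(t)$ between $F_n^\jt(t\mp\delta_n)$ up to the error $\delta_n^{-1}n^{-1}\sum_i|\wh P_i^\jt\wedge\zeta-P_i^\jt\wedge\zeta|$. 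By Markov's inequality and Proposition~\ref{prop:asymp_p_val}, this error is $o_p(1)$. Therefore $\sup_{t\le \zeta-\delta}|\wh F_n^\jt(t)-H_\infty^\jt(t)|\to 0$ in probability, by continuity of $H_\infty^\jt$ near the relevant range. The same sandwich applied to the alternative-only counts gives convergence of their estimated versions.

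\textbf{Step 3 (threshold consistency).} This is the main obstacle. The BH threshold is a sup-functional of $t/\wh F_n^\jt(t)$, and it is continuous only under the local strict monotonicity in Assumption~\ref{asm:2d_limma_trend_bh}. For small $\epsilon>0$, strict monotonicity gives $(t_\infty^\jt+\epsilon)/H_\infty^\jt(t_\infty^\jt+\epsilon)>\alpha$, while $(t_\infty^\jt-\epsilon)/H_\infty^\jt(t_\infty^\jt-\epsilon)<\alpha$. The definition of $t_\infty^\jt$ as a supremum gives $t/H_\infty^\jt(t)>\alpha$ on all of $(t_\infty^\jt+\epsilon,1)$. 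To upgrade this to a uniform margin on $[t_\infty^\jt+\epsilon,\zeta]$, I will use compactness and the fact that $t\mapsto t/H^\jt_\infty(t)$ is lower semicontinuous there. For $t>\zeta$, note that $\wh t_\jt\le\alpha<\zeta$ always, so that range is irrelevant. Combining the margin with the uniform convergence from Step 2 yields $\wh t_\jt\to t_\infty^\jt$ and $\wh t_{\jt,\orc}\to t_\infty^\jt$ in probability.

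\textbf{Step 4 (conclusion).} Write $(R_n^\jt-V_n^\jt)/(n-n_0)=\{n/(n-n_0)\}\,n^{-1}\sum_{i\notin\mathcal H_0}\mathds 1\{\wh P_i^\jt\le \wh t_\jt\}$. By Steps 2--3 and continuity of $H_{1,\infty}$ at $t_\infty^\jt$, this converges in probability to $H_{1,\infty}(t_\infty^\jt)/(1-\pi_0)$, and the oracle analogue has the same limit. Both quantities lie in $[0,1]$, so bounded convergence gives $|\mathrm{Pow}_n^\jt-\mathrm{Pow}_n^{\jt,\orc}|\to 0$.

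For the FNDR, the numerator is $(n-n_0)-(R_n-V_n)$ and the denominator is $n-R_n$. Since $R_n/n\to t_\infty^\jt/\alpha<1$ in probability, the denominator is bounded below by a constant times $n$ with probability tending to one. The ratio therefore converges in probability to the same limit for both procedures, and bounded convergence again finishes the proof.
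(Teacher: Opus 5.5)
Your proposal is correct. It follows essentially the same route as the paper, which simply retraces Proposition~20 of \citet{ignatiadis2025empirical}: transfer the $L^1$ bound of Theorem~\ref{prop:asymp_p_val} to the empirical distribution functions via Lemma~\ref{lem:s11}, deduce that both BH thresholds converge to $t_\infty^{\jt}$ using the local strict monotonicity in Assumption~\ref{asm:2d_limma_trend_bh}, and finish by bounded convergence. Two facts you assert without proof are easy to supply. First, $H_{1,\infty}(t)=H_\infty^{\jt}(t)-\pi_0 t$ exists with no subsequence argument, because null oracle p-values are exactly uniform (Lemma~\ref{lem:valid_oracle_p_vals}). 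Second, continuity of $H_\infty^{\jt}$ near $t_\infty^{\jt}$ is forced by the strict monotonicity of $t\mapsto t/H_\infty^{\jt}(t)$, since an upward jump of $H_\infty^{\jt}$ would make the ratio jump down; on all of $(0,1)$ it follows instead from the uniformly bounded densities of the $P_i^{\jt}$ implied by the compact support in Assumption~\ref{assu:compact_2d}.
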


The proof follows by retracing the arguments used in the proof of Proposition~20 of \citet{ignatiadis2025empirical}.

\section{Limma-trend in the compound partially Bayes framework}
\label{sec:cmp_bayes_ltrd}
\subsection{Compound partial Bayes and FDR control in \reglitrd{}}
\label{sec:reg_limma_trend_cmp}
In this section, instead of \eqref{eq:bayesian_cs_model_1}, we assume that the nuisance parameters $\tau^2_1,\ldots,\tau^2_n$ are fixed constants satisfying
\begin{align}
        \label{eq:bayesian_cp_model_1}
        \tau^2_i \in [\underline L_\trd,\overline U_\trd] \quad  \mbox{for all $i \in [n]$.}
\end{align}
Furthermore, we also adopt the data generating model described in \eqref{eq:misspecified-limma-reg} with a working trend $\xi_\mis \in \mathcal X$ such that the estimated trend $\wh \xi$ satisfies Assumption~\ref{assum:missp_trend_estimation} with $\xi_0$ replaced by $\xi_\mis$. In this framework, one can show that the NPMLE $\wh G_\trd$ targets the empirical distribution of the nuisance parameters  
\begin{align}
    \label{eq:def_g_trd_n}
    G_n^\trd:=\frac{1}{n}\sum_{i=1}^n\delta_{\tau_i^2}.
\end{align}
Therefore, the target quantity approximated by $\wh P^\mis_i$ is 
\begin{align}
    \label{eq:compound_trnd_p_value}
    Q^{\cmp}_i=\frac{\sum_{j=1}^n2\,\Phi\!\left(-\frac{|O_i|}{\nu \tau_j}\right)
\,p_{\chi^2}(\misV{i}\mid K-p,\tau^2_j)}{\sum_{j=1}^n p\!\left(\misV{i}\,\Big|\,K-p,\tau^2_j\right)}, \qquad \mbox{for $i \in [n]$.}
\end{align}
One can show the following property of the sequence $\{Q^{\cmp}_i\}$.
\begin{proposition}
    \label{eq:cmp_p_val_trd}
    Under Assumption~\ref{assu:cmp_trd}, the sequence $\{Q^{\cmp}_i\}$ satisfies the following average significance controlling property.
        Let
\(
\mathcal H_0 := \{ i \in [n] : \theta_i = 0 \}
\)
denote the set of null indices. Then, for all $t \in (0,1)$,
\[
\frac{1}{n}\sum_{i \in \mathcal H_0}\mathbb P_{\tau^2_i}\!\left[Q^{\cmp}_i \le t\right]\le t.
\]
In other words, $\{Q^{\cmp}_i\}$ forms a sequence of \emph{compound p-values} as defined in~\cite{ignatiadis_ramdas}.
\end{proposition}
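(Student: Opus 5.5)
The plan is to reduce the statement to the compound-decision argument of \citet{ignatiadis2025empirical} (their analogue for untrended \limma{}), working throughout on the rescaled statistics $O_i = Z_i/\xi_\mis(M_i)$ and $\misV{i}=S_i^2/\xi^2_\mis(M_i)$. Under the fixed-parameter model, for a null index $i$ we have
\[
O_i \mid \tau_i^2 \sim \mathrm N(0,\nu^2\tau_i^2), \qquad \misV{i} \mid \tau_i^2 \sim \tau_i^2\chi^2_{K-p}/(K-p),
\]
and $O_i$ is independent of $\misV{i}$ by Proposition~\ref{prop:orthog} (or by \eqref{eq:bayesian_cs_model_2}). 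Thus $Q^\cmp_i=\PregFun_\mis(O_i,\misV{i};G_n^\trd)$, with $\PregFun_\mis$ as in \eqref{eq:misp_p_val_func}. In other words, $Q^\cmp_i$ is exactly the partially Bayes p-value one would compute if the prior on $\tau^2$ were the empirical measure $G_n^\trd$.

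The key step is a ``fictitious exchangeable'' construction. Draw an index $J$ uniformly from $[n]$ and set $\tau^2=\tau^2_J$. Generate $(O,V^2)$ from the null likelihood given $\tau^2$, namely $O\mid\tau^2\sim \mathrm N(0,\nu^2\tau^2)$ and $V^2\mid\tau^2\sim \tau^2\chi^2_{K-p}/(K-p)$, independent. In this model the prior on $\tau^2$ is exactly $G_n^\trd$. Lemma~\ref{lem:val_p_mis}, applied with $G_\mis$ replaced by $G_n^\trd$, or equivalently the probability integral transform argument in \eqref{eq:validity_seq_2}, then gives $\mathbb P[\PregFun_\mis(O,V^2;G_n^\trd)\le t]=t$.

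Averaging over $J$, this identity reads
\[
\frac1n\sum_{j=1}^n \mathbb P_{\tau_j^2}\!\left[\PregFun_\mis(O_j^{0},V_j^2;G_n^\trd)\le t\right]=t,
\]
where $(O_j^{0},V_j^2)$ has the null law with parameter $\tau_j^2$. For each $i\in\mathcal H_0$, the quantity $\mathbb P_{\tau^2_i}[Q^\cmp_i\le t]$ equals the $j=i$ summand, because $Q^\cmp_i$ is the same deterministic function applied to data with that law. All summands are nonnegative, so dropping the non-null indices yields
\[
\frac1n\sum_{i\in\mathcal H_0}\mathbb P_{\tau_i^2}[Q^\cmp_i\le t]\le t.
\]

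The main obstacle is bookkeeping rather than a hard estimate. First, the rescaling by $\xi_\mis(M_i)$ must not interfere. One should either condition on $M_1,\ldots,M_n$, treated as fixed, so that $\xi_\mis(M_i)$ is a constant, or note that under Assumption~\ref{assu:cmp_trd} the $\tau_i^2$ are defined on the rescaled scale, so the laws depend only on $\tau_i^2$. Second, $\PregFun_\mis$ must be a fixed measurable function, not depending on the data, which holds because $G_n^\trd$ is deterministic. Continuity of the conditional law of $|O|$ given $V^2$ makes the probability integral transform exact. I would state both points explicitly before running the averaging argument above.
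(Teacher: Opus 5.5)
Your proposal is correct and is essentially the paper's proof. Like the paper, you write $Q^{\cmp}_i$ as the partially Bayes p-value function with prior $G^\trd_n$ evaluated at $(O_i,\misV{i})$, replace every unit (null or not) by a fictitious null draw with parameter $\tau_j^2$ so that the average over $j$ becomes a single draw from the mixture with prior $G^\trd_n$, apply the probability integral transform to get exactly $t$, and drop the non-null summands by nonnegativity. Your explicit remarks about treating $M_1,\ldots,M_n$ as fixed and about $G^\trd_n$ being deterministic are useful bookkeeping that the paper leaves implicit.
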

\begin{proof}
Observe that 
\(
Q^\cmp_i:=\PregFun_\mis(O_i,\misV{i};G^\trd_n),
\)
where for any $x\ge 0$, $\PregFun_\mis(x,\misV{i};G^\trd_n)$ is defined through \eqref{eq:validity_seq_2}. Again, by the proof technique in \eqref{eq:validity_seq_2}, we have
\begin{align}
\label{eq:null_rel_unif}
\PregFun_\mis(O,V^2_\mis;\mathrm G) \mid \omega=0 \sim \mathrm{Unif}(0,1),
\end{align}
if $O \sim \dnorm(\omega,\tau^2\nu^2)$, $V^2_\mis \sim \tau^2\chi^2_{K-p}/(K-p)$ and $\tau^2 \sim \mathrm G$.
Define $O'_i\simiid \dnorm(0,\tau^2_i\nu^2)$. Observe that if $i \in \mathcal H_0$, then $O'_i\overset{d}{=}O_i \sim \dnorm(0,\nu^2\tau^2_i)$. Consequently, for any $x \in (0,1)$
    \begin{align}
        \frac{1}{n}\sum_{i \in \mathcal H_0}\mathbb P_{\tau^2_i}\!\left[Q^{\cmp}_i \le t\right] & \le\frac{1}{n}\sum_{i \in \mathcal H_0}\mathbb P_{\tau^2_i}\!\left[\PregFun_\mis(O_i,\misV{i};G^\trd_n) \le t\right]\\
        &\le\frac{1}{n}\sum_{i=1}^n\mathbb P_{\tau^2_i}\!\left[\PregFun_\mis(O'_i,\misV{i};G^\trd_n) \le t\right]\\
        & =\mathbb P_{\tau^2 \sim G^\trd_n}\left[\PregFun_\mis(O',V^2;G^\trd_n) \le t\right],
    \end{align}
    where $O' \sim \dnorm(0,\tau^2\nu^2)$ and $V^2\sim \tau^2\chi^2_{K-p}/(K-p)$. Using \eqref{eq:null_rel_unif}, the lemma follows.
\end{proof}
Next, consider the following assumption on the data generating model.
\begin{assumption}
    \label{assu:cmp_trd}
    The tuples $(M_i, Z_i, V_{i,\mis}^2)$ are generated according to~\eqref{eq:bayesian_cs_model_2}, and~\eqref{eq:tau_mis_v_mis} for $K -p\ge 2$ and are jointly independendent across $i \in [n]$. Furthermore, the random variables $M_1,\ldots,M_n$ satisfy
$\mathbb P\!\left(\max_{i \in [n]}|M_i| > \mathrm W\sqrt{\log n}\right)\le n^{-3}$ for some $W>0$. Next, the latent parameters $\tau^2_i$ are fixed constants satisfying \eqref{eq:bayesian_cp_model_1}. 
\end{assumption}
Under the foregoing assumption, one can show the following theorem, showing that the sequence $\{\wh P^\mis_i\}$ is a sequence of approximately compound p-values.  
\begin{proposition}
\label{prop:average_significance_control_cmp_p_val_trd}
Suppose Assumptions~\ref{assum:missp_trend_estimation} and~\ref{assu:cmp_trd} hold. Then there exists a constant $D_{\cmp}>0$ (depending only on $\underline L,\overline U,M,K,p,\nu$, and $\zeta$) such that, for all $n \ge n_\cmp \in \mathbb N_{\ge 1}$,
\begin{align}
\label{eq:cmp_trnd_p_val_2}
&\sup_{t \in [0,\zeta]}
\left\{
\frac{1}{n}\sum_{i \in \mathcal H_0}
\mathbb P_{\tau^2_i}\!\left[\wh P_i^{\trd} \le t\right]- t\right\}\\
&\le D_{\cmp}\cdot\max\left\{\Delta^{1/4}_n(\log n)^{5/4},\frac{(\log n)^{5/4}}{n^{1/8}}\Delta^{1/4}_n,\frac{(\log n)}{n^{1/8}}\cdot |\log \Delta_n|^{h_2/8}\Delta^{\frac{1}{4}(1-h_1/2)}_n,\frac{(\log n)^{5/4}}{n^{1/4}}\right\}.
\end{align}
\end{proposition}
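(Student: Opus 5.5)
The plan mirrors the misspecified Bayesian argument (Proposition~\ref{prop:avg_sign_limma_trnd_ms} and Lemma~\ref{lem:approx_unif_mis}), with two replacements. The oracle p-values $\misP{i}$ are replaced by the compound oracle $Q^{\cmp}_i$. The target $G_\mis$ is replaced by the empirical distribution $G_n^\trd$ in \eqref{eq:def_g_trd_n}.

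\textbf{Step 1: discretization inequality.} Fix $\zeta'=(1+\zeta)/2$ and $\delta\in(0,\zeta'-\zeta)$. For $t\le\zeta$, Lemma~\ref{lem:s11} (applied with $Q^{\cmp}_i$ in place of $P^\jt_i$) gives
\[
\mathbb P_{\tau_i^2}[\wh P^\trd_i\le t]\le \mathbb P_{\tau_i^2}[Q^{\cmp}_i\le t+\delta]+\delta^{-1}\mathbb E_{\tau_i^2}\bigl|\wh P^\trd_i\wedge\zeta'-Q^{\cmp}_i\wedge\zeta'\bigr|.
\]
Averaging over $i\in\mathcal H_0$ and invoking Proposition~\ref{eq:cmp_p_val_trd} for the first term yields
\[
\sup_{t\le\zeta}\Bigl\{\frac1n\sum_{i\in\mathcal H_0}\mathbb P[\wh P^\trd_i\le t]-t\Bigr\}\le \delta+\frac1{n\delta}\sum_{i=1}^n\mathbb E\bigl|\wh P^\trd_i\wedge\zeta'-Q^{\cmp}_i\wedge\zeta'\bigr|.
\]
No empirical-process (DKW) term is needed here, because the statement concerns averaged marginal probabilities.

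\textbf{Step 2: compound analogue of Proposition~\ref{prop:avg_sign_limma_trnd_ms}.} The goal is to show
\[
\frac1n\sum_i\mathbb E|\wh P^\trd_i\wedge\zeta'-Q^{\cmp}_i\wedge\zeta'|\lesssim\mathfrak L_{n,\mis}.
\]
Choosing $\delta\asymp\mathfrak L_{n,\mis}^{1/2}$ then gives exactly the displayed rate, since $\mathfrak L_{n,\mis}^{1/2}$ is the stated maximum. This step reuses the Tweedie representation \eqref{eq:tweedie_trnd}, under which $Q^{\cmp}_i=\PFun(Z_i,S_i^2,M_i;G_n^\trd,\xi_\mis)$, together with the decomposition into $\mathfrak B^\trd_1$ and $\mathfrak B^\trd_2$. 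Two ingredients are needed.

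First, the trend-perturbation terms are handled deterministically on the good event, using Lemma~\ref{lem:derivative_g_trd}, the $\eta$-derivative bounds of Lemma~\ref{lem:prop_chi_sq}(1), and $\|\wh\xi-\xi_\mis\|_\infty\le\Delta_n$. These contribute $\Delta_n$ times polylog factors, exactly as before. The averaging identity that replaces the Bayesian marginal is
\[
\frac1n\sum_i\mathbb E\bigl[g(\misV{i})/f_{G_n^\trd,K-p}(\misV{i})\bigr]=\int g,
\]
which holds because the average of the densities of the $\misV{i}$ is precisely $f_{G_n^\trd,K-p}$. This identity is the standard compound device of \citet{ignatiadis2025empirical} and replaces \eqref{eq:b_1_trnd_p_val}.

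Second, a compound Hellinger bound $\mathcal H^2(f_{\wh G_\trd,K-p},f_{G_n^\trd,K-p})\lesssim\lambda_{n,2}^2$ is needed with high probability.

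\textbf{Step 3: compound Hellinger bound (main obstacle).} For this I would rerun Lemma~\ref{lem:lik_ratio_limma_trnd_ms} with $G_n^\trd$ as the comparison measure. The $\mathrm{Sub}_{n,1}$ and $\mathrm{Sub}_{n,2}$ bounds there never used the score identity; they relied only on crude $O(\log n)$ score bounds and on a chaining argument. That chaining argument centered with respect to the conditional law of each $\misV{i}$, which remains valid for independent, non-identically distributed $\misV{i}$. Hence the same $\mathfrak R_{n,2}$ emerges.

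The large-deviation argument of Lemma~\ref{lem:hell_limm_trnd_ms} must then be adapted. The key change is that $\mathbb E\prod_i\sqrt{f_{G_j}(\misV{i})/f_{G_n^\trd}(\misV{i})}$ is no longer a power of a single Hellinger affinity. I would handle it as in the compound version of Theorem~9 in \citet{ignatiadis2025empirical}: the product of expectations is bounded by applying AM--GM to the average of the densities, giving $\exp(-n\mathcal H^2(f_{G_j},f_{G_n^\trd}))$ up to the truncation term. With this in place, the cover and truncation steps carry over verbatim.

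I expect the main difficulty to be this independent-but-not-identically-distributed likelihood-ratio step, together with verifying that the chaining for the trend perturbation still centers correctly. Everything else is bookkeeping that copies the misspecified Bayesian proof.
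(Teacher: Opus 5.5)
Your proposal is correct and follows essentially the same route as the paper. The paper also (i) obtains a compound Hellinger bound $\mathcal H^2(f_{\wh G_\trd,K-p},f_{G_n^\trd,K-p})\lesssim\lambda_{n,2}^2$ by retracing the misspecified approximate-NPMLE and large-deviation arguments together with Theorem~9* of \citet{ignatiadis2025empirical}; (ii) feeds this into the Tweedie-based $\mathfrak B^\trd_1,\mathfrak B^\trd_2$ analysis to get $\frac1n\sum_i\mathbb E|Q^\cmp_i\wedge\zeta-\wh P^\trd_i\wedge\zeta|\lesssim\mathfrak L_{n,\mis}$; and (iii) concludes with the Proposition~15*-type discretization, using the compound p-value property of $\{Q^\cmp_i\}$. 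You present these steps in reverse order and make explicit two details the paper leaves to citation: the averaging identity $\frac1n\sum_i\mathbb E[g(\misV{i})/f_{G_n^\trd,K-p}(\misV{i})]=\int g$, and the AM--GM bound for the non-identically distributed likelihood ratio.
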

\begin{proof}
    To prove the above proposition under Assumptions~\ref{assum:missp_trend_estimation} and~\ref{assu:cmp_trd}, one can retrace the arguments in the proof of Lemma~\ref{lem:hell_limm_trnd_ms} and Theorem 9* of \cite{ignatiadis2025empirical} to show that
    \[
    \mathbb P\left[\mathcal H^2(f_{\wh G_\trd,K-p},f_{G^\trd_n,K-p}) \ge \wt D_{\cmp}\lambda^2_{n,2}\right] \lesssim \frac{1}{n^2}+e^{-c_0(\log n)^2},
    \]
    where $\wt D_{\cmp}>0$ is an absolute constant and $\lambda^2_{n,2}$ is defined in Lemma~\ref{lem:hell_limm_trnd_ms}. This result can, in turn, be combined with the techniques adopted to prove Proposition~\ref{prop:avg_sign_limma_trnd_ms} and Theorem 13* to show that for any $\zeta \in (1/2,1)$
    \begin{align}
    \label{eq:cmp_trnd_p_val}
        \frac{1}{n}\sum_{i=1}^n\mathbb E\left[|Q^\cmp_i \wedge \zeta-\wh P^\trd_i \wedge \zeta|\right] \lesssim \mathfrak L_{n,\mis},
    \end{align}
    where $\mathfrak L_{n,\mis}$ is defined in the statement of Proposition~\ref{prop:avg_sign_limma_trnd_ms}.
    Now, retracing the arguments used to prove Proposition 15* of \cite{ignatiadis2025empirical} and using the foregoing relation, we can conclude the lemma.
\end{proof}
Furthermore, one can use the estimated compound p-values $\{\wh P_i^{\trd}\}$ in the Benjamini-Hochberg procedure to test $H_i:\theta_i=0$ with the level of significance $\alpha \in (0,1)$. In the following theorem, we show that the asymptotic FDR remains controlled at the desired nominal level of significance.
\begin{theorem}
\label{thm:final_rate_compound_p_val}
Suppose Assumptions~\ref{assum:missp_trend_estimation} and~\ref{assu:cmp_trd} hold. Fix $\alpha \in (0,1)$. Assume that the true p-values $\{Q^\cmp_i\}$ are critically dense at $\alpha$. Furthermore, also assume the conditions in Proposition~\ref{prop:average_significance_control_cmp_p_val_trd}. Then
\[
\limsup_{n \rightarrow \infty}\fdr^\mis_n\;\le\alpha,
\]
where $\fdr^\mis_n := \mathbb E\left[V^\mis_n/(R^\mis_n \vee 1)\right]$.
\end{theorem}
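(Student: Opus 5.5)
The proof follows the template of the proof of Theorem~\ref{thm:final_rate_mis}, with two substitutions. Lemma~\ref{lem:val_p_mis} is replaced by the compound-p-value property of Proposition~\ref{eq:cmp_p_val_trd}. Lemma~\ref{lem:approx_unif_mis} is replaced by Proposition~\ref{prop:average_significance_control_cmp_p_val_trd}. The key structural difference is that we no longer have an almost-sure uniform null distribution with a factor $\pi_0$. We only have average significance control, $n^{-1}\sum_{i\in\mathcal H_0}\mathbb P[\wh P_i^\trd\le t]\le t+o(1)$. This is why the target level is $\alpha$ rather than $\pi_0\alpha$.

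\textbf{Step 1: the procedure makes a linear number of rejections.} Let $\zeta=\max\{3/4,(1+\alpha)/2\}$. Write $R^\mis_n(t)$ and $V^\mis_n(t)$ for the rejection and false-rejection counting processes, and $\wh t_\mis$ for the self-consistent BH threshold as in \eqref{eq:trd_bh_ms}. The fixed-$\tau$ analogue of \eqref{eq:cmp_trnd_p_val} gives $n^{-1}\sum_i\mathbb E|Q^\cmp_i\wedge\zeta-\wh P^\trd_i\wedge\zeta|\lesssim\mathfrak L_{n,\mis}\to0$. Lemma~\ref{lem:s11} then transfers indicator events from $\wh P^\trd_i$ to $Q^\cmp_i$. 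The $Q^\cmp_i$ are not independent, because they share $G_n^\trd$, which is deterministic here. Since the $\tau_i^2$ are fixed, $Q^\cmp_i$ is a function of $(O_i,V^2_{i,\mis})$ alone, so the $Q^\cmp_i$ are independent across $i$ and Lemma~\ref{lem:bdkw} applies. Combining these with the critical density of $\{Q^\cmp_i\}$ and Markov's inequality, exactly as in the proof of Theorem~\ref{thm:final_rate}, I obtain $\mathbb P(R^\mis_n(\wh t_\mis)<n\kappa)\to0$ for $\kappa=t_1/\alpha$.

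\textbf{Step 2: FDP decomposition.} On $\mathcal F=\{R^\mis_n(\wh t_\mis)\ge n\kappa\}$, using $n/R^\mis_n(\wh t_\mis)\le\alpha/\wh t_\mis$ and $\wh t_\mis\le\alpha$, I write
\[
\frac{V^\mis_n}{R^\mis_n\vee1}\le \mathds 1(\mathcal F^c)+\frac{n}{R^\mis_n}\cdot\frac{V^\mis_n(\wh t_\mis)}{n}.
\]
I then bound $V^\mis_n(\wh t_\mis)/n\le \wh t_\mis+\sup_{t\in[0,\zeta]}\{n^{-1}V^\mis_n(t)-t\}_+$. This gives
\[
\mathrm{FDP}\le\mathds 1(\mathcal F^c)+\alpha+\kappa^{-1}\sup_{t\le\zeta}\{n^{-1}V^\mis_n(t)-t\}_+ .
\]

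\textbf{Step 3 (main obstacle).} It remains to control $\mathbb E\sup_{t\le\zeta}\{n^{-1}V^\mis_n(t)-t\}_+$. Proposition~\ref{prop:average_significance_control_cmp_p_val_trd} controls only the expectations, $\sup_t\{n^{-1}\sum_{i\in\mathcal H_0}\mathbb P[\wh P^\trd_i\le t]-t\}$, not the supremum of the random process. I would handle this by sandwiching with the oracle $Q^\cmp_i$. Lemma~\ref{lem:s11} gives $\mathds 1\{\wh P^\trd_i\le t\}\le\mathds 1\{Q^\cmp_i\le t+\delta\}+\delta^{-1}|\wh P^\trd_i\wedge\zeta'-Q^\cmp_i\wedge\zeta'|$. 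Independence of the $Q^\cmp_i$ together with Lemma~\ref{lem:bdkw} gives a uniform fluctuation bound $O(n^{-1/2})$ around $n^{-1}\sum_{i\in\mathcal H_0}\mathbb P[Q^\cmp_i\le t+\delta]$. By Proposition~\ref{eq:cmp_p_val_trd}, that mean is at most $t+\delta$. Taking expectations, the remainder is $\delta+\delta^{-1}\mathfrak L_{n,\mis}$. Choosing $\delta=\mathfrak L_{n,\mis}^{1/2}$ makes it vanish. This reproduces the rate appearing in Proposition~\ref{prop:average_significance_control_cmp_p_val_trd}.

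Taking expectations in Step 2 and then $\limsup$ yields $\limsup_n\fdr^\mis_n\le\alpha$.
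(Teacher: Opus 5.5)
Your proposal is correct and is essentially the paper's argument: the paper combines the $L^1$ approximation \eqref{eq:cmp_trnd_p_val} with \eqref{eq:cmp_trnd_p_val_2} and retraces Theorem~17* of \citet{ignatiadis2025empirical}. Your Steps~1--2 are the same rejection-count argument and FDP decomposition used in the proofs of Theorems~\ref{thm:final_rate} and~\ref{thm:final_rate_compound_p_val_jt}, except that you take a positive part, which is the right choice for compound null p-values. Your Step~3 correctly supplies the expected-supremum control of the null counting process that this decomposition needs, using the oracle sandwich, the independence of the oracle compound p-values, Lemma~\ref{lem:bdkw} and Proposition~\ref{eq:cmp_p_val_trd}; by itself, \eqref{eq:cmp_trnd_p_val_2} only bounds a supremum of expectations. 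As in the paper, the argument tacitly requires $\mathfrak L_{n,\mis}\to 0$, i.e.\ $\Delta_n(\log n)^{5}\to 0$, which is stronger than the $\Delta_n=o(1)$ of Assumption~\ref{assum:missp_trend_estimation}.
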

The proof of the theorem follows using \eqref{eq:cmp_trnd_p_val_2}, \eqref{eq:cmp_trnd_p_val}, and retracing the arguments of the proof of Theorem 17* of \cite{ignatiadis2025empirical}.

\subsection{Compound partial Bayes and FDR control in \jtlitrd{}}
\label{sec:cmp_ltrd_jt}
Under this framework, we relax the bivariate prior specification on the nuisance parameters in \eqref{eq:limma_trend_2d_prior} and instead assume that $(\mu_1,\sigma^2_1),\ldots,(\mu_n,\sigma^2_n)$ are fixed constants satisfying
\begin{align}
\label{eq:bayesian_joint_cmp}
    \mu_i \in [-M,M], \quad \mbox{and} \quad \sigma^2_i \in [\underline L,\overline U] \quad \mbox{for all $i \in [n]$.}
\end{align}
As shown in Section~\ref{sec:cmp_bayes_ltrd}, one can still use $\{\wh P_i^{\jt}\}$ computed by plugging in $\wh H$ (defined in \eqref{eq:joint_npmle}) in \eqref{eq:2d_pb_pvalue_general_case} to test the hypotheses $H_0:\theta_i=0$. As in Section~\ref{sec:cmp_bayes_ltrd}, within this framework, one can show that the sequence the $\{\wh P_i^{\jt}\}$ approximates the following quantities:
\begin{align}
\label{eq:compound_jt_p_val}
P_i^{\cmp}:=\frac{\sum_{j=1}^n 2\Phi\!\left(-|Z_i|/(\nu\sigma_j)\right)\,p_{K-p}(S_i^2,A_i \mid \mu_j,\sigma_j^2)}{\sum_{j=1}^n p_{K-p}(S_i^2,A_i \mid \mu_j,\sigma_j^2)},
\end{align}
where $p(s^2,a \mid \mu,\sigma^2)$ is defined in \eqref{eq:2_d_marginal}. The sequence $\{P_i^{\cmp}\}$ satisfies the following average significance controlling property.
\begin{lemm}
\label{lem:average_significance_control}
Consider the data-generating mechanism in \eqref{eq:def_z_i_dis_2d}, and \eqref{eq:limma_trend_2d_prior}. Also consider the p-values $\{P_i^{\cmp}\}$ defined in \eqref{eq:compound_jt_p_val}. Recall
\(
\mathcal H_0 := \{ i \in [n] : \theta_i = 0 \},
\)
the set of null indices. Then, under Assumption~\ref{assu:design}, for all $t \in (0,1)$,
\[
\frac{1}{n}\sum_{i \in \mathcal H_0}\mathbb P_{(\sigma_i^2,\mu_i)}\!\left[P_i^{\cmp} \le t\right]\le t.
\]
In other words, $\{P^\cmp_i\}$ forms a sequence of \emph{compound p-values} as defined in~\cite{ignatiadis_ramdas}.
\end{lemm}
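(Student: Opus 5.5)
The plan mirrors the proof of Proposition~\ref{eq:cmp_p_val_trd} (the regression analogue), now in two nuisance dimensions. First I rewrite $P_i^{\cmp}$ as the oracle joint p-value functional evaluated at the empirical prior of the nuisances. Set
\[
H_n := \frac{1}{n}\sum_{j=1}^n \delta_{(\mu_j,\sigma_j^2)} .
\]
Comparing \eqref{eq:compound_jt_p_val} with \eqref{eq:2d_pb_pvalue_general_case}, we have $P_i^{\cmp} = \PjtFun(Z_i,S_i^2,A_i;H_n)$, where $\PjtFun$ is the map introduced in \eqref{eq:validity_seq_3}. The factors $1/n$ in numerator and denominator cancel, so this identity is exact.

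Second, I invoke the conditional uniformity of Lemma~\ref{lem:valid_oracle_p_vals}, applied with the prior $H_n$ in place of $H$. Consider a synthetic draw $(\mu,\sigma^2)\sim H_n$, and conditionally on it draw
\[
Z'\sim \dnorm(0,\nu^2\sigma^2),\qquad S'^2\sim \sigma^2\chi^2_{K-p}/(K-p),\qquad A'\sim \dnorm(\mu,\sigma^2/K),
\]
mutually independent as in \eqref{eq:def_z_i_dis_2d}. The probability integral transform argument of Lemma~\ref{lem:valid_oracle_p_vals} then gives
\[
\PjtFun(Z',S'^2,A';H_n)\mid (S'^2,A') \sim \mathrm{Unif}[0,1],
\]
and hence $\PjtFun(Z',S'^2,A';H_n)\sim\mathrm{Unif}[0,1]$ unconditionally. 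Nothing beyond the definition of $\PjtFun$ as a conditional tail probability is needed here.

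Third, I average over units. For $i\in\mathcal H_0$, Assumption~\ref{assu:design} and Proposition~\ref{prop:orthog} ensure that $(Z_i,S_i^2,A_i)$ are independent with the law \eqref{eq:def_z_i_dis_2d} at the fixed parameters $(\mu_i,\sigma_i^2)$ and $\theta_i=0$. For every $i\in[n]$, null or not, define ghost variables $(Z_i',S_i'^2,A_i')$ with $\theta=0$ and parameters $(\mu_i,\sigma_i^2)$; for null $i$ these have the same law as $(Z_i,S_i^2,A_i)$. Since probabilities are nonnegative, adding the non-null ghost terms can only increase the sum, so
\begin{align*}
\frac1n\sum_{i\in\mathcal H_0}\mathbb P\big[P_i^{\cmp}\le t\big]
&\le \frac1n\sum_{i=1}^n \mathbb P\big[\PjtFun(Z_i',S_i'^2,A_i';H_n)\le t\big]\\
&= \mathbb P_{(\mu,\sigma^2)\sim H_n}\big[\PjtFun(Z',S'^2,A';H_n)\le t\big] = t .
\end{align*}
The equality in the second line holds because the uniform mixture over $j$ of the per-unit laws is exactly the law of the synthetic draw under $H_n$.

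The only point needing care is that $H_n$ is deterministic, because the parameters are fixed constants. It does not depend on the data, so plugging it into $\PjtFun$ creates no dependence issue and the mixture identity is exact. I expect no real obstacle beyond checking that the bivariate kernel $p_{K-p}(s^2,a\mid\mu,\sigma^2)$ is precisely the joint density of $(S'^2,A')$ given $(\mu,\sigma^2)$. This is where Assumption~\ref{assu:design} enters, through the independence of $A_i$ from $(Z_i,S_i^2)$.
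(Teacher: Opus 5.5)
Your argument is correct and is essentially the paper's: the paper defers to Theorem~21 of \citet{ignatiadis2025empirical}, which is the same scheme written out for the regression analogue in Proposition~\ref{eq:cmp_p_val_trd}. That scheme rewrites the compound p-value as the oracle conditional-tail functional at the empirical nuisance distribution $H_n$, adds null ghost copies for the non-null units, and recognises the uniform mixture over units as the law under $H_n$, for which the p-value is exactly uniform. On the one check you flagged: the identity in your first step needs the kernel in \eqref{eq:compound_jt_p_val} to be the true joint density, including the factor $\sqrt{K}/\sigma$ from the density of $A_i$ (as written in the proof of Theorem~\ref{thm:tweedie_2_d}), whereas the displayed definition of $p_\kappa$ drops this factor; read as a typo this is harmless, and even taken literally it only reweights by $\sigma_j$, which can only enlarge the p-value since $\sigma\mapsto 2\Phi(-|z|/(\nu\sigma))$ is increasing, so the bound still holds.
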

The proof of this lemma follows by retracing the arguments in the proof of Theorem~21 of \cite{ignatiadis2025empirical}.
Next, consider the following assumption on the data generating model.

\begin{assumption}
    \label{assu:cmp_jt}
The tuples $(Z_i, S^2_i,A_i)$ are generated according to~\eqref{eq:def_z_i_dis_2d} for $K -p\ge 2$ and are jointly independendent across $i \in [n]$. The latent parameters $(\mu_i,\sigma^2_i)$ are fixed constants satisfying \eqref{eq:bayesian_joint_cmp}.
\end{assumption}

Furthermore, one can show that under~\eqref{eq:bayesian_joint_cmp}, the following proposition holds.
\begin{proposition}
\label{prop:average_significance_control_cmp_p_val}
If Assumption~\ref{assu:cmp_jt} holds, there exists a constant $C_{\cmp}>0$ (depending only on $\underline L,\overline U,M,K,p,\nu$, and $\zeta$) such that, for all $n \in \mathbb N_{\ge 1}$,
\[
\mathbb E\!\left[\sup_{t\in(0,\alpha)}\left|\frac{1}{n}\sum_{i\in\mathcal H_0}\mathds{1}\{\widehat P_i^{\jt}\le t\}-t\right|\right]\le C_{\cmp}\,\frac{(\log n)^{13/8}}{n^{1/4}}.
\]
\end{proposition}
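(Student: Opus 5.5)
The plan mirrors the proof of Lemma~\ref{lem:approx_unif_mis}, with the compound oracle $P_i^{\cmp}$ from \eqref{eq:compound_jt_p_val} playing the role of $\misP{i}$, and with the Bayesian $H$ replaced by the empirical measure $H_n:=n^{-1}\sum_{j=1}^n\delta_{(\mu_j,\sigma_j^2)}$.

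\textbf{Step 1 (Hellinger rate toward $H_n$).} I will show
\[
\mathbb P\!\left[\mathcal H^2\bigl(f_{\wh H,K-p},f_{H_n,K-p}\bigr)\ge C(\log n)^3/n\right]\le e^{-c_0\log n}.
\]
The proof of Theorem~\ref{thm:double_hellinger} transfers verbatim once we note one fact. The only place the i.i.d.\ structure entered was the bound $\mathbb E\bigl[\prod_i (f_{H_j}/f_{H})^{1/2}\bigr]\le e^{-n\mathcal H^2}$. Under fixed parameters, this product expectation equals $\prod_i\int\sqrt{f_{H_j}\,p(\cdot\mid\mu_i,\sigma_i^2)/f_{H_n}}$. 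By AM--GM this is at most $\bigl(\int\sqrt{f_{H_j}f_{H_n}}\bigr)^n$, exactly as in Theorem~9* of \citet{ignatiadis2025empirical}. The entropy bound of Lemma~\ref{lem:metric_entropy_2d} and the truncation via $\eta(z)$ in \eqref{eq:def_eta} do not depend on the data-generating law, so they carry over directly.

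\textbf{Step 2 (p-value approximation).} Next I will show
\[
\frac1n\sum_i\mathbb E\bigl|P_i^{\cmp}\wedge\zeta-\wh P_i^{\jt}\wedge\zeta\bigr|\lesssim(\log n)^{13/4}/\sqrt n.
\]
Note that $P_i^{\cmp}=\PjtFun(Z_i,S_i^2,A_i;H_n)$, so the Eddington--Tweedie representation of Theorem~\ref{thm:tweedie_2_d} applies with $H_n$. I will then rerun the proof of Theorem~\ref{prop:asymp_p_val}: Lemma~\ref{lem:l_2_one_dof} (which holds for any two priors in $\mathcal G_H$), Lemma~\ref{lemm:lem_ratio_bound_2d} and Lemma~\ref{lemm:density_ratio_bound_2d}.

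The averaged-over-$i$ version is essential here, and this is the main obstacle. Wherever the earlier proofs integrated against the marginal density $f_{H,K-p}$ of a single $(S_i^2,A_i)$, I will instead use that the average of the $n$ individual densities equals $f_{H_n,K-p}$. Consequently $n^{-1}\sum_i\mathbb E[g(S_i^2,A_i)/f_{H_n,K-p}(S_i^2,A_i)]=\int g$. This identity is what controls the denominators $D_i(\wh H_\star)$, as in Lemma~S9* of \citet{ignatiadis2025empirical}. The bounded-differences step in \eqref{eq:2d_bdd_diff} only needs independence, not identical distributions, so it goes through unchanged.

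\textbf{Step 3 (uniformity).} By Lemma~\ref{lem:average_significance_control}, together with the stronger exact identity $n^{-1}\sum_{i}\mathbb P[P_i^{\cmp}\le t]=t$ when all units are treated as null, the compound p-values are calibrated on average. Following Proposition~15* of \citet{ignatiadis2025empirical}, I apply Lemma~\ref{lem:s11} with $\zeta'=(1+\zeta)/2$. This sandwiches $n^{-1}\sum_{i\in\mathcal H_0}\mathds 1\{\wh P_i^{\jt}\le t\}$ between $n^{-1}\sum_{i\in\mathcal H_0}\mathds 1\{P^{\cmp}_i\le t\pm\delta\}$ plus an error of $\delta^{-1}n^{-1}\sum_i|\wh P_i^{\jt}\wedge\zeta'-P_i^{\cmp}\wedge\zeta'|$.

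\textbf{Step 4 (fluctuation of the oracle term).} The oracle indicators are dependent through $H_n$ only deterministically, so they are independent across $i$. Lemma~\ref{lem:bdkw} therefore bounds the supremum over $t$ of their empirical fluctuation by $O(n^{-1/2})$. The centred mean is at most $t+\delta$ from above. For the two-sided absolute value, I pair the sandwich with the lower bound at $t-\delta$, using $\mathbb P$-calibration of nulls.

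One caveat: the lower direction genuinely requires that null compound p-values are not too conservative. I expect to handle this via the exact per-unit identity $\sum_i \mathbb P[P_i^{\cmp}\le t]=nt$ when computed with $O_i'$ as in the proof of Proposition~\ref{eq:cmp_p_val_trd}. If that fails, I will only establish the one-sided bound.

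\textbf{Step 5 (choice of $\delta$).} Taking expectations gives a bound of order $n^{-1/2}+\delta+\delta^{-1}(\log n)^{13/4}n^{-1/2}$. Choosing $\delta\asymp(\log n)^{13/8}n^{-1/4}$ yields the claimed rate $(\log n)^{13/8}/n^{1/4}$.
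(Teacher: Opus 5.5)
Your route is the paper's route. You prove a Hellinger large-deviation bound toward $H_n$ by rerunning Theorem~\ref{thm:double_hellinger} with the Theorem~9* (AM--GM) modification. You obtain the averaged approximation $\frac1n\sum_i\mathbb E|\widehat P_i^{\jt}\wedge\zeta-P_i^{\cmp}\wedge\zeta|\lesssim(\log n)^{13/4}/\sqrt n$ via a Theorem~13*-style rerun of Theorem~\ref{prop:asymp_p_val}. You then apply the Proposition~15* sandwich with $\delta\asymp(\log n)^{13/8}n^{-1/4}$. This correctly yields the one-sided bound
\[
\mathbb E\Bigl[\sup_{t\in(0,\alpha)}\Bigl(\frac1n\sum_{i\in\mathcal H_0}\mathds{1}\{\widehat P_i^{\jt}\le t\}-t\Bigr)_+\Bigr]\lesssim\frac{(\log n)^{13/8}}{n^{1/4}}.
\]

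The gap is the lower direction you flagged in Step~4. It cannot be closed, because the two-sided (absolute-value) claim is false in general. The identity $\sum_{i=1}^n\mathbb P[\,\cdot\,\le t]=nt$ for the all-null surrogates $O_i'$ also sums over the non-null indices. It therefore gives only $\frac1n\sum_{i\in\mathcal H_0}\mathbb P[P_i^{\cmp}\le t]\le t$, with no matching lower bound.

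Here is a concrete counterexample. Let every $(\mu_i,\sigma_i^2)$ equal a common $(\mu_0,\sigma_0^2)\in[-M,M]\times[\underline L,\overline U]$, and take $n_0=\lfloor n/2\rfloor$. Then $H_n=\delta_{(\mu_0,\sigma_0^2)}$, and $P_i^{\cmp}=2\Phi(-|Z_i|/(\nu\sigma_0))$ is exactly uniform for $i\in\mathcal H_0$. The upper half of your sandwich then gives $\mathbb E[\frac1n\sum_{i\in\mathcal H_0}\mathds{1}\{\widehat P_i^{\jt}\le t\}]\le\frac{n_0}{n}t+o(1)$, uniformly in $t$. Hence the left-hand side of the proposition is at least $(1-n_0/n)\alpha-o(1)\approx\alpha/2$, and it does not vanish. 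With heterogeneous nuisances, the null-subgroup average is pinned neither to $t$ nor to $\frac{n_0}{n}t$; only the upper bound survives.

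So the statement should be read with $(\cdot)_+$ in place of $|\cdot|$. That is also all the paper's own sketch delivers, since it rests on the one-sided Lemma~\ref{lem:average_significance_control}. It is also all that Theorem~\ref{thm:final_rate_compound_p_val_jt} needs: there the term only upper-bounds $\frac{n}{R_{n,\jt}(\widehat t_{\jt})}\{\frac1n\sum_{i\in\mathcal H_0}\mathds{1}\{\widehat P_i^{\jt}\le\widehat t_{\jt}\}-\widehat t_{\jt}\}$. Drop the two-sided pairing in Step~4, and your proof is complete for the one-sided version.
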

\begin{proof}
    To prove the above theorem, one first shows that there exist constants $\mathfrak F_0>0$ and $n_0\in\mathbb N_{\ge1}$, depending only on $K,p,M,\underline L,\overline U$ and $\mathfrak d_0$, such that for all $n\ge n_0$,
\begin{align}
\label{eq:double_hell_cmp}
\mathbb P_{(\mu,\sigma)}\!\left[\mathcal H^2\!\left(f_{\widehat H,K-p},f_{H_n,K-p}\right)\ge \mathfrak F_0\frac{(\log n)^{3}}{n}\right]\le \exp\!\left(-\mathfrak d_0 \log n\right).
\end{align}
Here the probability is taken with respect to the joint distribution of $\{(S_i^2,A_i):i\in[n]\}$, where for each $i$, $(S_i^2,A_i)$ is generated according to \eqref{eq:def_z_i_dis} and \eqref{eq:limma_trend_summary} with parameters $(\mu_i,\sigma_i^2)$. The proof of the foregoing relation follows by using the definition of the NPMLE and modifying the arguments in the proof of Theorem~\ref{thm:double_hellinger} as in the proof of Theorem 9* of \cite{ignatiadis2025empirical}. Next, one can modify the arguments in the proof of Theorem~\ref{prop:asymp_p_val} along the lines of the proof of Theorem~13* of \cite{ignatiadis2025empirical} to show that for any $\zeta \in (1/2,1)$
\begin{align}
    \frac{1}{n}\sum_{i=1}^n\mathbb E\left[\left|\wh P^\jt_i\wedge \zeta - P^\cmp_i\wedge \zeta\right|\right] \lesssim \frac{(\log n)^{13/4}}{\sqrt{n}}.
\end{align}
Furthermore, one can also use \eqref{eq:double_hell_cmp}, Lemma~\ref{lem:average_significance_control} and retrace the arguments in the proof of Proposition~15* of \cite{ignatiadis2025empirical} to conclude the proposition.
\end{proof}
Now, consider a BH procedure to test $H_i:\theta_i=0$ using the NPMLE based \jtlitrd{} p-values $\{\wh P^\jt_i\}$. Let $\fdr^\jt_n$ be the false discovery rate of the resulting procedure.
\begin{theorem}
\label{thm:final_rate_compound_p_val_jt}
If Assumption~\ref{assu:cmp_jt} holds and the true p-values $\{P^\cmp_i\}$ are critically dense at $\alpha$ as defined in Assumption~\ref{asm:1d_limma_trend_bh}, then under \eqref{eq:bayesian_joint_cmp}, we have the following.
\[
\limsup_{n \rightarrow \infty}\fdr^\jt_n\;\le\alpha.
\]
\end{theorem}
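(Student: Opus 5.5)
The plan is to reuse the BH argument from the proof of Theorem~\ref{thm:final_rate_mis}, i.e., the self-consistency decomposition of the false discovery proportion. The role of exact or asymptotic uniformity of null p-values will be taken over by the compound average-significance bound of Proposition~\ref{prop:average_significance_control_cmp_p_val}. Concretely, let $R_{n,\jt}(t)$ and $V_{n,\jt}(t)$ be as in \eqref{eq:emp_prc_jt_fdr}, and let $\wh t_\jt$ be the BH threshold \eqref{eq:bh_threshold_est}, characterized by $\wh t_\jt=\sup\{t: R_{n,\jt}(t)/n\ge t/\alpha\}$. Then $\wh t_\jt\le\alpha$ and $n/R_{n,\jt}(\wh t_\jt)\le \alpha/\wh t_\jt$. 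On the event $\mathcal F:=\{R^\jt_n\ge n\kappa\}$, with $\kappa:=t_1/\alpha$, we write
\[
\frac{V^\jt_n}{R^\jt_n\vee1}\le \mathds 1(\mathcal F^c)+\frac{n\wh t_\jt}{R^\jt_n}\cdot\frac{n_0}{n}+\frac1\kappa\sup_{t\in(0,\alpha)}\Big|\frac1n\sum_{i\in\mathcal H_0}\mathds 1\{\wh P^\jt_i\le t\}-\frac{n_0}{n}t\Big|.
\]
The middle term is at most $\alpha n_0/n\le\alpha$. Taking expectations, Proposition~\ref{prop:average_significance_control_cmp_p_val} bounds the last term by $C_\cmp(\log n)^{13/8}n^{-1/4}/\kappa\to0$. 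Its supremum is centered at $t$ times $1/n$ summed over nulls, which matches our $\frac{n_0}{n}t$ up to $(1-n_0/n)t\ge0$. For an upper bound it is enough to use one-sided control of $\frac1n\sum_{\mathcal H_0}\mathds 1\{\wh P^\jt_i\le t\}-t$, since $\frac{n_0}{n}t\le t$ only improves the bound. The FDR limit therefore reduces to showing $\mathbb P(\mathcal F^c)\to0$.

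To get a linear number of rejections, I follow the proof of Theorem~\ref{thm:final_rate}. Lemma~\ref{lem:s11} gives
\[
\frac{R_{n,\jt}(t_1)}{n}\ge \frac1n\sum_i\mathbb P[P^\cmp_i\le t_1-\delta]-X_n,
\]
where $X_n$ collects two pieces. The first is $\frac{1}{n\delta}\sum_i|\wh P^\jt_i\wedge\zeta-P^\cmp_i\wedge\zeta|$, whose expectation is $O((\log n)^{13/4}/(\delta\sqrt n))$ by the compound approximation bound established in the proof of Proposition~\ref{prop:average_significance_control_cmp_p_val}. The second is the BDKW deviation of Lemma~\ref{lem:bdkw}, which is $O(n^{-1/2})$ and applies because the $P^\cmp_i$ are independent across $i$. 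That independence holds because the $(\mu_j,\sigma_j^2)$ are fixed, so each $P^\cmp_i$ is a deterministic function of unit $i$'s data only, unlike $\wh P^\jt_i$. Critical density of $\{P^\cmp_i\}$ supplies $\alpha_0<\alpha$ with $\frac1n\sum_i\mathbb P[P^\cmp_i\le t_1-\delta]\ge (t_1-\delta)/\alpha_0$ eventually. Choosing $\delta=4t_1(\alpha-\alpha_0)/(5\alpha)$ and applying Markov's inequality to $X_n$ then yields $R_{n,\jt}(t_1)/n\ge t_1/\alpha$ with probability tending to one. Hence $\wh t_\jt\ge t_1$ and, by monotonicity, $R^\jt_n\ge n t_1/\alpha$.

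The main obstacle is that the two inputs are only asserted in the paper through ``retracing'' arguments. The first is the compound approximation $\frac1n\sum\mathbb E|\wh P^\jt_i\wedge\zeta-P^\cmp_i\wedge\zeta|\to0$, which must hold with non-identically distributed $(S_i^2,A_i)$. Lemma~\ref{lemm:lem_ratio_bound_2d} used identical distribution, so in its place I would average over $i$ against the mixture density $f_{H_n,K-p}$, $H_n:=\frac1n\sum_j\delta_{(\mu_j,\sigma_j^2)}$. This average equals the average of the per-unit densities, and the ratio $N_i/D_i$ bounds then go through with $H_n$ playing the role of $H$, using \eqref{eq:double_hell_cmp}. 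The second input is the one-sided compound bound. I will take both as given by Proposition~\ref{prop:average_significance_control_cmp_p_val} and its proof, so the remaining work is only the bookkeeping above.
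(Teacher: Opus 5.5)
Your proposal follows the paper's proof: the FDP decomposition on $\{R^\jt_n\ge n\kappa\}$ with the null term handled by Proposition~\ref{prop:average_significance_control_cmp_p_val}, and $\mathbb P(R^\jt_n<n\kappa)\to0$ obtained by retracing the last step of the proof of Theorem~\ref{thm:final_rate} with $P^\cmp_i$ in place of $P^\jt_i$. Your $\kappa=t_1/\alpha$ is the value that step actually delivers, whereas the $\kappa_n$ written in the paper's proof appears garbled.

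One bookkeeping step needs fixing. Your displayed last term is centered at $\frac{n_0}{n}t$, and it need not vanish. The compound guarantee only gives $\frac1n\sum_{i\in\mathcal H_0}\mathbb P[P^\cmp_i\le t]\le t$, so the null average can exceed $\frac{n_0}{n}t$ by a non-vanishing amount. Also, the inequality $\frac{n_0}{n}t\le t$ runs the wrong way for your ``only improves'' remark.

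Instead, split as the paper does: $\frac{V^\jt_n}{R^\jt_n}=\frac{n\wh t_\jt}{R^\jt_n}+\frac{n}{R^\jt_n}\bigl(\frac{V^\jt_n}{n}-\wh t_\jt\bigr)$. Bound the first term by $\alpha$, not $\alpha n_0/n$; this is exactly why only $\alpha$ is obtained in the compound setting. Bound the second term by $\kappa^{-1}\sup_{t\le\alpha}\bigl(\frac1n\sum_{i\in\mathcal H_0}\mathds 1\{\wh P^\jt_i\le t\}-t\bigr)_+$.

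So one-sided control is all you need, and it is all that is true. The two-sided bound stated in the proposition fails when $n_0<n$, e.g.\ if all $(\mu_j,\sigma_j^2)$ coincide.
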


\begin{proof}
    Since the conditional validity property analogous to \eqref{eq:jt_cond_valid_p_val} does not hold for the compound p-values $\{P_i^{\cmp}\}$, we cannot directly follow the proof of Theorem~\ref{thm:final_rate}. Instead, we decompose the false discovery proportion. For any deterministic sequence $\kappa_n>0$,
\begin{align}
\frac{V_{n,\jt}(\wh t_\jt)}{R_{n,\jt}(\wh t_\jt)\vee 1}\le \mathds{1}\{R_{n,\jt}(\wh t_\jt)<n\kappa_n\}+\frac{V_{n,\jt}(\wh t_\jt)}{R_{n,\jt}(\wh t_\jt)}\mathds{1}\{R_{n,\jt}(\wh t_\jt)\ge n\kappa_n\}.
\end{align}
Taking expectations yields
\begin{align}
\fdr_n^{\jt}\le \mathbb P\!\left(R_{n,\jt}(\wh t_\jt)<n\kappa_n\right)+\mathbb E\!\left[\frac{V_{n,\jt}(\wh t_\jt)}{R_{n,\jt}(\wh t_\jt)}\mathds{1}\{R_{n,\jt}(\wh t_\jt)\ge n\kappa_n\}\right].
\end{align}
Next observe that
\begin{align}
\frac{V_{n,\jt}(\wh t_\jt)}{R_{n,\jt}(\wh t_\jt)}=\frac{n\wh t_\jt}{R_{n,\jt}(\wh t_\jt)}+\frac{n}{R_{n,\jt}(\wh t_\jt)}\left\{\frac{1}{n}\sum_{i\in\mathcal H_0}\mathds{1}\{\widehat P_i^{\jt}\le\wh t_\jt\}-\wh t_\jt\right\}.
\end{align}
On the event $\{R_{n,\jt}(\wh t_\jt)\ge n\kappa_n\}$, the definition of the BH threshold implies $(n\wh t_\jt)/R_{n,\jt}(\wh t_\jt)\le\alpha$, and therefore
\begin{align}
\frac{V_{n,\jt}(\wh t_\jt)}{R_{n,\jt}(\wh t_\jt)}\le\alpha+\frac{1}{\kappa_n}\sup_{t\in(0,\alpha)}\left|\frac{1}{n}\sum_{i\in\mathcal H_0}\mathds{1}\{\widehat P_i^{\jt}\le t\}-t\right|.
\end{align}
Consequently,
\begin{align}
\fdr_n^{\jt}\le \mathbb P\!\left(R_{n,\jt}(\wh t_\jt)<n\kappa_n\right)+\alpha+\frac{1}{\kappa_n}\mathbb E\!\left[\sup_{t\in(0,\alpha)}\left|\frac{1}{n}\sum_{i\in\mathcal H_0}\mathds{1}\{\widehat P_i^{\jt}\le t\}-t\right|\right].
\end{align}
By Proposition~\ref{prop:average_significance_control_cmp_p_val},
\begin{align}
\mathbb E\!\left[\sup_{t\in(0,\alpha)}\left|\frac{1}{n}\sum_{i\in\mathcal H_0}\mathds{1}\{\widehat P_i^{\jt}\le t\}-t\right|\right]\lesssim_{K,p,\underline L,\overline U,\nu,\alpha}\frac{(\log n)^{13/8}}{n^{1/4}}.
\end{align}
Since the oracle compound p-values are critically dense at $\alpha$, retracing the final step in the proof of Theorem~\ref{thm:final_rate} shows that for $\kappa_n=5\alpha/4(\alpha-\alpha_0)$,
\begin{align}
\limsup_{n\to\infty}\mathbb P\!\left(R_{n,\jt}(\wh t_\jt)<n\kappa_n\right)=0.
\end{align}
Combining the above bounds yields
\[
\limsup_{n\to\infty}\fdr_n^{\jt}\le\alpha.
\]
This completes the proof.
\end{proof}

\end{document}